\numberwithin{equation}{section}
\tikzset{>=latex}
\newtheorem{theo}{Theorem}[section]
\newtheorem{prop}[theo]{Proposition}
\newtheorem{proper}[theo]{Property}
\newtheorem{coro}[theo]{Corollary}
\newtheorem{lemm}[theo]{Lemma}
\newtheorem{defi}[theo]{Definition}
\theoremstyle{definition}
\newtheorem{rmk}[theo]{Remark}
\newtheorem{conv}[theo]{Convention}
\newcommand{\N}{\mathbb{N}}
\newcommand{\A}{\mathcal{A}}
\newcommand{\R}{\mathbb{R}}
\newcommand{\C}{\mathcal{C}}
\newcommand{\Z}{\mathbb{Z}}
\newcommand{\Prob}{\mathbb{P}}
\newcommand{\E}{\mathbb{E}}
\newcommand{\e}{\mathcal{E}}
\newcommand{\lcal}{\mathcal{L}}
\newcommand{\T}{\mathbb{T}}
\renewcommand{\epsilon}{\varepsilon}
\newcommand\xqed[1]{%
  \leavevmode\unskip\penalty9999 \hbox{}\nobreak\hfill
  \quad\hbox{#1}}
\newcommand\demo{\xqed{$\blacksquare$}}
\newcommand{\subjclass}[2][2020]{%
  \let\@oldtitle\@title%
  \gdef\@title{\@oldtitle\footnotetext{#1 \emph{Mathematics subject classification.} #2}}%
}
\newcommand{\keywords}[1]{%
  \let\@@oldtitle\@title%
  \gdef\@title{\@@oldtitle\footnotetext{\emph{Key words:} #1.}}%
}
\title{Motion by curvature and large deviations for an interface dynamics on $\Z^2$}
\author{Benoit Dagallier}
\affil{Courant Institute of Mathematical Sciences, New York University.\\ E-mail: {\tt bd2543@nyu.edu}}
\date{}
\subjclass{60F10, 82C22, 82C24}
\keywords{Large deviations, interface dynamics, motion by curvature, Ising model}
\begin{document}
\maketitle

\noindent\textbf{Abstract:}
We study large deviations for a Markov process on curves in $\Z^2$ mimicking the motion of an interface. Our dynamics can be tuned with a parameter $\beta$, which plays the role of an inverse temperature and coincides at $\beta=\infty$ with the zero temperature Ising model Glauber dynamics, where curves correspond to the boundaries of droplets of one phase immersed in a sea of the other one. The diffusion coefficient and mobility of the model are identified and correspond to those predicted in the literature. We prove that contours typically follow a motion by curvature with an influence of the parameter $\beta$ and establish large deviation bounds at all large enough $\beta<\infty$. 
\section{Introduction}
A basic paradigm in non-equilibrium statistical mechanics is the following. Consider a system with two coexisting pure phases separated by an interface, undergoing a first-order phase transition with non-conserved order parameter. 
Then, macroscopically, the interface should evolve in time to reduce its surface tension, according to a motion by curvature. For microscopic models on a lattice, some trace of the lattice symmetries should remain at the macroscopic scale and the resulting motion by curvature should be anisotropic. 
The following general behaviour, known as the Lifshitz law, is expected: if a droplet of linear size $N\gg 1$ of one phase is immersed in a sea of the other phase, then it should disappear in a time of order $N^2$. 
(Anisotropic) motion by curvature should correspond to the limiting dynamics, when $N$ is large, under diffusive rescaling of space and time. 
Phenomenological arguments in favour of this picture go back to Lifshitz \cite{Lifshitz1962} and can be summarised as follows. 
Consider a model with surface tension $\mathfrak{t} = \mathfrak{t}({\bf N})$, which depends on the local inwards normal ${\bf N}$ to an interface. 
We work in two dimensions to keep things simple. The surface energy associated with a curve $\gamma$ separating two phases reads:
\begin{align}
F(\gamma) = \int_{\gamma} \mathfrak{t}({\bf N}(s))ds,
\end{align}
where $s$ is the arclength coordinate on $\gamma$. 
The postulate is then that the local inwards normal speed $v$ to the interface reads:
\begin{equation}
v = \mu \frac{\delta F}{\delta \gamma}.\label{eq_speed_is_variation_free_energy}
\end{equation}
Above, $\delta F/\delta \gamma$ is the variational derivative of $F$, 
defined informally below. The quantity $\mu = \mu({\bf N})$ is the mobility of the model, computed by Spohn in \cite{Spohn1993} using linear response arguments. 
Let us relate~\eqref{eq_speed_is_variation_free_energy} and motion by curvature. 
The change $\delta F = \delta F({\bf N})$ in energy induced by the motion of a length $ds$ in the normal direction ${\bf N}$ is equal to $(\mathfrak{t}({\bf N})/R({\bf N}))ds$, which can be written $\delta F/\delta \gamma = \mathfrak{t}({\bf N})/R({\bf N})$, with $R({\bf N})$ the radius of curvature at ${\bf N}$. As such:
\begin{equation}
v = \mu \mathfrak{t}k =: ak,\quad \text{with:}\quad a({\bf N}) = \mu({\bf N}) \mathfrak{t}({\bf N}) \text{ the anisotropy and }k=1/R\text{ the curvature}.\label{MCM_phenom}
\end{equation}
A closed curve satisfying~\eqref{MCM_phenom} is said to evolve according to anisotropic motion by curvature. 
A bounded set with boundary following this equation is known to shrink to a point in finite time for a wide range of anisotropies $a$, see e.g.~\cite{Lacoin2014a} and references therein.\\

The paper~\cite{Spohn1993}, already cited, 
 is a landmark in the rigorous study of interface motion starting from microscopic models. 
 Ideally, one would like to start from a microscopic model with short-range interactions, 
with at least two different phases initially segregated on a macroscopic scale, 
and derive motion by curvature~\eqref{MCM_phenom} of the boundaries between the phases in the diffusive scaling. 
To this day however, 
results on microscopic models are scarce. 
 A major difficulty is to understand how to decouple, 
 from the comparatively slower motion of the interface, 
 the fast relaxation inside the bulk of each phase. 
 Indeed, in a diffusive time scale and at least for models with local interactions, 
 one expects the bulk to behave as if at equilibrium. 
Let us provide a (non-exhaustive) account of works on the subject.

In models where the interface is the graph of a function of a one-dimensional parameter, 
motion by curvature has been proven for a number of interacting particle systems. 
Motion by curvature usually boils down to the heat equation in this case, 
and the Lifshitz law is related to freezing/melting problems, see~\cite{Spohn1993,Chayes1996,Chayes2008,Chayes2012}, 
as well as~\cite{Lacoin2014b} and the monograph~\cite{Carinci2016}. 

For one-dimensional interfaces in two dimensions, a landmark is the proof of anisotropic motion by curvature for the Glauber dynamics of the zero temperature Ising model (henceforth zero-temperature Glauber dynamics). 
The drift of the interface at time $0$ was computed in~\cite{Cerf2007} for several types of initial conditions, before the full motion by curvature~\eqref{MCM_phenom} was proven in~\cite{Lacoin2014}--\cite{Lacoin2014a}. 
Their proof crucially relies on monotonicity of the Glauber dynamics.

More is known on another type of microscopic models for which some sort of mean-field mesoscopic description can be achieved. 
This comprises the so-called Glauber+Kawasaki process \cite{DeMasi1986} (see also \cite{Bertini2018} for an account of works on the model), 
which has local evolution rules, 
and models with long range interactions such as the Ising model with Kac potentials \cite{Comets1987,DeMasi1993,DeMasi1994,Katsoulakis1994}. 
For these models, studied in any dimension, 
the derivation takes place in two steps: first deriving a mean-field description of the dynamics, 
then rescaling space-time to derive motion by curvature. 
As a result, lattice symmetries are blurred and the resulting motion by curvature is isotropic. 
A series of recent works~\cite{Funaki2019,Funaki2022ConstantspeedIF,MR4528356} consider a Glauber+Kawasaki dynamics (see also~\cite{ElKettani2020MeanCI} for Glauber+Zero-range). 
In these works, the existence of an interface between regions at high- and low-density is established, 
and motion by curvature for this interface is obtained directly from the microscopic model, 
suitably scaling the Glauber part of the dynamics. 

A last category of models comprises the so-called effective interface models. 
In these models, an interface between phases is represented by the graph of a function, 
with which an "interfacial" cost is associated. 
Only the interface is relevant, 
and the phases it separates are not described. 
Effective interface models comprise the Ginzburg-Landau model in any dimension, see \cite{Funaki1997}, 
and more recently Lozenge-tiling dynamics in dimension three \cite{Laslier2018}.\\

To better understand the structure of interface dynamics, another related line of investigation concerns large deviations of the motion of an interface around motion by curvature. 
Assuming Gaussian-like fluctuations around the mean behaviour~\eqref{MCM_phenom}, the rate function describing the cost of observing an abnormal trajectory $\gamma_\cdot=(\gamma_t)_{t\leq T}$ should read:
\begin{equation}
I(\gamma_\cdot) 
= 
\int_0^T dt \int_{\gamma_t} \frac{(v-ak)^2}{2\mu}\, ds
,
\label{eq_conjectured_form_rate_function}
\end{equation}
with $s$ the arclength coordinate on $\gamma_t$. 
In the assumption of Gaussian fluctuations leading to~\eqref{eq_conjectured_form_rate_function}, 
one of the difficulties is that it is not even clear how the noise should be incorporated into the deterministic equations describing the interface motion. 
Extensive work to address this question and obtain rate functions of the form~\eqref{eq_conjectured_form_rate_function} has been carried out in the last decades on several models. 
We discuss them below and refer to the recent works \cite{Bertini2017}--\cite{Bertini2018}--\ref{MR3769818} and references therein for a more complete picture. 

One particularly studied class of model involves the stochastic Allen-Cahn equation, see~\cite[Chapter 4]{MR3587372} for an introduction.   
It is known that, in the diffusive (or sharp interface) limit, solutions to the (deterministic) Allen-Cahn equation satisfy motion by mean curvature in some sense, see \cite{Ilmanen1993,Evans1992,Barles1993}. 
One can then study large deviations around motion by curvature by starting from the Allen-Cahn equation perturbed by a small noise, then taking suitable limits in all parameters involved: regularisation of the noise if any, temperature and sharpnes of the interface. 
The addition of a noise term in the equation leads to very different solution theories depending on the regularisation and strength of the noise, 
with the equation for instance becoming ill-posed in dimension $d>1$ for space-time white noise. 
At the level of the large deviations however, this ill-posedness does not affect the rate functions at least in a suitable region of the regularisation, noise strength and interface sharpness as mentioned in~\cite{MR2284215} and discussed in detail in~\cite{Hairer2014LargeDF}.

Rate functions of the form~\eqref{eq_conjectured_form_rate_function} have been obtained from the stochastic Allen-Cahn equation taking various orders in the limits involved, see the seminal work~\cite{MR2284215} where the order of the limits ensures well-posedness of the stochastic Allen-Cahn equation and the more recent~\cite{Bertini2017} where a joint limit in all parameters is considered, with corresponding regularity estimates on solutions established in~\cite{Bertini2017a}. 
The rate function obtained as upper bounds in these papers is a more general version of~\eqref{eq_conjectured_form_rate_function} that coincides in simple cases, 
such as for a droplet trajectory with smooth boundary. 
More general trajectories that may feature nucleation events are also treated.  
A different way of adding noise to the Allen-Cahn equation and associated large deviations are considered in~\cite{MR3769818}.

Microscopic models have also been considered. In \cite{Bertini2018}, upper bound large deviations are proven for the aforementioned Glauber+Kawasaki process and Ising model with Kac potentials. 
The rate function~\eqref{eq_conjectured_form_rate_function} is proven to be the correct one for smooth trajectories and extensions to more general paths are discussed.\\

To the best of our knowledge however, no results on large deviations from motion by curvature for microscopic interface dynamics with local interactions have yet been published. In particular the question of large deviations for the zero temperature Ising Glauber dynamics is still open.

In this work, we present a family of interface dynamics that we call the contour dynamics. 
In the scaling limit, this dynamics typically evolves according to motion by curvature and we characterise the large deviations. 
The contour dynamics is closely related to the zero temperature Glauber dynamics for the Ising model: it has the same updates, except that additional moves depending on a parameter $\beta>0$ are allowed. 
This parameter $\beta$ plays the role of an inverse temperature acting on local portions of the contours. 
The model at each $\beta>0$ has reversible dynamics and, contrary to the Glauber dynamics for the zero temperature Ising model, the dynamics is not monotonous. When $\beta=\infty$, the update rules of the contour dynamics are exactly the same as the Ising ones. 
Large deviations for the contour dynamics are studied using the method initiated by Kipnis, Olla and Varadhan in \cite{Kipnis1989} (see also Chapter 10 in \cite{Kipnis1999}). There are substantial difficulties as we are dealing with curves, i.e. one-dimensional objects, evolving in two-dimensional space. One of the advantages of the method is that we no longer rely on monotonicity of the dynamics as in \cite{Lacoin2014a}. Monotonicity appears difficult to use for large deviations in any case, as atypical events, such as closeness to some atypical trajectory, are in general not monotonous. At each large enough $\beta>0$, we prove that the dynamics approaches anisotropic motion by curvature in the large size limit, with a dependence on the parameter $\beta$. At the formal level, the $\beta=\infty$ case indeed corresponds to anisotropic motion by curvature in the sense of \cite{Lacoin2014}. We then obtain large deviations for the model, with a rate function that agrees with~\eqref{eq_conjectured_form_rate_function} for sufficiently nice trajectories.\\

The rest of this article is structured as follows. In Section~\ref{sec_notations_results}, we introduce the microscopic model and fix notations. The dynamics is introduced in details using the zero temperature Glauber dynamics as comparison, while useful topological facts are collected in Appendix~\ref{app_prop_e_r}. The main results of the paper are listed in Section~\ref{sec_notations_results}, 
with Section~\ref{sec_heuristics} presenting the structure of the proof as well as a connection of the contour dynamics with the exclusion process, a guideline of the paper.

In Section~\ref{sec_relevant_martingales}, 
following the large deviation approach of \cite{Kipnis1989}, 
we compute Radon-Nikodym derivatives for a large class of tilted dynamics. 
Under the assumption that trajectories live in a nice enough space, 
we show how motion by curvature emerges from the microscopic computations as well as the influence of the parameter $\beta$. 
The computations of the Radon-Nikodym derivatives are then used to prove large deviations, with the upper bound in Section~\ref{sec_large_dev_upper_bound} and the lower bound in Section~\ref{sec_large_dev_lower_bound}. 
A number of technical results and sub-exponential estimates are postponed to Section~\ref{app_behaviour_pole} and Appendices~\ref{sec_replacement_lemma}--\ref{app_prop_e_r}. In particular, Section~\ref{app_behaviour_pole} is a collection of estimates that are genuinely particular to our model, 
concerning the dynamical behaviour of the poles, 
i.e. the sections of the contours on which the parameter $\beta$ affects the dynamics. 
\section{Model and results}\label{sec_notations_results}
\subsection{Zero temperature Glauber dynamics for the Ising model}
The contour dynamics studied in this paper is closely related to the Glauber dynamics of the zero temperature, two-dimensional Ising model on $(\Z^*)^2$ (henceforth zero temperature Glauber dynamics), 
with $\Z^*:= 1/2 + \Z$ the dual graph of $\Z$. 
Looking at $(\Z^*)^2$ rather than $\Z^2$ is meant to ensure that contours are lattice paths on $\Z^2$, see below. 
Let us first define this zero temperature Glauber dynamics.

On the space $\Sigma := \{-1,1\}^{(\Z^*)^2}$ of all spin configurations $\sigma = (\sigma(i))_{i\in(\Z^*)^2}\in\Sigma$, define the dynamics as follows: each site $i\in (\Z^*)^2$ is updated independently at rate $1$. The spin $\sigma(i)$ at site $i$ takes the same value as the majority of its neighbours, where spins $\sigma(j),\sigma(k)$ are neighbours for $j,k\in(\Z^*)^2$ if $\|j-k\|_1=1$. 
If spin $\sigma(i)$ has exactly two neighbours of each sign, then: with probability $1/2$, 
$\sigma(i)$ remains unchanged; with probability $1/2$, it is \emph{flipped}, 
i.e. changed to $-\sigma(i)$. 
A spin with three or more neighbours of the same sign is not changed, while a spin with three or more neighbours of opposite sign is flipped instantaneously, 
and the process is repeated until no such spin remains. 
This is summarised in the following jump rates (see also Figure~\ref{fig_updates_Ising}): for each configuration $\sigma$ and each $i\in(\Z^*)^2$,
\begin{equation}
c(\sigma,\sigma^i) = \begin{cases}
0\quad &\text{if }\sigma(i)\text{ and at least three neighbours have the same sign},\\
1/2\quad &\text{if }\sigma(i)\text{ has two neighbours of each sign},\\
+\infty \quad &\text{if }\sigma(i) \text{ has at least three neighbours with opposite sign}.
\end{cases}\label{eq_jump_rates_Ising}
\end{equation}
Above, the configuration $\sigma^i$ is the same as $\sigma$, except that the spin at $i$ has been flipped:
\begin{equation}
\forall j \in (\Z^*)^2\setminus\{i\},\qquad \sigma^i(j) = \sigma(j),\qquad \sigma^i(i) = -\sigma(i).\label{eq_def_sigma_flipped}
\end{equation}
\begin{figure}
\begin{center}
\includegraphics[width = 10cm]{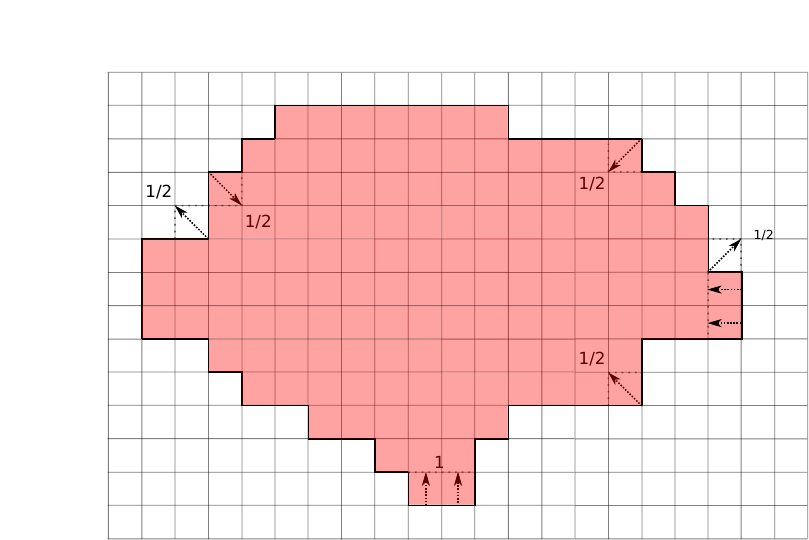} 
\caption{Some possible updates for the zero temperature Glauber dynamics. 
Red squares represent $-$ spins and white squares $+$ spins, assimilating a square with its centre, a point of $(\Z^*)^2$. 
If either of the lowest two red squares disappear 
(at rate 1/2, since both have two neighbours of each colour), 
then the remaining square has three neighbours of opposite colour and is flipped instantaneously. 
Both these squares thus become white at rate $1$. 
After that move, the Glauber rules preclude any square of the line from becoming red: the droplet shrinks. 
%In particular, the zero temperature Ising dynamics is irreversible
\label{fig_updates_Ising}}
\end{center}
\end{figure}
Rather than spins, the zero temperature Glauber dynamics can alternatively be defined in terms of \emph{blocks}: a block is a subset of $\R^2$ of the form $i+[-1/2,1/2]^2$, with $i\in (\Z^*)^2$ the centre of the block. Flipping a spin amounts to changing the colour of the corresponding block. The colour of a block (red or white in Figure~\ref{fig_updates_Ising}) is determined by the sign of the spin at its centre. This alternative terminology will be used preferentially throughout the article. In fact, we will consider configurations of the type depicted in Figure~\ref{fig_updates_Ising}, where all red blocks form a bounded connected region (that we call a droplet, see next paragraph) surrounded by white blocks. 
We will then not even focus on colours and instead say that a block is added/deleted to mean that the new droplet contains one more/one less block.\\

An important property of the zero temperature Glauber dynamics is that a large region of $-$ spins surrounded by $+$ spin \emph{shrinks}: 
with probability going to $1$ as the diameter $D$ of a region diverges, 
all spins will have become $+$ spins after a time of order $D^2$. 

In \cite{Lacoin2014}--\cite{Lacoin2014a}, the precise evolution of such a droplet was obtained in the scaling limit (for a slightly different choice of jump rates, but the result applies to the present case~\eqref{eq_jump_rates_Ising}). 
Let us describe their result. 
Let $\gamma^0\subset \R^2$ be a Jordan curve, i.e. a closed, simple curve. 
Let $\Gamma^0$ be the \emph{droplet associated with} $\gamma^0$, meaning the compact subset of $\R^2$ with boundary $\partial\Gamma^0 = \gamma^0$. Assume for simplicity that $\Gamma^0$ is convex and $\gamma^0$ is $C^\infty$ (the non-convex case is treated in \cite{Lacoin2014a}). 
Fix a scaling parameter $N\in\N_{\geq 1}$ and let $\sigma^{N,0}\in\Sigma$ be the spin configuration obtained by setting $\sigma^{N,0}(i) =-1$ if $i\in N\Gamma^0$, $\sigma^{N,0}(i) = 1$ if $i\notin N\Gamma^0$. 
Up to adding a finite number of $-$ spins, we may assume that each $-$ spin in $\sigma^{N,0}$ has at least two $-$ neighbours as in Figure~\ref{fig_updates_Ising}. 
The zero temperature Glauber dynamics~\eqref{eq_jump_rates_Ising} starting from $\sigma^{N,0}$ is then well defined for all time.\\

In \cite{Lacoin2014}, the authors prove that, rescaling space by $1/N$ and time by $N^2$, the rescaled droplet converges uniformly in time, 
in Hausdorff distance, 
to the unique solution of an anisotropic motion by curvature starting from $\Gamma^0$. 
To state a precise result, we need some notation. A solution $(\Gamma_t)_{t\geq 0}$ of motion by curvature~\eqref{MCM_phenom} with initial condition $\Gamma^0$ is a flow of droplets starting at $\Gamma^0$ and satisfying the following: there is a time $T_f>0$ such that, for $t<T_f$, the boundaries $(\gamma_t)_{t<T_f}$ of $(\Gamma_t)_{t<T_f}$, parametrised on the unit torus $\T$, solve~\eqref{MCM_phenom}:
\begin{equation}
\forall u\in\T,\forall t<T_f,\qquad  \partial_t \gamma_t(u) = a(\theta(t,u))\partial^2_s\gamma(t,u) = a(\theta(t,u))k(t,u) {\bf N}(t,u).\label{eq_def_MCM}
\end{equation}
Moreover, after time $T_f$, 
each droplet $\Gamma_t$ is reduced to a point. 
In~\eqref{eq_def_MCM}, the letter $s$ denotes the arclength coordinate on the curve $\gamma_t$ for $t<T_f$, 
while $k(t,u)$ is the curvature and $\theta(t,u)$ is the angle between the tangent vector at point $\gamma_t(u)$ and the first basis vector ${\bf b}_0 := (1,0)$. 
The vector ${\bf N}(t,u)$ is the unit inwards normal at $\gamma_t(u)$. 
The $\pi/2$-periodic anisotropy factor $a$ is a quantity with symmetries reflecting those of the square lattice. 
It reads:
\begin{equation}
a(\theta):=\frac{1}{2(|\sin(\theta)| + |\cos(\theta)|)^2},\qquad \theta\in[0,2\pi].\label{eq_def_a}
\end{equation}
Existence and uniqueness of a flow of sets solving~\eqref{eq_def_MCM} is part of the results of \cite{Lacoin2014}--\cite{Lacoin2014a}. \\
For a set $\Gamma\subset\R^2$ and $\epsilon>0$, 
let $\Gamma^{(-\epsilon)}$ (resp.: $\Gamma^{(\epsilon)}$) denote its $\epsilon$-shrinking (resp.: $\epsilon$-fattening):
\begin{equation}
\Gamma^{(\epsilon)} = \bigcup_{x\in\Gamma} B_1(x,\epsilon),\qquad \Gamma^{(-\epsilon)} = \Big[\bigcup_{x\notin\Gamma}B_1(x,\epsilon)\Big]^c,
\end{equation}
where $B_1(x,\epsilon)$ is the ball of centre $x$ and radius $\epsilon$ in $1$-norm. For future reference, recall:
\begin{equation}
\forall (u,v)\in\R^2,\qquad \|(u,v)\|_1 := |u|+|v|,\quad \|(u,v)\|_2 = \sqrt{u^2+v^2},\quad \|(u,v)\|_{\infty} = \max\{|u|,|v|\}.\label{eq_def_normes_1_2}
\end{equation}
The main result of \cite{Lacoin2014} is then the following. 
Denote as before by $(\Gamma_t)_{t\geq 0}$ the flow of droplets satisfying~\eqref{eq_def_MCM} with initial condition $\Gamma^0$. 
Let $\Prob$ denote the probability associated with the zero temperature Glauber dynamics starting from the configuration $\sigma^0$, 
and let $\Gamma^N$ be the notation for a microscopic droplet of $-$ spins. 
Then the rescaled droplet trajectory evolves in diffusive time and satisfies~\eqref{eq_def_MCM}, in the sense that:
\begin{equation}
\forall\epsilon>0,\qquad \lim_{N\rightarrow\infty}\Prob\Big(\forall t\geq 0,\quad \Gamma_t^{(-\epsilon)}\subset N^{-1}\Gamma^N_{tN^2} \subset \Gamma_t^{(\epsilon)}\Big)=1
\label{eq_Lacoin_1}
\end{equation}
and:
\begin{equation}
\forall\epsilon>0,\qquad \lim_{N\rightarrow\infty}\Prob\Big(\forall t\geq T_f+\epsilon,\quad\Gamma^N(tN^2) = \emptyset\Big)=1.\label{eq_Lacoin_2}
\end{equation}
For future reference, 
note that~\eqref{eq_Lacoin_1} is a statement on the Hausdorff distance $d_{\mathcal H}$ of $N^{-1}\Gamma^N_{tN^2}$ and $\Gamma_t$ at each time $t\geq 0$. 
The Hausdorff distance between two non-empty, compact sets $A,B\subset \R^2$ reads:
\begin{equation}
d_{\mathcal H}(A,B) = \inf\big\{\epsilon>0 : A\subset B^{(\epsilon)}\text{ and }B\subset A^{(\epsilon)}\big\}.\label{eq_def_distance_Hausroff}
\end{equation}
The proof of~\eqref{eq_Lacoin_1}--\eqref{eq_Lacoin_2} relies strongly on two ingredients. 
The first ingredient is the fact that the zero temperature Glauber dynamics has the monotonicity property (see e.g. Section 3.3 in \cite{Martinelli1999}): 
for two spin configurations $\sigma,\eta$, write $\sigma \leq \eta$ when $\sigma(i)\leq \eta(i)$ for each $i\in(\Z^*)^2$. 
There is then a coupling such that, 
with probability $1$, $\sigma_t\leq \eta_t$ for all $t\geq 0$. 

The second ingredient is the observation that local portions of the interface can be mapped to one-dimensional interacting particle processes, in particular to the symmetric simple exclusion process (SSEP), which is well known. 
This mapping will also be used in the present paper and is detailed in Section~\ref{sec_heuristics}.
\subsection{The contour dynamics}
In this article, we consider a microscopic interface dynamics that we call the contour dynamics. 
It is closely related to the zero temperature Glauber dynamics, 
with a number of interesting contrasting features. 
In this section, we first describe the state space, 
then define the dynamics. 
A connection of the contour dynamics with the simple exclusion process is also presented 
(and further discussed in Section~\ref{sec_heuristics}). 
Throughout the article, this connection will be used as a guideline for the study of the contour dynamics at both microscopic and macroscopic level. 
\subsubsection{The state space}
We first define the state space of the contour dynamics using results of~\cite{Lacoin2014}--\cite{Lacoin2014a} for the Ising model as a guide to formulate conditions 
on the shape of the contours we consider.

The arguably simplest yet interesting case where the scaling limit of the zero temperature Glauber dynamics is known is when the starting droplet of $-$ spins is (the discretisation of) a convex droplet with smooth boundary. 
At the microscopic level convexity is not preserved and 
we instead consider curves that can be split into four parts as defined next. 
Let ${\bf b}_{0}$, 
${\bf b}_{\pi/2}$ and ${\bf b}_\theta$ 
for $\theta\in[0,2\pi]$ be the vectors:
\begin{equation}
{\bf b}_{0}  := (1,0),
\quad {\bf b}_{\pi/2} := (0,1)
,\qquad
{\bf b}_\theta 
:= 
\cos(\theta) {\bf b}_0 + \sin(\theta){\bf b}_{\pi/2}
.
\label{eq_def_b_theta}
\end{equation}
\begin{defi}[The set $\Omega$]
Define $\Omega$ as the set of Lipschitz closed curves $\gamma\subset\R^2$ surrounding $0$ (including $\{0\}$) and satisfying the following condition when $\gamma\neq\{0\}$. 

The curve $\gamma$ can be split into four (intersecting) connected regions of maximal length. 
In region $k$ with $1\leq k\leq 4$, 
the tangent vector ${\bf T}$ to the interface, defined with clockwise orientation,  
satisfies ${\bf T}\cdot {\bf b}_{-(k-1)\pi/2}\geq 0$ and ${\bf T}\cdot {\bf b}_{-k\pi/2}\geq 0$ (see Figure~\ref{fig_quadrants_convex}).\label{property_state_space}
\end{defi}
In particular all Lipschitz closed curves that are the boundary of a convex droplets belong to $\Omega$. 
The advantage of $\Omega$, however, is that it also contains approximations of convex curves by closed lattice paths, 
as well as much more general curves. 
\begin{conv}
Interfaces in this article will always be oriented clockwise.
\end{conv}
\begin{figure}[H]
\begin{center}
\includegraphics[width=12cm]{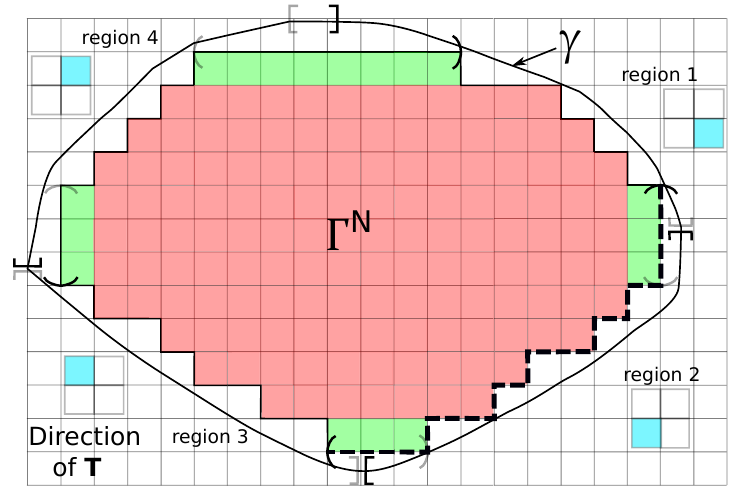}
\caption{The convex interface $\gamma$ and the droplet $\Gamma^N$ associated with the discretisation $\gamma^N := \partial\Gamma^N$ of $\gamma$. 
The four regions of both $\gamma$ and $\gamma^N$ are materialised by opening and closing brackets (for $\gamma$) or parentheses (for $\gamma^N$), 
with each region starting with an opening symbol and ending with a closing one. 
Note that regions overlap at the poles. This is the case for the north pole of $\gamma$ and all poles of $\gamma^N$ as the latter must contain at least two edges. 
To avoid confusion the delimiters of regions 1,3 are in grey and the other two in black. 
As an example, region 2 of $\gamma^N$ corresponds to the thick dashed lines. 
In each region, the quarterplane to which the tangent vector ${\bf T}$ belongs is indicated by a cyan square. 
The spins (i.e. the blocks) which have an edge belonging to a pole are coloured in green.\label{fig_quadrants_convex}}
\end{center}
\end{figure}

Note that, for a discrete interface, the tangent vector can only be one of the four vectors ${\bf b}_{k\pi/2}$ ($1\leq k\leq 4$), 
corresponding to $\pm {\bf b}_0,\pm {\bf b}_{\pi/2}$.

For an interface in $\Omega$, 
the intervals corresponding to the intersection of two consecutive regions will play a special role. 
These intervals (see Figure~\ref{fig_quadrants_convex}) are called \emph{poles}, 
and can equivalently be defined as points of the interface with extremal abscissa or ordinate. 
Pole $k$ ($1\leq k \leq 4$) is defined as the intersection of regions $k-1$ and $k$, 
where by convention $k-1 := 4$ if $k=1$. 
We shall also refer to poles in terms of cardinal directions: pole $1$ is the north pole, 
corresponding to the interval of points with maximal ordinate. 
Pole $2$ is the east pole, made of points with maximal abscissa, etc. \\

We can now define the state space of the contour dynamics. 
In view of the hydrodynamic behaviour~\eqref{eq_Lacoin_1}, 
we directly work with \emph{rescaled} microscopic curves, 
i.e. lattice paths on $(N^{-1}\Z)^2$ with $N\in\N_{\geq 1}$  the scaling parameter. 
In the following, an edge of the graph $(N^{-1}\Z)^2$ is identified with a segment of length $1/N$ between two neighbouring vertices. 
\begin{defi}[State space $\Omega^N_{\text{mic}}$]\label{def_state_space}
For $N\in\N_{\geq 1}$, 
the microscopic state space $\Omega^N_{\text{mic}}$ is the subset of $\Omega$ such that:
\begin{itemize}
	\item Each curve $\gamma^N\in \Omega^N_{\text{mic}}$ is a simple, closed lattice path on $(N^{-1}\Z)^2$.
	\item Each of the four poles of $\gamma^N$ contains at least two edges, 
	i.e. it is a segment of length at least~$2/N$.
	%
%	\item The droplet $\Gamma^N$ associated with $\gamma^N\in\Omega^N_{\text{mic}}$ 
%	(i.e. the compact region with boundary $\gamma^N$) contains the origin $0\in\Z^2$. 
%
\end{itemize}
\end{defi}
\begin{rmk}
We are only interested in the shape of interfaces, not their position in $\R^2$. 
The condition that elements of $\Omega\supset\Omega^N_{\text{mic}}$ surround $0$ should therefore be understood as a way to lift degeneracies: it ensures that the state space $\Omega^N_{\text{mic}}$ does not contain infinitely many curves that are translations one of the other (and this is its only use). 
\demo
\end{rmk}
\noindent\textbf{Notation:} to avoid confusion between microscopic and macroscopic interfaces in the following, whenever both microscopic and macroscopic interfaces are considered, microscopic interfaces are denoted with a superscript $N$: $\gamma^N\in \Omega^N_{\text{mic}}$, with associated droplet $\Gamma^N$. 
In that case, the letters $\gamma,\Gamma$ without the $N$ superscript are used for macroscopic objects.

\subsubsection{The dynamics}
\label{sec_def_contour_dynamics}
We now define the contour dynamics on $\Omega^N_{\text{mic}}$ (see Figure~\ref{fig_image_possible_jumps_beta_dynamics}), 
starting with some notations.

The segments that we call poles (see Figure~\ref{fig_quadrants_convex}) are going to play an important role in the dynamics.
Take a curve $\gamma\in \Omega$ and let $P_k = P_k(\gamma)$ denote its pole $k$ ($1\leq k \leq 4$). 
Write $P_k= [L_k,R_k]$, where the points $L_k,R_k = L_k(\gamma),R_k(\gamma)$ of $\R^2$ are respectively the left and right extremities of $P_k$ when $\gamma$ is oriented clockwise as always, 
see Figure~\ref{fig_image_possible_jumps_beta_dynamics} below. 
The length of pole $k$ is denoted by $|P_k|$.

In analogy with the Ising case, 
the block $C_i$ with centre $i\in (N^{-1}\Z^*)^2$ is defined as:
\begin{equation}
C_i := i + \frac{1}{2N}[-1,1]^2.\label{eq_def_block}
\end{equation}
Consider now a microscopic curve $\gamma^N\in\Omega^N_{\text{mic}}$.  
We say that a block in the droplet $\Gamma^N$ delimited by $\gamma^N$ is \emph{in pole }$k$ 
($1\leq k\leq 4$) 
if one of the edges in the boundary of the block is included in pole $k$. 
Blocks in a pole are in green on Figure~\ref{fig_quadrants_convex}. 
Let $p_k = p_k(\gamma^N)$ denote the number of blocks in pole $k$. 
It is related to the length $|P_k|$ of the pole by:
\begin{equation}
p_k 
:= 
N|P_k|,\qquad 
1\leq k \leq 4
.
\label{eq_def_pk}
\end{equation}
With these notations, we proceed to define dynamical moves.

Let $\gamma^N\in\Omega^N_{\text{mic}}$ and $\Gamma^N$ denote the associated droplet as usual. 
If $i\in(N^{-1}\Z^*)^2$, 
adding to or deleting from $\Gamma^N$ the block $C_i := i +\frac{1}{2N}[-1,1]^2$ is the operation $\Gamma^N\rightarrow(\Gamma^N)^i$, 
with:
\begin{equation}
(\Gamma^N)^i := \begin{cases}
\Gamma^N\cup C_i\quad &\text{if }i\notin \Gamma^N,\\
\Gamma^N \setminus C_i\quad &\text{if }i\in\Gamma^N.
\end{cases}\label{eq_def_single_flips}
\end{equation}
Define then $(\gamma^N)^i := \partial\big((\Gamma^N)^i\big)$.\\ 

Consider now moves affecting the poles. 
For $k$ with $1\leq k\leq 4$, 
assume that pole $k$ of $\gamma^N$ contains exactly two blocks. 
Define then $(\gamma^N)^{-,k}$ as the boundary of $(\Gamma^N)^{-,k}$, 
where $(\Gamma^N)^{-,k}$ is obtained from $\Gamma^N$ by deleting the two blocks in pole $k$:
\begin{equation}
(\Gamma^N)^{-,k} := \Gamma^N \setminus \bigcup_{i\in (N^{-1}\Z^*)^2:C_i\text{ is in Pole } k}C_i
,
\qquad (\gamma^N)^{-,k} = \partial\big((\Gamma^N)^{-,k}\big) 
.
\label{eq_def_gamma_-}
\end{equation}
Define conversely a transformation that makes a droplet grow at the pole as follows. 
Let $x\in (N^{-1}\Z)^2\cap P_k$ ($1\leq k \leq 4$) be any point of $P_k$ different from $R_k,L_k$. 
Define then $(\gamma^N)^{+,x}$ as the boundary of $(\Gamma^N)^{+,x}$, with:
\begin{align}
(\Gamma^N)^{+,x} := \Gamma^N \cup \bigcup_{i\in (N^{-1}\Z^*)^2\setminus \Gamma^N:x\in C_i}C_i,\qquad (\gamma^N)^{+,x} = \partial\big((\Gamma^N)^{+,x}\big).
\end{align}
In words, $(\Gamma^N)^{+,x}$ is the droplet $\Gamma^N$ to which the two blocks with boundary that contains $x$ 
(e.g. for the north pole, the block for which $x$ is at the lower right corner and the block for which it is at the lower left) 
and that are not in pole $k$ have been added. 
We can now define the contour dynamics, 
illustrated on Figure~\ref{fig_image_possible_jumps_beta_dynamics}.  
\begin{defi}\label{def_contour_dynamics}
The contour dynamics on $\Omega^N_{\text{mic}}$ at inverse temperature $\beta\geq 0$ is defined through the jump rates $c(\gamma^N,\tilde\gamma^N)$ 
for curves $\gamma^N,\tilde\gamma^N\in\Omega^N_{\text{mic}}$:
\begin{itemize}
	\item $c\big(\gamma^N,(\gamma^N)^i\big) = (1/2){\bf 1}_{(\gamma^N)^i\in \Omega^N_{\text{mic}}}$ for $i\in(N^{-1}\Z^*)^2$;
	\item $c\big(\gamma^N,(\gamma^N)^{-,k}\big) = {\bf 1}_{p_k(\gamma^N)=2}{\bf 1}_{(\gamma^N)^{-,k}\in \Omega^N_{\text{mic}}}$, 
	with $p_k$ defined in~\eqref{eq_def_pk} ($1\leq k\leq 4$);
	\item (Growth at the poles) $c\big(\gamma^N,(\gamma^N)^{+,x}\big)= e^{-2\beta}$ for each $x\in (N^{-1}\Z)^2\cap P_k(\gamma^N)$ 
	with $x\notin\{R_k(\gamma^N),L_k(\gamma^N)\}$, $1\leq k \leq 4$;
	\item $c(\gamma^N,\tilde\gamma^N)=0$ for any other $\gamma^N,\tilde\gamma^N$.
\end{itemize}
\end{defi}
\begin{figure}[H]
\begin{center}
\includegraphics[width=12.1cm]{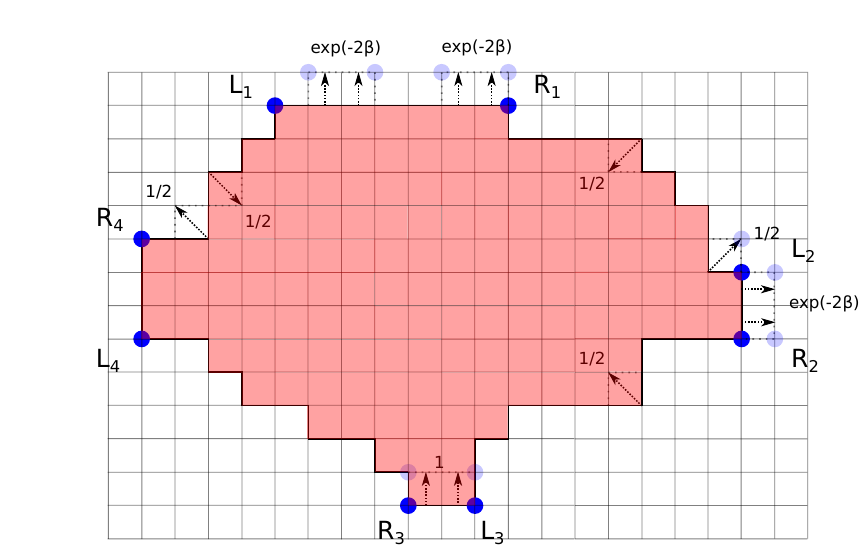} 
\caption{Some moves and associated jump rates for the contour dynamics acting on an element of $\Omega^N_{\text{mic}}$. 
Positions of the extremities $L_k,R_k$, $1\leq k\leq4$ of the poles are represented by dark dots. 
Possible positions of $L_k,R_k$ after a jump are represented by light dots. 
Dynamical moves amount to adding or deleting squares of side-length $1/N$ ("blocks"). 
The pole $P_3$ contains two blocks, i.e. $p_3=2$, 
thus both of its blocks can be simultaneously deleted by an update.\label{fig_image_possible_jumps_beta_dynamics}}
\end{center}
\end{figure}
\begin{rmk}\label{rmk_def_contour_dyn}
Let us comment on Definition~\ref{def_contour_dynamics}.
\begin{itemize}
	\item If $\beta = +\infty$, then the contour dynamics and the zero temperature Glauber dynamics~\eqref{eq_jump_rates_Ising} act on a contour $\gamma^N\in\Omega^N_{\text{mic}}$ in the same way, 
	provided the resulting contour is in $\Omega^N_{\text{mic}}$. 
	In particular, the contour dynamics at $\beta=+\infty$ is monotonous.
	\item At each $\beta\in[0,+\infty)$, 
	the contour dynamics is not monotonous (see Figure~\ref{fig_dyn_not_monotone_non_local}). 
	However, it is built to be reversible with respect to the measure $\nu^N_\beta$, 
	defined by:
\begin{equation}
\nu^N_\beta(\gamma) := \frac{1}{\mathcal Z^N_\beta}e^{-\beta N|\gamma^N|},\qquad \gamma^N\in\Omega^N_{\text{mic}},\quad \mathcal{Z}^N_\beta\text{ a normalisation factor.}\label{eq_def_nu_beta}
\end{equation}	
Recall that elements of $\Omega^N_{\text{mic}}$ must surround the point $0$ by Definition~\ref{def_state_space}. 
This breaks translation invariance so that $\nu^N_\beta$ is well-defined as soon as $\beta>\log 2$ 
(the number of curves of length $n/N$ in $\Omega^N_{\text{mic}}$ is bounded by $cn^42^n$ for some $c>0$ and each $n\in\N_{\geq 1}$).

\item The regrowth moves are an important difference from the zero temperature Glauber dynamics~\eqref{eq_jump_rates_Ising}, 
where regrowth of the droplet was not possible. 
This regrowth is designed to make the dynamics reversible with respect to the measure $\nu^N_\beta$. 
Compared to the Dirac at the configuration with only $-$ spins that is invariant for the zero temperature Glauber dynamics, 
the measure $\nu^N_\beta$ is much more convenient to work with: 
it has full support, 
and can be used for explicit computations.  
These properties make avalaible the entropy method of Guo, Papanicolaou and Varadhan \cite{Guo1988}.

The downside of the regrowth term is that one has to carefully control the motion of the poles, which turns out to be the main difficulty to study the contour dynamics.

\item The contour dynamics is non-local: 
one cannot find $\rho\in\N_{\geq 1}$ independent of $N$ such that, 
for any $\gamma^N\in\Omega^N_{\text{mic}}$ and any $x\in \gamma^N\cap(N^{-1}\Z)^2$, 
deciding whether $c\big(\gamma^N,(\gamma^N)^x\big)>0$ require only the knowledge of all points of the curve at $1$-distance at most $\rho/N$ from $x$.

The non-locality is due to the fact that regions $1$ and $3$, 
or $2$ and $4$ of a curve in $\Omega^N_{\text{mic}}$ may be very close to each other as subsets of $\R^2$, 
so that deleting a single block would create self-intersections in the interface, 
which is forbidden. 
This point is illustrated on Figure~\ref{fig_dyn_not_monotone_non_local}.
\end{itemize}
\end{rmk}
\begin{figure}[H]
\begin{center}
\includegraphics[width=12cm]{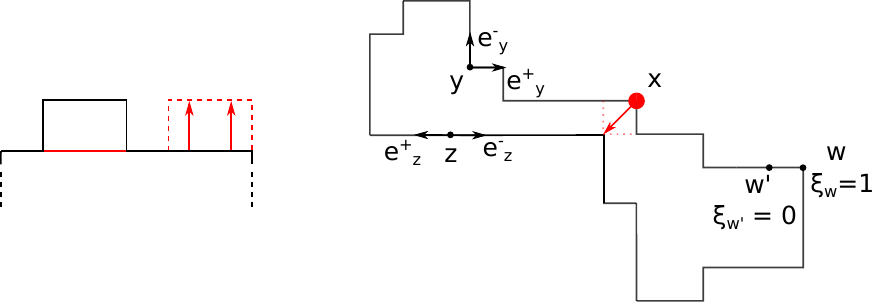} 
\caption{\emph{Left figure}: two microscopic curves equal everywhere except at the north pole: 
the configuration represented by the black line has a pole containing $2$ blocks, 
the one with the red line a pole containing $6$ blocks. 
Initially, the droplet delimited by the black line contains the one delimited in red. 
A possible update after which the inclusion does not hold is represented in dashed red lines: 
the contour dynamics at $\beta<\infty$ is not monotonous.\\
\emph{Right-figure}: only looking at points of the interface in a neighbourhood of $x$, 
the update indicated by an arrow should be allowed, 
as the corresponding block has two neighbours in and two neighbours out of the droplet. 
However, this update is forbidden as the resulting curve would not belong to $\Omega^N_{\text{mic}}$ (it would not be simple). 
The contour dynamics is therefore non local. 
The vectors ${\bf e}^{\pm}_y,{\bf e}^{\pm}_z$ are indicated for two points $y,z$ of the interface. 
The edges $[y+{\bf e}^-_y,y]$ and $[y,y+{\bf e}^+_y]$ are perpendicular: 
a block can be added or removed at $y$ (in this example, added). 
The same situation occurs at site $w$: the edge $[w,w+{\bf e}^+_w]$ is vertical, corresponding to $\xi_w = 1$, 
while the edge $[w',w] = [w+{\bf e}^-_{w},w]$ is horizontal, i.e. $\xi_{w'} = 0$.\label{fig_dyn_not_monotone_non_local}}
\end{center}
\end{figure}
\noindent\textbf{Link with simple exclusion.} 
The jump rates $c\big(\gamma^N,(\gamma^N)^i\big)$, 
$\gamma^N\in\Omega^N_{\text{mic}}$, $i\in(N^{-1}\Z^*)^2$ appear to involve the entire space as $i\in (N^{-1}\Z^*)^2$. 
They however vanish when $i$ is not at distance $1/(2N)$ from $\gamma^N$ and it is in fact possible to express these jump rates only in terms of points of the interface.  
This connects the contour dynamics to the Symmetric Simple Exclusion Process (SSEP for short) as we now explain.

For $\gamma^N\in\Omega^N_{\text{mic}}$, 
let $V(\gamma^N)$ denote the set of vertices of $\gamma^N$:
\begin{equation}
V(\gamma^N) := (N^{-1}\Z)^2 \cap \gamma^N.
\label{eq_def_vertices}
\end{equation}
If $x\in V(\gamma^N)$, let ${\bf e}^+_x = n(x,\gamma^N)-x$ 
and ${\bf e}^-_x = p(x,\gamma^N)-x$ where $n(x,\gamma^N)$, $p(x,\gamma^N)$ are respectively the next and previous points relative to $x$ when going through $V(\gamma^N)$ clockwise (see Figure~\ref{fig_dyn_not_monotone_non_local}).

Let $\xi_x = \xi_x(\gamma^N)$ be the state of the edge $[x,x+{\bf e}^+_x]$, 
defined by:
\begin{equation}
\xi_x = 1\quad \text{if }[x,x+{\bf e}^+_x]\text{ is vertical},\qquad \xi_x = 0\quad \text{if }[x,x+{\bf e}^+_x]\text{ is horizontal}.\label{eq_def_xi}
\end{equation}
A block can be added/deleted to a droplet provided it has at least two neighbours of opposite colours, 
see Figure~\ref{fig_image_possible_jumps_beta_dynamics}. 
This means that the interface has a corner at this block, 
i.e. there is a point $x\in V(\gamma^N)$ (corresponding to the corner of the block) 
such that the two edges $[x+{\bf e}^-_x,x]$ and $[x,x+{\bf e}^+_x]$ are perpendicular. 
Orthogonality of the two edges can equivalently be stated as follows:
\begin{equation}
{\bf 1}_{{\bf e}^+_x\cdot\,  {\bf e}^-_x=0}
=
1
\quad \Leftrightarrow \quad
\xi_{x+{\bf e}^-_x}(1-\xi_x) + \xi_x(1-\xi_{x+{\bf e}^-_x}) 
= 
1
.
\label{eq_link_ssep_intro}
\end{equation}
Associating a site with each edge of $\gamma^N$ and viewing $\xi_\cdot$ as a configuration of particles (with e.g.~$\xi_x=1$ if there is a particle at $x$; the precise mapping is given later in Figure~\ref{fig_Ising_ssep_no_bords}), 
the right-hand side of~\eqref{eq_link_ssep_intro} says that a jump may occur in the contour dynamics when there is a particle at site $x+{\bf e}^-_x$ but not at site $x$ (or vice versa). 
This is precisely the exclusion condition of the SSEP.  
If $i\in (N^{-1}\Z)^2$ is the centre of the block with corner $x$, 
we can now
define the curve $(\gamma^N)^x$ (involving the point $x$ of the interface) as the curve $(\gamma^N)^i$ of~\eqref{eq_def_single_flips}, 
and set:
\begin{equation}
c\big(\gamma^N,(\gamma^N)^x\big) = {\bf 1}_{(\gamma^N)^x\in \Omega^N_{\text{mic}}}c_x(\gamma^N),\qquad c_x(\gamma^N) := \frac{1}{2}\big[\xi_{x+{\bf e}^-_x}(1-\xi_x) + \xi_x(1-\xi_{x+{\bf e}^-_x})\big].\label{eq_def_c_x_gamma}
\end{equation}
The indicator function above is related to the non-locality of the dynamics, 
see the right figure of Figure~\ref{fig_dyn_not_monotone_non_local} and the second point of Remark~\ref{rmk_def_contour_dyn}. 
We will also say that "$x$ is flipped" to mean that the block with centre $i$ is added or deleted. 
The connection with the SSEP is further discussed in Section~\ref{sec_heuristics}. 

Recalling the jump rates at the poles in Definition~\ref{def_contour_dynamics}, 
the generator $\lcal_\beta$ of the contour dynamics at $\beta>0$ then acts on functions $f:\Omega^N_{\text{mic}}\rightarrow\R$ according to:
\begin{align}
&\forall \gamma^N\in\Omega^N_{\text{mic}},\qquad N^2\lcal_\beta f(\gamma^N) = N^2\sum_{x\in V(\gamma^N)}{\bf 1}_{(\gamma^N)^x\in \Omega^N_{\text{mic}}}c_x(\gamma^N)\big[f\big((\gamma^N)^x\big)-f(\gamma^N)]\label{eq_def_generateur_H_is_0}\\
&+N^2\sum_{k=1}^4\bigg[{\bf 1}_{p_k(\gamma^N)=2}{\bf 1}_{(\gamma^N)^{-,k}\in\Omega^N_{\text{mic}}}\big[f\big((\gamma^N)^{-,k})-f(\gamma^N)\big] 
+ \sum_{\substack{x\in V(\gamma^N)\cap P_k(\gamma^N) \\
x+{\bf e}^{\pm}_x\in P_k(\gamma^N)}}e^{-2\beta}\big[f\big((\gamma^N)^{+,x}\big)-f(\gamma^N)\big]\bigg].\nonumber
\end{align}
In~\eqref{eq_def_generateur_H_is_0}, 
the first line corresponds to the SSEP-like updates, 
and the second line to the poles, 
with the last term corresponding to regrowth moves. 
Note the $N^2$ factor in the generator corresponding to a diffusive rescaling of time, 
which already appeared in the hydrodynamics~\eqref{eq_Lacoin_1}.
\subsubsection{Initial condition of the dynamics, topology and effective state space}
\label{section_IC}
In this section, we explain how a suitable choice of initial condition, 
not restrictive at the macroscopic level, makes the contour dynamics local and prevents various pathologies.

The contour dynamics and state space $\Omega^N_{\text{mic}}$ have to be defined with great generality, which means that the following degenerate microscopic curves have to be considered:
\begin{itemize}
	\item[(i)] curves with two non-consecutive regions at microscopic distance as in Figure~\ref{fig_dyn_not_monotone_non_local}, 
	corresponding to self-intersecting curves in the limit;
	\item[(ii)] curves with two poles at microscopic distance from one-another, corresponding to curves which have one or more regions reduced to a point in the limit;
	\item[(iii)] curves at microscopic distance from the origin $0$;
		\item[(iv)] curves with one or more poles such that the volume of all blocks at macroscopic distance beneath that pole vanishes in the $N\to\infty$ limit. 
		This corresponds in the limit to droplets with poles at strictly positive (possibly infinite) distance from the interior of the droplet, see the right-hand side of Figure~\ref{fig_non_simple_curve}.
\end{itemize}
Let us discuss how to avoid these degenerate cases.

Starting from a curve that does not feature the pathologies of items (i)-(iii), 
we will see in Lemma~\ref{lemm_local_jump_rates} below that a change in the volume of the droplet (i.e. order $N^2$ blocks) is required for any of these pathologies to arise.  
One expects a change in volume to require a diffusive amount of time in analogy with the zero temperature Glauber dynamics (this is indeed the case, see Proposition~\ref {prop_short_time_existence_of_weak_solution}). 
Thus to avoid (i)-(iii) it will be enough to start from a nice enough initial condition as defined below.

The last item is different and responsible for much of the difficulty in the study of the dynamics. 
Indeed, growth at the poles can occur independently of the shape of the curves and could therefore occur on time scales much shorter than diffusive.  
This is independent of the initial condition and will have to be ruled out by a separate dynamical estimate (see Proposition~\ref{prop_short_time_existence_of_weak_solution}).

\paragraph{Initial condition.} 
We now explain how to choose the initial condition to avoid situations in items (i)-(iii). 
For $\gamma\in\Omega$ (see Definition~\ref{def_state_space})),  
recall that $\Gamma$ denotes the associated droplet. 
Introduce $\Gamma'\subset\Gamma$ as the largest droplet with simple boundary such that $d_{L^1}(\Gamma,\Gamma')=0$ (see Figure~\ref{fig_gamma_prime_z_k_w_k} in Appendix~\ref{app_prop_e_r}). 
When $\gamma$ is simple one has simply $\Gamma'=\Gamma$.  
The volume distance $d_{L^1}$ between bounded sets $A,B\subset\R^2$ is defined by:
\begin{equation}
d_{L^1}(A,B) 
= 
\int_{\R^2}\big|{\bf 1}_{A}-{\bf 1}_{B}\big|\, du\, dv
.
\label{eq_def_distance_L1}
\end{equation}
\noindent\textbf{Notation:} in the rest of the article, to avoid constantly alternating between interfaces and their associated droplets, we chose as much as possible to state results in terms of interfaces $\gamma,\tilde \gamma\in \Omega$ exclusively. In particular, we will use the convention:
\begin{equation}
d_{L^1}(\gamma,\tilde\gamma) 
:= 
d_{L^1}(\Gamma,\tilde\Gamma)
.
\label{eq_dist_volume_of_interf_as_dist_droplets}
\end{equation}
Define now $q(\gamma)$ as the distance between $0$ and $\gamma$ and let $r(\gamma)$ denote the smallest vertical or horizontal distance between consecutive poles of $\partial\Gamma'$. 
Define also $r'(\gamma)$ as the minimum of the distances between regions $1,3$ and between regions $2,4$ of $\partial\Gamma'$ 
(see~\eqref{eq_def_q_appB}--\eqref{eq_condition_sur_CI}--\eqref{eq_def_r'_appB} for precise definitions). 
\begin{lemm}\label{lemm_local_jump_rates}
Let $\gamma\in\Omega$ be such that $q(\gamma)>0$, $r(\gamma)>0$ and $r'(\gamma)>0$.   
There is then $r_0=r_0(\gamma)>0$ such that:
\begin{itemize}
	\item all droplets $\tilde \Gamma$ associated with a curve $\tilde\gamma\in\Omega$ and such that $d_{L^1}(\gamma,\tilde \gamma)\leq r_0^2$ satisfy $q(\tilde\gamma)>q(\gamma)/2$, 
$r(\tilde\gamma)>r(\gamma)/2$ and $r'(\tilde\gamma)>r'(\gamma)/2$. 
	\item The jump rates $c(\tilde \gamma^N,\cdot)$ of any $\tilde \gamma^N\in\Omega^N_{\text{mic}}$ with $d_{L^1}(\gamma,\tilde\gamma^N)\leq r_0^2$ are local:
	there is $N(\gamma)\in\N_{\geq 1}$ such that, for any $N\geq N(\gamma)$ and any $x\in V(\tilde\gamma^N)$, 
	the value of $c(\tilde\gamma^N,(\tilde\gamma^N)^x)$ can be determined through the knowledge of points at $1$-distance at most $3$ from $x$. 
\end{itemize} 
\end{lemm}
Note that $r'(\gamma)>0$ is true as soon as $\gamma$ is a simple curve. 
Note also that $r'(\gamma)>0$ ensures the locality of the jump rates for all microscopic curves close enough to $\gamma$ since it forbids the situation in the right figure of Figure~\ref{fig_dyn_not_monotone_non_local}. 
Lemma~\ref{lemm_local_jump_rates} is proven in Appendix~\ref{app_prop_e_r}, 
see Lemma~\ref{lemm_stability_initial_condition_appendix}.

How to choose an initial condition to avoid pathologies and to have a local dynamics is now clear and stated next. 
\begin{defi}[Initial condition of the dynamics]\label{def_CI}
Let $\gamma^{\mathrm{ref}}\in\Omega$ be such that $q(\gamma^{\mathrm{ref}})>0$ and $r(\gamma^{\mathrm{ref}})>0$. 
Assume also that $\gamma^{\mathrm{ref}}$ is a simple curve (thus $r'(\gamma^{\mathrm{ref}})>0$). 

The dynamics is started from the microscopic curve $\gamma^{\mathrm{ref},N}\in\Omega^N_{\text{mic}}$ defined as follows. 
Let $\Gamma^{\mathrm{ref},N}$ be the droplet obtained by discretising $\Gamma^{\mathrm{ref}}$ according to:
\begin{align}
\Gamma^{\mathrm{ref},N} = \bigcup_{i\in (N^{-1}\Z^*)^2 : C_i\subset \Gamma^{\mathrm{ref}}}C_i,\qquad C_i := i + \frac{1}{2N}[-1,1]^2.
\end{align}
Then $\partial\Gamma^{\mathrm{ref},N}$ is a simple curve for large enough $N\in\N$. 
Further, if $N$ is large enough, up to adding blocks at the poles to ensure each pole of $\partial\Gamma^{\mathrm{ref},N}$ contains at least two blocks, 
we may assume $\partial\Gamma^{\mathrm{ref},N} \in\Omega^N_{\text{mic}}$. 
We then set $\gamma^{\mathrm{ref},N} := \partial\Gamma^{\mathrm{ref},N}$.
\end{defi}
Starting from $\gamma^{\mathrm{ref},N}$, 
proving that in the large $N$ limit no pathology arise after a positive time is one of the difficulties, treated in Proposition~\ref{prop_short_time_existence_of_weak_solution}.\\

\paragraph{Effective state space.}
Starting from $\gamma^{\mathrm{ref}}$, we will study the dynamics on a subset of $\Omega$ composed of curves that, like $\gamma^{\mathrm{ref}}$, 
do not fall under the degenerate situations of items (i)--(iii). 
Such curves satisfy the following property. 
\begin{proper}\label{prop_IC}
A curve $\gamma\in\Omega$ satisfies Property~\ref{prop_IC} if $q(\gamma),r(\gamma),r'(\gamma)>0$, 
where these quantities are defined informally above~\eqref{eq_def_distance_L1} and rigorously above Lemma~\ref{lemm_stability_initial_condition_appendix}. 
\end{proper}
We \emph{could} then carry out the study of the contour dynamics focussing on elements of $\Omega$ satisfying Property~\ref{prop_IC}. Results in that direction are stated in Theorem~\ref{theo_large_dev_general}. 

However, this choice creates difficulties at the level of the topology. 
To simplify the exposition and focus on the probabilistic aspects of the droplet evolution, 
we choose to study the contour dynamics acting on curves in a small volume neighbourhood of $\gamma^{\mathrm{ref}}$, 
which in view of Lemma~\ref{lemm_local_jump_rates} is enough to ensure Property~\ref{prop_IC} still holds.  
In this sense, all results stated in Section~\ref{sec_results} apart from Theorem~\ref{theo_large_dev_general} can be understood as short-time results, as we only consider interfaces close to the initial condition of the dynamics. 
\begin{defi}[Effective state space $\e$]\label{def_effective_state_space}
The effective state space $\e$ is the subset of $\Omega$ made of curves $\gamma$ satisfying $d_{L^1}(\gamma^{\mathrm{ref}},\gamma)\leq r_0^2$, 
with $r_0 = r_0(\gamma^{\mathrm{ref}})$ the quantity in Lemma~\ref{lemm_local_jump_rates}. 
In particular, all curves in $\e$ satisfy Property~\ref{property_state_space} by item 1 of Lemma~\ref{lemm_local_jump_rates}.

At the microscopic level, 
we will consider elements of $\Omega^N_{\text{mic}}\cap \e$. 
The jump rates of the contour dynamics for each such curve are local by item 2 of Lemma~\ref{lemm_local_jump_rates}.
\end{defi}
\subsubsection{Test functions and tilted dynamics}
In the breakthrough paper~\cite{Kipnis1989}, a powerful method was introduced to study large deviations for interacting particle systems. 
As a basic ingredient, 
it relies on the introduction of suitable tilted dynamics. In our case, these dynamics are defined as follows. Consider the following set $\C$ of test functions:
\begin{equation} 
\C 
:= 
C^2_c\big(\R_+\times\R^2\big)
.
\label{eq_def_ensemble_test_functions}
\end{equation}
In~\eqref{eq_def_ensemble_test_functions}, the subscript $c$ means compactly supported. We will frequently write $G_t$ for the function $x\in\R^2\mapsto G(t,x)$, for $t\geq 0$. For $H\in\C$, define another (time-inhomogeneous) Markov chain with generator $N^2\lcal_{\beta,H}$ by modifying the jump rates as follows. If $\gamma\in \Omega$, recall that $\Gamma$ stands for the droplet associated with $\gamma$ and let:
\begin{align}
\big<\Gamma,H_t\big> 
:= 
\int_{\Gamma}H_t(u,v) \, du\, dv,
\qquad t\geq 0
.
\label{eq_def_crochet}
\end{align}
Then, for each $\gamma^N,\tilde\gamma^N\in \Omega^N_{\text{mic}}\cap \e$ and associated droplets $\Gamma^N,\tilde \Gamma^N$, the tilted jump rates are:
\begin{equation}
\forall t\geq 0,\qquad 
c^{H_t}(\gamma^N,\tilde\gamma^N) 
:= 
c(\gamma^N,\tilde\gamma^N)\exp\Big[ N\big<\tilde\Gamma^N,H_t\big> - N\big<\Gamma^N,H_t\big>
\Big]
.  
\label{eq_def_jump_rates_H}
\end{equation}
As each site is associated with a block of area $1/N^2$, 
the tilt by $N\big<\Gamma,H_t\big>$ can be thought of as adding a small magnetic field of size $O(N^{-1})$ at each site. 
This magnetic field is inhomogeneous but regular in space and time. 
The probability measure associated with the speeded-up generator $N^2\lcal_{\beta,H}$ will be denoted by $\Prob^N_{\beta,H}$, or simply $\Prob^N_{\beta}$ when $H\equiv 0$ (recall that the diffusive, $N^2$ scaling is the correct one for motion by curvature). The corresponding expectations are denoted by $\E^N_{\beta,H},\E^N_{\beta}$ respectively.
\subsection{Results}\label{sec_results}
Our first result is a stability estimate. 
It states that, in the large $N$ limit, 
trajectories starting from the discretisation $\gamma^{\mathrm{ref},N}$ of the curve $\gamma^{\mathrm{ref}}$ of Definition~\ref{def_CI} typically have length bounded independently of $N$ and stay close to $\gamma^{\mathrm{ref}}$ in volume for short time, 
thereby avoiding the pathologies of items (i)--(iv) of Section~\ref{section_IC}.
\begin{prop}\label{prop_short_time_existence_of_weak_solution}
Let $\beta>\log 2$ and $H\in\C$. 
Recall that the initial condition of the dynamics $\gamma^{\mathrm{ref},N}$ for $N\in\N_{\geq 1}$ is given in Definition~\ref{def_CI}. 
Then:
\begin{enumerate}
	\item The length of an interface is of order $1$ in the following sense. 
	For each time $T>0$, there are constants $C(\beta,H,T),C(H)>0$ independent of $\gamma^{\mathrm{ref}}$ such that:
	\begin{align}
\forall A>0,\qquad 
\limsup_{N\rightarrow\infty}\frac{1}{N}\log \Prob^N_{\beta,H}\Big(\sup_{t\leq T}|\gamma^N_t|\geq A\Big) 
\leq 
-C(\beta,H,T)A + |\gamma^{\mathrm{ref}}|\beta + C(H)
.	
	\end{align}
	\item There is a time $t_0(\beta,H,r_0,|\gamma^{\mathrm{ref}}|)>0$, with $r_0$ given in Definition~\ref{def_effective_state_space} of $\e$, 
	such that:
	\begin{equation}
\lim_{N\rightarrow\infty}
\Prob^N_{\beta,H}\Big(\forall t\leq t_0(\beta,H,\gamma^{\mathrm{ref}}), \gamma^N_t\in \e\Big)=1
.
\label{eq_estimate_tau_in_exp_moins_N}
	\end{equation}
\end{enumerate}
\end{prop}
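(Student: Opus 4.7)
The overall plan is to reduce both claims to estimates under the untilted measure $\Prob^N_\beta$ by invoking the reversibility of the contour dynamics with respect to $\nu^N_\beta$, and then transfer the bound to $\Prob^N_{\beta,H}$ through the Radon--Nikodym derivative $M^H_T := \mathrm{d}\Prob^N_{\beta,H}/\mathrm{d}\Prob^N_\beta|_{\mathcal{F}_T}$. The Gibbs weight $e^{-\beta N|\gamma^N|}$ penalising long curves in $\nu^N_\beta$, combined with the enumeration bound $cn^4 2^n$ on microscopic curves of length $n/N$ mentioned after~\eqref{eq_def_nu_beta}, is what yields the exponential tail for the length whenever $\beta>\log 2$.

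For Part 1, I would first work under $\Prob^N_\beta$: detailed balance and summation over endpoints give, for any measurable $A\subset\Omega^N_{\text{mic}}$ and $t\geq 0$,
\begin{equation*}
\Prob^N_\beta(\gamma^N_t\in A) \leq \frac{\nu^N_\beta(A)}{\nu^N_\beta(\gamma^{\text{ref},N})}.
\end{equation*}
Applying this to $\{|\gamma^N|\geq A\}$ and using the enumeration bound together with the explicit form $\nu^N_\beta(\gamma^{\text{ref},N})=(\mathcal{Z}^N_\beta)^{-1}e^{-\beta N|\gamma^{\text{ref},N}|}$ yields $\frac{1}{N}\log\Prob^N_\beta(|\gamma^N_t|\geq A) \leq -(\beta-\log 2)A + \beta|\gamma^{\text{ref}}|+o(1)$. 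I would upgrade this to a supremum in time by partitioning $[0,T]$ into a grid at scale $\delta = N^{-\alpha}$ for $\alpha$ large enough, applying the pointwise bound at each grid point (the $O(T/\delta)$ union bound is subexponential in $N$), and controlling the excursion of $|\gamma^N|$ between consecutive grid points using that jumps change length by at most $2/N$ and the rescaled jump rate is Poisson-dominated by $CN^3$ on length-bounded configurations. To transfer to $\Prob^N_{\beta,H}$, I would use the explicit formula for $\log M^H_T$: the boundary terms $\pm N\langle\Gamma^N_t,H_t\rangle$ and $N\int_0^T\langle\Gamma^N_s,\partial_s H_s\rangle\,ds$ are bounded pointwise by $C(H,T)N$ since $H$ has compact support, and after restricting to length-bounded trajectories (via Part 1 at some intermediate level $A'>A$), Taylor-expanding the compensator $\int_0^T N^2\sum c(\gamma^N_s,\tilde\gamma^N)(e^{N\langle\tilde\Gamma^N-\Gamma^N_s,H_s\rangle}-1)\,ds$, using that each individual jump contributes $N\langle\tilde\Gamma-\Gamma,H\rangle = O(1/N)$, gives $|\log M^H_T|\leq C(H,T)N$ on this event. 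The direct bound $\|M^H_T\|_\infty \leq e^{C(H,T)N}$ then translates the $\Prob^N_\beta$ estimate into the $\Prob^N_{\beta,H}$ estimate with the additive constant $C(H)$ in the exponent, matching the statement.

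For Part 2, I would fix $A$ so that Part 1 guarantees $\sup_{t\leq T}|\gamma^N_t|\leq A$ with probability tending to $1$, and apply Dynkin's formula to $f(\gamma):=d_{L^1}(\gamma,\gamma^{\text{ref}})$ stopped at $\tau_A := \inf\{t:|\gamma^N_t|>A\}$:
\begin{equation*}
f(\gamma^N_{t\wedge\tau_A}) = f(\gamma^N_0) + \int_0^{t\wedge\tau_A} N^2\lcal_{\beta,H}f(\gamma^N_s)\,ds + M^f_{t\wedge\tau_A}.
\end{equation*}
I would aim to bound $N^2\lcal_{\beta,H}f$ by a constant $C(\beta,H,A)$ uniformly in $N$ on the length-bounded event and the quadratic variation of $M^f$ by $C(\beta,H,A) t/N$, so that Doob's $L^2$ inequality combined with $f(\gamma^N_0)\to 0$ (by construction of $\gamma^{\text{ref},N}$) gives $\sup_{t\leq t_0}f(\gamma^N_t)\leq r_0^2$ with probability tending to $1$ for $t_0$ small enough, which by Definition~\ref{def_effective_state_space} means $\gamma^N_t\in\e$. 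The hard part will be the uniform $O(1)$ bound on $N^2\lcal_{\beta,H}f$: each of the $O(N)$ microscopic corners of a length-bounded curve contributes $\pm 1/N^2$ to $f$ at rescaled rate $N^2$, producing a naive $O(N)$ bound. Extracting an $O(1)$ bound requires exploiting the geometric cancellation between convex- and concave-corner flips---analogous to how the total turning number yields an $O(1)$ bound on the integrated curvature---and the pole regrowth term, whose rescaled rate is $O(N^3 e^{-2\beta})$, must be treated separately using the sub-exponential pole estimates to be developed in Section~\ref{app_behaviour_pole} together with the detailed balance between pole growth and shrinkage built into the dynamics.
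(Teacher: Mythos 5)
Your proposal correctly identifies the key ingredients — the Gibbs-weight tail bound on curve lengths, the Radon--Nikodym transfer, and a martingale argument for the volume deviation — but in both parts it contains genuine gaps precisely at the points where the paper does something more careful.

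\textbf{Part 1: circularity in the transfer to $\Prob^N_{\beta,H}$.} Your claim that $\|M^H_T\|_\infty \leq e^{C(H,T)N}$ is false in general. The bound on the Radon--Nikodym derivative (Corollary~\ref{coro_bound_RD}) is $D^N_{\beta,H}(\gamma^N_\cdot)\leq \exp\big[C(H)N + C(H)N\int_0^T|\gamma^N_t|dt\big]$, and the integrated length is large \emph{exactly on the event} $\{\sup_t|\gamma^N_t|\geq A\}$ that you are trying to estimate. Restricting to $\{\sup_t|\gamma^N_t|\leq A'\}$ ``via Part 1 at some intermediate level $A'>A$'' does not close this: if $A'$ is allowed to grow with $A$, the loss $C(H)NTA'$ in the exponent swallows the gain $-C(\beta)A$; if $A'$ is fixed, you have not controlled $\Prob^N_{\beta,H}(\sup_t|\gamma^N_t|>A')$, which is the same type of event you are estimating. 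The paper resolves this by a \emph{time-iteration}: it first proves the tilted length bound over intervals of length $t_*:=C(\beta)/(2C(H))$, short enough that the RD-derivative loss is dominated by half the Gibbs gain (equation~\eqref{eq_small_length_until_t_gamma}), and then glues intervals together using a recursively defined growth sequence $(b_i)$ satisfying $C(\beta)b_{i+1}/2 = C(\beta) + b_i(\beta+\log 2)$ to absorb the loss incurred when re-entering the estimate at the start of each interval. Without this bootstrap structure your argument does not close.

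\textbf{Part 2: the $O(1)$ generator bound on $d_{L^1}$ does not exist.} You correctly flag that the naive bound on $N^2\lcal_{\beta,H}f$ with $f=d_{L^1}(\cdot,\gamma^{\text{ref}})$ is $O(N)$, and you appeal to a ``geometric cancellation between convex- and concave-corner flips.'' But the summation-by-parts that makes $\sum_x c_x(\gamma^N)\epsilon_x(\gamma^N)$ telescope to $O(1)$ in Section~\ref{sec_Bulk_terms} relies on the summand changing sign only at the poles; for $d_{L^1}(\cdot,\gamma^{\text{ref}})$ the sign of the volume contribution flips each time $\gamma^N$ crosses $\gamma^{\text{ref}}$, which can happen $O(N)$ times for a typical microscopic curve close to $\gamma^{\text{ref},N}$. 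Each crossing spoils the telescoping and contributes an $O(1)$ boundary term, so the cancellation you rely on is not available and the Dynkin approach stalls. The paper (Lemma~\ref{lemm_e_typical_short_time}) sidesteps this in two ways: first, it replaces $d_{L^1}$ by the smooth test functionals $\langle\Gamma,J_i\rangle$ for $J_1,J_2\in C^2_c$ approximating ${\bf 1}_{\Gamma^{\text{ref}}}$ and ${\bf 1}_{(\Gamma^{\text{ref}})^c}$ (Lemma~\ref{lemm_volume_diff_as_test_function}), for which the integration-by-parts machinery of Section~\ref{sec_Bulk_terms} does give $O(1)$ control on the compensator via Corollary~\ref{coro_bound_RD}; second, and more fundamentally, it does not use Dynkin's formula on the volume observable at all but instead decomposes $\langle\Gamma^N_{t'},J_i\rangle - \langle\Gamma^{\text{ref}},J_i\rangle = N^{-1}\log A^{J_i}_{t'}+N^{-1}\log D^N_{\beta,J_i}$ where $D^N_{\beta,J_i}$ is an \emph{exponential} mean-one martingale, and applies Doob's maximal inequality to $D^N_{\beta,J_i}$ directly. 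This avoids any need for an $L^2$ estimate on a quadratic variation.

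In short, both gaps are genuine: Part~1 requires the iterative small-time bootstrap, and Part~2 requires replacing $d_{L^1}$ by smooth test functions together with an exponential-martingale (Girsanov) decomposition rather than Dynkin plus $L^2$ estimates.
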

%
%Proposition~\ref{prop_short_time_existence_of_weak_solution} tells us that, under the contour dynamics, typical interfaces evolve on (at least) a diffusive time scale. 
%It thus makes sense to look for short time at microscopic trajectories in $\e$, i.e. in a neighbourhood of the initial condition $\gamma^{\mathrm{ref},N}$. 
We use item 2) in Proposition~\ref{prop_short_time_existence_of_weak_solution} to only work with trajectories taking values in the effective state space $\e$ at each time (except Theorem~\ref{theo_large_dev_general}, where general trajectories are treated).\\

The second result, Proposition~\ref{prop_value_slope_at_poles}, concerns the role of the parameter $\beta$. 
This result is perhaps the most striking feature of the contour dynamics. 
To state it, we need some notations. 
For $\gamma^N\in \Omega^N_{\text{mic}}\cap \e, \ell\in \N_{\geq 1}$ and a vertex $x\in V(\gamma^N)=(N^{-1}\Z)^2\cap \gamma^N$, 
recall the definition~\eqref{eq_def_xi} of $\xi_x$ and denote by $\xi_x^{+,\ell}$ the local average (see Figure~\ref{fig_penteopole}):
\begin{align}
\xi_x^{+,\ell} 
= 
\frac{1}{\ell+1}\sum_{\substack{y\in V(\gamma^N)\cap B_1(x,\ell/N) \\ y\geq x   }}\xi_y
,
\end{align}
where $B_1(x,a)$ is the ball of centre $x$ and radius $a>0$ in 1-norm~\eqref{eq_def_normes_1_2}. 
By $y\geq x$ we mean that $y$ is encountered after $x$ when travelling on $\gamma^N$ clockwise. 
The parameter $\ell$ will always be chosen much smaller than the number of points in $V(\gamma^N)$, so that no vertex is counted twice in $\xi^{+,\ell}_x$. 

We shall informally refer to $\xi_x^{+,\ell}$ as the slope (on the right-side of $x$). 
The slope $\xi_x^{-,\ell}$ on the left of $x$ is defined similarly by averaging over points $y\leq x$. 
\begin{prop}\label{prop_value_slope_at_poles}
Take $\beta>\log 2$ and a time $T>0$. Then, for any bias $H\in\C$, any test function $G\in\C$ and any $\delta>0$, if $k\in\{1,3\}$:
\begin{align}
&\lim_{\epsilon\rightarrow 0}
\limsup_{N\rightarrow\infty}\frac{1}{N}\log\Prob^{N}_{\beta,H}\bigg(\forall t\in[0,T], \gamma^N_t \in\e;
\nonumber\\
&\hspace{5cm}\bigg| \int_0^{T}G(t,L_k(\gamma^N_t))\big(\xi_{L_k(\gamma^N_t)}^{\pm,\epsilon N} - e^{-\beta}\big)\, dt\bigg|\geq \delta\bigg) 
= 
- \infty
.
\end{align}
If on the other hand $k\in\{2,4\}$:
\begin{align}
&\lim_{\epsilon\rightarrow 0}
\limsup_{N\rightarrow\infty}\frac{1}{N}\log\Prob^{N}_{\beta,H}\bigg(\forall t\in[0,T], \gamma^N_t \in\e;\nonumber\\
&\hspace{5cm}\bigg| \int_0^{T}G(t,L_k(\gamma^N_t))\big(1-\xi_{L_k(\gamma^N_t)}^{\pm,\epsilon N} - e^{-\beta}\big)\, dt\bigg|\geq \delta\bigg) 
= 
- \infty
.
\end{align}
\end{prop}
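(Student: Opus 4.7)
The proof follows the Kipnis-Olla-Varadhan entropy method. By the $\pi/2$-rotational symmetries of $\Z^2$, it suffices to treat $k=1$ and the right slope $\xi^{+,\epsilon N}_{L_1}$; the other cases follow by symmetry and reversibility. The tilt $H$ can be absorbed via a standard entropy/Hölder inequality, since the Radon-Nikodym derivative $d\Prob^N_{\beta,H}/d\Prob^N_\beta$ has log of order $O(N)$ (from the computations in Section~\ref{sec_relevant_martingales}).

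The main step is to reduce, via exponential Chebyshev and Feynman-Kac, the super-exponential bound to the variational estimate
\begin{equation*}
\limsup_{\epsilon\to 0}\limsup_{N\to\infty}\,\sup_{t\leq T}\frac{1}{N}\,\sup_f\Big\{aN\big\langle V_{N,t}\mathbf{1}_\e,\,f^2\big\rangle_{\nu^N_\beta} - N^2 D_\beta(f)\Big\}\leq 0
\end{equation*}
for each $a>0$, where $V_{N,t}(\gamma)=G(t,L_1(\gamma))(\xi^{+,\epsilon N}_{L_1(\gamma)}-e^{-\beta})$, the inner supremum runs over $L^2(\nu^N_\beta)$-densities, and $D_\beta$ is the symmetric Dirichlet form of the untilted contour dynamics. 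Sending $a\to\infty$ at the end yields $-\infty$.

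The crux is the identification of $e^{-\beta}$ as the local stationary slope at a pole under $\nu^N_\beta$. Heuristically, $\nu^N_\beta\propto e^{-\beta N|\gamma|}$ weighs each edge by $e^{-\beta}$, and the pole acts as a one-sided reservoir whose growth rate $e^{-2\beta}$ balanced against the pole-delete rate $1$ selects density $e^{-\beta}$ for vertical edges just past the north pole. Concretely, one analyses the marginal of $\nu^N_\beta$ on a window of $\epsilon N$ vertices clockwise from $L_1$ (the content of Section~\ref{app_behaviour_pole}), showing that the expected slope converges to $e^{-\beta}$ with sub-Gaussian fluctuations. Given this input, the variational estimate is closed following the Guo-Papanicolaou-Varadhan approach: one constructs a ``primitive'' $\phi_{N,G}$ supported on the window with $\lcal_\beta\phi_{N,G}\approx V_{N,t}$ on $\e$, and uses reversibility together with the bound $D_\beta(f^2)\leq C\|f\|_\infty^2 D_\beta(f)$ (controlled on $\e$ since curves stay close to $\gamma^{\text{ref}}$) and Cauchy-Schwarz to close the estimate.

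The main obstacle is the identification of the stationary local slope as $e^{-\beta}$: the pole-growth moves, which simultaneously flip two blocks and replace the current pole by a new one of length two, couple the bulk SSEP reversibility with a non-trivial reservoir dynamics. A secondary obstacle is the dependence of $\phi_{N,G}$ on the moving pole $L_1(\gamma)$, which requires care so that single-block flips far from the pole do not spuriously contribute to $\lcal_\beta\phi_{N,G}$ through the $L_1$-argument of $G$.
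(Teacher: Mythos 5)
Your overall framing (reduce to a variational bound via Feynman--Kac, push $H$ through via a change-of-measure argument, then identify $e^{-\beta}$) is sound, and the $H$-absorption is essentially what Corollary~\ref{coro_change_measure_sous_exp} does (with the caveat that one must first control $\int_0^T|\gamma^N_t|dt$ via Lemma~\ref{lemm_tightness_sup_length}, since the Radon--Nikodym log is not $O(N)$ uniformly). But the mechanism you propose for the crux step is not what the paper does and has a real gap.

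First, the paper does not directly analyse the marginal of $\nu^N_\beta$ near the pole, nor does it construct a ``primitive'' $\phi_{N,G}$ with $\lcal_\beta\phi_{N,G}\approx V_{N,t}$. That corrector route would have to solve a Poisson equation involving both SSEP flips and the two-block regrowth/deletion moves, with the argument $L_1(\gamma)$ of $G$ itself moving under the dynamics; you flag this as a ``secondary obstacle,'' but it is in fact the central one and you offer no way around it. There is also no usable bound of the form $D_\beta(f^2)\leq C\|f\|_\infty^2 D_\beta(f)$ in this setting, since densities in the variational formula have no a priori sup control. Second, and more structurally, the paper splits the estimate into two independent replacement steps. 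Lemma~\ref{lemm_1_block_2_block_pour_pis2} first replaces the mesoscopic slope $\xi^{\pm,\epsilon N}_{L_1}$ by the \emph{pole observable} ${\bf 1}_{p_1=2}=\xi_{L_1+2{\bf b}_1/N}$ via one- and two-block estimates that use only the bulk SSEP part of the Dirichlet form. Lemma~\ref{lemm_valeur_p_is_2} then evaluates the time average of ${\bf 1}_{p_1=2}$; this step requires building a compact frame around the moving pole (via the tail estimates of Lemmas~\ref{lemm_required_compactness_pis2}--\ref{lemm_height_compactness_pis2} controlling $h_n$, $\ell_n$), conditioning on that frame so the marginal measure becomes the explicit $\mu_\ell$ of~\eqref{eq_def_mu_ell}, and then mapping to a two-species zero-range process. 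The key input is that the regrowth moves make this zero-range dynamics \emph{irreducible} (this is the reason $\beta<\infty$ is needed at all), so any limiting density with vanishing Dirichlet form is constant and the problem reduces to the equilibrium computation $E_{\mu_\ell}[{\bf 1}_{p=2}-e^{-\beta}]\to 0$ done in Proposition~\ref{prop_equil_properties_mu_ell}. Your sketch never isolates ${\bf 1}_{p=2}$ as the bridging observable, never introduces the conditioning on $\ell_n,h_n$ that makes a local measure computable, and never invokes irreducibility, so the crucial identification of $e^{-\beta}$ is asserted rather than derived.
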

\begin{figure}
\begin{center}
\includegraphics[width=10cm]{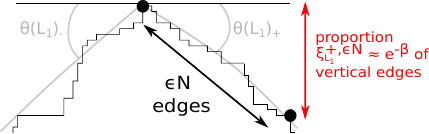} 
\caption{North pole of a curve with the proportion $\xi^{+,\epsilon N}_{L_1}$ of vertical edges to the right of the pole. For drawing convenience, $\xi^{+,\epsilon N}_{L_1}$ is assumed to be close to $e^{-\beta}$ (this is only shown to be true for its time average in Proposition~\ref{prop_value_slope_at_poles}).  
The corresponding angles $\theta(L_1)_\pm$ are also drawn. \label{fig_penteopole}}
\end{center}
\end{figure}
Proposition~\ref{prop_value_slope_at_poles} shows that, as long as trajectories remain in the effective state space $\e$, the time average of the slopes on either side of the poles are fixed in terms of $\beta$. 
As we explain in Section~\ref{sec_heuristics}, Proposition~\ref{prop_value_slope_at_poles} can be understood as a statement that the pole dynamics has an effect similar to reservoirs at density $e^{-\beta}$ or $1-e^{-\beta}$ in the simple exclusion process.\\ 

In the following, it will be useful to rephrase the condition on the slope described in Proposition~\ref{prop_value_slope_at_poles} in terms of a condition that makes sense for general curves in $\e$. 
We shall say that a curve $\gamma\in\e$ has slope $e^{-\beta}$ at pole $k$ with $1\leq k\leq 4$ (see Figure~\ref{fig_penteopole}) 
if the angle $\theta(L_k(\gamma)_\pm)$ between the tangent vector ${\bf T}$ approaching $L_k(\gamma)$ from the left ($-$) or the right ($+$) and the vector ${\bf b}_0 = (1,0)$ satisfies:
\begin{equation}
\tan\Big(\theta(L_k(\gamma)_-) + \frac{(k-1)\pi}{2}\Big) = \frac{e^{-\beta}}{1-e^{-\beta}} 
= 
-\tan\Big(\theta(L_k(\gamma)_+) + \frac{(k-1)\pi}{2}\Big)
.
\label{eq_tangente_angle_at_the_pole}
\end{equation}
\noindent\textbf{Hydrodynamic limit}\\
Next, we investigate the typical evolution, in the large $N$ limit, 
of interfaces following the contour dynamics with a bias $H\in\C$. 
We prove that they evolve according to an anisotropic motion by curvature as in~\eqref{eq_def_MCM}, 
but with an influence of the parameter $\beta$. 
To prove such a result, a suitable topology on trajectories is required. 
In the proof of the hydrodynamic limit for the zero temperature stochastic Ising model in \cite{Lacoin2014}--\cite{Lacoin2014a}, the authors prove uniform convergence in time for the topology associated with the Hausdorff distance~\eqref{eq_def_distance_Hausroff}. The Hausdorff distance between sets appears as a natural distance to put on the state space. Indeed, away from each pole, portions of the interface can be mapped to a SSEP (see Section~\ref{sec_heuristics}). Hausdorff convergence of the interface can then be shown to be equivalent to weak convergence of the empirical measure in the associated SSEP, a topology in which hydrodynamics are known for this model.
\begin{figure}
\begin{center}
\includegraphics[width=12cm]{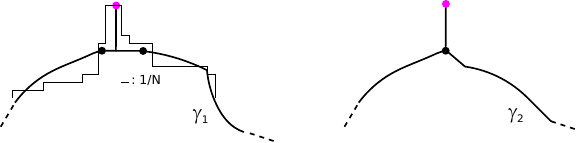} 
\caption{Portions of two curves $\gamma_1,\gamma_2\in\Omega$ and of one element of a sequence of microscopic curves in $\Omega_{\text{mic}}^N,N\in\N_{\geq 1}$ that converges to $\gamma_1$ in Hausdorff distance. 
The north poles of $\gamma_1,\gamma_2$ are point-like and represented by the magenta dots. 
The black dots delimit the flat portion on which the vertical line beneath the north poles of $\gamma_1,\gamma_2$ rest. This portion may ($\gamma_2$) or may not ($\gamma_1$) be reduced to a point. This possibility causes difficulties at the level of the topology, in Appendix~\ref{app_prop_e_r}.\label{fig_non_simple_curve}}
\end{center}
\end{figure}
In the case of the contour dynamics, the Skorokhod topology associated with the Hausdorff distance seems like a suitable choice. 
However, this topology turns out to be too strong. 
Indeed, although it is possible to control the volume of droplets pointwise in time (in terms of the volume distance $d_{L^1}$ defined in~\eqref{eq_def_distance_L1}), 
a pointwise control seems out of reach for the poles. 
This is because poles can grow regardless of the shape of the interface, making it very difficult to prove their macroscopic diffusive motion.

Still, some information on the poles needs to be present in the topology as we need to characterise curves with degenerate poles as in Figure~\ref{fig_non_simple_curve}.  
It turns out to be possible and sufficient to control the trajectory of the poles in a time-integrated way. 
For each $T>0$, define the set:
\begin{align}
E([0,T],\Omega) &:= D_{L^1}([0,T],\Omega)\cap\Big\{(\gamma_t)_{t\leq T} : \int_0^T |\gamma_t|\, dt<\infty\Big\},\nonumber\\
E([0,T],\e) &:= E([0,T],\Omega)\cap \Big\{ (\gamma_t)_{t\leq T}: \gamma_t\in \e\text{ for }t\in[0,T]\Big\}.\label{eq_def_E_0_T_Omega}
\end{align}
\textbf{Notation: } we often use the subscript $\cdot$ (as in $\gamma_\cdot$ in~\eqref{eq_def_d_E}) to denote a trajectory, provided the time interval on which it is defined is clear from the context. \\

Let $d_{L^1}^S$ denote the Skorokhod distance associated with the volume distance $d_{L^1}$ between curves (recall the convention~\eqref{eq_dist_volume_of_interf_as_dist_droplets} that the $L^1$ distance between two curves is by definition the $L^1$ distance between the droplets they delimit), 
see e.g.~\cite[Chapter 4 Section 1]{Kipnis1999} for the classical construction of the Skorokhod distance. 
Recall also that $d_{\mathcal{H}}$ is the Hausdorff distance~\eqref{eq_def_distance_Hausroff} on $\Omega$. 
The set $E([0,T],\e)$ is then equipped with the distance:
\begin{align}
\forall \gamma_\cdot,\tilde \gamma_\cdot \in E([0,T],\e),
\qquad 
d_E(\gamma_\cdot,\tilde \gamma_\cdot) 
:= 
d^S_{L^1}(\gamma_\cdot,\tilde \gamma_\cdot) + \int_0^T d_{\mathcal H}(\gamma_t,\tilde \gamma_t)\, dt
.
\label{eq_def_d_E}
\end{align}
Note that, since trajectories in $E([0,T],\e)$ have almost always finite length, the associated droplets are almost always bounded subsets of $\R^2$, thus the Hausdorff distance in~\eqref{eq_def_d_E} is almost always well-defined. 
Properties of $(E([0,T],\e),d_E)$ are established in Appendix~\ref{sec_the_set_E(0,T0)}. 
\\

Informally stated, our hydrodynamic limit result is the following. 
Introduce the mobility $\mu$ of the model:
\begin{equation}
\mu(\theta) 
:= 
\frac{|\sin(2\theta)|}{2(|\sin(\theta)|+|\cos(\theta)|)},\qquad \theta\in[0,2\pi]
.
\label{eq_def_mobility_first}
\end{equation}
The sequence $(\mathbb{P}^N_{\beta,H})_{N\in \N_{\geq 1}}$ of laws of the trajectory of interfaces converges weakly to a probability measure concentrated on trajectories in $E([0,T],\e)$ that are weak solutions, in the sense defined below in~\eqref{eq_formulation_faible_avec_der_en_temps}, of:
\begin{equation}
\begin{cases}
\partial_t \gamma\cdot {\bf N}= a\partial^2_s \gamma\cdot {\bf N} - \mu H = ak- \mu H \qquad\text{away from the poles},\\
\gamma_t\text{ satisfies }\eqref{eq_tangente_angle_at_the_pole} \text{ at almost every }t\in[0,T].
\end{cases}\label{eq_MCM_with_beta_and_H}
\end{equation}
Above, ${\bf N}$ is the inwards normal vector, $s$ the arclength coordinate, 
$k$ the curvature.  
The anisotropy $a$ (recall~\eqref{eq_def_a}) and mobility $\mu$ are functions of the angle $\theta(x)\in[0,2\pi]$ between the tangent ${\bf T}(x)$ at a point $x\in\gamma_t$ ($t\in[0,T]$) and the first basis vector ${\bf b}_0$ (recall~\eqref{eq_def_b_theta}). 

The precise hydrodynamic limit result is given next, for sufficiently short time only (see however Remark~\ref{rmk_longer_time}). 
To state it, let $\alpha$ denote a primitive of the anisotropy $a$ defined in~\eqref{eq_def_a}, 
in the sense that $\alpha'(\theta) = -a(\theta)$ for each $\theta \in[0,2\pi]\setminus(\pi/2)\Z$:
\begin{equation}
\alpha(\theta) = \frac{a(\theta)}{2}\frac{\sin(2\theta)\cos(2\theta)}{|\sin(2\theta)|} = \frac{{\bf T}(\theta)\cdot {\bf b}_0 {\bf T}(\theta)\cdot {\bf b}_{\pi/2}}{4\|{\bf T}(\theta)\|_1}\Big[\frac{1}{|{\bf T}(\theta)\cdot {\bf b}_{\pi/2}|}-\frac{1}{|{\bf T}(\theta)\cdot {\bf b}_0|}\Big],\label{eq_def_alpha}
\end{equation}
with ${\bf T}(\theta) = \cos(\theta){\bf b}_0+\sin(\theta){\bf b}_{\pi/2}$. 
\begin{prop}\label{prop_limite_hydro}
Recall Definition~\ref{def_CI} of the initial condition $\gamma^{\mathrm{ref},N}$ for $N\in\N_{\geq 1}$. 
Let $\beta>\log 2$, $H\in\C$ and $t_0 = t_0(\beta,H,r_0,|\gamma^{\mathrm{ref}}|)$ be the time of Proposition~\ref{prop_short_time_existence_of_weak_solution}. Then $(\mathbb{P}^N_{\beta,H})_N$ converges, in the weak topology associated with $d_E$, to a measure concentrated on trajectories in $E([0,t_0],\e)$ that have almost always point-like poles, i.e. for a.e. $t\in[0,t_0]$ and each $1\leq k\leq 4$, 
the pole $P_k(\gamma_t) := [L_k(\gamma_t),R_k(\gamma_t)]$ is reduced to the point $L_k(\gamma_t)=R_k(\gamma_t)$. 
Moreover, these trajectories are weak solutions of anisotropic motion by curvature with drift on $[0,t_0]$ in the following sense: for any $t\leq t_0$ and any test function $G$ in the set $\C$ defined in~\eqref{eq_def_ensemble_test_functions},
\begin{align}
\big<\Gamma_t,G_t\big> - \big<\Gamma^{\mathrm{ref}},G_0\big> - \int_0^t \big<\Gamma_{t'},\partial_{t'}G_{t'}\big>\, dt'
&= 
\int_0^{t}\int_{\gamma_{t'}\setminus \cup_kP_k(\gamma_{t'})} \alpha(\theta(s)) \partial_{s} G({t'},\gamma_{t'}(s))\, ds \, d{t'}
\nonumber\\
&\quad - \sum_{k=1}^4\int_0^{t}\Big(\frac{1}{2}-e^{-\beta}\Big)G({t'},L_k(\gamma_{t'}))\, d{t'}
\nonumber\\
&\quad
+ \int_0^{t}\int_{\gamma_{t'}}\mu(\theta(s)) (HG)({t'},\gamma_{t'}(s))\, ds\, d{t'}
.
\label{eq_formulation_faible_avec_der_en_temps}
\end{align}
Above, $\Gamma_t$ is the droplet associated with $\gamma_t$, $\big<\Gamma_t,G_t\big>$ is the integral of $G_t$ on $\Gamma_t$ as in~\eqref{eq_def_crochet} and $s$ is the arclength coordinate on $\gamma_t$ at time $t\in[0,t_0]$. 
\end{prop}
\begin{rmk}\label{rmk_longer_time}
The time $t_0= t_0(\beta,H,r_0,|\gamma^{\mathrm{ref}}|)$ until which Proposition~\ref{prop_short_time_hydro} is proven does not make use of the structure of solutions to~\eqref{eq_formulation_faible_avec_der_en_temps}, 
and one can in fact improve the result as follows (this improvement is carried out in Section~\ref{sec_up_to_time_T0}). 
Take $H\in\C$, $T\geq t_0(\beta,H,|\gamma^{\mathrm{ref}}|)$ and make the following assumptions:
\begin{enumerate}
	\item Equation~\eqref{eq_formulation_faible_avec_der_en_temps} admits only one solution on $[0,T]$, call it $\gamma^H_{\cdot} = (\gamma^H_t)_{t\leq T}$. 
	\item $\gamma^H_\cdot$ remains in $\e$ until time $T$, 
	in the sense:
\begin{equation}
	\exists \zeta_H>0,\qquad 
	B_{d_E}\big(\gamma^H_\cdot,\zeta_H\big)\subset E([0,T],\e).
	\end{equation}
\end{enumerate}
Then $(\Prob^N_{\beta,H})_N$, as a sequence of measures on $E([0,T],\Omega)$, converges weakly to the measure $\delta_{\gamma^H_\cdot}$.\demo 
\end{rmk}
\begin{rmk}
The term on the second line of~\eqref{eq_formulation_faible_avec_der_en_temps} fixes the value of the slope at the pole of curves to the one prescribed by Proposition~\ref{prop_value_slope_at_poles}. 
To see why, assume that the curvature $(k_t)_{t\leq t_0}$ on a solution $(\gamma_{t})_{t\leq t_0}$ of~\eqref{eq_formulation_faible_avec_der_en_temps} is, 
say, continuous and bounded on $\gamma_t\setminus \cup_k P_k(\gamma_t)$ at each time $t\leq t_0$. 
By definition, the tangent angle $s\mapsto\theta(s) = \theta(\gamma_t(s))$ then satisfies $\partial_{s}\theta(s) = -k_t(s)$ 
for each arclength coordinate $s$ corresponding to a point in $\gamma_t\setminus \cup_k P_k(\gamma_t)$, 
with the $-$ sign due to the clockwise parametrisation of $\gamma_t$. 
Let $G\in\C$. 
Integrating $\alpha\partial_{s} G(t,\cdot)$ by parts on each region in~\eqref{eq_formulation_faible_avec_der_en_temps} for a fixed $t\in[0,t_0]$, one finds by definition~\eqref{eq_def_alpha} of $\alpha$:
\begin{align}
\int_{\gamma_t\setminus \cup_k P_k(\gamma_t)} \alpha(\theta(s)) &\partial_{s} G(t,\gamma_t(s))\, ds
\nonumber\\
&\quad 
= \sum_{k=1}^4 \big[\alpha\big(\theta(L_{k+1}(\gamma_t))_-\big)G(t, L_{k+1}(\gamma_t))-\alpha\big(\theta(R_k(\gamma_t))_+\big)G(t, R_k(\gamma_t))\big] 
\nonumber\\
&\qquad 
- \int_{\gamma_t\setminus \cup_k P_k(\gamma_t)} a(\theta(s))k(\gamma_t(s))G(t,\gamma_t(s)) \, ds
.
\label{eq_IPP_curvature}
\end{align}
Since each pole is almost always reduced to a point, $L_k(\gamma_t) = R_k(\gamma_t)$ for each $k$ and almost every $t\in[0,t_0]$ and the sum in~\eqref{eq_IPP_curvature} compensates the second line of~\eqref{eq_formulation_faible_avec_der_en_temps} provided $ \alpha\big(\theta(L_{k+1}(\gamma_t))_-\big)= 1/4-e^{-\beta}/2 = -\alpha\big(\theta(L_k(\gamma_t))_+\big)$. 
This can be shown to hold when the tangent angle on either side of each pole satisfies~\eqref{eq_tangente_angle_at_the_pole}.\demo
\end{rmk}
\noindent\textbf{Large deviations}\\
We obtain upper-bound large deviations for the contour dynamics at each $\beta>\log 2$. 
Assuming solutions of~\eqref{eq_formulation_faible_avec_der_en_temps} to be unique, lower-bound large deviations can also be derived. 
Upper and lower bounds match for suitably regular trajectories. Specific to our model is, again, the control of the poles of the curves.\\
Let $T>0$ and $\beta>\log 2$. 
Given $\gamma^{\mathrm{ref}}$ as in Definition~\ref{def_CI} a trajectory $\gamma_\cdot\in  E([0,T],\e)$ with associated droplets $(\Gamma_t)_{t\leq T}$, define, recalling that $L_k,R_k$ are the extremities of the pole $P_k$:
\begin{align}
\ell^\beta_H(\gamma_\cdot) &= \big<\Gamma_{T},H_{T}\big> - \big<\Gamma^{\mathrm{ref}},H_0\big> - \int_0^{T} \big<\Gamma_t,\partial_t H_t\big>\, dt  -\int_0^{T}\, dt\int_{\gamma_t\setminus \cup_k P_k(\gamma_t)}\alpha(\theta(s))\partial_{s}H(t,\gamma_t(s))\, ds
\nonumber
\\
&\hspace{3.5cm}+\Big(\frac{1}{4}-\frac{e^{-\beta}}{2}\Big)\int_0^{T}\sum_{k=1}^4 \big[H(t,L_k(\gamma_t))+H(t,R_k(\gamma_t))\big]\, dt
.
\label{eq_def_ell_H}
\end{align}
Define also:
\begin{align}
J^\beta_H(\gamma_\cdot) = 
\ell^\beta_H(\gamma_\cdot) - \frac{1}{2}\int_0^{T}\int_{\gamma_t} \mu(\theta(s))H^2(t,\gamma_t(s))\, ds\,  dt
, 
\qquad\gamma_\cdot\in E([0,T],\e)
,
\label{eq_def_J_H}
\end{align}
where the mobility $\mu$ is defined in~\eqref{eq_def_mobility_first}. \\
To build the rate function, we will have to restrict the state space to control the behaviour of the poles. Introduce thus the subset $E_{pp}([0,T],\e)\subset E([0,T],\e)$ of trajectories with almost always point-like poles:  
\begin{equation}
E_{pp}([0,T],\e) = \bigg\{\gamma_\cdot\in E([0,T],\e) : \sum_{k=1}^4\int_0^{T}\|L_k(\gamma_t)-R_k(\gamma_t)\|_\infty dt = 0\bigg\}.\label{eq_def_E_pp}
\end{equation}
Recall that $R_k$ $(L_k)$ is the right (left) extremity of pole $k\in\{1,...,4\}$. Let us now define the rate function $I_\beta(\gamma_\cdot|\gamma^{\mathrm{ref}})$ for trajectories $\gamma_\cdot\in E([0,T],\e)$:
\begin{equation}
I_\beta(\gamma_\cdot|\gamma^{\mathrm{ref}}) = \begin{cases}\sup_{H\in\C} J_H^ \beta(\gamma_\cdot)\quad &\text{if }\gamma_\cdot\in E_{pp}([0,T],\e), \\
+\infty &\text{otherwise.}
\end{cases}\label{eq_def_rate_functions}
\end{equation}
\begin{rmk}
\begin{itemize}
	\item It is possible by Proposition~\ref{prop_value_slope_at_poles} to enforce that only trajectories with slope $e^{-\beta}$ at the poles at almost every time have finite rate function. One would expect this condition to already be present in~\eqref{eq_def_rate_functions}, but the very weak topology at the poles makes it more complicated to see than e.g. for a SSEP with reservoirs, as done in \cite{Bertini2009}.
	
	\item If $\beta=\infty$ and $\gamma_{\cdot}$ is a sufficiently regular trajectory in $C([0,T],\e)$ starting from $\gamma^{\mathrm{ref}}$ (say, with well-defined, continuous and bounded normal speed and curvature at each time $t\in(0,T]$), then setting $\beta=\infty$ in~\eqref{eq_def_rate_functions} one formally obtains:
\begin{equation}
I_{\infty}(\gamma_\cdot|\gamma^{\mathrm{ref}}) 
=
\frac{1}{2}\int_0^ {T}\int_{\gamma_t} \frac{\Big(v\big(\gamma_t(s)\big)-a\big(\theta(s)\big)k\big(\gamma_t(s)\big)\Big)^ 2}{\mu\big(\theta(s)\big)}\, ds\, dt
.
\label{eq_rate_function_beta_infty}
\end{equation}
As conjectured in~\eqref{eq_conjectured_form_rate_function}, the rate function $I_{\infty}(\cdot|\gamma^{\mathrm{ref}})$ thus measures the quadratic cost of deviations from anisotropic motion by curvature. At $\beta<\infty$, $I_\beta(\cdot|\gamma^{\mathrm{ref}})$ can also be written in the form~\eqref{eq_rate_function_beta_infty}, but only for trajectories that are not smooth: they must have kinks at the poles, in the sense that they satisfy the condition~\eqref{eq_tangente_angle_at_the_pole} at almost every time.\demo
\end{itemize}
\end{rmk}
Define the set $\mathcal A_{\beta,T}\subset E_{pp}([0,T],\e)$ of trajectories that can be obtained as a solution of the anisotropic motion by curvature with a smooth drift $H\in \C$~\eqref{eq_formulation_faible_avec_der_en_temps}:
\begin{align}
\mathcal A_{\beta,T}= \Big\{\gamma_\cdot \in E_{pp}([0,T],\e) : &\text{ there is a bias }H\in\C\text{ such that }\eqref{eq_formulation_faible_avec_der_en_temps}\text{ has a}\nonumber\\
&\text{ unique solution in }E([0,T],\e)\text{,  
and this solution is }\gamma_\cdot\Big\}
.
\label{eq_def_A_T_0_r_beta}
\end{align}
\begin{theo}\label{theo_large_dev}
Let $T>0$ and $\beta>\log 2$. For any closed set $C\subset E([0,T],\e)$:
\begin{equation}
\limsup_{N\rightarrow\infty} \frac{1}{N}\log\mathbb{P}^N_{\beta}\big(\gamma^N_\cdot \in C\big) 
\leq 
-\inf_{C}I_\beta(\cdot|\gamma^{\mathrm{ref}})
.
\end{equation}
Moreover, for any open set $O$ with $O\subset E([0,T],\e)$:
\begin{equation}
\liminf_{N\rightarrow\infty} \frac{1}{N}\log\mathbb{P}^N_{\beta}\big(\gamma^N_\cdot \in O\big) \geq -\inf_{O\cap \mathcal A_{\beta,T}}I_\beta(\cdot|\gamma^{\mathrm{ref}}).
\end{equation}
\end{theo}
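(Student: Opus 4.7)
My plan is to follow the entropy/tilted-dynamics method of Kipnis--Olla--Varadhan, building on the Radon--Nikodym computations of Section~\ref{sec_relevant_martingales}, the pole estimates of Section~\ref{app_behaviour_pole}, and the hydrodynamic limit Proposition~\ref{prop_limite_hydro} extended to biased dynamics. The core object is, for each $H\in\C$, the exponential martingale $M^{H,N}_T = d\mathbb{P}^N_{\beta,H}/d\mathbb{P}^N_\beta$, whose logarithm can be written, via the generator computation applied to $\exp[N\langle\Gamma^N_t,H_t\rangle]$ and the explicit form~\eqref{eq_def_jump_rates_H} of the tilt, as
\[\log M^{H,N}_T \;=\; N\,\ell^\beta_H(\gamma^N_\cdot) - \tfrac{N}{2}\!\int_0^T\!\!\int_{\gamma^N_t}\mu(\theta(s)) H^2(t,\gamma^N_t(s))\,ds\,dt + o(N) \;=\; N J^\beta_H(\gamma^N_\cdot) + o(N),\]
where the error is super-exponentially small in $N$. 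The SSEP identification of Section~\ref{sec_heuristics}, the replacement lemma of Appendix~\ref{sec_replacement_lemma}, and Proposition~\ref{prop_value_slope_at_poles} (which pins the slope on either side of each pole to $e^{-\beta}$) are what convert the microscopic sums appearing in $\log M^{H,N}_T$ into the anisotropy integral $\int\alpha\partial_s H$ and the pole prefactor $1/4 - e^{-\beta}/2$ in~\eqref{eq_def_ell_H}.

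For the upper bound, I would first observe that by Proposition~\ref{prop_short_time_existence_of_weak_solution} and a choice of $T\leq t_0$, one may freely restrict to $\{\gamma^N_\cdot\in E([0,T],\e)\}$ at super-exponential cost. Then for each $H\in\C$ and each closed set $C$,
\[\mathbb{P}^N_\beta(\gamma^N_\cdot\in C) = \E^N_{\beta,H}\bigl[(M^{H,N}_T)^{-1}{\bf 1}_{\gamma^N_\cdot\in C}\bigr] \leq e^{-N\inf_{C}J^\beta_H(\cdot) + o(N)}.\]
To conclude, one passes from $\sup_H\inf_C$ to $\inf_C\sup_H = -\inf_C I_\beta(\cdot|\gamma^{\text{ref}})$ via exponential tightness in the topology $d_E$ of~\eqref{eq_def_d_E} (whose Skorokhod component follows from the length control in Proposition~\ref{prop_short_time_existence_of_weak_solution}(1)) and a minimax argument on a countable dense family of biases. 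The $+\infty$ value of $I_\beta$ outside $E_{pp}([0,T],\e)$ is justified by testing against biases $H$ concentrated near the poles: Proposition~\ref{prop_value_slope_at_poles} together with the pole analysis of Section~\ref{app_behaviour_pole} yields an arbitrarily large cost for macroscopically long poles.

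For the lower bound, fix $\gamma^*_\cdot\in O\cap \mathcal A_{\beta,T}$ and let $H^*\in\C$ be the bias witnessing $\gamma^*_\cdot\in\mathcal A_{\beta,T}$, so that $\gamma^*_\cdot$ is the unique solution of~\eqref{eq_formulation_faible_avec_der_en_temps} with drift $H^*$. The hydrodynamic limit of Proposition~\ref{prop_limite_hydro}, applied to $\mathbb{P}^N_{\beta,H^*}$ (uniqueness of limit points together with the martingale argument works identically once the tilt is incorporated into the generator), gives $\mathbb{P}^N_{\beta,H^*}(\gamma^N_\cdot\in O')\to 1$ for any neighbourhood $O'\subset O$ of $\gamma^*_\cdot$. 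The change of measure combined with Jensen's inequality applied to the conditional expectation of $\log M^{H^*,N}_T$ under $\mathbb{P}^N_{\beta,H^*}$ then yields
\[\liminf_{N\to\infty}\frac{1}{N}\log\mathbb{P}^N_\beta(\gamma^N_\cdot\in O) \geq -J^\beta_{H^*}(\gamma^*_\cdot).\]
The final identity $J^\beta_{H^*}(\gamma^*_\cdot) = I_\beta(\gamma^*_\cdot|\gamma^{\text{ref}})$ comes from the fact that the Euler--Lagrange equation for the variational problem $\sup_{H\in\C} J^\beta_H(\gamma^*_\cdot)$ is exactly~\eqref{eq_formulation_faible_avec_der_en_temps} with drift $H$, so the unique maximiser is $H^*$ itself.

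The main obstacle in both bounds is the control of the poles. At the level of the martingale, the quadratic contribution $\int\!\int \mu H^2$ at the poles has to be shown to vanish despite the singular behaviour there (poles carry point masses in the rate function~\eqref{eq_def_ell_H}, but no $L^2$ boundary term); this relies crucially on the dynamical analysis of Section~\ref{app_behaviour_pole} and on Proposition~\ref{prop_value_slope_at_poles}. At the level of topology, proving exponential tightness in $d_E$ requires controlling both the volume oscillations of $\Gamma^N$ and the time integral of the pole length, which is why the integrated Hausdorff component in~\eqref{eq_def_d_E} is chosen weak enough to allow tightness yet strong enough to identify the pole trajectory up to null sets. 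The restriction to $\mathcal A_{\beta,T}$ in the lower bound side-steps the density question, which is genuinely open here because uniqueness of weak solutions to~\eqref{eq_formulation_faible_avec_der_en_temps} for arbitrary low-regularity drifts is not available.
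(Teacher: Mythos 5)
Your overall architecture --- exponential tilting, the identity $N^{-1}\log D^N_{\beta,H}=J^\beta_{H,\epsilon}+(\text{errors controlled on a good set of super-exponentially large probability})$, Jensen's inequality plus hydrodynamics of the tilted process for the lower bound, and the completion-of-the-square identity $J^\beta_{H^*}(\gamma^*_\cdot)=I_\beta(\gamma^*_\cdot|\gamma^{\text{ref}})$ --- is the paper's. The lower bound is essentially correct as sketched, up to two points of detail: Proposition~\ref{prop_limite_hydro} only covers $[0,t_0]$, so reaching an arbitrary $T$ requires the iteration of Section~\ref{sec_up_to_time_T0} under the interior condition~\eqref{eq_gamma_H_in_interior}; and no restriction to $T\leq t_0$ is needed anywhere, since $C$ and $O$ are by hypothesis subsets of $E([0,T],\e)$, so the constraint $\gamma^N_\cdot\in E([0,T],\e)$ is already built into the events.

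The genuine gap is in the upper bound, at the step ``pass from $\sup_H\inf_C$ to $\inf_C\sup_H$ via a minimax argument on a dense family of biases''. The minimax lemma requires each $J^\beta_H$ to be upper semicontinuous on $E([0,T],\e)$, and here it is not: $J^\beta_H$ contains the pole terms $(\tfrac14-\tfrac{e^{-\beta}}{2})\int_0^T\sum_k[H(t,L_k(\gamma_t))+H(t,R_k(\gamma_t))]dt$, and by Lemma~\ref{lemm_continuite_poles} the functionals $L_k,R_k$ are only semicontinuous in one direction (which moreover depends on $k$); since $H$ has no fixed sign, $J^\beta_H$ is neither u.s.c.\ nor l.s.c.\ at trajectories whose poles fail to be point-like. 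The paper's substitute is Proposition~\ref{prop_preuve_continuite_integrale_avec_v_moins_un}: $J^\beta_{H,\epsilon}$ \emph{is} continuous at every point of $E_{pp}([0,T],\e)$, which yields the localized bound $\limsup_N N^{-1}\log\Prob^N_\beta\big(B_{d_E}(\bar\gamma_\cdot,\zeta)\big)\leq -J^\beta_H(\bar\gamma_\cdot)+m(\zeta)$ with $m(\zeta)\to0$, followed by a covering argument for compacts and exponential tightness for closed sets. Your proposed treatment of trajectories $\bar\gamma_\cdot\notin E_{pp}([0,T],\e)$ --- ``testing against biases concentrated near the poles'' --- does not close this gap and is circular: converting a (claimed) identity $\sup_HJ^\beta_H(\bar\gamma_\cdot)=+\infty$ into a bound on $\Prob^N_\beta(B_{d_E}(\bar\gamma_\cdot,\zeta))$ would again require continuity of $J^\beta_H$ at $\bar\gamma_\cdot$, which is exactly what fails there; and it is not even clear from~\eqref{eq_def_J_H} that this supremum is infinite (the paper explicitly flags this as delicate, in contrast with the SSEP with reservoirs). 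What is actually needed, and what Section~\ref{sec_upper_bound_for_compact_sets} supplies, is a direct super-exponential estimate: the closed sets $D_{q,n}$ built from the volume-beneath-the-pole observables $V_{k,\eta}$ (which, unlike the slope, are Hausdorff-continuous) satisfy $\limsup_NN^{-1}\log\Prob^N_\beta(D_{q,n}^c\cap E([0,T],\e))\leq-q$ while $\cap_nD_{q,n}\subset E_{pp}([0,T],\e)$, so any small ball around a trajectory without point-like poles eventually misses $D_{q,n}$ and has probability at most $e^{-qN+o(N)}$. This mechanism, resting on Lemma~\ref{lemm_controle_deviations_volume} and ultimately on the zero-range analysis of the poles, is absent from your proposal and cannot be bypassed.
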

\begin{rmk}
\begin{itemize}
	\item 
	The set $\A_{\beta,T}$ is expected to contain a large class of trajectories. In the $\beta=\infty$ case, it would for instance contain all classical solutions of the equation $v=ak-\mu H$, $H\in\C$, which can be studied by the method of \cite{Lacoin2014}\cite{Lacoin2014a}. When $\beta<\infty$ however, 
	even classical solutions of~\eqref{eq_MCM_with_beta_and_H} are extremely difficult to study due to the poles. A fortiori, the study of uniqueness and regularity of solutions of the weak formulation~\eqref{eq_formulation_faible_avec_der_en_temps} is difficult.
	\item A possible application of Theorem~\ref{theo_large_dev} is the analysis of metastability. For instance, applying a small, uniform field of the form $h/N$ ($h>0$), 
	one can use Theorem~\ref{theo_large_dev} to study the optimal trajectory for a nucleated droplet to cover the whole space. \\
	One can also ask about the typical speed at which such a droplet grows. 
	This speed is conjectured to be proportional to the size of the applied field \cite{Schonmann1998}, i.e. of order $1/N$. For the contour dynamics, curves move diffusively, which readily confirms the conjecture. The interested reader will find much more on metastability and its relation to large deviations in the book \cite{Olivieri2005}.
 \demo
\end{itemize}
\end{rmk}
We conclude this section by rephrasing Theorem~\ref{theo_large_dev} in a more general context. Elements of $\e$ are, by assumption (see Definition~\ref{def_effective_state_space}), in a small neighbourhood of the initial condition $\gamma^{\mathrm{ref}}$ for the volume distance. 
Working with curves in $\e$ is useful to avoid topology-related problems and obtain a large deviation bound valid for general sets. 
As claimed above Definition~\ref{def_effective_state_space} of $\e$, however, 
it is not important that microscopic curves be close to $\gamma^{\mathrm{ref}}$ (i.e. in $\e$), 
only that they satisfy Property~\ref{prop_IC}. 
The next theorem therefore improves on Theorem~\ref{theo_large_dev} for events corresponding to small balls around a given trajectory possibly far from $\gamma^{\mathrm{ref}}$, 
but satisfying the same Property~\ref{prop_IC} as $\gamma^{\mathrm{ref}}$ at each time. 

To state it, assume that $J^\beta_H$ is defined on the entire space $E([0,T],\Omega)$ with the same expression~\eqref{eq_def_J_H} (rather than on $E([0,T],\e)$). 
The rate function $I_\beta(\cdot|\gamma^{\mathrm{ref}})$ is correspondingly given for $\gamma_\cdot\in E([0,T],\Omega)$ by:
\begin{equation}
I_\beta(\gamma_\cdot|\gamma^{\mathrm{ref}}) = \begin{cases}\sup_{H\in\C} J_H^ \beta(\gamma_\cdot)\quad &\text{if }\gamma_\cdot\text{ has almost always point-like poles,} \\
+\infty &\text{otherwise.}
\end{cases}
\end{equation}
Similarly, $\mathcal A_{\beta,T}$ is now assumed to contain trajectories in $E([0,T],\Omega)$ with almost always point-like poles and that satisfy Property~\ref{prop_IC} at each time, rather than trajectories in $E_{pp}([0,T],\e)$.
\begin{theo}\label{theo_large_dev_general}
Let $\beta>\log 2$ and let $\bar\gamma_\cdot\in E([0,T],\Omega)$ be such that $\bar\gamma_t$ satisfies Property~\ref{prop_IC} at each time $t\leq T$. Then:
\begin{equation}
\limsup_{\zeta\rightarrow0}\limsup_{N\rightarrow\infty} \frac{1}{N}\log\mathbb{P}^N_{\beta}\big(\gamma^N_\cdot \in B_{d_E}\big(\bar\gamma_\cdot,\zeta\big)\big) 
\leq 
-I_\beta(\bar\gamma_\cdot|\gamma^{\mathrm{ref}})
.
\end{equation}
Moreover, if $\bar\gamma_\cdot$ is in $\mathcal A_{\beta,T}$, 
then:
\begin{equation}
\liminf_{\zeta\rightarrow 0}\liminf_{N\rightarrow\infty} \frac{1}{N}\log\mathbb{P}^N_{\beta}\big(\gamma^N_\cdot \in B_{d_E}\big(\bar\gamma_\cdot,\zeta\big)\big) 
\geq 
-I_\beta(\bar\gamma_\cdot|\gamma^{\mathrm{ref}})
.
\end{equation}
\end{theo}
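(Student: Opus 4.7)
The plan is to reduce Theorem~\ref{theo_large_dev_general} to Theorem~\ref{theo_large_dev} by localisation in time. The key observation is that Property~\ref{prop_IC} was precisely what underlay the construction of the effective state space $\e$ around $\gamma^{\text{ref}}$. Therefore, around any curve $\eta\in\Omega$ satisfying Property~\ref{prop_IC}, one can build an analogous local state space $\e^{(\eta)}$ -- a small $L^1$-neighbourhood of $\eta$ in which the contour dynamics is local -- and Theorem~\ref{theo_large_dev} applies verbatim with $\eta$ in place of $\gamma^{\text{ref}}$, giving sharp asymptotics for excursions close to $\eta$.

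Since $\bar\gamma_\cdot$ is c\`adl\`ag in $d_{L^1}$ and satisfies Property~\ref{prop_IC} at each time, a covering argument produces a partition $0=t_0<t_1<\ldots<t_m=T$ together with radii $\rho_i>0$ such that, setting $\bar\gamma^{(i)}:=\bar\gamma_{t_i}$, the restriction $(\bar\gamma_t)_{t\in[t_i,t_{i+1}]}$ stays in $B_{L^1}(\bar\gamma^{(i)},\rho_i/2)$ while every curve of $B_{L^1}(\bar\gamma^{(i)},\rho_i)$ still satisfies Property~\ref{prop_IC}. For $\zeta$ small compared to $\min_i\rho_i$, the condition $d_E(\gamma^N_\cdot,\bar\gamma_\cdot)<\zeta$ forces $\gamma^N_t\in\e^{(\bar\gamma^{(i)})}$ for every $t\in[t_i,t_{i+1}]$. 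Using the Markov property of $\Prob^N_\beta$ at the deterministic times $t_i$,
\begin{align}
\Prob^N_\beta\big(\gamma^N_\cdot\in B_{d_E}(\bar\gamma_\cdot,\zeta)\big) \;\leq\; \prod_{i=0}^{m-1}\sup_{\eta^N\in B_{L^1}(\bar\gamma^{(i)},\zeta)}\Prob^N_{\beta,\eta^N}\!\Big((\gamma^N_t)_{t\in[t_i,t_{i+1}]}\in B_{d_E}\big((\bar\gamma_t)_{t\in[t_i,t_{i+1}]},\zeta\big)\Big).
\end{align}
Applying Theorem~\ref{theo_large_dev} inside each $\e^{(\bar\gamma^{(i)})}$ (with $\bar\gamma^{(i)}$ playing the role of $\gamma^{\text{ref}}$) and letting $N\to\infty$ then $\zeta\to 0$ yields the upper bound $-\sum_i I_\beta((\bar\gamma_t)_{t\in[t_i,t_{i+1}]}|\bar\gamma^{(i)})$. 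The lower bound on $\bar\gamma_\cdot\in\mathcal{A}_{\beta,T}$ is parallel: the optimal bias $H\in\C$ on $[0,T]$ restricts on each $[t_i,t_{i+1}]$ to a bias making $(\bar\gamma_t)_{t\in[t_i,t_{i+1}]}$ the unique weak solution of~\eqref{eq_formulation_faible_avec_der_en_temps} starting from $\bar\gamma^{(i)}$, so the lower half of Theorem~\ref{theo_large_dev} delivers a matching lower bound on each piece.

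The main obstacle is then the additivity identity
\begin{align}
I_\beta(\bar\gamma_\cdot\,|\,\gamma^{\text{ref}})\;=\;\sum_{i=0}^{m-1}I_\beta\big((\bar\gamma_t)_{t\in[t_i,t_{i+1}]}\,\big|\,\bar\gamma^{(i)}\big).
\end{align}
The inequality $\leq$ follows from restricting any $H\in\C$ to each sub-interval: the quadratic cost in $J^\beta_H$ is automatically additive, while the boundary terms $\big<\Gamma_{t_i},H_{t_i}\big>$ introduced on each piece telescope across the partition, leaving only $\big<\Gamma_T,H_T\big>-\big<\Gamma^{\text{ref}},H_0\big>$ as required by~\eqref{eq_def_ell_H}. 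The reverse inequality, given near-optimal test functions $H^{(i)}\in\C$ on each sub-interval, requires pasting them into a single $H\in\C$ on $[0,T]$; a mollification in a shrinking neighbourhood of each $t_i$ does this, and the resulting change in $J^\beta_H$ vanishes in the mollification limit because $t\mapsto\big<\Gamma_t,\cdot\big>$ is integrable, using $\int_0^T|\gamma_t|dt<\infty$ from the definition of $E([0,T],\Omega)$. Finitely many jump times of $\bar\gamma_\cdot$ in $d_{L^1}$ inside $[0,T]$ may simply be incorporated into the partition.
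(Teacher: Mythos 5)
Your route is genuinely different from the paper's. The paper does not glue local large deviation principles together: it simply observes that the proof of the upper bound around a single trajectory (Section~\ref{sec_upper_bound_for_compact_sets}) and the proof of the long-time hydrodynamics (Proposition~\ref{prop_time_up_to_T0}) never use the hypothesis that curves lie in the $L^1$-ball $\e$ around $\gamma^{\text{ref}}$, only that they satisfy Property~\ref{prop_IC} at each time; Theorem~\ref{theo_large_dev_general} then follows ``with only notational changes''. Your time-localisation is closest in spirit to the paper's own iteration for the lower bound, but the paper iterates the \emph{hydrodynamic limit} (a probability-one statement, via Proposition~\ref{prop_short_time_hydro}, which explicitly allows random initial laws $\mu^N\Rightarrow\delta_{\gamma^H_{t_i}}$ with tight lengths) and identifies $I_\beta(\gamma^H_\cdot|\gamma^{\text{ref}})=J^\beta_H(\gamma^H_\cdot)$ once globally, rather than summing local rate functions.

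As written, your reduction has genuine gaps. First, Theorem~\ref{theo_large_dev} is stated for the dynamics started from the \emph{deterministic} discretisation $\gamma^{\text{ref},N}$, whereas your Markov decomposition hands piece $i$ the random initial curve $\gamma^N_{t_i}$, known only to be $L^1$-close to $\bar\gamma^{(i)}$ and with no deterministic bound on its length (the length control of Lemma~\ref{lemm_tightness_sup_length} carries a factor $e^{\beta N|\gamma^N_0|}$). A version of the theorem uniform over such initial conditions is not a formal consequence of its statement; establishing it means reopening the proof, at which point the paper's shortcut is available anyway. Second, for the lower bound you need $(\bar\gamma_t)_{t\in[t_i,t_{i+1}]}$ to belong to the sub-interval analogue of $\mathcal A_{\beta,T}$, i.e.\ \emph{uniqueness} of the weak solution of~\eqref{eq_formulation_faible_avec_der_en_temps} with bias $H|_{[t_i,t_{i+1}]}$ started from $\bar\gamma^{(i)}$; this is not implied by uniqueness on $[0,T]$ started from $\gamma^{\text{ref}}$ and you assume it silently. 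Finally, a small point on the pasting step (which, note, you do not actually need: the upper bound only requires the easy ``restriction'' inequality $I_\beta(\bar\gamma_\cdot|\gamma^{\text{ref}})\le\sum_iI_\beta^{(i)}$, and for $\bar\gamma_\cdot\in\mathcal A_{\beta,T}$ additivity is automatic since each side equals $\tfrac12\int\int H^2\mu$): when you mollify a temporal jump of $H$ at $t_{i+1}$, the contribution of $\int\langle\Gamma_t,\partial_tH_t\rangle dt$ over the mollification window does \emph{not} vanish — it converges to $\langle\Gamma_{t_{i+1}},H^{(i+1)}_{t_{i+1}}-H^{(i)}_{t_{i+1}}\rangle$, which is precisely what cancels the non-telescoping boundary terms; your stated reason (integrability of $t\mapsto\langle\Gamma_t,\cdot\rangle$) is not the right mechanism.
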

\subsection{Heuristics on large deviations: link with the SSEP}\label{sec_heuristics}
In this section, we highlight the relationship between the contour dynamics away from the poles and the SSEP. 
This relationship serves as a guideline in the proof of large deviations (the structure of the proof is detailed in Section~\ref{sec_outline_proof}). 
A heuristic derivation of the rate function of Theorem~\ref{theo_large_dev} is proposed below using the link with the SSEP.\\

Take a curve $\gamma^N\in\Omega^N_{\text{mic}}$ (see Definition~\ref{def_state_space}) as in Figure~\ref{fig_quadrants_convex}. 
By Definition~\ref{property_state_space} of $\Omega$, 
$\gamma^N$ can be split into four regions. Let $1\leq k \leq 4$. 
Rotating the canonical reference frame $({\bf b}_0,{\bf b}_{\pi/2})$ by $\pi/4 + (k-1)\pi/2$, region $k$ of the boundary is turned into the graph of a $1$-Lipschitz function $f^{N,k}$ with derivative $\pm 1$. The $k=1$ case is illustrated on Figure~\ref{fig_Ising_ssep_no_bords}.

There is a well-known bijection between the graph of $f^{N,k}$ and a particle configuration which goes as follows. 
With each edge in region $k$, associate a site. 
Put a particle in the site if the corresponding edge corresponds to an interval on which $f^{N,k}$ has derivative $-1$ and no particle if $f^{N,k}$ has derivative $1$. 
Updates of the contour dynamics away from the poles then correspond to SSEP updates, as remarked in~\eqref{eq_link_ssep_intro}.\\

\begin{figure}[H]
\begin{center}
\includegraphics[width=12cm]{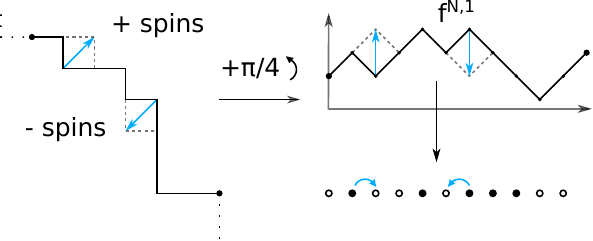} 
\caption{The mapping from a portion of region $1$ (left figure), delimited by the two big black dots, to the graph of a Lipschitz function $f^{N,1}$ (upper right figure). This Lipschitz function is then mapped to a particle configuration (lower right figure). Each edge of the initial interface corresponds to a possible site for particles. In the portion of the original interface delimited by the black dots, dynamical updates (blue arrows) do not change the length of the portion. These updates correspond to SSEP moves.}\label{fig_Ising_ssep_no_bords}
\end{center}
\end{figure}
If $\eta^{N,k} = (\eta^{N,k}(i))_i$ denotes the particle configuration obtained in region $k$ by the above mapping, with the index $i\in\N$ labelling particles sites with the convention that the site associated with the first edge of region $k$ has label $0$, 
then $\eta^{N,k}$ and $f^{N,k}$ are related through:
\begin{align}
\eta^{N,k}(i) 
:= 
\frac{1}{2}+\frac{(-1)^k\sqrt{2}N}{2}\Big[f^{N,k}\Big(\frac{i+1}{\sqrt{2}N}\Big)-f^{N,k}\Big(\frac{i}{\sqrt{2}N}\Big)\Big]
.
\label{eq_discrete_eta_tof}
\end{align}
Above, the $(-1)^k$ comes from the fact that the $f^k$ are defined in different referent frames ($f^1$ in $({\bf b}_{-\pi/4},{\bf b}_{\pi/4})$, $f^2$ in $({\bf b}_{-3\pi/4},{\bf b}_{-\pi/4})$, etc.).

Through this correspondence, the contour dynamics inside each region of an interface can be viewed as a SSEP. The dynamics at the poles (deletion or growth of two blocks at respective rates 1 and $e^{-2\beta}$) can then be viewed as a boundary dynamics that couples these SSEP. 
In Proposition~\ref{prop_value_slope_at_poles}, we saw that the dynamics at each pole acts like a moving reservoir, fixing the density of particles at the extremity of each SSEP in terms of $\beta$. As a first, informal approximation, it thus makes sense to view the contour dynamics as four SSEP coupled with reservoirs. \\

This approximation is further vindicated by the fact that using large deviation results for the SSEP on each region and rewriting them as quantities depending only on the associated curve yields the rate function~\eqref{eq_conjectured_form_rate_function} as we now explain.

From~\cite{Bertini2003}, 
the rate function for a SSEP with reservoirs, 
evaluated at a trajectory $(\rho^k_t)_{t\leq T}$ defined at time $t$ on $[a^k_t,b^k_t]$ (with all these objects smoothly varying in space and/or time) 
and starting from a profile $(\rho^0)^k$,
is known explicitly and should be finite only if the value of $\rho^k_t$ is fixed at $a^k_t,b^k_t$ to the value of the density of the reservoir 
(the argument in~\cite{Bertini2003} applies only when the interval $[a^k_t,b^k_t]$ does not depend on time, 
but we assume their result extends to the present case for the purpose of this discussion).

In view of Proposition~\ref{prop_value_slope_at_poles}, 
for the SSEP associated with the contour dynamics the condition at the extremity of the interval should be:
\begin{equation}
\forall t\in(0,T],\qquad \rho^k_t\big(a^k(\gamma_t)\big) = e^{-\beta} = 1- \rho^k_t\big(b^k(\gamma_t)\big),\quad 1\leq k \leq 4.\label{eq_bc_heuristics}
\end{equation} 
For such a SSEP trajectory, the rate function of~\cite{Bertini2003} reads:
\begin{equation}
I_{SSEP,k}\big((\rho^{k}_t)_{t\leq T}|(\rho^0)^k\big) 
=
 \int_0^T \int_{a^k_t}^{b^k_t} \frac{\big(\partial_t f^k - (1/4)\Delta f^k\big)^2}{2\rho^k(1-\rho^k)}\sqrt{2}\, du\, dt
.
\label{eq_rate_function_SSEP}
\end{equation}
The $\sqrt{2}$ factor in~\eqref{eq_rate_function_SSEP} is again related to the reference frame being tilted, 
and $f^k_t$ a $1$-Lipschitz function built from $\rho^k_t$ in analogy with~\eqref{eq_discrete_eta_tof} at each time $t\in[0,T]$ according to:
\begin{equation}
\rho^k_t(u) 
=
\frac{1+(-1)^k\partial_u f^k(u/\sqrt{2})}{2},
\qquad u\in [a^k_t,b^k_t]
. 
\label{eq_lien_rho_f}
\end{equation}
Note that Equation~\eqref{eq_lien_rho_f} only defines $f^k_t$ up to a constant. 
For the graph of $f^k_t$ in an appropriate tilted reference frame to coincide with region $k$ of an element $\gamma_t\in\Omega$ as we will use below, 
this constant must be chosen as ($k-1 = 4$ if $k=1$):
\begin{equation}
f^k_t(a^k_t) 
= 
a^{k-1}_t
.
\end{equation}
Let us now use~\eqref{eq_rate_function_SSEP} to connect with the contour dynamics rate function obtained in Theorem~\ref{theo_large_dev}. 
As in the microscopic case, a macroscopic curve $\gamma\in\e$ can be associated with "particle densities'' on each region. 
Indeed, for $1\leq k\leq 4$, 
region $k$ is by definition of $\Omega\supset\e$ the graph of a $1$-Lipschitz function $f^k$ on an interval $[a^k(\gamma),b^k(\gamma)]$, where ($R_{k+1} := R_1$ if $k=4$):
\begin{equation}
a^k (\gamma)
:= 
L_k\cdot {\bf b}_{\pi/4 - k\pi/2},\qquad 
b_k(\gamma) = R_{k+1}\cdot {\bf b}_{\pi/4 - k\pi/2}
. 
\end{equation}
The function $f^k$ is then associated with a function $\rho^k$ through~\eqref{eq_lien_rho_f}.

Let $\gamma_\cdot\in E([0,T],\e)$ start from $\gamma^{\mathrm{ref}}$ and consider the associated $(\rho^k_\cdot,f^k_\cdot)_{1\leq k \leq 4}$. 
We focus on $k=1$. 
The tangent vector at a point $x = u{\bf b}_{-\pi/4} + f^1(u){\bf b}_{\pi/4}$ of region $1$, corresponding to an angle $\theta = \theta(x)\in[0,2\pi]$, reads:
\begin{equation}
{\bf T}(\theta) 
= 
\cos(\theta){\bf b}_0+\sin(\theta){\bf b}_{\pi/2} 
= 
\big[1+(\partial_{u} f^1)^2\big]^{-1/2}\big({\bf b}_{-\pi/4} + \partial_{u} f^1 {\bf b}_{\pi/4}\big)
.
\label{eq_vect_tangent_et_dans_R_k}
\end{equation}
From~\eqref{eq_vect_tangent_et_dans_R_k} one can check, as done in Section 3.3 of~\cite{Lacoin2014a}, 
that the heat equation for $f^1$ corresponds to anisotropic motion by curvature~\eqref{eq_def_MCM} for the graph of $f^1$. 
On the other hand,~\eqref{eq_lien_rho_f} and~\eqref{eq_vect_tangent_et_dans_R_k} yield for $\rho^1(1-\rho^1)$:
\begin{align}
\rho^1(1-\rho^1) = \frac{1}{4}\big(1-(\partial_u f^1)^2\big) =\frac{1}{\sqrt{2\big[1+(\partial_u f^1)^2\big]}}\mu(\theta),
\end{align}
with $\mu$ the mobility coefficient obtained by Spohn \cite{Spohn1993}:
\begin{equation}
\mu(\vartheta) 
:= 
\frac{|\sin(2\vartheta)|}{2(|\sin(\vartheta)|+|\cos(\vartheta)|)} 
= 
\frac{|{\bf T}(\vartheta)\cdot {\bf b}_0| |{\bf T}(\vartheta)\cdot {\bf b}_{\pi/2}|}{\|{\bf T}(\vartheta)\|_1},
\qquad \vartheta\in[0,2\pi]
. 
\label{eq_def_mobility}
\end{equation}
Using the relation $ds = [1+(\partial_u f^1)^2]^{1/2}\, du$ between $x$ and the arclength coordinate $s$ and generalising the above discussion to the other three regions,~\eqref{eq_rate_function_SSEP} yields for the conjectured rate function $I^{\text{heur}}_\beta(\cdot|\gamma^{\mathrm{ref}})$ of the contour dynamics:
\begin{align}
I^{\text{heur}}_{\beta}\big(\gamma_\cdot|\gamma^{\mathrm{ref}}\big) 
= 
\sum_{k=1}^k I_{SSEP,k}\big((\rho^{k}_t)_{t\leq T}|(\rho^0)^k\big) 
=
\int_0^T \int_{\gamma_t}\frac{\big(v-ak)^2}{2\mu}\, ds\, dt
. 
\end{align}
This is indeed the rate function of Theorem~\ref{theo_large_dev} for trajectories satisfying the boundary conditions~\eqref{eq_bc_heuristics} (compare with the $\beta=\infty$ case in~\eqref{eq_rate_function_beta_infty}, where the formula is the same, but for smooth trajectories rather than those satisfying~\eqref{eq_bc_heuristics}). 

The analogy~\eqref{eq_rate_function_SSEP} with the SSEP thus gives the correct rate function at a formal level.  
To establish Theorem~\ref{theo_large_dev}, 
we will have to look at the contour dynamics both at and away from the poles simultaneously. 
The dynamics at the poles modifies the size of each of the SSEP. 
This makes it difficult to directly use the analogy with the SSEP in the proofs and a more global approach is necessary.
However, this analogy is used as a guideline throughout the article.
\subsection{Outline of the proof of large deviations}\label{sec_outline_proof}
The proof of Theorem~\ref{theo_large_dev} is structured as follows.
\begin{itemize}
	\item Before looking at rare events specifically, an understanding of the dynamics at the poles is required. This is perhaps the most difficult part of the paper and is the object of Section~\ref{app_behaviour_pole}. 
	In particular, we show there that poles behave like reservoirs in the sense of Proposition~\ref{prop_value_slope_at_poles}. 
	\item The proof of large deviations starts in Section~\ref{sec_relevant_martingales} where we compute the Radon-Nikodym derivative, 
	on a bounded time interval, 
	between the original dynamics and the dynamics tilted by a bias $H\in\C$ introduced in~\eqref{eq_def_jump_rates_H} 
	and express it in terms of the functional $J^\beta_H$ of~\eqref{eq_def_J_H}.  
	To avoid pathological issues with the contour dynamics such as non-locality, the computation is carried out for trajectories with values in the effective state space $\e$ of Definition~\ref{def_effective_state_space}.

	The computation at the microscopic level is inspired by the link with the SSEP as highlighted in Figure~\ref{fig_Ising_ssep_no_bords}. 
	This link is useful to perform discrete integration by parts and replacement lemma-type estimates. 
	
	The resulting expressions are not easily interpreted as line integrals involving tangent vectors as in $J^\beta_H$. This interpretation is carried out in a second time, 
	using similar ideas to what was done at the macroscopic level to go from SSEP to curves in Section~\ref{sec_heuristics}. 
	\item Section~\ref{sec_large_dev_upper_bound} contains upper bound large deviations. The proof technique is standard and consists in estimating the cost of tilting the dynamics by a bias $H\in\C$.

	The difficulty comes from the need to control the poles. The poles in particular prevent the functional $J^\beta_H$, from which the rate function is built, from having nice continuity properties, even for trajectories taking values in the effective state space $\e$ of Definition~\ref{def_effective_state_space}. Continuity is recovered by proving that trajectories must have kinks similarly to Proposition~\ref{prop_value_slope_at_poles}. 
	In fact a stronger version of the statement of Proposition~\ref{prop_value_slope_at_poles} is needed, with the corresponding proof carried out in Appendix~\ref{appen_tightness}.
	\item Section~\ref{sec_large_dev_lower_bound} contains the lower bound, which amounts to hydrodynamic limits for the tilted processes, i.e. Proposition~\ref{prop_limite_hydro}. As a first step, we need to make sure that the (tilted) contour dynamics takes a diffusive time to exit $\e$ as stated in Proposition~\ref{prop_short_time_existence_of_weak_solution}. The hydrodynamic limit results are then obtained in two steps: first in short time using the stability result of Proposition~\ref{prop_short_time_existence_of_weak_solution}. Secondly, by extending the hydrodynamic limit to longer times through an iteration procedure. 
\end{itemize}
\section{Change of measures}\label{sec_relevant_martingales}
\subsection{Motivations}
To investigate rare events, we consider tilted dynamics as in Chapter 10 of \cite{Kipnis1999}. 
Fix a time $T>0$ throughout the rest of Section~\ref{sec_relevant_martingales} and introduce a magnetic field $H\in\C$ ($\C$ is defined in~\eqref{eq_def_ensemble_test_functions}). 
When restricted to trajectories on $[0,T]$, 
$\Prob^{N}_{\beta,H}$ is absolutely continuous with respect to $\Prob^{N}_{\beta}$ (and vice versa). 
Let $D^{N,T}_{\beta,H} = \mathrm{d}\Prob^N_{\beta,H}/\mathrm{d}\Prob^N_{\beta}|_T$ denote their Radon-Nikodym derivative on $[0,T]^{\Omega^N_{\text{mic}}}$, 
so that for any measurable set $X\subset [0,T]^{\Omega^N_{\text{mic}}}$:
\begin{align}
\mathbb{P}^N_{\beta}(\gamma^N_\cdot\in X) 
= 
\E^N_{\beta}[{\bf 1}_{\gamma^N_\cdot\in X}] = \E^N_{\beta,H}\big[ (D^{N,T}_{\beta,H})^{-1}{\bf 1}_{\gamma^N_\cdot\in X}\big]
.
\end{align}
In the following, the dependence on $T$ will be clear from the context and we write $D^N_{\beta,H}$ for $D^{N,T}_{\beta,H}$. 
It acts on a trajectory $\gamma^N_\cdot=(\gamma^N_t)_{t\leq T}\subset \Omega^N_{\text{mic}}$, 
delimiting droplets $(\Gamma^N_t)_{t\leq T}$, 
according to 
(see Appendix A.7 in \cite{Kipnis1999}):
\begin{align}
N^{-1}\log D^N_{\beta,H}(\gamma^N_\cdot) 
&= 
\big<\Gamma^N_{T},H_{T}\big> - \big<\Gamma^N_0,H_0\big> 
\nonumber\\
&\quad - \int_0^{T} e^{-N\langle\Gamma^N_t,H_t\rangle}\big(\partial_t + N^2\lcal_{\beta}\big)e^{N\langle\Gamma^N_t,H_t\rangle}\, dt
.
\label{eq_def_generale_der_radon_niko}
\end{align}
In~\eqref{eq_def_generale_der_radon_niko}, recall that, for a domain $\Gamma^N$ with boundary $\gamma^N\in \Omega_{\text{mic}}^N$ and a bounded $J : \R^2\rightarrow \R$, 
$\big<\Gamma^N,J\big>$ stands for $\int_{\Gamma^N}J(u,v)\, du\, dv$. 
The rest of Section~\ref{sec_relevant_martingales} is devoted to the computation  of $N^2e^{-N\langle\Gamma^N_t,H_t\rangle}\lcal_{\beta}e^{N\langle\Gamma^N_t,H_t\rangle}$ for $t\leq T$.
\subsection{Action of the generator}\label{sec_action_gen_sur_volume}
Take an interface $\gamma^N\in\Omega^N_{\text{mic}}\cap \e$. As usual let $\Gamma^N$ denote the associated droplet.  
In view of the form~\eqref{eq_def_rate_functions} of the rate function, 
the quantity $N^2e^{-N\langle\Gamma^N,H_t\rangle}\lcal_{\beta}e^{N\langle\Gamma^N,H_t\rangle}$ will be expressed as line integrals on $\gamma^N$, 
as well as boundary terms involving the poles. 
We obtain such an expression in two steps. 
The first step relies on microscopic computations and replacement of local quantities by local averages, 
guided by the link of Section~\ref{sec_heuristics} with the SSEP. 
The second step is the interpretation of the resulting quantities in terms of line integrals. 
We first state a result involving discrete sums on vertices of a curve (Proposition~\ref{prop_action_gen_micro}). 
The counterpart in terms of line integrals, Proposition~\ref{prop_informal_stuff_to_prove_action_gen}, is presented and proven later.

To state Proposition~\ref{prop_action_gen_micro}, let us fix some notations. 
For $N\in\N_{\geq 1}$, $x\in V(\gamma^N)$ and $\epsilon >0$, 
the local density of vertical edges $\xi_x^{\epsilon N}$  around $x$ is defined as:
\begin{equation}
\xi_x^{\epsilon N} = \frac{1}{2\epsilon N +1}\sum_{y\in B_1(x,\epsilon)\cap V(\gamma^N)}\xi_y.\label{eq_def_xi_x_epsilon_N}
\end{equation}
The ball $B_1(x,\epsilon)$ is taken with respect to the $1$-norm $\|\cdot\|_1$ (recall~\eqref{eq_def_normes_1_2}). 
We assume that $\epsilon N$ is an integer for simplicity. 
In our case, it will be convenient to write $\xi_x^{\epsilon N}$ as a function of the tangent vector at $x$. Recall that we always enumerate elements of $V(\gamma^N)$ \emph{clockwise} and that ${\bf e}^+_x$ is the microscopic tangent vector with norm $1/N$ given below~\eqref{eq_def_vertices}. The direction of ${\bf e}^+_x$ is fixed by the region $x$ belongs to, 
see Figure~\ref{fig_flip_x_epsilon}. 
For instance, if $x$ belongs to the first region and $y$ to the second:
\begin{align}
N{\bf e}^+_x 
= 
(1-\xi_x){\bf b}_0 - \xi_x {\bf b}_{\pi/2},\qquad
N{\bf e}^+_y = -(1-\xi_y){\bf b}_0 - \xi_y {\bf b}_{\pi/2}
.
\end{align}
The following definition will be used below to keep track of the different signs depending on the region.
\begin{defi}\label{def_m(gamma)_gamma(eta)}
For $\gamma^N\in \Omega^N_{\text{mic}}$, 
recall that ${\bf b}_\theta$ ($\theta\in[0,2\pi]$) is defined in~\eqref{eq_def_b_theta}. 
Define then a vector ${\bf m}(\gamma^N) : \gamma^N\setminus \cup_k P_k(\gamma^N) \rightarrow \R^2$ to be constant on each region, with:
\begin{equation}
\forall x\in \gamma^N\setminus \cup_k P_k(\gamma^N),\qquad {\bf m}(x):= {\bf m}(\gamma^N,x) = \begin{cases}
(-1,-1)\quad &\text{if } x\text{ is in region }1,\\
(-1,1)\quad &\text{if } x\text{ is in region }2,\\
(1,1)\quad &\text{if } x\text{ is in region }3,\\
(1,-1)\quad &\text{if } x\text{ is in region }4.
\end{cases}\label{eq_def_m}
\end{equation}
\end{defi}
If $x$ is at 1-distance at least $\epsilon$ to the poles, then all vertices in $B_1(x,\epsilon)$ are in the same region. Define then the averaged tangent vector ${\bf t}^{\epsilon N}_x$ on the ball $B_1(x,\epsilon)$:
\begin{equation}
{\bf t}^{\epsilon N}_x 
= 
\frac{N}{2\epsilon N +1}\sum_{y\in B_1(x,\epsilon)\cap V(\gamma^N)}{\bf e}^ +_y \in \Big\{ \zeta_1 (1-\xi^{\epsilon N}_x){\bf b}_0 + \zeta_2\xi^{\epsilon N}_x{\bf b}_{\pi/2} : \zeta_1,\zeta_2\in\{-,+\}\Big\}
.
\label{eq_def_t_epsilonN_x}
\end{equation}
The signs in~\eqref{eq_def_t_epsilonN_x} only depend on the region of $\gamma^N$ that $x$ belongs to. 
For instance, if $B_1(x,\epsilon)\cap \gamma^N$ is included in the first region, 
\begin{align}
N{\bf e}^+_x 
= 
(1-\xi_x){\bf b}_0 - \xi_x {\bf b}_{\pi/2} \ \Rightarrow\ {\bf t}^{\epsilon N}_x 
= 
(1-\xi^{\epsilon N}_x){\bf b}_0 -\xi^{\epsilon N}_x {\bf b}_{\pi/2}
.
\end{align}
We stress the fact that ${\bf t}^{\epsilon N}_x$ is a unit vector in $1$-norm, but not in $2$-norm: $\|{\bf t}^{\epsilon N}_x\|_1 = 1 \neq \|{\bf t}^{\epsilon N}_x\|_2$. This has important consequences later on when expressing discrete sums as line integrals, 
see Section~\ref{sec_discrete_sums_to_line_integrals}. It will be useful to introduce the $2$-norm and $2$-normalised tangent vector:
\begin{equation}
\forall x\in V(\gamma^N),\qquad \mathsf{v}^{\epsilon N}_x := \|{\bf t}^{\epsilon N}_x\|_2,\qquad {\bf T}^{\epsilon N}_x = {\bf t}^{\epsilon N}_x/ \mathsf{v}^{\epsilon N}_x.\label{eq_def_v_epsilon_N_T_epsilon_N}
\end{equation}
As $\|{\bf t}^{\epsilon N}\|_1 = 1$, we get:
\begin{equation}
\mathsf{v}^{\epsilon N}_x = \|{\bf t}^{\epsilon N}_x\|_2 = \big(\|{\bf T}^{\epsilon N}_x\|_1\big)^{-1}.\label{eq_useful_relation_v_micro}
\end{equation}
We may now state Proposition~\ref{prop_action_gen_micro}.
\begin{prop}\label{prop_action_gen_micro}
Fix a time $T>0$ and $\beta>\log 2$. For any $\delta,\epsilon>0$ and any trajectory $(\gamma^N_t)_{t\leq T}\in E([0,T],\e)$ of microscopic curves (the set $E([0,T],\e)$ is defined in~\eqref{eq_def_E_0_T_Omega}), one has:
\begin{align}
\frac{1}{N}\int_0^{T}&N^2e^{-N\langle\Gamma^N_t,H_t\rangle}\lcal_{\beta}e^{N\langle\Gamma^N_t,H_t\rangle}\, dt \nonumber\\
&= 
-\bigg(\frac{1}{4}-\frac{e^{-\beta}}{2}\bigg)\int_0^{T}\sum_{k= 1}^4 \big[H(t,L_k(\gamma^N_t)) +H(t,R_k(\gamma^N_t))\big]\, dt 
\nonumber\\
&\quad 
+\frac{1}{4N}\int_0^{T} dt\sum_{x\in V^{\epsilon}(\gamma^N_t)}(\mathsf{v}^{\epsilon N}_x)^2 \big[{\bf T}^{\epsilon N}_x \cdot {\bf m}(x)\big] {\bf T}^{\epsilon N}_x\cdot\nabla H(t,x)\, dt 
\nonumber\\ 
&\quad 
+\frac{1}{2N}\int_0^{T}dt\sum_{x\in V^{\epsilon}(\gamma^N_t)}(\mathsf{v}^{\epsilon N}_x)^2|{\bf T}^{\epsilon N}_x\cdot {\bf b}_0|| {\bf T}^{\epsilon N}_x\cdot {\bf b}_{\pi/2}|H(t,x)^2 + \int_0^{T}\tilde \omega(H_t,\delta,\epsilon,\gamma^N_t)\, dt
.
\label{eq_version_microscopique_action_gen_sur_volume}
\end{align}
The vector ${\bf T}^{\epsilon N}_x$ and normalisation $\mathsf{v}^{\epsilon N}_x$ are defined in~\eqref{eq_def_v_epsilon_N_T_epsilon_N} 
and ${\bf m}(x) = (\pm 1,\pm 1)$ is the sign vector of Definition~\ref{def_m(gamma)_gamma(eta)}. 
For $\gamma^N\in \Omega^N_{\text{mic}}\cap \e$, $V^{\epsilon}(\gamma^N)\subset V(\gamma^N)$ is the subset of vertices at $1$-distance at least $\epsilon$ from the poles. 

The quantity $\tilde \omega(H_t,\delta,\epsilon,\cdot)$ is an error term controlled as follows: there is $C(H)>0$ and a set $\tilde Z=\tilde Z(\beta,H,\delta,\epsilon)\subset (\Omega^N_{\text{mic}})^{[0,T]}$ such that, for trajectories $\gamma^N_\cdot \in \tilde Z\cap E([0,T],\e)$: 
\begin{align}
\Big|\int_0^{T}\tilde \omega(H_t,\delta,\epsilon,\gamma^N_t)\, dt\Big|
\leq 
2\delta + C(H)\bigg(\epsilon T+ \frac{T}{N} + \frac{1}{N^2}\int_0^{T}|\gamma^N_t|\, dt\bigg)
.
\end{align}
Moreover, for each $A>0$, the following super-exponential estimate holds:
\begin{align}
\lim_{\epsilon\rightarrow 0}\limsup_{N\rightarrow\infty}
\frac{1}{N}\log \Prob^N_{\beta,H}\Big(\gamma^N_\cdot\in \tilde Z^c\cap E([0,T],\e)\cap \Big\{\int_0^T|\gamma^N_t|\, dt\leq AT\Big\}\Big) 
= 
-\infty
.
\end{align}
\end{prop}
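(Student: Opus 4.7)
The plan is to start from the identity
\begin{equation*}
N^2 e^{-N\langle \Gamma^N, H_t\rangle} \lcal_\beta e^{N\langle \Gamma^N, H_t\rangle} = \sum_{\tilde\gamma^N} N^2 c(\gamma^N, \tilde\gamma^N)\bigl[e^{N\langle \tilde\Gamma^N, H_t\rangle - N\langle \Gamma^N, H_t\rangle} - 1\bigr]
\end{equation*}
and to split the sum along the two families of updates in Definition~\ref{def_contour_dynamics}: SSEP-like single-block flips away from the poles, and the pole moves (deletions at $p_k=2$ and regrowths). For any such update, the exchanged volume consists in a finite number of blocks of area $1/N^2$, so a second-order Taylor expansion of $G\mapsto\int_G H_t$ on each flipped block gives $N\bigl[\langle\tilde\Gamma^N, H_t\rangle - \langle\Gamma^N, H_t\rangle\bigr] = \pm H_t(x)/N + O(\|H\|_{C^1}/N^2)$, with the sign fixed by the local geometry. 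Combined with $e^u-1 = u + u^2/2 + O(u^3)$, this produces a linear-in-$H$ contribution of order $N$ and a quadratic-in-$H$ contribution of order $1$ per vertex, plus Taylor remainders globally bounded by $C(H)|\gamma^N_t|/N^2$, which is part of $\tilde\omega$.

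For the bulk, the linear-in-$H$ term has the shape $\frac{1}{N}\sum_x \sigma_x c_x(\gamma^N) H_t(x)$, where $\sigma_x\in\{\pm 1\}$ is a local function of the $\xi$-configuration encoding whether a block is added or deleted. Through the SSEP mapping of Figure~\ref{fig_Ising_ssep_no_bords}, one recognises $\sigma_x c_x(\gamma^N)$ as a discrete gradient of $\xi$ along the curve within each of the four regions, where the tangent direction is fixed by Property~\ref{property_state_space}. A discrete integration by parts transfers one derivative onto $H_t$, yielding a sum involving $\partial_s H_t$ along the curve, with boundary terms concentrated in the $\epsilon N$-neighbourhoods of the poles that are absorbed into $\tilde\omega$ (contributing $C(H)\epsilon T$). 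A Guo--Papanicolaou--Varadhan-type replacement lemma (carried out in Appendix~\ref{sec_replacement_lemma}) then substitutes the local occupations by their $\epsilon N$-local averages, and the identities~\eqref{eq_def_t_epsilonN_x}--\eqref{eq_useful_relation_v_micro} combined with Definition~\ref{def_m(gamma)_gamma(eta)} rewrite the result as $\frac{1}{4N}\sum_x(\mathsf{v}_x^{\epsilon N})^2[{\bf T}_x^{\epsilon N}\cdot {\bf m}(x)]\,{\bf T}_x^{\epsilon N}\cdot\nabla H_t(x)$. The same replacement applied to the quadratic term turns $c_x(\gamma^N)H_t(x)^2$ into the SSEP variance $\rho(1-\rho)H_t(x)^2$ read in the tilted frame of each region, i.e. exactly $(\mathsf{v}_x^{\epsilon N})^2|{\bf T}_x^{\epsilon N}\cdot{\bf b}_1||{\bf T}_x^{\epsilon N}\cdot{\bf b}_2|H_t(x)^2$, multiplied by the factor $1/2$ from the Taylor expansion.

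The pole contribution is the main technical obstacle. Direct Taylor expansion at pole $k$ gives, after the $1/N$ normalisation and using $L_k=R_k+O(|P_k|)$ to localise the evaluation of $H_t$, a contribution of the form $-2H_t(L_k){\bf 1}_{p_k=2} + \tfrac{2(p_k-1)}{N} e^{-2\beta}H_t(L_k)$, to which one must add the boundary terms of the bulk integration by parts, also localised at $L_k$ and $R_k$. The time averages of these quantities depend non-trivially on the fluctuating pole size $p_k$, but Proposition~\ref{prop_value_slope_at_poles} and the finer pole estimates of Section~\ref{app_behaviour_pole} effectively replace the boundary slope by $e^{-\beta}$, and the time averages of ${\bf 1}_{p_k=2}$ and $p_k/N$ by the deterministic values predicted by the reservoir picture of Section~\ref{sec_heuristics}. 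Matching these values yields the announced effective coefficient $-(\tfrac14-\tfrac{e^{-\beta}}{2})$ in front of $H_t(L_k)+H_t(R_k)$. Finally, the super-exponential control on $\tilde Z^c$ combines (i) the GPV replacement estimate, which is available because $\nu_\beta^N$ is a Gibbs measure on lattice paths with well-behaved relative entropy, (ii) Proposition~\ref{prop_value_slope_at_poles} together with its refinements, and (iii) the length control of Proposition~\ref{prop_short_time_existence_of_weak_solution}, which makes the Taylor remainder $C(H)|\gamma^N_t|/N^2$ integrable on $\{\int_0^T|\gamma^N_t|dt\leq AT\}$.
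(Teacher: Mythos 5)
Your overall plan is right: split the generator into bulk (SSEP-like) and pole contributions, Taylor-expand the exponential volume differences, integrate by parts inside each region using the sign structure, replace local occupations by $\epsilon N$-averages via a GPV-type lemma, and handle the poles by showing their dynamics behaves like a reservoir. This is the paper's route (Lemmas~\ref{lemm_pole_terms} and~\ref{lemm_partieSSEP_calcul_micro}, combined with Lemma~\ref{lemm_Replacement_lemma_sec_mart}).

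There is, however, a genuine error in the pole computation. After dividing the generator action by $N$, the deletion contribution is $-2{\bf 1}_{p_k=2}H_t(L_k)$ and the regrowth contribution is $2(p_k-1)e^{-2\beta}H_t(L_k)$: \emph{both are $O(1)$, there is no extra $1/N$ on the regrowth term}. Your formula $\tfrac{2(p_k-1)}{N}e^{-2\beta}H_t(L_k)$ would make the regrowth term vanish as $N\to\infty$, so no cancellation with the deletion part would be possible and the final coefficient $-(\tfrac14-\tfrac{e^{-\beta}}{2})$ could not emerge. Relatedly, your assertion that the time average of $p_k/N$ is "replaced by a deterministic value predicted by the reservoir picture" is wrong: $p_k$ is $O(1)$ in $N$ (typically of size $\sim e^{2\beta}$, see Lemma~\ref{lemm_size_pole_ds_calcul_action_gen}), so $p_k/N\to 0$; the sum $\sum_k p_k/N$ is exactly what produces the $T/N$ piece of the error $\tilde\omega$. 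The compensation that actually matters, and that you do not identify, is between $(p_k-1)e^{-2\beta}$ and ${\bf 1}_{p_k=2}$: the paper proves these have the same time average via a bijection argument (Lemma~\ref{lemm_local_eq_sec_martingales}), and then pins this common value to $e^{-\beta}$ via a comparison with a two-species zero-range process (Lemma~\ref{lemm_valeur_p_is_2}, restated in Section~\ref{sec_pole_term}). These two facts, together with the $\tfrac12{\bf 1}_{p_k=2}[H_t(L_k)+H_t(R_k)]$ compensation for fictitious single-block flips at the poles (a bookkeeping device you do not mention, which is what makes the bulk integration by parts clean), are what produce the announced coefficient. Without them the pole term cannot be closed.

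A minor logical remark: invoking Proposition~\ref{prop_value_slope_at_poles} as an ingredient is not wrong but is slightly backwards; that proposition is itself derived from the same pole estimates (Lemmas~\ref{lemm_1_block_2_block_pour_pis2} and~\ref{lemm_valeur_p_is_2}) that the present proof needs directly. Citing those lemmas would make the dependency graph cleaner.
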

The proof of Proposition~\ref{prop_action_gen_micro} 
(and the line integral version, Proposition~\ref{prop_informal_stuff_to_prove_action_gen}) takes up the rest of this section. 
It is obtained as a by-product of the study of the dynamics at (Section~\ref{sec_Bulk_terms}) and away from the poles (Section~\ref{sec_pole_term}). 
To lighten notation, we compute $N^2e^{-N\langle\Gamma^N,H_t\rangle}\lcal_{\beta}e^{N\langle\Gamma^N,H_t\rangle}$ at a time $t\in[0,T]$, fixed throughout the section, 
with $\Gamma^N$ the droplet associated with a curve $\gamma^N\in\Omega^N_{\text{mic}}$. 
We sometimes also omit the explicit dependence of $P_k,R_k,L_k$ ($1\leq k\leq 4$) on $\gamma^N$.\\

In the following, we only compute the action of the generator on curves that additionally belong to the effective state space $\e$. 
In that way, the jump rates of the contour dynamics are local
(see the last point of Remark~\ref{rmk_def_contour_dyn}). 
Moreover, for $\gamma^N\in\Omega^N_{\text{mic}}\cap \e$, 
each $(\gamma^N)^{-,k},1\leq k \leq 4$ is still in the state space $\Omega^N_{\text{mic}}$. 
Recalling the definition~\eqref{eq_def_generateur_H_is_0} of the generator of the contour dynamics, one can then write:
\begin{equation}
N^2e^{-N\langle\Gamma^N,H_t\rangle}\lcal_{\beta}e^{N\langle\Gamma^N,H_t\rangle} 
= 
\mathcal B_t(\gamma^N) + \mathcal P_t(\gamma^N)
.
\end{equation}
The bulk term $\mathcal B_t$ contains all updates affecting a single block, corresponding to the simple exclusion moves as discussed around~\eqref{eq_def_c_x_gamma}. 
It is convenient in the computations to also include, 
in $\mathcal B_t$, 
fictitious moves that delete just a single block in a pole containing exactly two blocks, so that:
\begin{align}
\mathcal B_t(\gamma^N) &:= \frac{N^2}{2}\sum_{x\in V(\gamma^N)}\big[\xi_{x+{\bf e}^-_x}(1-\xi_x) + \xi_x(1-\xi_{x+{\bf e}^{-}_x})\big]\Big[e^{N\langle(\Gamma^N)^x,H_t\rangle-N\langle\Gamma^N,H_t\rangle}-1\Big].\label{eq_def_bulk_term}
\end{align}
These fictitious moves (the last line of~\eqref{eq_def_pole_term} below) are  
not allowed in the contour dynamics, 
thus their contribution is subtracted in the term $\mathcal P_t$ which otherwise encompasses all contributions from the pole dynamics.  
Recalling that $p_k$ is the number of blocks in pole $k$ ($1\leq k\leq 4$), $\mathcal P_k$ reads:
\begin{align}
\mathcal P_t(\gamma^N) 
&:= 
N^2\sum_{k=1}^4\sum_{\substack{x\in P_k(\gamma^N)\cap V(\gamma^N)\\x+{\bf e}^{\pm}_x\in P_k(\gamma^N)}} 
\Big\{{\bf 1}_{p_k(\gamma^N)=2}\Big[e^{N\langle(\Gamma^N)^{-,k},H_t\rangle - N\langle\Gamma^N,H_t\rangle}-1\Big] \nonumber\\
&\hspace{5cm}
+e^{-2\beta}\Big[e^{N\langle(\Gamma^N)^{+,x},H_t\rangle-N\langle\Gamma^N,H_t\rangle}-1\Big]\Big\}
\nonumber\\
&\quad 
- \frac{N^2}{2}\sum_{k=1 }^4{\bf 1}_{p_k(\gamma^N)=2}\sum_{x\in \{R_k(\gamma^N),L_k(\gamma^N)\}} \Big[e^{N\langle(\Gamma^N)^x,H_t\rangle-N\langle\Gamma^N,H_t\rangle}-1\Big]
.
\label{eq_def_pole_term}
\end{align}
\subsubsection{Estimate of the pole terms}\label{sec_pole_term}
In this section, we estimate the pole term $\mathcal P_t$.
\begin{lemm}\label{lemm_pole_terms}
For each $\beta>\log 2$ and $\delta>0$, one has:
\begin{align}
\frac{1}{N}\int_0^{T} \mathcal P_t(\gamma^N_t)\, dt 
= 
\frac{e^{-\beta}}{2}\int_0^{T}\sum_{k=1}^4 \big[H(t,R_k(\gamma^N_t)) + H(t,L_k(\gamma^N_t))\big]\,dt 
+ \int_0^{T}\omega_P(H_t,\delta,\gamma^N_t)\, dt
.
\end{align}
The term $\omega_P$ is an error term, estimated as follows: there is a constant $C(H)>0$ and a set $Z_P=Z_P(\delta)$ of trajectories such that, for trajectories in $Z_P\cap E([0,T],\e)$: 
\begin{equation}
\Big|\int_0^{T}\omega_P(H_t,\delta,\gamma^N_t)\, dt\Big| \leq 2\delta + \frac{C(H)T}{N}.\label{eq_error_term_lemm_pole_terms}
\end{equation}
Moreover, the following super-exponential estimate holds:
\begin{align}
\lim_{N\rightarrow\infty}
\frac{1}{N}\log\Prob^N_{\beta,H}\big(\gamma^N_\cdot\in (Z_P)^c\cap E([0,T],\e)\big)
=
-\infty
.
\end{align}
\end{lemm}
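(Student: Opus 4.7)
The strategy is to Taylor-expand the exponentials in the three pieces of $\mathcal{P}_t$, identify the leading $O(N)$ contributions, and reduce the time-integrated expression to the stated form using the pole dynamics estimates developed in Section~\ref{app_behaviour_pole}. Each move contributing to $\mathcal{P}_t$ changes the droplet volume by at most $\pm 2/N^2$, so the exponent $N\langle \tilde{\Gamma}^N - \Gamma^N, H_t \rangle$ is of order $1/N$ and a second-order expansion $e^a - 1 = a + a^2/2 + O(a^3)$ captures all $O(1)$ contributions after the global $N^2$ factor. Explicitly: when $p_k = 2$, the full-delete move yields $N\langle (\Gamma^N)^{-,k} - \Gamma^N, H_t \rangle = -[H(t,L_k) + H(t,R_k)]/N + O(1/N^2)$ (the two deleted blocks are symmetrically placed between $L_k$ and $R_k$, and Taylor-expanding $H$ at each block centre produces the symmetric sum); regrowth at an interior vertex $x \in P_k$ contributes $+2H(t,x)/N + O(1/N^2)$; and the fictitious single-block flip at $x \in \{L_k,R_k\}$ (convex corners, hence deletions) contributes $-H(t,x)/N + O(1/N^2)$. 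Collecting, $\mathcal{P}_t/N$ reduces to a linear combination of the local pole observables $H(t,L_k), H(t,R_k), {\bf 1}_{p_k = 2}$ and $\sum_{x \in P_k, x\pm {\bf e}^\pm_x \in P_k} H(t,x)$, with an $O(1/N + p_k^2/N^2)$ remainder per pole.

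The crucial step, and where I expect the bulk of the work to lie, is showing that the time integral of this combination collapses to $\tfrac{e^{-\beta}}{2}\int_0^T[H(t,L_k) + H(t,R_k)]\,dt$ for each~$k$. The coefficient $e^{-\beta}$ does not appear in the microscopic rates (which feature $e^{-2\beta}$), and its emergence should be understood as arising from the detailed balance of $\nu^N_\beta \propto e^{-\beta N|\gamma|}$: the pole, viewed as the reservoir end of the SSEP-like bulk dynamics, is in quasi-equilibrium at density $e^{-\beta}$, exactly as in the slope identity of Proposition~\ref{prop_value_slope_at_poles}. I would therefore invoke the estimates of Section~\ref{app_behaviour_pole}, whose explicit purpose is to control the pole dynamics and provide the quasi-equilibrium replacements needed to average $(p_k-1)e^{-2\beta}$ and ${\bf 1}_{p_k = 2}$ against the slowly varying function $H(t,L_k) + H(t,R_k)$. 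Combined with the smooth-$H$ variation $H(c) = H(L_k) + O(1/N)$ for any block centre $c$ at distance $\leq |P_k|$ from $L_k$, and with the fact that $|P_k|$ stays small under $\Prob^N_{\beta,H}$, this yields the stated main term up to the required $O(\delta + 1/N)$ error absorbed into $\omega_P$.

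For the super-exponential estimate on $(Z_P)^c$, I would apply the entropy method of Guo--Papanicolaou--Varadhan in the same form used elsewhere in the paper. The set $Z_P(\delta)$ is defined so that the replacement identities from the previous paragraph hold pointwise in time up to~$\delta$. Writing $\Prob^N_{\beta,H}((Z_P)^c)$ via the Radon--Nikodym derivative of~\eqref{eq_def_generale_der_radon_niko} and applying exponential Chebyshev together with the variational representation of exponential moments, reversibility of the contour dynamics with respect to $\nu^N_\beta$ provides the Dirichlet-form bounds that convert the deviation of the empirical pole observables from their quasi-equilibrium values into an arbitrarily large positive exponential rate. The tilt $H$ enters only through its sup-norm, so the super-exponential bound is uniform over bounded sets of tests in~$\C$.
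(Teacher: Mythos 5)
Your proposal is correct and follows essentially the same approach as the paper: Taylor-expand the exponentials in $\mathcal{P}_t$ (the exponent is $O(1/N)$, giving leading terms proportional to $H(t,L_k)$, $H(t,R_k)$, ${\bf 1}_{p_k=2}$ and $e^{-2\beta}$, with a remainder $\eta_P$ bounded by $C(H)\sum_k p_k/N$), then invoke the pole-dynamics lemmas of Section~\ref{app_behaviour_pole} --- size control of $p_k$ (Lemma~\ref{lemm_size_pole_ds_calcul_action_gen}), local equilibrium between growth and deletion (Lemma~\ref{lemm_local_eq_sec_martingales}), and the key replacement $\int H(t,L_k)\,{\bf 1}_{p_k=2}\,dt \approx e^{-\beta}\int H(t,L_k)\,dt$ --- to produce the $e^{-\beta}/2$ coefficient and take $Z_P$ to be the intersection of the corresponding good events. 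One small caveat: a first-order expansion of $e^a-1$ already captures the $O(1)$ contribution, since the pole sum has only $O(p_k)$ rather than $O(N)$ summands, so the second-order term is $O(p_k/N)$ and is absorbed into $\eta_P$; and the super-exponential bound on $(Z_P)^c$ then follows directly from the three cited lemmas rather than requiring a fresh Feynman--Kac/GPV argument at the level of this lemma.
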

\begin{proof}
Fix a time $t\in[0,T]$ and consider $\gamma^N\in \Omega^N_{\text{mic}}\cap \e$ as before. To estimate $\mathcal P_t$, let us first look at the difference $\big<(\Gamma^N)^{+,x},H_t\big>-\big<\Gamma^N,H_t\big>$ for one of the vertices $x$ appearing in the sum in the first line of~\eqref{eq_def_pole_term}. For concreteness, consider e.g. the north pole. A regrowth move $\Gamma^N\rightarrow(\Gamma^N)^{+,x}$ then amounts to adding the two blocks with centre $x + N^{-1}({\bf b}_{\pi/2}\pm {\bf b}_0)/2$ (recall Figure~\ref{fig_image_possible_jumps_beta_dynamics}), so that:
\begin{align}
\big<(\Gamma^N)^{+,x},H_t\big> -\big<\Gamma^N,H_t\big>= \int_{x + \frac{1}{N}[-1,1]\times [0,1]}H_t(z)\, dz = \frac{2}{N^2}H_t(x) + O(N^{-3}),
\end{align}
where we used the smoothness of $H$ to obtain the second equality. A similar estimate holds for the move $\gamma^N\rightarrow(\gamma^N)^{-,1}$ through which blocks in the north pole of $\gamma^N$ are deleted (recall the notation~\eqref{eq_def_gamma_-}); as well as for the other poles. As a result, the quantity $\mathcal P_t(\gamma^N)$ defined in~\eqref{eq_def_pole_term} reads:
\begin{align}
\frac{1}{N}\mathcal P_t(\gamma^N) &= \sum_{k=1}^4 \sum_{\substack{x\in P_k(\gamma^N)\cap V(\gamma^N) \\ x+{\bf e}^\pm_x\in P_k(\gamma^N)}} 2\big(e^{-2\beta} -{\bf 1}_{p_k(\gamma^N)=2}\big)H_t(x)\nonumber \\
&\quad+ \frac{1}{2}\sum_{k=1}^4{\bf 1}_{p_k(\gamma^N)=2}\sum_{x\in \{R_k(\gamma^N),L_k(\gamma^N)\}}H_t(x)+ \eta_P(t,\gamma^N),\label{eq_pole_term_0}
\end{align}
where the first term in the second line corresponds to the fictitious updates and $\eta_P(t,\gamma^N)$ satisfies:
\begin{align}
\big|\eta_P(t,\gamma^N)\big| \leq 2\|H\|^2_\infty \sum_{k=1}^4\frac{p_k(\gamma^N)}{N}.
\end{align}
To prove the claim of Lemma~\ref{lemm_pole_terms}, we need to estimate the time average of the number of blocks $p_k$ in pole $k$ ($1\leq k\leq 4$), of ${\bf 1}_{p_k=2}$ and of their difference. This is done in the following lemmas, the proof of which are postponed to Section~\ref{app_behaviour_pole}. The first lemma states that the pole contains a number of block that scales with $\beta$, but is independent of $N$, with large probability.
\begin{lemm}\label{lemm_size_pole_ds_calcul_action_gen}
For each pole $k\in\{1,...,4\}$,
\begin{align}
\lim_{N\rightarrow\infty}
\frac{1}{N}\log\Prob^N_{\beta,H} \bigg(&(\gamma^N_t)_{t\leq T}\in E([0,T],\e);
\nonumber\\
&\qquad\frac{1}{T}\int_0^{T} e^{-2\beta}\big(p_k(\gamma^N_t)-1\big) \,dt\geq 2\bigg)  
=
-\infty
.
\end{align}
\end{lemm}
The next lemma estimates the difference between growth or deletion of two blocks.
\begin{lemm}\label{lemm_local_eq_sec_martingales}
Let $G\in C_c(\R_+\times\R^2)$ be Lipschitz in space, uniformly in time. Let $W^{G_t}$ be defined, for $t\geq 0$ and $\gamma^N\in\Omega^N_{\text{mic}}$, by:
\begin{equation}
W^{G_t}(\gamma^N) = \sum_{k=1}^4\sum_{\substack{x\in P_k(\gamma^N)\cap V(\gamma^N) \\ x+{\bf e}^\pm_x\in P_k(\gamma^N)}} \big[{\bf 1}_{p_k(\gamma^N)=2}- e^{-2\beta}\big]G_t(x).
\end{equation}
Then:
\begin{align}
\forall\delta>0,\qquad 
\lim_{N\rightarrow\infty} 
\frac{1}{N}\log\Prob^N_{\beta,H} \biggr( (\gamma^N_t)_{t\leq T}\in E([0,T],\e); 
\bigg|\int_0^{T} W^{G_t}(\gamma^N_t) \,dt\bigg|>\delta\bigg)  
=
-\infty
.
\end{align}
\end{lemm}
It remains to compute the time integral of the ${\bf 1}_{p_k=2}$ terms ($1\leq k\leq 4$). Remarkably, this quantity is fixed by the dynamics in terms of $\beta$, as stated in the next lemma. The proof of this lemma, in Section~\ref{subsec_slope_around_the_poles}, is the main difficulty of the paper at the microscopic level.
\begin{lemm}
For pole $k\in\{1,...,4\}$ and each $\delta>0$:
\begin{align}
\lim_{N\rightarrow\infty}\frac{1}{N}\log \Prob^N_{\beta,H}\bigg(&(\gamma^N_t)_{t\leq T}\in E([0,T],\e);
\nonumber\\
&\qquad \bigg|\int_0^ {T} H(t,L_k(\gamma^N_t))\big({\bf 1}_{p_k(\gamma^N_t)=2}- e^ {-\beta}\big)\, dt\bigg|> \delta\bigg) 
= 
-\infty
.
\end{align}
\end{lemm}
Let us conclude the proof of Lemma~\ref{lemm_pole_terms}, using the last three lemmas to define the set $Z_P$, which controls the error term $\omega_P$ of Lemma~\ref{lemm_pole_terms}. 
Let $B^N_{P}(\beta)$ denote the set of trajectories with poles containing less than $2e^ {2\beta}$ blocks:
\begin{align}
B^N_{P}(\beta) := \bigcap_{k=1}^4&\Big\{ (\gamma^N_t)_{t\in[0,T]}\subset \Omega^N_{\text{mic}}: \frac{1}{T}\int_0^{T} \big(p_k(\gamma^N_t) - 1\big)e^{-2\beta}\, dt \leq 2\Big\}.\label{eq_def_B_p}
\end{align} 
On this set, the term $\int_0^ {T}dt\sum_k p_k(\gamma^N_t)/N$ is bounded by $4(2e^{2\beta}+1)T/N$ and therefore negligible. 
Define then $Z_P = Z_P(\beta,\delta)$ as:
\begin{align}
Z_P = B^ N_{P}(\beta)&\cap \Big\{\Big|\int_0^{T} W^{H_t}\, dt\Big|\leq \delta \Big\}\nonumber\\
&\cap \Big\{\sum_{k=1}^4\Big|\int_0^{T} H(t,L_k(\gamma^N_t))\big({\bf 1}_{p_k(\gamma^N_t)=2}- e^ {-\beta}\big)\,dt\Big|\leq \delta\Big\}.\label{eq_def_Z_p}
\end{align}
From~\eqref{eq_pole_term_0}, for a trajectory $(\gamma^N_t)_{t\leq T}\in Z_P\cap E([0,T],\e)$ of microscopic interfaces, we find:
\begin{align}
\frac{1}{N}\int_0^{T}\mathcal P_t(\gamma^N_t)\, dt = \frac{e^{-\beta}}{2}\int_0^{T} \big[H(t,R_k(\gamma^N_t)) + H(t,L_k(\gamma^N_t))\big]dt + \int_0^{T}\omega_P(H_t,\delta,\gamma^N_t)\, dt,
\end{align}
with $\omega_P(H_t,\delta,\cdot)$ an error term satisfying~\eqref{eq_error_term_lemm_pole_terms}. Moreover, the last three lemmas give the following super-exponential estimate:
\begin{align}
\lim_{N\rightarrow\infty}
\frac{1}{N}\log\Prob^N_\beta\Big(\gamma^N_\cdot\in (Z_P)^c\cap E([0,T],\e)\Big)
= 
-\infty
.
\end{align}
This completes the proof of Lemma~\ref{lemm_pole_terms}.
\end{proof}
\subsubsection{Estimate of the bulk terms at the microscopic level}\label{sec_Bulk_terms}
In this section, we compute the bulk term $\mathcal B_t$, introduced in~\eqref{eq_def_bulk_term}, 
expressing it in terms of discrete analogues of quantities that can be defined on a curve at the macroscopic level (such as the tangent vector and arclength derivative). 
To do so, we use the link of Section~\ref{sec_heuristics} between the dynamics in each region and the SSEP to perform discrete integration by parts and obtain a replacement lemma (Lemma~\ref{lemm_Replacement_lemma_sec_mart}). 
One then has to recover expressions that do not explicitly depend on the region any more.
\begin{lemm}\label{lemm_partieSSEP_calcul_micro}
Let $\delta,\epsilon>0$. For each trajectory $(\gamma^N_t)_{t\in[0,T]}\in E([0,T],\e)$ of microscopic curves:
\begin{align}
\frac{1}{N}\int_0^{T}\mathcal B_t(\gamma^N_t)\, dt&
= 
-\frac{1}{4}\int_0^{T}\sum_{k=1}^4 \big[H(t,L_k(\gamma^N_{t})) + H(t,R_k(\gamma^N_{t}))\big]\, dt+ \int_0^{T}\omega_B(H_t,\delta,\gamma^N_t)\, dt
\nonumber\\ 
&\quad+ 
\frac{1}{2N}\int_0^{T}\sum_{x\in V^\epsilon(\gamma^N_{t})} \big|{\bf t}_x^{\epsilon N} \cdot {\bf b}_0\big| \big|{\bf t}_x^{\epsilon N}\cdot {\bf b}_{\pi/2}\big| H(t,x)^2 \, dt
\nonumber\\
&\quad 
+\frac{1}{4N}\int_0^{T} \sum_{x\in V^{\epsilon}(\gamma^N_{t})} \big[{\bf t}_x^{\epsilon N}\cdot {\bf m}(x)\big]{\bf t}_x^{\epsilon N}\cdot \nabla H(t,x) \, dt
.
\label{eq_expression_L_H_tout_remplace}
\end{align}
Recall that ${\bf t}^{\epsilon N}_x$ is defined in~\eqref{eq_def_t_epsilonN_x} and ${\bf m}(x)$ is the sign vector of Definition~\ref{def_m(gamma)_gamma(eta)}. 
For $\tilde \gamma^N\in \Omega^N_{\text{mic}}$, 
the subset of vertices $V^{\epsilon }(\tilde \gamma^N)\subset V(\tilde \gamma^N)$ denotes all points of $V(\tilde \gamma^N)$ at $1$-distance at least $\epsilon$ from the poles of $\tilde \gamma^N$. 

In addition, there is a set $Z_B=Z_B(H,\delta,\epsilon)\subset E([0,T],\Omega^N_{\text{mic}})$ on which the error term $\omega_B$ can be controlled: for some constant $C(H)>0$ and all trajectories $\gamma^N_\cdot$ in $Z_B\cap E([0,T],\e)$,
\begin{align}
\Big|\int_0^{T}\omega_B(H_t,\delta,\epsilon,\gamma^N_t)\, dt\Big| \leq \delta + C(H)\Big(\epsilon T + \frac{1}{N^2} \int_0^{T}|\gamma^N_t|\, dt\Big).
\end{align}
The following super-exponential estimates holds for $\tilde Z_B$: for each $A>0$,
\begin{align}
\lim_{\epsilon\rightarrow 0}\limsup_{N\rightarrow\infty}\frac{1}{N}\log\Prob^N_{\beta,H}\Big(\gamma^N_\cdot\in (Z_B)^c\cap E([0,T],\e)\cap \Big\{\int_0^T|\gamma^N_t|\, dt\leq AT\Big\}\Big) = -\infty.
\end{align}
\end{lemm}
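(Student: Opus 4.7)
The plan is to Taylor-expand the exponential in~\eqref{eq_def_bulk_term}, use the SSEP mapping of Figure~\ref{fig_Ising_ssep_no_bords} region by region to perform a discrete summation by parts (which produces the pole boundary terms), and finally apply a replacement lemma to trade microscopic quantities for the averaged tangents ${\bf t}^{\epsilon N}_x$.

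For the first step, since $H$ is $C^2$ and a single block has area $1/N^2$, for any vertex $x\in V(\gamma^N)$ such that the SSEP rate $\xi_{x+{\bf e}^-_x}(1-\xi_x)+\xi_x(1-\xi_{x+{\bf e}^-_x})$ is nonzero (i.e.\ $x$ is a corner of $\gamma^N$), the single-block flip satisfies $\langle(\Gamma^N)^x,H_t\rangle-\langle\Gamma^N,H_t\rangle = \epsilon_x H_t(x)/N^2 + O(N^{-3})$, where $\epsilon_x\in\{-1,+1\}$ records whether a block is added or removed. A second-order Taylor expansion then yields
\begin{align*}
\frac{N^2}{2}\bigl[e^{N\langle(\Gamma^N)^x,H_t\rangle - N\langle\Gamma^N,H_t\rangle}-1\bigr] = \epsilon_x\frac{N}{2}H_t(x) + \frac{1}{4}H_t(x)^2 + O(N^{-1}).
\end{align*}
The $O(N^{-1})$ remainder, summed over at most $N|\gamma^N_t|$ corners and integrated in time, gives the $O(N^{-2}\int_0^T|\gamma^N_t|dt)$ term of the error.

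For the second step, a case check on the four regions shows that $\epsilon_x = \sigma_k(\xi_{x+{\bf e}^-_x}-\xi_x)$ at a corner in region $k$, with $\sigma_k$ a region-dependent sign encoded in the vector ${\bf m}$ of Definition~\ref{def_m(gamma)_gamma(eta)}. The linear-in-$N$ contribution becomes a discrete divergence of $(\xi_x)_x$ against $H_t$ along each region. Inside a region the curve is the graph of a Lipschitz function in the tilted frame, so an Abel summation converts this divergence into a discrete gradient of $H_t$ contracted with $(\xi_x)_x$, plus boundary terms at the two extremities of each region. These boundary terms live exactly at the pole endpoints $L_k,R_k$ and sum to $-(1/4)\sum_k[H(t,L_k)+H(t,R_k)]$ once the signs $\sigma_k$ are tallied across the four regions. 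The remaining bulk sum is then rewritten using $N{\bf e}^+_x = \sigma_k^1(1-\xi_x){\bf b}_1+\sigma_k^2\xi_x{\bf b}_2$ as the microscopic counterpart of the target expression $[{\bf t}^{\epsilon N}_x\cdot{\bf m}(x)]\,{\bf t}^{\epsilon N}_x\cdot\nabla H(t,x)/N$.

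For the third step, I invoke the one-block and two-block estimates of Appendix~\ref{sec_replacement_lemma} to replace, up to an error of $\delta$ outside a super-exponentially unlikely event, the products $\xi_x$, $\xi_x\xi_{x+{\bf e}^-_x}$ and $\xi_x(1-\xi_{x+{\bf e}^-_x})+\xi_{x+{\bf e}^-_x}(1-\xi_x)$ by their local averages on boxes of side $\epsilon$. For the quadratic term this produces $2\xi^{\epsilon N}_x(1-\xi^{\epsilon N}_x) = 2|{\bf t}^{\epsilon N}_x\cdot{\bf b}_1||{\bf t}^{\epsilon N}_x\cdot{\bf b}_2|$, matching the coefficient of $H_t(x)^2$ in~\eqref{eq_expression_L_H_tout_remplace}; for the linear term it produces ${\bf t}^{\epsilon N}_x$ from averages of ${\bf e}^+_x$. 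The set $\tilde Z_B$ is the intersection of the events on which each of these replacements is within $\delta$, and the required super-exponential bound follows directly from the replacement lemma.

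The main obstacle is that the SSEP description and the replacement lemma only apply away from the poles, so the summation by parts and replacements must be restricted to $V^\epsilon(\gamma^N)$. The $O(\epsilon T)$ term in the error reflects the $O(\epsilon N)$ vertices discarded near each pole, each contributing $O(1)$ per unit time; controlling this truncation uniformly requires the a priori bound $\int_0^T|\gamma^N_t|dt\leq AT$ and the pole-length estimate of Lemma~\ref{lemm_size_pole_ds_calcul_action_gen}, which ensure that the number of corners outside a small neighbourhood of the poles is comparable to $N|\gamma^N_t|$ and that the discarded portion is genuinely of order $\epsilon$.
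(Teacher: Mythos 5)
Your Taylor expansion of the exponential drops a term that is genuinely of order one, and this gap propagates into a wrong tangential-derivative term in the final formula. Write $y:=\langle(\Gamma^N)^x,H_t\rangle-\langle\Gamma^N,H_t\rangle$. The full expansion is
\[
y=\frac{\epsilon_x}{N^2}H_t(x)+\frac{\epsilon_x}{2N^2}\bigl(\partial_{{\bf e}^-_x}+\partial_{{\bf e}^+_x}\bigr)H_t(x)+O(N^{-4}),
\]
and you correctly fold the middle piece into an $O(N^{-3})$ bound on $y$, since $\partial_{{\bf e}^\pm_x}H=O(1/N)$. But once you plug this back into $\tfrac{N^2}{2}[e^{Ny}-1]$ that piece contributes $\tfrac{\epsilon_x N}{4}(\partial_{{\bf e}^-_x}+\partial_{{\bf e}^+_x})H_t(x)$, which is $O(1)$, not $O(N^{-1})$ as your displayed formula asserts. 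This discarded contribution is exactly the paper's $\tilde B^k_t$ term $\tfrac12\sum_xc_x\epsilon_x(\partial_{{\bf e}^-_x}+\partial_{{\bf e}^+_x})H_t(x)=-\tfrac{1}{2N}\sum_xc_x\,{\bf m}(x)\cdot\nabla H_t(x)$.

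This is not a cosmetic omission: the summation by parts of the $\sum_xc_x\epsilon_xH_t(x)$ term that you describe produces the pole boundary terms and a bulk sum which, after replacement, reads $\tfrac{1}{4N}\sum_x(-\xi^{\epsilon N}_x\sigma_1\partial_2+(1-\xi^{\epsilon N}_x)\sigma_2\partial_1)H_t(x)$, i.e.\ a sum \emph{linear} in the components $|{\bf t}^{\epsilon N}_1|,|{\bf t}^{\epsilon N}_2|$. The target in~\eqref{eq_expression_L_H_tout_remplace} is the quadratic form $\tfrac{1}{4N}\sum_x[{\bf t}^{\epsilon N}_x\cdot{\bf m}(x)]\,{\bf t}^{\epsilon N}_x\cdot\nabla H_t(x)$, whose coefficient in front of $\partial_1$ is $\sigma_2|{\bf t}^{\epsilon N}_1|(1-2|{\bf t}^{\epsilon N}_2|)$. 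The difference between the two, $-2\sigma_2|{\bf t}^{\epsilon N}_1||{\bf t}^{\epsilon N}_2|$, is precisely what $\tilde B^k_t$ supplies after replacing $c_x$ by $\xi^{\epsilon N}_x(1-\xi^{\epsilon N}_x)$. So with your expansion the reconstruction of $[{\bf t}^{\epsilon N}_x\cdot{\bf m}]{\bf t}^{\epsilon N}_x\cdot\nabla H$ in the second step simply does not close. You need to retain the first-order gradient piece in the single-block volume change and carry it through the Abel summation as a separate contribution alongside the $\epsilon_xH_t(x)$ term, as the paper does.
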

\begin{proof}[Proof of Lemma~\ref{lemm_partieSSEP_calcul_micro}]
As for the pole terms $\mathcal P_t$ in Section~\ref{sec_pole_term}, we work at fixed time $t\in[0,T]$ and fix $\gamma^N \in\Omega^N_{\text{mic}}\cap \e$. The starting point is the expression~\eqref{eq_def_bulk_term} of $\mathcal B_t$. 
Let us first compute the change in $\big<\Gamma^N,H_t\big>$ when $x\in V(\gamma^N)$ is flipped. 
Recall that ${\bf e}^+_x,{\bf e}^-_x$ are the vectors with origin $x$ and norm $1/N$ defined below~\eqref{eq_def_vertices} (see also Figure~\ref{fig_flip_x_epsilon}). 
One can then write:
\begin{align}
\big<(\Gamma^N)^x,H_t\big> - \big<\Gamma^N,H_t\big> &= \epsilon_x(\gamma^N)\int_{[x, x+{\bf e}^-_x]\times [x, x+{\bf e}^+_x]} H_t(z)\,  dz \label{eq_def_epsilon_N_x}\\
&= \frac{\epsilon_x(\gamma^N)}{N^2}\int_{[0,1]^2}H_t\Big(x + u{\bf e}^-_x + v{\bf e}^+_x\Big)\, du\, dv\nonumber.
\end{align}
Above, $\epsilon_x(\gamma^N)\in \{-1,1\}$ is set to $1$ if flipping $x$ means adding one block to $\Gamma^N$ and to $-1$ if it means deleting one (see Figure~\ref{fig_flip_x_epsilon}). 
Let us expand $H_t$ around the point $x$. 
Recall that the vectors ${\bf e}^\pm_x$ have norm $1/N$. 
As a result, e.g. if ${\bf e}^+_x = {\bf b}_{\pi/2}/N$:
\begin{equation}
\partial_{{\bf e}^+_x} H_t(x) 
= 
\frac{1}{N}\partial_{2}H_t(x)
, 
\label{eq_size_derivative_e_plus_x}
\end{equation}
and Equation~\eqref{eq_def_epsilon_N_x} becomes:
\begin{align}
\big<(\Gamma^N)^x,H_t\big> - \big<\Gamma^N,H_t\big>  = \frac{\epsilon_x(\gamma^N)}{N^ 2} \Big(H_t(x) + \frac{1}{2}\big(\partial_{{\bf e}^-_x} + \partial_{{\bf e}^+_x}\big)H_t(x)\Big) + \frac{\eta(H_t)}{N^4},
\end{align}
for an error term $\eta(H_t)$ satisfying $|\eta(H_t)| \leq \|\nabla^2 H_t\|_\infty$. Recalling:
\begin{align}
c_x(\gamma^N) := \frac{1}{2}\big[\xi_{x+{\bf e}^-_x}(1-\xi_{x}) + \xi_x(1-\xi_{x+{\bf e}^-_x})\big],
\end{align}
we find that the bulk term~\eqref{eq_def_bulk_term} can be written as follows:
\begin{figure}
\begin{center}
\includegraphics[width=8cm]{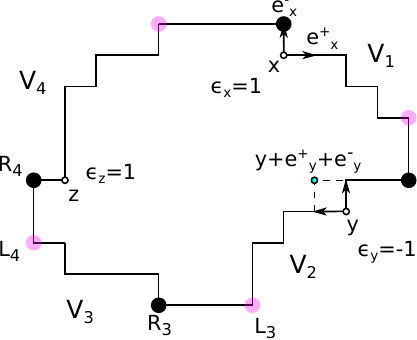} 
\caption{Definition of the $V_k$, $k\in\{1,...4\}$ for a curve $\gamma^N\in \Omega^N_{\text{mic}}$. 
The black dots are the first vertices and the light dots the last vertices of each $V_k$. 
Three points are marked by empty circles, with the corresponding value of $\epsilon_\cdot(\gamma^N)$. 
The block that is deleted if $y$ is flipped is materialised by dashed lines and the two arrows with origin $y$ correspond to ${\bf e}^+_y$ (left arrow) and ${\bf e}^-_y$ (up arrow). \label{fig_flip_x_epsilon}}
\end{center}
\end{figure}
\begin{align}
\frac{1}{N}\mathcal B_t(\gamma^N) &=\frac{1}{2N}\sum_{x\in V(\gamma^N)}c_x(\gamma^N) H_t(x)^2 + \sum_{x\in V(\gamma^N)}c_x(\gamma^N)\epsilon_x(\gamma^N) H_t(x) \nonumber\\
&\quad + \frac{1}{2}\sum_{x\in V(\gamma^N)}c_x(\gamma^N)\epsilon_x(\gamma^N)\big(\partial_{{\bf e}^-_x}+\partial_{{\bf e}^+_x}\big)H_t(x) +\frac{\eta'(H_t)|\gamma^N|}{N^2}, \label{eq_bulk_avant_affinage_pour_forcage_volumique}
\end{align}
where $\eta'(H_t)$ is bounded by a constant depending on $H_t$ and its derivatives. 
The first and third sums above are bounded by $|V(\gamma^N)|/N$,  
which we expect to be bounded with $N$ at each time for typical trajectories under the contour dynamics on $E([0,T],\e)$ (see Proposition~\ref{prop_short_time_existence_of_weak_solution}). 
At first glance however, the second sum in the first line of~\eqref{eq_bulk_avant_affinage_pour_forcage_volumique} appears to be of order $|V(\gamma^N)|$. 
To prove that it is in fact also of order 1 in $N$, we use the link with the SSEP in each region of $\gamma^N$ to perform integration by parts. 
This link is also used to compute the other terms~\eqref{eq_bulk_avant_affinage_pour_forcage_volumique}.\\

To compute~\eqref{eq_bulk_avant_affinage_pour_forcage_volumique} 
we are going to split $V(\gamma^N)$ into four pieces, essentially corresponding to the four regions of $\gamma^N$. 
In each region the mapping with the SSEP of Section~\ref{sec_heuristics} will be used to express $c_x(\gamma^N),\epsilon_x(\gamma^N)$ in terms of the local edge states $\xi_{x+{\bf e}^-_x},\xi_x$. 
Mirroring similar results for the SSEP, these $\xi_\cdot$ will then replaced by local averages thanks to a replacement lemma-type result, Lemma~\ref{lemm_Replacement_lemma_sec_mart}. 
For points in each region at 1-distance at least $\epsilon$ from the poles, 
these averages will be rewritten as components of the microscopic tangent vector ${\bf t}^{\epsilon N}_\cdot$ (defined in~\eqref{eq_def_t_epsilonN_x}), 
which will allow us below to recover a region-independent expression.\\
 
For $1\leq k \leq 4$, consider thus the set $V_k(\gamma^N)\subset V(\gamma^N)$ containing all vertices from $R_k$ to $L_{k+1}$ (comprised), see Figure~\ref{fig_flip_x_epsilon}, with $L_{k+1} := L_1$ if $k=4$.  
Then $V_k$ is included in region $k$ of $\gamma^N$ and:
\begin{align}
V(\gamma^N) = \bigcup_{k=1}^4 V_k(\gamma^N) \cup \bigcup_{k=1}^4 P_k(\gamma^N).
\end{align}
In the following, we often abbreviate $V_k(\gamma^N)$ as $V_k$ and similarly $R_k(\gamma^N),L_k(\gamma^N)$ as $R_k,L_k$. 

As $c_x(\gamma^N) = 0$ for each $x\in \cup_k P_k(\gamma^N)\setminus\{R_k(\gamma^N),L_k(\gamma^N)\}$, all sums in~\eqref{eq_bulk_avant_affinage_pour_forcage_volumique} reduce to sums on $V_k$, $1\leq k \leq 4$. With this splitting along the $V_k$,~\eqref{eq_bulk_avant_affinage_pour_forcage_volumique} becomes:
\begin{equation}
\frac{1}{N}\mathcal B_t(\gamma^N) =\frac{1}{2N}\sum_{x\in V(\gamma^N)} c_x(\gamma^N) H_t(x)^2 + \sum_{k=1}^4 B_t^k+ \sum_{k=1}^4 \tilde B_t^k + \frac{\eta'(H_t)|\gamma^N|}{N^2},\label{eq_bulk_term_as_fct_B_k_B'_k}
\end{equation}
where: 
\begin{equation}
B_t^k := \sum_{x\in V_k}c_x(\gamma^N)\epsilon_x(\gamma^N) H_t(x),\quad \tilde B_t^k := \frac{1}{2}\sum_{x\in V_k}c_x(\gamma^N)\epsilon_x(\gamma^N)\big(\partial_{{\bf e}^-_x}+\partial_{{\bf e}^+_x}\big)H_t(x).\label{eq_def_B_k_B'_k}
\end{equation}
For future reference note that, 
inside each $V_k$, the quantity $c_x(\gamma^N)\epsilon_x(\gamma^N)$ can be expressed in terms of the edge state $\xi$:
\begin{equation}
2c_x(\gamma^N)\epsilon_x(\gamma^N) = \begin{cases}
\xi_{x+{\bf e}^-_x}-\xi_x \quad \text{if }x\in V_1\cup V_3,\\
\xi_x - \xi_{x+{\bf e}^-_x} \quad \text{if }x\in V_2\cup V_4.
\end{cases}
\label{eq_value_c_x(gamma)epsilon_x(gamma)}
\end{equation}
Let us first treat the sums $B^k_t,\tilde B^k_t$ ($1\leq k \leq 4$) that involve $\epsilon_x(\gamma^N)$.\\

\noindent \textbf{1) $B_t^k$ terms.}
Using Equation~\eqref{eq_value_c_x(gamma)epsilon_x(gamma)}, one has e.g. for $k=1$:
\begin{align}
B_t^1 &= \sum_{x\in V_1(\gamma^N)}c_x(\gamma^N)\epsilon_x(\gamma^N) H_t(x) = \frac{1}{2}\sum_{x\in V_1(\gamma^N)}H_t(x) (\xi_{x+ {\bf e}^-_x}-\xi_x)\nonumber\\
&= \frac{1}{4}\sum_{x\in V_1(\gamma^N)} H_t(x) \big[\xi_{x+{\bf e}^-_x} - \xi_x + (1-\xi_x) - (1-\xi_{x+{\bf e}^-_x})\big].\label{eq_B_1_term_before_IPP}
\end{align}
The passage from first to second line is nothing more than a symmetrisation of the expression. \\
Recall that $V_1$ contains vertices between $R_1$ and $L_2$ (included). 
By definition, the edge with right extremity $R_1$, corresponding to $[R_1+{\bf e}^-_{R_1},R_1]$, is always horizontal: $1-\xi_{R_1+{\bf e}^-_{R_1}} = 1$. 
Similarly, $\xi_{L_2}=1$ by definition of $L_2$. 
Integrating~\eqref{eq_B_1_term_before_IPP} by parts, some of the boundary term vanish, whence:
\begin{align}
B_t^1 &=-\frac{1}{4}\big(H_t(R_1) + H_t(L_2)\big) \nonumber\\ 
&\hspace{2cm}+ \frac{1}{4}\sum_{x\in V_1\setminus\{R_1,L_2\}}\bigg[\xi_x\Big[H_t(x+{\bf e}^+_x)-H_t(x)\Big]-(1-\xi_x)\Big[H_t(x+{\bf e}^+_x)-H_t(x)\Big]\bigg]\nonumber\\
&= -\frac{1}{4}\big(H_t(R_1) + H_t(L_2)\big) \nonumber\\
&\hspace{2cm}+ \frac{1}{4}\sum_{x\in V_1\setminus\{R_1,L_2\}}\big[\xi_x \partial_{{\bf e}^+_x}H_t(x) -(1-\xi_x)\partial_{{\bf e}^+_x}H_t(x)\big] + \eta^{1}(H_t,\gamma^N),\label{eq_value_B1}
\end{align}
with $\eta_t^{1}(H_t,\gamma^N)$ an error term bounded by:
\begin{equation}
\big|\eta^{1}(H_t,\gamma^N)\big| \leq \frac{\|\nabla^2 H_t\|_\infty|V_1(\gamma^N)|}{4N^2}.\label{eq_erreur_B_k_t}
\end{equation}
Note that, as ${\bf e}^\pm_x$ has norm $1/N$, 
the sum in~\eqref{eq_value_B1} is bounded by $C(H)|V_1|/N$ for some $C(H)>0$. 
This is one factor of $1/N$ smaller than apparent in the expression~\eqref{eq_def_B_k_B'_k} of $B^1_t$ as desired.

The other $V_k$, $2\leq k \leq 4$ are treated similarly, with signs depending on the region due to both~\eqref{eq_value_c_x(gamma)epsilon_x(gamma)} and the fact that $N{\bf e}^+_x$ takes values in $\{\pm {\bf b}_0,\pm {\bf b}_{\pi/2}\}$ as the region varies. 
The point is now, from the expression of each $B_t^k$, to obtain an expression independent of the region of $\gamma^N$. To do so, 
introduce region-dependent signs $\sigma_1,\sigma_2$:
\begin{equation}
\sigma_1 := \begin{cases}
1\quad &\text{if } x\in V_1\cup V_4\\
-1 \quad &\text{if }x\in V_2\cup V_3
\end{cases},\quad \sigma_2 := \begin{cases}
-1 \quad &\text{if }x\in V_1\cup V_2\\
1\quad &\text{if } x\in V_3\cup V_4
\end{cases}.  \label{valeurs_signes_e^+_x}
\end{equation}
The idea behind~\eqref{valeurs_signes_e^+_x} is that $(\sigma_1,\sigma_2)$ is "the direction of the tangent vector to a curve" in each region, in the spirit of Definition~\ref{property_state_space} of $\Omega$. 
For instance, in the first region, the tangent vector can be either ${\bf b}_0$ or $-{\bf b}_{\pi/2}$, 
and $(\sigma_1,\sigma_2) = (1,-1)$. 
In region $2$ where the tangent vector is either $-{\bf b}_{0}$ or $-{\bf b}_{\pi/2}$, $(\sigma_1,\sigma_2) = (-1,-1)$, etc. One can then check that:
\begin{equation}
\forall x\in V(\gamma^N),\qquad 
N{\bf e}^+_x 
= 
(1-\xi_x)\sigma_1{\bf b}_0 + \xi_x\sigma_2{\bf b}_{\pi/2}
.
\label{eq_lien_vect_tangent_sigma}
\end{equation}
Compare with ${\bf m}(x)$ in Definition~\ref{def_m(gamma)_gamma(eta)}, which gives "the direction of the inwards normal" in the region $x$ belongs to:
\begin{equation}
{\bf m}(x) 
= 
-(-\sigma_2,\sigma_1) 
= 
(\sigma_2,-\sigma_1) 
= 
\sigma_2{\bf b}_0 - \sigma_1{\bf b}_{\pi/2}
.
\label{eq_relation_m_and_sigma}
\end{equation}
With Definition~\ref{valeurs_signes_e^+_x} and the error bound~\eqref{eq_erreur_B_k_t}, 
recalling also from~\eqref{eq_size_derivative_e_plus_x} that differentiating along ${\bf e}^+_x$ incurs a factor $1/N$ compared to differentiating with respect to ${\bf b}_0$ or ${\bf b}_{\pi/2}$, 
$B^1_t$ can be written as:
\begin{align}
B^1_t 
&= 
-\frac{1}{4}\big(H_t(R_1) + H_t(L_2)\big) 
\nonumber\\
&\qquad
+ \frac{1}{4N}\sum_{x\in V_1\setminus\{R_1,L_2\}}\Big(-\xi_x \sigma_1\partial_2+(1-\xi_x)\sigma_2\partial_1 \Big)H_t(x) + \eta^{1}(H_t,\gamma^N)
.
\end{align}
Recalling the sign change~\eqref{eq_value_c_x(gamma)epsilon_x(gamma)} between regions, one can check that the expression of the last summand is independent from the region and obtain:
\begin{align}
\sum_{k=1}^4B_t^k &= -\frac{1}{4}\sum_{k=1}^4\big[H_t(R_k(\gamma^N)) + H_t(L_k(\gamma^N))\big]\nonumber \\
&\qquad+ \frac{1}{4N}\sum_{x\in V(\gamma^N)\setminus \cup_k P_k(\gamma^N)}\Big(-\xi_x \sigma_1\partial_2+(1-\xi_x)\sigma_2\partial_1 \Big)H_t(x) + \sum_{k=1}^4\eta^{k}(H_t,\gamma^N). \label{eq_bulk_term_contribution_transitoire}
\end{align}
\noindent \textbf{2) $\tilde B^k_t$ terms (defined in~\eqref{eq_def_B_k_B'_k}).} 
Recall that $\epsilon_x(\gamma^N)$ is defined in~\eqref{eq_def_epsilon_N_x}. 
The key observation is the following: for $x\in V(\gamma^N)$, if $c_x(\gamma^N)\neq 0$, then $\epsilon_x(\gamma^N)(\partial_{{\bf e}^+_x} + \partial_{{\bf e}^-_x})$ is the same whether flipping $x$ corresponds to adding or deleting a block and it only depends on the region of the curve. 
Indeed, recall Definition~\eqref{valeurs_signes_e^+_x} of $\sigma_1,\sigma_2$. 
Using ${\bf e}^-_x = -{\bf e}^+_{x+{\bf e}^-_x}$ (see Figure~\ref{fig_flip_x_epsilon}) and the expression~\eqref{eq_lien_vect_tangent_sigma} of ${\bf e}^+_x$, one has:
\begin{align}
\partial_{{\bf e}^-_x} + \partial_{{\bf e}^+_x} = \big(\xi_{x+{\bf e}^-_x}-\xi_x\big)\Big(\frac{\sigma_1}{N}\partial_1 - \frac{\sigma_2}{N}\partial_2\Big).
\end{align}
Using the expression~\eqref{eq_value_c_x(gamma)epsilon_x(gamma)} for $c_x(\gamma^N)\epsilon_x(\gamma^N)$, elementary manipulations then yield:
\begin{align}
\forall x\in V(\gamma^N),\qquad c_x(\gamma^N)\epsilon_x(\gamma^N)(\partial_{{\bf e}^+_x} + \partial_{{\bf e}^-_x}) 
&= \frac{c_x(\gamma^N)}{N} \Big(-\sigma_2\partial_1 + \sigma_1\partial_2\Big) 
\nonumber\\
&= -\frac{c_x(\gamma^N)}{N} {\bf m}(x)\cdot \nabla,
\end{align}
with ${\bf m}(x)$ the sign vector of Definition~\ref{def_m(gamma)_gamma(eta)}. As a result:
\begin{equation}
\sum_{k=1}^4 \tilde B^k_t(\gamma^N) = -\frac{1}{2N}\sum_{x\in V(\gamma^N)\setminus \cup_k P_k(\gamma^N)}c_x(\gamma^N){\bf m}(x)\cdot \nabla H_t(x).\bigskip \label{eq_value_B_prime_k_recast}
\end{equation}
For $\epsilon>0$, 
Let $V^{\epsilon}(\gamma^N)$ be the subset of $V(\gamma^N)$ made of vertices at $1$-distance at least $\epsilon$ from each pole. 
At this point the bulk term $\mathcal B_t$ can be written as:
\begin{align}
\frac{1}{N}\mathcal B_t(\gamma^N) &= \frac{1}{2N}\sum_{x\in V^{\epsilon}(\gamma^N)}c_x(\gamma^N) H_t(x)^2 -\frac{1}{4}\sum_{k=1}^4\big[H_t(R_k(\gamma^N)) + H_t(L_k(\gamma^N))\big] \nonumber\\
&\qquad + \frac{1}{4N}\sum_{x\in V^{\epsilon}(\gamma^N)}\Big(-\xi_x \sigma_1\partial_2+(1-\xi_x)\sigma_2\partial_1 \Big)H_t(x) \label{eq_bulk_term_interm}\\
&\qquad-\frac{1}{2N}\sum_{x\in V^{\epsilon}(\gamma^N)}c_x(\gamma^N){\bf m}(x)\cdot \nabla H_t(x) + \eta(H_t,\epsilon,\gamma^N),\nonumber
\end{align}
where $\eta(H_t,\epsilon,\gamma^N)$ satisfies, for some constant $C(H)>0$ independent of $\gamma^N$:
\begin{align}
\Big|\eta(H_t,\epsilon,\gamma^N) -\frac{\eta'(H_t)|\gamma^N|}{N^2} - \sum_{k=1}^4 \eta^k(H_t,\gamma^N)\Big|\leq C(H)\epsilon.
\end{align}
To obtain the expression in Lemma~\ref{lemm_partieSSEP_calcul_micro} from~\eqref{eq_bulk_term_as_fct_B_k_B'_k}--\eqref{eq_bulk_term_contribution_transitoire}--\eqref{eq_value_B_prime_k_recast}, 
we will now replace $\xi_\cdot$ and $c_\cdot(\gamma^N)$ by local averages on boxes containing order $\epsilon N$ vertices; then express them in terms of the microscopic tangent vector ${\bf t}^{\epsilon N}_\cdot$ defined in~\eqref{eq_def_t_epsilonN_x}.

We start by replacing $c_\cdot(\gamma^N),\xi_\cdot$ by local averages. 
For $\xi_\cdot$, an integration by parts and the smoothness of $H_t$ yield the existence of an error term $\omega^{\nabla H_t}(\epsilon,\gamma^N)$ such that:
\begin{align}
&\frac{1}{4N}\sum_{x\in V^{\epsilon}(\gamma^N)}\Big(-\xi_x \sigma_1\partial_2+(1-\xi_x)\sigma_2\partial_1 \Big)H_t(x) \nonumber\\
&\hspace{3cm}= \frac{1}{4N}\sum_{x\in V^{\epsilon N}(\gamma^N)}\Big(-\xi^{\epsilon N}_x \sigma_1\partial_2+(1-\xi_x^{\epsilon N})\sigma_2\partial_1 \Big)H_t(x) + \omega^{\nabla H_t}(\epsilon,\gamma^N),
\end{align}
with, for a constant $C(H_t)>0$ involving $\nabla^2 H_t$ but independent of $\gamma^N$:
\begin{align}
|\omega^{\nabla H_t}(\epsilon,\gamma^N)\big|\leq C(H_t)\epsilon.
\end{align}
Replacing $c_\cdot(\gamma^N)$ by a local average is much more involved. It is the content of a so-called replacement lemma, stated below and proven in Appendix~\ref{sec_replacement_lemma}.
\begin{lemm}[Replacement lemma]\label{lemm_Replacement_lemma_sec_mart}
Let $G:\R_+\times\R^2\rightarrow\R$ be bounded, $A,\epsilon>0$ and $1\leq k\leq 4$. Define $W_{\epsilon N}^{G,k}(t,\cdot)$ for $t\geq 0$ by:
\begin{equation}
W_{\epsilon N}^{G,k}(t,\tilde \gamma^N) := \frac{1}{N}\sum_{x\in V_k(\tilde\gamma^N)}G(t,x)\bigg[c_x(\tilde\gamma^N) - \xi^{\epsilon N}_x\big(1-\xi^{\epsilon N}_x\big)\bigg],\quad \tilde\gamma^N\in\Omega^N_{\text{mic}}.
\end{equation}
Then, for each $\delta>0$, the following super-exponential estimate holds:
\begin{align}
\lim_{\epsilon\rightarrow 0}\limsup_{N\rightarrow\infty} 
\frac{1}{N}\log\Prob_{\beta,H}^N\Big((\gamma^N_t)_{t\leq T} \in &E([0,T],\e)\cap\big\{\sup_{t\leq T}|\gamma^N_t|\leq A\big\};
\nonumber\\
&\hspace{1.5cm} \bigg|\int_0^{T} W_{\epsilon N}^{G,k}(t,\gamma^N_t)\, dt\bigg|>\delta\Big) 
= 
-\infty
.
\end{align}
\end{lemm}
\noindent Using Lemma~\ref{lemm_Replacement_lemma_sec_mart}, we are going to conclude the proof of Lemma~\ref{lemm_partieSSEP_calcul_micro}. Define, for each bounded function $G$:
\begin{equation}
B_{G}(\delta,\epsilon) = \bigg\{(\gamma^N_t)_{t\leq T}\subset  \Omega^N_{\text{mic}} : \forall 1\leq k \leq 4,\  \bigg|\int_0^{T} W_{\epsilon N}^{G,k}(t,\gamma^N_t)\, dt \bigg|\leq\delta/4\bigg\}.\label{eq_def_B_i_rplcmt}
\end{equation}
By Lemma~\ref{lemm_Replacement_lemma_sec_mart}, for each $A,\delta>0$, one has:
\begin{align}
\lim_{\epsilon\rightarrow0}\limsup_{N\rightarrow\infty}
\frac{1}{N}\log\Prob^N_{\beta,H}\Big(B_G(\delta,\epsilon)^c\cap E([0,T],\e)\cap\Big\{\sup_{t\leq T}|\gamma^N_t|\leq A\Big\}\Big) 
= 
-\infty
.
\end{align}
Define then the set $Z_B$ appearing in Lemma~\ref{lemm_partieSSEP_calcul_micro} and controlling the error terms:
\begin{equation}
Z_B = Z_B(H,\delta,\epsilon) := B_{\partial_1 H}(\delta,\epsilon)\cap B_{\partial_2 H}(\delta,\epsilon)\cap B_{H^2}(\delta,\epsilon).\label{eq_def_Z_B}
\end{equation}
We may now write the bulk term $\mathcal B_t$ (recall~\eqref{eq_bulk_term_interm}) as:
\begin{align}
\frac{1}{N}\mathcal B_t(\gamma^N) &= \frac{1}{2N}\sum_{x\in V^{\epsilon}(\gamma^N)}\xi_x^{\epsilon N}\big(1-\xi^{\epsilon N}_x\big) H_t(x)^2 -\frac{1}{4}\sum_{k=1}^4\big[H_t(R_k(\gamma^N)) + H_t(L_k(\gamma^N))\big] \nonumber\\
&\qquad + \frac{1}{4N}\sum_{x\in V^{\epsilon}(\gamma^N)}\Big(-\xi^{\epsilon N}_x \sigma_1\partial_2+(1-\xi^{\epsilon N}_x)\sigma_2\partial_1 \Big)H_t(x) \label{eq_bulk_term_interm_2}\\
&\qquad-\frac{1}{2N}\sum_{x\in V^{\epsilon}(\gamma^N)}\xi_x^{\epsilon N}\big(1-\xi^{\epsilon N}_x\big){\bf m}(x)\cdot \nabla H_t(x) + \omega_B(H_t,\delta,\epsilon,\gamma^N),\nonumber
\end{align}
where $\omega_B(H_t,\delta,\epsilon,\gamma^N)$ is defined, with a slight abuse of notation, by: 
\begin{align}
\omega_B(H_t,\delta,\epsilon,\gamma^N) := \eta(H_t,\epsilon,\gamma^N) + \omega^{\nabla H_t}(\epsilon,\gamma^N) + \sum_{k=1}^4 \Big[W_{\epsilon N}^{{\bf m}\cdot\nabla H,k}(t,\gamma^N)+ W_{\epsilon N}^{H^2,k}(t,\gamma^N)\Big].
\end{align}
In particular, for microscopic trajectories $(\gamma^N_t)_{t\leq T}\in Z_B\cap E([0,T],\e)$, there is a constant $C(H)>0$ such that:
\begin{align}
\Big|\int_0^{T}\omega_B(H_t,\delta,\epsilon,\gamma^N_t) \, dt \Big|
\leq 
2\delta + C(H)\Big(\epsilon T + \frac{1}{N^2}\int_0^{T}|\gamma^N_t|\, dt\Big)
.
\end{align}
With the above estimate of $\omega_B(\cdot)$ and the expression~\eqref{eq_bulk_term_interm_2} of $\mathcal B_t$, we see that the proof of Lemma~\ref{lemm_partieSSEP_calcul_micro} now reduces to the third step in the program outlined below~\eqref{eq_bulk_avant_affinage_pour_forcage_volumique}, i.e. the interpretation of $\xi^{\epsilon N}_x$ and ${\bf m}(x)\cdot\nabla$ in terms of components of the microscopic tangent vector ${\bf t}^{\epsilon N}_x$, which we now perform.\\

Recall from~\eqref{eq_def_t_epsilonN_x} the following identity: for $\tilde \gamma^N\in \Omega^N_{\text{mic}}\cap \e$ and $x\in V(\tilde \gamma^N)$,
\begin{equation}
|{\bf t}^{\epsilon N}_x\cdot {\bf b}_{\pi/2}| 
= 
\xi^{\epsilon N}_x = 1-|{\bf t}^{\epsilon N}_x\cdot {\bf b}_0|
.
\label{eq_val_abs_composants_t_epsilon_N}
\end{equation}
As a result:
\begin{align}
\frac{1}{2N}\sum_{x\in V^{\epsilon}(\gamma^N)}\xi_x^{\epsilon N}\big(1-\xi^{\epsilon N}_x\big) H_t(x)^2 
= 
\frac{1}{2N}\sum_{x\in V^{\epsilon}(\gamma^N)} |{\bf t}^{\epsilon N}_x\cdot {\bf b}_0| |{\bf t}^{\epsilon N}_x \cdot {\bf b}_{\pi/2}| H_t(x)^2
.
\label{eq_replacement_HG_terms_by_tangent_vector}
\end{align}
To establish the expression~\eqref{eq_expression_L_H_tout_remplace}, we therefore only need to prove:
\begin{align}
&\frac{1}{4N}\sum_{x\in V^{\epsilon}(\gamma^N)}\Big(-\xi^{\epsilon N}_x \sigma_1\partial_2+(1-\xi^{\epsilon N}_x)\sigma_2\partial_1 \Big)H_t(x) \label{eq_replacement_xi_epsilon_N_by_tangent_vector}\\
&-\frac{1}{2N}\sum_{x\in V^{\epsilon}(\gamma^N)}\xi_x^{\epsilon N}\big(1-\xi^{\epsilon N}_x\big){\bf m}(x)\cdot \nabla H_t(x) = \frac{1}{4N}\sum_{x\in V^{\epsilon}(\gamma^N)}\big[{\bf t}^{\epsilon N}_x\cdot {\bf m}(x)\big] {\bf t}_x^{\epsilon N}\cdot \nabla H_t(x).\nonumber
\end{align}
To do so, we use the following shorthand notations:
\begin{align}
{\bf t}^\epsilon := {\bf t}^{\epsilon N}_x,\qquad {\bf t}^\epsilon_i := {\bf t}^{\epsilon N}_x\cdot {\bf b}_i,\quad i\in\{1,2\}.
\end{align}
Recalling that ${\bf m}(x) = (\sigma_2,-\sigma_1)$, the left-hand side of~\eqref{eq_replacement_xi_epsilon_N_by_tangent_vector} reads:
\begin{align}
\frac{1}{4N}&\sum_{x\in V^{\epsilon}(\gamma^N)}\Big[\big(-|{\bf t}^\epsilon_2|\sigma_1 + 2|{\bf t}^\epsilon_1||{\bf t}^\epsilon_2|\sigma_1\big)\partial_2 + \big(|{\bf t}^\epsilon_1|\sigma_2 - 2|{\bf t}^\epsilon_1||{\bf t}^\epsilon_2|\sigma_2\big)\partial_1\Big]H_t(x)
\nonumber\\
&\qquad 
=\frac{1}{4N}\sum_{x\in V^{\epsilon}(\gamma^N)}\Big[\sigma_1|{\bf t}^\epsilon_2|\big(-1 + 2|{\bf t}^\epsilon_1|\big)\partial_2 + \sigma_2|{\bf t}^\epsilon_1|\big(1 - 2|{\bf t}^\epsilon_2|\sigma_2\big)\partial_1\Big]H_t(x)
\nonumber\\
&\qquad 
= \frac{1}{4N}\sum_{x\in V^{\epsilon}(\gamma^N)}\Big[\sigma_1|{\bf t}^\epsilon_2|\big(|{\bf t}^\epsilon_1|-|{\bf t}^\epsilon_2|\big)\partial_2 + \sigma_2|{\bf t}^\epsilon_1|\big(|{\bf t}^\epsilon_1|-|{\bf t}^\epsilon_2|\big)\partial_1\Big]H_t(x)
.
\label{eq_pour_faire_apparaitre_der_tangente_et_m_0}
\end{align}
To obtain the third line, we used $|{\bf t}^\epsilon_1|+|{\bf t}^\epsilon_2|=1$, see~\eqref{eq_val_abs_composants_t_epsilon_N}. 
Recall from~\eqref{valeurs_signes_e^+_x} the definition of $(\sigma_1,\sigma_2)$ and that $V^{\epsilon}(\gamma^N)\subset V(\gamma^N)$ is the set of vertices at $1$-distance at least $\epsilon$ to the poles to obtain:
\begin{align}
\forall x \in V^{\epsilon}(\gamma^N),\qquad |{\bf t}^\epsilon_1| 
:=
|{\bf t}^{\epsilon N}_x\cdot {\bf b}_0|
=
\sigma_1  {\bf t}^\epsilon_1,
\qquad 
|{\bf t}^\epsilon_2| 
:= 
|{\bf t}^{\epsilon N}_x\cdot {\bf b}_{\pi/2}| 
= 
\sigma_2 {\bf t}^\epsilon_2
.
\end{align}
This is because all points in the $1$-norm ball $B_1(x,\epsilon)$ around $x$ are in the same region of $\gamma^N$ when $x\in V^{\epsilon }(\gamma^N)$, thus $\sigma_1,\sigma_2$ are constant on $V(\gamma^N)\cap B_1(x,\epsilon )$.
As a result, the last line of~\eqref{eq_pour_faire_apparaitre_der_tangente_et_m_0} is equal to:
\begin{align}
&\frac{1}{4N}\sum_{x\in V^{\epsilon}(\gamma^N)}\Big[\sigma_1\sigma_2{\bf t}^\epsilon_2\big(\sigma_1{\bf t}^\epsilon_1-\sigma_2{\bf t}^\epsilon_2\big)\partial_2 + \sigma_2\sigma_1{\bf t}^\epsilon_1\big(\sigma_1{\bf t}^\epsilon_1-\sigma_2{\bf t}^\epsilon_2\big)\partial_1\Big]H_t(x)\nonumber\\
&\qquad=\frac{1}{4N}\sum_{x\in V^{\epsilon}(\gamma^N)}\big[\sigma_2{\bf t}^\epsilon_1 - \sigma_1{\bf t}^\epsilon_2\big]\big[{\bf t}^\epsilon_1\partial_1+{\bf t}^\epsilon_2\partial_2\big]H_t(x)\nonumber\\
&\qquad= \frac{1}{4N}\sum_{x\in V^{\epsilon}(\gamma^N)}\big[{\bf t}^\epsilon\cdot {\bf m}(x)\big] {\bf t}^\epsilon\cdot \nabla H_t(x).\label{eq_tangent_derivative_term_lemm_SSEP}
\end{align}
This concludes the proof of Lemma~\ref{lemm_partieSSEP_calcul_micro}. 
Indeed, the set $Z_B$ was defined in~\eqref{eq_def_Z_B}. 
Equation~\eqref{eq_expression_L_H_tout_remplace} then follows from the expression~\eqref{eq_bulk_term_interm_2} with the two identities~\eqref{eq_replacement_HG_terms_by_tangent_vector}--\eqref{eq_tangent_derivative_term_lemm_SSEP}.
\end{proof}
Lemmas~\ref{lemm_pole_terms} and~\ref{lemm_partieSSEP_calcul_micro} yield the statement of Proposition~\ref{prop_action_gen_micro}, setting:
\begin{align}
\tilde \omega(H_t,\delta,\epsilon,\gamma^N) := \omega_B(H_t,\delta,\epsilon,\tilde\gamma^N) + \omega_P(H_t,\delta,\tilde\gamma^N),\qquad \tilde \gamma^N\in\Omega^N_{\text{mic}},
\end{align}
as well as (recall~\eqref{eq_def_Z_p}--\eqref{eq_def_Z_B}):
\begin{equation}
\tilde Z := Z_B \cap Z_P;\label{eq_def_tilde_Z}
\end{equation}
recalling also the normalisation ${\bf t}^{\epsilon N}_x = \mathsf{v}^{\epsilon N}_x{\bf T}^{\epsilon N}_x$ from~\eqref{eq_def_v_epsilon_N_T_epsilon_N}.\\

We conclude the section with a useful bound on the Radon-Nikodym derivative, obtained as a consequence of the computations in the proofs of Lemmas~\ref{lemm_pole_terms}--\ref{lemm_partieSSEP_calcul_micro}. 
We stress that the result below does not require an estimate of the error terms and is therefore valid for any trajectory in $E([0,T],\e)$.
\begin{coro}\label{coro_bound_RD}
Let $H\in\C$. Recall from~\eqref{eq_def_der_radon_nyk} the definition of the Radon-Nikodym derivative $D^N_{\beta,H}$ up to time $T>0$. There is a constant $C(H)>0$ such that, for each $T>0$ and each trajectory $(\gamma^N_t)_{t\leq T}$ with values in $\Omega^N_{\text{mic}}\cap \e$:
\begin{align}
-C(H)-C(H)\int_0^{T}|\gamma^N_t|\, dt\Big]
\leq 
\frac{1}{N}\log D^N_{\beta,H}((\gamma^N_t)_{t\leq T}) 
\leq 
C(H) + C(H)\int_0^{T}|\gamma^N_t|\, dt\Big]
.
\end{align}
The same bounds hold for $\big|\int_0^T e^{N\langle\Gamma^N_t,H\rangle}N^2\lcal_\beta e^{N\langle\Gamma^N_t,H\rangle}\, dt\big|$ without the $\pm C(H)$.
\end{coro}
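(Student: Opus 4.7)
The plan is to bound $\log D^N_{\beta,H}$ trajectory by trajectory by reusing the computations that led to Proposition~\ref{prop_action_gen_micro}, keeping only the \emph{deterministic} identities (Taylor expansions of $H$ and discrete integrations by parts) and dropping every step that relies on a super-exponential estimate. Starting from~\eqref{eq_def_generale_der_radon_niko} and using the decomposition $e^{-N\langle\Gamma^N_t,H_t\rangle}N^2\lcal_\beta e^{N\langle\Gamma^N_t,H_t\rangle}=\mathcal{B}_t+\mathcal{P}_t$, the contribution to $N^{-1}\log D^N_{\beta,H}$ splits into four pieces: (i) the boundary terms $\langle\Gamma^N_T,H_T\rangle-\langle\Gamma^N_0,H_0\rangle$, (ii) the $\partial_t H_t$ integral $\int_0^T N\langle\Gamma^N_t,\partial_t H_t\rangle\,dt$, (iii) the bulk integral $\int_0^T\mathcal{B}_t\,dt$ and (iv) the pole integral $\int_0^T\mathcal{P}_t\,dt$.

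Pieces (i) and (ii) are immediate: since curves in $\e$ delimit a droplet of volume at most $|\Gamma^{\text{ref}}|+r_0^2$, and $H\in\C$ has compact support in time so that $\partial_t H_t$ vanishes outside a fixed window $[0,T_H]$, each of these contributions is bounded by a constant $C(H)$ at the level of $N^{-1}\log D^N_{\beta,H}$, which accounts for the $e^{\pm C(H)N}$ factor in the statement.

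For the bulk piece (iii), I would re-run the proof of Lemma~\ref{lemm_partieSSEP_calcul_micro} up to, but not including, the invocation of the replacement lemma. The Taylor expansion of $H_t$ at each vertex and the discrete integration by parts within each region $V_k(\gamma^N_t)$ are pointwise identities in $\gamma^N_t$, and they produce one extra factor $1/N$ per summand through a derivative of $H$: this is precisely the mechanism that converts the naive $O(N^2|\gamma^N_t|)$ bound on $\mathcal{B}_t$ into the sharp $O(N(1+|\gamma^N_t|))$ bound. Each of the resulting sums has at most $|V(\gamma^N_t)|\le N|\gamma^N_t|$ terms of size $O(\|H\|_\infty/N)$, which yields the deterministic estimate $|\mathcal{B}_t|\le C(H)N(1+|\gamma^N_t|)$, and integrating over $t$ (using the compact support of $H$ to absorb any $T$ factor) gives $|\int_0^T\mathcal{B}_t\,dt|\le C(H)N+C(H)N\int_0^T|\gamma^N_t|\,dt$.

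The pole piece (iv) is the main obstacle, and I would handle it using the identity~\eqref{eq_pole_term_0}, where the exponentials $e^{N\delta}-1$ have already been expanded to first order. This step is crucial, because a cruder exponential bound of the form $|e^{N\delta}-1|\le |N\delta|\,e^{|N\delta|}$ summed over interior pole vertices would cost an extra factor of $N$. In~\eqref{eq_pole_term_0} each summand is $O(\|H\|_\infty)$ at the level of $(1/N)\mathcal{P}_t$, the error $\eta_P$ is controlled by $2\|H\|_\infty^2\sum_k p_k/N$, and the number of summands is bounded by $\sum_k p_k(\gamma^N_t)\le N|\gamma^N_t|$ because the poles are subarcs of $\gamma^N_t$. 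Together these give $|\int_0^T\mathcal{P}_t\,dt|\le C(\beta,H)N+C(\beta,H)N\int_0^T|\gamma^N_t|\,dt$. Summing the four contributions yields the upper bound of Corollary~\ref{coro_bound_RD}; the lower bound follows by applying the same argument to $-H$ in place of $H$, and the second assertion (without the $e^{\pm C(H)N}$ prefactor) is precisely the estimate for (iii)$+$(iv) alone, since curves in $\e$ have $|\gamma^N_t|$ bounded below by a positive constant, allowing the residual $C(H)N$ term to be absorbed into $C(H)N\int_0^T|\gamma^N_t|\,dt$.
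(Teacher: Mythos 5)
Your overall strategy is the one the paper intends (the paper offers no written proof, only the remark that the corollary follows from the computations of Lemmas~\ref{lemm_pole_terms} and~\ref{lemm_partieSSEP_calcul_micro} with the error terms bounded crudely), and your treatment of pieces (i), (ii) and (iii) is correct: the volume terms are $O(1)$ on $\e$, and the discrete integration by parts in each region is indeed the deterministic mechanism that turns the naive $O(N^2|\gamma^N_t|)$ bound on $\mathcal B_t$ into $O(N(1+|\gamma^N_t|))$.

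The gap is in step (iv). From your own premises --- each summand of $\tfrac1N\mathcal P_t$ in~\eqref{eq_pole_term_0} is $O(\|H\|_\infty)$ and the number of summands is at most $\sum_k p_k(\gamma^N_t)\le 4N|\gamma^N_t|$ --- you get $\tfrac1N|\mathcal P_t|\le C(\beta,H)\,(1+N|\gamma^N_t|)$, hence $\big|\int_0^T\mathcal P_t\,dt\big|\le C(\beta,H)N+C(\beta,H)N^2\int_0^T|\gamma^N_t|\,dt$, not the $C(\beta,H)N+C(\beta,H)N\int_0^T|\gamma^N_t|\,dt$ you assert; since $\log D^N_{\beta,H}$ contains $\int_0^T\mathcal P_t\,dt$ itself, this is a full factor of $N$ short of the corollary. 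There is no deterministic cancellation to recover it: when $p_k>2$ the only surviving pole contributions are the $p_k-1$ growth terms $e^{-2\beta}\big[e^{N\langle(\Gamma^N)^{+,x},H_t\rangle-N\langle\Gamma^N,H_t\rangle}-1\big]\approx \tfrac{2e^{-2\beta}}{N}H_t(x)$, which all carry the sign of $H_t$ near the pole and do not telescope, and $p_k=N|P_k|$ can genuinely be of order $N$ for a curve in $\Omega^N_{\text{mic}}\cap\e$ with a long flat pole. The quantity you actually need to control is $e^{-2\beta}\sum_k(p_k-1)$, and the paper only ever bounds it \emph{probabilistically}, via the super-exponential estimate of Lemma~\ref{lemm_size_pole_ds_calcul_action_gen} on the set $B^N_p(\beta)$ of~\eqref{eq_def_B_p} --- exactly the kind of input the corollary claims to dispense with. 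You should either exhibit a deterministic bound $\sum_kp_k\le C(1+|\gamma^N_t|)$ uniform in $N$ (which is false as stated), or acknowledge that the purely trajectory-wise bound obtainable by this route is $\exp\big[C(H)N+C(H)N^2\int_0^T|\gamma^N_t|\,dt\big]$ and that the pole term is where the stated corollary requires an additional argument. A minor further point: the lower bound does not follow by ``applying the argument to $-H$'' (note $D^N_{\beta,-H}\neq (D^N_{\beta,H})^{-1}$); it follows because all your estimates are two-sided bounds on $|\log D^N_{\beta,H}|$.
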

\subsubsection{From microscopic sums to line integrals}\label{sec_discrete_sums_to_line_integrals}
The goal of this section is to turn the expression of Proposition~\ref{prop_action_gen_micro} into an $N$-independent object, with nice continuity properties with respect to the topology on $E([0,T],\e)$ (defined in~\eqref{eq_def_d_E}). The statement of the result, Proposition~\ref{prop_informal_stuff_to_prove_action_gen}, 
requires some notations, which we introduce together with an explanation of the difficulties.\\

In Proposition~\ref{prop_action_gen_micro}, for each $T>0$, we find a set $\tilde Z$ of trajectories such that, if $(\gamma^N_t)_{t\leq T}\in \tilde Z\cap E([0,T],\e)$, the action of the generator in the Radon-Nikodym derivative~\eqref{eq_def_der_radon_nyk} contains terms of the form:
\begin{equation}
\int_0^{T}\frac{1}{N}\sum_{x\in V(\gamma^N_t)} f(t,\gamma^N_t,x)\, dt
,
\label{eq_discrete_line_integral_0}
\end{equation}
with $f$ a bounded function that depends on a neighbourhood of the vertex $x$ inside $\gamma^N_t$ at each time $t\in[0,T]$. To make sense of such an expression when $N$ is large, we would like:
\begin{itemize}
	\item to prove that there is a set $Z\subset \tilde Z$ on which the Replacement Lemma~\ref{lemm_Replacement_lemma_sec_mart} holds and microscopic curves have length of order $N$ at each time, so that the sum in~\eqref{eq_discrete_line_integral_0} is typically of order $1$ when $N$ is large;
	\item to then prove that this discrete sum can be seen as the discretisation of a suitable line integral on $\gamma^N_t$ at each time $t\in[0,T]$. 
	Informally, this line integral should have the same continuity property as the discrete version: if $\gamma^N\in \Omega^N_{\text{mic}}\cap \e$, a small change of $\gamma^N$ in Hausdorff distance should correspond to a small change in the corresponding line integral.
\end{itemize}
The first point is treated in the following lemma, proven in Section~\ref{sec_short_time_stability}.
\begin{lemm}\label{lemm_control_length_sec_computations}
Let $\beta>\log 2$. For each $T'>0$, there is $C(\beta,H,T')>0$ such that:
\begin{equation}
\forall A>0,\qquad 
\limsup_{N\rightarrow\infty}\frac{1}{N}\log\Prob^N_{\beta}\Big(\sup_{t\leq T'}|\gamma^N_t|\geq A\Big) 
\leq 
-C(\beta,H,T')A
.
\end{equation}
\end{lemm}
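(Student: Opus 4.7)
The plan is to exploit the reversibility of the contour dynamics with respect to $\nu^N_\beta(\gamma) \propto e^{-\beta N |\gamma|}$ and transfer the exponential tail of $|\gamma^N|$ under this invariant measure to the non-stationary process started from $\gamma^{\text{ref},N}$. Since $\beta > 1 > \log 2$ and the number of curves $\gamma\in\Omega^N_{\text{mic}}$ with $N|\gamma|=n$ is at most $cn^42^n$ (cf. the discussion below Remark~\ref{rmk_def_contour_dyn}), the partition function $\mathcal Z^N_\beta$ stays uniformly bounded in $N$, and one obtains directly
\begin{align*}
\nu^N_\beta\big(\{\gamma : |\gamma|\geq A\}\big) \leq C(\beta)\,e^{-(\beta-\log 2)NA},\qquad \nu^N_\beta(\gamma^{\text{ref},N})\geq c(\beta)\,e^{-\beta N|\gamma^{\text{ref}}|(1+o(1))}.
\end{align*}
The elementary conditioning bound $\Prob^N_\beta(X) \leq \Prob_{\nu^N_\beta}(X)/\nu^N_\beta(\gamma^{\text{ref},N})$ then reduces the problem to a stationary estimate, at the cost of an additive $\beta|\gamma^{\text{ref}}|$ in the exponential rate, which becomes absorbed for $A$ large enough.

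To handle the supremum in time, I would discretise $[0,T']$ on a grid of mesh $\delta_N$ and split the event $\{\sup_{t\leq T'}|\gamma^N_t|\geq A\}$ into the union over grid times of $\{|\gamma^N_{k\delta_N}|\geq A/2\}$, each controlled by stationarity via the display above, and the oscillation event on intervals of length $\delta_N$. The oscillation I would estimate through the exponential martingale $M^\lambda_t = \exp[\lambda N(|\gamma^N_t|-|\gamma^N_0|) - \int_0^t R_\lambda(\gamma^N_s)\,ds]$. A direct computation shows bulk SSEP-type updates preserve $|\gamma^N|$ (they merely permute horizontal and vertical edges at a corner), so only pole events contribute, giving
\begin{align*}
R_\lambda(\gamma) = N^2\sum_{k=1}^{4}\Big[\mathbf{1}_{p_k=2}(e^{-2\lambda}-1) + (p_k-1)e^{-2\beta}(e^{2\lambda}-1)\Big],
\end{align*}
which using $\sum_k p_k\leq N|\gamma^N|$ is bounded by $C(\lambda,\beta)\,N^3|\gamma^N|$; Doob's maximal inequality applied to $M^\lambda$ on a truncated event $\{\sup_{t\leq T'}|\gamma^N_t|\leq A'\}$ then controls the oscillations.

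The main obstacle is the circularity introduced by the dependence of $R_\lambda$ on $|\gamma^N|$: the length control I want to prove is itself needed to estimate the fluctuations. I would resolve this by a bootstrap stopping-time argument, first establishing a crude bound with a large $A'$ (so that $M^\lambda$ remains controlled up to the exit time of $\{|\gamma^N|\leq A'\}$), then iteratively refining. Assembling the stationary tail, the time discretisation, and the oscillation control yields the claim with $C(\beta,T')$ that can be taken close to $(\beta-\log 2)/2$, once $A$ is large enough to absorb the finite $\beta|\gamma^{\text{ref}}|$ factor from the initial-measure comparison.
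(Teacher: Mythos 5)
Your overall strategy — reduce to stationarity via $\Prob^N_\beta \leq \nu^N_\beta(\gamma^{\text{ref},N})^{-1}\Prob^{\nu^N_\beta}_\beta$, use the equilibrium tail of $|\gamma|$ under $\nu^N_\beta$ ($\beta>\log 2$ and the count $cn^42^n$), observe that bulk corner flips preserve $|\gamma^N|$ so only pole moves contribute, and run an exponential-martingale argument — is the same decomposition the paper uses. Your computation of the compensator $R_\lambda$ matches the paper's finite-variation process $\log V_t$ (the paper takes $\lambda=1$). So far the two proofs are parallel.

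The genuine gap is in how you control the time supremum. You discretise on a mesh $\delta_N$, split the event into grid-time exceedances plus oscillations on each mesh interval, and then run Doob on a stopped martingale together with a bootstrap argument to break the circularity ``$R_\lambda$ depends on $|\gamma^N|$, which is what I want to bound.'' You acknowledge the circularity but only sketch the resolution; a bootstrap on $A'$ needs a base case and a non-trivial iteration, and as written the argument would still leave you estimating $\Prob(\tau_{A'}\leq T')$, which is essentially the statement to prove. The paper sidesteps this entirely: it slices $[0,T']$ into $N^b$ pieces of length $N^{-b}T'$ with $b\geq 3$, applies stationarity to each slice, and on such a microscopic time window the deterministic bound $p_k\leq CN$ makes the finite-variation contribution $\frac{1}{N}\log V_t = O(N^{3-b})\to 0$ without any a priori length control. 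There is no circularity to resolve; Chebychev for $V_t$, Doob for the mean-$1$ martingale $D_t$, and the equilibrium tail for $|\gamma^N_0|$ then close the argument on each slice, and a union bound over the $N^b$ slices only costs a factor $N^b$, subexponential. I would suggest replacing the fixed-mesh-plus-bootstrap scheme by this $N^{-b}$-slicing: it is both simpler and actually provable. (Your closing remark that $C$ can be taken near $(\beta-\log 2)/2$ is also a bit optimistic; the martingale term $D_t$ produces an additional rate that can be the bottleneck when $\beta$ is large.)
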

It will thus be enough to define $Z$ as the intersection of $\tilde Z$ and a set where the length is well-controlled, as done below in~\eqref{eq_def_Z}.\\

Let us now focus on the second point. Let $\tilde \gamma\in\Omega\cap \e$ be a Lipschitz Jordan curve. Let $(\tilde\gamma(u))_{u\in[0,1)}$ denote a parametrisation of $\tilde \gamma$. The line integral of a continuous $f:\R^2\rightarrow\R$ on $\tilde \gamma$ is by definition:
\begin{align}
\int_{\tilde\gamma}f \, ds 
:= 
\int_0^1 f(\tilde\gamma(u))\|\tilde\gamma'(u)\|_2 \, du
,
\end{align}
where $s$ denotes the arclength coordinate on $\tilde\gamma$.
Assume that $\tilde \gamma = \gamma^N$ with $\gamma^N\in \Omega^N_{\text{mic}}\cap \e$. 
In this case, $\tilde \gamma'(u)$ is proportional to either ${\bf b}_0$ or ${\bf b}_{\pi/2}$ and the line integral reads:
\begin{equation}
\int_{\tilde\gamma} f \, ds 
= 
\sum_{x\in V(\gamma^N)} f(x) [s(x+{\bf e}^+_x) - s(x)]
= 
\frac{1}{N}\sum_{x\in V(\gamma^N)} f(x)
.
\label{eq_wrong_Riemann_sum_for_gamma}
\end{equation}
For a microscopic curve, the discrete sums of Proposition~\ref{prop_action_gen_micro} could therefore be replaced with line integrals without loss of information. \\

The problem with~\eqref{eq_wrong_Riemann_sum_for_gamma}, however, is that the right-hand side is continuous in Hausdorff distance, while this is in general not true for the left-hand side. 
%Indeed, adding or deleting one block to the droplet associated with $\gamma^N$ does not change the sum in the right-hand side of~\eqref{eq_wrong_Riemann_sum_for_gamma} to leading order in $N$. 
%To understand why, however, the left-hand side does not preserve this continuity, 
Indeed, consider the simplest case $f\equiv 1$. 
Recall that $|\gamma^N|$ is the length of $\gamma^N$ in 1-norm and let $|\gamma^N|_2$ be its length in $2$-norm. 
Then:
\begin{equation}
|\gamma^N|_2 
:= 
\int_{\gamma^N} 1\, ds 
= 
\frac{1}{N}\sum_{x\in V(\gamma^N)} 1 
= 
|\gamma^N|
,
\label{eq_discontinuity_length}
\end{equation}
\begin{figure}
\begin{center}
\includegraphics[width=10cm]{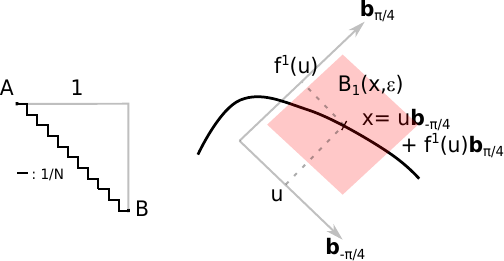} 
\caption{\emph{Left figure:} a lattice path on $(N^{-1}\Z)^2$ between the two extremities $A$ and $B$ of the unit square.  The path has length $2$ in both one- and two-norms. It converges in Hausdorff distance to the diagonal of the unit-square, which has 2-norm length $\sqrt 2$.\\
\emph{Right figure:} neighbourhood of a point $x$ at distance at least $\epsilon$ from the poles in one-norm. In the ball $B_1(x,\epsilon)$, the curve corresponds to the graph of a function $f^1$ in the reference frame $({\bf b}_{-\pi/4},{\bf b}_{\pi/4})$.\label{fig_steps_and_continuous_tangent_vector}}
\end{center}
\end{figure}
It is easy to see that the length $|\cdot|$ in one-norm (recall~\eqref{eq_def_normes_1_2}) is continuous in Hausdorff distance, using e.g.:
\begin{align}
|\gamma| 
= 
2\big[L_1(\gamma)-L_3(\gamma)\big]\cdot{\bf b}_{\pi/2} 
+ 2\big[L_2(\gamma)-L_4(\gamma)\big]\cdot{\bf b}_0,
\qquad \gamma\in\Omega
.
\end{align}
The continuity of the above functionals is established below~\eqref{eq_def_z_k_appB}. 
The length $|\cdot|_2$ in two-norm is however not continuous in Hausdorff distance. 
Indeed, assume that $(\gamma^N)_N$ converges in Hausdorff distance to a curve $\gamma^\infty$ and suppose $\gamma^\infty$ is not a lattice path: 
$\gamma^\infty\notin \bigcup_N\Omega_{\text{mic}}^N$. 
Then $|\gamma^\infty|_2\neq |\gamma^\infty|$, 
see Figure~\ref{fig_steps_and_continuous_tangent_vector}. 
However if $|\cdot|_2$ were continuous,~\eqref{eq_discontinuity_length} would yield:
\begin{align}
|\gamma^\infty|_2 = \int_{\gamma^\infty}1 ds = \lim_{N\rightarrow\infty}\frac{1}{N}\sum_{x\in V(\gamma^N)} 1 = |\gamma^\infty|,
\end{align}
which is a contradiction. 
We claim that, to preserve continuity of the right-hand side of~\eqref{eq_wrong_Riemann_sum_for_gamma} in Hausdorff distance, it must be written in terms of the following line integral:
\begin{equation}
\frac{1}{N}\sum_{x\in V(\gamma^N)} f(x) 
= 
\frac{1}{N}\sum_{x\in V(\gamma^N)} f(x) \|{\bf T}_x\|_1 = \int_{\gamma^N} f\mathsf{v}^{-1}\, ds
,
\label{eq_true_convergence_Riemann_sum_for_gamma}
\end{equation}
where, for $x\in V(\gamma^N)$, ${\bf T}_x$ is the unit vector in $\|\cdot\|_2$-norm, tangent to the edge $[x,x+{\bf e}^+_x]$. 
The quantity $\mathsf{v}$ is given by $\mathsf{v}^{-1} := \|{\bf T}\|_1$ and plays the same role as the $\|\gamma'(u) \|_2$ term in~\eqref{eq_wrong_Riemann_sum_for_gamma}. 
Note that $\mathsf{v}$ is identically equal to $1$ for microscopic curves, for which ${\bf T}_x$ is either $\pm {\bf b}_0$ or $\pm {\bf b}_{\pi/2}$.

The claim that~\eqref{eq_true_convergence_Riemann_sum_for_gamma} is the correct way to write the discrete sums is proven in Appendix \ref{sec_continuity_J_H_eps_ell_H_eps} 
where it is stated that, for continuous $f:\R^2\rightarrow\R$:
\begin{align}
\gamma\in \e\mapsto \int_\gamma f \mathsf{v}^{-1}ds\quad \text{is continuous in Hausdorff distance.}\label{eq_claim_continuity_line_integral}
\end{align}
The argument in Appendix~\ref{sec_continuity_J_H_eps_ell_H_eps} is actually carried out only for the integrands appearing in Proposition~\ref{prop_informal_stuff_to_prove_action_gen}, but could be generalised to the above setting.

Admitting the claim~\eqref{eq_claim_continuity_line_integral}, we now rewrite the expression of Proposition~\ref{prop_action_gen_micro} in terms of line integrals. 
To do so, we need some notations. 
Let $\gamma\in\Omega$ be a macroscopic interface. 
For $\epsilon>0$, let $\gamma(\epsilon)$ denote all points of $\gamma$ at $1$-distance at least $\epsilon$ from each poles and let $x\in \gamma(\epsilon)$. 
For definiteness, assume $x$ be in the first region of $\gamma$. 
By Definition~\ref{property_state_space} of $\Omega$, the portion $\gamma\cap B_1(x,\epsilon)$ of $\gamma$ is the graph of a $1$-Lipschitz function $f^1$ in the reference frame $({\bf b}_{-\pi/4},{\bf b}_{\pi/4})$ (see Figure~\ref{fig_steps_and_continuous_tangent_vector}): 
\begin{align}
\gamma\cap B_1(x,\epsilon) 
= 
\Big\{w{\bf b}_{-\pi/4}+ f^1(w){\bf b}_{\pi/4}:w\in u+ [-\epsilon/\sqrt{2},\epsilon/\sqrt{2}]\Big\},\quad u:= x\cdot {\bf b}_{-\pi/4}
.
\end{align}
The curve $\gamma$ has well-defined tangent vector at almost every point as it is Lipschitz. Let ${\bf t}$ and ${\bf T}$ denote two different normalisations of the same tangent vector, so that:
\begin{equation}
{\bf t} = \mathsf{v}{\bf T},\quad \|{\bf t}\|_1 = 1,\quad \|{\bf T}\|_2 = 1,\quad \mathsf{v} = \|{\bf t}\|_2 = (\|{\bf T}\|_1)^{-1},  \label{eq_relationship_t_T_v}
\end{equation}
with the tangent ${\bf t}$ at a point $w{\bf b}_{-\pi/4} + f^1(w){\bf b}_{\pi/4}$ given by:
\begin{align}
{\bf t}\Big(w{\bf b}_{-\pi/4} + f^1(w){\bf b}_{\pi/4}\Big) := \frac{\sqrt{2}}{2}\Big({\bf b}_{-\pi/4} + \partial_{w}f^1(w){\bf b}_{\pi/4}\Big).\label{eq_def_t_first_region}
\end{align}
Coming back to the point $x\in \gamma(\epsilon)$ in the first region of $\gamma$, define the continuous counterpart ${\bf t}^{\epsilon}(x)$ of the microscopic averaged tangent vector ${\bf t}^{\epsilon N}_\cdot$ introduced in~\eqref{eq_def_t_epsilonN_x} by:
\begin{equation}
{\bf t}^{\epsilon }(x) = \frac{1}{\sqrt{2}\epsilon}\int_{u-\epsilon/\sqrt{2}}^{u+\epsilon/\sqrt{2}} {\bf t}\Big(w{\bf b}_{-\pi/4} + f^1(w){\bf b}_{\pi/4}\Big)dw,\label{eq_exp_t_epsilon_N_as_integral_sur_un_cadrant}
\end{equation}
The corresponding object in other regions reads, for $y\in \gamma(\epsilon)$ in region $k$ of $\gamma$:
\begin{equation}
 {\bf t}^\epsilon(y) = \frac{1}{\sqrt{2}\epsilon}\int_{y\cdot {\bf b}_{\pi/4-k\pi/2}-\epsilon/\sqrt{2}}^{y\cdot {\bf b}_{\pi/4-k\pi/2}+\epsilon/\sqrt{2}}{\bf t}\Big(w{\bf b}_{\pi/4-k\pi/2} + f^k(w){\bf b}_{\pi/4-(k-1)\pi/2}\Big)dw.\label{eq_exp_t_epsilon_sans_N_as_integral_sur_un_cadrant}
\end{equation}
The vector ${\bf t}^\epsilon(x)$ indeed satisfies $\|{\bf t}^{\epsilon}(x)\|_1 = 1$ for $x\in\gamma(\epsilon)$. 
It coincides with ${\bf t}^{\epsilon N}_x$ if $\gamma$ is in fact in $\Omega^N_{\text{mic}}$ and $x$ is a vertex of $\gamma$. 
As in~\eqref{eq_relationship_t_T_v} for the tangent vectors at a single point, 
introduce finally a different normalisation ${\bf T}^\epsilon$ of the vector ${\bf t}^\epsilon$ and $\mathsf{v}^\epsilon$ as follows:
\begin{equation}
{\bf T}^ \epsilon 
:= 
{\bf t}^ \epsilon/\| {\bf t}^ {\epsilon}\|_2,\quad \mathsf{v}^ \epsilon := \|{\bf t}^ \epsilon\|_2 = \frac{1}{\|{\bf T}^ \epsilon\|_1}
.
\label{eq_def_cont_versions_T_v}
\end{equation}
Using~\eqref{eq_exp_t_epsilon_sans_N_as_integral_sur_un_cadrant}--\eqref{eq_def_cont_versions_T_v} and defining, for $A>0$:
\begin{equation}
Z
= 
Z(\beta,H,\delta,\epsilon,A) 
:= 
\tilde Z(\beta,H,\delta,\epsilon)\cap\Big\{(\gamma^N_t)_{t\leq T}:  \sup_{t\leq T}\int_0^T|\gamma^N_t|\, dt\leq AT\Big\}
,
\label{eq_def_Z}
\end{equation}
we obtain a version of Proposition~\ref{prop_action_gen_micro} where discrete sums are replaced with line integrals and the tangent vectors appearing are the usual 2-normed ones. 
\begin{prop}\label{prop_informal_stuff_to_prove_action_gen}
Let $A>0$. For each $\delta,\epsilon>0$ and trajectory $(\gamma^N_t)_{t\in[0,T]}\in E([0,T],\e)$, one has:
\begin{align}
\frac{1}{N}&\int_0^{T}N^2e^{-N\langle\Gamma^N_t,H_t\rangle}\lcal_{\beta}e^{N\langle\Gamma^N_t,H_t\rangle}\, dt 
\nonumber\\
&= 
\frac{1}{4}\int_0^{T}dt\int_{\gamma^N_t(\epsilon)}\frac{(\mathsf{v}^\epsilon)^2}{\mathsf{v}}\big[{\bf T}^\epsilon\cdot {\bf m}\big(\gamma^N_t(s)\big)\big] \ {\bf T}^\epsilon\cdot \nabla H\big(t,\gamma^N_t(s)\big)\, ds 
\nonumber\\
&\quad
+\frac{1}{2}\int_0^{T}dt \int_{\gamma^N_t(\epsilon)} \frac{(\mathsf{v}^\epsilon)^2}{\mathsf{v}}|{\bf T}^\epsilon\cdot {\bf b}_0||{\bf T}^\epsilon \cdot{\bf b}_{\pi/2}| H\big(t,\gamma^N_t(s)\big)^2\, ds + \int_0^{T}\omega(H_t,\delta,\epsilon,A,\gamma^N_t)\, dt
\nonumber\\
&\quad
-\frac{1}{2}\int_0^{T}\sum_{k=1}^4(1/2 - e^ {-\beta}) \big[H(t,L_k(\gamma^N_t)) + H(t,R_k(\gamma^N_t))\big]\, dt
, 
\label{eq_expression_L_avec_rplcmt_et_exp-beta_mais_avec_moyenne_vect_tangent} 
\end{align}
where $s$ is the arclength coordinate on $\gamma^N_t$, $\gamma(\epsilon)$ is the set of points in $\gamma$ at $1$-distance at least $\epsilon$ from the poles for each curve $\gamma\in\e$ and ${\bf m} = (\pm 1,\pm 1)$ is the sign vector in Definition~\ref{def_m(gamma)_gamma(eta)}. 
The vector ${\bf T}^\epsilon$ and $\mathsf{v}^\epsilon$ are defined in~\eqref{eq_def_cont_versions_T_v}. 
Distinguish $\mathsf{v}$ and $\mathsf{v}^\epsilon$ in~\eqref{eq_expression_L_avec_rplcmt_et_exp-beta_mais_avec_moyenne_vect_tangent}: 
the factor $\mathsf{v}^{-1}$ is the additional factor of~\eqref{eq_true_convergence_Riemann_sum_for_gamma} needed for continuity, 
while $\mathsf{v}^\epsilon$ comes from the averaging of the microscopic tangent vectors.

The error term $\omega$ can be controlled on the set $Z=Z(\beta,H,\delta,\epsilon,A)$ of~\eqref{eq_def_Z}: 
there is $C(H)>0$ such that, for trajectories $(\gamma^N_t)_{t\leq T}\in Z\cap E([0,T],\e)$, $\omega$ satisfies:
\begin{align}
\Big|\int_0^{T}\omega(H_t,\delta,\epsilon,A,\gamma^N_t)\, dt\Big|\leq 2\delta + C(H)T\Big(\epsilon  + \frac{A+1}{N}\Big)
.
\end{align}
Moreover,  
for each $\delta>0$, there is $C(\beta,H,T)>0$ and $\epsilon_0(\delta)>0$ such that, for each $A>0$, $Z=Z(\beta,H,\delta,\epsilon,A)$ satisfies: 
\begin{align}
&\sup_{0< \epsilon \leq \epsilon_0(\delta)}\limsup_{N\rightarrow\infty}\frac{1}{N}\log\Prob^N_{\beta,H}\Big(\gamma^N_\cdot\in Z(\beta,H,\delta,\epsilon,A)^c\cap E([0,T],\e)\Big) 
\nonumber\\
&\hspace{8cm}\leq 
\max\big\{-\delta^{-1},-C(\beta,H,T)A\big\}
.
 \label{eq_bound_proba_Z_G_H_epsilon_A_sec_3}
\end{align}
\end{prop}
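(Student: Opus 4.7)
The plan is to obtain this proposition as a direct translation of the discrete expression in Proposition~\ref{prop_action_gen_micro}, rewriting each sum $\frac{1}{N}\sum_{x \in V^\epsilon(\gamma^N_t)}(\cdots)$ as the corresponding line integral $\int_{\gamma^N_t(\epsilon)}(\cdots)\,ds$. The underlying identity is
\begin{equation*}
\tfrac{1}{N}\sum_{x\in V(\gamma^N)}f(x) = \int_{\gamma^N} f(\gamma^N(s))\,\mathsf{v}^{-1}\,ds,
\end{equation*}
which, on a microscopic curve, is trivial because every edge is axis-aligned, so ${\bf T}(s) \in \{\pm{\bf b}_1,\pm{\bf b}_2\}$ and $\mathsf{v} = 1/\|{\bf T}\|_1 = 1$. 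The $\mathsf{v}^{-1}$ factor that appears in the proposition statement is therefore tautological at the microscopic level, but it is precisely the factor needed for Hausdorff-continuity (the object of Appendix~\ref{sec_continuity_J_H_eps_ell_H_eps}); the form $(\mathsf{v}^\epsilon)^2/\mathsf{v}$ in~\eqref{eq_expression_L_avec_rplcmt_et_exp-beta_mais_avec_moyenne_vect_tangent} arises mechanically by combining the $(\mathsf{v}^{\epsilon N}_x)^2$ of Proposition~\ref{prop_action_gen_micro} with this $\mathsf{v}^{-1}$.

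The nontrivial step, and the main technical obstacle, is to identify the microscopic averaged tangent vector ${\bf t}^{\epsilon N}_x$ at a vertex $x$ with its continuous counterpart ${\bf t}^\epsilon(\gamma^N_t(s_x))$, where $s_x$ is the arclength coordinate of $x$. Both are averages of the unit tangent over an $\epsilon$-neighbourhood of $x$ in $1$-norm: the discrete one sums $2\epsilon N + 1$ edge vectors, while~\eqref{eq_exp_t_epsilon_N_as_integral_sur_un_cadrant} integrates over a $w$-window of length $\epsilon\sqrt{2}$ in the tilted frame. Since in any fixed region every edge direction projects onto the tilted axis with constant length $1/(\sqrt{2}N)$, the two averages coincide up to boundary corrections of size $O(1/(\epsilon N))$; moreover ${\bf t}^\epsilon$ varies by $O(1/(\epsilon N))$ as the base point moves along a single edge of length $1/N$. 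Passing from $\tfrac{1}{N}\sum_{x}$ to $\int_{\gamma^N_t(\epsilon)}\cdot\,ds$ therefore introduces a further error of order $|V^\epsilon(\gamma^N_t)|/(\epsilon N^2) = O(|\gamma^N_t|/(\epsilon N))$. Using the boundedness of $\nabla H$ and $H$ and absorbing all these discrepancies, together with $\tilde\omega$ from Proposition~\ref{prop_action_gen_micro}, into a single error term $\omega$, one obtains a bound $|\int_0^T \omega(H_t,\delta,\epsilon,A,\gamma^N_t)dt|\leq 2\delta + C(H)T(\epsilon + (A+1)/N)$ on the event $\{\int_0^T|\gamma^N_t|dt \leq AT\}$. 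The bookkeeping is tedious but mechanical, and is where most of the work lies: one has to carry through the alternation of normalization conventions between ${\bf t}^\epsilon$ (unit in $1$-norm) and ${\bf T}^\epsilon$ (unit in $2$-norm) across the four regions, while tracking the sign vector ${\bf m}$.

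For the super-exponential bound~\eqref{eq_bound_proba_Z_G_H_epsilon_A_sec_3}, I would set $Z(\beta,H,\delta,\epsilon,A) := \tilde Z(\beta,H,\delta,\epsilon) \cap \{\int_0^T|\gamma^N_t|dt \leq AT\}$ and apply a union bound to $Z^c = \tilde Z^c \cup \{\int_0^T|\gamma^N_t|dt > AT\}$. The contribution of $\tilde Z^c$ is controlled by Proposition~\ref{prop_action_gen_micro}: since its super-exponential estimate tends to $-\infty$ as $\epsilon\to 0$, one can choose $\epsilon_0(\delta)>0$ such that $\limsup_N N^{-1}\log\Prob^N_{\beta,H}(\tilde Z^c \cap E([0,T],\e) \cap \{\int_0^T|\gamma^N_t|dt \leq AT\}) \leq -\delta^{-1}$ for every $\epsilon \leq \epsilon_0(\delta)$. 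The contribution of the length exceedance is controlled by Lemma~\ref{lemm_control_length_sec_computations}, the transfer from $\Prob^N_\beta$ to $\Prob^N_{\beta,H}$ being handled through Corollary~\ref{coro_bound_RD} (the resulting factor $e^{C(H)N}$ is absorbed into the constant appearing in $-C(\beta,H,T)A$, at the cost of taking $A$ large enough). Taking the maximum of the two $\limsup$s gives the stated estimate.
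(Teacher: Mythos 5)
Your proposal is correct and follows essentially the same route as the paper: set $Z = \tilde Z \cap \{\int_0^T|\gamma^N_t|dt\leq AT\}$, rewrite the discrete sums of Proposition~\ref{prop_action_gen_micro} as line integrals via the identity $\frac{1}{N}\sum_x f(x)=\int_{\gamma^N}f\,\mathsf{v}^{-1}ds$ (trivial on lattice paths where $\mathsf{v}\equiv 1$), identify ${\bf t}^{\epsilon N}_x$ with ${\bf t}^\epsilon$, and combine the super-exponential estimate of Proposition~\ref{prop_action_gen_micro} with the length control and Corollary~\ref{coro_bound_RD} (which is exactly the content of Corollary~\ref{coro_change_measure_sous_exp}). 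One small remark: the paper simply asserts that ${\bf t}^\epsilon(x)$ \emph{coincides} with ${\bf t}^{\epsilon N}_x$ on microscopic curves, whereas you more carefully point out an $O(1/(\epsilon N))$ window/normalisation mismatch (together with the Riemann-sum error from ${\bf t}^\epsilon$ varying across each edge, also $O(1/(\epsilon N))$). Both treatments are acceptable since these errors contribute $O(A/(\epsilon N))$ and vanish as $N\to\infty$ before $\epsilon\to 0$; but strictly speaking this extra piece has a $1/\epsilon$ factor that is not covered by the stated bound $C(H)T(\epsilon + (A+1)/N)$, a minor imprecision you inherit from the paper and might flag explicitly (e.g.\ by allowing the error to read $C(H)T(\epsilon + (A+1)/(\epsilon N))$, which changes nothing in the limits where the proposition is used).
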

\begin{rmk}\label{rmk_reason_for_volume_around_poles}
To connect the line integrals in~\eqref{eq_expression_L_avec_rplcmt_et_exp-beta_mais_avec_moyenne_vect_tangent} with those in the weak formulation~\eqref{eq_formulation_faible_avec_der_en_temps} of anisotropic motion by curvature with drift, take a curve $\gamma\in\e$. 
Notice from~\eqref{eq_exp_t_epsilon_sans_N_as_integral_sur_un_cadrant} that $\lim_{\epsilon\rightarrow 0}{\bf t}^\epsilon(x)={\bf t}(x)$ for almost every point $x$ of $\gamma$ at $1$-distance $\epsilon$ or more to the poles. 
Parametrise $\gamma$ by the tangent angle $\theta$, defined as the angle such that ${\bf T} = \cos(\theta){\bf b}_0+\sin(\theta){\bf b}_{\pi/2}$. 
Then, for almost every point that is not in the pole, 
${\bf T}^\epsilon,\mathsf{v}^\epsilon$, 
defined in~\eqref{eq_def_cont_versions_T_v}, converge a.e. to ${\bf T},\mathsf{v}$ respectively and: 
\begin{align}
\lim_{\epsilon\rightarrow 0} \bigg[\frac{(\mathsf{v}^\epsilon)^2}{\mathsf{v}}|{\bf T}^\epsilon_1{\bf T}^\epsilon_2|\bigg] 
= 
\mathsf{v}|{\bf T}_1{\bf T}_2|
= 
\frac{|\sin(2\theta)|}{2(|\sin(\theta)|+|\cos(\theta)|)}
.
\end{align}
This quantity is precisely the mobility $\mu(\theta)$, see~\eqref{eq_def_mobility}. 
Similarly, for almost every point associated with $\theta\in[0,2\pi]\setminus \frac{\pi}{2}\Z$:
\begin{align}
\lim_{\epsilon\rightarrow 0}\bigg[\frac{(\mathsf{v}^\epsilon)^2}{\mathsf{v}}[{\bf T}^\epsilon\cdot {\bf m}]\bigg]\big[{\bf T}^\epsilon \cdot\nabla\big] = \big[\mathsf{v}[{\bf T}\cdot {\bf m}]\big]\, \big[{\bf T}\cdot\nabla \big]= \alpha(\theta)\partial_s,
\end{align}
where $\alpha$ is defined in~\eqref{eq_def_alpha} and $\partial_s = \partial_{{\bf T}}$ is the derivative with respect to the arclength coordinate, well-defined almost everywhere on a Lipschitz curve.\demo
\end{rmk}
\section{Large deviation upper-bound and properties of the rate functions}\label{sec_large_dev_upper_bound}
In this section, we prove upper bound large deviations, i.e. the upper bound in Theorem~\ref{theo_large_dev}. This is done by adapting the method of \cite{Kipnis1989} to the present case, introducing the tilted dynamics $\Prob^N_{\beta,H}$, $H\in\C$ and quantifying the cost of tilting. 
A time $T>0$ is fixed throughout the section, as well as the value of $\beta>\log 2$. 
Before we start, let us fix and recall some notations.\\ 

For a bias $H\in\C$, the Radon-Nikodym derivative $D^N_{\beta,H} = \mathrm{d}\Prob^N_{\beta,H}/\mathrm{d}\Prob^N_{\beta}|_T$ until time $T$ reads:
\begin{align}
N^{-1}\log D^N_{\beta,H}((\gamma^N_t)_{t\leq T}) &= \big<\Gamma^N_{T},H_{T}\big> - \big<\Gamma^N_0,H_0\big> \nonumber\\
&\quad- N^{-1}\int_0^{T} e^{-N\langle\Gamma^N_t,H_t\rangle}\big(\partial_t + N^2\lcal_{\beta}\big)e^{N\langle\Gamma^N_t,H_t\rangle}\, dt
.
\label{eq_def_der_radon_nyk}
\end{align}
For each $A,\delta,\epsilon>0$, recall from~\eqref{eq_def_Z} the definition of: 
\begin{equation}
Z := Z(\beta,H,\delta,\epsilon,A), \label{eq_def_Z_H_epsilon_A_sec_4}
\end{equation}
the set of trajectories in which error terms arising in the computations of Section~\ref{sec_relevant_martingales} can be estimated. 
Recall also the expression of ${\bf m}$ from Definition~\ref{def_m(gamma)_gamma(eta)}. For a trajectory $\gamma^N_\cdot = (\gamma^N_t)_{t\leq T}$ in $E([0,T],\e)$, Proposition~\ref{prop_informal_stuff_to_prove_action_gen} tells us that there is a function $\omega$ such that $D^N_{\beta,H}$ reads:
\begin{equation}
N^{-1}\log D^N_{\beta,H}(\gamma^N_\cdot) = J^\beta_{H,\epsilon}(\gamma^N_\cdot) + \int_0^{T}\omega(H_t,\delta,\epsilon,\gamma^N_t)\, dt,\label{eq_def_der_radon_nyk_avec_J_eps_ell_eps}
\end{equation}
with, for some $C(H)>0$ and each trajectory $\gamma^N_\cdot\in Z\cap E([0,T],\e)$:
\begin{equation}
\Big|\int_0^{T}\omega(H_t,\delta,\epsilon,A,\gamma^N_t)\, dt\Big|\leq 2\delta + C(H)T\Big(\epsilon  + \frac{A+1}{N}\Big).\label{eq_error_term_RD_sec_4}
\end{equation}
The functional $J^\beta_{H,\epsilon}$ is defined on trajectories $\gamma_\cdot \in E([0,T],\e)$ by (refer to Appendix~\ref{sec_the_set_E(0,T0)} for properties of $E([0,T],\e)$):
\begin{equation}
J^\beta_{H,\epsilon}(\gamma_\cdot) := \ell^\beta_{H,\epsilon}(\gamma_\cdot) - \frac{1}{2}\int_0^{T}\int_{\gamma_t(\epsilon)} |{\bf T}^\epsilon\cdot {\bf b}_0||{\bf T}^\epsilon \cdot {\bf b}_{\pi/2}|\frac{(\mathsf{v}^\epsilon)^2}{\mathsf{v}}H^2(t,\gamma_t(s))\, ds\, dt.\label{eq_def_J_H_epsilon_zeta}
\end{equation}
Recall that $\mathsf{v} = \|{\bf T}\|_1^{-1} = \|{\bf t}\|_2$ and $\mathsf{v}^\epsilon = \|{\bf t}^\epsilon\|_2$. 
Moreover, for $t\in[0,T]$, $\gamma_t(\epsilon)$ is the set of points in $\gamma_t$ at $1$-distance at least $\epsilon$ from the poles, 
and $s$ is the arclength coordinate. Recall also that, for a curve $\gamma\in\e$, the letter $\Gamma$ denotes the associated droplet. 
The functional $\ell^\beta_{H,\epsilon}$ acts on trajectories $\gamma_\cdot\in E([0,T],\e)$ according to:
\begin{align}
\ell^\beta_{H,\epsilon}(\gamma_\cdot) 
&:= 
\big<\Gamma_{T},H_{T}\big> - \big<\Gamma^{\mathrm{ref}},H_0\big> - \int_0^{T} \big<\Gamma_t,\partial_t H_t\big>\, dt  
\nonumber\\
&\quad-\frac{1}{4}\int_0^{T}dt\int_{\gamma_t(\epsilon)}\frac{(\mathsf{v}^\epsilon)^2}{\mathsf{v}}\big[{\bf T}^\epsilon \cdot {\bf m}(\gamma_t(s ))\big] {\bf T}^\epsilon\cdot \nabla H(t,\gamma_t(s ))\, ds 
\label{eq_def_ell_H_epsilon_zeta}\\
&\quad
+ \Big(\frac{1}{4}-\frac{e^{-\beta}}{2}\Big)\int_0^{T}\sum_{k=1}^4 \big[H(t,L_k(\gamma_t))+H(t,R_k(\gamma_t))\big]\, dt
\nonumber
.
\end{align}
The proof of the upper bound large deviations in Theorem~\ref{theo_large_dev} is done in two steps.  
In Section~\ref{sec_upper_bound_around_trajectory}, we establish an upper bound on the probability of observing a given trajectory. This bound is then used, in Section~\ref{sec_upper_bound_closed_sets}, to establish an upper bound for closed sets.
\subsection{Upper bound around a given trajectory}\label{sec_upper_bound_around_trajectory}
In this section, all trajectories are defined on $[0,T]$, so we systematically write $\gamma_\cdot$ for $(\gamma_t)_{t\leq T}$. 

Let $\bar\gamma_\cdot\in E([0,T],\e)$ be fixed throughout. 
Let $B_{d_E}(\bar\gamma_\cdot,\zeta)$ denote the open ball of centre $\bar\gamma_\cdot$ and radius $\zeta>0$ in $d_E$-distance, defined in~\eqref{eq_def_d_E}, and 
let us estimate the quantity:
\begin{equation}
\lim_{\zeta\rightarrow 0}\lim_{N\rightarrow\infty}\frac{1}{N}\log\Prob^N_{\beta}\big(\gamma^N_{\cdot}\in B_{d_E}(\bar\gamma_\cdot,\zeta)\big).\label{eq_upper_bound_nice_traj_0}
\end{equation}
To highlight the important points and difficulties, we first estimate~\eqref{eq_upper_bound_nice_traj_0} in Section~\ref{sec_upper_bound_nice_trajectory} for "nice" trajectories, placing convenient assumptions on $\bar\gamma_\cdot$. 
General trajectories are treated in Section~\ref{sec_upper_bound_for_compact_sets}.
\subsubsection{Upper bound around nice trajectories\label{sec_upper_bound_nice_trajectory}}
Let us estimate~\eqref{eq_upper_bound_nice_traj_0}. 
Following \cite{Kipnis1989}, we estimate~\eqref{eq_upper_bound_nice_traj_0} using the expression~\eqref{eq_def_der_radon_nyk_avec_J_eps_ell_eps} 
of the Radon-Nikodym derivative $D^N_{\beta,H} = D^{N,T}_{\beta,H}$ between $\Prob^N_{\beta,H}$ and $\Prob^N_\beta$ for trajectories on the time-interval $[0,T]$. 
Let us first assume that, for some $\zeta>0$ henceforth fixed:
\begin{align}
B_{d_E}(\bar\gamma_\cdot,\zeta)\subset E([0,T],\e).
\label{eq_assumption_interior_upper_bound}
\end{align}
Take a bias $H\in\C$. For any measurable set $\tilde X$, we may write:
\begin{align}
\Prob^N_\beta\big(\gamma^N_\cdot\in B_{d_E}(\bar\gamma_\cdot,\zeta)\big) 
&= 
\Prob^N_\beta\big(\gamma^N_\cdot\in B_{d_E}(\bar\gamma_\cdot,\zeta)\cap \tilde X\big) 
+ \Prob^N_\beta\big(\gamma^N_\cdot\in B_{d_E}(\bar\gamma_\cdot,\zeta)\cap \tilde X^c\big)
\nonumber\\
&\leq
\E^N_{\beta,H}\Big[\big(D^N_{\beta,H}\big)^{-1}{\bf 1}_{\gamma^N_\cdot\in B_{d_E}(\bar\gamma_\cdot,\zeta)\cap \tilde X}\Big] 
+ \mathbb P^N_\beta\Big(\gamma^N_\cdot\in \tilde X^c\cap E([0,T],\e)\Big)
.
\label{eq_splitting_proba_with_set_tilde_X}
\end{align}
To estimate the right-hand side of~\eqref{eq_splitting_proba_with_set_tilde_X}, 
we choose the set $\tilde X$ to only contain trajectories on which the expression of Proposition~\ref{prop_informal_stuff_to_prove_action_gen} holds. 
In view of~\eqref{eq_def_der_radon_nyk_avec_J_eps_ell_eps}--\eqref{eq_error_term_RD_sec_4} set, for each $A,\delta,\epsilon>0$:
\begin{align}
\tilde X := Z = Z(\beta,H,\delta,\epsilon,A).
\end{align}
With this choice,~\eqref{eq_splitting_proba_with_set_tilde_X} becomes:
\begin{align}
\Prob^N_\beta\big(\gamma^N_\cdot\in B_{d_E}(\bar\gamma_\cdot,\zeta)\big) &\leq  \sup_{B_{d_E}(\bar\gamma_\cdot,\zeta)\cap Z}\exp\bigg[N\Big[ -J_{H,\epsilon}^\beta + 2\delta + C(H)T\Big(\epsilon  + \frac{A+1}{N}\Big)\Big]\bigg]\nonumber\\
&\hspace{2.5cm}+\mathbb P^N_\beta\Big(\gamma^N_\cdot\in Z^c\cap E([0,T],\e)\Big).\label{eq_splitting_proba_with_Z}
\end{align}
The first term in the right-hand side of~\eqref{eq_splitting_proba_with_Z} is typically of size $e^{-cN}$ for some $c>0$ as we shall see. For the decomposition into $Z$ and $Z^c$ to be useful, $Z^c$ must therefore have smaller probability. This is the case by~\eqref{eq_bound_proba_Z_G_H_epsilon_A_sec_3} provided $\epsilon$ is sufficiently small and $A$ sufficiently large: there is $c(\beta)>0$ and $\epsilon_0(\delta)>0$ such that, for each $A>0$:
\begin{equation}
\sup_{0<\epsilon\leq \epsilon_0(\delta)}\limsup_{N\rightarrow\infty}\frac{1}{N}\log\Prob^N_{\beta}\Big(\gamma^N_\cdot\in Z^c\cap E([0,T],\e)\Big) 
\leq 
\max\big\{-\delta^{-1},-c(\beta)A\big\}
.
\label{eq_bound_proba_Z_H_epsilon_A_sec_4} 
\end{equation}
For each $\delta>0$ and each $\epsilon\leq \epsilon_0(\delta)$, Equation~\eqref{eq_splitting_proba_with_Z} thus becomes:
\begin{align}
&\limsup_{N\rightarrow\infty}\frac{1}{N}\log \Prob_{\beta}^N\big(\gamma^N_\cdot\in B_{d_E}(\bar\gamma_\cdot,\zeta)\big)
\nonumber\\
&\hspace{3cm}
\leq 
\max\Big\{\sup_{B_{d_E}(\bar\gamma_\cdot,\zeta)}\big( -J_{H,\epsilon}^\beta\big) + 2\delta + C(H)\epsilon T, -\delta^{-1},-c(\beta) A\Big\}
.
\label{eq_upper_bound_nice_traj_1}
\end{align}
To relate~\eqref{eq_upper_bound_nice_traj_1} to the upper bound in terms of the functionals $J^\beta_H$ appearing in the definition~\eqref{eq_def_rate_functions} of the rate function $I_\beta(\cdot|\gamma^{\mathrm{ref}})$ of Theorem~\ref{theo_large_dev}, we need to know a bit more about the functional $J_{H,\epsilon}^\beta$. Let us momentarily make the following assumption:
\begin{equation}
\text{For each }\epsilon>0 \text{ and }H\in\C, \ \bar\gamma_\cdot\text{ is a point of continuity of the functional }J_{H,\epsilon}^\beta.\tag{$\star$}\label{assumption_continuity}
\end{equation}
Under Assumption~\eqref{assumption_continuity}, there is a modulus of continuity $m^\beta_{H,\epsilon,\bar\gamma_\cdot}(\zeta)\geq 0$ such that:
\begin{align}
\sup_{B_{d_E}(\bar\gamma_\cdot,\zeta)}\big(-J^\beta_{H,\epsilon}\big)\leq - J_{H,\epsilon}^\beta(\bar\gamma_\cdot)+ m^\beta_{H,\epsilon,\bar\gamma_\cdot}(\zeta),\qquad \lim_{\zeta'\rightarrow 0}m^\beta_{H,\epsilon,\bar\gamma_\cdot}(\zeta')
=
0
.
\end{align}
Thus, taking the small $\zeta$ limit in~\eqref{eq_upper_bound_nice_traj_1}, then the limits in $\epsilon,\delta,A$, one finds:
\begin{align}
\limsup_{\zeta\rightarrow0}\limsup_{N\rightarrow\infty}\frac{1}{N}\log \Prob_{\beta}^N\big(\gamma^N_\cdot\in B_{d_E}(\bar\gamma_\cdot,\zeta)\big) 
\leq 
-J_{H}^\beta(\bar\gamma)
,
\end{align}
where we used $\lim_{\epsilon\rightarrow 0}J_{H,\epsilon}^\beta(\bar\gamma_\cdot) = J_{H}^\beta(\bar\gamma_\cdot)$, see Proposition~\ref{prop_preuve_continuite_integrale_avec_v_moins_un} below. Optimising on $H\in\C$ then yields the desired upper bound under Assumption~\eqref{assumption_continuity}:
\begin{align}
\limsup_{\zeta\rightarrow0}\limsup_{N\rightarrow\infty}\frac{1}{N}\log \Prob_{\beta}^N\big(\gamma^N_\cdot\in B_{d_E}(\bar\gamma_\cdot,\zeta)\big) 
\leq 
-\sup_{H\in\C}J_{H}^\beta(\bar\gamma_\cdot)
.
\label{eq_upper_bound_nice_traj_2}
\end{align}
In full generality, however, 
Assumption~\eqref{assumption_continuity} is false: 
the functional $J^\beta_{H,\epsilon}$ is not continuous at $\bar\gamma_\cdot$ for every $H,\epsilon$ without further assumptions on $\bar\gamma_\cdot$. 
This can be seen by taking a uniform in space and time, small enough $H$ and a large $T$, in which case the dominating contribution in the expression~\eqref{eq_def_J_H_epsilon_zeta} of $J_{H,\epsilon}^\beta(\bar\gamma_\cdot)$ comes from the following pole term:
\begin{align}
\frac{1}{2}\int_0^T\sum_{k=1}^4(1/2-e^{-\beta})\big[H(t,L_k(\bar\gamma_t))+H(t,R_k(\bar\gamma_t))\big]\, dt
.
\end{align}
One can check that $L_k,R_k$ are not continuous functionals on $\e$ (this is discussed in Lemma~\ref{lemm_w_k}). 
The motion of the poles is thus responsible for a lack of continuity of $J_{H,\epsilon}^\beta$ on $E([0,T],\e)$, 
preventing Assumption~\eqref{assumption_continuity} from being true in general. 
The fact that the functionals $J_{H,\epsilon}^\beta$ are not continuous is a notable difference from the large deviations for the SSEP with reservoirs, where continuity does hold \cite{Bertini2003}.\\  

For Assumption~\eqref{assumption_continuity} and thus the upper bound~\eqref{eq_upper_bound_nice_traj_2} to hold, 
we therefore impose a further condition on the poles of $\bar\gamma_\cdot$, namely:
\begin{equation}
\text{for almost every }t\in[0,T], \bar\gamma_t\text{ has point-like poles: }R_k(\gamma_t)=L_k(\gamma_t) \text{ for each }1\leq k\leq 4
. 
\label{assumption_pointlike_poles}
\end{equation}
The sufficiency of this condition is stated next and proven in Appendix~\ref{sec_continuity_J_H_eps_ell_H_eps}.
\begin{prop}\label{prop_preuve_continuite_integrale_avec_v_moins_un}
Let $H\in\C$. For $\epsilon>0$, recall the definition~\eqref{eq_def_J_H_epsilon_zeta} of the functional $J^\beta_{H,\epsilon}$. 
Let $E_{pp}([0,T],\e)\subset E([0,T],\e)$ be the subset of trajectories with point-like poles at almost every time. 
Then each $\gamma_\cdot\in E_{pp}([0,T],\e)$ is a point of continuity of $J^\beta_{H,\epsilon}$ for the distance $d_E$ (defined in~\eqref{eq_def_d_E}), 
thus Assumption~\eqref{assumption_continuity} holds at $\gamma_\cdot$. 
In addition, the following convergence result holds on the whole of $E([0,T],\e)$:
\begin{equation}
\forall \gamma_\cdot\in E([0,T],\e),\qquad \lim_{\epsilon\rightarrow0}J_{H,\epsilon}^\beta(\gamma_\cdot) = J_{H}^\beta(\gamma_\cdot).
\label{eq_pointwise_convergence_J_H_epsilon_beta}
\end{equation}
\end{prop}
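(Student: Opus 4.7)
The plan is to decompose $J^\beta_{H,\epsilon}$ into three pieces and establish the required continuity and convergence for each one separately. Writing $J^\beta_{H,\epsilon}(\gamma_\cdot) = V(\gamma_\cdot) + P_\beta(\gamma_\cdot) + L_\epsilon(\gamma_\cdot)$, the \emph{volume} part
\[
V(\gamma_\cdot) := \big<\Gamma_T,H_T\big> - \big<\Gamma^{\text{ref}},H_0\big> - \int_0^T \big<\Gamma_t,\partial_t H_t\big>dt
\]
is $\epsilon$-independent, and likewise for the \emph{pole} part
\[
P_\beta(\gamma_\cdot) := \Big(\tfrac14-\tfrac{e^{-\beta}}{2}\Big)\int_0^T\sum_{k=1}^4\big[H(t,L_k(\gamma_t))+H(t,R_k(\gamma_t))\big]dt.
\]
The remaining $\epsilon$-dependent piece $L_\epsilon$ collects the two line integrals on $\gamma_t(\epsilon)$ featuring ${\bf T}^\epsilon$, $\mathsf{v}^\epsilon$ and $\mathsf{v}^{-1}$.

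Continuity of $V$ on $E([0,T],\e)$ for the Skorokhod-$L^1$ part of $d_E$ is standard since $H\in \C$ is bounded and smooth and $t\mapsto \big<\Gamma_t,H_t\big>$ is of bounded variation. Continuity of $P_\beta$ at $\bar\gamma_\cdot\in E_{pp}([0,T],\e)$ hinges on the fact that, for any $\gamma\in \e$ with $P_k(\gamma)$ reduced to a point, the maps $\gamma\mapsto L_k(\gamma),R_k(\gamma)$ are continuous in Hausdorff distance---this is the content of the auxiliary pole continuity lemma (Lemma~\ref{lemm_continuite_poles}) already invoked in the text. Given $\gamma^n_\cdot\to\bar\gamma_\cdot$ in $d_E$, the convergence $\int_0^T d_{\mathcal H}(\gamma^n_t,\bar\gamma_t)dt\to 0$ gives $d_{\mathcal H}(\gamma^n_t,\bar\gamma_t)\to 0$ in Lebesgue measure on $[0,T]$; the standard every-subsequence-has-a-further-convergent-subsequence argument combined with $\|H\|_\infty<\infty$ and dominated convergence then yields $P_\beta(\gamma^n_\cdot)\to P_\beta(\bar\gamma_\cdot)$ along the full sequence.

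For $L_\epsilon$, the key input is the explicit formula~\eqref{eq_exp_t_epsilon_sans_N_as_integral_sur_un_cadrant}: for $y\in\gamma(\epsilon)$ lying in region $k$, integrating in $w$ reduces the averaged tangent to
\[
{\bf t}^\epsilon(y) = \tfrac{\sqrt 2}{2}\,{\bf b}_{\pi/4-k\pi/2} + \frac{f^k(u+\epsilon/\sqrt 2)-f^k(u-\epsilon/\sqrt 2)}{2\epsilon}\,{\bf b}_{\pi/4-(k-1)\pi/2},\qquad u = y\cdot {\bf b}_{\pi/4-k\pi/2},
\]
which depends on $\gamma$ only through two endpoint values of the $1$-Lipschitz graph $f^k$; these endpoints depend continuously on $\gamma$ in the Hausdorff distance, which is precisely the reason for the $1$-distance $\epsilon$ restriction to the poles. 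Since $\mathsf{v}^\epsilon=\|{\bf t}^\epsilon\|_2\in[1/\sqrt 2,1]$, the objects ${\bf T}^\epsilon$ and $(\mathsf{v}^\epsilon)^2/\mathsf{v}$ are likewise Hausdorff-continuous at fixed $\epsilon$, and by~\eqref{eq_true_convergence_Riemann_sum_for_gamma} the factor $\mathsf{v}^{-1}ds$ is exactly the $1$-norm arclength, against which bounded Hausdorff-continuous integrands produce Hausdorff-continuous functionals. Combined with the $L^1([0,T])$-bound on $|\gamma_t|$ built into $E([0,T],\e)$ and the subsequence/dominated-convergence argument of the previous paragraph, this gives continuity of $L_\epsilon$ on all of $E([0,T],\e)$ for each $\epsilon>0$, and hence of $J^\beta_{H,\epsilon}$ at $\bar\gamma_\cdot$.

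The pointwise convergence $\lim_{\epsilon\to 0}L_\epsilon(\gamma_\cdot) = L_0(\gamma_\cdot)$ (with $L_0$ the line part of $J^\beta_H$), from which~\eqref{eq_pointwise_convergence_J_H_epsilon_beta} follows, reduces to Lebesgue differentiation: for each fixed $t$, ${\bf t}^\epsilon(y)\to{\bf t}(y)$ at almost every $y\in\gamma_t\setminus\cup_k P_k(\gamma_t)$, so $\mathsf{v}^\epsilon\to\mathsf{v}$ and ${\bf T}^\epsilon\to{\bf T}$ a.e., and $\gamma_t(\epsilon)$ increases to $\gamma_t\setminus\cup_k P_k(\gamma_t)$ as $\epsilon\downarrow 0$; the uniform bounds $\|{\bf T}^\epsilon\|_\infty\leq 1$, $\mathsf{v}^\epsilon\in[1/\sqrt 2,1]$, $\mathsf{v}\in[1/\sqrt 2,1]$ and smoothness of $H$ make dominated convergence applicable first in space at each $t$, then in time against the integrable envelope $|\gamma_t|$. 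The main obstacle is the Hausdorff-continuity step of the previous paragraph: one must check that when $\gamma^n\to\bar\gamma$ in $d_{\mathcal H}$ and $y\in\bar\gamma(\epsilon)$, a neighbourhood of $y$ of $1$-size $\epsilon$ is, for $n$ large, still a single-region graph in $\gamma^n$ in the same rotated frame as for $\bar\gamma$, so that the graph representation used to define ${\bf t}^\epsilon$ is consistent; this uses Property~\ref{property_state_space} and the $\epsilon$-distance-to-pole condition in an essential way.
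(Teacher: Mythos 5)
Your decomposition into a volume part $V$, a pole part $P_\beta$, and a line-integral part $L_\epsilon$ is a natural way to organize the argument, and your treatment of $V$, of $P_\beta$, and of the $\epsilon\to 0$ pointwise convergence (Lebesgue differentiation plus dominated convergence against the integrable envelope $|\gamma_t|$) is sound. However, there is a genuine gap in the claim that \emph{$L_\epsilon$ is continuous on all of $E([0,T],\e)$}.

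The gap: the fixed-time functional $\gamma\mapsto\int_{\gamma(\epsilon)}(\cdots)\mathsf v^{-1}ds$ is not Hausdorff-continuous on $\e$, precisely because the \emph{domain} of integration $\gamma(\epsilon)$ depends on the positions of the pole extremities $L_k(\gamma),R_k(\gamma)$, and those are not continuous functionals of $\gamma$ in Hausdorff distance unless pole $k$ of $\gamma$ is a single point (Lemma~\ref{lemm_continuite_poles}). Concretely, take $\gamma$ with a non-degenerate north pole $P_1=[L_1,R_1]$, and let $\gamma^n\to\gamma$ in Hausdorff distance with each $\gamma^n$ having a point-like north pole located above $L_1$. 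Then $\gamma^n(\epsilon)$ includes a piece of the curve within $1$-distance $\epsilon$ of $R_1(\gamma)$ that $\gamma(\epsilon)$ excludes, so the integral picks up an extra contribution of size $\sim\|q^k_{H_t,\epsilon}\|_\infty\cdot|P_1|$ that does not vanish as $n\to\infty$. Your claim that ``bounded Hausdorff-continuous integrands produce Hausdorff-continuous functionals'' would be fine if the domain were fixed, but it is not; you have implicitly treated $\gamma\mapsto\gamma(\epsilon)$ as continuous. The paper explicitly flags this (``We cannot directly use Lemma~\ref{lemm_continuite_implies_continuity_integral} ... since this functional is not continuous on $\e$'') and handles it by inserting the indicator
\[
\mathbf 1\Big\{\max_{1\le k\le 4}\big\{\|L_k(\gamma^n_t)-L_k(\gamma_t)\|_1+\|R_k(\gamma^n_t)-L_k(\gamma_t)\|_1\big\}\le\zeta\Big\}
\]
into the time integral before estimating $|F_{H_t,\epsilon}(\gamma^n_t)-F_{H_t,\epsilon}(\gamma_t)|$, which is only licit because $\bar\gamma_\cdot\in E_{pp}([0,T],\e)$ via Lemma~\ref{lemm_elementary_cv_cont_fction_hausdorff}. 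So the almost-everywhere point-like pole hypothesis is needed for $L_\epsilon$ in exactly the same way as for $P_\beta$, not just for $P_\beta$; once you make the restriction to $E_{pp}$ explicit there, the rest of your proposal goes through and matches the paper's argument.
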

So far, we have proven the following upper bound. If $\bar\gamma_\cdot\in E_{pp}([0,T],\e)$,
\begin{align}
\limsup_{\zeta\rightarrow0}\limsup_{N\rightarrow\infty}\frac{1}{N}\log \Prob_{\beta}^N\big(\gamma^N_\cdot\in B_{d_E}(\bar\gamma_\cdot,\zeta)\big) 
\leq 
-\sup_{H\in\C}J_{H}^\beta(\bar\gamma_\cdot) 
= 
I_\beta(\bar\gamma_\cdot|\gamma^{\mathrm{ref}})
,
\label{eq_upper_bound_pointlike_poles}
\end{align}
with the rate function $I_\beta(\cdot|\gamma^{\mathrm{ref}})$ defined in~\eqref{eq_def_rate_functions}. In the next section, we explain how to extend this bound to trajectories that do not have point-like poles, thus do not satisfy~\eqref{assumption_pointlike_poles}.
\subsubsection{Upper bound around a general trajectory}\label{sec_upper_bound_for_compact_sets}
In Section~\ref{sec_upper_bound_around_trajectory}, we established upper bound large deviations around a trajectory having point-like poles at almost every time (and, for convenience, in the interior of $E([0,T],\e)$, recall~\eqref{eq_assumption_interior_upper_bound}).  
In this section, 
we explain how to estimate the probability of a ball around a trajectory that does not have these properties and prove:
\begin{align}
&\forall \bar\gamma_\cdot \in E([0,T],\e),
\nonumber\\
&\qquad
\limsup_{\zeta\rightarrow0}\limsup_{N\rightarrow\infty}\frac{1}{N}\log\Prob^N_\beta\Big(\gamma^N_\cdot\in B_{d_E}(\bar\gamma_\cdot,\zeta)\cap E([0,T],\e)\Big)
\leq 
-I_\beta(\bar\gamma_\cdot|\gamma^{\mathrm{ref}})
.
\label{eq_general_upper_bound_around_trajectory}
\end{align}
Note the presence of the set $E([0,T],\e)$ in the probability in~\eqref{eq_general_upper_bound_around_trajectory} to account for the fact that we no longer work under the assumption~\ref{eq_assumption_interior_upper_bound} that a ball around $\bar\gamma_\cdot$ is in $E([0,T],\e)$. 
This assumption merely simplified notations.

If $\bar\gamma_\cdot$ has almost always point-like poles,
~\eqref{eq_general_upper_bound_around_trajectory} is just~\eqref{eq_upper_bound_pointlike_poles}. 
When $\bar\gamma_\cdot$ does not have almost always point-like poles, 
one has $I_\beta(\bar\gamma_\cdot|\gamma^{\mathrm{ref}})=+\infty$ by definition. 
Proving~\eqref{eq_general_upper_bound_around_trajectory} thus boils down to proving:
\begin{align}
\forall &\bar\gamma_\cdot \in E([0,T],\e),
\label{eq_general_upper_bound_around_trajectory_bis}\\
&\qquad\bar\gamma_\cdot\notin E_{pp}([0,T],\e)
\quad \Rightarrow\quad 
\lim_{\zeta\rightarrow0}\limsup_{N\rightarrow\infty}\frac{1}{N}\log\Prob^N_\beta\Big(\gamma^N_\cdot\in B_{d_E}(\bar\gamma_\cdot,\zeta)\cap E([0,T],\e)\Big)
=
-\infty
.
\nonumber
\end{align}
To prove~\eqref{eq_general_upper_bound_around_trajectory_bis}, we show that, 
with probability super-exponentially close to $1$, 
microscopic trajectories are close to having almost always point-like poles 
(the precise statement is given in~\eqref{eq_proba_sous_exp_D_n_q} below). 
This is done in a similar spirit to energy estimates for the SSEP \cite{Bertini2009}, 
proving that, when $N$ is large, the proportion of the time interval during which poles are not reduced to a point vanishes. \\

\noindent\textbf{Controlling the pole dynamics.}  
We saw in Proposition~\ref{prop_value_slope_at_poles} that the time integrated slope at the pole only depends on the parameter $\beta$ with probability super-exponentially close to $1$. 
Here, we explain how to use an improved version of this result to define a large enough set $X$ (in fact a sequence of sets), 
on which trajectories will have almost point-like poles most of the time. 
This statement is made precise in Lemma~\ref{lemm_D_q_subset_E_pp} below. 
We then use this sequence of sets $X$~\eqref{eq_general_upper_bound_around_trajectory_bis}. \\
%that the probability of the ball $B_{d_E}(\bar\gamma_\cdot,\zeta)\cap X$ around a trajectory $\bar\gamma_\cdot$ which does not have point-like poles is super-exponentially small, proving~\eqref{eq_general_upper_bound_around_trajectory_bis}.\\

To control the poles, we start by reformulating the statement of Proposition~\ref{prop_value_slope_at_poles} in terms of a bound on the volume beneath a pole (rather than on the slope at the pole). 
For $\gamma\in\Omega$, let $(z_k)_{1\leq k \leq 4} = (z_k(\gamma))_{1\leq k\leq 4}$ denote the extremal coordinates of points in $\gamma$ (see Figure~\ref{fig_volume_under_pole}):
\begin{figure}
\begin{center}
\includegraphics[width=12cm]{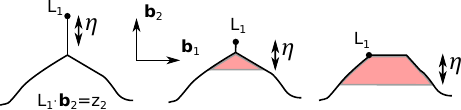} 
\caption{Neighbourhood of the north pole of three different curves. The position of the left extremity $L_1$ of the pole is marked by black dots. The volume $V_{1,\eta}$ at distance $\eta$ beneath the poles is the volume of the red area, which may vanish (left figure) if the ordinate $z_1 = L_1 \cdot {\bf b}_{\pi/2}$ of the north pole is at distance $\eta$ or more from the interior of the droplet associated with the curve. \label{fig_volume_under_pole}}
\end{center}
\end{figure}
\begin{align}
&z_1 = \sup \{ x\cdot{\bf b}_{\pi/2} : x\in\gamma\},\qquad z_3 = \inf \{ x\cdot{\bf b}_{\pi/2} : x\in\gamma\},\nonumber\\
&z_2 = \sup \{ x\cdot{\bf b}_0 : x\in\gamma\},\hspace{1.16cm} z_4 = \inf \{ x\cdot{\bf b}_0 : x\in\gamma\}.\label{eq_def_z_k}
\end{align}
For $\eta>0$ and $1\leq k \leq 4$, define then the volume $V_{k,\eta}$ beneath pole $k$ as follows (recall that $\Gamma$ is the droplet with boundary $\gamma$):
\begin{equation}
V_{k,\eta}(\gamma) = \begin{cases}
\big|\big\{x\in\Gamma : |z_k(\gamma)-x\cdot {\bf b}_{\pi/2}|\leq \eta\big\}\big|\quad \text{if }k\in\{1,3\},\\
\big|\big\{x\in\Gamma : |z_{k}(\gamma)-x\cdot {\bf b}_0|\leq \eta\big\}\big|\quad \text{if }k\in\{2,4\},
\end{cases}\qquad \gamma\in\e.\label{eq_def_V_epsilon_section_grd_devs}
\end{equation}
Compared to the slope, the volume $V_{k,\eta}$ is more robust to changes in the position of the pole: it is not hard to check that $V_{k,\eta}$ is continuous on $\e$ in Hausdorff distance $d_{\mathcal H}$ (see~\eqref{eq_def_distance_Hausroff}).  
Moreover, informally speaking, pole $k$ of a curve is point-like if and only if $V_{k,\eta}$ is of order $\eta^2$ for $\eta$ small, see the proof of Lemma~\ref{lemm_D_q_subset_E_pp}. 
We will therefore prove~\eqref{eq_general_upper_bound_around_trajectory_bis} by showing that the volume beneath a pole is controlled in a sufficiently strong sense.

The basic ingredient is the following bound established in Lemma~\ref{lemm_controle_deviations_volume}: 
for each $q,n\in\N_{\geq 1}$, there is $\eta(q,n)=\eta(q,n,T,\gamma^{\mathrm{ref}})>0$ such that:
\begin{align}
\sup_{\eta\leq \eta(q,n)}\limsup_{N\rightarrow\infty}\frac{1}{N}&\log \Prob^ N_{\beta}\bigg(\gamma^N_\cdot\in E([0,T],\e);\nonumber\\
&\quad\frac{1}{T}\int_{0}^{T}{\bf 1}\Big\{\big|\eta^{-2}V_{k,\eta}(\gamma^N_t)-(e^ {\beta}-1) \big|>\frac{1}{n}\Big\}\, dt>\frac{1}{n}\bigg)
\leq 
-q
.
\label{eq_proba_dev_volume_section_grd_devs}
\end{align}
In words and informally,~\eqref{eq_proba_dev_volume_section_grd_devs} states that it is very unlikely for trajectories to spend longer than $T/n$ without the pole dynamics fixing the volume of a sufficiently small portion beneath each pole. 
Simply by inclusion between the sets in the above probability, $n\mapsto\eta(q,n)$ can be taken to be decreasing. 
Up to reducing $\eta(q,n)$, we may also assume:
\begin{equation}
\forall q\in\N_{\geq 1},\qquad 
\lim_{n\rightarrow\infty}\eta(q,n)
=0
.
\label{eq_eta_vanishes_n_large}
\end{equation}
Define then a set $D_{q,n}$ with the following control of the poles:
\begin{align}
D_{q,n} 
&:= 
E([0,T],\e)\cap \bigcap_{k=1}^4\bigg\{\forall m\in \{1,...,n\},
\label{eq_def_D_p_n}\\ 
&\qquad 
\frac{1}{T}\int_{0}^{T}{\bf 1}\Big\{\Big|\eta(q,m)^{-2}V_{k,\eta(q,m)}(\gamma_t)-(e^ {\beta}-1) \Big|>\frac{1}{m}\Big\}\, dt\leq\frac{1}{m}\bigg\}
.
\nonumber
\end{align}
Since $V_{k,\eta}$ is continuous on $\e$ for the Hausdorff distance and the indicator function of an open set is lower semi-continuous, the set $D_{q,n}$ is closed in $E([0,T],\e)$ for each $q,n\in\N_{\geq 1}$. 
Moreover, by~\eqref{eq_proba_dev_volume_section_grd_devs}:
\begin{equation}
\limsup_{N\rightarrow\infty}\frac{1}{N}\log \Prob^{N}_{\beta}\big(\gamma^N_\cdot\in D_{q,n}^c\cap E([0,T],\e)\big)
\leq 
-q
.
\label{eq_proba_sous_exp_D_n_q}
\end{equation}
By construction, for $q\in\N_{\geq 1}$, $D_{q,n'}\subset D_{q,n}$ if $n\leq n'$. As a result, as $n$ increases, 
poles of trajectories in $D_{q,n}$ are controlled more and more precisely. 
For $q\in\N_{\geq 1}$, define then $D_q$ as:
\begin{equation}
D_q:= \bigcap_{n\geq 1}D_{q,n}.\label{eq_def_D_p}
\end{equation}
As stated in the next lemma, trajectories in each $D_q$, $q\in\N_{\geq 1}$ not only have almost always point-like poles 
(thus satisfy~\eqref{assumption_pointlike_poles}) but also have kinks with slope $e^{-\beta}$ at each pole. 
\begin{lemm}\label{lemm_D_q_subset_E_pp}
Fix $q\in\N_{\geq 1}$ and let $(\gamma_t)_{t\leq T}\in D_q$. Then, for each $k$ with $1\leq k\leq 4$:
\begin{align}
\text{for a.e. }t\in[0,T],\qquad 
\liminf_{\eta\rightarrow 0}\big|\eta^{-2}V_{k,\eta}(\gamma_t)-(e^{\beta}-1)\big|
=
0
.
\end{align}
In particular, $(\gamma_t)_{t\leq T}$ has almost always point-like poles. Note thus, for future reference, that $\gamma_t$ is almost always a Jordan curve.
\end{lemm}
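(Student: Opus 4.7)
The plan is to first establish the liminf statement by a Borel--Cantelli type argument, then deduce the two geometric consequences (point-like poles and the Jordan property) by direct volume comparisons at the poles.

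For the main statement, fix $k\in\{1,\dots,4\}$ and $\gamma_\cdot\in D_q$, and define for each $m\in\N^*$
\begin{equation*}
A_{k,m} := \Big\{t\in[0,T] : \big|\eta(q,m)^{-2}V_{k,\eta(q,m)}(\gamma_t)-(e^\beta-1)\big| > 1/m\Big\}.
\end{equation*}
By definition of $D_{q,n}$ and the fact that $D_q = \bigcap_n D_{q,n}$, the measure $|A_{k,m}|$ is at most $T/m$ for every $m$. Writing $B_k := \bigcup_{M\geq 1}\bigcap_{m\geq M}A_{k,m}$ for the set of times $t$ lying in $A_{k,m}$ for all sufficiently large $m$, each intersection $\bigcap_{m\geq M}A_{k,m}$ has measure at most $\inf_{m\geq M}T/m = 0$, so $|B_k|=0$. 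For any $t\in[0,T]\setminus B_k$, there is an increasing sequence $m_j\to\infty$ with $t\notin A_{k,m_j}$, hence
\begin{equation*}
\big|\eta(q,m_j)^{-2}V_{k,\eta(q,m_j)}(\gamma_t)-(e^\beta-1)\big| \leq 1/m_j \longrightarrow 0.
\end{equation*}
Since $\eta(q,m_j)\to 0$ by~\eqref{eq_eta_vanishes_n_large}, the liminf along $\eta\to 0$ is bounded by $0$, and the first claim follows for a.e. $t\in[0,T]$.

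For the first consequence, suppose by contradiction that at some such $t$, pole $k$ of $\gamma_t$ has positive length $\ell := |R_k(\gamma_t)-L_k(\gamma_t)|>0$. Say $k=1$ for definiteness; then pole $1$ is a horizontal segment at ordinate $z_1$. Since $\gamma_t\in\e$ and the tangent vectors in regions $1$ and $4$ point into opposite horizontal directions, for all $\eta$ small enough the droplet $\Gamma_t$ contains a rectangle of width $\ell - O(\eta)$ and height $\eta$ placed just below pole $1$. Consequently
\begin{equation*}
V_{1,\eta}(\gamma_t) \geq (\ell - O(\eta))\,\eta, \qquad \eta\to 0,
\end{equation*}
so $\eta^{-2}V_{1,\eta}(\gamma_t)\to\infty$, forcing $|\eta^{-2}V_{1,\eta}(\gamma_t)-(e^\beta-1)|\to\infty$. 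This contradicts the liminf property established above, so the pole must be point-like.

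The Jordan claim is handled in the same spirit. By the discussion around Figure~\ref{fig_non_simple_curve}, a curve in $\Omega$ with point-like poles fails to be simple only if one of its poles sits at the extremity of a non-trivial vertical or horizontal segment (a \emph{tab}). If pole $k=1$, say, is point-like but is the tip of a vertical tab of length $h>0$, then the associated droplet $\Gamma_t$ lies entirely below the bottom of this tab, so $V_{1,\eta}(\gamma_t)=0$ for all $\eta<h$, and
\begin{equation*}
\big|\eta^{-2}V_{1,\eta}(\gamma_t)-(e^\beta-1)\big| = e^\beta-1 > 0,\qquad \eta<h,
\end{equation*}
contradicting the liminf being $0$. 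Hence no such tab is present at any pole and $\gamma_t$ is a Jordan curve for a.e. $t\in[0,T]$. The main obstacle is really the second step: it requires carefully identifying how non-point-like poles or tabs at point-like poles force divergence, respectively rigidity, of the normalised volume $V_{k,\eta}/\eta^2$ as $\eta\to 0$, using Property~\ref{property_state_space} to describe the local geometry of $\gamma_t$ near each pole.
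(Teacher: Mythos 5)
Your proof is correct and takes essentially the same route as the paper: you first extract, for a.e.\ $t$, a sequence $\eta(q,m_j)\to 0$ along which $\eta^{-2}V_{k,\eta}(\gamma_t)\to e^{\beta}-1$ (the paper reaches the same conclusion via Fatou's lemma rather than your Borel--Cantelli union, but the underlying measure-theoretic argument is identical), and you then exclude non-point-like poles and tabs by noting that the former makes $\eta^{-2}V_{k,\eta}$ blow up while the latter makes it vanish identically for small $\eta$. The only slip is cosmetic: regions $1$ and $4$ both have tangent with positive ${\bf b}_1$-component, so the boundaries descending from $L_1$ and $R_1$ open \emph{away} from each other rather than pointing in ``opposite horizontal directions''; this is actually what guarantees the rectangle of width $\ell$ sits inside $\Gamma_t$, and your bound goes through unchanged.
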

Before proving the lemma, let us establish the general upper bound~\eqref{eq_general_upper_bound_around_trajectory_bis}. 
For $\zeta>0$, $n,q\in\N_{\geq 1}$ and $\beta>\log 2$, 
write using~\eqref{eq_proba_sous_exp_D_n_q}:
\begin{align}
\limsup_{N\rightarrow\infty}\frac{1}{N}\log\Prob^N_\beta\Big(&\gamma^N_\cdot\in B_{d_E}(\bar\gamma_\cdot,\zeta)\cap E([0,T],\e)\Big)
\nonumber\\
&\qquad\leq 
\max\Big\{\limsup_{N\rightarrow\infty}\frac{1}{N}\log\Prob^N_\beta\Big(\gamma^N_\cdot\in B_{d_E}(\bar\gamma_\cdot,\zeta)\cap D_{q,n}\Big), - q\Big\}
.
\label{eq_proof_general_upper_bound_trajectory_0}
\end{align}
By assumption, $\bar\gamma_\cdot$ does not have point-like poles. It thus does not belong to $D_{q}= \cap_n D_{q,n}$ by Lemma~\ref{lemm_D_q_subset_E_pp}. 
Since $D_{q,n}\subset D_{q,n'}$ for $n\geq n'$, 
there is $n_q\in\N_{\geq 1}$ such that $\bar\gamma_\cdot\notin D_{q,n}$ for each $n\geq n_q$. 
By construction, each $D_{q,n}$ is a closed set. There is thus $\zeta_q>0$ such that:
\begin{align}
\forall \zeta\in(0,\zeta_q),\forall n\geq n_q\qquad B_{d_E}\big(\bar\gamma_\cdot,\zeta\big) \cap D_{q,n} = \emptyset.
\end{align}
Injecting this in~\eqref{eq_proof_general_upper_bound_trajectory_0} proves the upper bound~\eqref{eq_general_upper_bound_around_trajectory_bis}:
\begin{align}
\forall q\in\N_{\geq 1},\forall \zeta\in(0,\zeta_q),\qquad \limsup_{N\rightarrow\infty}\frac{1}{N}\log\Prob^N_\beta\Big(&\gamma^N_\cdot\in B_{d_E}(\bar\gamma_\cdot,\zeta)\cap E([0,T],\e)\Big)
\leq 
-q
.
\end{align}
\begin{rmk}[Upper bound in Theorem~\ref{theo_large_dev_general}]
To obtain~\eqref{eq_general_upper_bound_around_trajectory}--\eqref{eq_general_upper_bound_around_trajectory_bis}, it is nowhere necessary that the trajectory $\bar\gamma_\cdot$ take values in $\e$ 
(i.e. that it be close to $\gamma^{\mathrm{ref}}$ in volume at each time, see Definition~\ref{def_effective_state_space}). 
The only property of $\bar\gamma_\cdot$ that is used is that it satisfies Property~\ref{prop_IC}. 
Rewriting the proof with this more general condition, Equations~\eqref{eq_general_upper_bound_around_trajectory}--\eqref{eq_general_upper_bound_around_trajectory_bis} correspond to the general upper bound in Theorem~\ref{theo_large_dev_general}.
\demo
\end{rmk}
We conclude the section with the proof of Lemma~\ref{lemm_D_q_subset_E_pp}.
\begin{proof}[Proof of Lemma~\ref{lemm_D_q_subset_E_pp}]
Consider the north pole $k=1$, the others being similar. 
Due to Definition~\ref{property_state_space} of $\Omega\supset\e$, 
a curve $\tilde\gamma\in \e$ does not have point-like north pole if and only if there is $c>0$ (the width of the north pole) such that, for any $\eta>0$ smaller than some $\eta_0(\tilde\gamma)$: 
\begin{equation}
V_{1,\eta}(\tilde\gamma)
\geq 
\eta c
.\label{eq_condition_not_pointlike}
\end{equation}
In particular, $\tilde\gamma$ has point-like north pole as soon as:
\begin{equation}
\liminf_{\eta\rightarrow 0}\eta^{-1}V_{1,\eta}(\tilde\gamma) = 0.\label{eq_suff_cond_volume_for_pointlike_pole}
\end{equation}
Fix a trajectory $(\gamma_t)_{t\leq T}\in D_q$ and let $\epsilon>0$. For each integer $n\geq 1/\epsilon$, one has by definition of $D_q$:
\begin{align}
\frac{1}{T}\int_{0}^{T}{\bf 1}\Big\{\big|\eta(q,n)^{-2}V_{1,\eta(q,n)}(\gamma_t)-(e^ {\beta}-1) \big|>\epsilon\Big\}\, dt\leq
\frac{1}{n}
.
\end{align}
Since $\eta(q,n)$ vanishes when $n$ is large by definition (see~\eqref{eq_eta_vanishes_n_large}), this implies:
\begin{align}
\liminf_{\eta\rightarrow 0}\int_{0}^{T}{\bf 1}\big\{\big|\eta^{-2}V_{1,\eta}(\gamma_t)-(e^ {\beta}-1) \big|>\epsilon\big\}\, dt 
=
0
.
\end{align}
Using Fatou inequality, we find:
\begin{align}
\liminf_{\eta\rightarrow 0}\big|\eta^{-2}V_{1,\eta}(\gamma_t)-(e^ {\beta}-1) \big|\leq \epsilon \quad \text{for a.e }t\in[0,T]
.
\end{align}
Since $\epsilon$ is arbitrary, $\gamma_\cdot$ has almost always point-like north pole recalling~\eqref{eq_suff_cond_volume_for_pointlike_pole}. Moreover, the last equation, also valid for pole $k\neq 1$, implies that there are $d_1,d_2>0$ such that, for almost every $t\in[0,T]$, 
there is a sequence $\eta_\ell(t)\in(0,1)$ ($\ell\in\N$) converging to $0$ such that:
\begin{equation}
\forall 1\leq k\leq 4,\forall\ell\in\N, \qquad 
\eta_\ell(t)^2 d_1\leq V_{k,\eta_\ell(t)}(\gamma_t)\leq \eta_\ell(t)^2d_2
.
\end{equation}
For each such time $t$, the poles are point-like by~\eqref{eq_condition_not_pointlike} and at zero distance to the interior of the droplet: $\gamma_t$ is thus a Jordan curve.
\end{proof}
\subsection{Upper bound on compact and closed sets}\label{sec_upper_bound_closed_sets}
Equipped with the bound~\eqref{eq_general_upper_bound_around_trajectory}, let us prove a large deviation bound for compact and closed sets in $E([0,T],\e)$. 
The arguments are classical and reproduced here for completeness.\\ 

\noindent\textbf{Upper bound for compact sets.} Let first $\mathcal{K}\subset E([0,T],\e)$ be a compact set. Let $\eta>0$. By~\eqref{eq_general_upper_bound_around_trajectory}, for each $\bar\gamma_\cdot\in E([0,T],\e)$, there is $\zeta(\bar\gamma_{\cdot})>0$ such that:
\begin{equation}
\limsup_{N\rightarrow\infty}\frac{1}{N}\log \Prob^N_\beta\Big(\gamma^N_\cdot\in B_{d_E}\big(\bar\gamma_\cdot, \zeta(\bar\gamma_{\cdot})\big)\cap E([0,T],\e)\Big)
\leq 
-I_\beta(\bar\gamma_\cdot|\gamma^{\mathrm{ref}})+\eta
.
\end{equation}
Cover the compact set $\mathcal K$ by $k_\eta\in\N_{\geq 1}$ balls $B_{d_E}\big(\bar\gamma^i_\cdot,\zeta^i(\bar\gamma_{\cdot})\big)$, to find:
\begin{align}
\limsup_{N\rightarrow\infty}\frac{1}{N}\log \Prob^N_\beta\Big(\gamma^N_\cdot\in\mathcal K\Big)&\leq \max_{1\leq i \leq k_\eta}\big(-I_\beta(\bar\gamma^i_\cdot|\gamma^{\mathrm{ref}})\big) + \eta\nonumber\\
&\leq -\inf_{\mathcal K}I_\beta(\cdot|\gamma^{\mathrm{ref}})+\eta.
\end{align}
This proves the upper bound for compact sets.\\

\noindent\textbf{Upper bound for closed sets.} The upper bound for closed sets follow from the exponential tightness of $\big(\mathbb{P}^N_{\beta}(\cdot, E([0,T],\e)\big)_N$ in $\mathcal M_1(E([0,T],\e))$, see Lemma 1.2.18 in \cite{Dembo2010}. Establishing exponential tightness is quite technical due to the poles, so we postpone it to Appendix~\ref{appen_tightness} and conclude here the upper bound of Theorem~\ref{theo_large_dev}. 
\section{Lower bound large deviations and hydrodynamic limits}\label{sec_large_dev_lower_bound}
In this section, we prove the lower bound in Theorem~\ref{theo_large_dev}. The method is classical (see \cite[Chapter 10]{Kipnis1999}). 
It consists in using Jensen inequality and an expression of the Radon-Nikodym derivative between the unbiased dynamics and the dynamics with bias $H\in\C$ on a finite time interval to turn the proof of the lower bound into a proof of the hydrodynamic limit for the tilted probability $\Prob^N_{\beta,H}$ (see Section~\ref{subsec_large_dev_lower_bound}). 

The subtlety is that the contour dynamics is only well-controlled inside the effective state space $\e$. 
As basic ingredient to prove hydrodynamics, 
we thus need to know that trajectories under tilted dynamics typically remain in the effective state space $\e$ for short time. 
This is proven in Section~\ref{sec_short_time_stability}. 
We then prove hydrodynamics in short time (Section~\ref{sec_short_time_hydrodynamics}) assuming there is only one solution $\gamma^H_{\cdot}$ to the weak formulation~\eqref{eq_formulation_faible_avec_der_en_temps} with bias $H\in \C$. 

If $\gamma^H_\cdot$ stays in the interior of the effective state space $\e$ on a longer time interval, 
then the hydrodynamic limit can correspondingly be extended to later time. 
This is carried out in Section~\ref{sec_up_to_time_T0}. 
\subsection{A first lower-bound}\label{subsec_large_dev_lower_bound}
In this section, we reduce the proof of the lower bound to the proof of hydrodynamic limits for the tilted processes.  
Recall the definition of the set $\tilde Z(\beta,H,\delta,\epsilon)$ for $\beta>\log 2$, $\epsilon,\delta,T>0$ and $H\in\C$ from Proposition~\ref{prop_action_gen_micro}. 
\begin{prop}\label{prop_lower_bound}
Let $\beta>\log 2$, $T>0$, $H\in\C$ and let $\gamma^H_\cdot\in E_{pp}([0,T],\e)$ solve the weak formulation~\eqref{eq_formulation_faible_avec_der_en_temps} of anisotropic motion by curvature with drift $H$ (uniqueness is not needed here). 
Assume that $\gamma^H_\cdot$ is in the interior of $E([0,T],\e)$:
\begin{equation}
\exists \zeta_0>0,\quad B_{d_E}(\gamma^H_\cdot,\zeta_0)\subset E([0,T],\e),\label{eq_gamma_H_in_interior}
\end{equation}
with $B_{d_E}(\gamma^H_\cdot,\zeta_0)$ the open ball of centre $\gamma^H_\cdot$ and radius $\zeta_0$ in $d_E$-distance. Then:
\begin{align}
&\liminf_{\zeta\rightarrow 0}\liminf_{N\rightarrow\infty}\frac{1}{N}\log\Prob^N_{\beta}\big(\gamma^N_\cdot\in B_{d_E}(\gamma^H_\cdot,\zeta)\big) \geq -I_\beta(\gamma^H_\cdot|\gamma^{\mathrm{ref}})\label{eq_lower_bound_dans_prop_lower_bound} \\
&\hspace{2cm} + \inf_{\delta>0}\liminf_{\epsilon\rightarrow 0}\liminf_{\zeta>0}\liminf_{N\rightarrow\infty}\frac{1}{N}\log\Prob^N_{\beta,H}\Big(\gamma^N_\cdot\in B_{d_E}(\gamma^H_\cdot,\zeta)\cap \tilde Z(\beta,H,\delta,\epsilon)\Big).\nonumber
\end{align}
\end{prop}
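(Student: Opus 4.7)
The plan is to run the standard Kipnis--Olla--Varadhan change-of-measure argument. For $\delta,\epsilon,\zeta>0$, set $A := B_{d_E}(\gamma^H_\cdot,\zeta)\cap \tilde Z(\beta,H,\delta,\epsilon)$. Using $A\subset B_{d_E}(\gamma^H_\cdot,\zeta)$ and the Radon--Nikodym relation, $\Prob^N_\beta(B_{d_E}(\gamma^H_\cdot,\zeta)) \geq \Prob^N_\beta(A) = \E^N_{\beta,H}[(D^N_{\beta,H})^{-1}\mathbf{1}_A]$. Applying Jensen's inequality to $\log$ against the conditional law $\Prob^N_{\beta,H}(\,\cdot\mid A)$ gives
\[
\frac{1}{N}\log \Prob^N_\beta(B_{d_E}(\gamma^H_\cdot,\zeta)) \;\geq\; \frac{1}{N}\log \Prob^N_{\beta,H}(A) \;-\; \frac{1}{N\Prob^N_{\beta,H}(A)}\,\E^N_{\beta,H}\big[\log D^N_{\beta,H}\,\mathbf{1}_A\big].
\]

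On $A \subset \tilde Z\cap E([0,T],\e)$, the identity~\eqref{eq_def_der_radon_nyk_avec_J_eps_ell_eps} together with Proposition~\ref{prop_action_gen_micro} bounds $N^{-1}\log D^N_{\beta,H}(\gamma^N_\cdot)$ from above by $J^\beta_{H,\epsilon}(\gamma^N_\cdot) + 2\delta + C(H)T(\epsilon + 1/N)$; the length term $N^{-2}\int_0^T|\gamma^N_t|\,dt$ is absorbed into $C(H)T/N$, since curves in $\e$ have length bounded by a constant depending only on $\gamma^{\text{ref}}$ (Property~\ref{property_state_space} controls the length by the diameter, and $\e$ is a small volume neighborhood of $\gamma^{\text{ref}}$). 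The crux is then to replace the $\sup$ of $J^\beta_{H,\epsilon}$ over $\tilde\gamma_\cdot \in B_{d_E}(\gamma^H_\cdot,\zeta)$ by the value at $\gamma^H_\cdot$ itself. This is exactly where Proposition~\ref{prop_preuve_continuite_integrale_avec_v_moins_un} applies: since $\gamma^H_\cdot \in \mathcal{A}_{\beta,T} \subset E_{pp}([0,T],\e)$, the functional $J^\beta_{H,\epsilon}$ is continuous at $\gamma^H_\cdot$ in the $d_E$-topology, so for every $\eta>0$ there is $\zeta_0>0$ such that $\sup_{B_{d_E}(\gamma^H_\cdot,\zeta)}J^\beta_{H,\epsilon}\leq J^\beta_{H,\epsilon}(\gamma^H_\cdot) + \eta$ for all $\zeta\leq \zeta_0$.

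Taking successively $N\to\infty$, $\zeta\to 0$, $\epsilon\to 0$ (using the pointwise convergence $J^\beta_{H,\epsilon}(\gamma^H_\cdot)\to J^\beta_H(\gamma^H_\cdot)$ of~\eqref{eq_pointwise_convergence_J_H_epsilon_beta}), and finally $\delta\to 0$, yields exactly the inequality~\eqref{eq_lower_bound_dans_prop_lower_bound} with $J^\beta_H(\gamma^H_\cdot)$ in place of $I_\beta(\gamma^H_\cdot|\gamma^{\text{ref}})$. The identification of the two closes the proof: because $\gamma^H_\cdot$ is a weak solution of~\eqref{eq_formulation_faible_avec_der_en_temps} with drift $H$, testing that formulation against any $H'\in \C$ produces
\[
\ell^\beta_{H'}(\gamma^H_\cdot) \;=\; \int_0^T\!\!\int_{\gamma^H_t} \mu(\theta(s))\, H(t,\gamma^H_t(s))\, H'(t,\gamma^H_t(s))\, ds\, dt,
\]
so that $H'\mapsto J^\beta_{H'}(\gamma^H_\cdot) = \int\mu HH' - \tfrac{1}{2}\int\mu (H')^2$ is a concave quadratic in the $L^2(\mu\, ds\, dt)$ geometry, maximised at $H' = H$ with value $\tfrac{1}{2}\int \mu H^2 = J^\beta_H(\gamma^H_\cdot)$. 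Since $\gamma^H_\cdot \in E_{pp}([0,T],\e)$, this supremum is precisely $I_\beta(\gamma^H_\cdot|\gamma^{\text{ref}})$ by the definition~\eqref{eq_def_rate_functions}.

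The main obstacle is in fact already isolated in Proposition~\ref{prop_preuve_continuite_integrale_avec_v_moins_un}: the pole contribution $(\tfrac{1}{4}-\tfrac{e^{-\beta}}{2})\sum_k H(\cdot,L_k)+H(\cdot,R_k)$ inside $\ell^\beta_{H,\epsilon}$ is a priori not continuous on $\e$ in the $d_E$-topology, because $L_k, R_k$ are not continuous functionals of $\gamma$ when the poles are not point-like. The restriction to $\gamma^H_\cdot \in \mathcal{A}_{\beta,T}\subset E_{pp}([0,T],\e)$, which is built into the hypothesis, is exactly what neutralises this discontinuity and enables the ball-replacement step. The deeper task of showing that the residual correction term on the right-hand side of~\eqref{eq_lower_bound_dans_prop_lower_bound} vanishes, i.e.\ that the typical trajectory under $\Prob^N_{\beta,H}$ is $\gamma^H_\cdot$ (a hydrodynamic limit statement), is what the subsequent sections address.
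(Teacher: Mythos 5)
Your proof follows the same route as the paper's: write $\Prob^N_\beta(B)\geq\Prob^N_\beta(A)=\E^N_{\beta,H}[(D^N_{\beta,H})^{-1}\mathbf{1}_A]$, apply Jensen against $\Prob^N_{\beta,H}(\,\cdot\mid A)$, use Proposition~\ref{prop_action_gen_micro} to control $N^{-1}\log D^N_{\beta,H}$ on $\tilde Z\cap E([0,T],\e)$, invoke continuity of $J^\beta_{H,\epsilon}$ at $\gamma^H_\cdot$ (Proposition~\ref{prop_preuve_continuite_integrale_avec_v_moins_un}) to replace the supremum over the ball by the value at the centre, pass $\epsilon\to0$ via the pointwise convergence~\eqref{eq_pointwise_convergence_J_H_epsilon_beta}, and identify $J^\beta_H(\gamma^H_\cdot) = I_\beta(\gamma^H_\cdot|\gamma^{\text{ref}})$ by completing the square in the weak formulation. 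Your remark about $E_{pp}$ restoring continuity at the poles correctly isolates why the hypothesis $\gamma^H_\cdot\in\mathcal{A}_{\beta,T}$ is needed.

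There is, however, one false step. You assert that ``curves in $\e$ have length bounded by a constant depending only on $\gamma^{\text{ref}}$'' because Property~\ref{property_state_space} controls the length by the diameter and $\e$ is a small volume neighbourhood of $\gamma^{\text{ref}}$. Small $d_{L^1}$-distance does \emph{not} bound the diameter, hence not the length: fix $\gamma^{\text{ref}}$, take a thin vertical rectangle of width $\varepsilon$ and height $h$ attached at the north pole, and let $\varepsilon\to0$ with $h\to\infty$ so that the area $\varepsilon h$ stays below $r_0^2$. The resulting curve still satisfies Property~\ref{property_state_space} (the two vertical sides of the spike are flat stretches of regions $4$ and $1$), so it lies in $\e$, yet $z_1\approx h$ and $|\gamma|\to\infty$. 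Consequently, the $C(H)N^{-2}\int_0^T|\gamma^N_t|\,dt$ error term cannot be absorbed by the argument you give. What the paper actually uses is different and weaker: for trajectories in $B_{d_E}(\gamma^H_\cdot,\zeta)$, the \emph{time-integrated} length is bounded, $\int_0^T|\gamma_t|\,dt\leq 8\zeta + c(H,T)$. This comes from (i) $\int_0^T|\gamma^H_t|\,dt<\infty$ since $\gamma^H_\cdot\in E([0,T],\e)$, (ii) the identity $|\gamma|=2(z_1-z_3)+2(z_2-z_4)$ combined with the $1$-Lipschitz property of each $z_k$ in $d_{\mathcal H}$ (Lemma~\ref{lemm_continuite_poles}), which makes $|\cdot|$ $8$-Lipschitz in $d_{\mathcal H}$, and (iii) the definition of $d_E$, which dominates $\int_0^T d_{\mathcal H}\,dt$. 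This time-integrated bound is exactly what is needed to kill the error as $N\to\infty$; a pointwise bound is neither available nor required.
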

\begin{proof}
Let $H\in\C$ be as in the proposition and $\zeta,\delta,\epsilon>0$. Write for short:
\begin{align}
X := B_{d_E}(\gamma^H_\cdot,\zeta)\cap \tilde Z(\beta,H,\delta,\epsilon).
\end{align}
Then:
\begin{align}
\log \Prob^N_{\beta}\big(\gamma^N_\cdot\in B_{d_E}(\gamma^H_\cdot,\zeta)\big) &\geq \log \Prob^N_{\beta,H}\big(\gamma^N_\cdot\in X\big) \nonumber\\
&= \log\Bigg(\frac{\E^N_{\beta,H}\Big[\big(D^N_{\beta,H}\big)^{-1}{\bf 1}_{\gamma^N_\cdot\in X}\Big]}{\Prob^N_{\beta}\big(\gamma^N_\cdot\in X\big)}\Bigg) + \log \Prob^N_{\beta,H}\big(\gamma^N_\cdot\in X\big).
\end{align}
Jensen inequality applied to the logarithm then yields, dividing by $N$:
\begin{align}
\frac{1}{N}\log \Prob^N_{\beta}\big(\gamma^N_\cdot\in \mathcal O\big) \geq -\frac{\E^N_{\beta,H}\Big[N^{-1}\log D^N_{\beta,H}{\bf 1}_{\gamma^N\in X_\cdot}\Big]}{\Prob^N_{\beta,H}\big(\gamma^N_\cdot\in X\big)}+\frac{1}{N}\log \Prob^N_{\beta,H}\big(\gamma^N_\cdot\in X\big).\label{eq_lower_bound_interm_0}
\end{align}
Taking the limits and infima as in the statement of Proposition~\ref{prop_lower_bound}, the second term already has the desired form. Let us compute the expectation. Notice first that elements of $B_{d_E}(\gamma^H_\cdot,\zeta)$ have well controlled-length. Indeed, $\gamma^H_\cdot \in E([0,T],\e)$ means that its length is integrable in time, so that there is $c(H,T)>0$ with:
\begin{align}
\int_0^T|\gamma^H_t|\, dt\leq c(H,T).
\end{align}
One readily checks that the length of a curve $\gamma\in\e$ in $1$-norm is given in terms of the distance between the poles:
\begin{align}
|\gamma| = 2\big[L_1(\gamma)-L_3(\gamma)\big]\cdot{\bf b}_{\pi/2} + 2\big[L_2(\gamma)-L_4(\gamma)\big]\cdot{\bf b}_0.
\end{align}
Each functional in the right-hand side is $1$-Lipschitz in Hausdorff distance (see below~\eqref{eq_def_z_k_appB}), 
thus:
\begin{equation}
\forall \gamma_\cdot\in B_{d_E}(\gamma^H_\cdot,\zeta),\qquad \int_0^{T}|\gamma_t|\, dt 
\leq 
8\int_{0}^{T}d_{\mathcal H}(\gamma^H_t,\gamma_t)\, dt 
+ c(H,T)\leq 8\zeta + c(H,T)
.
 \label{eq_bound_length_dans_boule}
\end{equation}
Recall now the formula~\eqref{eq_def_der_radon_nyk_avec_J_eps_ell_eps} for $N^{-1}\log D^N_{\beta,H}$: for $\gamma^N_\cdot\in E([0,T],\e)$,
\begin{align}
N^{-1}\log D^N_{\beta,H}(\gamma^N_\cdot) = - J_{H,\epsilon}^{\beta}(\gamma^N_\cdot) + \int_0^{T} \omega(H_t,\delta,\epsilon,\gamma^N_t)\, dt
.
\end{align}
There is moreover $C(H)>0$ such that, for each $A>0$, on the set $\tilde Z(\beta,H,\delta,\epsilon)\cap E([0,T],\e)\cap \{\int_0^{T}|\gamma^N_t|\, dt\leq AT\}$, the quantity $\omega$ satisfies:
\begin{align}
\Big|\int_0^{T}\omega(H_t,\delta,\epsilon,\gamma^N_t)\, dt\Big|\leq 2\delta + C(H)T\Big(\epsilon  + \frac{A+1}{N}\Big).
\end{align}
In view of the bound~\eqref{eq_bound_length_dans_boule} on the length, 
the above bound on $\omega$ is valid on $X$. 
Taking the $\liminf$ in $N$, the expectation in~\eqref{eq_lower_bound_interm_0} is thus bounded from below as follows:
\begin{equation}
\liminf_{N\rightarrow\infty}-\frac{\E^N_{\beta,H}\Big[N^{-1}\log D^N_{\beta,H}{\bf 1}_{\gamma^N_\cdot\in X}\Big]}{\Prob^N_{\beta,H}\big(\gamma^N_\cdot\in X\big)}
\geq 
\liminf_{N\rightarrow\infty}\frac{\E^N_{\beta,H}\Big[\big(-J_{H,\epsilon}^\beta\big){\bf 1}_{\gamma^N_\cdot\in X}\Big]}{\Prob^N_{\beta,H}\big(\gamma^N_\cdot\in X\big)} - 2\delta  -C(H)\epsilon T.\label{eq_lower_bound_interm_1}
\end{equation}
Since $\gamma^H_\cdot\in E_{pp}([0,T],\e)$, it is a point of continuity of $J_{H,\epsilon}^\beta$ by Proposition~\ref{prop_preuve_continuite_integrale_avec_v_moins_un}. There is consequently a real function $m_{\beta,\gamma^H_\cdot,H,\epsilon}(\cdot)\geq 0$ such that:
\begin{align}
\sup_{\gamma_\cdot\in B_{d_E}(\gamma^H_\cdot,\zeta)}\big|J_{H,\epsilon}^\beta(\gamma_\cdot)-J^\beta_{H,\epsilon}(\gamma^H_\cdot)\big| = m_{\beta,\gamma^H_\cdot,H,\epsilon}(\zeta),
\qquad 
\lim_{\zeta\rightarrow 0}m_{\beta,\gamma^H_\cdot,H,\epsilon}(\zeta) 
= 0
.
\end{align}
As $X\subset B_{d_E}(\gamma^H_\cdot,\zeta)$, we deduce:
\begin{align}
\liminf_{\zeta\rightarrow 0}\liminf_{N\rightarrow\infty}-\frac{\E^N_{\beta,H}\Big[N^{-1}\log D^N_{\beta,H}{\bf 1}_{\gamma^N_\cdot\in X}\Big]}{\Prob^N_{\beta,H}\big(\gamma^N_\cdot\in X\big)}\geq -J_{H,\epsilon}^\beta(\gamma^H_\cdot)-2\delta - C(H)\epsilon T.
\end{align}
By Proposition~\ref{prop_preuve_continuite_integrale_avec_v_moins_un}, $J_{H,\epsilon}^\beta(\gamma^H_\cdot)$ converges to $J_{H}^\beta(\gamma^H_\cdot)$ when $\epsilon$ vanishes. Taking the liminf in $\epsilon$, then the infimum on $\delta$ in the last equation thus turns its right-hand side into:
\begin{align}
\inf_{\delta>0}\liminf_{\epsilon\rightarrow 0}\Big\{-J_{H,\epsilon}^\beta(\gamma^H_\cdot)-2\delta - C(H)\epsilon T\Big\} = -J_{H}^\beta(\gamma^H_\cdot).
\end{align}
To establish the claim of Proposition~\ref{prop_lower_bound}, it only remains to prove that $J^\beta_{H}(\gamma^{H}_{\cdot}) = I_\beta(\gamma^H_\cdot|\gamma^{\mathrm{ref}})$. For $G\in\C$, recall the definition~\eqref{eq_def_J_H} of $J^\beta_G$:
\begin{align}
J^\beta_G(\gamma^H_\cdot) = \ell_G(\gamma^H_\cdot) -\frac{1}{2}\int_0^T\int_{\gamma^H_t}G^2\big(t,\gamma^H_t(s))\mu(\theta(s)\big)\, ds\, dt,
\end{align}
where $\ell^\beta_G$ is the functional defined in~\eqref{eq_def_ell_H}. 
Above, recall that, for a point $\gamma^H_t(s)$, $\theta(s)$ is the angle of the tangent vector ${\bf T}(\theta(s)) = \cos(\theta(s)){\bf b}_0 + \sin(\theta(s)){\bf b}_{\pi/2}$ with the horizontal axis at $\gamma^H_t(s)$. \\
From the weak formulation~\eqref{eq_formulation_faible_avec_der_en_temps} of anisotropic motion by curvature, one has, for each $G\in\C$:
\begin{align}
\ell^\beta_G(\gamma^H_\cdot) 
= 
\int_0^T\int_{\gamma^H_t}G(t,\gamma^H_t(s))H\big(t,\gamma^H_t(s)\big)\mu(\theta(s))\, ds\, dt
.
\end{align}
As a result, 
\begin{align}
&I_\beta(\gamma^H_\cdot|\gamma^{\mathrm{ref}}) 
:= 
\sup_{G\in C}J^\beta_G(\gamma^H_\cdot) 
\nonumber\\
&\ = \frac{1}{2}\int_0^T\int_{\gamma^H_t}H^2\big(t,\gamma^H_t(s)\big)\mu(\theta(s))\, ds\, dt + \sup_{G\in\C}\bigg\{- \frac{1}{2}\int_0^T\int_{\gamma^H_t}\big[G-H\big]^2\big(t,\gamma^H_t(s)\big)\mu(\theta(s))\, ds\, dt\bigg\} \nonumber\\
&\ 
= \frac{1}{2}\int_0^T\int_{\gamma^H_t}H^2\big(t,\gamma^H_t(s)\big)\mu(\theta(s))\, ds\, dt  
= 
J^\beta_H(\gamma^H_\cdot)
.
\end{align}
This concludes the proof of Proposition~\ref{prop_lower_bound}.
\end{proof}
\subsection{Hydrodynamic limits for the tilted processes}\label{sec_hydro}
In this section, we prove hydrodynamics for $\Prob^N_{\beta,H}$ ($H\in\C$) under a uniqueness condition, thereby proving lower bound large deviations by showing that the probability in Proposition~\ref{prop_lower_bound} vanishes.

Fix $\beta>\log 2$ and $H\in\C$ throughout the section. 
Let $T>0$ and $\gamma^H_{\cdot}\in E_{pp}([0,T],\e)$ solve the weak formulation~\eqref{eq_formulation_faible_avec_der_en_temps} of anisotropic motion by curvature with drift $H$ on $[0,T]$.  Assume as in Proposition~\ref{prop_lower_bound} that $\gamma^H_{\cdot}$ is in the interior of $E([0,T],\e)$: 
for some $\zeta_0>0$,
\begin{equation}
B_{d_E}(\gamma^H_{\cdot},\zeta_0)\subset E([0,T],\e)
.
\end{equation}
\begin{prop}\label{prop_control_proba_lower_bound}
Under the above assumptions,
\begin{align}
\inf_{\delta>0}\liminf_{\epsilon\rightarrow 0}\liminf_{\zeta>0}\liminf_{N\rightarrow\infty}\frac{1}{N}\log\Prob^N_{\beta,H}\Big(\gamma^N_\cdot\in B_{d_E}(\gamma^H_\cdot,\zeta)\cap \tilde Z(\beta,H,\delta,\epsilon)\Big) = 0.
\end{align}
\end{prop}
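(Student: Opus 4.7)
The plan is to show the probability inside the limits actually tends to $1$ as $N\to\infty$ for each small enough fixed $\zeta,\delta,\epsilon>0$. Since a probability is bounded above by $1$, its $N^{-1}\log$ is $\leq 0$, and whenever $P_N\to 1$ one has $N^{-1}\log P_N\to 0$; the outer infima and liminfs will then preserve this value. So the whole point reduces to bounding the complementary bad event.

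I would decompose
\begin{align*}
1-\Prob^N_{\beta,H}\bigl(\gamma^N_\cdot\in B_{d_E}(\gamma^H_\cdot,\zeta)\cap \tilde Z(\beta,H,\delta,\epsilon)\bigr)
\leq \Prob^N_{\beta,H}\bigl(\gamma^N_\cdot\notin B_{d_E}(\gamma^H_\cdot,\zeta)\bigr)+\Prob^N_{\beta,H}\bigl(\gamma^N_\cdot\in\tilde Z^c\bigr),
\end{align*}
and treat the two pieces separately. The first piece vanishes by the hydrodynamic limit: under the hypotheses of Proposition~\ref{prop_lower_bound} (namely $\gamma^H_\cdot\in\mathcal A_{\beta,T}$ and the interior condition~\eqref{eq_gamma_H_in_interior}), the assumptions of Remark~\ref{rmk_longer_time} are satisfied, so $(\Prob^N_{\beta,H})_N$ converges weakly in $d_E$ to $\delta_{\gamma^H_\cdot}$ on the full interval $[0,T]$. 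Hence for each fixed $\zeta>0$ the probability of the open complement $\{\gamma^N_\cdot\notin B_{d_E}(\gamma^H_\cdot,\zeta)\}$ tends to $0$ as $N\to\infty$.

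For the second piece I would introduce a length cutoff $A>0$ and split
\begin{align*}
\Prob^N_{\beta,H}\bigl(\gamma^N_\cdot\in\tilde Z^c\bigr)
&\leq \Prob^N_{\beta,H}\Bigl(\gamma^N_\cdot\in\tilde Z^c\cap E([0,T],\e)\cap\{\textstyle\int_0^T|\gamma^N_t|dt\leq AT\}\Bigr)\\
&\quad + \Prob^N_{\beta,H}\bigl(\gamma^N_\cdot\notin E([0,T],\e)\bigr) + \Prob^N_{\beta,H}\bigl(\textstyle\int_0^T|\gamma^N_t|dt>AT\bigr).
\end{align*}
The first term is super-exponentially small by the estimate in Proposition~\ref{prop_action_gen_micro}, which gives $\limsup_{\epsilon\to 0}\limsup_{N\to\infty}N^{-1}\log$ equal to $-\infty$; the second is handled by the same hydrodynamic limit as above, which concentrates on $\gamma^H_\cdot\in E([0,T],\e)$; the third is controlled, uniformly in large $N$, by the exponential length bound of item 1 in Proposition~\ref{prop_short_time_existence_of_weak_solution}, which can be made arbitrarily small by taking $A$ large. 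Combining these three estimates, $\Prob^N_{\beta,H}(\tilde Z^c)\to 0$ for each fixed small $\epsilon$ and each $\delta>0$, after which $P_N\to 1$ and the conclusion follows.

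The main obstacle is not in assembling the above inequalities but in justifying the extension of the hydrodynamic limit from the short time $t_0$ of Proposition~\ref{prop_limite_hydro} to the full interval $[0,T]$. This extension, announced in Remark~\ref{rmk_longer_time} and carried out in Section~\ref{sec_up_to_time_T0}, proceeds by iterating the short-time hydrodynamic result: one partitions $[0,T]$ into finitely many subintervals of length at most $t_0$ and restarts the short-time argument around each intermediate configuration $\gamma^H_{t_i}$, using the uniqueness encoded in $\gamma^H_\cdot\in\mathcal A_{\beta,T}$ to patch together the limits and the interior condition~\eqref{eq_gamma_H_in_interior} to ensure each restart is legitimate (i.e.\ the analog of Property~\ref{prop_IC} holds at $\gamma^H_{t_i}$). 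Once this iteration is available, the proof of Proposition~\ref{prop_control_proba_lower_bound} is a routine union bound as described above.
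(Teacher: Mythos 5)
Your proposal is correct and follows essentially the same route as the paper: the probability is shown to tend to $1$ by combining (i) the hydrodynamic limit for the tilted dynamics, established first in short time via the stability of $\e$ and then extended to $[0,T]$ by the iteration of Section~\ref{sec_up_to_time_T0}, with (ii) the super-exponential estimate of Proposition~\ref{prop_action_gen_micro} to discard $\tilde Z^c$ (the paper packages this second step as Lemma~\ref{lemm_getting_rid_of_tilde_Z}, removing $\tilde Z$ by a change of measure back to $\Prob^N_\beta$ and the Radon--Nikodym bound, but this is only a cosmetic difference from your direct union bound). The only micro-adjustment worth noting is that the length tail bound under $\Prob^N_{\beta,H}$ is proved on $E([0,T],\e)$, so the length event in your decomposition should be intersected with $E([0,T],\e)$ — which costs nothing since you already split off $E([0,T],\e)^c$ separately.
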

The proof of Proposition~\ref{prop_control_proba_lower_bound} takes up Sections~\ref{sec_short_time_stability} to~\ref{sec_up_to_time_T0}. We proceed as follows. First, we get rid of the technical condition that trajectories belong to $\tilde Z(\beta,H,\delta,\epsilon)$, in Lemma~\ref{lemm_getting_rid_of_tilde_Z}. 
We then prove that trajectories typically stay in $E([0,T],\e)$ for sufficiently small time $T>0$, in Section~\ref{sec_short_time_stability}, 
thereby proving the first item of Proposition~\ref{prop_short_time_existence_of_weak_solution}. 
This result is used in Section~\ref{sec_short_time_hydrodynamics} to establish short time hydrodynamics for the tilted processes. 
This proves Proposition~\ref{prop_control_proba_lower_bound} for short time. 
Finally, in Section~\ref{sec_up_to_time_T0}, we extend the short time hydrodynamics to the whole time interval $[0,T]$, concluding the proof of Proposition~\ref{prop_control_proba_lower_bound}.\\

Let us first deal with $\tilde Z(\beta,H,\delta,\epsilon)$.
\begin{lemm}\label{lemm_getting_rid_of_tilde_Z}
With the notations of Proposition~\ref{prop_lower_bound},
\begin{align}
\inf_{\delta>0}\liminf_{\epsilon\rightarrow 0}\liminf_{\zeta>0}&\liminf_{N\rightarrow\infty}\frac{1}{N}\log\Prob^N_{\beta,H}\Big(\gamma^N_\cdot\in B_{d_E}(\gamma^H_\cdot,\zeta)\cap \tilde Z(\beta,H,\delta,\epsilon)\Big)\nonumber\\
&\qquad = \liminf_{\zeta>0}\liminf_{N\rightarrow\infty}\frac{1}{N}\log\Prob^N_{\beta,H}\Big(\gamma^N_\cdot\in B_{d_E}(\gamma^H_\cdot,\zeta)\Big).
\end{align}
\end{lemm}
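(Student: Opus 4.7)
The plan is to establish the two inequalities separately. The ``$\leq$'' direction is immediate from set-monotonicity: for every $\delta,\epsilon,\zeta,N$ one has $B_{d_E}(\gamma^H_\cdot,\zeta) \cap \tilde Z(\beta,H,\delta,\epsilon) \subset B_{d_E}(\gamma^H_\cdot,\zeta)$, and the resulting inequality between probabilities survives each $\liminf$ and the infimum over $\delta$.

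For the ``$\geq$'' direction, I will invoke the super-exponential estimate from Proposition~\ref{prop_action_gen_micro}. The key preliminary observation is that the length bound~\eqref{eq_bound_length_dans_boule} derived in the proof of Proposition~\ref{prop_lower_bound}, together with assumption~\eqref{eq_gamma_H_in_interior}, furnishes a constant $A_0 = A_0(H,T)$ \emph{independent of $\zeta$} such that, for $\zeta$ smaller than some $\zeta_0$, every trajectory in $B_{d_E}(\gamma^H_\cdot,\zeta)$ belongs to $E([0,T],\e)$ with $\int_0^T|\gamma_t|\,dt \leq A_0 T$. Applying Proposition~\ref{prop_action_gen_micro} with $A = A_0$ then yields, for each $\delta>0$ and each $\zeta \leq \zeta_0$,
\[
\limsup_{\epsilon \to 0}\limsup_{N \to \infty}\frac{1}{N}\log\Prob^N_{\beta,H}\big(\gamma^N_\cdot \in B_{d_E}(\gamma^H_\cdot,\zeta)\cap \tilde Z^c(\beta,H,\delta,\epsilon)\big) = -\infty.
\]

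The conclusion follows by a standard $\eta$-approximation. Setting $L := \liminf_{\zeta\to 0}\liminf_N \frac{1}{N}\log\Prob^N_{\beta,H}(B_{d_E}(\gamma^H_\cdot,\zeta))$ and assuming $L > -\infty$ (otherwise trivial), the monotonicity of $\zeta \mapsto \Prob^N_{\beta,H}(B_{d_E}(\gamma^H_\cdot,\zeta))$ gives $\Prob^N_{\beta,H}(B_{d_E}(\gamma^H_\cdot,\zeta)) \geq e^{N(L-\eta)}$ for $\zeta \leq \zeta_0$ and all $N$ large. Picking $K > |L|+1$ and then $\epsilon_0 = \epsilon_0(K,\delta)$ small enough that the above estimate forces $\Prob^N_{\beta,H}(\,\cdot\cap \tilde Z^c\,) \leq e^{-NK}$ for $\epsilon\leq \epsilon_0$ and $N$ large, the identity $\Prob(B \cap \tilde Z) = \Prob(B) - \Prob(B\cap \tilde Z^c)$ produces $\Prob^N_{\beta,H}(B_{d_E}(\gamma^H_\cdot,\zeta)\cap \tilde Z) \geq \tfrac{1}{2}e^{N(L-\eta)}$. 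Applying successively $\liminf_N$, $\liminf_\zeta$, $\liminf_\epsilon$, $\inf_\delta$, and then $\eta \to 0$ yields the required lower bound. The main subtlety is the bookkeeping of limit orders: the threshold $\epsilon_0$ depends on $K$ (hence on $L$) but must be chosen independently of $\zeta$ so that $\liminf_\zeta$ can be taken before $\liminf_\epsilon$ as the statement requires; this uniformity is exactly what the $\zeta$-independent length bound $A_0$ secures, by letting the parameter $A$ in Proposition~\ref{prop_action_gen_micro} be fixed once and for all.
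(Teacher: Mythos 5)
Your proposal is correct and reaches the same conclusion, but takes a slightly shorter route than the paper. Both proofs hinge on the identity $\Prob^N_{\beta,H}(B\cap\tilde Z)=\Prob^N_{\beta,H}(B)-\Prob^N_{\beta,H}(B\cap\tilde Z^c)$ together with the $\zeta$-uniform length bound~\eqref{eq_bound_length_dans_boule}, which fixes a constant $A_0$ so that Proposition~\ref{prop_action_gen_micro} applies with $A=A_0$ independently of $\zeta\leq\zeta_0$. The difference is in how the remainder $\Prob^N_{\beta,H}(B\cap\tilde Z^c)$ is killed: you invoke the super-exponential estimate of Proposition~\ref{prop_action_gen_micro} \emph{directly} under $\Prob^N_{\beta,H}$ (which is indeed how that proposition states it), and then run a clean $\eta$-approximation to pass the estimate through the stacked limits. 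The paper instead writes $\Prob^N_{\beta,H}(B\cap\tilde Z^c)=\E^N_\beta[D^N_{\beta,H}\mathbf 1_{B\cap\tilde Z^c}]$, bounds the Radon--Nikodym derivative via Corollary~\ref{coro_bound_RD}, and applies the super-exponential estimate under $\Prob^N_\beta$; this buys nothing here and simply reroutes through the unbiased measure. Your version bypasses the change of measure entirely and is arguably cleaner — the one thing you correctly identify and handle is the uniformity in $\zeta$ of the threshold $\epsilon_0$, which is exactly what the $\zeta$-independent $A_0$ secures.
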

\begin{proof}
Write first:
\begin{align}
&\Prob^N_{\beta,H}\Big(\gamma^N_\cdot\in B_{d_E}(\gamma^H_\cdot,\zeta)\cap \tilde Z(\beta,H,\delta,\epsilon)\Big) \nonumber\\
&\hspace{2cm}= \Prob^N_{\beta,H}\Big(\gamma^N_\cdot\in  B_{d_E}(\gamma^H_\cdot,\zeta)\Big) - \E^N_{\beta}\Big[D^N_{\beta,H}{\bf 1}_{\gamma^N_\cdot\in B_{d_E}(\gamma^H_\cdot,\zeta)\cap (\tilde Z(\beta,H,\delta,\epsilon))^c}\Big]
\end{align}
Corollary~\ref{coro_bound_RD} bounds the Radon-Nikodym derivative for trajectories in $E([0,T],\e)$ as follows. There is $C(H)>0$ such that:
\begin{align}
\forall \gamma^N_\cdot\in E([0,T],\e),\qquad D^N_{\beta,H}(\gamma^N_\cdot)\leq \exp\bigg[C(H)N+C(H)\int_0^{T}|\gamma^N_t|\, dt\bigg].
\label{eq_bound_RD_sec5}
\end{align}
Since $B_{d_E}(\gamma^H_\cdot,\zeta)\subset E([0,T],\e)$ for small enough $\zeta$ by Assumption~\eqref{eq_gamma_H_in_interior}, the above bound is valid for microscopic trajectories in $B_{d_E}(\gamma^H_\cdot,\zeta)$.

For trajectories in $B_{d_E}(\gamma^H_\cdot,\zeta)$, the time integral of the length in~\eqref{eq_bound_RD_sec5} is bounded by $c(H,T) + 8\zeta$, see~\eqref{eq_bound_length_dans_boule}. As a result, for $\delta,\epsilon>0$:
\begin{align}
&\Big|\Prob^N_{\beta,H}\Big(\gamma^N_\cdot\in B_{d_E}(\gamma^H_\cdot,\zeta)\cap \tilde Z(\beta,H,\delta,\epsilon)\Big) - \Prob^N_{\beta,H}\Big(\gamma^N_\cdot\in B_{d_E}(\gamma^H_\cdot,\zeta)\Big)\Big|\nonumber\\
&\hspace{2.0cm}\leq e^{C(H)N+C(H)N(c(H,T)+8\zeta)}\Prob^N_\beta\Big(\gamma^N_\cdot\in B_{d_E}(\gamma^H_\cdot,\zeta) \cap \big(\tilde Z(\beta,H,\delta,\epsilon)\big)^c\Big).\label{eq_lemm_delete_tilde_Z_0}
\end{align}
Moreover, we know by Proposition~\ref{prop_action_gen_micro} that, for each $A>0$:
\begin{align}
\lim_{\epsilon\rightarrow0}\limsup_{N\rightarrow\infty}\frac{1}{N}\log\Prob^N_\beta\Big(\gamma^N_\cdot\in E([0,T],\e)\cap \Big\{\int_0^{T}|\gamma_t|\, dt\leq AT\Big\}
\cap 
\big(\tilde Z(\beta,H,\delta,\epsilon)\big)^c\Big) 
= 
-\infty
.
\end{align}
Since Assumption~\eqref{eq_gamma_H_in_interior} and the bound~\eqref{eq_bound_length_dans_boule} on the length imply $B_{d_E}(\gamma^H_\cdot,\zeta)\subset E([0,T],\e)\cap \big\{\int_0^{T}|\gamma_t|\, dt\leq AT\big\}$ for any $AT\geq c(H,T)+8\zeta$, 
the above bound applies to estimate the right-hand side of~\eqref{eq_lemm_delete_tilde_Z_0}. 
Taking the logarithm, dividing by $N$ and taking the liminf in $N$, then in $\zeta$, then in $\epsilon$ in~\eqref{eq_lemm_delete_tilde_Z_0} yields the claim of the lemma.
\end{proof}
\subsubsection{The droplet moves on a diffusive scale}\label{sec_short_time_stability}
In this section, we prove that trajectories typically stay in the effective state space $\e$ on a short diffusive time scale, corresponding to item $2$ of Proposition~\ref{prop_short_time_existence_of_weak_solution}. Recall the convention that, for two interfaces $\gamma,\tilde\gamma\in\Omega$ with associated droplets $\Gamma,\tilde\Gamma$:
\begin{equation}
d_{L^1}(\gamma,\tilde\gamma) 
:= 
d_{L^1}(\Gamma,\tilde\Gamma),\quad \text{with}\quad d_{L^1}(\Gamma,\tilde\Gamma) = \int_{\R^2}|{\bf 1}_{\Gamma}-{\bf 1}_{\tilde\Gamma} |\, du\, dv.\label{eq_convention_volume_distance_sec_521}
\end{equation}
\begin{lemm}[Short-time stability of $\e$]\label{lemm_e_typical_short_time}
Recall that the effective state space $\e = \overline{B_{d_{L^1}}(\gamma^{\mathrm{ref}},r_0^2)}$ is defined in Definition~\ref{def_effective_state_space}. 
For each $\epsilon<r_0$, there is a time $t(\epsilon) = t(\beta,H,\epsilon,|\gamma^{\mathrm{ref}}|)>0$, 
independent of $\gamma^{\mathrm{ref}}$ except through its length, such that:
\begin{equation}
\lim_{N\rightarrow\infty}\Prob^N_{\beta,H}\Big(\sup_{t\leq  t(\epsilon)}d_{L^1}\big(\gamma^N_t,\gamma^{\mathrm{ref}}\big)\leq \epsilon^2\Big) 
= 
1
.\label{eq_short_time_stability_partant_de_Gamma_0}
\end{equation}
For other initial conditions, 
for any $\tilde r\in(0,r_0)$ and each $\kappa$ larger than some $\kappa(H,\gamma^{\mathrm{ref}})>0$, 
there is a time $t_0~:=~t_0(\beta,H,\tilde r,\kappa)>0$ such that:
\begin{equation}
\lim_{N\rightarrow\infty}\inf_{\substack{\gamma^N\in \Omega^N_{\text{mic}}:|\gamma^N|\leq \kappa \\ d_{L^1}(\gamma^{\mathrm{ref}},\gamma^N)\leq \tilde r^2}}\Prob^{\gamma^N}_{\beta,H}\Big(\gamma^N_\cdot\in  E\big([0,t_0],\e\big)\Big) = 1.\label{eq_stability_e_voisinage_Gamma_0}
\end{equation}
\end{lemm}
\begin{proof}
The proof of~\eqref{eq_short_time_stability_partant_de_Gamma_0} is similar to the proof of the same statement in \cite{Caputo2011} for the stochastic Ising model. In both cases, the idea is that changing the volume of the droplet requires adding or deleting a number of blocks of order $N^2$, which takes time. The additional difficulty in the present case comes from the pole dynamics: droplets can grow.

To deal with growth, we prove in Lemma~\ref{lemm_tightness_sup_length} that, under $\Prob^N_{\beta,H}$, the length of a curve typically stays of order $N$ on a diffusive time-scale.  
More precisely, for each $T>0$, 
there is $C(\beta,H,T)>0$ such that, for each $A>0$:
\begin{align}
\limsup_{N\rightarrow\infty}\frac{1}{N}\log\Prob^N_{\beta,H}\Big(\gamma^N_\cdot\in E([0,T],\e)\cap\Big\{&\sup_{t\leq T}|\gamma^N_t|\geq A\Big\}\Big)
\nonumber\\
&\qquad 
\leq 
-C(\beta,H,T)A+|\gamma^{\mathrm{ref}}|\beta
.
\label{eq_control_length_lemm_e_typical}
\end{align}
In the following, $\kappa_H>0$ is a constant such that the right-hand side of~\eqref{eq_control_length_lemm_e_typical} is strictly negative.

For trajectories with length bounded by $\kappa_H$, we will be able to use the following result. 
Recall the convention~\eqref{eq_convention_volume_distance_sec_521} that the volume distance between two curves is the volume distance between their respective droplets.
\begin{lemm}\label{lemm_volume_diff_as_test_function}
Let $\epsilon>0$. 
There are functions $J_1,J_2\in C^2_c(\R^2,[0,1])$, 
with gradient bounded in terms of $\epsilon,|\gamma^{\mathrm{ref}}|$ only,  
such that for any curve $\gamma\in\e$ with $d_{L^1}(\gamma,\gamma^{\mathrm{ref}})\geq \epsilon^2$ the following holds:
\begin{align}
\max_{i\in\{1,2\}}\big|\big<\Gamma,J_i\big>-\big<\Gamma^{\mathrm{ref}},J_i\big>\big|
\geq 
\epsilon^2/4,
\quad \text{where}\quad \big<\Gamma,J_i\big>:=\int_{\Gamma}J_i(u,v)\, du\, dv
.
\end{align}
\end{lemm}
Let us momentarily admit Lemma~\ref{lemm_volume_diff_as_test_function}, established at the end of the section, and prove~\eqref{eq_short_time_stability_partant_de_Gamma_0}. 
Let $\epsilon>0$, let $t^{\,\text{st}}_\epsilon$ denote the first time $t\geq 0$ such that:
\begin{align}
d_{L^1}\big(\gamma^N_{t},\gamma^{\mathrm{ref}}\big)
\geq 
\epsilon^2
.
\end{align}
Take $\epsilon\in(0,r_0)$ so that trajectories take values in $\e$ at least until time $t^{\,\text{st}}_\epsilon$. 
Introduce the dynamics $\Prob^{N,st}_{\beta,H}$, corresponding to $\Prob^N_{\beta,H}$, but stopped at time $t^{\,\text{st}}_\epsilon$. Then, for each $t\geq 0$:
\begin{align}
\Prob^N_{\beta,H}\big(t^{\,\text{st}}_\epsilon\leq t\big) &= \Prob^{N,st}_{\beta,H}\big(t^{\,\text{st}}_\epsilon\leq t\big) \nonumber\\
&= \Prob^{N,st}_{\beta,H}\Big(\gamma^N_\cdot\in E([0,t],\e)\cap\big\{t^{\,\text{st}}_\epsilon\leq t\big\}\Big)\nonumber\\
&= \Prob^{N,st}_{\beta,H}\Big(\gamma^N_\cdot\in E([0,t],\e)\cap\Big\{\sup_{t'\leq t}|\gamma^N_{t'}|\leq \kappa_H\Big\}\cap \big\{t^{\,\text{st}}_\epsilon\leq t\big\}\Big) + o_N(1)\label{eq_first_estimate_proba_tau_epsilon}.
\end{align}
The last equality follows from~\eqref{eq_control_length_lemm_e_typical}. 
By Lemma~\ref{lemm_volume_diff_as_test_function}, there are functions $J_1,J_2\in C^2_c(\R^2,[0,1])$ depending only on $\gamma^{\mathrm{ref}},\epsilon,\kappa_H$, such that:
\begin{align}
&\Prob^{N,st}_{\beta,H}\Big(\gamma^N_\cdot\in E([0,t],\e)\cap\Big\{\sup_{t'\leq t}|\gamma^N_{t'}|\leq \kappa_H \Big\}\cap \big\{t^{\,\text{st}}_\epsilon\leq t\big\}\Big) 
\\
&\quad\leq 
\Prob^{N,st}_{\beta,H}\Big(\gamma^N_\cdot\in E([0,t],\e)\cap\Big\{\sup_{t'\leq t}|\gamma^N_{t'}|\leq \kappa_H\Big\}\cap \Big\{\max_{i\in\{1,2\}}\sup_{t'\leq t}\big|\big<\Gamma^N_{t'},J_i\big>-\big<\Gamma^{\mathrm{ref}},J_i\big>\big|\geq \epsilon^2/4\Big\}\Big)
\nonumber\\
&\quad 
\leq 
\Prob^{N}_{\beta,H}\Big(\gamma^N_\cdot\in E([0,t],\e)\cap\Big\{\sup_{t'\leq t}|\gamma^N_{t'}|\leq \kappa_H\Big\}\cap \Big\{\max_{i\in\{1,2\}}\sup_{t'\leq t}\big|\big<\Gamma^N_{t'},J_i\big>-\big<\Gamma^{\mathrm{ref}},J_i\big>\big|\geq \epsilon^2/4\Big\}\Big)
.
\nonumber
\end{align}
To estimate the last probability, let us write, for each $i\in\{1,2\}$ and $t'\leq t$:
\begin{align}
\big<\Gamma^N_{t'},J_i\big>-\big<\Gamma^{\mathrm{ref}},J_i\big> = \frac{1}{N}\log A^{J_i}_{t'} + \frac{1}{N}\log D^{N,t'}_{\beta,H,J_i}\big((\gamma^N_u)_{u\leq t'}\big),
\end{align}
with $D^{N,t'}_{\beta,H,J_i} := \frac{\mathrm{d}\Prob^N_{\beta,H+J_i}}{\mathrm{d}\Prob^N_{\beta,H}}\Big|_{t'}$ the Radon-Nikodym derivative up to time $t'$ defined as in~\eqref{eq_def_der_radon_nyk} (setting $J_i(u,\cdot) = J_i(\cdot)$ for each $u\geq 0$) and:
\begin{align}
\forall i\in\{1,2\},\forall t'\leq t,\qquad 
\log A^{J_i}_{t'} 
:= 
\int_0^{t'}N^2e^{-N\langle\Gamma^N_u,J_i\rangle}\lcal_{\beta,H} e^{N\langle\Gamma^N_u,J_i\rangle}\, du
.
\end{align}
For trajectories with values in $\e$, Corollary~\ref{coro_bound_RD} can be used to estimate $A^{J_i}_{\cdot}$ (the bound in Corollary~\ref{coro_bound_RD} is for $\mathcal L_{\beta}$ rather than $\mathcal L_{\beta,H}$ but this latter case is identical): 
for some $C_i(H,\epsilon,|\gamma^{\mathrm{ref}}|)>0$,
\begin{align}
\forall \gamma^N_\cdot \in E([0,T],\e),
\qquad
\sup_{t'\leq t}\frac{1}{N}\big|\log A^{J_i}_{t'}(\gamma^N_\cdot)\big|
\leq 
C_i(H,\epsilon,|\gamma^{\mathrm{ref}}|)\int_0^t|\gamma^N_t|\, dt
.
\end{align}
As a result, for each $i\in\{1,2\}$, one has:
% (recall that $D^N_{\beta,H,{J_i}}\geq 0$):
%
\begin{align}
&\Prob^{N}_{\beta,H}\Big(\gamma^N_\cdot\in E([0,t],\e)
\cap\Big\{\sup_{t'\leq t}|\gamma^N_{t'}|\leq \kappa_H\Big\}
\cap \Big\{\sup_{t'\leq t}\big|\big<\Gamma^N_{t'},J_i\big>-\big<\Gamma^{\mathrm{ref}},J_i\big>\big|\geq \epsilon^2/4\Big\}\Big)
\nonumber\\
&\ \,
= 
\Prob^{N}_{\beta,H}\Big(\gamma^N_\cdot\in E([0,t],\e)\cap\Big\{\sup_{t'\leq t}|\gamma^N_{t'}|\leq \kappa_H\Big\}
\cap \Big\{\sup_{t'\leq t}D^{N,t'}_{\beta,H,J_i}\geq e^{N\epsilon^2/4-C_i(H,\epsilon,|\gamma^{\mathrm{ref}}|)\, \kappa_Ht}\Big\}\Big)
.
\end{align}
Since $(D^{N,t'}_{\beta,H,J_i})_{t'\leq t}$ is a mean-1 martingale under $\Prob^N_{\beta,H}$, Doob's maximal inequality yields:
\begin{align}
\Prob^N_{\beta,H}\Big( \sup_{t'\leq t}D^{N,t'}_{\beta,H,J_i}\geq e^{N\epsilon^2/4-C_i(H,\epsilon,|\gamma^{\mathrm{ref}}|)\kappa_Ht}\Big)
&\leq 
e^{-N\epsilon^2/4 + C_i(H,\epsilon,|\gamma^{\mathrm{ref}}|)\, \kappa_Ht}\E^{N}_{\beta,H}\big[D^{N,t}_{\beta,H,J_i}\big] 
\nonumber\\
&= 
e^{-N\epsilon^2/4 + C_i(H,\epsilon,|\gamma^{\mathrm{ref}}|)\, \kappa_Ht}
.
\end{align}
Equation~\eqref{eq_first_estimate_proba_tau_epsilon} and the last inequality imply that $\Prob^N_{\beta,H}(t^{\,\text{st}}_\epsilon\leq t)$ vanishes with $N$ as soon as ${t<\epsilon^2/(4\kappa_H\max\{C_1(H,\epsilon,|\gamma^{\mathrm{ref}}|),C_2(H,\epsilon,|\gamma^{\mathrm{ref}}|)\})}$.  
This proves~\eqref{eq_short_time_stability_partant_de_Gamma_0} admitting Lemma~\ref{lemm_volume_diff_as_test_function}. 

To prove~\eqref{eq_stability_e_voisinage_Gamma_0}, 
notice that the functions $J_1,J_2$ built below in Lemma~\ref{lemm_volume_diff_as_test_function} for $\gamma^{\mathrm{ref}}$ can be built similarly for any $\gamma\in\e$ with $|\gamma|\leq \kappa$. 
The only change is that the constants $C_i(H,\epsilon)$ appearing above now depend on $\kappa$. 
\\

\noindent\emph{Proof of Lemma~\ref{lemm_volume_diff_as_test_function}.} Let $\epsilon>0$ and $\gamma\in\e$ be such that:
\begin{align}
d_{L^1}(\gamma,\gamma^{\mathrm{ref}})\geq 
\epsilon^2
.
\end{align}
Let $\Gamma$ be the associated droplet. One has:
\begin{align}
\text{either}\quad |\Gamma\setminus \Gamma^{\mathrm{ref}}|\geq \epsilon^2/2,
\qquad \text{or}\quad
|\Gamma^{\mathrm{ref}}\setminus\Gamma|\geq \epsilon^2/2
.
\end{align}
Let $J_1\in C^2_c(\R^2,[0,1])$ be a smooth approximation of ${\bf 1}_{\Gamma^{\mathrm{ref}}}$ equal to $1$ on $\Gamma^{\mathrm{ref}}$. 
Let also $J_2\in C^2_c(\R^2,[0,1])$ be a smooth approximation of ${\bf 1}_{(\Gamma^{\mathrm{ref}})^c}$ supported on the interior of $(\Gamma^{\mathrm{ref}})^c$. 
We claim that $J_1,J_2$ can be chosen as functions of $\Gamma^{\mathrm{ref}}$ and $\epsilon$ in such a way that:
\begin{align}
|\Gamma\setminus \Gamma^{\mathrm{ref}}|\geq \epsilon^2/2\quad \Rightarrow\quad \big<\Gamma,J_2\big>-\big<\Gamma^{\mathrm{ref}},J_2\big>
\geq 
\epsilon^2/4,
\nonumber\\
|\Gamma^{\mathrm{ref}}\setminus \Gamma|\geq \epsilon^2/2
\quad \Rightarrow\quad 
\big<\Gamma^{\mathrm{ref}},J_1\big> - \big<\Gamma,J_1\big>\geq \epsilon^2/4
.
\end{align}
Indeed, consider e.g. $J_1$. 
Due to the fact that $\gamma^{\mathrm{ref}}\in\Omega$ can be split into four pieces by Definition~\ref{property_state_space} of $\Omega$, 
for any curve $\gamma$ at Hausdorff distance at most $\eta>0$ from $\gamma^{\mathrm{ref}}$, 
there is a universal $C>0$ such that the distance $d_{L^1}(\gamma,\gamma^{\mathrm{ref}})$ is bounded by $C\eta|\gamma^{\mathrm{ref}}|$. 
It is therefore enough to ask for $J_1:\R^2\to[0,1]$ to be supported on points at distance at most $\epsilon^{2}/(4C|\gamma^{\mathrm{ref}}|)$ from $\Gamma^{\mathrm{ref}}$, 
as this implies:
\begin{equation}
\big<\Gamma^{\mathrm{ref}},J_1\big> - \big<\Gamma,J_1\big>
\geq 
\big|\Gamma^{\mathrm{ref}}\setminus\Gamma\big|
-\big|\Gamma\setminus \Gamma^{\mathrm{ref}}\big|
\geq 
\frac{\epsilon^2}{2}
- \frac{\epsilon^2}{4}
=
\frac{\epsilon^2}{4}
.
\end{equation}
$J_2$ is treated similarly, and we may choose the gradients of $J_1,J_2$ to be bounded in terms of $\epsilon,|\gamma^{\mathrm{ref}}|$ only.
This concludes the proof of Lemma~\ref{lemm_volume_diff_as_test_function}, thus of Lemma~\ref{lemm_e_typical_short_time}.
\end{proof}

\subsubsection{Short-time hydrodynamics}\label{sec_short_time_hydrodynamics}
We can now prove hydrodynamics for short time, under an additional uniqueness assumption.
\begin{prop}[Short time hydrodynamics]\label{prop_short_time_hydro}
Let $\tilde r\in[0,r_0]$, 
so that (recall Definition~\ref{def_effective_state_space} of $\e$):
\begin{align}
B_{d_{L^1}}\big(\gamma^{\mathrm{ref}},\tilde r^2\big)\subset \e.\label{eq_def_tilde_r}
\end{align}
Let also $t_0= t_0(\beta,H,\tilde r,\kappa)\in (0,T]$ be the time of Lemma~\ref{lemm_e_typical_short_time} with a large enough $\kappa>0$ in terms of $H,\gamma^{\mathrm{ref}}$. 
Assume that the weak formulation~\eqref{eq_formulation_faible_avec_der_en_temps} with drift $H$ has a unique solution $\gamma^H_\cdot\in E([0,t_0],\e)$ starting from the initial condition $\gamma^{\mathrm{ref}}$ of Definition~\ref{def_CI} (in other words, with the notations of Theorem~\ref{theo_large_dev}, 
assume $\gamma^H_{\cdot}\in \mathcal A_{\beta,t_0}$). 
Let $(\mu^N)_N$ be a sequence of probability measures on $\big(\e,d_{L^1}\big)$, converging weakly to $\delta_{\gamma^{\mathrm{ref}}}$ and such that:
\begin{equation}
\lim_{N\rightarrow\infty}\mu^N\big(|\gamma^N|\geq \kappa\big) = 0.\label{eq_bound_length_support_mu_N}
\end{equation}
Then:
\begin{equation}
\forall \zeta>0,\qquad \lim_{N\rightarrow\infty} \Prob^{\mu^N}_{\beta,H}\bigg( \gamma^N_\cdot\notin B_{d_E}\big((\gamma^H_{t})_{t\leq t_0},\zeta\big)\bigg) = 0.\label{eq_dans_prop_short_time_hydro}
\end{equation}
\end{prop}
Proposition~\ref{prop_short_time_hydro} is implied by the following lemma, in which hydrodynamics for $(\Prob^N_{\beta,H})_N$ are established in short time thanks to Lemma~\ref{lemm_e_typical_short_time} and the uniqueness assumption on solutions of~\eqref{eq_formulation_faible_avec_der_en_temps}. 
The proof of Lemma~\ref{lemm_large_dev_Q_N_H} in particular contains the proof of Proposition~\ref{prop_limite_hydro}.
\begin{lemm}\label{lemm_large_dev_Q_N_H}
Let $\kappa,t_0$ and the sequence $(\mu^N)_N$ be as in Proposition~\ref{prop_short_time_hydro}. 
Then $(\Prob^N_{\beta,H})_N$ converges to $\delta_{(\gamma^H_t)_{t\leq t_0}}$ in the weak topology of probability measures on $(E([0,t_0],\Omega),d_E)$, with $\Omega$ the general state space given in Definition~\ref{def_state_space}. 
In particular~\eqref{eq_dans_prop_short_time_hydro} holds.
\end{lemm}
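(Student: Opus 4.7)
The plan is to follow the standard scheme from \cite{Kipnis1989}: prove tightness of $(\Prob^N_{\beta,H})_N$ on $E([0,t_0],\Omega)$ equipped with $d_E$, identify every limit point as a measure concentrated on weak solutions of~\eqref{eq_formulation_faible_avec_der_en_temps} with drift $H$, and invoke the uniqueness assumption to conclude that the single accumulation point is $\delta_{\gamma^H_\cdot}$.

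For tightness, I would first use hypothesis~\eqref{eq_bound_length_support_mu_N} together with the short-time stability Lemma~\ref{lemm_e_typical_short_time} to deduce that, with $\Prob^N_{\beta,H}$-probability tending to $1$, the trajectory $(\gamma^N_t)_{t\leq t_0}$ stays in $E([0,t_0],\e)$. Exponential tightness of $(\Prob^N_{\beta,H}(\cdot,E([0,t_0],\e)))_N$, established in Appendix~\ref{appen_tightness} for $\Prob^N_{\beta}$, transfers to $\Prob^N_{\beta,H}$ via the uniform control of the Radon--Nikodym derivative on trajectories of bounded length (Corollary~\ref{coro_bound_RD}), which in turn is ensured by Lemma~\ref{lemm_control_length_sec_computations}. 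This yields tightness of $(\Prob^N_{\beta,H})_N$ on $(E([0,t_0],\Omega),d_E)$ and supports all limit points on $E([0,t_0],\e)$.

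For the identification of limit points, the key step is to show that any limit point $\Prob^*$ is concentrated on trajectories $\gamma_\cdot$ with almost always point-like poles and satisfying
\begin{equation*}
\ell^\beta_G(\gamma_\cdot) = \int_0^{t_0}\int_{\gamma_t}\mu(\theta(s))(GH)(t,\gamma_t(s))\,ds\,dt
\qquad \forall G\in\C,
\end{equation*}
which is exactly the weak formulation~\eqref{eq_formulation_faible_avec_der_en_temps} with drift $H$. The point-like-pole property follows, exactly as in Lemma~\ref{lemm_D_q_subset_E_pp}, from the super-exponential volume-at-the-pole estimate~\eqref{eq_proba_sous_exp_D_n_q}, which is valid under $\Prob^N_{\beta,H}$. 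To obtain the identity, for each $G\in\C$ and small $\lambda\in\R$ I would use the exponential martingale
\begin{equation*}
M^{N,\lambda G} := \frac{D^N_{\beta,H+\lambda G}}{D^N_{\beta,H}},
\qquad \E^N_{\beta,H}\big[M^{N,\lambda G}\big]=1.
\end{equation*}
Applying Proposition~\ref{prop_informal_stuff_to_prove_action_gen} to both $H$ and $H+\lambda G$, there is a set $\tilde Z$ with $\Prob^N_{\beta,H}(\tilde Z^c)\to 0$ (after the $\epsilon$-limit) on which, up to terms that are $O(\lambda^2)+o_N(1)+o_\epsilon(1)$,
\begin{equation*}
\frac{1}{N}\log M^{N,\lambda G}
= -\lambda\Big[\ell^\beta_{G,\epsilon}(\gamma^N_\cdot)
 - \int_0^{t_0}\!\!\int_{\gamma^N_t(\epsilon)} \tfrac{(\mathsf{v}^\epsilon)^2}{\mathsf{v}}|{\bf T}^\epsilon\cdot{\bf b}_1||{\bf T}^\epsilon\cdot{\bf b}_2|\,(GH)(t,\gamma^N_t(s))\,ds\,dt\Big].
\end{equation*}
Combining Markov's inequality $\Prob^N_{\beta,H}(M^{N,\lambda G}\geq e^{N\delta})\leq e^{-N\delta}$ with the same bound for $-\lambda G$, then letting $N\to\infty$, then $\epsilon\to 0$ via Proposition~\ref{prop_preuve_continuite_integrale_avec_v_moins_un} (which requires the already-obtained point-like-pole property of $\Prob^*$), and finally $\lambda\to 0^{\pm}$, yields the desired identity for every $G\in\C$, $\Prob^*$-a.s.

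The main obstacle is precisely the lack of continuity of the functionals $\ell^\beta_{G,\epsilon}, J^\beta_{G,\epsilon}$ at trajectories with non-trivial poles, which prevents a direct passage to the limit in the martingale computation. This is overcome by the two-level argument sketched above: first one uses the volume estimate~\eqref{eq_proba_sous_exp_D_n_q} and Lemma~\ref{lemm_D_q_subset_E_pp} to confine $\Prob^*$ to $E_{pp}([0,t_0],\e)$, where Proposition~\ref{prop_preuve_continuite_integrale_avec_v_moins_un} restores continuity and the limit $\epsilon\to 0$ becomes harmless; the $\lambda\to 0$ step then extracts the weak formulation from a quadratic expansion. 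Once every limit point is identified with $\delta_{\gamma^H_\cdot}$, tightness forces the full sequence $(\Prob^N_{\beta,H})_N$ to converge to it, which is~\eqref{eq_dans_prop_short_time_hydro} and, specialised to $G=\partial_tG=0$ at $t_0$, also yields Proposition~\ref{prop_limite_hydro}.
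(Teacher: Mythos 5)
Your proof is correct, but for the identification of limit points you take a genuinely different route from the paper. The paper invokes the Dynkin (linear) semi-martingale decomposition $\big<\Gamma^N_t,G_t\big> = \big<\Gamma^N_0,G_0\big> + \int_0^t\big<\Gamma^N_u,\partial_u G_u\big>du + \int_0^t N^2\lcal_{\beta,H}\big<\Gamma^N_u,G_u\big>du + M^{N,G}_t$, computes the action of the tilted generator $\lcal_{\beta,H}$ on volume functionals (noting it is essentially the same computation as in Section~\ref{sec_action_gen_sur_volume}), shows the compensator converges to the right-hand side of~\eqref{eq_formulation_faible_avec_der_en_temps}, and that the martingale part is negligible. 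You instead work with the exponential Radon--Nikodym martingale $M^{N,\lambda G}=D^N_{\beta,H+\lambda G}/D^N_{\beta,H}$ and extract the weak formulation from the first-order term in $\lambda$ via Markov's inequality, the continuity and $\epsilon\to0$ convergence of Proposition~\ref{prop_preuve_continuite_integrale_avec_v_moins_un} on $E_{pp}([0,t_0],\e)$, and a final $\lambda\to0^{\pm}$ limit. Both routes rest on the same computational input (Propositions~\ref{prop_action_gen_micro}--\ref{prop_informal_stuff_to_prove_action_gen}) and the same point-like-pole reduction to restore continuity of the functionals. Your version reuses the Radon--Nikodym computation verbatim and exhibits the hydrodynamic limit directly as the unique zero of the rate function, at the cost of an extra $\lambda$-linearisation step; the paper's version avoids that limit at the cost of a (formally separate but in fact identical) generator computation. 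The tightness and point-like-pole steps match the paper's. One minor citation imprecision: the transfer of super-exponential estimates from $\Prob^N_\beta$ to $\Prob^N_{\beta,H}$ that you need in the tightness argument is the content of Corollary~\ref{coro_change_measure_sous_exp}, which is what combines Corollary~\ref{coro_bound_RD} with the length control of Lemma~\ref{lemm_tightness_sup_length}, rather than Corollary~\ref{coro_bound_RD} on its own.
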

\begin{proof}
The complementary of the open ball $B_{d_E}((\gamma^H_t)_{t\leq t_0},\zeta)$ is closed in $E([0,t_0],\Omega)$ for each $\zeta>0$. 
Equation~\eqref{eq_dans_prop_short_time_hydro} is therefore a direct consequence of the weak convergence result, which we now prove.

The hypothesis~\eqref{eq_bound_length_support_mu_N} on the initial law $\mu^N$ ensures, by Lemma~\ref{lemm_e_typical_short_time}, that:
\begin{equation}
\lim_{N\rightarrow\infty}\Prob^{\mu^N}_{\beta,H}\Big(\gamma^N_\cdot \in E([0,t_0],\e)\Big) = 1.
\end{equation}
Under this condition, in Appendix~\ref{appen_tightness} (see Corollary~\ref{coro_tightness_small_time}), the sequence $\{\Prob^{\mu^N}_{\beta,H}: N\in \N_{\geq 1}\}$ is proven to be relatively compact in $E([0,t_0],\Omega)$, with limit points supported on $E([0,t_0],\e)$ that are continuous in $d_{L^1}$ distance. Let $\Prob^*_{\beta,H}$ be one of its limit points. In view of~\eqref{eq_proba_sous_exp_D_n_q} and Lemma~\ref{lemm_e_typical_short_time}, 
$\Prob^*_{\beta,H}$ is supported on trajectories starting from a curve $\gamma^0$ with $d_{L^1}(\gamma^0,\gamma^{\mathrm{ref}})=0$ and with almost always point-like poles (for each $q\in\N_{\geq 1}$, they are in the set $D_q$ defined as in~\eqref{eq_def_D_p} with $t_0$ instead of $T$). 

To prove that $\Prob^*_{\beta,H} = \delta_{(\gamma^H_t)_{t\leq t_0}}$, let us prove that $\Prob^*_{\beta,H}$ concentrates on trajectories that satisfy the weak formulation~\eqref{eq_formulation_faible_avec_der_en_temps} of anisotropic motion by curvature on $[0,t_0]$. 
This is sufficient to conclude the proof of Lemma~\ref{lemm_large_dev_Q_N_H}, because we have assumed that $(\gamma^H_{t})_{t\leq t_0}$ is the unique solution of~\eqref{eq_formulation_faible_avec_der_en_temps} on $[0,t_0]$.

To prove this concentration property, 
the standard idea (see e.g. Chapter 4 in \cite{Kipnis1999}) is to start from the following semi-martingale representation: 
if $t\geq 0$, $G\in\C$ and $(\Gamma^N_t)_{t\geq 0}$ is as usual the trajectory of droplets associated with microscopic curves $(\gamma^N_t)_{t\geq 0}$,
\begin{align}
\big<\Gamma^N_t,G_t\big> 
= 
\big<\Gamma^N_0,G_0\big>  + \int_0^t \big<\Gamma_u,\partial_u G_u\big>\, du +  \int_0^t N^2\lcal_{\beta,H}\big<\Gamma^N_u,G_u\big>\, du + M^{N,G}_t
,
\end{align}
where $(M^{N,G}_t)_t$ is a martingale. 
By assumption, $\gamma^N_0$ converges in $d_{L^1}$-distance to the initial condition $\gamma^{\mathrm{ref}}$ of the trajectory $\gamma^H_\cdot$. 
Computing the action of the generator $\lcal_{\beta,H}$, (defined with the jump rates $c^H$ of~\eqref{eq_def_jump_rates_H}) 
is done in exactly the same way as the computation of the Radon-Nikodym derivative in Section~\ref{sec_action_gen_sur_volume}. 
This gives, for each $\zeta>0$ and $G\in\C$, the existence of $\epsilon_0(\zeta)>0$ with:
\begin{equation}
\forall\epsilon\in(0,\epsilon_0(\zeta)),\forall t\leq t_0,
\qquad
\lim_{N\to\infty}\Prob^N_{\beta,H}\big(|X_{G,\epsilon,t}|\leq \zeta\big) 
= 
1
,
\label{eq_limit_proba_weak_formulation}
\end{equation}
where for $\gamma_\cdot\in E([0,t_0],\e)$, and $t\leq t_0$, 
recalling the definition of $\alpha,\mu$ in~\eqref{eq_def_mobility_first}--\eqref{eq_def_alpha}:
\begin{align}
X_{G,\epsilon,t}(\gamma_\cdot)
&=
\big<\Gamma_{t},G_t\big>-\big<\Gamma^{\mathrm{ref}},G_0\big> - \int_0^t \big<\Gamma_{t'},\partial_{t'}G_{t'}\big>\, dt'
\nonumber\\
&\quad
- \frac{1}{4}\int_0^{t}\int_{\gamma_{t'}(\epsilon)}\frac{(\mathsf{v}^\epsilon)^2}{\mathsf{v}}\big[{\bf T}^\epsilon\cdot {\bf m}\big(\gamma_{t'}(s)\big)\big] \ {\bf T}^\epsilon\cdot \nabla G\big(t,\gamma_{t'}(s)\big)\, ds \, dt'
\nonumber\\
&\quad 
+ \sum_{k=1}^4\int_0^{t}\Big(\frac{1}{4}-\frac{e^{-\beta}}{2}\Big)\big[G({t'},L_k(\gamma_{t'})) +G({t'},R_k(\gamma_{t'}))\big] \, d{t'}
\nonumber\\
&\quad
-\int_0^{t} \int_{\gamma_{t'}(\epsilon)} \frac{(\mathsf{v}^\epsilon)^2}{\mathsf{v}}|{\bf T}^\epsilon\cdot {\bf b}_0||{\bf T}^\epsilon \cdot{\bf b}_{\pi/2}| (HG)\big(t,\gamma_{t'}(s)\big)^2\, ds\, dt'
.
\end{align}
In view of the $\epsilon\to 0$ convergence results of Proposition~\ref{prop_preuve_continuite_integrale_avec_v_moins_un} (see more precisely the proof in Section~\ref{sec_continuity_J_H_eps_ell_H_eps}), 
if we can prove~\eqref{eq_limit_proba_weak_formulation} also holds under $\Prob^*_{\beta,H}$, then we are done. 
This is not immediate because $X_{G,\epsilon,t}$ is not continuous on $E([0,t],\Omega)$ due to the poles. 
Recall however that $\Prob^*_{\beta,H}$ is supported on the closed set $D_q$ (defined as in~\eqref{eq_def_D_p} with $t_0$ instead of $T$ there) 
for all $q$ larger than some $q(H,t_0)$ as follows from the estimate~\eqref{eq_proba_sous_exp_D_n_q} and Corollary~\ref{coro_change_measure_sous_exp} to transfer this estimate to $\Prob^N_{\beta,H}$. 
Trajectories in $D_q$ have almost always point-like poles and are therefore continuity points of $X_{G,\epsilon,t}$ by Proposition~\ref{prop_preuve_continuite_integrale_avec_v_moins_un}. 
For $q\geq q(H,t_0)$ henceforth fixed, 
this implies:
\begin{equation}
\overline{\big\{|X_{G,\epsilon,t}|\leq \zeta\big\}} \cap D_q
=
\big\{|X_{G,\epsilon,t}|\leq \zeta\big\}\cap D_q
,
\end{equation}
where $\overline U$ denotes the closure of a set $U\subset E([0,t_0],\Omega)$ for $d_E$ (defined as in~\eqref{eq_def_d_E} with $t_0$ instead of $T$).  
Thus, for each $\epsilon\in(0,\epsilon_0)$ and each $t\leq t_0$:
\begin{align}
1 
=
\limsup_{N\to\infty}\Prob^N_{\beta,H}\Big(\overline{\big\{|X_{G,\epsilon,t}|\leq \zeta\big\}}\Big) 
&\leq 
\Prob^*_{\beta,H}\Big(\overline{\big\{|X_{G,\epsilon,t}|\leq \zeta\big\}}\Big) 
=
\Prob^*_{\beta,H}\Big(\overline{\big\{|X_{G,\epsilon,t}|\leq \zeta\big\}}\cap D_q\Big) 
\nonumber\\
&= 
\Prob^*_{\beta,H}\big(|X_{G,\epsilon,t}|\leq \zeta\big) 
.
\end{align}
This concludes the proof of Lemma~\ref{lemm_large_dev_Q_N_H}, 
thus of Proposition~\ref{prop_short_time_hydro}.
\end{proof}
\subsubsection{Extension to later times}\label{sec_up_to_time_T0}
In this section, 
%the notations are those of Proposition~\ref{prop_lower_bound}. 
we extend the hydrodynamic limit result of Proposition~\ref{prop_short_time_hydro} to the whole time interval $[0,T]$ on which $\gamma^H_\cdot$ is assumed to take values in the interior of $\e$ in the sense of~\eqref{eq_gamma_H_in_interior}. 
This concludes the proof of Proposition~\ref{prop_control_proba_lower_bound}, 
which together with Lemma~\ref{lemm_getting_rid_of_tilde_Z} and Proposition~\ref{prop_lower_bound} concludes the proof of lower bound large deviations in Theorem~\ref{theo_large_dev}. 
\begin{prop}\label{prop_time_up_to_T0}
Let $T>0$ and assume that there is a unique solution $(\gamma^H_{t})_{t\leq T}\in E([0,T],\e)$ of the weak formulation~\eqref{eq_formulation_faible_avec_der_en_temps} of anisotropic motion by curvature with drift $H$. 
Assume that $\gamma^H_\cdot$ stays in the interior of $\e$ until time $T$ in the following sense:
\begin{equation}
\exists r_H>0,\forall t\in[0,T], \qquad B_{d_{L^1}}\big(\gamma^H_t,r_H^2\big) \subset \e. \label{eq_condition_interieur_de_e}
\end{equation}
Then:
\begin{align}
\forall \zeta>0,\qquad \lim_{N\rightarrow\infty}\Prob^N_{\beta,H}\Big(\gamma^N_\cdot\in E([0,T],\e)\cap B_{d_E}(\gamma^H_\cdot,\zeta)\Big) = 1.
\end{align}
\end{prop}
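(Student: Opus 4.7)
The strategy is iteration: cover $[0,T]$ by finitely many intervals of length shorter than the short time appearing in Proposition~\ref{prop_short_time_hydro}, and apply that proposition repeatedly along the trajectory using the Markov property. The uniform interior condition~\eqref{eq_condition_interieur_de_e} is precisely what is needed to obtain a lower bound on the allowed time step that is uniform along $\gamma^H_\cdot$.

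Concretely, I would use the continuity of $t\mapsto \gamma^H_t$ in $d_{L^1}$ (which follows from the weak formulation~\eqref{eq_formulation_faible_avec_der_en_temps}, since the right-hand side is absolutely continuous in $t$ for every test function $G$) to partition $[0,T]$ as $0=t_0<t_1<\ldots<t_K=T$ in such a way that, on each $[t_k,t_{k+1}]$, the solution $\gamma^H_\cdot$ remains in $B_{d_{L^1}}(\gamma^H_{t_k},r_H^2/2)$. Set $\e_k:=B_{d_{L^1}}(\gamma^H_{t_k},r_H^2)\subset\e$, which plays the role of the effective state space around $\gamma^H_{t_k}$. Then prove by induction on $k$ that, for every $\zeta>0$,
\begin{equation*}
\lim_{N\to\infty}\Prob^N_{\beta,H}\Big(\gamma^N_\cdot|_{[0,t_k]}\in E([0,t_k],\e)\cap B_{d_E}\big(\gamma^H_\cdot|_{[0,t_k]},\zeta\big)\Big) = 1.
\end{equation*}
The base case $k=0$ is immediate since $\gamma^N_0=\gamma^{\text{ref},N}\to\gamma^{\text{ref}}=\gamma^H_0$. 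For the induction step, apply the Markov property at time $t_k$: on the event of the induction hypothesis, $\gamma^N_{t_k}$ converges in distribution to $\delta_{\gamma^H_{t_k}}$, and item~1 of Proposition~\ref{prop_short_time_existence_of_weak_solution} guarantees $|\gamma^N_{t_k}|\le\kappa_k$ with probability tending to one, for some $\kappa_k<\infty$. A \emph{shifted} version of Proposition~\ref{prop_short_time_hydro}, with reference curve $\gamma^H_{t_k}$ and effective state space $\e_k$, then yields convergence of the conditioned dynamics on $[t_k,t_{k+1}]$ to $\gamma^H_\cdot|_{[t_k,t_{k+1}]}$ in the $d_E$ topology. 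Uniqueness of this limiting trajectory is inherited from the global uniqueness assumption: any weak solution on $[t_k,t_{k+1}]$ with initial condition $\gamma^H_{t_k}$ can be glued with $\gamma^H_\cdot|_{[0,t_k]}$ to produce a weak solution on $[0,T]$, which by hypothesis must coincide with $\gamma^H_\cdot$.

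The main obstacle is to justify the shifted version of Proposition~\ref{prop_short_time_hydro}. This amounts to checking that every ingredient in its proof, namely the short-time stability estimate of Lemma~\ref{lemm_e_typical_short_time}, the Radon-Nikodym computations of Section~\ref{sec_relevant_martingales}, the replacement and super-exponential estimates of Lemma~\ref{lemm_Replacement_lemma_sec_mart} and Section~\ref{app_behaviour_pole}, and the tightness results of Appendix~\ref{appen_tightness}, only uses the reference curve through Property~\ref{prop_IC}, which is satisfied by each $\gamma^H_{t_k}$ with constants that are uniform in $k$ thanks to~\eqref{eq_condition_interieur_de_e} and the continuity of $\gamma^H_\cdot$. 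Granting this, the short-time stability time and all other constants entering the bound can be chosen uniformly in $k$, so that the partition $(t_k)$ is finite and the iteration terminates after finitely many steps, yielding the desired convergence on the whole interval $[0,T]$.
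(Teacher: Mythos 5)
The proposal is correct and takes essentially the same approach as the paper: iterate Proposition~\ref{prop_short_time_hydro} via the Markov property, using the uniform interior condition~\eqref{eq_condition_interieur_de_e} together with the length bound from Lemma~\ref{lemm_tightness_sup_length} to make the time step independent of the iteration index. Your explicit remark that the argument requires a ``shifted'' version of Proposition~\ref{prop_short_time_hydro}, justified because the whole chain of estimates (Lemma~\ref{lemm_e_typical_short_time}, Sections~\ref{sec_relevant_martingales} and~\ref{app_behaviour_pole}, Appendix~\ref{appen_tightness}) touches the reference curve only through Property~\ref{prop_IC}, is precisely the implicit content of the paper's line ``Proposition~\ref{prop_short_time_hydro} thus applies''.

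Two small differences, neither a gap. First, the paper does not partition $[0,T]$ by the continuity modulus of $\gamma^H_\cdot$; it fixes the uniform step $t_H := \min\{t_0(\beta,H,r_H,\kappa_H),\, t_0(\beta,H,\tilde r,|\gamma^{\text{ref}}|)\}$ coming directly from the short-time stability Lemma~\ref{lemm_e_typical_short_time} and the condition~\eqref{eq_condition_interieur_de_e}, so that the requirement that $\gamma^H_\cdot$ itself stay near $\gamma^H_{t_k}$ on each subinterval is guaranteed automatically (the hydrodynamic limit of trajectories confined to $\e_k$ must lie in the closed set $\e_k$). If you partition by continuity of $\gamma^H_\cdot$ as you propose, you must additionally ensure each $t_{k+1}-t_k$ is below the uniform short-time scale; your last paragraph gestures at this but it is worth making explicit. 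Second, your gluing argument for the shifted uniqueness, which the paper does not spell out, is a useful addition, but as stated it only produces a weak solution on $[0,t_{k+1}]$, not on $[0,T]$: you should either note that the uniqueness hypothesis is understood to hold on every $[0,t]$ with $t\leq T$ (restriction of a weak solution is again a weak solution, so this is harmless), or glue at the right endpoint as well, which presupposes the very identity $\bar\gamma_{t_{k+1}}=\gamma^H_{t_{k+1}}$ you are trying to prove.
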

\begin{rmk}[Lower bound in Theorem~\ref{theo_large_dev_general}]
Proposition~\ref{prop_time_up_to_T0} states that, 
if $\gamma^H_\cdot$ stays inside $\e$ up to a time $T$ larger than the time $t_0$ of Lemma~\ref{lemm_e_typical_short_time}, 
then hydrodynamics are valid up to time $T$. 

However, the only properties of $\e$ that were used are 1) that curves in $\e$ satisfy Property~\ref{prop_IC} which enables us to compute Radon-Nikodym derivatives and 2) that if a curve is in $\e$, then all curves in a small volume neighbourhood satisfy Property~\ref{prop_IC} (see Lemma~\ref{lemm_local_jump_rates}). 
Thus Proposition~\ref{prop_time_up_to_T0} extends to any trajectory satisfying Property~\ref{prop_IC} at each time. 
This in particular implies the general lower bound in Theorem~\ref{theo_large_dev_general}.
\demo
\end{rmk}
\begin{proof}
The claim of Proposition~\ref{prop_time_up_to_T0} is proven in Proposition~\ref{prop_short_time_hydro} up to the time $t_0(\beta,H,r_H,|\gamma^{\mathrm{ref}}|)\in(0,T]$ of Lemma~\ref{prop_short_time_hydro}. 
The point here is to show that the result holds up to time $T$. 
To do so, we iterate the results of Section~\ref{sec_short_time_hydrodynamics} on small time intervals, 
built so that the length and volume of the droplet stay well controlled on each of these intervals.

The intervals are built thanks to the condition~\eqref{eq_condition_interieur_de_e} and a bound on the length as follows. 
By Lemma~\ref{lemm_tightness_sup_length}, there is $\kappa_H>0$ such that, for each $t\leq T$:
\begin{align}
\lim_{N\rightarrow\infty}\Prob^N_{\beta,H}\Big(\gamma^N_\cdot\in E([0,t],\e)\cap\Big\{\sup_{t'\leq t}|\gamma^N_{t'}|\geq \kappa_H \Big\}\Big) 
= 
0
.
\label{eq_bound_length_time_up_to_T}
\end{align}
Recalling the times in Lemma~\ref{lemm_e_typical_short_time} and the definition of $r_H$ from~\eqref{eq_condition_interieur_de_e}, one can then consider intervals of length:
\begin{equation}
t_{H} 
:= 
\min\Big\{t(\beta,H,r_H,|\gamma^{\mathrm{ref}}|), t_0(\beta,H,r_{H},\kappa_{H})\Big\}
>
0
\label{eq_def_r_min_t_min}
\end{equation}
and apply Proposition~\ref{prop_short_time_hydro} on each of these intervals as we shall see. 
Let $n := \lfloor T/t_H\rfloor +1$ and let us prove by recursion on $1\leq i\leq n$ that hydrodynamics hold up to time $it_H$, i.e.:
\begin{equation}
\forall \zeta>0,\qquad \lim_{N\rightarrow\infty}\Prob^N_{\beta,H}\Big(\gamma^N_\cdot \in E([0,it_H],\e)\cap B_{d_E}\big((\gamma^H_t)_{t\leq it_H},\zeta\big)\Big) = 1.
\end{equation}
On $[0,t_H]$, Proposition~\ref{prop_short_time_hydro} yields:
\begin{align}
\forall \zeta>0,\qquad \lim_{N\rightarrow\infty}\Prob^N_{\beta,H}\Big(\gamma^N_\cdot \in E([0,t_H],\e)\cap B_{d_E}\big((\gamma^H_t)_{t\leq t_H},\zeta\big)\Big) = 1,
\end{align}
which is the $i=1$ claim. Assume the claim holds up to $i-1<n$. To prove that it holds at rank $i$, it is enough to prove:
\begin{equation}
\forall\zeta>0,\qquad \lim_{N\rightarrow\infty}\Prob^N_{\beta,H}\Big(E([(i-1)t_H,it_H],\e)\cap B_{d_E}\big((\gamma^H_t)_{(i-1)t_H\leq t\leq it_H},\zeta\big)\Big)=1.\label{eq_recursion_proof_prop_up_to_time_T}
\end{equation}
To prove the last equation, we would like to use the Markov property, then apply Proposition~\ref{prop_short_time_hydro}. To do so, we need to check that the law $\Prob^N_{\beta,H}(\gamma^N_{(i-1)t_H}\in\cdot)$ of $\gamma^N_{(i-1)t_H}$ concentrates on curves with bounded lengths in the sense of~\eqref{eq_bound_length_support_mu_N} and converges to $\delta_{\gamma^H_{(i-1)t_H}}$ for the weak topology associated with $d_{L^1}$. We prove it as follows. 

Equation~\eqref{eq_bound_length_time_up_to_T} applied to $t=(i-1)t_H$ and the recursion hypothesis bounds the length of supported trajectories:
\begin{align}
\lim_{N\rightarrow\infty}\Prob^N_{\beta,H}\Big(\sup_{t\leq (i-1)t_H}|\gamma^N_t|\geq \kappa_H \Big) = 0.
\end{align}
Moreover, as $\gamma^H_\cdot$ is continuous in $d_{L^1}$-distance, the mapping $\gamma_\cdot\in E([0,T],\e)\mapsto d_{L^1}(\gamma_t,\gamma^H_t)$ is continuous for the distance $d_E$ for each time $t\in[0,T]$. The hydrodynamic limit up to time $(i-1)t_H$, given by the recursion hypothesis, then yields the desired convergence (in fact also in probability rather than only weakly):
\begin{align}
\forall \eta>0,\qquad \lim_{N\rightarrow\infty}\Prob^N_{\beta,H}\Big(\gamma^N_{(i-1)t_H}\in B_{d_{L^1}}\big(\gamma^H_{(i-1)t_H},\eta\big)\Big) =1.
\end{align}
For short, write $\mu^N_{i-1}$ for the law of $\gamma^N_{(i-1)t_H}$. As a result of the last two estimates and the Markov property,~\eqref{eq_recursion_proof_prop_up_to_time_T} holds as soon as:
\begin{align}
\forall\zeta>0,\qquad &\lim_{N\rightarrow\infty}\Prob^{\mu^N_{i-1}}_{\beta,H}\Big(\gamma^N_\cdot \in E([0,it_H],\e)\cap B_{d_E}\big((\gamma^H_t)_{(i-1)t_H\leq t\leq it_H},\zeta\big)\Big) =1.\label{eq_extension_later_times_final}
\end{align}
By Assumption~\eqref{eq_condition_interieur_de_e}, $\gamma^H_{(i-1)t_H}$ satisfies the property~\eqref{eq_def_tilde_r} demanded of $\gamma^{\mathrm{ref}}$ in Proposition~\ref{prop_short_time_hydro}, with $\tilde r$ there replaced by $r_H$. We also just checked that the initial condition $\mu^N_{i-1}$ satisfies the same properties as the initial condition of Proposition~\ref{prop_short_time_hydro}, with $\gamma^{\mathrm{ref}}$ replaced by $\gamma^H_{(i-1)t_H}$. Proposition~\ref{prop_short_time_hydro} thus applies to prove~\eqref{eq_extension_later_times_final}. This completes the induction step and the proof of Proposition~\ref{prop_time_up_to_T0}.
\end{proof}
\section{Behaviour of the poles and ${\bf 1}_{p_k=2}$ terms }\label{app_behaviour_pole}
In this section, we focus on the specificity of the contour dynamics: the behaviour of the poles.  
There are two main results. The first is the control of the length of a curve, which is the first item of Proposition~\ref{prop_short_time_existence_of_weak_solution}. 
The second is the proof of Proposition~\ref{prop_value_slope_at_poles}, which states that the regrowth, $e^{-2\beta}$ term in the generator~\eqref{eq_def_generateur_H_is_0} can be seen as the action of a moving reservoir of particles, fixing the density of vertical edges in its vicinity (i.e. the tangent vector at each pole) in terms of $\beta$. 
This fact is proven in Subsection~\ref{subsec_slope_around_the_poles}. 
Preliminary estimates are established in Subsection~\ref{subsec_size_pole_local_eq} 
which presents a useful bijection argument, 
used to both bound the pole size and establish local equilibrium at the poles. 
In addition to being useful for Section~\ref{subsec_slope_around_the_poles}, 
these two results were used in Sections~\ref{sec_relevant_martingales}--\ref{sec_large_dev_upper_bound}. 
A bias $H\in\C$ is fixed throughout.
\subsection{Control of the length of a curve}
This section is devoted to the proof of the first item of Proposition~\ref{prop_short_time_existence_of_weak_solution}, i.e. the control of the supremum of the length of a trajectory. This estimate is central to the proof of large deviations: it enables one to prove, in Corollary~\ref{coro_change_measure_sous_exp}, that if an event has probability decaying super-exponentially fast under $\Prob^N_\beta$, then this remains true under the tilted dynamics $\Prob^N_{\beta,H}$, $H\in\C$.
\begin{lemm}\label{lemm_tightness_sup_length}
Let $\beta>\log 2$ and $T>0$. 
There is then $C(\beta)>0$ such that:
\begin{align}
\forall A>0,\qquad 
\limsup_{N\rightarrow\infty}\frac{1}{N}\log\Prob^N_{\beta}\Big(\sup_{t\leq T}|\gamma^N_t|\geq A\Big) 
\leq  
-C(\beta)A +|\gamma^{\mathrm{ref}}|\beta
.
\label{eq_controle_length_H_is0}
\end{align}
Moreover, take a bias $H\in\C$. There are constants $C(\beta,H,T),C(H)>0$ with $t\mapsto C(\beta,H,t)$ increasing, such that:
\begin{align}
\forall T,A>0,\qquad &\limsup_{N\rightarrow\infty}\frac{1}{N}\log\Prob^N_{\beta,H}\Big(E([0,T],\e)\cap \Big\{\sup_{t\leq T}|\gamma^N_t|\geq A\Big\}\Big) 
\nonumber\\
&\hspace{5cm}
\leq 
-C(\beta,H,T)A + |\gamma^{\mathrm{ref}}|\beta + C(H)
.
\label{eq_bound_sup_length_H}
\end{align}
\end{lemm}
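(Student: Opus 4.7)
\textbf{Proof proposal for Lemma~\ref{lemm_tightness_sup_length}.}

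The plan rests on two ingredients: the reversibility of the contour dynamics with respect to $\nu^N_\beta(\gamma^N) \propto e^{-\beta N |\gamma^N|}$ (well-defined for $\beta > \log 2$ thanks to the $c n^4 2^n$ count of curves of length $n/N$), and Corollary~\ref{coro_bound_RD} which controls the Radon--Nikodym derivative $D^N_{\beta,H}$ on $E([0,T],\e)$.

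\emph{Step 1 (Reduction to stationary initial condition for part 1).} Starting the dynamics from $\nu^N_\beta$ instead of $\delta_{\gamma^{\text{ref},N}}$ costs only a multiplicative factor:
\begin{equation*}
\Prob^N_\beta(\mathcal A) = \Prob^{\gamma^{\text{ref},N}}_\beta(\mathcal A) \leq \nu^N_\beta(\gamma^{\text{ref},N})^{-1}\,\Prob^{\nu^N_\beta}_\beta(\mathcal A).
\end{equation*}
Since $\mathcal Z^N_\beta$ is bounded above independently of $N$ and $|\gamma^{\text{ref},N}| \to |\gamma^{\text{ref}}|$, we obtain $\nu^N_\beta(\gamma^{\text{ref},N})^{-1} \leq e^{\beta N|\gamma^{\text{ref}}| + o(N)}$, which yields the $|\gamma^{\text{ref}}|\beta$ contribution in~\eqref{eq_controle_length_H_is0}.

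\emph{Step 2 (Supremum bound under $\Prob^{\nu^N_\beta}_\beta$).} For $\lambda \in (0,\beta-\log 2)$ consider the positive local martingale
\begin{equation*}
M^N_t := \exp\Bigl[\lambda N|\gamma^N_t| - \lambda N|\gamma^N_0| - \int_0^t V_\lambda(\gamma^N_s)\,ds\Bigr], \quad V_\lambda(\gamma^N) := e^{-\lambda N |\gamma^N|}N^2\lcal_\beta e^{\lambda N|\gamma^N|}.
\end{equation*}
A direct computation using Definition~\ref{def_contour_dynamics} (bulk flips leave $|\gamma^N|$ invariant, pole moves change it by $\pm 2/N$) gives
\begin{equation*}
V_\lambda(\gamma^N) = N^2\sum_{k=1}^4 \bigl[(p_k-1)e^{-2\beta}(e^{2\lambda}-1) + {\bf 1}_{p_k=2}(e^{-2\lambda}-1)\bigr].
\end{equation*}
Stationarity and the $e^{-\beta N|\gamma^N|}$ weight of $\nu^N_\beta$ imply a uniform exponential tail for $p_k$ under $\nu^N_\beta$; together with reversibility, this gives exponential control on $\int_0^T V_\lambda^+(\gamma^N_s)\,ds$. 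Doob's maximal inequality applied to the supermartingale $M^N_t \exp[-\int_0^t V_\lambda^+\,ds]$, together with the stationary tail bound $\nu^N_\beta(|\gamma^N|\geq a) \leq C e^{-(\beta-\log 2)Na}$ obtained by counting, yields
\begin{equation*}
\Prob^{\nu^N_\beta}_\beta\Bigl(\sup_{t \leq T}|\gamma^N_t| \geq A\Bigr) \leq C\,e^{-C(\beta)NA},
\end{equation*}
for some $C(\beta)>0$. Combining with Step~1 produces~\eqref{eq_controle_length_H_is0}.

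\emph{Step 3 (Transfer to the tilted dynamics).} For part 2, write
\begin{equation*}
\Prob^N_{\beta,H}\bigl(\mathcal B\cap E([0,T],\e)\bigr) = \E^N_\beta\bigl[D^N_{\beta,H}\,{\bf 1}_{\mathcal B\cap E([0,T],\e)}\bigr],
\end{equation*}
with $\mathcal B = \{\sup_{t\leq T}|\gamma^N_t|\geq A\}$. By Corollary~\ref{coro_bound_RD}, on $E([0,T],\e)$ we have $D^N_{\beta,H} \leq \exp[C(H)N + C(H)N\int_0^T|\gamma^N_t|dt]$. Slicing $\mathcal B$ into level sets $\{A \leq \sup_t|\gamma^N_t| < A+1\}$, on each slice $\int_0^T |\gamma^N_t|dt \leq (A+1)T$, so $D^N_{\beta,H} \leq e^{C(H)N(1+(A+1)T)}$. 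Applying part~1 to $\Prob^N_\beta(\mathcal B)$ and summing the resulting geometric series in $A$ gives~\eqref{eq_bound_sup_length_H} with $C(\beta,H,T) = C(\beta) - C(H)T$ adjusted; the increasing dependence in $T$ is guaranteed by the monotonicity of the event in $T$.

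\emph{Main obstacle.} The critical step is controlling the compensator $V_\lambda$ in Step~2, since the positive contribution $N^2(p_k-1)e^{-2\beta}(e^{2\lambda}-1)$ is not bounded a priori: in the worst case $\sum_k p_k$ could be as large as $N|\gamma^N|$. The exponential tail of $p_k$ under $\nu^N_\beta$, together with reversibility, is what makes the integral $\int_0^T V_\lambda\,ds$ controllable with exponentially small probability of deviation, and this requires careful use of the specific detailed balance of the pole dynamics.
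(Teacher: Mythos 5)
Your Step~1 matches the paper exactly, and your Step~2 correctly identifies the right martingale decomposition, but the argument has a real gap that you yourself flag in the ``Main obstacle'' paragraph: you never actually control $\int_0^T V_\lambda^+(\gamma^N_s)\,ds$. The remark that ``stationarity and the $e^{-\beta N|\gamma^N|}$ weight imply a uniform exponential tail for $p_k$'' is true as a static statement under $\nu^N_\beta$, but that does not by itself give super-exponential control on the running supremum of the time integral; the compensator is $O(N^2\sum_k p_k)$, and nothing in your proposal explains why $\Prob^{\nu^N_\beta}_\beta\bigl(\sup_t \int_0^t V_\lambda^+\,ds \geq \lambda NA/3\bigr)$ is small. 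The paper resolves this with a time-slicing trick that your argument is missing: split $[0,T]$ into $N^b$ intervals of length $TN^{-b}$ (with $b\geq 3$), pay a polynomial union-bound factor $N^b$, use $\nu^N_\beta$-invariance on each slice, and observe that on a time interval of length $TN^{-b}$ the integral of the compensator is so short that a crude Markov/Jensen bound on its exponential moment suffices. Without this, the martingale decomposition by itself does not close.

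For part~2, the level-set slicing is genuinely flawed. Writing $\mathcal B = \bigcup_{k\geq 0}\{A+k\leq\sup_t|\gamma^N_t|<A+k+1\}$ and applying Corollary~\ref{coro_bound_RD} together with part~1 gives a summand of order
\begin{equation*}
\exp\Bigl[N\bigl(C(H)(1+(A+k+1)T)-C(\beta)(A+k)+|\gamma^{\text{ref}}|\beta\bigr)\Bigr],
\end{equation*}
and the series over $k$ converges only when $C(H)T<C(\beta)$. Your proposed constant $C(\beta,H,T)=C(\beta)-C(H)T$ is negative as soon as $T\geq C(\beta)/C(H)$, at which point the bound is vacuous, and the claim that ``the increasing dependence in $T$ is guaranteed by the monotonicity of the event in $T$'' does not repair this: monotonicity of the event gives you monotone \emph{probabilities}, not a summable series. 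The paper handles arbitrary $T$ precisely by \emph{not} summing over levels in one shot: it first proves the estimate up to a short deterministic time $t_*=C(\beta)/(2C(H))$ where your geometric series does converge, and then iterates across $\lfloor T/t_*\rfloor+1$ windows with a geometrically amplified sequence $(b_i)$ defined by $\tfrac{C(\beta)}{2}b_{i+1}=C(\beta)+b_i(\beta+\log 2)$, using the sub-exponential count $c\ell^4 2^\ell$ of curves to absorb the worst-case initial condition at the start of each window. This two-step (short time plus iteration) structure is the essential second ingredient your proposal is missing.
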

\begin{proof}
The proof relies on the structure of the invariant measure. We start with the $H\equiv 0$ case.\\
 First, as $\beta>\log 2$, the partition function $\mathcal Z^N_\beta$ normalising $\nu^N_\beta$ (see~\eqref{eq_def_nu_beta}) is bounded. One has, for some $c^0>0$ such that $|\gamma^{N,0}|\leq |\gamma^{\mathrm{ref}}| +c^0/N$ for each $N$:
\begin{align}
\Prob^N_{\beta}\Big(\sup_{t\leq T}|\gamma^N_t|\geq A\Big) &\leq  \nu^N_\beta(\gamma^{\mathrm{ref},N})^{-1}\Prob^{\nu^N_\beta}_{\beta}\Big(\sup_{t\leq T}|\gamma^N_t|\geq A\Big)
\nonumber\\
&\leq 
\mathcal Z^N_\beta e^{\beta (N|\gamma^{\mathrm{ref}}| + c^0) } \Prob^{\nu^N_\beta}_{\beta}\Big(\sup_{t\leq T}|\gamma^N_t|\geq A\Big)
.
\end{align}
Let $b>0$ to be fixed later and split the time interval $[0,T]$ into $N^b$ slices of length $TN^{-b}$ to obtain, using the invariance of $\nu^N_\beta$:
\begin{align}
\Prob^N_{\beta}\Big(\sup_{t\leq T}|\gamma^N_t|\geq A\Big)\leq  \mathcal Z^N_\beta N^b e^{\beta N(|\gamma^{\mathrm{ref}}|+c^0)} \Prob^{\nu^N_\beta}_{\beta}\Big(\sup_{t\leq N^{-b}T}|\gamma^N_t|\geq A\Big).\label{eq_to_estimate_sup_length_His0}
\end{align}
To estimate the last probability, let us decompose $|\gamma^N_\cdot|$: for each $t\geq 0$,
\begin{equation}
|\gamma^N_t| 
= 
|\gamma^N_0| + \frac{1}{N}\log V_t+ \frac{1}{N}\log D_t
,
\label{eq_mart_decomp_length}
\end{equation}
where $D_\cdot$ is a mean-$1$ exponential martingale and $V_\cdot$ is the finite variation process given by:
\begin{align}
\forall t\geq 0,\qquad \frac{1}{N}\log V_t &:= \frac{1}{N}\int_0^t e^{-N|\gamma^N_s|}N^2\lcal_{\beta}e^{N|\gamma^N_s|}dt \label{eq_def_V_length}\\
&= N\int_0^t \sum_{k=1}^4 \big[e^{-2\beta}(p_k(\gamma^N_u)-1)\big(e^2-1\big) + {\bf 1}_{p_k(\gamma^N_u)=2}{\bf 1}_{(\gamma^N)^{-,k}\in\Omega_{\text{mic}}^N}\big(e^{-2}-1)\big]du.\nonumber
\end{align}
To estimate the probability in~\eqref{eq_to_estimate_sup_length_His0}, 
it is enough to separately estimate the probability that $|\gamma^N_0|\geq A$ and the probability of the suprema of each of the other two terms in~\eqref{eq_mart_decomp_length}. 
Let us start with $|\gamma^N_0|$. Since the number of curves in $\Omega^N_{\text{mic}}$ with $n\in\N_{\geq 1}$ edges is bounded by $cn^42^n$ for some $c>0$, the following equilibrium estimate holds:
\begin{equation}
\nu^N_\beta\Big(|\gamma^N|\geq A\Big) \leq \frac{1}{\mathcal Z^N_\beta} \sum_{n\geq AN}cn^4 2^n e^{-\beta n} = O\big(e^{-AN\beta'}\big),\quad 0<\beta'<\beta-\log 2.\label{eq_bound_gamma_0}
\end{equation}
Consider now the finite variation term~\eqref{eq_def_V_length}. Bounding each $p_k$ by $CN$ for some $C>0$ and using Chebychev inequality to obtain the second line below, we find:
\begin{align}
\Prob^{\nu^N_\beta}_\beta\Big(\sup_{t\leq N^{-b}T}\frac{1}{N}\log V_t\geq A\Big) &\leq \Prob^{\nu^N_\beta}_\beta\Big(\frac{N^2\big(e^2-1\big)}{2}\int_0^{TN^{-b}}\sum_{k=1}^4 p_k(\gamma^N_s)\, ds \geq \frac{ANe^{2\beta}}{4}\Big)\nonumber\\
&\leq \exp\Big[-\frac{ANe^{2\beta}}{4(e^2-1)}\Big] e^{CTN^{3-b}/2}\nonumber\\
&\leq e^{CT/2}e^{-ANe^{2\beta}(e^2-1)^{-1}/4}\quad \text{ for }b\geq 3.\label{eq_bound_V_length}
\end{align}
Consider finally the martingale term $\frac{1}{N}\log D_t$ in~\eqref{eq_mart_decomp_length}. As $D_t$ is a mean-1 positive martingale, Doob's martingale inequality gives:
\begin{equation}
\Prob^{\nu^N_\beta}_\beta\Big(\sup_{t\leq N^{-b}T}\frac{1}{N}\log D_t \geq A\Big)\leq e^{-AN}\E^{\nu^N_\beta}_\beta\big[D_{N^{-b}T}\big] = e^{-AN}.\label{eq_bound_martingale_length}
\end{equation}
Putting~\eqref{eq_bound_gamma_0}--\eqref{eq_bound_V_length}--\eqref{eq_bound_martingale_length} together yields the claim of Lemma~\ref{lemm_tightness_sup_length} when $H\equiv 0$.\\

Take now $H\in\C$ and let us prove~\eqref{eq_bound_sup_length_H}. 
Recall from Corollary~\ref{coro_bound_RD} that there is $C(H)>0$ such that the Radon-Nikodym derivative $D^N_{\beta,H} = \mathrm{d}\Prob^N_{\beta,H}/\mathrm{d}\Prob^N_\beta|_T$ until time $T$ satisfies, for each $T>0$:
\begin{align}
\forall \gamma^N_\cdot\in E([0,T],\e),\qquad \log D^N_{\beta,H}(\gamma^N_\cdot) \leq \exp\Big[C(H)N+ C(H)N\int_0^{T}|\gamma^N_t|\, dt\Big].\label{eq_bound_RD_dans_coro_sous_exp}
\end{align}
We will prove the following: until time $t_{*} = t_{*}(\beta,H) := C(\beta)/(2C(H))$, for each $A>0,\kappa>0$ and each initial condition $\gamma^N_0\in \e$ with length bounded by $\kappa$:
\begin{align}
\limsup_{N\rightarrow\infty}\frac{1}{N}\log\Prob^N_{\beta,H}\Big(E([0,t_{*}],\e)\cap\Big\{\sup_{t\leq t_{*}}|\gamma_t^N|\geq A\Big\}\Big) 
\leq 
-\frac{C(\beta)A}{2}+\kappa\beta+C(H)
.
\label{eq_small_length_until_t_gamma}
\end{align}
Assuming~\eqref{eq_small_length_until_t_gamma}, 
let us conclude the proof of Lemma~\ref{lemm_tightness_sup_length}. Let $n_*\in\N$ be such that $(n_*-1)t_{*}\leq T\leq n_* t_{*}$ and define a sequence $(b_i)_{i\in\N}$ through:
\begin{align}
b_{0}=0,\qquad \frac{C(\beta)b_{i+1}}{2} = C(\beta)+ b_{i}(\beta+\log 2),\quad i\in\N.
\end{align}
Write then:
\begin{align}
&\Prob^N_{\beta,H}\Big(E([0,T],\e)\cap\Big\{\sup_{t\leq T}|\gamma^N_t|\geq b_{n_*}A\Big\}\Big) \nonumber\\
&\quad\leq  \Prob^N_{\beta,H}\Big(\gamma_{(n_*-1)t_{*}}\in \e,\sup_{t\leq n_{*}t_{*}}|\gamma^N_t|\geq b_{n_*}A, \sup_{t\leq (n_*-1)t_{*}}|\gamma^N_t|< b_{n_*-1}A\Big) \nonumber\\
&\hspace{2cm} +  \Prob^N_{\beta,H}\Big(E([0,(n_*-1)t_{*}],\e)\cap\Big\{\sup_{t\leq (n_*-1)t_{*}}|\gamma^N_t|\geq b_{n_*-1}A\Big\}\Big) \nonumber\\
&\quad\leq \sum_{i=2}^{n_*} \Prob^N_{\beta,H}\Big(\gamma^N_{(i-1)t_{*}}\in \e, \sup_{t\leq it_{*}}|\gamma^N_t|\geq b_{i}A, \sup_{t\leq (i-1)t_{*}}|\gamma^N_t|< b_{i-1}A\Big) \nonumber\\
&\hspace{2cm} + \Prob^N_{\beta,H}\Big(E([0,t_{*}],\e)\cap\Big\{ \sup_{t\leq t_{*}}|\gamma^N_t|\geq b_{1}A\Big\}\Big).
\end{align}
The last probability is estimated by~\eqref{eq_small_length_until_t_gamma}. On the other hand, apply Markov inequality to each term of the sum to find, for $2\leq i \leq n_*$:
\begin{align}
&\Prob^N_{\beta,H}\Big(\gamma^N_{(i-1)t_{*}}\in \e, \sup_{t\leq it_{*}}|\gamma^N_t|\geq b_{i}A,  \sup_{t\leq (i-1)t_{*}}|\gamma^N_t|< b_{i-1}A\Big)\nonumber\\
&\qquad \leq \sup_{\substack{\gamma^N\in \Omega^N_{\text{mic}}\cap \e \\ |\gamma^N|<b_{i-1}A}}\Prob^{\gamma^N}_{\beta,H}\Big(\sup_{t\leq t_{*}}|\gamma^N_t|\geq b_{i}A\Big).
\end{align}
Using~\eqref{eq_small_length_until_t_gamma} and the fact that there is $c>0$ such that the number of curves with $\ell$ edges in $\Omega^N_{\text{mic}}$ is less than $c\ell^4 2^{\ell}$, we find for $2\leq i\leq n_*$:
\begin{align}
&\limsup_{N\rightarrow\infty}\frac{1}{N}\log\Prob^N_{\beta,H}\Big( \sup_{t\leq it_{*}}|\gamma^N_t|\geq b_{i}A, \sup_{t\leq (i-1)t_{*}}|\gamma^N_t|< b_{i-1}A\Big)
\nonumber\\
&\hspace{3cm}
\leq 
-\frac{C(\beta) b_{i}A}{2} + b_{i-1}(\beta+\log 2)A + C(H) 
= 
-C(\beta)A+C(H)
.
\end{align}
This estimate and~\eqref{eq_small_length_until_t_gamma} corresponding to $i=1$ prove~\eqref{eq_bound_sup_length_H} assuming~\eqref{eq_small_length_until_t_gamma}, with $C(\beta,H,T) := b_{n_*}^{-1}$. \\
Let us now prove the short time estimate~\eqref{eq_small_length_until_t_gamma}. Starting again from~\eqref{eq_bound_RD_dans_coro_sous_exp}, one has, for each $T>0$:
\begin{align}
\Prob^N_{\beta,H}\Big(\Big\{\sup_{t\leq T}|\gamma^N_t|\geq A\Big\}\cap E([0,T],\e)\Big) &\leq e^{C(H)N}\E^N_{\beta}\Big[e^{C(H)NT\sup_{t\leq T}|\gamma^N_t|}{\bf 1}_{\sup_{t\leq T}|\gamma^N_t|\geq A}\Big].
\end{align}
By~\eqref{eq_controle_length_H_is0}, this expectation reads:
\begin{align}
\E^N_{\beta}\Big[e^{C(H)NT\sup_{t\leq T}|\gamma^N_t|}{\bf 1}_{\sup_{t\leq T}|\gamma_t|\geq A}\Big] &\leq \int_{A}^{\infty} e^{C(H)N T\lambda }\Prob^N_{\beta}\Big(\sup_{t\leq T}|\gamma^N_t|\geq \lambda\Big)\,d\lambda
\nonumber\\
&\leq \int_{A}^\infty e^{C(H)NT\lambda -c(\beta)N\lambda +|\gamma^{\mathrm{ref}}|\beta N}\, d\lambda
.
\end{align}
Setting $t_{*}(\beta,H) := C(\beta)/(2C(H))$ concludes the proof of~\eqref{eq_small_length_until_t_gamma}, thus of Lemma~\ref{lemm_tightness_sup_length}.
\end{proof}
The following corollary explains how to use Lemma~\ref{lemm_tightness_sup_length} to argue that events with super-exponentially small probability under $\Prob^N_\beta$ also have super-exponentially small probability under the tilted dynamics $\Prob^N_{\beta,H}$ for $H\in\C$. 
Typical examples are the sets $(D_{n,q})_q$ for fixed $n$, see~\eqref{eq_def_D_p_n} and $Z=Z(\beta,H,\epsilon,\delta)$ for $\epsilon\leq\epsilon_0(\delta)$ and $\delta>0$, 
see~\eqref{eq_bound_proba_Z_H_epsilon_A_sec_4}.
\begin{coro}[Sub-exponential estimates for tilted dynamics]\label{coro_change_measure_sous_exp}
For a time $T>0$, let $(\chi_{A,T})_{A>0}\subset E([0,T],\Omega)$ be a family of sets such that, for some $C(\beta)>0$:
\begin{align}
\forall T>0,\forall A>0,\qquad \limsup_{N\rightarrow\infty}\frac{1}{N}\log \Prob^{\nu^N_\beta}_\beta\Big(\chi_{A,T}^c\cap E([0,T],\e)\Big)
\leq 
- C(\beta)A
.\label{eq_hyp_sur_chi_A_T}
\end{align}
Then, for each $H\in\C$ and each time $T>0$, there are constants $C(\beta,H,T),C(H)>0$ (different from those of Lemma~\ref{lemm_tightness_sup_length}) with $t\mapsto C(\beta,H,t)$ increasing, such that:
\begin{align}
\forall T,A>0,\quad 
\limsup_{N\rightarrow\infty}\frac{1}{N}\log\Prob^N_{\beta,H}\Big(\chi_{A,T}^c\cap E([0,T],\e)\Big) 
\leq 
-C(\beta,H,T)A + C(H) + |\gamma^{\mathrm{ref}}|\beta
.
\end{align}
\end{coro}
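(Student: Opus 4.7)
The plan is to bridge from the equilibrium estimate to the tilted one in two stages: first transfer~\eqref{eq_hyp_sur_chi_A_T} to the law $\Prob^N_\beta$ started from $\gamma^{\text{ref},N}$, and then change measure to $\Prob^N_{\beta,H}$ using the Radon--Nikodym bound of Corollary~\ref{coro_bound_RD}, cutting on the length to tame its growth. For the first stage I would reuse the stationarity trick appearing at the start of the proof of Lemma~\ref{lemm_tightness_sup_length}: for any event $X$,
\[
\Prob^N_\beta(X) \leq \nu^N_\beta(\gamma^{\text{ref},N})^{-1}\Prob^{\nu^N_\beta}_\beta(X) = \mathcal Z^N_\beta\, e^{\beta N|\gamma^{\text{ref},N}|}\,\Prob^{\nu^N_\beta}_\beta(X),
\]
with $\mathcal Z^N_\beta$ bounded uniformly in $N$ since $\beta>1>\log 2$. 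Applied to $X = \chi_{A,T}^c\cap E([0,T],\e)$, this at once yields $\limsup_{N\rightarrow\infty}\frac{1}{N}\log \Prob^N_\beta(X)\leq -C(\beta)A + |\gamma^{\text{ref}}|\beta$.

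For the second stage, write $\Prob^N_{\beta,H}(X) = \E^N_\beta\bigl[D^N_{\beta,H}{\bf 1}_X\bigr]$. The difficulty is that Corollary~\ref{coro_bound_RD} controls $D^N_{\beta,H}$ only through $\exp\bigl[C(H)N + C(H)N\int_0^T|\gamma^N_t|dt\bigr]$, which blows up whenever the length is large. I would remedy this by cutting on $L_B := \{\int_0^T|\gamma^N_t|dt\leq B\}$ for a threshold $B=B(A)$ to be chosen. On $L_B\cap E([0,T],\e)$ the RN bound becomes the deterministic $e^{C(H)N(1+B)}$, and combining with the first-stage estimate gives
\[
\E^N_\beta\bigl[D^N_{\beta,H}{\bf 1}_{X\cap L_B}\bigr] \leq \exp\bigl[N\bigl\{-C(\beta)A + C(H)(1+B) + |\gamma^{\text{ref}}|\beta\bigr\} + o(N)\bigr].
\]
On $L_B^c$ I bypass the RN derivative and use $L_B^c\subset \{\sup_{t\leq T}|\gamma^N_t|>B/T\}$, so that the already established~\eqref{eq_bound_sup_length_H} of Lemma~\ref{lemm_tightness_sup_length} directly yields a bound at rate $-C(\beta,H,T)B/T + |\gamma^{\text{ref}}|\beta + C(H)$.

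It remains to balance. Choosing $B = c A$ with $c := C(\beta)/(2C(H))$ makes the first piece decay in $A$ at rate $C(\beta)/2$, while the second decays at rate $c\,C(\beta,H,T)/T$; defining $C_{\text{new}}(\beta,H,T) := \min\bigl\{C(\beta)/2,\;cC(\beta,H,T)/T\bigr\}$ and collecting constants yields the announced bound, with the monotonicity of $t\mapsto C_{\text{new}}(\beta,H,t)$ inherited from that of $C(\beta,H,t)$ in Lemma~\ref{lemm_tightness_sup_length}. No serious obstacle is expected; the whole argument is a variant of the short-time/length-cut method used to prove Lemma~\ref{lemm_tightness_sup_length}. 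The only care required is bookkeeping of the several $|\gamma^{\text{ref}}|\beta N$ contributions (from the stationarity step and from~\eqref{eq_bound_sup_length_H}) to ensure that only a single such term appears in the final estimate, and checking that the splitting parameter $c$ is indeed uniform in $A$.
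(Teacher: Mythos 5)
Your proposal is correct and follows essentially the same route as the paper: split on a length event, apply the Radon--Nikodym bound of Corollary~\ref{coro_bound_RD} on the good piece, apply the length-tail estimate from Lemma~\ref{lemm_tightness_sup_length} on the complement, then tune the cut linearly in $A$. The only cosmetic difference is that you truncate on $\{\int_0^T|\gamma^N_t|dt\le B\}$ whereas the paper truncates on $\{\sup_{t\le T}|\gamma^N_t|\le A'\}$; since the RN bound depends on the time-integrated length and the tail estimate on the supremum, the two cuts are interchangeable up to a factor of $T$, as your inclusion $L_B^c\subset\{\sup_t|\gamma^N_t|>B/T\}$ shows. One small thing worth noting in your favour: the paper's written proof states that "the first probability is controlled by~\eqref{eq_hyp_sur_chi_A_T}" without spelling out the transfer from the equilibrium start $\nu^N_\beta$ to the deterministic start $\gamma^{\text{ref},N}$ — your first stage makes that step explicit and correctly identifies it as the source of the $|\gamma^{\text{ref}}|\beta$ term.
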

\begin{proof}
Let $A'>0$ and write first, using Corollary~\ref{coro_bound_RD}:
\begin{align}
\Prob^N_{\beta,H}\Big(\chi_{A,T}^c\cap E([0,T],\e)\Big) &\leq \E^{N}_{\beta}\Big[{\bf 1}_{\chi_{A,T}^c\cap E([0,T],\e)\cap \{\sup_{t\leq T}|\gamma^N_t|\leq A'\}}D^N_{\beta,H}\Big] + \Prob^N_{\beta,H}\Big(\sup_{t\leq T}|\gamma^N_t|\geq A'\Big)\nonumber\\
&\leq e^{C(H)N(1+A'T)}\Prob^N_\beta\Big(\chi_{A,T}^c\cap E([0,T],\e) \cap\big\{\sup_{t\leq T}|\gamma^N_t|<A'\big\}\Big) \nonumber\\
&\quad+ \Prob^N_{\beta,H}\Big(\sup_{t\leq T}|\gamma_t|\geq A'\Big).\label{eq_bound_length_prob_H_interm}
\end{align}
The first probability is controlled by~\eqref{eq_hyp_sur_chi_A_T}, the second by the tail estimates for the length obtained in Lemma~\ref{lemm_tightness_sup_length}, so we conclude the proof here.
\end{proof}
\subsection{Size of the poles and local equilibrium}\label{subsec_size_pole_local_eq}
In this section, we prove Lemmas~\ref{lemm_size_pole_ds_calcul_action_gen}--\ref{lemm_local_eq_sec_martingales}, 
i.e. we estimate the time integral of the number of blocks $p_1$ in the north pole $P_1$ and of the following term, for any test function $G\in\C$:
\begin{equation}
W^G_t(\gamma^N) 
:= 
\sum_{\substack{x\in P_1(\gamma^N)\\x+{\bf e}^\pm_x\in P_1(\gamma^N)}}\big({\bf 1}_{p_1(\gamma^N)=2,(\gamma^N)^{-,1}\in\Omega^N_{\text{mic}}} - e^{-2\beta}\big) G(t,x),
\qquad 
\gamma^N\in \Omega^N_{\text{mic}},
\ t\geq 0
.
\label{eq_terme_representing_local_eq_at_the_pole}
\end{equation}
\noindent \textbf{Notation:} As we only work with the north pole in the following, we drop the subscript $1$ and simply write $P=P_1$ and $p$ for its number of blocks. Moreover, as we only consider microscopic curves, we remove the superscript $N$ on curves, writing $\gamma$ for $\gamma^N\in\Omega_{\text{mic}}^N$. 
We also write $\nu$ for $\nu^N_\beta$. 
\begin{lemm}\label{lemm_pole_size_and_local_eq_at_the_pole}
Let $T>0$. For each $N\in\N_{\geq 1}$ and each $0<a<N/2$,
\begin{align}
\E^{\nu}_\beta\Big[\exp\Big[aN\int_0^{T}e^{-2\beta}(p(\gamma_t)-1)\, dt\Big]\Big]
\leq 
\exp\Big[NaT + \frac{2Na^2 T}{N-2a}\Big]
,
\label{eq_controle_moment_exponentiuel_pole_size}
\end{align}
so that for each $A>1$:
\begin{equation}
\lim_{N\rightarrow\infty}\frac{1}{N}\log\Prob^N_{\beta} \left( \frac{1}{T}\int_0^{T}e^{-2\beta}(p(\gamma_t)-1) \, dt\geq A\right)  
=
-\infty
.
\label{eq_proba_size_pole_gtr_A}
\end{equation}
Moreover, for each $\delta>0$ and $G\in\C$,
\begin{equation}
\lim_{N\rightarrow\infty} \frac{1}{N}\log\Prob^N_{\beta} \biggr( \bigg|\frac{1}{T}\int_0^{T} W^G_t(\gamma_t) \, dt\bigg|>\delta\bigg)  =-\infty.\label{eq_proba_terme_pole_grtr_epsilon_is_minus_infty}
\end{equation}
By Corollary~\ref{coro_change_measure_sous_exp}, the limits~\eqref{eq_proba_size_pole_gtr_A}--\eqref{eq_proba_terme_pole_grtr_epsilon_is_minus_infty} hold also under $\Prob^{N}_{\beta,H}$ with the additional condition that trajectories belong to $E([0,T],\e)$.
\end{lemm}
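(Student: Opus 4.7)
The plan decomposes into three steps.

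\textbf{Step 1 --- Feynman-Kac and the exponential moment.} By the Feynman-Kac bound for reversible dynamics (Appendix~1.7 of \cite{Kipnis1999}),
\begin{equation*}
\E^{\nu^N_\beta}_\beta\Big[\exp\Big(\!\int_0^T V(\gamma_t)\,dt\Big)\Big] \leq \exp(T\lambda_N(V)),\quad \lambda_N(V) = \sup_{\int f^2 d\nu^N_\beta = 1}\Big\{\!\int V f^2 d\nu^N_\beta - N^2 \mathcal{E}(f,f)\Big\},
\end{equation*}
with $\mathcal{E}(f,f) := -\int f\lcal_\beta f\,d\nu^N_\beta$. Applied with $V = aN e^{-2\beta}(p-1)$, the task reduces to showing $\lambda_N(V) \leq Na + 2Na^2/(N-2a)$. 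The plan is to exploit the growth contribution,
\begin{equation*}
N^2 \mathcal{E}(f,f) \geq \tfrac{N^2 e^{-2\beta}}{2}\sum_\gamma \nu^N_\beta(\gamma)\!\!\sum_{x\text{ growth-eligible}}(f(\gamma^{+,x})-f(\gamma))^2,
\end{equation*}
together with detailed balance $\nu^N_\beta(\gamma)e^{-2\beta} = \nu^N_\beta(\gamma^{+,x})$, which follows from the fact that a growth move increases length by $2/N$. Expanding the square and using this identity lets me rewrite $\int(p-1)e^{-2\beta}f^2\,d\nu^N_\beta$ as a "diagonal" term $\int e^{2\beta} f^2 d\nu^N_\beta(\gamma^{+,x})$ plus a cross-term $-2\int e^{-2\beta}f(\gamma)f(\gamma^{+,x})d\nu^N_\beta(\gamma)$; applying Young's inequality $2ab \leq \eta a^2 + b^2/\eta$ with $\eta$ of order $(N-2a)/a$ absorbs the cross-term into the Dirichlet form and leaves the desired remainder.

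\textbf{Step 2 --- Probability bound via Chebyshev.} Granting~\eqref{eq_controle_moment_exponentiuel_pole_size}, the bound~\eqref{eq_proba_size_pole_gtr_A} will follow by exponential Chebyshev. For $a>0$ small and independent of $N$, and any $A > 1$,
\begin{equation*}
\Prob^{\nu^N_\beta}_\beta\Big(\tfrac{1}{T}\!\int_0^T e^{-2\beta}(p-1)dt \geq A\Big) \leq \exp\big(-aNT(A-1) + O(a^2 T)\big),
\end{equation*}
which is $\exp(-cNT)$ for $a$ chosen small enough that $a(A-1) > 2a^2/(1-2a/N)$. Passage from $\nu^N_\beta$ to the fixed initial condition $\gamma^{\text{ref},N}$ costs only a factor $\exp(cN)$, since $\nu^N_\beta(\gamma^{\text{ref},N}) \geq e^{-cN}$ when $\beta > \log 2$, so super-exponential decay in $N$ is preserved.

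\textbf{Step 3 --- Local equilibrium via a Poisson equation.} For~\eqref{eq_proba_terme_pole_grtr_epsilon_is_minus_infty}, set $\phi(\gamma) := {\bf 1}_{p=2,\,(\gamma)^{-,1}\in\Omega^N_{\text{mic}}} - e^{-2\beta}$. A detailed-balance computation, pairing each growth move at a pole site $x$ of $\gamma$ with the shrink move at the resulting pole-$2$ curve $\gamma^{+,x}$ (whose pole is precisely the two newly added blocks), yields the identity $\E^{\nu^N_\beta}[\phi] = 0$ up to boundary corrections that are absorbed later. Using the irreducibility of the pole-localized subchain --- the novel feature brought by the $e^{-2\beta}$ regrowth --- one constructs $g : \Omega^N_{\text{mic}} \to \R$ solving $\lcal_\beta g = \phi$ with $\|g\|_\infty \leq C(\beta)$ uniformly in the curve outside the pole. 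Applying Dynkin's formula to $N^{-2} G(t,L_1(\gamma_t))\,g(\gamma_t)$ then rewrites $\int_0^T W^G_t(\gamma_t)\,dt$ as boundary terms of size $O(N^{-2})$ plus a martingale $M_T$ of quadratic variation $O(T/N^2)$, and Doob's exponential inequality applied to $\exp(\alpha N M_T)$ for a suitable $\alpha$ yields the super-exponential estimate.

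\textbf{Main obstacle.} The hardest ingredient is the uniform Poisson bound $\|g\|_\infty \leq C(\beta)$ in Step 3: because the rest of the curve evolves on the same diffusive timescale as the pole, one needs a quantitative spectral gap for the pole-localized birth-death chain that is independent of the surrounding configuration. Providing this uniform spectral estimate is the central technical content of Section~\ref{subsec_slope_around_the_poles}, and is precisely the step where the irreducibility added by the $e^{-2\beta}$ regrowth is used in an essential way.
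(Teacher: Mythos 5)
Your Step 1 is essentially the paper's argument: Feynman--Kac plus the variational formula, then the detailed-balance identity $\nu(\gamma)e^{-2\beta}=\nu(\gamma^{+,x})$ pairing each curve with the $p-1$ curves obtained by regrowth at the pole, and a Young/Cauchy--Schwarz step to absorb the cross term into the Dirichlet form (the paper packages this as the bijection estimate of Lemma~\ref{lemm_bijection_argument}, giving $E_{\nu_f}[U]^{1/2}\leq \nu_f(p=2,\cdot)^{1/2}+(2D_N(f^{1/2}))^{1/2}$). Step 2, however, does not prove the stated conclusion: by fixing $a$ small (your condition $a(A-1)>2a^2/(1-2a/N)$ forces $a<(A-1)/2$ up to $O(1/N)$) you obtain $\limsup_N N^{-1}\log\Prob\leq -aT(A-1)<0$, an exponential bound, whereas~\eqref{eq_proba_size_pole_gtr_A} asserts the limit is $-\infty$. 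The correction term $2Na^2T/(N-2a)$ is $o(N)$ for \emph{every} fixed $a$, so the right order of limits is $N\to\infty$ first, then $a\to\infty$; there is no reason to keep $a$ small, and doing so loses the super-exponential rate.

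The serious gap is Step 3. You propose to solve $\lcal_\beta g=\phi$ with $\|g\|_\infty\leq C(\beta)$ uniformly in the configuration away from the pole, and you defer this to a uniform spectral gap for a "pole-localized birth-death chain". But no such autonomous chain exists: the evolution of $p$ is coupled to the adjacent SSEP-like regions (single-block flips at $L_1,R_1$ change $p$ at rates depending on the neighbouring edges), which relax on the same diffusive time scale, so a bounded, pole-localized solution of the Poisson equation is not available — obtaining the \emph{value} of the time average of ${\bf 1}_{p=2}$ is exactly the hard content of Section~\ref{subsec_slope_around_the_poles}, and it is proved there by conditioning and comparison with a two-species zero-range process, not by a Poisson/spectral-gap bound. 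More importantly, \eqref{eq_proba_terme_pole_grtr_epsilon_is_minus_infty} does not require any of this: $W^1={\bf 1}_{p=2,\gamma^{-,1}\in\Omega^N_{\text{mic}}}-(p-1)e^{-2\beta}$ is precisely the difference of the two quantities compared in Lemma~\ref{lemm_bijection_argument}, so writing $x-y=(\sqrt x-\sqrt y)(\sqrt x+\sqrt y)$ gives $|E_{\nu_f}[W^1]|\leq (2D_N(f^{1/2}))^{1/2}[2+(2D_N(f^{1/2}))^{1/2}]$ directly, and the same Feynman--Kac optimization as in Step 1 yields $\sup_f\{aE_{\nu_f}[W^1]-ND_N(f^{1/2})\}\leq 2a^2/(N-2a)$, hence $-\infty$ after $N\to\infty$ then $a\to\infty$. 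In other words, \eqref{eq_proba_terme_pole_grtr_epsilon_is_minus_infty} only asserts that the time average of ${\bf 1}_{p=2}$ equals that of $(p-1)e^{-2\beta}$ (a pure detailed-balance statement), not that either equals $e^{-\beta}$; your route attacks the harder statement and leaves its key estimate unproved. For general $G$ the paper simply redoes the bijection for $G^+$ and $G^-$ separately, picking up an $N^{-1}\|\nabla G\|_\infty E_{\nu_f}[U]$ error from the moving evaluation point.
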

The proof of Lemma~\ref{lemm_pole_size_and_local_eq_at_the_pole} relies on a bijection argument, stated in the following lemma, for which more notations are required.\\
If $\gamma\in\Omega^N_{\text{mic}}$ (recall that we drop the $N$ superscript on microscopic curves), let $p'(\gamma)$ denote the number with centre at height $z_1(\gamma)-\frac{3}{2N}$ in the associated droplet $\Gamma$ (i.e. blocks directly below those that form the poles, 
see Figure~\ref{fig_possible_excroissances}):
\begin{equation}
p'(\gamma) := \Big|\Big\{\text{blocks in }\Gamma\text{ with centre }i \text{ with }i\cdot{\bf b}_{\pi/2} = z_1(\gamma)-\frac{3}{2N}\Big\}\Big|,\label{eq_def_p_prime}
\end{equation}
where $z_1$ is the largest ordinate of points in a curve:
\begin{equation}
\quad z_1(\gamma) := \sup\big\{ x\cdot{\bf b}_{\pi/2}: x\in\gamma\big\}.\label{eq_def_y_max_sec_6}
\end{equation}
Write $E_\nu$ for the expectation under the static measure $\nu=\nu^N_\beta$ and $E_{\nu_f}$ for the expectation under $f\nu$ when $f$ is a density for $\nu$ and $\nu_f$ denotes the associated probability. 
Define the Dirichlet form $D_N$ of the contour dynamics:
\begin{equation}
D_N(g) := -E_{\nu}\Big[g \lcal_\beta g\Big] = \frac{1}{2}\sum_{\gamma,\tilde\gamma\in\Omega^N_{\text{mic}}}\nu(\gamma)c(\gamma,\tilde\gamma)\big[g(\gamma)-g(\tilde\gamma)\big]^2,\qquad g:\Omega^N_{\text{mic}}\rightarrow\R.\label{eq_def_Dirichlet_form}
\end{equation}
\begin{lemm}\label{lemm_bijection_argument}
Let $f$ be a density with respect to the contour measure $\nu$. Then, for any integer $A\geq 2$,
\begin{equation}
\Big[\nu_f\big(p=2,\gamma^{-,1}\in\Omega^N_{\text{mic}}, p'\geq A\big)^{1/2} - E_{\nu_f}\big[(p-1)e^{-2\beta}{\bf 1}_{p\geq A}\big]^{1/2}\Big]^2 \leq 2D_N(f^{1/2}).\label{eq_majoration_carre_terme_de_pole}
\end{equation}
The indicator function ${\bf 1}_{\gamma^{-,1}\in\Omega^N_{\text{mic}}}$ in the first probablity ensures that the deletion of a pole is a dynamically allowed move 
(it is not true for all curves as the point $0$ must belong to droplets associated with curves $\gamma\in\Omega^N_{\text{mic}}$ 
and deleting the north pole could make this fail).\\
Equation~\eqref{eq_majoration_carre_terme_de_pole} also holds with $p'\leq A, p\leq A$ instead of $p'\geq A,p\geq A$ respectively in the probability and in the expectation.
\end{lemm}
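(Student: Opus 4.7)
The proof hinges on a bijection between pairs of configurations that are coupled by the pole growth/deletion dynamics. Define
\[
\Phi\colon \gamma\ \longmapsto\ (\gamma^{-,1},\,x_\gamma),
\]
where $x_\gamma$ is the unique interior vertex of $P_1(\gamma^{-,1})$ for which $(\gamma^{-,1})^{+,x_\gamma}=\gamma$. First, I would verify that $\Phi$ is a bijection from
$\{\gamma\in\Omega^N_{\text{mic}}:p(\gamma)=2,\,\gamma^{-,1}\in\Omega^N_{\text{mic}},\,p'(\gamma)\ge A\}$
onto
$\{(\tilde\gamma,x):\tilde\gamma\in\Omega^N_{\text{mic}},\,p(\tilde\gamma)\ge A,\,x\in P_1(\tilde\gamma)\cap V(\tilde\gamma)\setminus\{L_1(\tilde\gamma),R_1(\tilde\gamma)\}\}$,
with inverse $(\tilde\gamma,x)\mapsto\tilde\gamma^{+,x}$. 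Since a pole of $p(\tilde\gamma)$ blocks has exactly $p(\tilde\gamma)-1$ interior vertices, this bijection is precisely what produces the $(p-1)$ factor in the expectation on the right of the claimed inequality.

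From the elementary length identity $|\tilde\gamma^{+,x}|-|\tilde\gamma|=2/N$ (two horizontal edges become internal while four new edges appear along the created two-block pole), the definition of $\nu$ yields $\nu(\tilde\gamma^{+,x})=e^{-2\beta}\nu(\tilde\gamma)$. Hence, setting $g:=f^{1/2}$ and re-indexing via $\Phi$,
\begin{align*}
a&:=\nu_f\bigl(p=2,\gamma^{-,1}\in\Omega^N_{\text{mic}},p'\ge A\bigr)=e^{-2\beta}\sum_{\substack{\tilde\gamma\in\Omega^N_{\text{mic}}\\p(\tilde\gamma)\ge A}}\nu(\tilde\gamma)\sum_x g(\tilde\gamma^{+,x})^2,\\
b&:=E_{\nu_f}\bigl[(p-1)e^{-2\beta}{\bf 1}_{p\ge A}\bigr]=e^{-2\beta}\sum_{\substack{\tilde\gamma\in\Omega^N_{\text{mic}}\\p(\tilde\gamma)\ge A}}\nu(\tilde\gamma)\sum_x g(\tilde\gamma)^2,
\end{align*}
where $x$ ranges over the interior vertices of $P_1(\tilde\gamma)$. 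Viewing $a,b$ as weighted $\ell^2$-norms with weights $w(\tilde\gamma,x)=e^{-2\beta}\nu(\tilde\gamma)$, the reverse triangle inequality gives
\[
\bigl(\sqrt{a}-\sqrt{b}\bigr)^2\le\sum_{\tilde\gamma,x}e^{-2\beta}\nu(\tilde\gamma)\bigl[g(\tilde\gamma^{+,x})-g(\tilde\gamma)\bigr]^2.
\]

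Finally, I would compare the right-hand side with $D_N(g)$. By reversibility, each unordered pair $\{\tilde\gamma,\tilde\gamma^{+,x}\}$ contributes $\nu(\tilde\gamma)e^{-2\beta}[g(\tilde\gamma)-g(\tilde\gamma^{+,x})]^2$ to $D_N(g)$ (the two directed contributions coincide, absorbing the $\tfrac12$ in the definition of $D_N$), so the displayed sum is bounded by $D_N(g)\le 2D_N(f^{1/2})$, yielding the claim. The variant with $\ge A$ replaced by $\le A$ is identical, simply restricting to the complementary range of $p(\tilde\gamma)$. The main obstacle, though largely notational, is setting up $\Phi$ correctly: one must check that the growth operation $(\cdot)^{+,x}$ produces a valid element of $\Omega^N_{\text{mic}}$ with two-block north pole, and that $(\cdot)^{-,1}$ inverts it, both of which reduce to inspecting the microgeometry at the north pole as pictured in Figure~\ref{fig_image_possible_jumps_beta_dynamics}.
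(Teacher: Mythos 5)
Your argument is correct, and the bijection $\gamma\mapsto(\gamma^{-,1},x_\gamma)$ you construct is precisely the one underlying the paper's proof (the paper indexes the preimages as $\gamma^{(1)},\dots,\gamma^{(p-1)}$, which is your interior-vertex enumeration). The difference is purely in the algebraic packaging of the final estimate: the paper proves the $A=2$ case and claims the general case is similar, whereas you handle general $A$ directly; and the paper reaches the squared-difference bound by adding and subtracting the mixed term $2f^{1/2}(\gamma)f^{1/2}(\gamma^{(n)})$, bounding the resulting Dirichlet-form piece, and then applying Cauchy--Schwarz twice (once over $n$, once over $\gamma$) to reassemble $2\sqrt{ab}$, while you reformulate both sides as weighted $\ell^2$-norms $\sqrt{a}=\|u\|$, $\sqrt{b}=\|v\|$ with weights $e^{-2\beta}\nu(\tilde\gamma)$ and invoke the reverse triangle inequality $(\|u\|-\|v\|)^2\le\|u-v\|^2$ in one stroke. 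The two computations are equivalent — your double Cauchy--Schwarz \emph{is} the reverse triangle inequality unpacked — but your version is more economical and makes the mechanism transparent. One small remark: since $g=f^{1/2}$, your bound actually gives $\|u-v\|^2\le D_N(f^{1/2})$ (the sum captures only the pole growth/deletion part of the full Dirichlet form, once the two directed contributions are merged via reversibility), so the factor of $2$ in the lemma's statement is unnecessary slack in both your write-up and the paper's; neither is wrong, just conservative.
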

\begin{proof}
We prove the result with $A=2$ (i.e. without constraint on $p'$), the general case is similar. 
Fix a density $f$ for $\nu$ and define $U$ on $\Omega^N_{\text{mic}}$ as follows:
\begin{equation}
\forall \gamma\in \Omega^N_{\text{mic}},\qquad U(\gamma) = e^{-2\beta}(p(\gamma)-1).\label{eq_def_U}
\end{equation}
Let us first prove that $\nu_f(p=2,\gamma^{-,1}\in\Omega^N_{\text{mic}})$ and $E_{\nu_f}[U]$ are comparable, up to an error that can be expressed in terms of the Dirichlet form $D_N(f^{1/2})$.\\
\begin{figure}[H]
\begin{center}
\includegraphics[width=12cm]{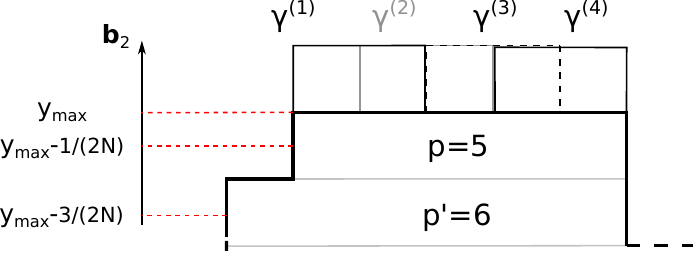} 
\caption{Neighbourhood of the north pole of a curve $\gamma \in \Omega^N_{\text{mic}}$ (thick line) and the $\gamma^{(n)}$, $n\leq p-1 = 4$. $\gamma^{(3)}$ is the curve $\gamma$ to which the two blocks delimited by dashed lines are added. 
Conversely, any of the $\gamma^{(n)}$, $n\leq 4$ is in $\{p=2,\gamma^{-,1}\in\Omega^N_{\text{mic}}\}$ and deleting the two blocks constituting their poles turns them into $\gamma$. 
The curve $\gamma$ has $p=5$ blocks in the pole, corresponding to blocks with centre at height $z_1-\frac{1}{2N}$, 
and $p'=6$ blocks on the level below, i.e. with centre at height $z_1-\frac{3}{2N}$.\label{fig_possible_excroissances}}
\end{center}
\end{figure}
Each $\gamma\in \Omega^N_{\text{mic}}$ can be turned into any one of the curves $\gamma^{(1)},...,\gamma^{(p-1)}$ with two blocks added atop the north pole, where $\gamma^ {(n)}$ is identical to $\gamma$ except that two blocks sitting on the edges $n,n+1$ are added, counting the edges from the left extremity of the pole (see Figure~\ref{fig_possible_excroissances}). 
Note that the $\gamma^{(n)}$ correspond to the $\gamma^{+,x}$ with $x,x+{\bf e}^\pm_x \in P$ in Definition~\ref{def_contour_dynamics}, 
where $n$ stands for the number of the block at $x$.\\
Conversely, the north pole of each curve $\tilde\gamma$ with $p=2$ and $\tilde\gamma^{-,1}\in\Omega^N_{\text{mic}}$ can be deleted 
to obtain a curve $\gamma = (\tilde\gamma)^{-,1}$ which is still in $\Omega^N_{\text{mic}}$. 
The resulting curve $\gamma$ has length $|\gamma| = |\tilde\gamma|-2/N$. 
The same curve $\gamma$ can be obtained $p-1=p(\gamma)-1$ times by deleting the pole of size $2$ of a certain $\tilde\gamma$, 
with these $\tilde\gamma$ corresponding to curves in $\{\gamma^{(n)}:1\leq n\leq p-1\}$, see Figure~\ref{fig_possible_excroissances}. 
Thus:
\begin{align}
\nu_f\big(p=2,\gamma^{1,-}\in\Omega^N_{\text{mic}}\big) &= \sum_{\tilde\gamma\in \{p=2,\tilde\gamma^{-,1}\in\Omega^N_{\text{mic}}\}} \nu(\tilde\gamma)f(\tilde\gamma) \nonumber\\
&= \sum_{\tilde\gamma\in \{p=2,\tilde\gamma^{-,1}\in\Omega^N_{\text{mic}}\}}\sum_{\gamma\in \Omega^N_{\text{mic}}} {\bf 1}_{\{\exists n\leq p-1:\tilde\gamma = \gamma^{(n)}\}}\nu(\gamma) e^{-2\beta}f(\gamma^{(n)}) \nonumber\\
&=\sum_{\gamma\in \Omega^N_{\text{mic}}} \nu(\gamma) e^{-2\beta}\sum_{n=1}^{p-1}f(\gamma^{(n)}).
\end{align}
Add and subtract the quantities needed to bound the second line by the Dirichlet form $D_N(f^{1/2})$:
\begin{align}
\nu_f\big(p=2,\gamma^{-,1}\in\Omega^N_{\text{mic}}\big)&= \sum_{\gamma\in \Omega^N_{\text{mic}}}\nu(\gamma) e^{-2\beta}\sum_{n=1}^{p-1} \big[f(\gamma^{(n)}) + f(\gamma) -2f^{1/2}(\gamma)f^{1/2}(\gamma^{(n)})\big] \nonumber\\
&\hspace{1cm}- \sum_{\gamma\in \Omega^N_{\text{mic}}}\nu(\gamma)e^{-2\beta}\Big[(p-1)f(\gamma) - 2\sum_{n=1}^{p-1}f^{1/2}(\gamma)f^{1/2}(\gamma^{(n)})\Big].\label{eq_line_to_estimate_bijec_argument_size_pole}
\end{align}
To estimate the second line of~\eqref{eq_line_to_estimate_bijec_argument_size_pole}, apply Cauchy-Schwarz inequality to the sum $\sum_{k=1}^{p-1 }$ to obtain:
\begin{align}
\nu_f\big(p=2,\gamma^{-,1}\in\Omega^N_{\text{mic}}\big)&\leq 2D_N(f^{1/2}) - E_{\nu_f}\big[e^{-2\beta} (p-1)\big] \nonumber\\
&\qquad + 2\sum_{\gamma\in \Omega^N_{\text{mic}}}\nu(\gamma)e^{-2\beta}(p-1)^{1/2}f^{1/2}(\gamma)\Big[\sum_{n=1}^{p-1}f(\gamma^{(n)}) \Big]^{1/2}.
\end{align}
Recall the definition of $U$ from~\eqref{eq_def_U} and again use Cauchy-Schwarz on the sum on $\gamma\in \Omega^N_{\text{mic}}$ to find:
\begin{align}
\nu_f\big(p=2,\gamma^{-,1}\in\Omega^N_{\text{mic}}\big)&\leq 2D_N(f^{1/2}) - E_{\nu_f}[U] \nonumber\\
&\qquad+2\Big[\sum_{\gamma\in \Omega^N_{\text{mic}}}\nu(\gamma)e^{-2\beta}(p-1)f(\gamma)\Big]^{1/2}\Big[\sum_{\gamma\in \Omega^N_{\text{mic}}}\nu(\gamma)e^{-2\beta}\sum_{n=1}^{p-1}f(\gamma^{(n)}) \Big]^{1/2}\nonumber\\
&= 2D_N(f^{1/2}) - E_{\nu_f}[U] + 2 E_{\nu_f}[U]^{1/2}\nu_f\big(p=2,\gamma^{-,1}\in\Omega^N_{\text{mic}}\big)^{1/2}.
\end{align}
Putting things together yields the claim of the lemma:
\begin{equation}
\Big[\nu_f\big(p=2,\gamma^{-,1}\in\Omega^N_{\text{mic}}\big)^{1/2} -E_{\nu_f}[U]^{1/2}\Big]^2 
\leq 
2D_N(f^{1/2})
.
\label{eq_bound_difference_pole_as_fct_D}
\end{equation}
\end{proof}
\begin{proof}[Proof of Lemma~\ref{lemm_pole_size_and_local_eq_at_the_pole}]
We now explain how to obtain Lemma~\ref{lemm_pole_size_and_local_eq_at_the_pole} from Lemma~\ref{lemm_bijection_argument}. We need to do two things:
\begin{enumerate}
	\item Bound from above the probabilities appearing in the claim by an expression involving the measure $\nu_f$ as in~\eqref{eq_majoration_carre_terme_de_pole}.
	\item Prove that~\eqref{eq_proba_terme_pole_grtr_epsilon_is_minus_infty} holds for $W_\cdot^G$, with $G\in\C$. The first point only gives the result for ${\bf 1}_{p=2,\gamma^{-,1}\in\Omega^N_{\text{mic}}} - U$, which corresponds to $W^1_\cdot$;.
\end{enumerate}
The first point relies on a classical Feynman-Kac estimate. Since a similar reasoning is used repeatedly in the article, we present it here once and for all. The second point, however, requires some care; in order to apply the bounds of Lemma~\ref{lemm_bijection_argument} to integrals depending on a function $G\in\C$.\\

Let us explain the general idea for the first point using~\eqref{eq_proba_size_pole_gtr_A} as an example. We wish to estimate:
\begin{align}
\Prob^N_{\beta}\left(\frac{1}{T}\int_0^{T} e^{-2\beta}(p(\gamma_t)-1)\, dt \geq A\right).
\end{align}
We do so using Feynman-Kac formula. 
Let $a\in(0,N/2)$ and apply the exponential Chebychev inequality to obtain
\begin{align}
\frac{1}{N}\log\Prob^N_{\beta} &\left(\frac{1}{T}\int_0^{T} e^{-2\beta}(p(\gamma_t)-1)\, dt \geq A\right) 
\label{eq_prob_size_pole_exp_tchebycheff}\\
&\leq 
-a A T+ \frac{1}{N}\log \E^N_{\beta}\left[\exp\left[aN\int_0^{T}  e^{-2\beta}(p(\gamma_t)-1)\,  dt\right]\right]\nonumber
.
\end{align}
Consider the generator $N^2\lcal_{\beta} + aNU$, with $U$ defined in~\eqref{eq_def_U}. 
This generator is self-adjoint for the contour measure $\nu_\beta$ and Feynman-Kac inequality plus a representation theorem for the largest eigenvalue of a symmetric operator (Lemma A.1.7.2 in \cite{Kipnis1999}) yield that, 
with the equilibrium measure $\nu = \nu^N_\beta$ as an initial condition:
\begin{align}
\E^{\nu}_\beta\biggr[\exp\left[aN\int_0^{T} U(\gamma_t)\, dt\right] \biggr]
\leq 
\exp \bigg[\int_0^T\sup_{f\geq 0 : E_{\nu}[f]= 1} \Big\{aNE_{\nu_f}[U] - N^2D_N(f^{1/2})\Big\}\, dt\bigg]
.
\label{eq_variational_formula_largest_eigenvalue}
\end{align}
In the present, $G\equiv 1$ case, the supremum in~\eqref{eq_variational_formula_largest_eigenvalue} does not depend on time. \\
One can bound $\Prob^N_{\beta}$, where the initial condition is the deterministic curve $\gamma^{N,\mathnormal{ref}}$, by the probability $\Prob^\nu_\beta$ starting from the equilibrium measure $\nu$:
\begin{equation}
\Prob^N_{\beta}(\cdot) \leq \mathcal Z^N_{\beta}e^{\beta N|\gamma^{N,ref}|}\Prob^{\nu}_\beta(\cdot) \leq e^{C\beta N}\Prob^{\nu}_\beta(\cdot),\label{eq_proba_equilibrium_bounds_proba_PNrbeta}
\end{equation}
for some constant $C>0$. Using~\eqref{eq_variational_formula_largest_eigenvalue}--\eqref{eq_proba_equilibrium_bounds_proba_PNrbeta},~\eqref{eq_prob_size_pole_exp_tchebycheff} becomes:
\begin{align}
\frac{1}{N}\log\Prob^N_{\beta}\biggr(\frac{1}{T}\int_0^{T} U(\gamma_t)dt \geq A\biggr) &\leq -aAT+ C\beta +  T\sup_{\substack{ f\geq 0, \\ E_\nu[f]=1}} \Big\{aE_{\nu_f}[U] - N D_N(f^{1/2})\Big\}.\label{eq_var_formula_for_pole_size}
\end{align}
At this point, we can use Lemma~\ref{lemm_bijection_argument} to bound the supremum in the right-hand side of~\eqref{eq_var_formula_for_pole_size}: 
by~\eqref{eq_majoration_carre_terme_de_pole}, 
\begin{equation}
E_{\nu_f}[U]\leq \Big[1+ (2D_N(f^{1/2}))^{1/2}\Big]^2.\label{eq_bound_U}
\end{equation}
As a result, the supremum in~\eqref{eq_var_formula_for_pole_size} satisfies (recall $a<N/2$):
\begin{align}
\sup_{\substack{f\geq 0, \\ E_{\nu}[f]=1}}\Big\{aE_{\nu_f}[U] -ND_N(f^{1/2})\Big\} 
&\leq 
\sup_{u\geq 0}\big\{a + 2a\sqrt{2u} + 2au-Nu\big\}
= 
a + \frac{2a^{2}}{N-2a}
.
\end{align}
Injecting this result in~\eqref{eq_variational_formula_largest_eigenvalue} gives~\eqref{eq_controle_moment_exponentiuel_pole_size}. On the other hand, injecting it in~\eqref{eq_var_formula_for_pole_size}, then taking $N$ large, then $a$ large concludes the proof of~\eqref{eq_proba_size_pole_gtr_A}. \\
We claim that Equation~\eqref{eq_proba_terme_pole_grtr_epsilon_is_minus_infty} in the $G \equiv 1$ case follows similarly. 
Indeed, for $W^1$, 
using the identity $x-y = (\sqrt{x} - \sqrt{y})(\sqrt{x}+\sqrt{y})$ valid for $x,y\geq 0$, 
the quantity in the supremum in~\eqref{eq_var_formula_for_pole_size} is now $a\E_{\nu_f}[W^1] - N D_N(f^{1/2})$, where by definition:
\begin{align}
W^1 &= \sum_{\substack{x\in P \\ x+{\bf e}^\pm_x \in P}}\big[{\bf 1}_{p=2,\gamma^{-,1}\in\Omega^N_{\text{mic}}} - e^{-2\beta}\big] \nonumber\\
&= {\bf 1}_{p=2,\gamma^{-,1}\in\Omega^N_{\text{mic}}} - (p-1) e^{-2\beta} = {\bf 1}_{p=2,\gamma^{-,1}\in\Omega^N_{\text{mic}}} - U.
\end{align}
As a result, $E_{\nu_f}[W^1]$ can be bounded from above as follows using~\eqref{eq_bound_difference_pole_as_fct_D} and~\eqref{eq_bound_U}:
\begin{align}
\big|E_{\nu_f}\big[W^1\big]\big| &= \Big|\nu_f\big(p=2,\gamma^{-,1}\in\Omega^N_{\text{mic}}\big)^{1/2} - E_{\nu_f}[U]^{1/2}\Big|\Big[\nu_f\big(p=2,\gamma^{-,1}\in\Omega^N_{\text{mic}}\big)^{1/2} + E_{\nu_f}[U]^{1/2}\Big]\nonumber\\
&\leq (2D_N(f^{1/2}))^{1/2}\big[2+(2D_N(f^{1/2}))^{1/2}\big].
\end{align}
Elementary computations again yield:
\begin{align}
\sup_{\substack{f\geq 0, \\ E_{\nu}[f]=1}}\Big\{aE_{\nu_f}[W^1] -ND_N(f^{1/2})\Big\} &\leq \frac{2a^2}{N-2a}.
\end{align}
Using this estimate in~\eqref{eq_var_formula_for_pole_size}, with $W^1$ there instead of $U$; taking the large $N$, then the large $a$ limits conclude the proof of the first point.\\

Let us now deal with the second point, i.e. proving~\eqref{eq_proba_terme_pole_grtr_epsilon_is_minus_infty} for any $G$ and not just $G\equiv 1$. As $G$ may not have constant sign, one cannot directly use the bounds in the proof of Lemma~\ref{lemm_bijection_argument}. However, if $G$ is positive, it is not complicated to repeat the bijection argument of Lemma~\ref{lemm_bijection_argument} to obtain, for each $t\leq T$:
\begin{align}
\bigg[E_{\nu_f}\Big[e^{-2\beta}\sum_{\substack{x\in P\\x+{\bf e}^\pm_x\in P }}  G(t,x)\Big]^{1/2} - E_{\nu_f}&\Big[{\bf 1}_{p=2,\gamma^{-,1}\in\Omega^N_{\text{mic}}} \sum_{\substack{x\in P\\x+{\bf e}^\pm_x\in P }} G(t,x)\Big]^{1/2}\bigg]^2\nonumber\\
&\leq \|G_t\|_\infty D_N(f^{1/2}) + \frac{\|\nabla G_t\|_\infty}{N}E_{\nu_f}[U],\label{eq_local_eq_pole_G_positive}
\end{align}
where the second term in the right-hand side comes from the fact that the point at which $G_t$ is evaluated depends on the position of the pole. 
Recall also that the summation on $x\in P$ such that $x+{\bf e}^\pm_x\in P$ is just a way of enumerating all positions where two blocks can appear atop the pole. \\
For general $G\in\C$, the result then follows by applying~\eqref{eq_local_eq_pole_G_positive} to the positive and negative parts $G^+$ and $G^-$ of $G$, i.e. $G := G^+-G^-$ with $G^+,G^-\geq 0$.
\end{proof}
\subsection{Convergence of the ${\bf 1}_{p_k=2}$ term at fixed $\beta$ and slope around the poles}\label{subsec_slope_around_the_poles}
This section is devoted to the proof of Proposition~\ref{prop_value_slope_at_poles}: poles act as reservoirs that fix the value $e^{-\beta}$ of the slopes $\xi^{\pm,\epsilon N}_{L_1},1-\xi^{\pm,\epsilon N}_{L_1}$ at the poles. 

We prove this statement in several steps. 
First, we explain how to use the condition that trajectories belong to $E([0,T],\e)$ to project the contour dynamics onto a local one. 
This is a key technical argument to compare the contour dynamics to simpler $1$-dimensional ones. 

We then prove that the ${\bf 1}_{p_k=2}$ term fixes the slope around the poles, in the sense that the time integrals of ${\bf 1}_{p_k=2}$ and $\xi_{L_k}^{\pm,\epsilon N}$ are close, see Section~\ref{subsec_1_pi2_is_slope_around_poles}. This should not come as a surprise if one remembers that, in a Symmetric Simple Exclusion Process (SSEP) with reservoirs, the density close to the reservoirs is fixed. 
The time average of ${\bf 1}_{p_k=2}$ is then proven to be equal to $e^{-\beta}$ in Section~\ref{sec_value_slope_at_the_pole}. 
Preliminary microscopic estimates used in this computation are carried out in Section~\ref{subsec_compactness_estimate_1pis2}.
\subsubsection{Turning the contour dynamics into a local dynamics}\label{sec_turning_contour_into_local_dyn}
The next lemma shows that if trajectories belong to the set $E([0,T],\e)$ where the contour dynamics is local (recall the discussion following Definition~\ref{def_CI}), 
then exponential moments of time-integrated observables can be estimated in terms of quantities defined on the effective state space $\e$ only. 
This Lemma is proven in Section~\ref{sec_local_dynamics}.
\begin{lemm}[Projection onto a local dynamics in the effective state space $\e$]\label{lemm_projec_dynamics_onto_local}
Let $\psi:[0,T]\times \Omega^N_{\text{mic}}\rightarrow \R$ be bounded. Then, for some $C= C(\gamma^{\mathrm{ref}})>0$:
\begin{align}
\frac{1}{N}\log \E^ N_{\beta}&\bigg[{\bf 1}_{\gamma^N_\cdot\in E([0,T],\e)}\exp\bigg[N\int_0^ {T} \psi(t,\gamma^N_t)\, dt\bigg]\bigg]\nonumber\\
&\leq 
C\beta +\int_0^ {T}\sup_{f\geq 0 : \nu_f(\e) = 1}\Big\{ \E_{\nu_f}\big[\psi(t,\cdot)\big] - ND_N(f^{1/2})\Big\}\, dt
.
\label{eq_projec_dynamics_into_local}
\end{align}
\end{lemm}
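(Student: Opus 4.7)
The plan is a three-step reduction to the standard Rayleigh--Ritz variational bound, with the constraint $\gamma^N_\cdot\in E([0,T],\e)$ enforced by killing the dynamics at the boundary $\partial\e$.

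First, since $\beta>\log 2$ the partition function $\mathcal Z^N_\beta$ of~\eqref{eq_def_nu_beta} is bounded in $N$, and $|\gamma^{\text{ref},N}|\leq |\gamma^{\text{ref}}|+O(N^{-1})$, so $\nu(\gamma^{\text{ref},N})\geq c(\beta)\,e^{-\beta N|\gamma^{\text{ref}}|}$ for all $N$ large. Consequently, for any $F\geq 0$,
\begin{align*}
\E^N_\beta[F] \leq \nu(\gamma^{\text{ref},N})^{-1}\E^\nu_\beta[F] \leq e^{C(\gamma^{\text{ref}})\beta N}\,\E^\nu_\beta[F],
\end{align*}
and it is enough to prove the same estimate starting from the invariant measure $\nu$.

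Second, the indicator ${\bf 1}_{\gamma^N_\cdot\in E([0,T],\e)}$ coincides with ${\bf 1}_{\tau_\e>T}$, where $\tau_\e$ denotes the first exit time from $\e$. Combined with $\nu(\e)\leq 1$, this reduces the problem to bounding the same exponential moment under the dynamics \emph{killed} upon exiting $\e$, started from $\nu|_\e$, whose generator is
\begin{align*}
\lcal^{\e,k}_\beta g(\gamma) = \sum_{\tilde\gamma\in\e}c(\gamma,\tilde\gamma)[g(\tilde\gamma)-g(\gamma)] - g(\gamma)\sum_{\tilde\gamma\notin\e}c(\gamma,\tilde\gamma),\qquad \gamma\in\e.
\end{align*}
Reversibility of $\lcal_\beta$ with respect to $\nu$ makes this operator self-adjoint on $L^2(\nu|_\e)$, and its Dirichlet form $D^{\e,k}_N(g):=-\langle g,\lcal^{\e,k}_\beta g\rangle_{\nu|_\e}$ satisfies the identity
\begin{align*}
D^{\e,k}_N\bigl(\sqrt{\nu(\e)\,f}\,\bigr) = D_N(\sqrt f)
\end{align*}
for every $\nu$-density $f$ with $\nu_f(\e)=1$. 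This is seen by splitting the double sum defining $D_N(\sqrt f)$ into pairs $(\gamma,\tilde\gamma)$ both in $\e$ (which give the bulk term of $D^{\e,k}_N$) and pairs crossing $\partial\e$ (which, after using reversibility to identify the two orientations, reproduce the killing term); pairs outside $\e$ contribute zero since $f=0$ there.

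Third, apply the classical Feynman--Kac / Rayleigh--Ritz bound (Lemma~A.1.7.2 in~\cite{Kipnis1999}) to the time-inhomogeneous self-adjoint operator $N^2\lcal^{\e,k}_\beta + N\psi(t,\cdot)$ on $L^2(\nu|_\e)$: letting $u_t$ solve the associated evolution equation with $u_0=\mathbf 1$, a Cauchy--Schwarz estimate yields
\begin{align*}
\E^{\nu|_\e,k}_\beta\Bigl[\exp\Bigl[N\int_0^T\psi(t,\gamma^N_t)dt\Bigr]\Bigr] = \langle \mathbf 1, u_T\rangle_{\nu|_\e}\leq \|u_T\|_{L^2(\nu|_\e)} \leq \exp\Bigl[\int_0^T\lambda_t\, dt\Bigr],
\end{align*}
with $\lambda_t = \sup_{\|g\|_{L^2(\nu|_\e)}=1}\{N\E_{\nu|_\e}[\psi(t,\cdot)g^2]-N^2 D^{\e,k}_N(g)\}$. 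The substitution $g^2=\nu(\e)f$ with $f$ a $\nu$-density satisfying $\nu_f(\e)=1$ converts $\lambda_t/N$, via the Dirichlet-form identity, into $\sup\{\E_{\nu_f}[\psi(t,\cdot)] - N D_N(\sqrt f)\}$, and combining with Step~1 yields~\eqref{eq_projec_dynamics_into_local}.

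The main obstacle is the Dirichlet-form identity of Step~2, where all specifics of restricting to $\e$ are absorbed: the killing contribution $g(\gamma)^2\sum_{\tilde\gamma\notin\e}c(\gamma,\tilde\gamma)$ in $D^{\e,k}_N$ must match exactly the boundary contribution to $D_N(\sqrt f)$ from pairs $(\gamma\in\e,\tilde\gamma\notin\e)$, after invoking reversibility of $\nu$ to fold the symmetric pair $(\gamma\notin\e,\tilde\gamma\in\e)$ back on to it. Once this identity is established, the rest is a routine application of the Feynman--Kac variational formula already exploited repeatedly in Section~\ref{app_behaviour_pole}.
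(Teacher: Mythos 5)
Your proof is correct and takes essentially the same route as the paper. The only difference is cosmetic: the paper works with the conservative dynamics restricted to $\e$ and moves the outgoing flux $Q_\e(\gamma)=\sum_{\tilde\gamma\notin\e}c(\gamma,\tilde\gamma)$ into the Feynman--Kac potential (via a Radon--Nikodym derivative between the restricted and full dynamics), whereas you absorb $Q_\e$ into the killing term of a sub-Markovian Dirichlet form; in both cases the heart of the argument is the identity $D_N(\sqrt f)=D_\e(\sqrt f)+E_\nu[f\,Q_\e]$ for $f$ supported on $\e$, which is exactly the splitting you verify in Step~2 and which the paper establishes in display~\eqref{eq_link_two_dirichlet_forms_lemm_6_3}.
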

\begin{rmk}
Equation~\eqref{eq_projec_dynamics_into_local} looks like a standard Feynman-Kac estimate. Note however that the supremum in~\eqref{eq_projec_dynamics_into_local} is on densities with full support in $\e$. In general, if $f$ is a $\nu$-density, there is no way to control $D_N(f^{1/2})$ by $D_N\big(f^{1/2}{\bf 1}_{\e}\big)$. Indeed, if $\tilde f = f{\bf 1}_{\e}$, $D_N(\tilde f^{1/2})$ contains terms of the form:
\begin{equation}
\sum_{\substack{\gamma^N\in \Omega^N_{\text{mic}}\cap \e \\ \tilde\gamma^N\notin \e}}\big[\nu(\gamma^N)c(\gamma^N,\tilde\gamma^N) f(\gamma^N)+ \nu(\tilde\gamma^N)c(\tilde\gamma^N,\gamma^N) f(\gamma^N)\big],
\end{equation}
which have a priori no reason to be comparable to differences $[f(\gamma^N)^{1/2} - f(\tilde\gamma^N)^{1/2}]^2$.

Note also that Lemma~\ref{lemm_projec_dynamics_onto_local} is not a statement about the contour dynamics \emph{conditioned} to stay inside $\e$, but about the \emph{full} dynamics. This is an important point: the jump rates of a conditioned dynamics would be non-local, whereas we really need locality to later project the dynamics onto $1$-dimensional particle dynamics.\demo 
\end{rmk}
\subsubsection{The ${\bf 1}_{p=2}$ term coincides with the slope around the pole}\label{subsec_1_pi2_is_slope_around_poles}
Inside each region, the contour dynamics has the same updates as an SSEP, as explained in Section~\ref{sec_heuristics} and presented more thoroughly in the proof of Lemma~\ref{lemm_1_block_2_block_pour_pis2} below. 
Here, we treat the poles as the extremal sites of an SSEP, 
viewing ${\bf 1}_{p_1=2}$ as the edge state of the first (in region $1$) or last (in region $4$) site of a SSEP, 
see~\eqref{eq_observation_1_p_is_2_is_occupation_number} below. 
We use this observation to prove estimates of the slope at the pole in terms of ${\bf 1}_{p_1=2}$.  
Recall that, for $\gamma^N\in\Omega^N_{\text{mic}}$, $x\in V(\gamma^N)$ and $\ell\in\N_{\geq 1}$:
\begin{align}
\xi_x^{+, \ell} 
= 
\frac{1}{\ell+1}\sum_{\substack{y\geq x \\ \|y-x\|_1\leq \ell/N}}\xi_y,\qquad \xi_x^{-, \ell} = \frac{1}{\ell+1}\sum_{\substack{y\leq x \\ \|y-x\|_1\leq \ell/N}}\xi_y
.
\end{align}
Recall also the definition of the space $E([0,T],\e)$ from~\eqref{eq_def_E_0_T_Omega}. 
We focus on the north pole, using the notations $P := P_1$ and $p:= p_1$ as well as $\nu:=\nu^N_\beta$. 
The superscript $N$ is also dropped for microscopic curves.
\begin{lemm}\label{lemm_1_block_2_block_pour_pis2}
For each $T>0$, $\beta>\log 2$, $\delta>0$ and each $G\in\C$, the slope on each side of the pole satisfies a one block estimate:
\begin{align}
\lim_{\ell\rightarrow\infty}&\limsup_{N\rightarrow\infty}\frac{1}{N}\log\Prob^N_{\beta}\bigg(\gamma^N_\cdot \in E([0,T],\e);
\nonumber\\
&\qquad \bigg|\frac{1}{T}\int_0^{T}G(t,L_1(\gamma^N_t))\big({\bf 1}_{p_1(\gamma^N_t)=2} - \xi^ {\pm,\ell}_{L_1(\gamma^N_t)+2{\bf b}_0/N}\big)\, dt\bigg|\geq \delta\bigg) 
= 
-\infty
,
\label{eq_1block_pour_1_pis2}
\end{align}
and a two block estimate:
\begin{align}
\lim_{\epsilon\rightarrow0}\limsup_{N\rightarrow\infty}\frac{1}{N}&\log\Prob^N_{\beta}\bigg(\gamma^N_\cdot \in E([0,T],\e);
\nonumber\\
&\hspace{1.5cm} \bigg|\frac{1}{T}\int_0^{T} G(t,L_1(\gamma^N_t))\big({\bf 1}_{p_1(\gamma^N_t)=2} - \xi^ {\pm,\epsilon N}_{L_1(\gamma^N_t)+2{\bf b}_0/N}\big) \, dt\bigg|\geq \delta\bigg) 
= 
-\infty
.
\label{eq_2block_pour_1_pis2}
\end{align}
Both estimates are valid under $\Prob^N_{\beta,H}$ by Corollary~\ref{coro_change_measure_sous_exp}.
\end{lemm}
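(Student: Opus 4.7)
The plan is to exploit the exact identity
\[
{\bf 1}_{p_1(\gamma^N) = 2} = \xi_{L_1(\gamma^N) + 2{\bf b}_1/N}(\gamma^N), \qquad \gamma^N \in \Omega^N_{\text{mic}},
\]
which reduces the lemma to a standard one-block statement for the SSEP that encodes the contour in region~1 just clockwise of the north pole. The identity is immediate from Definition~\ref{def_state_space}: the pole always contains at least two edges, so $L_1+2{\bf b}_1/N$ is a vertex; going clockwise from this vertex, the next edge is horizontal along the pole when $p_1 > 2$ and vertical (descending into region~1) precisely when $p_1 = 2$, matching the definition of $\xi$.

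Inserting the identity, the one-block estimate reduces to the SSEP-type replacement $\xi_{L_1+2{\bf b}_1/N} \approx \xi^{+,\ell}_{L_1+2{\bf b}_1/N}$ in time average. I would establish this via the standard exponential Chebyshev plus Feynman-Kac route already used in Lemma~\ref{lemm_pole_size_and_local_eq_at_the_pole}. Combined with the projection Lemma~\ref{lemm_projec_dynamics_onto_local}, this reduces the claim to showing that for each $a > 0$,
\[
\sup_{f \geq 0,\, \nu_f(\e) = 1}\Big\{ a\, E_{\nu_f}\big[G(t, L_1)(\xi_{L_1+2{\bf b}_1/N} - \xi^{+,\ell}_{L_1+2{\bf b}_1/N})\big] - N D_N(f^{1/2})\Big\}
\]
vanishes in the iterated limit $\limsup_{\ell \to \infty}\limsup_{N\to\infty}$. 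Inside $\e$, the single-block flip rates $c_x$ of \eqref{eq_def_c_x_gamma} restricted to region~1 coincide with nearest-neighbour SSEP exchange rates for $\xi$, so $D_N$ dominates the usual SSEP Dirichlet form there. I would then telescope $\xi_{L_1+2{\bf b}_1/N} - \xi^{+,\ell}_{L_1+2{\bf b}_1/N}$ into a sum of nearest-neighbour SSEP currents and apply Cauchy-Schwarz in the classical one-block manner (Chapter 5 of \cite{Kipnis1999}) to obtain a bound of order $\sqrt{\ell\, D_N(f^{1/2})}$; optimising against the Dirichlet penalty yields the required super-exponential decay. The two-block estimate then follows either by directly iterating the same telescoping up to depth $\epsilon N$, or by combining the one-block bound with a standard two-block comparison $\xi^{+,\ell} \approx \xi^{+,\epsilon N}$ for $\ell \ll \epsilon N$. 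The super-exponential control under the tilted probability $\Prob^N_{\beta,H}$ follows from Corollary~\ref{coro_change_measure_sous_exp} combined with Lemma~\ref{lemm_tightness_sup_length}.

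The main obstacle is that the reference vertex $L_1(\gamma^N_t) + 2{\bf b}_1/N$ is itself a dynamical object, moving under the pole updates at rate $O(N^2)$; the one- and two-block lemmas of \cite{Kipnis1999} are formulated for a particle system with a fixed boundary and therefore do not apply verbatim. This is precisely where the restriction to $\e$ and Lemma~\ref{lemm_projec_dynamics_onto_local} become essential: inside $\e$ the four poles stay at macroscopic distance, so the north-pole dynamics decouples from the other three and the effective SSEP in region~1 has a well-defined geometry; meanwhile, restriction to $\e$ makes the contour dynamics genuinely local, which permits Dirichlet-form estimates on a single effective one-dimensional chain. Careful accounting is needed for the contribution of the pole moves themselves to $D_N(f^{1/2})$, but since these act only on the two or four blocks closest to the tip and produce $O(1)$ Dirichlet energy per step, they are dominated by the $O(N^2)$ contributions of the bulk SSEP exchanges that drive the telescoping argument.
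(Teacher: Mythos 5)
Your proposal is correct in its architecture and matches the paper's skeleton exactly up to the last step: the identity ${\bf 1}_{p_1=2}=\xi_{L_1+2{\bf b}_1/N}$, exponential Chebyshev, the projection Lemma~\ref{lemm_projec_dynamics_onto_local}, and the domination of the SSEP Dirichlet form in region~1 by $D_N$ are all precisely what the paper does. Where you diverge is the final replacement: the paper conditions on the position of $L_1$ (the sets $M(x)$), pushes the conditional density onto a SSEP of fixed size $\ell$ via convexity of the Dirichlet form, and then invokes the classical one-block estimate of \cite{Eyink1990} (i.e.\ the compactness argument that densities with vanishing Dirichlet form become exchangeable). You instead propose telescoping $\xi_{L_1+2}-\xi^{+,\ell}_{L_1+2}$ into nearest-neighbour differences and applying Cauchy--Schwarz bond by bond. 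Because the observable is \emph{linear} in the edge states, this ``moving-particle'' argument is legitimate here and buys you a quantitative bound of order $a\sqrt{\ell\,a/N}$, and it lets you reach depth $\epsilon N$ directly for the two-block estimate (yielding $O(a^{3/2}\sqrt{\epsilon})$, which suffices after sending $\epsilon\to0$ before $a\to\infty$); the paper's route is less quantitative but generalises to nonlinear cylinder functions, which it needs elsewhere (Lemma~\ref{lemm_Replacement_lemma_sec_mart}).

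One point you should make explicit, since it is the actual mechanism behind your ``main obstacle'': the telescoping must be performed along edges labelled \emph{relative to} $L_1(\gamma)$, and each corner flip used in the chain must preserve $L_1(\gamma)$ and the measure $\nu$ (single-block flips in region~1 do both, and the flip at the first bond changes $p_1$ and $R_1$ but not $L_1$, consistently with the identity). Restriction to $\e$ guarantees these flips are admissible (locality, poles at macroscopic distance), but it is this relabelling --- formalised in the paper by conditioning on $M(x)$ --- and not the restriction to $\e$ alone, that neutralises the fact that the reference vertex is dynamical. Also, your phrasing about pole moves contributing ``$O(1)$ Dirichlet energy per step'' being ``dominated by $O(N^2)$ bulk exchanges'' is off: after Feynman--Kac there is no time-stepping, and one simply discards the nonnegative pole contributions to $D_N$, keeping the bound $D_N\geq D_N^S$. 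These are expository rather than mathematical gaps; the proof goes through.
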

\begin{proof}
The proof relies on the key observation that the quantity ${\bf 1}_{p=2}$ can be controlled in terms of the edges at the extremities $L_1,R_1$ of the poles. 
Indeed, abusing notations and respectively writing $L_1+2$, 
$R_1-3$ for the vertex at distance $2/N$ to $L_1$ clockwise and the vertex at distance $3/N$ from $R_1$ anticlockwise:
\begin{equation}
{\bf 1}_{p=2} 
=
\xi_{L_1+2} 
= 
\xi_{R_1-3}
.\label{eq_observation_1_p_is_2_is_occupation_number}
\end{equation}
Here, we focus on the slope to the right of the pole, 
for which we use the identity ${\bf 1}_{p=2} = \xi_{L_1+2}$. The slope to the left of the pole is treated similarly using ${\bf 1}_{p=2}=\xi_{R_1-3}$.

As long as no growth/deletion move at the poles occurs, 
${\bf 1}_{p=2}$ can thus be thought of as the occupation number of the closest site to a reservoir in a SSEP, in which case~\eqref{eq_1block_pour_1_pis2}--\eqref{eq_2block_pour_1_pis2} are well-known (see \cite{Eyink1990}). 
We first prove~\eqref{eq_1block_pour_1_pis2}. Building on the observation~\eqref{eq_observation_1_p_is_2_is_occupation_number}, define $\phi_\ell$ as the function:
\begin{equation}
\phi_\ell(\gamma) = \xi_{L_1+2} - \xi^ {+,\ell}_{L_1+2},\qquad \gamma\in\Omega^N_{\text{mic}}.\label{eq_def_phi_1_pis2_minus_densite}
\end{equation}
To estimate the probability in~\eqref{eq_1block_pour_1_pis2}, it is enough to consider, for each $a>0$, the quantity:
\begin{align}
&\frac{1}{N}\log \Prob^N_{\beta}\bigg(\gamma_\cdot \in E([0,T],\e);\exp\bigg[a N\int_0^{T}G(t,L_1(\gamma_t))\phi_\ell(\gamma_t) \, dt\bigg]
\geq 
\exp[aNT\delta]\bigg)
\nonumber\\
&\leq 
-aT\delta +\frac{1}{N}\log\E^N_{\beta}\bigg[{\bf 1}_{E([0,T],\e)}\exp\bigg[a N\int_0^{T}{\bf 1}_{\gamma_t\in \e}\, G(t,L_1(\gamma_t))\phi_\ell(\gamma_t) \, dt\bigg]\bigg]
.
\label{eq_ad_hoc_counter_term_SSEP_1_p_is_2}
\end{align}
Let $D_N^{\text{ex}}\leq D_N$ be the Dirichlet form of the contour dynamics excluding moves at the pole:
\begin{equation}
D_N^{\text{ex}}(g) 
= 
\frac{1}{2}\sum_{\gamma\in\Omega^N_{\text{mic}}}\nu(\gamma)\sum_{x\in V(\gamma)\setminus \cup_k P_k(\gamma)}c(\gamma,\gamma^x)\big[g(\gamma^x)-g(\gamma)\big]^2,
\qquad g:\Omega^N_{\text{mic}}\rightarrow\R
.
\label{eq_def_D_N_S}
\end{equation}
If $g$ is supported in $\e$, 
then the $c(\gamma,\gamma^x)$ are local (see Definition~\ref{def_effective_state_space}) and:
\begin{equation}
D^{\text{ex}}_N(g) 
= 
\frac{1}{2}\sum_{\gamma\in\Omega^N_{\text{mic}}}\nu(\gamma)\sum_{x\in V(\gamma)\setminus \cup_kP_k(\gamma)}c_x(\gamma)\big[g(\gamma^x)-g(\gamma)\big]^2
,
\label{eq_Dirichlet_form_SSEP_dans_e}
\end{equation}
with:
\begin{equation}
c_x(\gamma) 
= 
\frac{1}{2}\big[\xi_x(1-\xi_{x+{\bf e}^-_x}) + \xi_{x+{\bf e}^-_x}(1-\xi_x)\big]
.
\end{equation}
Apply Lemma~\ref{lemm_projec_dynamics_onto_local} to $\psi = aG\phi_\ell$ to obtain that~\eqref{eq_ad_hoc_counter_term_SSEP_1_p_is_2} is bounded from above by:
\begin{align}
&-a\delta T + C\beta +\int_0^{T}dt\sup_{f\geq 0: \nu_f(\e) = 1}\Big\{ aE_{\nu_f}[G(t,L_1)\phi_\ell] - N D^{\text{ex}}_N(f^{1/2})\Big\}.\label{eq_qtite_ds_supremum_pentopole_reduction_a_e_r}
\end{align}
Let us now compare the contour dynamics in the first region to a SSEP. 
To do so, we partition curves in $\Omega^N_{\text{mic}}$ according to their first region.

Fix $t\in[0,T]$ and a $\nu$-density $f$ with support in $\e$. We first split the expectation in the supremum in~\eqref{eq_qtite_ds_supremum_pentopole_reduction_a_e_r} depending on the possible positions of $L_1$ so that $G(t,L_1)$ becomes deterministic. 
For $x\in(N^{-1}\Z)^2$, define:
\begin{equation}
M(x) := \Big\{\gamma\in\Omega^N_{\text{mic}} : L_1(\gamma) +\frac{2{\bf b}_0}{N} = x\Big\}.
\end{equation}
Note that, since $\gamma\in\Omega^N_{\text{mic}}\cap \e$ is associated to a droplet that is by definition in a volume neighbourhood of the bounded droplet $\Gamma^{\mathrm{ref}}$, 
only a finite number of $M(x)$ are actually non-empty. Then:
\begin{equation}
E_{\nu_f}[G(t,L_1)\phi_\ell] = \sum_{x\in(N^{-1}\Z)^2}G\Big(t,x-\frac{2{\bf b}_0}{N}\Big)\bigg[\sum_{\gamma\in M(x)}\nu(\gamma) f(\gamma)\phi_\ell(\gamma)\bigg].\label{eq_transit_penteopole_1}
\end{equation}
In~\eqref{eq_transit_penteopole_1}, recall that the constraint that curves belong to $\e$ is enforced by the density $f$. \\

In the following, for $\gamma\in M(x)$, we refer to the edge $[x,x+e^+_x]$ as edge $1$, to the one following it as edge $2$, etc, up to edge $\ell$; 
and write $\xi_1(\gamma),...,\xi_\ell(\gamma)$ for the corresponding values of the edge labels (as usual, curves are oriented clockwise). 
As we work with curves in $\e$ for which each region contains a number of sites of order $N$ at least, all these edges are in region $1$. 
Configurations in $\{0,1\}^\ell =: \Omega_\ell$, are denoted by the letter $\xi$. 
The function $\phi_\ell$ depends only on edges $1$ to $\ell$, so that the expectation in~\eqref{eq_transit_penteopole_1} reads, 
letting $f_{\ell,x}$ denote the marginal for the uniform measure on $\Omega_\ell$ of $f$ restricted to $M(x)$:
\begin{equation}
E_{\nu_f}[G(t,L_1)\phi_\ell] = \sum_{x\in(N^{-1}\Z)^2}\nu_f(M(x))G\Big(t,x-\frac{2{\bf b}_0}{N}\Big)\frac{1}{|\Omega_\ell|}\sum_{\xi\in\Omega_{\ell}} f_{\ell,x}(\xi)\phi_\ell(\xi),\label{eq_transit_penteopole_2}
\end{equation}
where $|\Omega_\ell|=2^\ell$ and, if $\xi(\gamma)$ denotes the collection $\xi_1(\gamma),...,\xi_{\ell}(\gamma)$ for a given $\gamma\in \Omega^N_{\text{mic}}$,
\begin{equation}
\forall\xi\in\Omega_\ell,\qquad f_{\ell,x} (\xi) := \frac{1}{\nu_f(M(x))}\sum_{\gamma\in M(x) : \xi(\gamma) = \xi} |\Omega_\ell|\nu(\gamma) f(\gamma).\label{eq_def_densite_conditionne_SSEP_taille_k}
\end{equation}
Note that we need only consider points $x$ and densities $f$ with $\nu_f(M(x))>0$. 
This ensures that $f_{\ell,x}$ is unambiguously defined. Moreover, $f_{\ell,x}$ is a density for the uniform measure on $\Omega_\ell$. 

Let us do the same splitting on the Dirichlet form $D^{\text{ex}}_N$ in~\eqref{eq_qtite_ds_supremum_pentopole_reduction_a_e_r} in order to bound it from below by the Dirichlet form of a SSEP on configurations with $\ell$ sites. 
The mapping to go from a portion of length $\ell$ of a region of a curve $\gamma\in \Omega^N_{\text{mic}}\cap\e$ to an associated SSEP configuration $\xi(\gamma)\in\Omega_\ell$ is represented on Figure~\ref{fig_corner_flip_ssep} for the first region: 
each edge is tilted clockwise by $\pi/4$, turning the portion of $\gamma$ into the graph of a $1$-Lipschitz function, constant on segments of the form $[(j-1)\sqrt{2},j\sqrt{2}]$, $1\leq j \leq \ell$. 
A particle is then put at site $1\leq j\leq \ell$ if the path goes down between $j\sqrt{2}$ and $(j+1)\sqrt{2}$, or this site is left empty if the path goes up.

\begin{figure}[H]
\begin{center}
\includegraphics[width=11cm]{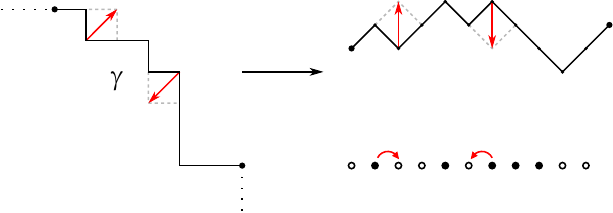} 
\caption{On the left, a portion of region $1$ of an interface $\gamma\in\Omega^N_{\text{mic}}$, delimited by the two black dots. On the right, the corresponding path and simple exclusion particle configuration. The mapping is possible if the left-extremity of the interface as well as its length are fixed. \label{fig_corner_flip_ssep}}
\end{center}
\end{figure}
Recall the definition~\eqref{eq_Dirichlet_form_SSEP_dans_e} of the SSEP part of the Dirichlet form of the contour dynamics. 
Define then the Dirichlet form $D^{\text{ex}}_{\ell}$ associated with the SSEP on $\Omega_\ell$ (we use the same notation as for $D^{\text{ex}}_N$ in~\eqref{eq_Dirichlet_form_SSEP_dans_e} to emphasize the analogy): 
for any $g:\Omega_\ell\rightarrow\R$, 
\begin{equation}
D^{\text{ex}}_\ell(g) = \frac{1}{2|\Omega_\ell|}\sum_{\xi\in \Omega_\ell}\sum_{\substack{1\leq u\leq v\leq \ell \\ |u-v|=1}} \frac{1}{2}[\xi_{u}(1- \xi_v) + \xi_v(1-\xi_{u})][g(\xi^{u,v}) - g(\xi)]^2
,
\end{equation}
with $\xi^{u,v}$ the configuration where the state of sites $u,v$ are exchanged. 
In view of the expression~\eqref{eq_Dirichlet_form_SSEP_dans_e} of $D_N^{\text{ex}}(g)$, a simple upper-bound and convexity yield:
\begin{align}
D_N^{\text{ex}}(f) 
&\geq 
\frac{1}{2}\sum_{x\in(N^{-1}\Z)^2}\sum_{\gamma\in M(x)}\nu(\gamma) \sum_{\substack{y\in V(\gamma) \\ x\leq y,\, \|y-x\|_1\leq \ell-1}} c_y(\gamma)\big[f^{1/2}(\gamma^{y})-f^{1/2}(\gamma)\big]^{1/2} 
\nonumber\\
&\geq 
\sum_{x\in(N^{-1}\Z)^2}\nu_f(M(x))D^{\text{ex}}_\ell(f_{\ell,x}).
\label{eq_transit_pentopole_Dirichlet_form_0}
\end{align}
Let $\mathcal U_\ell$ denote the uniform measure on $\Omega_\ell$ with associated expectation $E_{\mathcal U_\ell}$. Using the expressions~\eqref{eq_transit_penteopole_2}--\eqref{eq_transit_pentopole_Dirichlet_form_0}, the supremum in~\eqref{eq_qtite_ds_supremum_pentopole_reduction_a_e_r} at time $t\in[0,T]$ can be bounded from above by:
\begin{align}
&\sup_{f\geq 0 : \nu_f(\e)=1}\bigg\{ \sum_{x\in(N^{-1}\Z)^2} \nu_f(M(x)) \Big[ aG\Big(t,x-\frac{2{\bf b}_0}{N}\Big) E_{\mathcal U_\ell}\big[f_{\ell,x}\phi_\ell\big] - ND^{\text{ex}}_\ell(f_{\ell,x})\Big]\bigg\}
\nonumber\\
&\qquad\leq \sup_{f\geq 0 : \nu_f(\e)=1}\bigg\{ \sum_{x\in(N^{-1}\Z)^2} \nu_f(M(x))\sup_{g\geq 0 : E_{\mathcal U_\ell}[g]=1}\Big\{aG\Big(t,x-\frac{2{\bf b}_0}{N}\Big) E_{\mathcal U_\ell}[g\phi_\ell] - ND^{\text{ex}}_\ell(g)\Big\}\bigg\}
\nonumber\\
&\qquad
\leq 
\sup_{f\geq 0 : \nu_f(\e)=1}\bigg\{ \sum_{x\in(N^{-1}\Z)^2}\nu_f(M(x))a\|G\|_\infty \bigg\}\sup_{\substack{g\geq 0 : E_{\mathcal U_\ell}[g]=1 \\ D^{\text{ex}}_\ell(g)\leq 2a\|G\|_\infty/N}}  E_{\mathcal U_\ell}[g\phi_\ell]
.
\label{eq_qtite_ds_supremum_pour_penteopole_2}
\end{align}
The first supremum is bounded by $a\|G\|_\infty$. 
Bounding~\eqref{eq_1block_pour_1_pis2} therefore reduces to a one-block estimate for a SSEP of size $\ell$, which is well known. 
Indeed, 
the expectation in~\eqref{eq_qtite_ds_supremum_pour_penteopole_2} satisfies (see e.g. \cite{Eyink1990}): 
\begin{equation}
\limsup_{N\rightarrow\infty}\sup_{\substack{g\geq 0 : E_{\mathcal U_\ell}[g]=1 \\ D^{\text{ex}}_\ell(g)\leq 2a\|G\|_\infty/N}} \Big|E_{\mathcal U_\ell}[g\phi_\ell]\Big| 
= 
O(\ell^ {-1})
.
\end{equation}
This concludes the proof of the one block estimate~\eqref{eq_1block_pour_1_pis2}.  
The two block estimate~\eqref{eq_2block_pour_1_pis2} is proven similarly using \cite{Eyink1990}.
\end{proof}
Now that we know that the time integral of ${\bf 1}_{p=2}$ and of the slope at the poles are close, it remains to compute their common value. This is the goal of the next two sections.
\subsubsection{A compactness result}\label{subsec_compactness_estimate_1pis2}
In the previous section, the ${\bf 1}_{p=2}$ term was viewed as an occupation number in a SSEP. 
In this section and the next, we compute its time average by looking directly at the pole dynamics and comparing it with well-chosen zero-range dynamics. 
A similar comparison is made in the proof of the hydrodynamic limit in \cite{Lacoin2014a} using the monotonicity of the zero temperature Glauber dynamics (with a different zero-range dynamics as the parameter $\beta$ was not present there). 

Monotonicity is very useful to cut out a portion of the interface around the pole that can be compared to a zero-range process. 
The lack of monotonicity in the contour dynamics makes the definition of such a portion challenging. 
To do so, we give below a number of estimates on the shape of the interface around the poles. 
In the SSEP picture, 
these estimates say that the slope around the pole is bounded away from $0$ and $1$. 
Slope $0$ and $1$ respectively correspond to very flat/very peaked curves around the pole. 
In terms of zero-range configurations, 
the estimates on the slope in particular imply a compactness result. 
Indeed, they imply that the number of particles in a zero-range process of size $\ell\in\N_{\geq 1}$ is bounded by $C(\ell)$ independently of the scaling parameter $N$. \\

The first estimate is a control of the ${\bf 1}_{p=2}$ term. As shown in Section~\ref{subsec_1_pi2_is_slope_around_poles}, this term coincides with the slope around each pole, so that the next result can be understood as proving that poles are typically not flat, a statement made precise afterwards.
\begin{lemm}[Tail estimate on the flatness of the pole]\label{lemm_required_compactness_pis2}
For $\gamma\in \Omega^N_{\text{mic}}$ with associated droplet $\Gamma$, let $p'(\gamma)$ be the number of blocks in $\Gamma$ composing the next level below the north pole, as defined in~\eqref{eq_def_p_prime}. If $C>0$ and $A\geq 2$ is an integer:
\begin{equation}
\lim_{N\rightarrow\infty}\sup_{\substack{f\geq 0:E_\nu[f]=1 \\ D_N(f^{1/2})\leq C/N}}\nu_f\Big(p=2, \gamma^{-,1}\in\Omega^N_{\text{mic}},p'\geq A\Big)
\leq 
\frac{1}{\log A}
.
\label{eq_tail_to_prove_for_compactness_pis_2}
\end{equation}
\end{lemm}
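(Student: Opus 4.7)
The strategy is to combine Lemma~\ref{lemm_bijection_argument} with a Markov-type inequality applied to $\log p'$, in order to extract the logarithmic decay in $A$. The underlying intuition is that the bijection $\gamma\mapsto\gamma^{(n)}$ used in the proof of Lemma~\ref{lemm_bijection_argument} swaps the roles of $p$ and $p'$, i.e.\ $p'(\gamma^{(n)})=p(\gamma)$, so that tail estimates on $\{p=2,p'\geq A\}$ can be converted into tail estimates on $\{p\geq A\}$ weighted by $(p-1)$.

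Since $p'\geq 2$ on the event considered, Markov's inequality yields
\begin{equation*}
\nu_f\bigl(p=2,\gamma^{-,1}\in\Omega^N_{\text{mic}},p'\geq A\bigr) \leq \frac{E_{\nu_f}\bigl[\log(p'/2)\,{\bf 1}_{\{p=2,\gamma^{-,1}\in\Omega^N_{\text{mic}}\}}\bigr]}{\log(A/2)}.
\end{equation*}
To bound the right-hand side, one repeats the Cauchy--Schwarz symmetrization of the proof of Lemma~\ref{lemm_bijection_argument}, but weights the sum over $\tilde\gamma$ with $p(\tilde\gamma)=2$ by the truncated logarithm $\varphi_M(k):=\log(\min(k,M)/2)$. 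This produces, for each fixed $M\geq 2$,
\begin{equation*}
\Bigl(E_{\nu_f}\bigl[\varphi_M(p')\,{\bf 1}_{\{p=2,\gamma^{-,1}\}}\bigr]^{1/2}-E_{\nu_f}\bigl[(p-1)\varphi_M(p)\,e^{-2\beta}\bigr]^{1/2}\Bigr)^{2} \leq 2(\log M)\,D_N(f^{1/2}).
\end{equation*}
Sending $N\to\infty$ at fixed $M$ kills the Dirichlet-form error (since $D_N(f^{1/2})\leq C/N$), and then sending $M\to\infty$ by monotone convergence gives
\begin{equation*}
\limsup_{N\to\infty}\sup_{f} E_{\nu_f}\bigl[\log(p'/2)\,{\bf 1}_{\{p=2,\gamma^{-,1}\}}\bigr]\leq e^{-2\beta}\limsup_{N\to\infty}\sup_{f} E_{\nu_f}\bigl[(p-1)\log(p/2)\bigr].
\end{equation*}

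The remaining task is to bound $E_{\nu_f}[(p-1)\log(p/2)]$ uniformly in $N$ and $f$. I plan to do this by a bootstrap on Lemma~\ref{lemm_bijection_argument} itself: write $\log(p/2)=\sum_{k=3}^{p}1/(k-1)+O(1)$, so that
\begin{equation*}
e^{-2\beta}E_{\nu_f}[(p-1)\log(p/2)] = \sum_{k\geq 3}\frac{1}{k-1}\,e^{-2\beta}E_{\nu_f}[(p-1){\bf 1}_{p\geq k}] + O(e^{-2\beta}),
\end{equation*}
and then apply Lemma~\ref{lemm_bijection_argument} at each threshold $k$ to replace $e^{-2\beta}E_{\nu_f}[(p-1){\bf 1}_{p\geq k}]$ by $B_k:=\nu_f(p=2,\gamma^{-,1},p'\geq k)$ up to $o_N(1)$ errors. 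Denoting the quantity to be bounded by $B_A$ and exploiting the monotonicity $B_k\geq B_A$ for $k\leq A$, the Markov step and this bootstrap produce a self-consistent inequality
\begin{equation*}
B_A\,\log(A/2)\leq \sum_{k\geq 3}\frac{B_k}{k-1}+o_N(1),
\end{equation*}
from which, comparing the harmonic sum to $\log A$ and using $B_k\leq 1$, one extracts $B_A\leq 1/\log A$ in the $N\to\infty$ limit.

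\textbf{Main obstacle.} The delicate point is the bootstrap step. Closing the self-consistent inequality tightly enough to produce the claimed constant $1$ in $1/\log A$ (rather than a larger $C(\beta)$) relies on the fact that both sides of the inequality are controlled by the same bijection of Lemma~\ref{lemm_bijection_argument}, with matching weights. Making this cancellation rigorous requires careful bookkeeping of the Dirichlet-form error terms uniformly in the weights $\varphi_M$, justifying the exchange of $\limsup_N$ with $\sup_f$ and the $M\to\infty$ limit, and ensuring that the various $o_N(1)$ remainders (which do depend on the weights) can indeed be summed over $k$ without destroying the harmonic-type cancellation that produces $\log A$.
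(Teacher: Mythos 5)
Your opening intuition is right — a bijection at the pole converts the event $\{p=2,p'\geq A\}$ into information about wider poles, and a summation over thresholds produces the $1/\log A$ — but the way you close the argument has a genuine gap. The self-consistent inequality $B_A\log(A/2)\leq\sum_{k\geq 3}\frac{B_k}{k-1}+o_N(1)$ is essentially circular: the left-hand side of your Markov step, $E_{\nu_f}[\log(p'/2)\,{\bf 1}_{p=2}]$, is itself equal to $\sum_k(\text{increments of }\log)\cdot B_k$, and your bootstrap merely re-derives this identity through the Lemma~\ref{lemm_bijection_argument} bijection and back. To extract anything you must bound $\sum_{k}\frac{B_k}{k-1}$ independently, and the bound $B_k\leq 1$ you invoke gives a \emph{divergent} harmonic series; monotonicity $B_k\geq B_A$ does not help either, since the tail $\sum_{k>A}B_k/(k-1)$ is then still infinite unless $B_A=0$. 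Even granting a finite tail, you would only reach $B_A\leq \log A/\log(A/2)\to 1$, which is vacuous. A secondary issue: the quantity you would need on the right, $e^{-2\beta}E_{\nu_f}[(p-1)\log(p/2)]$, is \emph{not} $O(1)$ uniformly in $\beta$ (under equilibrium $E_\nu[p]\sim e^{2\beta}$, so this expectation is of order $\beta$), so routing everything through the Lemma~\ref{lemm_bijection_argument} bijection — which stacks blocks \emph{on top} of the pole and therefore carries $e^{-2\beta}$ weights — cannot produce the constant $1$.

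The missing ingredient is a different, \emph{measure-preserving} bijection together with the normalisation $\sum_{A\geq 2}\nu_f(p=A,\gamma^{-,1}\in\Omega^N_{\text{mic}})\leq 1$. The paper's proof maps a curve in $\{p=2,\gamma^{-,1}\in\Omega^N_{\text{mic}},p'\geq A\}$ to one with $p=A$ by adding $A-2$ blocks \emph{at the same level} as the pole (first to the left, then to the right); since this does not change the length, $\nu$ is preserved and one obtains $B_A\leq (A-1)\,\nu_f(p=A,\gamma^{-,1}\in\Omega^N_{\text{mic}})+C(A)N^{-1/2}$, the Dirichlet-form correction coming from decomposing the map into single-block SSEP flips. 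Summing this over $A=2,\dots,B$, the right-hand side telescopes against the disjointness bound $\sum_A\nu_f(p=A,\cdot)\leq 1$, and an Abel summation with the harmonic numbers $H_B=\sum_{n=2}^B(n-1)^{-1}\geq\log B$ yields $B_B\,H_B\leq 1$, hence the constant $1$ in $1/\log B$. It is precisely this normalisation over the disjoint events $\{p=A\}$ — absent from your proposal — that makes the harmonic sum converge and closes the argument.
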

\begin{proof}
Fix a density $f$ with $D_N(f^{1/2})\leq C/N$ throughout.   
By definition of $\Omega^N_{\text{mic}}$, 
one has $p'\geq p$ (recall that $p$ is the number of blocks in the pole). 
The idea is to estimate $\nu_f\big(p=2,\gamma^{-,1}\in\Omega^N_{\text{mic}},p'\geq A\big)$ for $A\geq 2$ in terms of $\nu_f(p=A,\gamma^{-,1}\in\Omega^N_{\text{mic}})$ by a bijection argument similar to the one of Lemma~\ref{lemm_bijection_argument}. 
The claim will then follow from a summation using the straightforward bound:
\begin{equation}
\sum_{B\geq 2}\nu_f\Big(p=B,\gamma^{-,1}\in\Omega^N_{\text{mic}}\Big) 
\leq 
1
.
\label{eq_avt_def_tilde_gamma}
\end{equation}
Let us present the aforementioned bijection argument. Fix an integer $A\geq 2$. 
A curve $\gamma$ in $\big\{p=2,\gamma^{-,1}\in\Omega^N_{\text{mic}},p'\geq A\big\}$ can be turned into a curve $F(\gamma)$ of $\big\{p=A,\gamma^{-,1}\in\Omega^N_{\text{mic}}\big\}$ as follows. 
Blocks in the north pole of $\gamma$ have centre at height $z_1(\gamma)-\frac{1}{2N}$ by definition. 
By definition there are at least $A$ blocks in the level below the pole, 
i.e. with centre at height $z_1(\gamma)-\frac{3}{2N}$.

Add up to $A-2$ blocks with centre at height $z_1(\gamma)-\frac{1}{2N}$ to the left of the north pole of $\gamma$ in such a way that the resulting curve is in $\Omega^{N}_{\text{mic}}$. 
If exactly $A-2$ such blocks can be added, an element of $\big\{p=A,\gamma^{-,1}\in\Omega^N_{\text{mic}}\}$ has been created. 
If $B<A-2$ blocks only can fit to the left of the pole, add the remaining $A-2-B$ blocks to the right of the pole. The mapping $F$ is illustrated on Figure~\ref{fig_mapping_F}.

\begin{figure}
\begin{center}
\includegraphics[width=11cm]{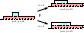} 
\caption{The mapping $F$ for different values of $A$. Here, there are $p'=10$ blocks with centre at height $z_1-\frac{3}{2N}$, in red and delimited by dashed lines. The two cyan blocks mark the position of the pole before the mapping by $F$.\label{fig_mapping_F}}
\end{center}
\end{figure}
Since $p'\geq A$, the above procedure always makes sense. 
In order to compare the dynamical cost of turning $\gamma$ into $F(\gamma)$, 
let us list properties of $F$.

Notice first that $F$ leaves the equilibrium measure $\nu$ invariant: $\nu(\gamma) = \nu(F(\gamma))$. 
This is because the length of $\gamma\in \Omega^N_{\text{mic}}$ and $F(\gamma)$ are the same.

Notice next that the mapping $F$ is nearly bijective in the following sense. Label each of the $p'$ blocks with centre at height $z_1(\gamma)-\frac{3}{2N}$ from $1$ to $p'$, starting from the left. 
We shall say that a block $C$ in the pole is \emph{above the block with label }$i$ if the centre of $C$ has the same abscissa and ordinate higher by $1$ than the centre of the block with label $i$.
\begin{itemize}
	\item Let $\{P\geq A\}\subset\Omega^N_{\text{mic}}$ be the event that the first block in the pole of $\gamma$ is above a block with label $n\geq A$. 
	If $\gamma\in \{P\geq A\}$, 
	then the leftmost block in the pole of $F(\gamma)$ is above the one with label $n-(A-2)>1$. 
	In that case $F(\gamma)$ is different for different values of $n$, 
	and deleting the leftmost $A-2$ blocks in the pole of $F(\gamma)$ transforms it back into $\gamma$.  
	\item Let $\gamma\in\Omega^N_{\text{mic}}$ satisfy $\gamma^{-,1}\in\Omega^N_{\text{mic}}$ and have pole of size $2$. 
	Let $1\leq n\leq A-1$ and suppose instead that $\gamma$ has pole located above the blocks with labels $n,n+1$. 
	Denote this event by $P = \{n,n+1\}$. 
	Then, for any $n$ with $1\leq n\leq A$ and any $\gamma\in\{P=\{n,n+1\}\}$, 
	the procedure described above turns $\gamma$ into the same $F(\gamma)\in\{p=A,\gamma^{-,1}\in\Omega^N_{\text{mic}}\}$. 
	This $F(\gamma)$ is in $\{P=1,...,A\}$, defined as the event that a curve has north pole composed of the blocks above those with labels $1,...,A$, see Figure~\ref{fig_mapping_F}.
\end{itemize}
We shall use the following compact reformulation of the above two cases: for each $n\leq A-1$,
\begin{equation}
F\Big(\Big\{p=2,\gamma^{-,1}\in\Omega^N_{\text{mic}},p'\geq A,\ P = \{n,n+1\}\Big\}\Big) = \Big\{\gamma^{-,1}\in\Omega^N_{\text{mic}},\ P= \{1,...,A\}\Big\}
\label{eq_first_event_mapping_F}
\end{equation}
and, more generally writing $\{P\geq n \}$ (resp.: $\{P\leq n\}$) for the events that all blocks in the pole are above blocks with labels at least (resp.: at most) $n$:
\begin{equation}
F\Big(\Big\{p=2,\gamma^{-,1}\in\Omega^N_{\text{mic}},p'\geq A,\ P \geq A\Big\}\Big) = \Big\{p=A,\gamma^{-,1}\in\Omega^N_{\text{mic}},\ P\geq 2\Big\}
.
\end{equation}
The event on the right-hand side is disjoint from the event on the right-hand side of~\eqref{eq_first_event_mapping_F}.

We now prove~\eqref{eq_tail_to_prove_for_compactness_pis_2}. 
To fix ideas, consider first the case where $f$ is the density constant equal to $1$. 
As $F$ leaves the equilibrium measure $\nu$ invariant,	
\begin{align}
\nu\Big(&p=2,\gamma^{-,1}\in\Omega^N_{\text{mic}},p'\geq A\Big) 
= 
\nu\Big(p=2,\gamma^{-,1}\in\Omega^N_{\text{mic}},p'\geq A,\ P\leq A-1\Big) 
\nonumber\\
&\hspace{6cm}
+ \nu\Big(p=2,\gamma^{-,1}\in\Omega^N_{\text{mic}},p'\geq A,\ P\geq A\Big)
\nonumber\\
&\qquad
= (A-1)\nu\Big(\gamma^{-,1}\in\Omega^N_{\text{mic}},P=\{1,...,A\}\Big) 
+ 
\nu\Big(p=A,\gamma^{-,1}\in\Omega^N_{\text{mic}},P\geq 2\Big)
.
\end{align}
Each of the above two events only contains curves with $p=A$, thus:
\begin{align}
\nu\Big(p=2,\gamma^{-,1}\in\Omega^N_{\text{mic}},p'\geq A\Big)\leq (A-1)\nu\Big(p=A,\gamma^{-,1}\in\Omega^N_{\text{mic}}\Big)
.
\label{eq_link_nu_pis2_and_nu_pisA}
\end{align}
To obtain~\eqref{eq_tail_to_prove_for_compactness_pis_2} in the $f= 1$ case,  
fix an integer $B\geq 2$ and apply~\eqref{eq_link_nu_pis2_and_nu_pisA} to each $A\in\{2,...,B\}$ to find:
\begin{align}
1\geq \sum_{A=2}^B\nu\Big(p=A,\gamma^{-,1}\in\Omega^N_{\text{mic}}\Big) \geq \sum_{A=2}^B\frac{1}{A-1}\nu\Big(p=2,\gamma^{-,1}\in\Omega^N_{\text{mic}},p'\geq A\Big).\label{eq_debut_estim_pis_2_sous_nu}
\end{align}
For $\ell\geq 2$, let $H_\ell = \sum_{n=2}^\ell(n-1)^{-1}$, $H_1 := 0$ and integrate the right-hand side of the last equation by parts:
\begin{equation}
1\geq \nu\Big(p=2,\gamma^{-,1}\in\Omega^N_{\text{mic}},p'\geq B\Big) H_B + \sum_{A=2}^{B-1}H_{A}\nu\Big(p=2,\gamma^{-,1}\in\Omega^N_{\text{mic}},p'=A\Big)
\end{equation}
Equation~\eqref{eq_tail_to_prove_for_compactness_pis_2} when $f=1$ follows (in fact also at each $N$ and not just in the limit):
\begin{equation}
\limsup_{N\rightarrow\infty}\nu\Big(p=2,\gamma^{-,1}\in\Omega^N_{\text{mic}},p'\geq B\Big) 
\leq 
H_B^{-1}\leq \frac{1}{\log B} 
.
\label{eq_limite_nu_f_p_is_2_p'_gtr_than_A_is_0}
\end{equation}
We now prove~\eqref{eq_tail_to_prove_for_compactness_pis_2} for a general density $f$ for $\nu$ satisfying $D_N(f^{1/2})\leq C/N$. 
By the above discussion, 
it is enough to prove that, 
up to an error that vanishes for $N$ large,
~\eqref{eq_link_nu_pis2_and_nu_pisA} holds also under $\nu_f$. 
To prove this,  
the idea is similar to the one used in Lemma~\ref{lemm_bijection_argument}: 
For each suitable $\gamma\in\Omega^N_{\text{mic}}$, 
the transformation $\gamma\to F(\gamma)$ (defined below~\eqref{eq_avt_def_tilde_gamma}) is decomposed into a succession of dynamical moves. 
One then notices that the number of required moves does not depend on $N$. 
In this way the difference between $f(\gamma)$ and $f(F(\gamma))$ can be expressed in terms of the Dirichlet form.

We only carry out the argument for the $\{P=\{1,...,A\},\gamma^{-,1}\in\Omega^N_{\text{mic}}\}$ term in~\eqref{eq_link_nu_pis2_and_nu_pisA}, 
the $\{P\geq 2\}$ term is similar. 
To lighten notation, defined the event $E_{P\leq A-1}$ as follows (as well as $E_{P\geq A}$ for future reference):
\begin{align}
E_{P\leq A-1} 
&:=
\Big\{ p=2,\gamma^{-,1}\in\Omega^N_{\text{mic}},p'\geq A,\ P\leq A-1\Big\}
,\nonumber\\
E_{P\geq A} 
&:=
\Big\{ p=2,\gamma^{-,1}\in\Omega^N_{\text{mic}},p'\geq A,\ P\geq A\Big\}
.
\label{eq_def_E_PgeqA}
\end{align}
Start from~\eqref{eq_first_event_mapping_F} to write:
\begin{align}
(A-1)\nu_f&\Big(P=\{1,...,A\},\gamma^{-,1}\in\Omega^N_{\text{mic}}\Big) 
= 
(A-1)\sum_{\tilde\gamma\in \Omega^N_{\text{mic}}} {\bf 1}_{\tilde\gamma\in F(E_{P\leq A-1})}\nu(\tilde\gamma)f(\tilde\gamma)
.
\end{align}
As explained above, there are $A-1$ different curves in $E_{P\leq A-1}$ with the same image $\tilde\gamma$ by $F$. Moreover, $F$ leaves the measure $\nu$ invariant, thus:
\begin{align}
(A-1)\nu_f\Big(\gamma^{-,1}\in\Omega^N_{\text{mic}},P=\{1,...,A\}\Big) 
&=
\sum_{\tilde\gamma\in \Omega^N_{\text{mic}}} {\bf 1}_{\tilde\gamma\in F(E_{P\leq A-1})}\, \nu(\tilde\gamma)f(\tilde\gamma)\sum_{\gamma\in E_{P\leq A-1}}{\bf 1}_{F(\gamma) = \tilde\gamma} 
\nonumber\\
&=
\sum_{\gamma\in E_{P\leq A-1}}\nu(\gamma)f(F(\gamma))
.
\label{eq_estim_nu_p_is_A_0}
\end{align}
Adding and subtracting appropriate terms,~\eqref{eq_estim_nu_p_is_A_0} can be written as:
\begin{align}
(A-1)\nu_f&\Big(\gamma^{-,1}\in\Omega^N_{\text{mic}},P=\{1,...,A\}\Big) =\sum_{\gamma\in E_{P\leq A-1}}\nu(\gamma) \big[f^{1/2}(F(\gamma))-f^{1/2}(\gamma)\big]^2\nonumber\\
&\qquad+\sum_{\gamma\in E_{P\leq A-1}} \nu(\gamma)\big[-f(\gamma) +2f^{1/2}(\gamma)f^{1/2}(F(\gamma))\big].
\end{align}
Cauchy-Schwarz inequality applied to the terms involving $f^{1/2}(\cdot) f^{1/2}(F(\cdot))$ then yields:
\begin{align}
\Big[(A-1)^{1/2}\nu_f\Big(P=&\{1,...,A\},\gamma^{-,1}\in\Omega^N_{\text{mic}}\Big)^{1/2}  - \nu_f\big(E_{P\leq A-1}\big)^{1/2}\Big]^{2} \nonumber\\
&\leq \sum_{\gamma\in E_{P\leq A-1}} \nu(\gamma) \big[f^{1/2}(F(\gamma))-f^{1/2}(\gamma)\big]^2.\label{eq_transit_compactness_arg_pis2_0}
\end{align}
It remains to bound the right-hand side of~\eqref{eq_transit_compactness_arg_pis2_0} in terms of the Dirichlet form. 
Decompose the transformation $\gamma\to F(\gamma)$ into flips adding a single block to the pole: $\gamma = \gamma_0 \rightarrow\gamma_1\rightarrow...\rightarrow\gamma_{A-2} = F(\gamma)$ and apply Cauchy-Schwarz inequality to find:
\begin{align}
\sum_{\gamma\in E_{P\leq A-1}}\nu(\gamma) \big[f^{1/2}(F(\gamma))-f^{1/2}(\gamma)\big]^2 
\leq 
(A-2)  \sum_{\gamma\in E_{P\leq A-1}}\nu(\gamma)\sum_{j=1}^{A-2}\big[f^{1/2}(\gamma_{j})-f^{1/2}(\gamma_{j-1})\big]^2
.
\end{align}
Each transition $\gamma_{j-1}\to\gamma_{j}$ is authorised in the contour dynamics, at rate $1/2$. A given curve corresponding to one of the $\gamma_j$ can occur at most $A-1$ times in all paths $\gamma\rightarrow F(\gamma)$ for $\gamma\in E_{P\leq A-1}$. 
As a result and since $\nu(\gamma_j) = \nu(\gamma)$ for all $1\leq j\leq A-2$:
\begin{align}
\Big[(A-1)^{1/2}\nu_f\Big(P = \{1,...,A\},\gamma^{-,1}\in\Omega^N_{\text{mic}}\Big)^{1/2} - \nu_f\big(E_{P\leq A-1}\big)^{1/2}\Big]^{2} 
\leq 
4(A-1)^{2}D_N(f^{1/2})
.
\label{eq_first_terme_majore_bij_pour_estim_compactness}
\end{align}
Similar computations give the same kind of bound for the second term in~\eqref{eq_link_nu_pis2_and_nu_pisA} under $\nu_f$ (recall the definition~\eqref{eq_def_E_PgeqA} of $E_{P\geq A}$):
\begin{align}
\Big[\nu_f\Big( P\geq 2 ,p=A,\gamma^{-,1}\in\Omega^N_{\text{mic}}\Big)^{1/2} - \nu_f\big(E_{P\geq A}\big)^{1/2}\Big]^{2} 
\leq 4(A-1)D_N(f^{1/2})
.
\label{eq_second_terme_majore_bij_pour_estim_compactness}
\end{align}
Let us use~\eqref{eq_first_terme_majore_bij_pour_estim_compactness}--\eqref{eq_second_terme_majore_bij_pour_estim_compactness} to prove that~\eqref{eq_link_nu_pis2_and_nu_pisA} still holds under $\nu_f$ with a small error in $N$ (recall that $D_N(f^{1/2})\leq C/N$). 
Equation~\eqref{eq_first_terme_majore_bij_pour_estim_compactness} yields:
\begin{align}
\nu_f\big(E_{P\leq A-1}\big)\nonumber\\
&\leq (A-1)\nu_f\Big(P = \{1,...,A\},\gamma^{-,1}\in\Omega^N_{\text{mic}}\Big)+ C(A)\big[D_N(f^{1/2})^{1/2} + D_N(f^{1/2})\big]  \nonumber\\
&\leq (A-1)\nu_f\Big(P = \{1,...,A\},\gamma^{-,1}\in\Omega^N_{\text{mic}}\Big)+ C(A)N^{-1/2},
\end{align}
where the constant $C(A)>0$ changes between inequalities. Similarly,~\eqref{eq_second_terme_majore_bij_pour_estim_compactness} yields:
\begin{align}
\nu_f\big(E_{P\geq A}\big)
\leq 
\nu_f\Big( P\geq 2 ,p=A,\gamma^{-,1}\in\Omega^N_{\text{mic}}\Big)+C(A)N^{-1/2}
,
\end{align}
whence the following counterpart of~\eqref{eq_link_nu_pis2_and_nu_pisA} for $\nu_f$:
\begin{align}
\nu_f\Big(p=2,\gamma^{-,1}\in\Omega^N_{\text{mic}},p'\geq A\Big) 
&= 
\nu_f\big(E_{P\leq A-1}\big) 
+ \nu_f\big(E_{P\geq A}\big)
\nonumber\\
&\leq 
(A-1)\nu_f\Big(p=A,\gamma^{-,1}\in\Omega^N_{\text{mic}}\Big) + C(A)N^{-1/2}
.
\label{eq_estim_prelim_lien_pis2_pis_A}
\end{align}
Equation~\eqref{eq_estim_prelim_lien_pis2_pis_A} is sufficient to conclude the proof of the upper bound as in the $f\equiv 1$ case, see~\eqref{eq_debut_estim_pis_2_sous_nu} to~\eqref{eq_limite_nu_f_p_is_2_p'_gtr_than_A_is_0}. Indeed, the bound in~\eqref{eq_limite_nu_f_p_is_2_p'_gtr_than_A_is_0} requires only the use of $A$ independent from $N$, so that $C(A)N^{-1/2}$ vanishes when $N$ is large. We therefore conclude the proof here. 
\end{proof}
In the next two lemmas, we use Lemma~\ref{lemm_required_compactness_pis2} to control the number of horizontal edges in a curve as a function of the vertical distance to the north pole (Lemma~\ref{lemm_width_compactness_pis2}) as well as, conversely, the vertical distance to the north pole as a function of the number of blocks (Lemma~\ref{lemm_height_compactness_pis2}). 
More precisely, let $\gamma\in\Omega^N_{\text{mic}}\cap \e$ and let $n\in\N_{\geq 1}$. 
The line $y = z_1(\gamma) - n/N$ contains a certain number of horizontal edges in $\gamma$ (recall that $z_1$ is the ordinate of the highest points in $\gamma$, defined in~\eqref{eq_def_y_max_sec_6}).  
Let $\ell(n)$ be the number of these edges to the right of $L_1$ and $\ell(-n)$ be the number of edges to the left of $L_1$. 
Define also $\ell(0) := p(\gamma)-2$. For $N$ large enough, $\gamma\in \Omega^N_{\text{mic}}\cap\e$ implies that each of the $\ell(i)$, $|i|\leq n\ll N$ are well defined, see Figure~\ref{fig_estim_compactness}.
\begin{lemm}[Width of a curve at depth $n$ below the pole]\label{lemm_width_compactness_pis2}
For $n\in\N_{\geq 1},C>0,A\geq 2$,
\begin{equation}
\forall |i|\leq n,\qquad \limsup_{N\rightarrow\infty}\sup_{\substack{f\geq 0 : E_\nu[f] =1 \\ D_N(f^{1/2})\leq C/N}}\nu_f\Big(\e,\ell(i)\geq A\Big) \leq \frac{e^{2\beta}}{(A+1)\log (A+2)}.\label{eq_increment_width_lvl_k}
\end{equation}
As a result, the numbers $w_n^+ =2+ \sum_{i=0}^ n\ell(i)$ and $w^ -_k = \sum_{i=1}^{n}\ell(-i)$ of blocks with centres at height $z_1(\gamma)-(n+1/2)/N$ in a droplet $\Gamma$ associated to $\gamma\in \e$, respectively to the right/to the left of $L_1$ (see Figure~\ref{fig_estim_compactness}), satisfy:
\begin{equation}
 \limsup_{N\rightarrow\infty}\sup_{\substack{f\geq 0 : E_\nu[f] =1 \\ D_N(f^{1/2})\leq C/N}}\nu_f\Big(\e,w^\pm_n\geq n^2 \Big)
 \leq 
 \frac{3e^{2\beta}}{\log n}
 .
 \label{eq_width_level_k}
\end{equation}
\end{lemm}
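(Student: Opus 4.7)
The plan is to bootstrap Lemma~\ref{lemm_required_compactness_pis2} into the required tail on $\ell(i)$ at each level $|i|\leq n$, then deduce \eqref{eq_width_level_k} by a union bound. The extra factor $e^{2\beta}/(A+1)$ in \eqref{eq_increment_width_lvl_k} compared with the $1/\log A$ of Lemma~\ref{lemm_required_compactness_pis2} will come from a Markov step coupled with the variance-type inequality of Lemma~\ref{lemm_bijection_argument}.

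I first handle the level $i=0$, where $\ell(0)=p-2$ reduces the event to $\{p\geq A+2\}$. Inserting the indicator ${\bf 1}_{p\geq A+2}$ into the Cauchy--Schwarz step at the end of the proof of Lemma~\ref{lemm_bijection_argument} (this is the variant mentioned just after \eqref{eq_majoration_carre_terme_de_pole}), one obtains
\[
\Big[\nu_f\big(p=2,\gamma^{-,1}\in\Omega^N_{\text{mic}},p'\geq A+2\big)^{1/2}-E_{\nu_f}\big[(p-1)e^{-2\beta}{\bf 1}_{p\geq A+2}\big]^{1/2}\Big]^2\leq 2D_N(f^{1/2}).
\]
Under the assumption $D_N(f^{1/2})\leq C/N$, Lemma~\ref{lemm_required_compactness_pis2} applied at argument $A+2$ bounds the first term by $1/\log(A+2)$, so $E_{\nu_f}\big[(p-1)e^{-2\beta}{\bf 1}_{p\geq A+2}\big]\leq 1/\log(A+2)+o_N(1)$. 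Markov's inequality $(A+1)e^{-2\beta}\nu_f(p\geq A+2)\leq E_{\nu_f}\big[(p-1)e^{-2\beta}{\bf 1}_{p\geq A+2}\big]$ then produces \eqref{eq_increment_width_lvl_k} at $i=0$ after taking $\limsup_N$.

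For $1\leq|i|\leq n$ the plan is to reduce $\{\ell(i)\geq A\}$ to $\{p\geq A+2\}$ bijectively through a length-preserving rearrangement $T_i$: starting from $\gamma\in\e$ with $\ell(i)\geq A$, one performs a bounded sequence of single-block flips allowed by the dynamics that transfers the flat segment at depth $i$ up to the pole level while preserving $|\gamma|$. Because $\gamma\in\e$, the intermediate curves stay in $\Omega^N_{\text{mic}}\cap\e$, so a Cauchy--Schwarz decomposition identical to the one in the proof of Lemma~\ref{lemm_bijection_argument} yields
\[
\big[\nu_f(\ell(i)\geq A,\e)^{1/2}-\nu_f(p\geq A+2,\e)^{1/2}\big]^2\leq C(n)D_N(f^{1/2})=o_N(1),
\]
and the $i=0$ case then gives \eqref{eq_increment_width_lvl_k} at level $i$. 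The width bound \eqref{eq_width_level_k} follows by pigeonhole: $w_n^\pm\geq n^2$ forces some $\ell(\pm i)$ with $|i|\leq n$ to satisfy $\ell(\pm i)\geq (n^2-2)/(n+1)\geq n-2$, and a union bound over the at most $n+1$ levels combined with \eqref{eq_increment_width_lvl_k} at $A=n-2$ gives $3e^{2\beta}/\log n$ for $n$ large.

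The main obstacle I anticipate is the construction of the length-preserving rearrangement $T_i$ for $i\neq 0$: one must verify that each intermediate curve in the flip sequence lies in $\Omega^N_{\text{mic}}\cap\e$ (so that each flip is an allowed contour move contributing to $D_N(f^{1/2})$), that the total number of flips is bounded \emph{uniformly} in $\gamma$ and in $N$, and that $T_i(\gamma)$ indeed has a pole of size at least $\ell(i)+2\geq A+2$. Because the flat segment at depth $i$ may lie arbitrarily far horizontally from $L_1$, some care is needed in routing the flips, but since only the blocks above the flat segment itself must be repositioned, the count should remain $O(n)$, yielding $C(n)D_N(f^{1/2})=o_N(1)$ as required.
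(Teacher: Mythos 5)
Your proof takes essentially the same route as the paper: the $i=0$ case is handled exactly as you describe (Lemma~\ref{lemm_bijection_argument} with the indicator $\mathbf{1}_{p\geq A+2}$, Markov's inequality to pay a factor $(A+1)e^{-2\beta}$, then Lemma~\ref{lemm_required_compactness_pis2}), and the reduction of level $i$ to level $0$ is carried out by a length-preserving chain of contour-dynamics flips whose Dirichlet cost is $o_N(1)$, followed by a pigeonhole/union bound for \eqref{eq_width_level_k}.

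The one organizational difference and the one inaccuracy are worth noting. The paper proceeds one level at a time: it sets up a bijection $F'$ from $\{\ell(i+1)\geq A\}$ to $\{\ell(i)\geq A\}$ by adding $A$ blocks directly above the flat segment at depth $i+1$, and then invokes the recursion hypothesis; your composite bijection $T_i$ from depth $i$ to depth $0$ is the $i$-fold composition of these $F'$'s, so it involves $O(iA)$ flips rather than $O(n)$ as you wrote, but since $i\leq n$ and $A$ are fixed while $N\to\infty$ this still gives an $o_N(1)$ error. More importantly, the ``obstacle'' you flag — that intermediate curves should stay in $\Omega^N_{\text{mic}}\cap\e$ — is resolved differently than you anticipate: the paper does \emph{not} require intermediate curves to lie in $\e$ (indeed $\e$ is not stable under these moves), and instead replaces $\e$ by the larger set $Q_n$ of curves ``with room below the north pole'' defined in \eqref{eq_def_Q_k}, which is stable under the block-adding moves at depths $\leq n$ and still contains $\e$ for $N$ large. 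What the Dirichlet-form estimate actually needs is only that each flip in the chain be an authorised contour-dynamics move on $\Omega^N_{\text{mic}}$, not that the intermediates remain in $\e$; with that correction your argument is sound and matches the paper.
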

\begin{figure}
\begin{center}
\includegraphics[width=10cm]{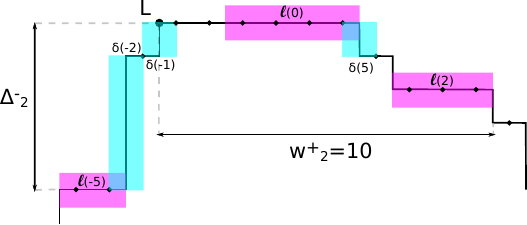} 
\caption{Definition of the $\delta(\pm i),\ell(\pm i),\Delta^{\pm}_n,w^{\pm}_n$. Here $w^+_2=10$, the number of blocks to the right of $L_1$ with centre at height $z_1 - \frac{5}{2N}$. The small black dots mark the centre of each horizontal edge. The shaded areas highlight which edges make up the represented $\ell(\pm i),\delta(\pm i)$. Here, $\ell(-2) = \delta(-3) = 0$.\label{fig_estim_compactness}}
\end{center}
\end{figure}
\begin{proof}
Equation~\eqref{eq_width_level_k} follows from~\eqref{eq_increment_width_lvl_k} by a union bound. Let us prove~\eqref{eq_increment_width_lvl_k} by recursion on $|i|\leq n$. 

The set $\e$, given in Definition~\ref{def_effective_state_space} as a volume neighbourhood of a fixed curve, 
is not stable under the contour dynamics around the poles. Rather than work directly with $\e$, it is convenient to consider a subset that has this stability property. 
Introduce then the set $Q_n$ of curves for which there is at least $n$ vertical edges between pole $1$ and poles $2,4$ (recall that $L_k/R_k$ is the left/right extremity of pole $k$ with $1\leq k\leq 4$):
\begin{equation}
Q_n := \Big\{\gamma\in\Omega^N_{\text{mic}} : [R_1(\gamma)-L_2(\gamma)]\cdot{\bf b}_{\pi/2}>\frac{n}{N}, [R_1(\gamma)-R_4(\gamma)]\cdot{\bf b}_{\pi/2}>\frac{n}{N}\Big\}.\label{eq_def_Q_k}
\end{equation}
Note the factor $1/N$ as elements of $\Omega^N_{\text{mic}}$ are rescaled by definition. 
Contrary to $\e$, $Q_n$ is stable under any dynamical move 1) affecting points at vertical distance at most $n$ below the pole and any horizontal distance; 2) that do not change the height of the north pole. 
Moreover, for large enough $N\in\N_{\geq 1}$, it holds that $\e\subset Q_n$ as $\e$ only contains curves with positive distance between each pole, recall Definition~\ref{def_effective_state_space}. 
Fix one such $N$, $C>0$ and a density $f$ for $\nu$ with $D_N(f^{1/2})\leq C/N$ throughout. 
We are going to prove~\eqref{eq_increment_width_lvl_k} with $Q_n$ replacing $\e$. 
To start the recursion, consider $\ell(0)$.
Recall that $p = \ell(0)+2$ by definition. Thus:
\begin{align}
(A+1)\nu_f\Big(Q_n,\ell(0)\geq A\Big)&\leq e^{2\beta} E_{\nu_f}\big[(\ell(0)+1)e^{-2\beta}{\bf 1}_{\ell(0)\geq A}\big] \nonumber\\
&= e^{2\beta} E_{\nu_f}\big[(p-1)e^{-2\beta}{\bf 1}_{p\geq A+2}\big]
.
\label{eq_rewriting_proba_ell_larger_A}
\end{align}
Lemma~\ref{lemm_bijection_argument} gives:
\begin{align}
&\Big|\nu_f\Big(p=2,\gamma^{-,1}\in\Omega^N_{\text{mic}},p'\geq A+2\Big)^{1/2} -  E_{\nu_f}\Big[(p-1)e^{-2\beta}{\bf 1}_{p\geq A+2}\Big]^{1/2}\Big|^2
\leq 
2D_N(f^{1/2})
,
\nonumber\\
&E_{\nu_f}\Big[(p-1)e^{-2\beta}{\bf 1}_{p\geq A+2}\Big]^{1/2}
\leq 
1+\big(2D_N(f^{1/2})\big)^{1/2}
.
\end{align}
Using the identity $a-b = (\sqrt{a}-\sqrt{b})(\sqrt{a}+\sqrt{b})$ for $a,b\geq 0$, we then find (here $Q_n$ plays no special role):
\begin{align}
\Big|\nu_f\Big(&p=2,\gamma^{-,1}\in\Omega^N_{\text{mic}},p'\geq A+2\Big) -  E_{\nu_f}\Big[(p-1)e^{-2\beta}{\bf 1}_{p\geq A+2}\Big]\Big|
\label{eq_lien_l_0_p}
\\
&\quad\leq 
\big(2D_N(f^{1/2})\big)^{1/2}\Big[2 + \big(2D_N(f^{1/2})\big)^{1/2}\Big] 
\leq 
C'N^{-1/2},
\qquad 
C' 
= 
2C+2\sqrt{2C}
. 
\nonumber
\end{align}
The bound~\eqref{eq_increment_width_lvl_k} for $i=0$ then follows:
\begin{align}
(A+1)\nu_f\Big(\e,\ell(0)\geq A\Big)
&\ \, \leq 
(A+1)\nu_f\Big(Q_n,\ell(0)\geq A\Big)
\nonumber\\
&\hspace{-0.5cm}\overset{\eqref{eq_rewriting_proba_ell_larger_A}-\eqref{eq_lien_l_0_p}}{\leq} e^{2\beta}\nu_f\Big(p=2,\gamma^{-,1}\in\Omega^N_{\text{mic}},p'\geq A+2\Big) + \frac{e^{2\beta}C'}{N^{1/2}}\nonumber\\
&\overset{\eqref{eq_tail_to_prove_for_compactness_pis_2}}{\leq} \frac{e^{2\beta}}{\log(A+2)}+o_N(1).
\end{align}
Assume now that, for some integer $i$ with $|i|<n$:
\begin{align}
\limsup_{N\rightarrow\infty}\sup_{\substack{f\geq 0 : E_\nu[f] =1 \\ D_N(f^{1/2})\leq C/N}}\nu_f\Big(Q_n,\ell(i)\geq A\Big) \leq \frac{e^{2\beta}}{(A+1)\log(A+2)}.\label{eq_recursion_relation_length}
\end{align}
For definiteness, assume $i\geq 0$. To show~\eqref{eq_recursion_relation_length} for $i+1$, we are going to prove:
\begin{equation}
\nu_f\Big(Q_n,\ell(i+1)\geq A\Big) = \nu_f\Big(Q_n, \ell(i+1)\geq 0,\ell(i)\geq A\Big) + O\big(D_N(f^{1/2})^{1/2}+D_N(f^{1/2})\big).\label{eq_recurrence_width_increment}
\end{equation}
Equation~\eqref{eq_recurrence_width_increment} implies~\eqref{eq_increment_width_lvl_k} up to $i+1$, 
since $\e\subset Q_n$ for large $N$ and the Dirichlet form vanishes. 

The idea behind~\eqref{eq_recurrence_width_increment} is the following. 
Take a curve $\gamma\in \{\ell(i+1)\geq A\}\cap Q_n$ (see Figure~\ref{fig_estim_compactness} for a representation of $\ell(i)$). 
One can then add at least $A$ blocks to $\Gamma$ with centre at height $z_1(\gamma)-(i+1/2)/N$, one above each of the edges ensuring $\ell(i+1)\geq A$. 
This is where working with curves in $Q_n$ instead of $\e$ is convenient: 
the resulting curve $F'(\gamma)$ is still in $Q_n$ although it might not belong to $\e$. 
The above procedure therefore bijectively yields a curve $F'(\gamma)\in Q_n$ with $\ell(i)\geq A$. 
Moreover, $\gamma$ and $F'(\gamma)$ have the same length. 
The two events on either side of~\eqref{eq_recurrence_width_increment} thus have the same $\nu$-measure. 

Under $\nu_f$, though, the two events in~\eqref{eq_recurrence_width_increment} may have different probability. However, the mapping $\gamma\mapsto F'(\gamma)$ can be decomposed into a sequence of $A$ curves $\gamma=\gamma_1\rightarrow...\rightarrow\gamma_A = F'(\gamma)$, each differing from the previous one by a single block. Each curve $\gamma_j$ ($j\leq A$) appears at most $A+1$ times when effecting the procedure for all curves in $\{\ell(i+1)\geq A\}\cap Q_n$. The cost of turning $\gamma$ into $F'(\gamma)$ under $\nu_f$ is then estimated by the Dirichlet form in a very similar fashion to the bijection argument of Lemma~\ref{lemm_required_compactness_pis2}, so we give no more details.
\end{proof}
The next lemma controls the depth at fixed horizontal distance to the pole. For $n\in\N_{\geq 1}$ and $|i|\leq n$, define $\delta(i)$ as the number of vertical edges with abscissa $L_1\cdot {\bf b}_0 + \frac{i+1}{N}$ that belong to either region 4 (if $i\leq 0$) or region 1 ($i\geq 0$), see Figure~\ref{fig_estim_compactness}. 
Note that $\delta(0) = 0$, corresponding to the fact that the pole contains at least two horizontal edges.
\begin{lemm}[Height of a curve at horizontal distance $n$ to the pole]\label{lemm_height_compactness_pis2}
Let $\beta>\log 2$. For $n\in\N_{\geq 1}$ and $C>0,A\geq 1$,
\begin{equation}
\forall 1\leq i,j\leq n,\qquad \limsup_{N\rightarrow\infty}\sup_{\substack{f\geq 0:E_\nu[f]=1 \\ D_N(f^{1/2})\leq C/N}}\nu_f\Big(\e,\delta(i)\geq A,\delta(-j)\geq A\Big) \leq e^{-2\beta (A-1)}.\label{eq_increment_height_distance_k}
\end{equation}
Let $\Delta^\pm_n = \sum_{ i= 1}^n\delta(\pm i)$ be the number of vertical edges in a curve between the point $L_1+\frac{{\bf b}_0}{N}$ and the first point at horizontal distance $n+1$ to the right (for $\Delta^+_n$) or to the left (for $\Delta^-_n$), 
see Figure~\ref{fig_estim_compactness}. 
Then:
\begin{equation}
 \limsup_{N\rightarrow\infty}\sup_{\substack{f\geq 0:E_\nu[f]=1 \\ D_N(f^{1/2})\leq C/N}}\nu_f\Big(\e,\Delta^+_n\geq n(1+2\log n),\Delta^-_n\geq n(1+2\log n)\Big)
 \leq 
 \frac{1}{n^{4\beta-2}}
 =
 o_n(1)
 .
 \label{eq_height_distance_k}
\end{equation}
\end{lemm}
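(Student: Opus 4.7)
My plan is to combine a bijection argument under the equilibrium measure $\nu$ with a Dirichlet-form transfer to $\nu_f$ for~\eqref{eq_increment_height_distance_k}, and then to reduce~\eqref{eq_height_distance_k} to~\eqref{eq_increment_height_distance_k} by a pigeonhole and union bound. The starting point is the structure of $\nu \propto e^{-\beta N |\gamma|}$: curves shorter by $2/N$ in $1$-norm length are $e^{2\beta}$ times more likely.

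\textbf{Step 1 (bijection under $\nu$).} Fix $1 \leq i, j \leq n$ and $A \geq 2$. Given $\gamma \in \{\e,\, \delta(i) \geq A,\, \delta(-j) \geq A\}$, I would construct a curve $F(\gamma) \in \{\e,\, \delta(i) \geq A-1,\, \delta(-j) \geq A-1\}$ with $|F(\gamma)| = |\gamma| - 2/N$. The construction is to lower the north pole by one unit: delete the $p$ pole blocks (those with centre-height $z_1 - 1/(2N)$) and promote the row immediately beneath to the new pole. By Property~\ref{property_state_space}, region $1$ can only move right or down after $R_1$ and region $4$ can only move up or right before $L_1$, so the row at height $z_1 - 1/N$ is a contiguous row of width $p' \geq p$. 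The horizontal extent of the bounding box is therefore unchanged while the vertical extent decreases by $1/N$, yielding $|F(\gamma)| = |\gamma| - 2/N$. The hypothesis $\delta(i), \delta(-j) \geq A \geq 2$ guarantees that the columns at abscissa $L_1 \cdot {\bf b}_1 + (i+1)/N$ and $L_1 \cdot {\bf b}_1 - (j-1)/N$ reach at least $1/N$ below the new pole, hence $F(\gamma)$ still has $\delta(i), \delta(-j) \geq A-1$. Iterating yields
\begin{equation}
\nu\big(\e,\, \delta(i) \geq A,\, \delta(-j) \geq A\big) \leq e^{-2\beta(A-1)},
\end{equation}
up to a combinatorial multiplicity of $F$ (the choice of where the old pole sat within the new pole), which is bounded uniformly in $N$ thanks to the uniform length bound on curves in $\e$, and absorbed in $\limsup_N$.

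\textbf{Step 2 (transfer to $\nu_f$).} The map $F$ is realised by $O(1)$ local moves of the contour dynamics, independent of $N$. Following the Cauchy--Schwarz argument already used in Lemmas~\ref{lemm_bijection_argument} and~\ref{lemm_required_compactness_pis2}, one obtains
\begin{equation}
\Big[\nu_f(E_A)^{1/2} - e^{-\beta}\, \nu_f(E_{A-1})^{1/2}\Big]^2 \leq C\, D_N(f^{1/2}),
\end{equation}
where $E_A := \{\e,\, \delta(i) \geq A,\, \delta(-j) \geq A\}$ and $C$ is $N$-independent. Under the assumption $D_N(f^{1/2}) \leq C'/N$, the error vanishes as $N \to \infty$. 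Iterating $A-1$ times (with $A$ fixed) produces~\eqref{eq_increment_height_distance_k}.

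\textbf{Step 3 (pigeonhole and union bound).} If $\Delta^+_n = \sum_{i=1}^n \delta(i) \geq n(1+\log n)$, averaging produces some $i^* \in \{1, \ldots, n\}$ with $\delta(i^*) \geq 1 + \log n$, and symmetrically some $j^*$ on the minus side. Therefore
\begin{equation}
\big\{\e,\, \Delta^+_n \geq n(1+\log n),\, \Delta^-_n \geq n(1+\log n)\big\} \subset \bigcup_{1 \leq i, j \leq n} \big\{\e,\, \delta(i) \geq \lceil 1+\log n\rceil,\, \delta(-j) \geq \lceil 1+\log n\rceil\big\}.
\end{equation}
A union bound over the $n^2$ pairs combined with~\eqref{eq_increment_height_distance_k} applied with $A = \lceil 1+\log n\rceil$ gives the bound $n^2 \cdot e^{-2\beta \log n} = n^{2-2\beta}$, which is $o_n(1)$ under the hypothesis $\beta > 1$.

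\textbf{Main obstacle.} The delicate point is the geometric construction of $F$: one must verify that iterated pole-lowering produces a curve that remains in $\e$ (which is automatic here for fixed $n$, since $A$ stays bounded and the number of blocks removed is $O(1)$), and that the combinatorial multiplicity of $F$ under $\nu$ stays uniformly bounded. Both are essentially $N$-independent combinatorial issues, and both rely crucially on Property~\ref{property_state_space} which enforces that the widths of successive rows below the pole are non-decreasing.
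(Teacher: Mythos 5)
Your Step 3 (pigeonhole plus union bound over the $n^2$ pairs, with $A=\lceil 1+\log n\rceil$) is exactly the paper's reduction of~\eqref{eq_height_distance_k} to~\eqref{eq_increment_height_distance_k}, and the overall template (bijection under $\nu$, then transfer to $\nu_f$ via the Dirichlet form) is the right one. The gap is in the bijection itself. Your map $F$ deletes the entire top row of the droplet, and this operation is \emph{not} realised by $O(1)$ dynamical moves: to remove a pole row of width $p'$ one must perform of order $p'$ single-block flips (peeling corners) before the final two-block pole deletion, and $p'$ is not bounded uniformly in $N$ for curves in $\e$ (membership in $\e$ only constrains the volume distance to $\gamma^{\text{ref}}$, not the pole width). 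Consequently the Cauchy--Schwarz transfer in your Step 2 does not close, and the multiplicity of $F$ (the number of admissible top rows one could re-attach to $F(\gamma)$, parametrised by their width and horizontal position) is of order $N$ or worse, so it cannot be ``absorbed in $\limsup_N$'' — and even an $N$-independent multiplicity $M>1$ would survive the $\limsup$ and, after $A-1\sim\log n$ iterations, degrade the final bound to $n^{2-2\beta+\log M}$. A further untreated point: deleting the top row moves $L_1$ to the left end of the promoted row, so the columns indexed by $i$ and $-j$ are re-anchored, and your claim that $\delta(i),\delta(-j)$ drop by exactly one is not justified.

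The paper's proof avoids all of this by localising the surgery at the pole. For $i=j=1$ the event $\{\delta(1)\geq A,\delta(-1)\geq A\}$ \emph{forces} the north pole to be a width-two tower of height $\geq A$; shrinking that tower is a composition of exactly $A-1$ pole-deletion moves (each a single allowed jump when $p=2$, each reducing the length by $2/N$), giving an exact identity $\nu(\cdot)=e^{-2\beta(A-1)}\nu(\tilde F(\cdot))$ with multiplicity one and an $A$-dependent but $N$-independent number of moves for the Dirichlet-form transfer. General $(i,j)$ is then reduced to $(1,1)$ by a second, \emph{length-preserving} surgery: SSEP moves that transport the tall column from abscissa $i$ to $i-1$ at zero cost in $\nu$-measure, with the cost under $\nu_f$ again controlled by the Dirichlet form. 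If you want to salvage your row-deletion idea you would need to first condition on the tower structure near the pole; as written, the argument does not go through.
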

\begin{proof}
Let $n\in\N_{\geq 1}$ be fixed. Equation~\eqref{eq_height_distance_k} follows from~\eqref{eq_increment_height_distance_k} by a union bound. 
The proof of~\eqref{eq_increment_height_distance_k} resembles that of Lemma~\ref{lemm_width_compactness_pis2}. 
We first treat the case $i=j=1$. $\{\delta(1)\geq A,\delta(-1)\geq A\}$ is the event that the north pole is on top of a stack which contains two blocks in width and at least $A$ blocks in height, 
the highest two blocks being those in the pole. 
With each $\gamma\in \{\delta(1)\geq A,\delta(-1)\geq A\}$ associate a curve $\tilde F(\gamma) \in\{\delta(1)\geq 1,\delta(-1)\geq 1\}$ in which the top $A-1$ pairs of two blocks have been deleted. 
The curve $\tilde F(\gamma)$ has length $|\gamma|-2(A-1)$, thus has higher equilibrium probability. 
In fact, if $\gamma^{-q}$ denotes the curve $\gamma$ in which the highest $q\in\N_{\geq 1}$ groups of two blocks have been deleted, 
$\tilde F$ is a bijection between the sets $\big\{\delta(1)\geq A,\delta(-1)\geq A,\gamma^{-(A-1)}\in\Omega^N_{\text{mic}}\big\}$ and $\big\{\delta(1)\geq 1,\delta(-1)\geq 1\big\}$. In addition:
\begin{equation}
\nu\Big(\delta(1)\geq A,\delta(-1)\geq A,\gamma^{-(A-1)}\in\Omega^N_{\text{mic}}\Big) 
= 
e^ {-2\beta(A-1)}\nu\Big( \delta(1)\geq 1,\delta(-1)\geq 1\Big)
  \leq 
e^ {-2\beta(A-1)}
.
\label{eq_true_equality_shrinking_curve_height_estimate}
\end{equation}
In the same way as in Lemma~\ref{lemm_required_compactness_pis2},~\eqref{eq_true_equality_shrinking_curve_height_estimate} holds under $\nu_f$ for any $\nu$-density $f$ up to an error term which quantifies the cost of consecutively deleting the two blocks in the pole of a curve $A-1$ times. As a result:
\begin{align}
&\sup_{\substack{f\geq 0 : E_{\nu}[f]=1 \\ D_N(f^{1/2})\leq C/N}}\nu_f\Big(\delta( 1)\geq A, \delta(-1)\geq A,\gamma^{-(A-1)}\in\Omega^N_{\text{mic}}\Big) \leq e^{-2\beta(A-1)}+O\big(N^{-1/2}\big). \label{eq_controle_height_increments_i_j_are_1}
\end{align}
As $\e \subset \{\gamma^{-(A-1)}\in\Omega^N_{\text{mic}}\}$ for all large enough $N$,~\eqref{eq_increment_height_distance_k} holds for $i=j=1$. \\

To prove~\eqref{eq_increment_height_distance_k} for each $(i,j)\in\{1,...,n\}$, let us first prove it for $j=1$, $i>1$. As for Lemma~\ref{lemm_width_compactness_pis2}, it is convenient to not work directly with $\e$, which does not have nice stability properties under the contour dynamics, but with the set $\tilde Q_n$ of curves with regions $1,4$ each containing at least $n$ horizontal edges in addition to those in the north pole:
\begin{equation}
\tilde Q_n := \Big\{\gamma\in\Omega^N_{\text{mic}} : \big[L_1(\gamma)- R_4(\gamma)\big]\cdot{\bf b}_0>\frac{n}{N}, \big[L_2(\gamma)-R_1(\gamma)\big]\cdot{\bf b}_0>\frac{n}{N}\Big\}.\label{eq_def_Q_prime_k}
\end{equation}
We claim that, for each $C>0$:
\begin{equation}
\lim_{N\rightarrow\infty}\sup_{\substack{f\geq 0 : E_{\nu}[f]=1 \\ D_N(f^{1/2})\leq C/N}}\Big|\nu_f\Big(\tilde Q_n,\delta(i)\geq A,\delta(-1)\geq A\Big) - \nu_f\Big(\tilde Q_n,\delta(i-1) \geq A,\delta(-1)\geq A\Big)\Big| = 0.\label{eq_incr_height_from_i_to_i_minus_one}
\end{equation}
The idea is the same as in Lemma~\ref{lemm_width_compactness_pis2}. 
The set $\tilde Q_n$ ensures that $\delta(i),|i|\leq n$ are well defined and involve only edges in region $4$ ($i\leq 0$) or $1$ ($i\geq 0$). 
One has again $\e\subset \tilde Q_n$ for large enough $N$ as curves in $\e$ have positive distance between each pole. 
In addition, $\tilde Q_n$ is stable under any dynamical move that 1) involves points at horizontal distance at most $n$ from the north pole and 2) do not change the lateral position of the north pole. 

A curve in $\tilde Q_n$ with $\delta(i)\geq A$ is transformed into one with $\delta(i-1)\geq A$ by deleting $A$ blocks with centres at abscissa $L_1\cdot{\bf b}_0-\frac{i}{N}+\frac{1}{2N}$. 
These deletions are SSEP moves, which do not change the length of the curve (and under which $\tilde Q_n$ is stable). 
Their cost is estimated in terms of the Dirichlet form, which vanishes with $N$.

Iterating~\eqref{eq_incr_height_from_i_to_i_minus_one} from $i$ to $1$ and using~\eqref{eq_controle_height_increments_i_j_are_1} yields~\eqref{eq_increment_height_distance_k} for indices $(i,-1)$. 
Now if $j\neq 1$, the same argument applies to go from $-j$ to $-1$. This concludes the proof of~\eqref{eq_increment_height_distance_k}.
\end{proof}
\subsubsection{Value of the slope at the pole}\label{sec_value_slope_at_the_pole}
We now have all prerequisites to prove that the motion of the north pole imposes a particle density of $e^{-\beta}$ on each side, as stated in Lemma~\ref{lemm_valeur_p_is_2}. 
The proof relies in a central way on the fact that the contour dynamics around the pole is irreducible, owing to the $e^{-2\beta}$ regrowth updates. 
These updates are the main difference with the zero temperature stochastic Ising model and the irreducibility is the main technical reason for the introduction of the parameter $\beta$. 
\begin{lemm}\label{lemm_valeur_p_is_2}
Let $\beta>\log 2$. 
For each $T,\delta>0$ and each test function $G\in\C$, 
\begin{align}
\lim_{N\rightarrow\infty}\frac{1}{N}\log \Prob^N_{\beta}\bigg(&\gamma^N_\cdot \in E([0,T],\e);\nonumber\\
\qquad &\bigg|\frac{1}{T}\int_0^ {T} G(t,L_1(\gamma^N_t))\big({\bf 1}_{p=2}- e^ {-\beta}\big)\, dt\bigg|\geq \delta\bigg) = 
-\infty
.
\label{eq_valeur_p_is_2}
\end{align}
The claim is also valid under $\Prob^N_{\beta,H}$ for $H\in\C$ by Corollary~\ref{coro_change_measure_sous_exp}.
\end{lemm}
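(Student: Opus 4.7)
\textbf{Reduction to a variational problem.} The plan is to follow the Feynman--Kac scheme already used in the proof of Lemma~\ref{lemm_pole_size_and_local_eq_at_the_pole}. Writing $\phi(t,\gamma) := G(t,L_1(\gamma))\big({\bf 1}_{p(\gamma)=2}-e^{-\beta}\big)$ and applying the exponential Chebyshev inequality at a parameter $a>0$, it is enough to control, for each fixed sign of the deviation,
\begin{equation*}
\frac{1}{N}\log \E^N_\beta\Big[{\bf 1}_{\gamma^N_\cdot\in E([0,T],\e)}\,\exp\Big[aN\int_0^T \phi(t,\gamma^N_t)\,dt\Big]\Big].
\end{equation*}
By the projection Lemma~\ref{lemm_projec_dynamics_onto_local}, this is bounded above by $C\beta+\int_0^T\sup_{f\geq 0,\ \nu_f(\e)=1}\{a E_{\nu_f}[\phi(t,\cdot)]-N D_N(f^{1/2})\}dt$. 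It thus suffices to show that uniformly in $t$, any $\nu$-density $f$ supported in $\e$ with $D_N(f^{1/2})\leq C/N$ satisfies $E_{\nu_f}[\phi(t,\cdot)]\to 0$ as $N\to\infty$. Partitioning curves according to the value of $L_1$ (as in~\eqref{eq_transit_penteopole_1}), one may remove $G$ and reduce to proving $E_{\nu_{f,x}}[{\bf 1}_{p=2}]\to e^{-\beta}$ for conditioned densities $f_{\cdot,x}$, uniformly in $x$.

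\textbf{Compactification of the pole region.} I would next use the estimates of Section~\ref{subsec_compactness_estimate_1pis2} to restrict the supremum to densities concentrated on curves whose geometry in a fixed $N$-independent neighbourhood of the pole is bounded. Specifically, Lemmas~\ref{lemm_width_compactness_pis2}--\ref{lemm_height_compactness_pis2} ensure that, up to an error that vanishes as the size $n$ of this neighbourhood grows, $f$-almost surely all column widths $w^\pm_n$ and heights $\Delta^\pm_n$ are bounded by polylogs in $n$. At this stage the effective state space for the configuration in the $2n\times 2n$ box centred at the pole is a finite set $\Sigma_n$ whose size does not grow with $N$.

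\textbf{Comparison with a zero-range-like dynamics and ergodicity.} The heart of the argument is to project the contour dynamics restricted to $\Sigma_n$ onto a local Markov chain. The column-height description makes this chain look like a zero-range process, coupled at its boundary to the SSEP living in the rest of the curve; the $e^{-2\beta}$ regrowth rule gives irreducibility of the projected chain on $\Sigma_n$, which is the decisive contrast with the zero temperature Ising dynamics. By the same bijection trick as in Lemma~\ref{lemm_bijection_argument} (adapted now to arbitrary local configurations in $\Sigma_n$), one can bound $\big[E_{\nu_{f,x}}[h]-E_{\nu}[h\mid \Sigma_n]\big]^2$ by a constant (depending on $n$, but not on $N$) times $D_N(f^{1/2})$, for any local function $h$ supported in $\Sigma_n$. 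Applied to $h={\bf 1}_{p=2}$ and combined with $D_N(f^{1/2})\leq C/N$, this yields $E_{\nu_{f,x}}[{\bf 1}_{p=2}]=E_{\nu}[{\bf 1}_{p=2}\mid \Sigma_n]+o_N(1)$.

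\textbf{Computation of the equilibrium value.} It remains to check that the equilibrium value $E_\nu[{\bf 1}_{p=2}\mid \Sigma_n]$ converges to $e^{-\beta}$ as $n\to\infty$. Here I would use the explicit product-like structure inherited from $\nu\propto e^{-\beta N|\gamma|}$: adding an isolated bump of two blocks on top of a flat portion of length $p-1$ costs length $2/N$ and can be done in $p-1$ distinct locations, so that a straightforward grand-canonical computation (or, equivalently, an application of Lemma~\ref{lemm_bijection_argument} to the equilibrium $f\equiv 1$) yields $E_\nu[{\bf 1}_{p=2,\gamma^{-,1}\in\Omega^N_{\text{mic}}}]=e^{-2\beta} E_\nu[p-1]$ together with $E_\nu[p-1]=e^\beta+o_n(1)$; this gives the desired value $e^{-\beta}$. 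Combining the three steps and taking $a\to\infty$, then $n\to\infty$, produces the super-exponential bound~\eqref{eq_valeur_p_is_2}.

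\textbf{Main obstacle.} The delicate point is clearly the comparison with the zero-range dynamics on $\Sigma_n$, and in particular the construction of the local paths required to implement the bijection argument at arbitrary local configurations. The lack of monotonicity prevents any use of coupling, and one has to rely on irreducibility of the regrowth moves to build, for any two configurations in $\Sigma_n$, a chain of admissible transitions whose number of steps and probability cost depend only on $n$. Controlling the error terms generated by these chains uniformly over densities $f$ supported in $\e$, and checking that the compactness estimates of Section~\ref{subsec_compactness_estimate_1pis2} suffice to make these chains well-defined, is where the bulk of the technical work lies.
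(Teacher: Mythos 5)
Your overall architecture coincides with the paper's: exponential Chebyshev plus the projection Lemma~\ref{lemm_projec_dynamics_onto_local}, compactification of the pole region via Lemmas~\ref{lemm_width_compactness_pis2}--\ref{lemm_height_compactness_pis2}, reduction to an irreducible zero-range--type chain whose only invariant densities are constant, and a final equilibrium computation. Two points in your plan are, however, genuinely under-specified and would fail as written.

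First, the frame. You condition on the configuration in a ``$2n\times 2n$ box centred at the pole''. Such a box is not preserved by the dynamics: the pole itself moves, so the jump rates of the projected process would depend on the configuration outside the box and the conditioning would not commute with the generator. The paper instead anchors the frame at the first level below the pole containing strictly more than $n$ blocks; the extremities $x_n(\gamma),y_n(\gamma)$ of the level just above it are invariant under all updates occurring inside the frame, which is exactly what makes the projected two-species zero-range chain autonomous (and what lets the Dirichlet form decompose by convexity over the sets $M_\ell$). Relatedly, your bijection-based Poincar\'e bound $\big[E_{\nu_{f,x}}[h]-E_\nu[h\mid\Sigma_n]\big]^2\leq C(n)D_N(f^{1/2})$ is a legitimate (if heavier) alternative to the paper's softer route, which merely passes to the $N\to\infty$ limit of the maximizing densities, uses lower semi-continuity of the Dirichlet form, and invokes irreducibility to conclude that the limit is constant; but either way you need the dynamically stable frame first.

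Second, the equilibrium value. The identity $E_{\mu_\ell}[{\bf 1}_{p=2}]=e^{-2\beta}E_{\mu_\ell}[p-1]$ obtained from Lemma~\ref{lemm_bijection_argument} at $f\equiv 1$ is exact but circular on its own: asserting $E[p-1]=e^{\beta}+o_n(1)$ is equivalent to the statement you are proving, and the measure $\mu_\ell$ is not product-like (the admissible paths are unimodal), so there is no ``straightforward grand-canonical computation''. The paper's Appendix~\ref{sec_equilibrium_estimates_at_the_pole} does the real work here: counting north-east paths gives $\binom{2q+\ell-2}{2q}e^{-2\beta q}$ configurations of height $q$, hence a large deviation principle for the height with minimizer $u_c=\tfrac12(e^{\beta}-1)^{-1}$, and then $E_{\mu_\ell}[{\bf 1}_{p=2}]=E_{\mu_\ell}\big[\tfrac{2\sum_j\eta_j}{2\sum_j\eta_j+\ell-2}\big]$ concentrates at $\tfrac{2u_c}{2u_c+1}=e^{-\beta}$. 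Without this combinatorial input your last step does not close.
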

\begin{proof}
The proof only deals with $H=0$ as generalisations to $\Prob^N_{\beta,H}$ for $H\in\C$ follow as in the proof of Lemma~\ref{lemm_1_block_2_block_pour_pis2}. 
We first write the proof in the case where the test function $G$ is equal to $1$ and explain at the end how to adapt it to non-constant $G$. 
To lighten notations, we do not explicitly write integer parts and drop the superscript $N$ for microscopic curves and trajectories.

The proof is structured as follows. We first use Lemma~\ref{lemm_projec_dynamics_onto_local} to project the dynamics inside $\e$. The compactness results provided by Section~\ref{subsec_compactness_estimate_1pis2} are then incorporated to the probability in~\eqref{eq_valeur_p_is_2}. This enables us to define a proper frame around the pole. After conditioning to this frame, the quantity to estimate in~\eqref{eq_valeur_p_is_2} can be retrieved from an equilibrium computation, which is the last step of the proof.\\

Let $\phi = {\bf 1}_{p=2} - e ^{-\beta}$ and $a>0$. 
By Markov inequality and Lemma~\ref{lemm_projec_dynamics_onto_local}, the left-hand side of~\eqref{eq_valeur_p_is_2} without the limits is bounded from above by:
\begin{equation}
-a\delta T+ C \beta + T\sup_{f\geq 0 :\nu_f(\e) =1}\Big\{aE_{\nu_f}[\phi] - ND_N(f^{1/2})\Big\}
.
\label{eq_qtite_interm_a_estime_val_pis2_0}
\end{equation}
It is therefore enough to estimate the supremum in~\eqref{eq_qtite_interm_a_estime_val_pis2_0}.\\

\noindent\textbf{Step 1: definition of a suitable frame around the pole.}\\
The first step consists in writing the expectation in~\eqref{eq_qtite_interm_a_estime_val_pis2_0} as a quantity that depends only on the dynamics around the pole. The idea is to compare the contour dynamics to a zero-range process with two species of particles. The number of particles is given by the height difference between consecutive columns around the pole. The species is determined by the sign of the height difference. This process is irreducible and its invariant measure can be made explicit. More is said on this dynamics below, see also Figure~\ref{fig_ZRP}. To make such a comparison, we define a frame around the pole, in which to study the pole dynamics. This is done as follows.\\

Fix an integer $n\in\N_{\geq 1}$, which will be the typical size of the frame around the pole. In the following, for a curve $\gamma\in \Omega^N_{\text{mic}}\cap \e$, we talk of \emph{blocks at level }$q\in\N$ to denote all blocks in $\Gamma$ with centre at height $z_1(\gamma) - N^{-1}(q+1/2)$, see Figure~\ref{fig_def_h_k}. With this notation, blocks at level $0$ correspond to blocks in the poles.\\
Consider the following partition of $\Omega^N_{\text{mic}}\cap \e$. For any curve $\gamma\in \Omega^N_{\text{mic}}\cap \e$, let $h_n(\gamma)$ be the smallest integer such that the number of blocks in $\Gamma$ (the droplet delimited by $\gamma$) at level $z_1- N^{-1}\big(h_n(\gamma)+\frac{1}{2}\big)$ is strictly larger than $n$ (see Figure~\ref{fig_def_h_k}):
\begin{equation}
h_n(\gamma) = \min\big\{ q \in \N : N_q(\gamma)> n\},\label{eq_def_h_k}
\end{equation}
where: 
\begin{equation}
N_q(\gamma) = \Big|\Big\{\text{blocks in } \Gamma \text{ at level }q,\text{ i.e. with centre at height }z_1(\gamma)-\frac{(q+1/2)}{N}\Big\}\Big|.
\end{equation}
\begin{figure}[H]
\begin{center}
\includegraphics[width=9cm]{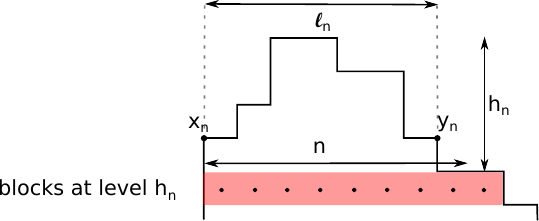} 
\caption{Definition of $h_n,\ell_n$ and $x_n,y_n$ for a given curve. The first level of blocks with width strictly larger than $n$ corresponds to the filled area. In this case there are $n+1$ such blocks, with centres indicated by black dots. The number $\ell_n$ of blocks in the last level containing at most $n$ blocks is equal here to $n-1$. The portion of the curve affected by the ZRP dynamics (see Figure~\ref{fig_ZRP} below) is delimited by dashed lines and the segment $[x_n,y_n]$. \label{fig_def_h_k}}
\end{center}
\end{figure}
These objects are well defined for elements of $\Omega_{\text{mic}}^N\cap\e$ as soon as $N$ is large enough compared to $n$, which we henceforth assume. Let $x_n(\gamma)\leq y_n(\gamma)\in V(\gamma)$ denote the extremal vertices of level $h_n(\gamma)-1$, i.e. the last level of $\Gamma$ with at most $n$ blocks; and let $\ell_n = \ell_n(\gamma)$ denote the number of blocks of this level (see Figure~\ref{fig_def_h_k}):
\begin{equation}
\ell_n(\gamma) := N\|y_n(\gamma)-x_n(\gamma)\|_1.
\end{equation}
The rescaling by $N$ comes from the fact that $x_n(\gamma),y_n(\gamma)$ are points of $(N^{-1}\Z)^2$. The quantity $\ell_n(\gamma)$ is thus an integer. For $2\leq \ell \leq n$, consider the set:
\begin{equation}
M_\ell = \big\{\gamma \in \Omega^N_{\text{mic}}\cap \e: \ell_n(\gamma) = \ell\big\}.\label{eq_def_M_ell}
\end{equation}
Then $(M_\ell)_{2\leq\ell\leq n}$ is a disjoint family which partitions $\Omega^N_{\text{mic}}\cap \e$ by construction. The expectation in~\eqref{eq_qtite_interm_a_estime_val_pis2_0} thus reads, for each $\nu$-density $f$ supported on $\e$:
\begin{equation}
E_{\nu_f}[\phi] = \sum_{2\leq \ell\leq n} E_{\nu_f}[{\bf 1}_{M_\ell}\phi].\label{eq_qtite_interm_a_estime_val_pis2_1}
\end{equation}
At this point, the splitting of curves in the different $M_\ell$ in~\eqref{eq_qtite_interm_a_estime_val_pis2_1} suffers from two flaws. 
On the one hand, the width $\ell$, which will correspond to the number of sites in a zero-range process, 
may be bounded independently of $n$. 
This makes a local equilibrium argument impossible to apply. 
On the other hand, the pole may be macroscopically higher than the points $x_n(\gamma),y_n(\gamma)$. 
The point is thus to find diverging sequences $h_{\max}(n),\ell_{\min}(n)$ such that, for any $C>0$:
\begin{align}
\limsup_{N\rightarrow\infty}\sup_{\substack{f\geq 0 : E_\nu[f]=1 \\ D_N(f^{1/2})\leq C/N}}&\nu_f\Big(\gamma\in\e,h_n(\gamma)\geq h_{\max}(n) \text{ or }\ell_n(\gamma)\leq \ell_{\min}(n)\Big)
=
o_n(1)
.
\end{align}
Lemmas~\ref{lemm_width_compactness_pis2}--\ref{lemm_height_compactness_pis2} enable the construction of such sequences, as we now explain.\\

Consider first the height $h_n(\gamma)$, defined in~\eqref{eq_def_h_k}. 
Then either $h_n(\gamma) = 0$, which corresponds to having at least $n$ blocks in the north pole $P$: $p(\gamma)\geq n$. 
Or $h_n(\gamma)\geq 1$ and there are strictly less than $n$ blocks at level $h_n(\gamma)-1$, 
thus strictly less than $n$ blocks both with abscissa smaller and larger than $L_1\cdot {\bf b}_0$ at this level. 
In both cases, 
recalling from Lemma~\ref{lemm_height_compactness_pis2} that $\Delta^ +_n(\gamma)$ (respectively: $\Delta^-_n(\gamma)$) is the number of vertical edges between the pole and the first point at horizontal distance $n+1$ of $L_1+\frac{{\bf b}_0}{N}$ to its right (respectively: to its left), we find:
\begin{equation}
h_{n}(\gamma)-1\leq  \min\big\{\Delta^+_n(\gamma),\Delta^-_n(\gamma)\big\}.
\end{equation}
Lemma~\ref{lemm_height_compactness_pis2} then gives for each $C>0$:
\begin{align}
&\limsup_{N\rightarrow\infty}\sup_{\substack{f\geq 0: E_{\nu}[f]=1 \\ D_N(f^{1/2})\leq C/N}}\nu_f\Big(\gamma\in \e,h_n(\gamma)\geq h_{\max}(n)\Big) 
=
o_n(1)
,\quad 
h_{\max}(n)
:= 
n(1+2\log n)
.\label{eq_height_typical}
\end{align}
We now turn to an estimate of the number of blocks $\ell_n = \ell_n(\gamma)$ at level $h_n(\gamma)-1$. Recalling the definition of the widths $w^\pm$ from Lemma~\ref{lemm_width_compactness_pis2}, notice first the identity:
\begin{equation}
\forall\gamma\in \Omega^N_{\text{mic}}\cap \e,\qquad \ell_n(\gamma) = w^+_{h_n-1}(\gamma) + w^-_{h_n-1}(\gamma).\label{eq_width_ell_as_width_w_of_h}
\end{equation}
Let us use~\eqref{eq_width_ell_as_width_w_of_h} and a bound on $h_n$ to estimate $\ell_n$. Let $a_n>0$ to be chosen later, fix $C>0$ and a $\nu$-density $f$ with $D_N(f^{1/2})\leq C/N$. According to~\eqref{eq_width_ell_as_width_w_of_h}, one has:
\begin{align}
\Big\{\gamma\in \e,\ell_n(\gamma)\leq a_n\Big\}\subset  \Big\{\gamma\in \e,w^-_{h_n-1}(\gamma) \leq a_n\Big\}\cap\Big\{\gamma\in \e,w^+_{h_n-1}(\gamma) \leq a_n\Big\}.\label{eq_upper_bound_ell_small0}
\end{align}
Consider e.g. the event involving $w^-_{h_n-1}$. In that event, if one travels a horizontal distance $a_n$ to the left of $L_1$, it must then be that at least $h_n-1$ vertical edges have been encountered, so that:
\begin{equation}
\Big\{\gamma\in\e: w^-_{h_n-1}(\gamma)\leq a_n\Big\}\subset \Big\{\gamma\in\e: \Delta^-_{a_n}(\gamma)\geq h_n(\gamma)-1\Big\}.\label{eq_upper_bound_ell_small1}
\end{equation}
Proceeding similarly for $w^+_{h_n-1}$ implies:
\begin{equation}
\Big\{\gamma\in \e,\ell_n(\gamma)\leq a_n\Big\}
\subset
\Big\{\gamma\in\e: \min\big\{\Delta^-_{a_n}(\gamma),\Delta^+_{a_n}\big\}\geq h_n(\gamma)-1\Big\}
.
\end{equation}
To bound the probability of the event $\{\gamma\in\e,\ell_n(\gamma)\leq a_n\}$, 
the idea is then to show that $h_n-1$ has to be larger than some $b_n>0$ with large probability, 
then to choose $a_n$ as a function of $b_n$ such that $\{\gamma\in\e,\Delta^{\pm}_{a_n}(\gamma)\geq b_n\}$ is unlikely by Lemma~\ref{lemm_height_compactness_pis2}. 
To choose $b_n>0$, we make use of Lemma~\ref{lemm_width_compactness_pis2} and the following observation which holds by definition of $h_n$:
\begin{equation}
h_n\leq b_n \quad \Rightarrow\quad w^-_{b_n}+w^+_{b_n} > n.
\end{equation}
Lemma~\ref{lemm_width_compactness_pis2} controls the width on either side of $L_1$: choosing $b_n = \sqrt{n/2}$,
\begin{align}
\limsup_{N\rightarrow\infty}\sup_{\substack{f\geq 0 : E_\nu[f]=1 \\ D_N(f^{1/2})\leq C/N}}&\nu_f\big(\gamma\in\e,h_n(\gamma)\leq b_n\big)\nonumber\\
&\leq  2\limsup_{N\rightarrow\infty}\max_{\epsilon\in\{-,+\}}\sup_{\substack{f\geq 0 : E_\nu[f]=1 \\ D_N(f^{1/2})\leq C/N}}\nu_f\Big(\gamma\in \e,w^{\epsilon}_{b_n}(\gamma)\geq n/2\Big) = o_n(1). \label{eq_h_k_not_too_small}
\end{align}
From~\eqref{eq_upper_bound_ell_small0} and~\eqref{eq_upper_bound_ell_small1} one can therefore write:
\begin{align}
\Big\{\gamma\in \e,\ell_n(\gamma)\leq a_n\Big\} 
&\subset 
\Big\{\gamma\in\e, \min\big\{\Delta^-_{a_n}(\gamma),\Delta^+_{a_n}(\gamma)\geq h_{n}(\gamma)-1\Big\}
\nonumber\\
&\quad 
\cap
\Big(\big\{\gamma\in\e, h_n(\gamma)-1\geq b_n\big\}\cup \big\{\gamma\in\e, h_n(\gamma)\leq b_n\big\}\Big)
.
\end{align}
The event involving $\{h_n\leq b_n\}$ is estimated through~\eqref{eq_h_k_not_too_small}, so that it only remains to bound $\min\{\Delta^{-}_{a_n},\Delta^{-}_{a_n}\}$ when $h_n\geq b_n+1$:
\begin{align}
\limsup_{N\rightarrow\infty}\sup_{\substack{f\geq 0 : E_\nu[f]=1 \\ D_N(f^{1/2})\leq C/N}}&\nu_f\Big(\gamma\in\e,\ell_n\leq a_n\Big)\nonumber\\
&\leq  \limsup_{N\rightarrow\infty}\sup_{\substack{f\geq 0 : E_\nu[f]=1 \\ D_N(f^{1/2})\leq C/N}}\nu_f\Big(\gamma\in\e,\Delta^-_{a_n}(\gamma)\geq h_n(\gamma)-1\geq b_n\Big) + o_n(1). \label{eq_width_small_bounded_by_Delta_big}
\end{align}
We now choose $a_n$ as a function of $b_n =  \sqrt{n/2}$ so that the probability in the right-hand side of~\eqref{eq_width_small_bounded_by_Delta_big} vanishes for large $n$. 
By Lemma~\ref{lemm_height_compactness_pis2}, 
$\min\{\Delta^-_{a_n},\Delta^{+}_{a_n}\}$ is typically smaller than $a_n(1+2\log(a_n))$. 
It thus suffices to take $a_n$ with $a_n(1+2\log(a_n)) \leq b_n$, e.g. for large enough $n$:
\begin{equation}
a_n 
= 
\frac{\sqrt{n}}{4\log n}
=: 
\ell_{\min}(n)
.
\label{eq_def_ell_min_de_k}
\end{equation}
With this choice of $a_n = \ell_{\min}(n)$,~\eqref{eq_width_small_bounded_by_Delta_big} yields the desired control on $\ell_n$:
\begin{align}
\limsup_{N\rightarrow\infty}\sup_{\substack{f\geq 0 : E_\nu[f]=1 \\ D_N(f^{1/2})\leq C/N}}\nu_f\Big(\gamma\in \e,\ell_{n}(\gamma)\leq \ell_{\min}(n)\Big)=o_n(1).\label{eq_width_typical}
\end{align}
We now use the bounds~\eqref{eq_height_typical}--\eqref{eq_width_typical} on $h_n$ and $\ell_n$ to restrict admissible configurations around the pole, thus concluding the definition of the frame around the pole. Recall that $h_{\max}(n) = n(1+2\log n)$ and $\ell_{\min}(n) := \sqrt{n}/(4\log n)$. 
From the splitting~\eqref{eq_qtite_interm_a_estime_val_pis2_1} of curves in the different $M_\ell$ ($2\leq \ell \leq n$) and the above discussion on bounds of $h_n,\ell_n$, as also $\phi = {\bf 1}_{p=2} - e^{-\beta}$ is bounded,~\eqref{eq_qtite_interm_a_estime_val_pis2_0} is bounded from above by:
\begin{equation}
-a\delta T + C\beta + T\ \sup_{f\geq 0 : \nu_f(\e) =1}\Big\{ a\sum_{\ell_{\min}(n) \leq\ell\leq n}E_{\nu_f}\big[{\bf 1}_{M_\ell}{\bf 1}_{h_n\leq h_{\max}(n)}\phi\big] - \frac{N}{2} D_N(f^{1/2}) \Big\}+T\omega_{N,n},\label{eq_qtite_interm_a_estimer_1pis2_is_exp-beta}
\end{equation}
where $\omega_{N,n}$ satisfies, by~\eqref{eq_height_typical} and~\eqref{eq_width_typical}:
\begin{align}
&\limsup_{N\rightarrow\infty}\omega_{N,n} \nonumber\\
&\quad \leq  a\|\phi\|_\infty\limsup_{N\rightarrow\infty}\sup_{\substack{f\geq 0 : E_\nu[f]=1 \\ D_N(f^{1/2})\leq 2\|\phi\|_\infty a/N}}\nu_f\Big(\gamma\in\e,h_n>h_{\max}(n)\text{ or }\ell_n<\ell_{\min}(n)\Big)=o_n(1).
\end{align}
It is thus sufficient to estimate the supremum in~\eqref{eq_qtite_interm_a_estimer_1pis2_is_exp-beta}.\\

\noindent\textbf{Step 2: conditioning and mapping to a two-species zero-range process.}\\
\begin{figure}[H]
\begin{center}
\includegraphics[width=11cm]{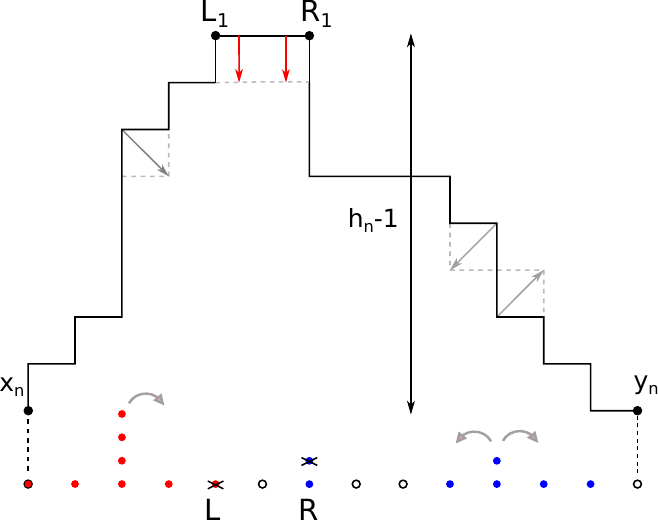} 
\caption{Portion of the interface of a curve around the north pole $[L_1,R_1]$ and associated path and particle configurations. Particles are represented by red dots, antiparticles by blue dots; and empty sites are white with a dark contour. 
In the particle configuration, the rightmost particle (in red) is at site $L$, the leftmost antiparticle (in blue) at site $R$. 
The quantity $|\eta_j|$ at site $0\leq j\leq \ell$ is the number of particles/antiparticles at site $j$, with $\eta_j\geq 0$ for particles, $\eta_j\leq 0$ for antiparticles. Here, one has e.g. $\eta_0 =0, \eta_2 = 4, \eta_R=-2$. The height $h_n-1$ is the total number of particles (or of antiparticles).\\ 
The grey arrows on the particle configuration correspond to jumps allowed by the contour dynamics that conserve the particle number. A move reducing the length of the curve, materialised on the curve by the vertical arrows, corresponds to a particle-antiparticle pair annihilation, represented by the black crosses. No particle creation is represented here.\label{fig_ZRP}}
\end{center}
\end{figure}
We now study the expectation in~\eqref{eq_qtite_interm_a_estimer_1pis2_is_exp-beta} on each $M_\ell$ for $\ell_{\min}(n)\leq \ell\leq n$, 
where the set $M_\ell$ is defined in~\eqref{eq_def_M_ell}. The goal is to obtain a local description of the contour dynamics around the pole,  
rewriting the expectation in~\eqref{eq_qtite_interm_a_estimer_1pis2_is_exp-beta} and the variables it contains in terms of quantities that only depend on possible shapes of curves in a neighbourhood of the pole.  
We claim that to each configuration in $M_{\ell}$ corresponds a unique particle configuration in $\Omega_{\ell} = \Z^{\ell+1}$. 
The mapping goes as follows. If $\gamma\in M_\ell$, define, for $0\leq j \leq \ell$, a particle number $\eta_j$ corresponding to the number of vertical edges with abscissa $x_n(\gamma)+j/N$ that correspond to points above $x_n(\gamma)$, 
i.e. to points $z\in\gamma$ with ordinate $[z-x_n(\gamma)]\cdot{\bf b}_{\pi/2}\geq 0$ (see Figure~\ref{fig_ZRP}):
\begin{equation}
\eta_j 
= 
\epsilon_j\sum_{\substack{z\in V(\gamma):z\cdot {\bf b}_0 = x_n(\gamma)\cdot {\bf b}_0 +j/N\\ [z-x_n(\gamma)]\cdot {\bf b}_{\pi/2} \geq 0}}\xi_z,\quad \epsilon_j = \begin{cases}
1&\text{if }j/N \leq L_1\cdot {\bf b}_0,\\
-1 &\text{if }j/N>L_1\cdot {\bf b}_0.
\end{cases}
\end{equation}
In words, $\eta_j\geq 0$ if $j$ is smaller than the abscissa $L_1\cdot{\bf b}_0$ of $L_1$ and $\eta_j\leq 0$ if $j$ is larger. 
If $\eta_j<0$ for some $j$, we say that there are $|\eta_j|$ antiparticles at site $j$.  
Note that the $(\eta_j)_j$ are related to the  $(\delta(i))_i$ of Lemma~\ref{lemm_height_compactness_pis2} through:
\begin{equation}
\eta_j
=
(-1)^{{\bf 1}_{j\geq L_1\cdot{\bf b}_0+1}} \delta(j-1-L_1\cdot{\bf b}_0),
\qquad j\in \{1,...,\ell-1\}
.
\end{equation}
We prefer the parametrisation in terms of $(\eta_j)_j$ here for two reasons: 
we do not wish to label points according to the position of the poles and only vertical edges at level $h_n -1$ count (recall Figure~\ref{fig_def_h_k}) towards the particle/antiparticle number whereas $\delta$ is not cut off.

The constraint $z\cdot {\bf b}_{\pi/2} > z_1(\gamma)-N^{-1}(h(\gamma)-1)$ guarantees that only the vertical edges above $x_n(\gamma),y_n(\gamma)$ are counted as particles. 
We let $\eta(\gamma)$ denote the unique particle configuration in $\Omega_\ell$ associated with $\gamma\in M_\ell$ (see Figure~\ref{fig_ZRP}). 
Conversely, with each $\eta\in\Omega_\ell$ can be associated  one or more interface in $M_\ell$ that are identical above $x_n(\gamma)$.  
Note that, as $x_n(\gamma)$ and $y_n(\gamma)$ have the same ordinate, the number of particles and antiparticles is the same. 
Note also, importantly, that to \emph{any} configuration $\eta\in\Omega_\ell$ corresponds a curve in $M_\ell$. 
Said differently, the condition that configurations are associated to curves in $M_\ell$ has no other effect on the resulting ZRP configurations than fixing the number of sites.

The quantity $h_n$ corresponds to the number of particles (or equivalently of antiparticles). 
For curves in $M_\ell$, the event $\{h_n\leq h_{\max}(n)\}$ appearing in~\eqref{eq_qtite_interm_a_estimer_1pis2_is_exp-beta} can thus be rewritten as: 
\begin{align}
W_\ell
:=
\{\rho^\ell \leq C_\ell\},
\quad \text{where}\quad 
\rho^\ell 
= 
\frac{1}{\ell+1}\sum_{j=0}^\ell |\eta_j|,\quad C_\ell = C_{\ell,n} 
&= 
\frac{2}{\ell+1}\big(h_{\max}(n)-1\big) 
\nonumber\\
&= 
\frac{4n\log n + 2(n-1)}{\ell+1} 
.
\label{eq_def_event_Well}
\end{align}
For $\ell\in\{\ell_{\min}(n),...,n\}$, 
define the probability measure $\bar \mu_\ell$ on $\Omega_\ell$:
\begin{equation}
\forall\eta\in \Omega_\ell,\qquad 
\bar\mu_\ell (\eta) 
= 
\bar{\mathcal{Z}}_\ell^{-1} \exp\Big[-\beta\ell-\beta\sum_{j=0}^{\ell}|\eta_j|\Big]
,
\label{eq_def_tilde_mu_ell}
\end{equation}
where $\bar{\mathcal{Z}}_\ell$ is a normalisation factor and $\sum_{j=0}^{\ell}|\eta_j|+\ell$ is the number of edges in the portion of $\gamma$ which is mapped to the particle configuration $\eta$.  
Though we could factor it out as it is common to all $\eta$, the $e^{-\beta\ell}$ factor in the definition of $\bar \mu_\ell$ will be convenient later on.

Let $f$ be a $\nu$-density supported on $\e$ with $\nu_f(M_\ell)>0$ and define its marginal $\bar f_\ell$ for $\bar\mu_\ell$:
\begin{equation}
\forall\eta\in \Omega_\ell,\qquad \bar f_\ell(\eta) := \frac{1}{\nu_f(M_\ell)}\sum_{\gamma\in  M_{\ell}: \eta= \eta(\gamma)} \mathcal Z_{\beta}^{-1}f(\gamma) \exp\Big[-\beta\Big(N|\gamma|-\ell-\sum_{j=0}^\ell|\eta_j|\Big)\Big].\label{eq_def_tilde_f_ell}
\end{equation}
With the above definitions, 
the expectation in~\eqref{eq_qtite_interm_a_estimer_1pis2_is_exp-beta} can be rewritten in terms of particle configurations:
\begin{align}
E_{\nu}\Big[f{\bf 1}_{M_\ell}{\bf 1}_{h_n\leq h_{\max}(n)}\phi\Big] = \nu_f(M_\ell)E_{\bar\mu_\ell}\Big[\bar f_\ell{\bf 1}_{W_\ell} \phi\Big]
.
\label{eq_qtite_interm_a_estimer_1pis2_is_exp-beta_2}
\end{align}
The constraint ${\bf 1}_{M_\ell}$ is implicitly contained in the definition of $\mu_\ell$ as a measure on $ell+1$ sites.

From~\eqref{eq_qtite_interm_a_estimer_1pis2_is_exp-beta_2} it follows that we know how to estimate the supremum in~\eqref{eq_qtite_interm_a_estime_val_pis2_0} as soon as we can estimate (recall $\ell_{\min}(n)=\sqrt{n}/(4\log n)$):
\begin{align}
\sup_{f \geq 0 :\nu_f(\e) =1}\Big\{\sum^{n}_{\ell= \ell_{\min}(n)} a\nu_f(M_\ell)E_{\bar \mu_\ell}\big[\bar f_\ell{\bf 1}_{W_\ell}\phi\big] - \frac{N}{2}D_N(f^{1/2})\Big\},\qquad \phi := {\bf 1}_{p=2} - e^{-\beta}.\label{eq_qtite_interm_a_estimer_1pis2_is_exp-beta_3}
\end{align}
\noindent\textbf{Step 3: local equilibrium.}\\
We now prove that estimating the supremum in~\eqref{eq_qtite_interm_a_estimer_1pis2_is_exp-beta_3} reduces to an equilibrium computation. At this stage, the technique is the same as in \cite{Kipnis1999}.
Denote by $\bar D_\ell$ the reduced Dirichlet form on $\Omega_\ell$, defined as follows. 
For $\eta\in \Omega_\ell$, 
let $P(\eta)$ denote the vertices making up the "pole" of $\eta$, 
i.e. $P(\eta) = \{L,...,R\}$, 
with $L,R$ such that $\eta_L$ is the last $\eta_j$ that is strictly positive (or $L=0$ if there are no such $\eta_j$) and $\eta_R$ the first to be strictly negative (or $R=\ell$ if none exist). Let also $p(\eta)= |P(\eta)|-1$. 
For any two configurations $\eta,\tilde\eta\in\Omega_\ell$, let $\gamma,\tilde\gamma\in M_\ell$ respectively be two associated interfaces and define a jump rate:
\begin{equation}
c(\eta,\tilde\eta) 
:= 
c(\gamma,\tilde\gamma),\qquad 
\text{with }
c(\gamma,\tilde\gamma)\text{ given in Definition }\ref{def_contour_dynamics}
.
\end{equation}
Since the positions of the extremal sites $0$, $\ell$ (corresponding for curves $\gamma$ compatible with a given configuration to the points $x_n(\gamma),y_n(\gamma)$) are unchanged by dynamical updates involving vertices with ordinate higher than that of $x_n(\gamma)$ or $y_n(\gamma)$, the last level of $\gamma$ with less than $n$ blocks, which defines the position of $x_n(\gamma),y_n(\gamma)$, is never modified. 
The jump rates $c(\eta,\eta')$ therefore only depend on $\eta,\eta'$ and not on the rest of the curves $\gamma,\gamma'$. 
For any $\bar\mu_\ell$-density $g$, let $\bar D_\ell$ denote the associated Dirichlet form:
\begin{align}
\bar D_\ell(g^{1/2}) &= \frac{1}{2}\sum_{\eta,\tilde\eta\in \Omega_\ell} \bar\mu_\ell(\eta)c(\eta,\tilde\eta)\big[g^{1/2}(\tilde\eta)-g^{1/2}(\eta)\big]^2. \label{eq_def_dir_form_ZRP}
\end{align}
Convexity then yields, recalling that $\ell_{\min}(n)$ is defined in~\eqref{eq_def_ell_min_de_k}:
\begin{equation}
D_N(f^{1/2}) \geq \sum^{n}_{\ell = \ell_{\min}(n)}\nu_f(M_\ell)\bar D_\ell(\bar f_\ell^{1/2}).\label{eq_D_n_f_majore_les_D_ell_f_ell_valeur_1_pis2}
\end{equation}
Injecting~\eqref{eq_D_n_f_majore_les_D_ell_f_ell_valeur_1_pis2} into the supremum in~\eqref{eq_qtite_interm_a_estimer_1pis2_is_exp-beta_3}, we see that it is enough to estimate:
\begin{align}
\sup_{f\geq 0:\nu_f(\e)=1} \bigg\{\sum_{\ell =\ell_{\min}(n)}^ {n} \nu_f(M_\ell)\Big[aE_{\bar\mu_\ell}\big[\bar f_\ell{\bf 1}_{W_\ell} \phi\big] - \frac{N}{2}\bar D_\ell(\bar{f}_\ell^{1/2})\Big]\bigg\}.\label{eq_qtite_to_estimate_after_discard_e_r_d_ZRP}
\end{align}
We are nearly done with conditioning to a frame where we can compute the expectation in~\eqref{eq_qtite_to_estimate_after_discard_e_r_d_ZRP}. The remaining step is to reduce the state space $\Omega_\ell = \Z^{\ell+1}$ to something that is compact. By definition of $\bar f_\ell,\bar \mu_{\ell},\bar D_\ell$ in~\eqref{eq_def_tilde_f_ell}--\eqref{eq_def_tilde_mu_ell}--\eqref{eq_def_dir_form_ZRP} respectively, it is enough to delete all jumps that increase the number of particles above what is allowed by $W_\ell$ (defined in~\eqref{eq_def_mu_ell}). Indeed, define $\mu_\ell$ as a measure on $W_\ell$ as follows:
\begin{equation}
\forall \eta \in W_{\ell} = \Big\{\rho^\ell\leq C_\ell =:\frac{2h_{\max}(n)}{\ell+1}\Big\},\qquad \mu_{\ell}(\eta) :=  \mathcal{Z}_\ell^{-1} \exp\Big[-\beta\ell-\beta\sum_{j=0}^{\ell}|\eta_j|\Big] = \frac{\bar{\mathcal{Z}}_\ell}{\mathcal{Z}_\ell}\bar \mu_\ell(\eta),\label{eq_def_mu_ell}
\end{equation}
where $\mathcal Z_\ell$ is a normalisation factor on $W_\ell$. The marginal $\bar f_\ell$ is correspondingly modified into a $\mu_\ell$-density $f_\ell$:
\begin{equation}
\forall \eta \in W_\ell,\qquad f_\ell(\eta) := \frac{\mathcal Z_\ell}{\bar{\mathcal{Z}}_\ell}\frac{1}{E_{\bar \mu_\ell}\big[\bar f_\ell{\bf 1}_{W_\ell}\big]}\bar f_\ell(\eta).
\end{equation}
Finally, the Dirichlet form $D_\ell$ for the reduced dynamics reads, for any $\mu_\ell$-density $g$:
\begin{align}
&D_\ell(g^{1/2}) = \sum_{\eta,\tilde\eta\in W_\ell}\mu_\ell(\eta)c(\eta,\tilde\eta)[g^{1/2}(\eta')-g^{1/2}(\eta)]^2.\label{eq_def_dir_form_ZRP_pour_mu_ell}
\end{align}
Since we simply restricted allowed jumps, one has:
\begin{equation}
\bar D_\ell(\bar{f_\ell}^{1/2}) 
\geq 
D_\ell(f_\ell^{1/2})E_{\bar \mu_\ell}[\bar{f}_\ell{\bf 1}_{W_\ell}]\bar{\mathcal Z_\ell}/\mathcal Z_\ell
.
\end{equation}
Under $\mu_\ell$, the supremum to estimate in~\eqref{eq_qtite_to_estimate_after_discard_e_r_d_ZRP} is then bounded from above by:
\begin{align}
&\sup_{f\geq 0:\nu_f(\e)=1} \bigg\{\sum^{n}_{\ell =\ell_{\min}(n)} \nu_f(M_\ell)\frac{\bar{\mathcal Z_\ell}}{\mathcal Z_\ell}E_{\bar\mu_\ell}\big[\bar f_\ell{\bf 1}_{W_\ell}\big]\Big[a E_{\mu_\ell}\big[f_\ell \phi\big] - \frac{N}{2} D_\ell(f_\ell^{1/2})\Big]\bigg\}\nonumber\\
&\leq a\sup_{f\geq 0:\nu_f(\e)=1}\bigg\{\sum^{n}_{\ell =\ell_{\min}(n)}\nu_f(M_\ell)\frac{\bar{\mathcal Z_\ell}}{\mathcal Z_\ell}E_{\bar\mu_\ell}\big[\bar f_\ell{\bf 1}_{W_\ell}\big] \bigg[\sup_{\substack{g\geq 0 : E_{\mu_\ell}[g] = 1 \\ D_\ell(g^{1/2})\leq 2a\|\phi\|_\infty/N}}E_{\mu_\ell}\big[g \phi\big]\bigg]\bigg\}\nonumber\\
&\leq a\max_{\ell_{\min}(n)\leq \ell\leq n}\bigg[\sup_{\substack{g\geq 0 : E_{\mu_\ell}[g] = 1 \\ D_\ell(g^{1/2})\leq 2a\|\phi\|_\infty/N}}E_{\mu_\ell}\big[g \phi\big]\bigg]\times \sup_{f\geq 0:\nu_f(\e)=1}\bigg\{\sum^{n}_{\ell =\ell_{\min}(n)}\nu_f(M_\ell)\frac{\bar{\mathcal Z_\ell}}{\mathcal Z_\ell}E_{\bar\mu_\ell}\big[\bar f_\ell{\bf 1}_{W_\ell}\big] \bigg\}.\label{eq_qtite_to_estimate_after_conditioning_valeur_1_pis2}
\end{align}
The second supremum is bounded by $1$. The proof of Lemma~\ref{lemm_valeur_p_is_2} will therefore be concluded if we can prove that, for fixed $n$ and $N$ large, the supremum on $g$ in the right-hand side of~\eqref{eq_qtite_to_estimate_after_conditioning_valeur_1_pis2} is bounded by $o_n(1)$ uniformly in $\ell$ with $\ell_{\min}(n)\leq \ell\leq n$.\\
Fix $\ell\in \{\ell_{\min}(n),...,n\}$. As $W_\ell$ is compact, the supremum on $g$ in~\eqref{eq_qtite_to_estimate_after_conditioning_valeur_1_pis2} is achieved by a density $g^N_\ell$ for each $N$. Up to taking a subsequence, by lower semi-continuity of $D_\ell$ and continuity of the expectation in~\eqref{eq_qtite_to_estimate_after_conditioning_valeur_1_pis2} with respect to weak convergence, we can take the large $N$ limit and restrict ourselves to studying:
\begin{align}
\sup_{\substack{g^\infty\geq 0 : E_{\mu_\ell}[g^\infty]=1 \\ D_\ell((g^\infty)^{1/2})=0}}E_{\mu_\ell}[g^\infty \phi].
\end{align}
By definition of $D_\ell$, the zero-range dynamics is irreducible on $W_\ell$. 
This is the major difference between the contour dynamics and the $0$-temperature stochastic Ising model, 
which motivated the introduction of the temperature-like parameter $\beta$ to allow for regrowth at the poles. 
Irreducibility means that any $g^\infty$ satisfying $D_\ell(g^\infty)=0$ is constant equal to $1$ and we are left with the estimate of:
\begin{equation}
E_{\mu_\ell}[\phi]\qquad \text{with}\quad \phi = {\bf 1}_{p=2} - e^{-\beta}.\label{eq_to_compute_at_equilibrium_poles_exp-beta}
\end{equation}
\noindent \textbf{Step 4: equilibrium computations}\\ 
The expectation~\eqref{eq_to_compute_at_equilibrium_poles_exp-beta} is taken under the equilibrium measure of the zero-range dynamics. Properties of the measure $\mu_\ell$ are analysed in Appendix~\ref{sec_equilibrium_estimates_at_the_pole}. In particular, it is proven there that, recalling the definition~\eqref{eq_def_ell_min_de_k} of $\ell_{\min}(n)$:
\begin{equation}
\lim_{n\rightarrow\infty}\max_{\ell_{\min}(n)\leq \ell \leq n}\big|E_{\mu_\ell}[\phi]\big| = 0.\label{eq_estimate_1_pis2_minus_exp-_beta_under_mu_ell}
\end{equation}
Equation~\eqref{eq_estimate_1_pis2_minus_exp-_beta_under_mu_ell} concludes the proof of Lemma~\ref{lemm_valeur_p_is_2} when $G=1$. \\

Let us now discuss the case $G\neq 1$. 
In this case we need to bound:
\begin{equation}
-a\delta T+ C \beta + \int_0^T\sup_{f\geq 0 :\nu_f(\e) =1}\Big\{aE_{\nu_f}[\phi G(t,L_1)] - ND_N(f^{1/2})\Big\}
\, dt
.
\label{eq_supremum_penteopole_with_test_function0}
\end{equation}
Fix $t\in[0,T]$ and consider the supremum at time $t$. 
Since $G$ is bounded, it is still possible to localise around the pole through bounds on $h_n,\ell_n$, 
so that it is sufficient to estimate the following analogue of~\eqref{eq_qtite_interm_a_estimer_1pis2_is_exp-beta} for each $t\leq T$ and $\nu$-density $f$ supported in $\e$:
\begin{align}
\sum_{\ell=\ell_{\min}(n)}^n &E_{\nu_f}\big[{\bf 1}_{M_\ell}{\bf 1}_{h_n\leq h_{\max}(n)} \phi\ G(t,L_1)\big] \nonumber\\
&= 
\sum_{\ell=\ell_{\min}(n)}^n E_{\nu_f}\big[{\bf 1}_{M_\ell}{\bf 1}_{h_n\leq h_{\max}(n)}{\bf 1}_{x_n(\gamma) = x}\phi\ G(t,x_n)\big]+o_N(1)
,
\label{eq_decomp_calcul_slope_with_test_function}
\end{align}
with an error term uniform in $f$, accounting for the difference $G(t,L_1)-G(t,x_n)$. This error term vanishes for $N$ large due to the conditions $\ell_n\leq n$, $h_n\leq h_{\max}(n)$ which imply $\|x_n-L_1\|_1\leq C(G,n)/N$ for some $C(G,n)>0$ independent of $N$.

To reduce to the $G=1$ case, we further split the expectation depending on the position of $x_n(\gamma)$. 
The first term in the right-hand side of~\eqref{eq_decomp_calcul_slope_with_test_function} is then equal to:
\begin{align}
\sum_{\ell=\ell_{\min}(n)}^n \sum_{x\in (N^{-1}\Z)^2} \|G\|_{\infty}\Big|E_{\nu_f}\big[{\bf 1}_{M_\ell}{\bf 1}_{h_n\leq h_{\max}(n)}{\bf 1}_{x_n(\gamma) = x}\phi\big]\Big|
.
\end{align}
At this point $G$ dose not play a role any more and we can proceed as in the $G=1$ case. 
Indeed, 
the position of the left extremity $x_n(\gamma)$ of the frame around the pole is unchanged by the zero-range dynamics, see the discussion following~\eqref{eq_def_dir_form_ZRP}. 
Conditioning on its position therefore does not affect the projection on the ZRP. 
As such, the rest of the arguments in the proof of Lemma~\ref{lemm_valeur_p_is_2} go through unchanged, 
except that one has to rewrite everything with $x$ fixed, e.g. to consider $M_{\ell,x} = M_{\ell}\cap\{x_n(\gamma)=x\}$ instead of $M_\ell$, $f_{\ell,x},\bar f_{\ell,x}$ instead of $f_\ell,\bar f_\ell$ for the new marginal of $f$ under $\mu_\ell,\bar\mu_\ell$ respectively, etc. 
In particular, the supremum in~\eqref{eq_supremum_penteopole_with_test_function0} is bounded at each time $t\leq T$ by the following variant of~\eqref{eq_qtite_to_estimate_after_conditioning_valeur_1_pis2}:
\begin{align}
&
\sup_{f\geq 0:\nu_f(\e)=1} \bigg\{\sum^{n}_{\ell =\ell_{\min}(n)}\sum_{x\in (N^{-1}\Z)^2} \nu_f(M_{\ell,x})\frac{\bar{\mathcal Z_\ell}}{\mathcal Z_{\ell}}E_{\bar\mu_\ell}\big[\bar f_{\ell,x}{\bf 1}_{W_\ell}\big]\Big[a\|G\|_\infty \Big|E_{\mu_\ell}\big[f_{\ell,x} \phi\big]\Big| - \frac{N}{2} D_\ell(f_{\ell,x}^{1/2})\Big]\bigg\}
\nonumber\\
&\quad \leq
a\|G\|_{\infty}\max_{\ell_{\min}(n)\leq \ell\leq n}\bigg[\sup_{\substack{g\geq 0 : E_{\mu_\ell}[g] = 1 \\ D_\ell(g^{1/2})\leq 2a\|\phi\|_\infty\|G\|_{\infty}/N}}\Big|E_{\mu_\ell}\big[g \phi\big]\Big|\bigg]\nonumber\\
&\hspace{4cm}\times \sup_{f\geq 0:\nu_f(\e)=1}\bigg\{\sum^{n}_{\ell =\ell_{\min}(n)}\sum_{x\in (N^{-1}\Z)^2}\nu_f(M_{\ell,x})\frac{\bar{\mathcal Z_\ell}}{\mathcal Z_\ell}E_{\bar\mu_\ell}\big[\bar f_{\ell,x}{\bf 1}_{W_\ell}\big] \bigg\}.
\end{align}
The second supremum is bounded by $1$ and the supremum on the zero-range process is the same quantity as in~\eqref{eq_qtite_to_estimate_after_conditioning_valeur_1_pis2}.
\end{proof}
The method of proof of Lemma~\ref{lemm_valeur_p_is_2} can be used to obtain tighter estimates on the slope at the poles. An example is given in the following corollary, used in Appendix~\ref{appen_tightness} to obtain exponential tightness.
\begin{coro}[One and two block estimates for deviations from the average]\label{coro_1_2_blocks_deviations_slope}
Let $\beta>\log 2$. For each $\delta,\eta>0$:
\begin{align}
\lim_{q\rightarrow\infty}\limsup_{N\rightarrow\infty}\frac{1}{N}\log \Prob^ {N}_{\beta}\bigg(&\gamma^N_\cdot\in E([0,T],\e);\frac{1}{T}\int_0^ {T} {\bf 1}_{|\xi^{\pm, q}_{L_1(\gamma^N_t)}-e^ {-\beta}|\geq \delta}\, dt >\eta\bigg)=-\infty.\label{eq_1block_deviations_slope_average}
\end{align}
and:
\begin{align}
\lim_{q\rightarrow\infty}\limsup_{\epsilon\rightarrow 0}\limsup_{N\rightarrow\infty}\frac{1}{N}\log \Prob^ {N}_{\beta}\bigg(&\gamma^N_\cdot\in E([0,T],\e);\nonumber\\
&\hspace{2cm}\frac{1}{T}\int_0^ {T} {\bf 1}_{|\xi^{\pm, q}_{L_1(\gamma^N_t)}-\xi^ {\pm,\epsilon N}_{L_1(\gamma^N_t)}|\geq \delta}\, dt >\eta\bigg)= -\infty.\label{eq_2blocks_deviations_slope_average}
\end{align}
Equations~\eqref{eq_1block_deviations_slope_average}--\eqref{eq_2blocks_deviations_slope_average} are valid under $\Prob^N_{\beta,H}$ for $H\in\C$ by Corollary~\ref{coro_change_measure_sous_exp}.
\end{coro}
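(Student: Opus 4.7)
The proof follows the same blueprint as Lemmas~\ref{lemm_valeur_p_is_2} and~\ref{lemm_1_block_2_block_pour_pis2}: exponential Chebyshev, then Feynman--Kac together with Lemma~\ref{lemm_projec_dynamics_onto_local} to project onto the effective state space $\e$, and finally the zero-range frame of Step 1--2 of Lemma~\ref{lemm_valeur_p_is_2}. The new ingredient is a concentration bound for the slope under the equilibrium zero-range measure $\mu_\ell$.

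For the one-block estimate~\eqref{eq_1block_deviations_slope_average}, treat the north pole, the $\xi^{+,q}$ case (all others are symmetric). The crude inequality ${\bf 1}_{|x-e^{-\beta}|\geq \delta}\leq \delta^{-2}(x-e^{-\beta})^2$ replaces the indicator by a smooth bounded function. The exponential Chebyshev inequality applied to the time integral, followed by Lemma~\ref{lemm_projec_dynamics_onto_local}, reduces the claim to proving that, for each $a>0$,
\begin{equation*}
\limsup_{q\to\infty}\limsup_{N\to\infty}\ \sup_{f\geq 0:\nu_f(\e)=1}\Big\{a\delta^{-2}\,E_{\nu_f}\big[(\xi^{+,q}_{L_1}-e^{-\beta})^2\big]-N D_N(f^{1/2})\Big\}=0,
\end{equation*}
since the prefactor $-a\eta T$ coming from Chebyshev will then make the log-probability arbitrarily negative.

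To control this supremum, I would reproduce verbatim the frame construction of the proof of Lemma~\ref{lemm_valeur_p_is_2}: the compactness estimates of Section~\ref{subsec_compactness_estimate_1pis2} (notably Lemmas~\ref{lemm_width_compactness_pis2}--\ref{lemm_height_compactness_pis2}) restrict the relevant configurations to the partition $\bigcup_{\ell_{\min}(n)\leq \ell\leq n}(M_\ell\cap\{h_n\leq h_{\max}(n)\})$, on which the pole dynamics projects to an irreducible zero-range dynamics on $\Omega_\ell\cap W_\ell$ with reversible measure $\mu_\ell$. Convexity of the Dirichlet form as in~\eqref{eq_D_n_f_majore_les_D_ell_f_ell_valeur_1_pis2} reduces the supremum to one over $\mu_\ell$-densities $g$ with $D_\ell(g^{1/2})\leq C(a,\delta)/N$; compactness of $W_\ell$ and irreducibility force $g\equiv 1$ in the $N\to\infty$ limit. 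It therefore suffices to prove
\begin{equation*}
\lim_{q\to\infty}\ \max_{\ell_{\min}(n)\leq\ell\leq n}E_{\mu_\ell}\big[(\xi^{+,q}_{L_1}-e^{-\beta})^2\big]=0.
\end{equation*}

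This last concentration step is the main obstacle. Writing $\xi^{+,q}_{L_1}=(q+1)^{-1}\sum_{y}\xi_y$ as an average of $q+1$ edge variables, one needs to show that under $\mu_\ell$ these are asymptotically i.i.d.\ Bernoulli$(e^{-\beta})$ outside the pole vicinity, so that the mean equals $e^{-\beta}$ and the variance decays like $1/q$. This parallels the computation of $E_{\mu_\ell}[{\bf 1}_{p=2}-e^{-\beta}]=E_{\mu_\ell}[\xi_{L_1+2}-e^{-\beta}]\to 0$ carried out in Appendix~\ref{sec_equilibrium_estimates_at_the_pole}, but now requires extending that analysis to (a) arbitrary single-edge expectations $E_{\mu_\ell}[\xi_y]$ for $y\in B_1(L_1,q/N)$, and (b) a variance estimate $\mathrm{Var}_{\mu_\ell}(\xi^{+,q}_{L_1})=O(q^{-1})$, which in turn follows from an exponential decay of two-point correlations $|E_{\mu_\ell}[\xi_y\xi_{y'}]-E_{\mu_\ell}[\xi_y]E_{\mu_\ell}[\xi_{y'}]|$ as $\|y-y'\|_1$ grows. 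The constraint $\ell\geq\ell_{\min}(n)=\tfrac12 n^{1/2}/\log n$ ensures $q\ll\ell$ for large $n$, so that the averaging window stays inside the zero-range frame, away from the lateral boundaries at $x_n(\gamma)$ and $y_n(\gamma)$.

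For the two-block estimate~\eqref{eq_2blocks_deviations_slope_average}, I would use the triangle inequality
\begin{equation*}
{\bf 1}_{|\xi^{+,q}_{L_1}-\xi^{+,\epsilon N}_{L_1}|\geq\delta}\leq {\bf 1}_{|\xi^{+,q}_{L_1}-e^{-\beta}|\geq\delta/2}+{\bf 1}_{|\xi^{+,\epsilon N}_{L_1}-e^{-\beta}|\geq\delta/2},
\end{equation*}
and apply a union bound. The first term is handled directly by~\eqref{eq_1block_deviations_slope_average} (sending $N\to\infty$ then $q\to\infty$). For the second, the same Feynman--Kac/zero-range reduction applies, with the averaging parameter $\epsilon N$ playing the role of $q$: the conditions $\ell_{\min}(n)\leq \ell\leq n$ and $\epsilon N$ large (but $\epsilon$ small) ensure that the averaging window sits inside the region controlled by $\mu_\ell$, and the same concentration argument closes the estimate. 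The extension to the tilted measure $\Prob^N_{\beta,H}$ is then automatic from Corollary~\ref{coro_change_measure_sous_exp}.
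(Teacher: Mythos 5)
Your treatment of the one-block estimate~\eqref{eq_1block_deviations_slope_average} follows the paper's route (Chebyshev, projection onto $\e$ via Lemma~\ref{lemm_projec_dynamics_onto_local}, the zero-range frame of Lemma~\ref{lemm_valeur_p_is_2}, then an equilibrium computation under $\mu_\ell$), and is essentially correct. One bookkeeping point: in your final displayed reduction the frame parameter $n$ is held fixed while $q\to\infty$, which is inconsistent — for fixed $n$ the window of $q$ edges eventually leaves the frame. The paper resolves this by tying the frame size to $q$ (any $n\geq 2q(1+\log q)$ works, via Lemma~\ref{lemm_height_compactness_pis2}, which guarantees the depth $\Delta^{\pm}_q$ at horizontal distance $q$ from the pole is typically below the frame height); you gesture at ``$q\ll\ell$'', but the relevant condition involves the depth, not only the width $\ell$.

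The two-block estimate~\eqref{eq_2blocks_deviations_slope_average} is where your argument has a genuine gap. The triangle inequality through $e^{-\beta}$ reduces the problem to controlling ${\bf 1}_{|\xi^{\pm,\epsilon N}_{L_1}-e^{-\beta}|\geq\delta/2}$ at \emph{fixed} $\epsilon>0$, and this event is not super-exponentially unlikely to hold on a positive fraction of times: $\xi^{\pm,\epsilon N}_{L_1}$ is a macroscopic average, close to $\epsilon^{-1}\int_0^\epsilon\rho$ for the local slope profile $\rho$ near the pole, which for typical trajectories differs from the boundary value $e^{-\beta}$ as long as $\epsilon$ is fixed. The statement only becomes true after $\epsilon\to0$, and proving it is equivalent to the combination of one- and two-block estimates you are trying to establish, so the decomposition is circular. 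Your fallback — running the zero-range frame with $\epsilon N$ in place of $q$ — also fails: the compactness estimates of Section~\ref{subsec_compactness_estimate_1pis2} and the local-equilibrium step (forcing the limiting density $g^\infty\equiv 1$ on the compact set $W_\ell$) require the frame size to be bounded independently of $N$; with $\ell\sim\epsilon N$ the bound $D_\ell(g^{1/2})\leq C/N$ no longer forces $g$ to be constant in the limit, which is precisely why two-block estimates exist as a separate step in hydrodynamic-limit theory. The paper instead compares $\xi^{\pm,q}$ and $\xi^{\pm,\epsilon N}$ \emph{directly}, by the standard SSEP two-block argument already used for~\eqref{eq_2block_pour_1_pis2} in Lemma~\ref{lemm_1_block_2_block_pour_pis2} (and Lemma~\ref{lemm_amended_rplct_lemma}): the macroscopic average is written as an average of microscopic blocks and compared block-to-block using only the exclusion part of the dynamics, with no reference to $e^{-\beta}$.
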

\begin{rmk}
Note that ${\bf 1}_{|\xi^{\pm,q}_{L_1}-e^{-\beta}|\geq \delta}$ is simply a cylindrical function, which has average $o_q(1)$ under the invariant measure $\nu$. Corollary~\ref{coro_1_2_blocks_deviations_slope} thus says no more than the usual replacement lemmas.\demo
\end{rmk}
\begin{proof}
Consider first~\eqref{eq_2blocks_deviations_slope_average}. 
Using Feynman-Kac inequality as usual, 
it is enough to prove that, for each $a>0$:
\begin{equation}
\lim_{q\to\infty}\limsup_{\epsilon\to 0}\limsup_{N\to\infty}\sup_{f\geq 0 :\nu_f(\e)=1}\Big\{ aE_{\nu_f}\big[{\bf 1}_{|\xi^{\pm, q}_{L_1(\gamma^N)}-\xi^ {\pm,\epsilon N}_{L_1(\gamma^N)}|\geq \delta}\big] - ND_N(f^{1/2})\Big\}
=
0
.
\end{equation}
For each density $f$, the expectation is bounded above using Markov inequality by $\delta^{-1}E_{\nu_f}[|\xi^{\pm, q}_{L_1(\gamma^N)}-\xi^ {\pm,\epsilon N}_{L_1(\gamma^N}|]$. 
At this point the proof of Equation~\eqref{eq_2blocks_deviations_slope_average} boils down to the proof of a two block estimate using only the SSEP part of the dynamics. 
The method of proof is then the same as in Lemma~\ref{lemm_1_block_2_block_pour_pis2}.

Consider now~\eqref{eq_1block_deviations_slope_average}. 
By the same approach, we only need to prove:
\begin{equation}
\lim_{q\to\infty}\limsup_{\epsilon\to 0}\limsup_{N\to\infty}
\sup_{f\geq 0 :\nu_f(\e)=1}\Big\{ \frac{a}{\delta}E_{\nu_f}\big[|\xi^{\pm, q}_{L_1(\gamma^N)}-e^{-\beta}|\big] - ND_N(f^{1/2})\Big\}
=
0
.
\end{equation}
Since we need to compute the exact value of the slope, 
we have to project on the ZRP around the pole as in the proof of Lemma~\ref{lemm_valeur_p_is_2}. 
This is done as for Lemma~\ref{lemm_valeur_p_is_2}, 
with the only difference that we need to project on a frame around the pole that has at least $q$ edges to the right (for $\xi^{+,q}$) or to the left (for $\xi^{-,q}$) of the pole. 

To do so, it is enough to choose the parameter $n$ in the definition of the reference frame such that there are typically at least $q$ vertical edges (as there will then be at least $q$ edges to either side of the pole). 
Written formally, we only need to prove:
\begin{equation}
\limsup_{N\rightarrow\infty}\sup_{\substack{f\geq 0:E_\nu[f]=1 \\ D_N(f^{1/2})\leq C/N}}\nu_f\Big(\e, h_n(\gamma)-1\leq q\Big) 
=
o_q(1).
\end{equation}
This estimate was already established in~\eqref{eq_h_k_not_too_small}. 
With the only difference that $n$ is taken large before $q$, 
the proof of~\eqref{eq_1block_deviations_slope_average} thus reduces, as in Lemma~\ref{lemm_valeur_p_is_2}, 
to an elementary (though more involved) equilibrium computation under the measure $\mu_\ell$ ($\ell_{\min}(n)\leq \ell \leq n$) 
defined in~\eqref{eq_def_mu_ell}.
\end{proof}

\begin{appendices}
\renewcommand{\thesection}{\Alph{section}}

\section{Projection of the dynamics, replacement lemma and equilibrium estimates}\label{sec_replacement_lemma}
\subsection{Projection of the contour dynamics on the good state space}\label{sec_local_dynamics}
In this section, we prove Lemma~\ref{lemm_projec_dynamics_onto_local} which states that the contour dynamics can be projected to the effective state space $\e$. We state and prove a more general result. \\
Let $(X_t)_{t\geq 0}$ be a continuous time Markov chain on a finite state space $E$, reversible with respect to a measure $\nu$.  
If $x_0\in E$, let $\Prob^X_{x_0},\E^X_{x_0}$ be the associated probability and expectation. The jump rates of the chain between states $(x,y)\in E^2$ are denoted by $c(x,y)$, with associated Dirichlet form:
\begin{equation}
\forall g : E\rightarrow\R,\qquad D(g) = \frac{1}{2}\sum_{(x,y)\in E^2}\nu(x)c(x,y)\big[g(y)-g(x)\big]^2.\label{eq_def_D_lemm_6_3}
\end{equation}
\begin{lemm}\label{lemm_projec_dynamics_onto_local_appendix}
Let $B\subset E$ and $x_0\in B$. Let also $T>0$ and $\psi : [0,T]\times E\rightarrow \R$ be bounded. Then:
\begin{align}
\E^X_{x_0}&\bigg[{\bf 1}_{\{\forall t\in [0,T],X_t\in B\}}\exp\bigg[\int_0^{T}\psi(t,X_t)\, dt\bigg]\bigg] 
\nonumber\\
&\qquad \leq 
\frac{1}{\nu(x_0)^{1/2}}\exp\bigg[\int_0^T \sup_{f \geq 0: \nu_f(B) = 1}\Big\{ E_{\nu}\big[f\psi(t,\cdot)\big] - D(f^{1/2})\Big\}\, dt\bigg]
.
\end{align}
\end{lemm}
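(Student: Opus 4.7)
The plan is to prove Lemma~\ref{lemm_projec_dynamics_onto_local_appendix} in three steps: reduce the initial condition to equilibrium, replace the hard constraint ``$X_t \in B$ for all $t$'' by a soft exponential penalty, and then pass to the limit in the penalty parameter inside the standard Feynman–Kac variational formula.

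First, for any non-negative functional $F$ of the trajectory, reversibility gives
\begin{equation*}
\E^X_{x_0}[F(X_\cdot)] = \frac{1}{\nu(x_0)} \E^X_\nu\bigl[{\bf 1}_{X_0 = x_0} F(X_\cdot)\bigr] \leq \frac{1}{\nu(x_0)} \E^X_\nu[F(X_\cdot)],
\end{equation*}
which produces the prefactor $1/\nu(x_0)$. It remains to bound the expectation under $\nu$. For any $V > 0$, one has the pointwise trajectory bound
\begin{equation*}
{\bf 1}_{\{\forall t \in [0,T], X_t \in B\}} \leq \exp\Big[-V \int_0^T {\bf 1}_{X_t \notin B}\, dt\Big],
\end{equation*}
so with $\psi_V(t,x) := \psi(t,x) - V {\bf 1}_{x \notin B}$ it suffices to control $\E^X_\nu[\exp[\int_0^T \psi_V(t,X_t) dt]]$. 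By the Feynman–Kac inequality applied to the symmetric operator $\mathcal L + M_{\psi_V(t,\cdot)}$ (Lemma A.1.7.2 in \cite{Kipnis1999}), this quantity is bounded by $\exp[\int_0^T \lambda_V(t)\, dt]$, where
\begin{equation*}
\lambda_V(t) = \sup_{f \geq 0 : E_\nu[f] = 1}\Big\{ E_\nu[f\psi(t,\cdot)] - V\, E_\nu[f {\bf 1}_{B^c}] - D(f^{1/2})\Big\}.
\end{equation*}

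The last step is to let $V \to \infty$. Note first that $\lambda_V(t)$ is non-increasing in $V$, while the integrand inside the expectation converges monotonically, since on trajectories that exit $B$ one has $\int_0^T {\bf 1}_{X_t \notin B}\, dt > 0$, so by dominated convergence
\begin{equation*}
\lim_{V \to \infty} \E^X_\nu\Big[\exp\Big[\int_0^T \psi_V(t,X_t)\, dt\Big]\Big] = \E^X_\nu\Big[{\bf 1}_{\{\forall t,\, X_t \in B\}} \exp\Big[\int_0^T \psi(t,X_t)\, dt\Big]\Big].
\end{equation*}
Hence it is enough to show $\inf_V \lambda_V(t) \leq \sup_{f \geq 0 : \nu_f(B) = 1}\{E_\nu[f\psi(t,\cdot)] - D(f^{1/2})\}$ for each $t$ (the reverse inequality is trivial since restricting the supremum to $f$ with $\nu_f(B) = 1$ kills the penalty term). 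On the finite state space $E$ the supremum defining $\lambda_V(t)$ is attained by some density $f_V$. By compactness, extract a subsequence $f_V \to f^\star$. If $E_\nu[f^\star {\bf 1}_{B^c}] > 0$, then $\lambda_V \leq \|\psi\|_\infty - V\, E_\nu[f_V {\bf 1}_{B^c}] \to -\infty$, contradicting the uniform lower bound $\lambda_V \geq E_\nu[{\bf 1}_B \psi / \nu(B)] - D(({\bf 1}_B/\nu(B))^{1/2})$. Therefore $f^\star$ is supported on $B$, and continuity of $f \mapsto E_\nu[f\psi]$ and $f \mapsto D(f^{1/2})$ on the simplex of densities (both are polynomial expressions on a finite space) yields $\lim_V \lambda_V(t) = E_\nu[f^\star \psi(t,\cdot)] - D((f^\star)^{1/2})$, which is at most the target supremum.

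The only delicate point is this passage to the limit: one has to rule out maximizers of the penalized variational problem escaping into $B^c$ as $V \to \infty$. The monotonicity of $V \mapsto \lambda_V$ together with the uniform lower bound obtained by testing against a fixed density supported on $B$ forces the limiting maximizer to live on $B$, which is exactly what is needed. Everything else is a routine application of Feynman–Kac and the variational representation of the principal eigenvalue of a symmetric operator.
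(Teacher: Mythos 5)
Your proof is correct, but it follows a genuinely different route from the paper's. The paper introduces the auxiliary chain $Y$ obtained by restricting $X$ to $B$, computes the Radon--Nikodym derivative between the two laws on confined trajectories, which produces the escape-rate potential $Q_B(x)=\sum_{y\notin B}c(x,y)$, applies Feynman--Kac to $Y$ (reversible for $\nu(\cdot\cap B)$) to get a variational formula involving the \emph{restricted} Dirichlet form $D_B$ and the potential $-Q_B$, and then uses the exact identity $D(f^{1/2})=D_B(f^{1/2})+E_{\nu}[fQ_B]$, valid for any density supported on $B$, to reassemble these two pieces into the full Dirichlet form $D$. Everything there is an identity at finite level, with no limit taken. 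You instead keep the original chain, replace the hard confinement by the soft penalty $-V{\bf 1}_{B^c}$, apply the standard Feynman--Kac bound to the unrestricted generator, and send $V\to\infty$, using compactness of the simplex of densities on the finite state space together with the uniform lower bound obtained from the fixed test density ${\bf 1}_B/\nu(B)$ to force the limiting maximizer onto $B$. Your version is more elementary --- it avoids both the change of measure between the two chains and the Dirichlet-form decomposition, and the full form $D$ appears automatically because Feynman--Kac is applied to the original dynamics --- but it trades away the structural insight of the paper's identity (which explains \emph{why} $D$ rather than $D_B$ controls densities supported on $B$: the boundary terms of $D$ are exactly the escape-rate expectation), and it leans on attainment of the supremum and compactness, harmless here since $E$ is finite but requiring more care in an infinite setting. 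Two minor points: the dominated-convergence step is superfluous (the bound at each fixed $V$ followed by an infimum over $V$ suffices), and the claimed equality $\lim_V\lambda_V(t)=E_{\nu}[f^{\star}\psi(t,\cdot)]-D((f^{\star})^{1/2})$ is more than you need --- the inequality $\leq$, obtained by discarding the nonpositive penalty term before passing to the limit along the subsequence, already closes the argument.
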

\begin{proof}
As $(X_t)_{t\geq 0}$ is right-continuous with left limits,
\begin{align}
&\E^X_{x_0}\bigg[{\bf 1}_{\{\forall t\in [0,T],X_t\in B\}}\exp\bigg[\int_0^{T}\psi(t,X_t)\, dt\bigg]\bigg]
\nonumber\\
&\hspace{2cm}
=
\lim_{n\to\infty}
\E^X_{x_0}\bigg[{\bf 1}_{\{X_T\in B\}}\exp\bigg[-n\int_0^T{\bf 1}_{X_t\in B}\, dt +\int_0^{T}\psi(t,X_t)\, dt\bigg]\bigg]
.
\label{eq_proj_lemma1}
\end{align}
For $n\in\N\cup\{\infty\}$ and $t\in[0,T]$, 
introduce the $E\times E$ matrix $L^B$ defined by:
\begin{equation}
L^n_t(x,y) 
= 
\begin{cases}
-c(x,y)\quad &\text{if }x\neq y\in E\\
\sum_{z\in E}c(x,z) + \psi(t,x) - n{\bf 1}_{x\notin B}\quad &\text{if }x=y.
\end{cases}
\end{equation}
Write also $P^n_t$ ($t\in[0,T]$) for the associated semigroup defined through the Chapman-Kolmogorov equation:
\begin{equation}
\partial P^n_t 
= 
L^n_tP^n_t,
\quad 
t\in[0,T]
,\qquad 
P^n_0(x,y)
=
{\bf 1}_{y=x,x\in B},\quad (x,y)\in E^2
.
\end{equation}
The right-hand side of~\eqref{eq_proj_lemma1} is then equal to:
\begin{equation}
\lim_{n\to\infty}\sum_{y\in B}P^n_T(x_0,y) 
=
\sum_{y\in B}P^\infty_T(x_0,y) 
=
\big\|\frac{{\bf 1}_{x_0}}{\nu(x_0)}P^\infty_T\big\|_{L^1(\nu)}
\leq 
\big\|\frac{{\bf 1}_{x_0}}{\nu(x_0)}P^\infty_T\big\|_{L^2(\nu)}
.
\end{equation}
Note that $L^n_t$ is self adjoint in $L^2(\nu)$. 
If $\lambda^n_t$ denotes its largest eigenvalue, 
one has in particular when $n=\infty$:
\begin{align}
\lambda^\infty_t
&=
\sup_{f\geq 0:\nu(f)=1}\Big\{E_\nu[f\psi(t,\cdot)] - \infty E_\nu[f{\bf 1}_{x\notin B}] - D(f^{1/2})\Big\}
\nonumber\\
&=
\sup_{f\geq 0:\nu(f{\bf 1}_B)=1}\Big\{E_{\nu}[f\psi(t,\cdot)] - D(f^{1/2})\Big\}
.
\end{align}
As a result, for $t\in[0,T]$,
\begin{equation}
\frac{\mathrm{d}}{\mathrm{d}t}\big\|\frac{{\bf 1}_{x_0}}{\nu(x_0)}P^\infty_t\big\|^2_{L^2(\nu)}
=
2E_{\nu}\Big[\frac{{\bf 1}_{x_0}}{\nu(x_0)}P^\infty_t L_t \Big(\frac{{\bf 1}_{x_0}}{\nu(x_0)}P^\infty_t\Big)\Big]
\leq 
2\lambda^\infty_t\big|\frac{{\bf 1}_{x_0}}{\nu(x_0)}P^\infty_T\big\|_{L^2(\nu)}^2
.
\end{equation}
Gronwall inequality and $\|\frac{{\bf 1}_{x_0}}{\nu(x_0)}P^\infty_0\big\|_{L^2(\nu)} = \nu(x_0)^{-1/2}$ conclude the proof.
\end{proof}
\subsection{Replacement lemma}\label{subsec_replacement_lemma}
In this section, we prove the Replacement Lemma~\ref{lemm_Replacement_lemma_sec_mart}. Let us first introduce and recall some notations. Fix a time $T>0$ and $A>0$ throughout the section, such that all trajectories considered here will be in the set:
\begin{equation}
\Big\{(\gamma^N_t)_{t\leq T} \subset \Omega^N_{\text{mic}}\cap \e: \sup_{t\leq T}|\gamma^N_t|\leq A\Big\}.
\end{equation}
For each $\epsilon >0$ and $x\in \R^2$, recall that $B_1(x,\epsilon)\subset \R^2$ is the subset of points at distance less than $\epsilon$ to $x$ in $1$-norm. For $\gamma^N\in \Omega^N_{\text{mic}}$ and $x\in V(\gamma^N)$, define
\begin{equation}
\phi(\gamma^N,x) = c_x(\gamma^N) = \frac{1}{2}\big[\xi_{x+e^-_x}(1-\xi_x) + \xi_x(1-\xi_{x+e^-_x})\big].\label{eq_def_phi_replacement}
\end{equation}
Recall from~\eqref{eq_def_xi_x_epsilon_N} that $\xi^ {\epsilon N}_x$ is the quantity
\begin{equation}
\xi^ {\epsilon N}_x = \frac{1}{2\epsilon N +1}\sum_{y \in V(\gamma^N)\cap B_1(x,\epsilon)} \xi_y,
\end{equation}
and define $\tilde \phi$ by:
\begin{equation}
\tilde \phi(\rho) = \rho(1-\rho),\qquad \rho\in[0,1].
\end{equation}
Let $G\in C_c(\R_+\times\R^2)$ be a bounded function. By Chebychev exponential inequality and the Projection lemma~\ref{lemm_projec_dynamics_onto_local_appendix} applied to the set $E([0,T],\e)\cap \big\{\sup_{t\leq T}|\gamma^N_t|\leq A\big\}$, Lemma~\ref{lemm_Replacement_lemma_sec_mart} holds if, uniformly on $t>0$ and for each $a>0$:
\begin{align}
&\lim_{\epsilon\rightarrow 0}\limsup_{N \rightarrow\infty}\label{eq_lemme_de_remplacement}\\
&\hspace{0.5cm}\sup_{f\geq 0 : \nu_f(\e_A)=1}\Bigg\{E_{\nu_f}\Bigg[a\biggr\{\frac{1}{N}\sum_{x\in V(\gamma^N)}G(t,x)\biggr[\phi(\gamma^N,x) -\tilde\phi(\xi_x^ {\epsilon N})\biggr]\biggr\}^2\Bigg] - N D_N(f^{1/2})\Bigg\} = 0.\nonumber
\end{align}
Above, $\e_A$ is the set:
\begin{equation}
\e_A := \e\cap\big\{|\gamma|\leq A\big\}.
\end{equation}
Following \cite{Eyink1990}, it is sufficient to prove the following two estimates.
\begin{lemm}(One and two block estimates)\label{lemm_amended_rplct_lemma}
\\
Let $\epsilon>0,\ell\in\N_{\geq 1}$ and let $(V_j)_{1\leq j \leq J}$ denote a partition of $\{-\epsilon N,..., \epsilon N\}$ in $J$ intervals of length $\ell$ (except maybe the last one that is of size at most $2\ell$), 
such that $\max V_j = \min V_{j+1} -1$ for $j\leq J-1$. 
For $\gamma^N\in \Omega^N_{\text{mic}}$, $x\in V(\gamma^N)$ and $1\leq j \leq J$, 
let $V_j(x)$ be the set of vertices in $B_1(x,\epsilon) \cap V(\gamma^N)$, 
with numbering relative to $x$ corresponding to elements of $V_j$ (i.e. $x$ corresponds to $0$, $x+{\bf e}^{\pm}_x$ to $\pm 1$, etc.). Define also:
\begin{align}
S(\phi,V_j(x)) := \frac{1}{|V_j(x)|} \sum_{y\in V_j(x)} \phi(\gamma^N,y),\quad \xi^{V_j(x)} := \frac{1}{|V_j(x)|}\sum_{y\in V_j(x)}\xi_y.
\end{align}
For any $a>0$, one has then (one block estimate):
\begin{align}
&\lim_{\ell\rightarrow\infty}\limsup_{\epsilon\rightarrow 0}\limsup_{N\rightarrow\infty}\max_{1\leq j\leq J}\nonumber \\
&\hspace{1cm}\sup_{f\geq 0 : \nu_f(\e_A)=1}\bigg\{ aE_{\nu_f}\bigg[ \frac{1}{N}\sum_{x\in V(\gamma^N)} \Big| S(\phi,V_j(x)) -\tilde\phi(\xi^{V_j(x)})\Big|^2\bigg] - ND_N(f^{1/2})\bigg\} = 0,\label{eq_one_block}
\end{align}
and (two block estimate):
\begin{align}
&\lim_{\ell\rightarrow\infty}\limsup_{\epsilon\rightarrow 0}\limsup_{N\rightarrow\infty}\max_{1\leq b,c\leq J} \nonumber\\
&\hspace{0.2cm}\sup_{f\geq 0 : \nu_f(\e_A)=1}\bigg\{a E_{\nu_f}\bigg[\frac{1}{N}\sum_{x\in V(\gamma^N)} \Big| S(\phi,V_b(x)) -S(\phi,V_c(x))\Big|^2\bigg] -ND_N(f^{1/2})\bigg\}= 0.\label{eq_two_blocks}
\end{align}
\end{lemm}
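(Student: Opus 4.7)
The plan is to reduce both estimates to their classical SSEP analogues via the mapping of Section~\ref{sec_heuristics}, exploiting that on the effective state space $\e_A$ the jump rates $c_x(\gamma^N)$ are local and coincide with SSEP rates inside each of the four regions. First, split each sum over $V(\gamma^N)$ according to the region containing $x$ and separate off the vertices at $1$-distance less than $\epsilon$ from any pole. The latter contribution is of order $\epsilon|\gamma^N|/N \leq \epsilon A/N$ uniformly on $\e_A$, hence negligible in the $N\to\infty$, then $\epsilon\to 0$ limits. For the remaining vertices, the entire box $B_1(x,\epsilon)\cap V(\gamma^N)$ lies inside a single region, and every elementary jump $\gamma^N\to(\gamma^N)^y$ with $y\in B_1(x,\epsilon)\cap V(\gamma^N)$ is a genuine SSEP exchange preserving the edge count.

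Next, partition each region into overlapping windows of radius $\epsilon$, and introduce for each window and each $x$ the marginal density of the restriction of $f$ to the $2\ell{+}1$ variables $\xi_\cdot$ indexed by $V_j(x)$, together with the conditional measure given the $\xi$-values outside. As in the proof of Lemma~\ref{lemm_1_block_2_block_pour_pis2} (see~\eqref{eq_transit_pentopole_Dirichlet_form_0}), convexity of the Dirichlet form yields
\begin{equation*}
D_N(f^{1/2}) \geq \frac{1}{N}\sum_{x\in V(\gamma^N)}E_{\nu_f}\!\Big[D^{S}_{V_j(x)}\bigl(f^{1/2}_{V_j(x)}\bigr)\Big],
\end{equation*}
where $D^S_{V_j(x)}$ is the local SSEP Dirichlet form on $\{0,1\}^{|V_j(x)|}$ and $f_{V_j(x)}$ is the corresponding conditional density with respect to the uniform measure. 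Both terms to be estimated are then of the form $(aN)^{-1}\sum_{x}E_{\nu_f}[\Psi(f_{V_j(x)})]$, with $\Psi$ a quadratic cylindrical function on $\{0,1\}^{|V_j(x)|}$ (one block) or on $\{0,1\}^{|V_b(x)|+|V_c(x)|}$ (two blocks), and the problem reduces to
\begin{equation*}
\limsup_{\ell\to\infty}\limsup_{\epsilon\to 0}\limsup_{N\to\infty}\sup_{\substack{g\geq 0: E_{\mathcal U}[g]=1 \\ D^S(g^{1/2})\leq C/N}} E_{\mathcal U}\bigl[g\,\Psi\bigr] = 0,
\end{equation*}
uniformly in $j$ (resp.\ $b,c$), with $\mathcal U$ the uniform measure on the corresponding box.

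The last display is exactly the one- and two-block estimates for the SSEP in the form proved in \cite{Eyink1990} and Chapter 5 of \cite{Kipnis1999}: by lower semicontinuity of $D^S$ and compactness of the simplex of probability densities on a fixed box, any weak limit point of the maximisers $g^N$ is supported on densities with vanishing Dirichlet form, hence constant on each hyperplane of fixed particle number, and on such hyperplanes $S(\phi,V_j(x))\to\tilde\phi(\xi^{V_j(x)})$ by an elementary law of large numbers for hypergeometric sums, while $S(\phi,V_b(x))$ and $S(\phi,V_c(x))$ have the same limit. The main obstacle, and the reason for introducing the window partition inside a region, is the interaction between the global constraint $\gamma^N\in\e_A$ (controlled by $f$) and the local Dirichlet estimate: the projection Lemma~\ref{lemm_projec_dynamics_onto_local_appendix} ensures that restricting to $\e_A$ does not change the variational problem, while the bound $|\gamma^N|\leq A$ guarantees that the number of windows meeting the pole regions is $O(1)$, so that their contribution is absorbed in the $\epsilon\to 0$ limit. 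The remainder of the argument is then identical to the classical SSEP replacement lemma, and the supremum over $j$ (resp.\ $b,c$) is harmless because the bounds above depend only on $\ell$ and not on the position of the window in $\{-\epsilon N,\dots,\epsilon N\}$.
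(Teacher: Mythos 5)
Your proposal is correct in substance and shares the paper's overall strategy (discard the $O(\epsilon)$ contribution near the poles, use locality on $\e_A$ to identify the dynamics inside each region with a SSEP, cut the Dirichlet form by convexity, and invoke the classical one- and two-block estimates of \cite{Eyink1990}), but the intermediate decomposition is genuinely different. The paper does \emph{not} condition on local windows: it first conditions on the whole region $k$ being a given lattice path, then further conditions on the number $n\leq AN$ of vertices of that path, so that the marginal of $\nu^N_\beta$ becomes the uniform measure $U_n$ on SSEP configurations of a chain of $n$ sites; the sum over $x\in V(\gamma^N)$ is then read as a sum over translates $\tau_i\sigma$ of a single configuration, and the classical estimates are applied once to that size-$n$ chain. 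Your route conditions directly on the $\ell$ (or $2\ell$) variables of each window $V_j(x)$, which is the textbook form of the argument; it avoids the bookkeeping over path lengths $n$ but requires you to be careful about (i) what exactly you condition on — fixing the endpoints of the window puts you on a fixed particle-number sector with the uniform canonical measure, which is what makes the hypergeometric law of large numbers and the irreducibility-at-fixed-density argument apply — and (ii) the overcounting of bonds when you sum the local Dirichlet forms over $x$ (each bond lies in $O(\ell)$ windows, so the correct convexity bound carries a factor $\ell/N$ rather than the $1/N$ you wrote; this only helps you, but the displayed inequality as stated is not the convexity bound itself). Also note that the two-block estimate needs the standard long-jump/path argument to exchange particles between $V_b(x)$ and $V_c(x)$ at mutual distance up to $2\epsilon N$, with cost of order $\epsilon$; you delegate this to the references, as does the paper. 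With those two points made precise, your argument is a valid, and arguably more standard, alternative to the paper's region-wise conditioning.
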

\begin{proof}
Only microscopic curves occur in this proof, so we drop the superscript $N$ and write $\gamma\in\Omega^N_{\text{mic}}$. 
All distances are in $1$-norm.\\ 
The proof is written for a function $\phi$ of range $R\in\N$, i.e. $\phi(\gamma,x)$ depends only on $B_1(x,R/N)\cap V(\gamma)$ for each $\gamma\in\Omega^N_{\text{mic}}$ and $x\in V(\gamma)$. It in particular applies to~\eqref{eq_def_phi_replacement}, for which $R=1$. 
The proof of~\eqref{eq_one_block}--\eqref{eq_two_blocks} consists in showing that the one and two block estimates for the contour dynamics amount to the same estimates for the SSEP, which are well known \cite{Eyink1990}. 
We do it for~\eqref{eq_one_block},~\eqref{eq_two_blocks} is similar. 

The first step is to discard all points in the sum in~\eqref{eq_one_block} that are close to the poles so that the pole dynamics can be neglected. 
Define thus, for $u>0$, the set $W^u(\gamma)$, which contains all points of $V(\gamma)$ at distance at least $u/N$ from each $L_k(\gamma)$, $k\in\{1,...,4\}$ 
(compare with $V^u(\gamma)$, see Figure~\ref{fig_flip_x_epsilon}, 
which contains points at $1$-distance at least $u/N$ from the whole poles rather than their left extremities). For any $\gamma\in \Omega^N_{\text{mic}}$, 
\begin{align}
\frac{1}{N}\sum_{x\in V(\gamma)} \Big| S(\phi,V_j(x)) -\tilde\phi(\xi^{V_j(x)})\Big|^2  &\leq \frac{1}{N}\sum_{x\in W^{\epsilon N +R+3}(\gamma)} \Big| S(\phi,V_j(x)) -\tilde\phi(\xi^{V_j(x)}_x)\Big|^2\nonumber\\
&\quad + C\|\phi\|_\infty \epsilon .\label{eq_qtite_a_estimer_1_blocs_apres_virer_coins}
\end{align}
The second term in the right-hand side of~\eqref{eq_qtite_a_estimer_1_blocs_apres_virer_coins} is independent of $N,\ell$ and vanishes for $\epsilon$ small. 
We now estimate the sum. 
To do so, we split curves depending on their four regions. 
We then use the mapping to the SSEP for the dynamics on each region.

Let $1\leq k \leq 4$ and let $M_k$ denote the set of all lattice paths compatible with region $1\leq k\leq 4$, defined as follows. 
For $\gamma\in \Omega^N_{\text{mic}}$, let $\gamma_k$ denote the part of $\gamma$ that comprises the edges between the vertex $L_k +2{\bf e}^+_{L_k}$, 
and the vertex before $L_{k+1}$, 
these two vertices included (with $L_{k+1}:=L_1$ when $k=4$). 
Define then:
\begin{equation}
M_k := \big\{\rho\subset (N^{-1}\Z)^2 : \exists \gamma\in\Omega^N_{\text{mic}}, \gamma_k = \rho\big\}.
\end{equation}
One can check that, e.g. when $k=1$, any lattice path on $(N^{-1}\Z)^2$ allowed to only go right or down and starting in the upper half plane $x\cdot{\bf b}_{\pi/2}\geq 0$ is an element of $M_1$. 
A similar statement holds for other values of $k$ for the corresponding lattice paths directions. 

Define now $\mu_k$ as the marginal of the contour measure $\nu = \nu^N_\beta$ (defined in~\eqref{eq_def_nu_beta}) on $M_k$:
\begin{equation}
\forall \rho \in M_k,\qquad \mu_k(\rho) = \frac{e^{-\beta N |\rho|}}{\mathcal Z_k},\quad \mathcal Z_k = \sum_{\rho\in M_k}e^{-\beta N |\rho|}.
\end{equation}
Let $f$ be a $\nu$-density supported on $\e_A$. Define the corresponding $\mu_k$-marginal $f_k$:
\begin{equation}
\forall \rho \in M_k,\qquad f_k(\rho)= \frac{1}{\mu_k(\rho)}\sum_{\gamma\in \Omega^N_{\text{mic}}}{\bf 1}_{\gamma_k = \rho}f(\gamma)\nu(\gamma).
\end{equation}
For $\gamma\in\Omega^N_{\text{mic}}$, if $\gamma\setminus\gamma_k$ is fixed, then so are all poles. Moreover, if $\gamma$ is in $\e$, then the contour dynamics is local. 
The definition of $M_k$ then implies that dynamical updates affecting an edge of $\gamma_k$ correspond to SSEP moves. 
As a result, the Dirichlet form $D_N(f^{1/2})$ is bounded from below by convexity according to:
\begin{align}
&D_N(f^{1/2})\geq \frac{1}{2}\sum_{k=1}^4\sum_{\rho\in M_k}\mu_i(\rho)\sum_{x\in V(\rho)}c_x(\rho) \big[f_k^{1/2}(\rho^{x,x+{\bf e}^-_x})-f^{1/2}_k(\rho)\big]^2 
=: 
\sum_{k=1}^4D_{N}^{\text{ex},k}(f^{1/2}_k)
,
\label{eq_lower_bound_Dirichlet_form_par_les_Di_temp}
\end{align}
where $V(\rho)$ is the set of vertices in $\rho$ and, for $k\in\{1,...,4\}$ and a $\mu_k$-density $h$, the Dirichlet form $D_{N}^{\text{ex},k}$ corresponding to the SSEP dynamics in region $k$ is given by:
\begin{align}
D_{N}^{\text{ex},k}(h^{1/2}) 
= 
\sum_{\rho\in M_k} \mu_k(\rho) \sum_{x\in V(\rho)}c_x(\rho) \big[h^{1/2}(\rho^{x,x+{\bf e}^-_x}) - h^{1/2}(\rho) \big]^2
.
\label{eq_def_dirichlet_corner_flip_toutes_length}
\end{align}
We now use the decomposition on the $M_k,1\leq k\leq 4$ to estimate the sum appearing in the right-hand side of~\eqref{eq_qtite_a_estimer_1_blocs_apres_virer_coins}. 
For short, define $\Phi_j$ for $1\leq j \leq J$ by:
\begin{equation}
\Phi_j(\gamma,x) = \Big|S(\phi,V_j(x)) - \tilde\phi(\xi^{V_j(x)})\Big|^2,\qquad \gamma\in \Omega^N_{\text{mic}}, x\in V(\gamma).\label{eq_def_Phi_j}
\end{equation}
Note that $\Phi_j(\gamma,x)$ depends only on the orientation (horizontal or vertical) of the edges of $\gamma$ at $1$-distance at most $\ell/N$ from $x$ and in particular does not depend on the absolute position of $x$ as a point of $\R^2$. 
We thus only need to keep track of the label of $x$ in a well chosen parametrisation of $\gamma$.
We have:
\begin{align}
(E) &:=\sum_{\gamma\in \Omega^N_{\text{mic}}}\nu(\gamma)f(\gamma) \frac{1}{N}\sum_{x\in W^{\epsilon N +R+3}(\gamma)}\Phi_j(\gamma,x) \nonumber \\
&\ \leq   \sum_{k=1}^4\sum_{\rho\in M_k} \mu_k(\rho) f_k(\rho) \frac{1}{N}\sum_{x\in W^{\epsilon N +R}(\rho)}\Phi_j(\rho,x), 
\end{align}
where $\Phi_j(\rho,\cdot)$ is defined as in~\eqref{eq_def_Phi_j} replacing $\gamma\in \Omega^N_{\text{mic}}$ by a path $\rho\in M_k$ ($1\leq k \leq 4$). 
Since $\phi$ depends only locally on the curve, this is not ambiguous for $x\in W^{\epsilon N+R}(\rho)$.
 
So far, we proved that the one block estimate~\eqref{eq_one_block} holds as soon as:
\begin{align}
&\lim_{\ell\rightarrow\infty}\limsup_{\epsilon\rightarrow 0}\limsup_{N\rightarrow\infty}\max_{1\leq j\leq J}\nonumber \\
&\hspace{1cm}\sup_{f\geq 0 :\nu_f(\e_A)=1}\bigg\{\sum_{k=1}^4\Big[a\sum_{\rho\in M_k}\mu_k(\rho)f_k(\rho)\frac{1}{N}\sum_{x\in W^{\epsilon N +R}(\rho)}\Phi_j(\rho,x) - ND_{N}^{\text{ex},k}(f^{1/2}_k)\Big]\bigg\}
= 0
.
\label{eq_one_block_interm}
\end{align}
The estimate for each $1\leq k\leq 4$ is identical, so we only do it for $k=1$. 
Further split paths in $M_1$ according to their number of vertices and let $M_1(n)$ be the subset of $M_1$ of paths with $n+1$ vertices. 
All such paths have the same probability under $\mu_1$, 
thus the marginal of $\mu_1$ on $M_1(n)$ is the uniform measure $U_n$ on paths with $n+1$ vertices or, equivalently, by the correspondence expounded in Section~\ref{subsec_1_pi2_is_slope_around_poles} (see Figure~\ref{fig_corner_flip_ssep}), of SSEP configurations with $n$ sites. Define $f_{1,n}$ as the corresponding $U_n$-marginal of $f_1$ on $M_1(n)$:
\begin{equation}
\forall \rho\in M_1(n),\qquad f_{1,n}(\rho):= E_{\mu_1}\big[f_1{\bf 1}_{M_1(n)}\big]^{-1}f_1(\rho) \mu_1(\rho)|M_1(n)|\quad \text{if}\quad E_{\mu_1}\big[f_1{\bf 1}_{M_1(n)}\big]>0.
\end{equation}
It is a density for $U_n$, thus convexity of the Dirichlet form yields:
\begin{equation}
D^{\text{ex},1}_{N}(f_1^{1/2})\geq \sum_{n\geq 2} E_{\mu_1}\big[f_1{\bf 1}_{M_1(n)}\big]D^{\text{ex},1}_{N,n}(f_{1,n}^{1/2})
,
\label{eq_Dirichlet_form_to_estimate_rplct_lemma_after_conditioning}
\end{equation}
where $D^{\text{ex},1}_{N,n}$ is defined as in~\eqref{eq_def_dirichlet_corner_flip_toutes_length}, but with $U_n$ instead of $\mu_1$ and paths in $M_1(n)$ rather than $M_1$. \\
Note also that, as $f$ is supported on $\e_A$, $f_{1,n}$ is supported on paths with at most $AN$ edges. In addition, for any $\rho\in M_1(n)$ with $n<2\epsilon N +2R$, $W^{\epsilon N+R}(\rho)$ is empty. Thus:
\begin{align}
&\sum_{\rho\in M_1}\mu_1(\rho)f_1(\rho)\frac{1}{N}\sum_{x\in W^{\epsilon N +R}(\rho)}\Phi_j(\rho,x) \nonumber\\
&\hspace{2.5cm}=
\sum_{n=2\epsilon N+2R}^{AN} E_{\mu_1}\big[f_1{\bf 1}_{M_1(n)}\big] \frac{1}{|M_1(n)|}\sum_{\rho\in M_1(n)}\frac{1}{N}\sum_{x\in W^{\epsilon N +R}(\rho)}\Phi_j(\rho,x).\label{eq_decomp_paths_fixed_length}
\end{align}
Now that paths appearing in~\eqref{eq_decomp_paths_fixed_length} have fixed length, it is possible to give a numerical label $i\in\{\epsilon N+R+1,... ,n-\epsilon N - R\}$ to each point in $W^{\epsilon N +R}(\rho)$, independent from the choice of the path $\rho \in M_1(n)$. One can then associate a SSEP configuration $\sigma\in \{0,1\}^n$ with each $\rho$ (see Figure~\ref{fig_corner_flip_ssep}) and rewrite the quantity $\Phi_j(\rho,x)$ for $x\in W^{\epsilon N +R}(\rho)$ as:
\begin{equation}
\forall x\in W^{\epsilon N +R}(\rho),\qquad \Phi_j(\rho,x) = \Phi(\tau_i\sigma),
\end{equation}
where $i\in\{\epsilon N+R+1,...,n-\epsilon N - R\}$ is the label of the point $x$ and $\tau_i\sigma(i') = \sigma(i'-i)$ is the translation operator. 
The average on $M_1$ in~\eqref{eq_decomp_paths_fixed_length} is then equal to:
\begin{align}
\sum_{n=2\epsilon N +2R}^{AN} E_{\mu_1}\big[f_1{\bf 1}_{M_1(n)}\big]\frac{1}{|M_1(n)|}\sum_{\sigma \in \{0,1\}^n} g_{1,n}(\sigma)\frac{1}{N}\sum_{i= \epsilon N +R+1}^{n - \epsilon N -R}\Phi_j(\tau_i \sigma).\label{eq_to_estimate_rplct_lemma_after_conditioning}
\end{align}
In the last line, $g_{1,n}$ is defined for $\sigma\in\{0,1\}^n$ by $g_{1,n}(\sigma) = g_{1,n}(\rho(\sigma))$, with $\rho(\sigma)$ the unique path in $M_1(n)$ corresponding to the particle configuration $\sigma$, as represented in Figure~\ref{fig_corner_flip_ssep}. In view of~\eqref{eq_one_block_interm}--\eqref{eq_Dirichlet_form_to_estimate_rplct_lemma_after_conditioning}--\eqref{eq_to_estimate_rplct_lemma_after_conditioning}, to prove the one block estimate~\eqref{eq_one_block}, it is sufficient to prove:
\begin{align}
&\lim_{\ell\rightarrow\infty}\limsup_{\epsilon\rightarrow 0}\limsup_{N\rightarrow\infty}\max_{1\leq j\leq J}\nonumber \\
&\hspace{1cm}\max_{n\in\{2\epsilon N+2R,...,AN\}}\sup_{g\geq 0 : E_{U_n}[g]=1}\bigg\{\frac{a}{N}\ E_{U_n}\Big[g\sum_{i= \epsilon N +R+1}^{n - \epsilon N -R} \Phi_j(\tau_i\cdot)\Big] - ND^{\text{ex}}_{n}(g^{1/2})\bigg\}
= 0
.
\label{eq_1_block_is_1block_SSEP}
\end{align}
The notation $D_n^{\text{ex}}$, already used in Section~\ref{app_behaviour_pole}, stands for the Dirichlet form associated with a SSEP on $n$ sites. 
We are left with a usual one block estimate for a SSEP of size $n$, proven e.g. in \cite{Eyink1990}. 
The size $n$ of the SSEP becomes irrelevant in the large $N$ limit since only the $2\ell+1$ site closest to each $i$ matter. This concludes the proof of~\eqref{eq_one_block}. 
The two block estimate~\eqref{eq_two_blocks} is proven similarly.
\end{proof}
\subsection{Equilibrium estimates at the pole}\label{sec_equilibrium_estimates_at_the_pole}
In this section, we investigate the equilibrium measure $\mu_\ell$ (see~\eqref{eq_def_mu_ell}) of the zero-range process at the poles. We prove:
\begin{prop}\label{prop_equil_properties_mu_ell}
The sequence of the laws of the top height of a path under $\mu_\ell$ ($\ell\in\N_{\geq 1}$) satisfies a large deviation principle at speed $\ell$ (equivalently: the number of particles or of antiparticles) with good, convex rate function given by:
\begin{equation}
\forall u\geq 0,\qquad C(u) = 2\beta u - 2u \log\big(1+1/(2u)\big) - \log(1+2u)-\log\big(1-e^{-\beta}\big).\label{eq_def_free_energy_at_the_pole}
\end{equation}
In particular, recalling that $\ell_{\min}(n) := \sqrt{n}/(4\log n)$:
\begin{equation}
\lim_{n\rightarrow\infty}\max_{\ell_{\min}(n)\leq \ell \leq n}\big|E_{\mu_\ell}[\phi]\big| = 0,\qquad \phi = {\bf 1}_{p=2} - e^{-\beta}.\label{eq_estimate_1_pis2_minus_exp-_beta_under_mu_ell_appendix}
\end{equation}
\end{prop}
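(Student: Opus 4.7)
The plan is to establish the LDP by an exact combinatorial computation of the partition function, and to deduce the expectation estimate~\eqref{eq_estimate_1_pis2_minus_exp-_beta_under_mu_ell_appendix} by a refinement of the same computation followed by a conditioning argument at the end.

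First I would check that the constraint $W_\ell = \{\rho^\ell \leq C_\ell\}$ is inactive for the LDP range of interest: since $C_\ell \geq 2(n\log n)/(\ell+1)$ is unbounded along $\ell\in[\ell_{\min}(n),n]$, configurations with top height $m$ of order $\ell$ have $\rho^\ell = 2m/(\ell+1)$ of order one, well inside $W_\ell$; replacing $\mu_\ell$ by the unconstrained $\bar\mu_\ell$ costs only a negligible multiplicative correction. Writing $C_{\ell,m}$ for the number of configurations $\eta\in\Omega_\ell$ with top height $m$, a direct combinatorial count (splitting $\eta$ between particle sites $\{0,\dots,L\}$ and antiparticle sites $\{R,\dots,\ell\}$ separated by the pole $\{L+1,\dots,R-1\}$, enumerating compositions of $m-1$ with prescribed marginals and summing over $L$ via Vandermonde) gives $C_{\ell,m}=\binom{\ell-1+2m}{2m}$. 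The generating-function identity
\[
\sum_{m\geq 0}\binom{\ell-1+2m}{2m}x^{2m}=\tfrac{1}{2}\bigl[(1-x)^{-\ell}+(1+x)^{-\ell}\bigr]
\]
at $x=e^{-\beta}$ yields $\bar{\mathcal Z}_\ell e^{\beta\ell}\sim \tfrac12(1-e^{-\beta})^{-\ell}$. From $\bar\mu_\ell(h=m)=C_{\ell,m}e^{-2\beta m}/(\bar{\mathcal Z}_\ell e^{\beta\ell})$ and Stirling, $-\ell^{-1}\log\bar\mu_\ell(h=\lfloor u\ell\rfloor)\to C(u)$; convexity of $C$ follows from $C'(u)=2\beta-2\log(1+1/(2u))$ being strictly increasing, and compactness of level sets from $C(u)\sim 2\beta u$ as $u\to\infty$. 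Solving $C'(u^*)=0$ gives the unique minimizer $u^*=1/(2(e^{\beta}-1))$, at which $C(u^*)=0$, so $h/\ell$ concentrates at $u^*$.

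For the expectation~\eqref{eq_estimate_1_pis2_minus_exp-_beta_under_mu_ell_appendix}, I would refine the count above: the number of configurations with top height $m$ and ZRP pole parameter $R-L=k$ is $\binom{\ell-k+2m-1}{2m-1}$, so by the odd-part variant of the generating-function identity,
\[
\sum_{m\geq 1}\binom{\ell-k+2m-1}{2m-1}e^{-2\beta m}=\tfrac{e^{-\beta}}{2}\bigl[(1-e^{-\beta})^{-(\ell-k+1)}-(1+e^{-\beta})^{-(\ell-k+1)}\bigr].
\]
Dividing by the asymptotic of $\bar{\mathcal Z}_\ell e^{\beta\ell}$, the $(1-e^{-\beta})^{-\ell}$ factors cancel and I obtain $\bar\mu_\ell(R-L=k)\to e^{-\beta}(1-e^{-\beta})^{k-1}$, i.e.\ a Geometric$(e^{-\beta})$ limit law for the ZRP pole parameter. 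Since valid curves in $\Omega^N_{\text{mic}}$ must have pole of at least two blocks, the measure $\mu_\ell$ is restricted (through the mapping from $M_\ell$) to $\{R-L\geq 2\}$, and the conditional probability of $\{R-L=2\}$, which is the event $\{p=2\}$ in the statement, converges to $e^{-\beta}(1-e^{-\beta})/(1-e^{-\beta})=e^{-\beta}$, as required. Uniformity in $\ell\in[\ell_{\min}(n),n]$ follows from the $\ell$-independence of the limit and the fact that $\ell_{\min}(n)\to\infty$ as $n\to\infty$.

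The main obstacle will be the careful combinatorial bookkeeping around the pole: handling the separation constraint between particle and antiparticle sites, the degenerate boundary cases ($L=0$ with no positive $\eta_j$, or $R=\ell$ with no negative), and the exact identification between the ZRP pole variable $R-L$ and the curve pole size, including the implicit restriction $R-L\geq 2$ coming from the state space $\Omega^N_{\text{mic}}$. Once these are in place the generating-function identities deliver the asymptotics essentially automatically, and the remaining LDP and concentration arguments are standard.
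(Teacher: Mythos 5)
Your approach is correct, but follows a genuinely different route from the paper. For the partition function and the LDP, the paper locates the maximal summand via the ratio $\mu_\ell(h=q+1)/\mu_\ell(h=q)$ and then applies Stirling (a Laplace-type argument); you instead compute $\bar{\mathcal Z}_\ell e^{\beta\ell}$ exactly via the even-part generating function, which delivers the same rate function with a cleaner, closed-form derivation. For the expectation~\eqref{eq_estimate_1_pis2_minus_exp-_beta_under_mu_ell_appendix}, the paper derives the identity $\mu_\ell(p=2,h=q) = \frac{2q}{2q+\ell-2}\mu_\ell(h=q)$ and then uses LDP concentration of $h/\ell$ at $u_c=1/(2(e^\beta-1))$ together with $2u_c/(2u_c+1)=e^{-\beta}$; you bypass the concentration argument entirely and compute the full marginal law of $p=R-L$ by summing out the height with the odd-part generating function, obtaining a Geometric limit. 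Your route is arguably more informative since it yields the whole limiting distribution of the pole size, not just the single probability $\mu_\ell(p=2)$, and it makes uniformity in $\ell$ manifest because the error in the generating-function asymptotics is exponentially small in $\ell$.

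One bookkeeping point worth flagging: your count $C_{\ell,m}=\binom{\ell-1+2m}{2m}$ is over up-down paths with $R-L\geq 1$ (pole of at least one edge), whereas configurations that actually arise from curves in $\Omega^N_{\text{mic}}$ have $R-L\geq 2$, for which the count is $\binom{\ell+2m-2}{2m}$ (the paper's convention). Correspondingly the exponent in your generating function is $\ell$ rather than $\ell-1$, and your unconditioned limit $e^{-\beta}(1-e^{-\beta})^{k-1}$ is normalized over $k\geq 1$ rather than the paper's $e^{-\beta}(1-e^{-\beta})^{k-2}$ over $k\geq 2$. You do handle this by conditioning on $\{R-L\geq 2\}$ at the end, so the argument is internally consistent and the final answer is correct; had you restricted the enumeration to $R-L\geq 2$ from the start (taking $a+b\leq\ell-2$ in the Vandermonde/hockey-stick sum), the conditioning step would have been unnecessary. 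The off-by-one is subexponential and invisible to the LDP, so only the conditioning step is affected.
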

\begin{proof}
We prove~\eqref{eq_def_free_energy_at_the_pole} first. 
We say that a path is a north-east path if it goes either up or right, a south-east path if it goes either down or right and an up-down path if it is the concatenation of a north-east path followed by a south-east path (see Figure~\ref{fig_bijection_path_pole}).

Recall that up-down paths correspond to possible configurations of the neighbourhood of the north pole of microscopic interfaces $\gamma^N\in\Omega^N_{\text{mic}}$. In contrast, the north-east paths appearing on the left of Figure~\ref{fig_bijection_path_pole} below do not have any interpretation in terms of microscopic interfaces. 

We speak alternately of up-down paths or of particle/antiparticle configurations in the proof depending on what is easier to use, the height of a path corresponding to $\sum_{x\leq L_1}\eta_x = -\sum_{x> L_1}\eta_x$. Here, the point $L_1$ is the left extremity of the pole of an up-down path. This pole and $L_1$ for an up-down path are defined analogously to the north pole of a curve $\gamma^N\in\Omega^N_{\text{mic}}$ and its left extremity $L_1(\gamma)$. We similarly write $p$ for the length of the pole of an up-down path. 

Fix $\ell_{\min}(n)\leq \ell\leq n$ throughout. Let us first study the probability to observe a given height under $\mu_{\ell}$. There are exactly $\binom{2q+\ell-2}{2q}$ configurations with height $q\in\N$. To see it, notice that this is the number of north-east paths with length $2q+\ell-2$ and $2q$ vertical edges. To each such path $\rho$, one can associate a unique up-down path of length $2q+\ell$ as follows (see also Figure~\ref{fig_bijection_path_pole}).
\begin{itemize}
	\item Travelling on the path $\rho$ from its origin, stop at the first point at height $q$, call it $X$, and cut the path there in two parts $\rho_{\leq X}$ and $\rho_{> X}$. 
	\item Add two horizontal edges to $\rho_{\leq X}$ immediately after $X$, call $\rho_{X+2}$ the resulting path.
	\item Change $\rho_{>X}$ into its symmetrical $\tilde \rho_{>X}$ with respect to the horizontal, i.e. change every upwards edge into a downwards one, leaving the horizontal edges unchanged. Stitch the last edge of $\rho_{X+2}$ to the first of $\tilde\rho_{>X}$ to obtain an up-down path of height $q$ and length $2q+\ell$. 
\end{itemize}
One easily checks that this mapping is a bijection, mapping the point $X$ onto the left extremity $L_1$ of the pole of the up-down path, whence:
\begin{equation}
\forall q\leq h_{\max}(n) = n(1+2\log n),\qquad \mu_{\ell}\Big(\sum_{j\leq L_1}\eta_{j} = q\Big) = \binom{2q+\ell-2}{2q}e^{-2\beta q-\beta\ell}/\mathcal Z_\ell.
\end{equation}
Let us investigate the dependence of this quantity in $q< h_{\max}(n)$:
\begin{equation}
\mu_{\ell}\Big(\sum_{j\leq L_1}\eta_{j} = q+1\Big)/\mu_{\ell}\Big(\sum_{j\leq L_1}\eta_{j} = q\Big) = e^{-2\beta}\frac{(2q+\ell)(2q+\ell-1)}{(2q+2)(2q+1)}.\label{eq_tx_variation_height_sous_mu_ell}
\end{equation}
This quantity increases until some value $q_c$ of $q$, given by
\begin{equation}
q_c = \frac{1}{2}(e^{\beta}-1)^{-1}\ell + o(\ell) =: u_c\ell + o(\ell).
\end{equation}
In particular, due to the logarithm in the large deviation bounds for the measure $\mu_\ell$ that we are trying to prove, only the maximum value of $\binom{2q+\ell-2}{2q}e^{-2\beta q-\beta\ell}$ will matter. One thus needs only consider heights of order $\ell$ in the large $\ell$ limit. For fixed $u>0$, elementary computations give:
\begin{equation}
\frac{1}{\ell}\log \bigg[\binom{2\lfloor \ell u\rfloor +\ell-2}{2\lfloor \ell u\rfloor }e^{-2\beta \lfloor \ell u\rfloor -\beta\ell}\bigg] = -\beta-2\beta u +2u\log\big(1+1/(2u)\big) + \log(1+2u) + o_\ell(1).\label{eq_value_mu_ell_de_height_of_order_ell}
\end{equation}
Define the function $D(\cdot)$ on $\R^*_+$ by;
\begin{equation}
D(u) = \beta+2\beta u -2u\log\big(1+1/(2u)\big) - \log(1+2u) \geq 0,
\qquad 
u\geq 0
.
\end{equation}
From~\eqref{eq_value_mu_ell_de_height_of_order_ell} and with $u_c = \frac{1}{2}(e^{\beta}-1)^{-1}$, we obtain for the normalisation $\mathcal Z_\ell$:% for the surface tension $\mathfrak{t}_\beta(P)$:
\begin{equation}
\lim_{\ell\rightarrow\infty}\frac{1}{\ell}\log \mathcal Z_\ell = D(u_c) = \beta+\log\big(1-e^{-\beta}\big).\label{eq_valeur_tau_P_ds_proof}
\end{equation}
We now turn to the large deviation principle for the height of a path. From~\eqref{eq_value_mu_ell_de_height_of_order_ell} and~\eqref{eq_valeur_tau_P_ds_proof}, we obtain
\begin{equation}
\frac{1}{\ell}\log \mu_{\ell}\Big(\sum_{j\leq L_1}\eta_{j} = \lfloor \ell u\rfloor \Big) = -(D(u)-D(u_c))+ o_\ell(1)
.
\label{eq_ldp_height_path}
\end{equation}
Define the rate function $C(\cdot)$ on $\R^*_+$ by
\begin{equation}
C(u) = D(u) - D(u_c)\geq 0
,\qquad
u\geq 0
.
\label{eq_rate_function_height_path}
\end{equation}
The function $C$ is $C^\infty$ on $\R^*_+$ and satisfies:
\begin{equation}
C(u_c) = 0 =C'(u_c),\qquad C''(u) = \frac{2}{u+2u^2}>0\text{ for each }u>0,
\end{equation}
so that $C$ is strictly convex and a good rate function. The large deviation principle follows from~\eqref{eq_ldp_height_path}. \\

\begin{figure}[H]
\begin{center}
\includegraphics[width=12cm]{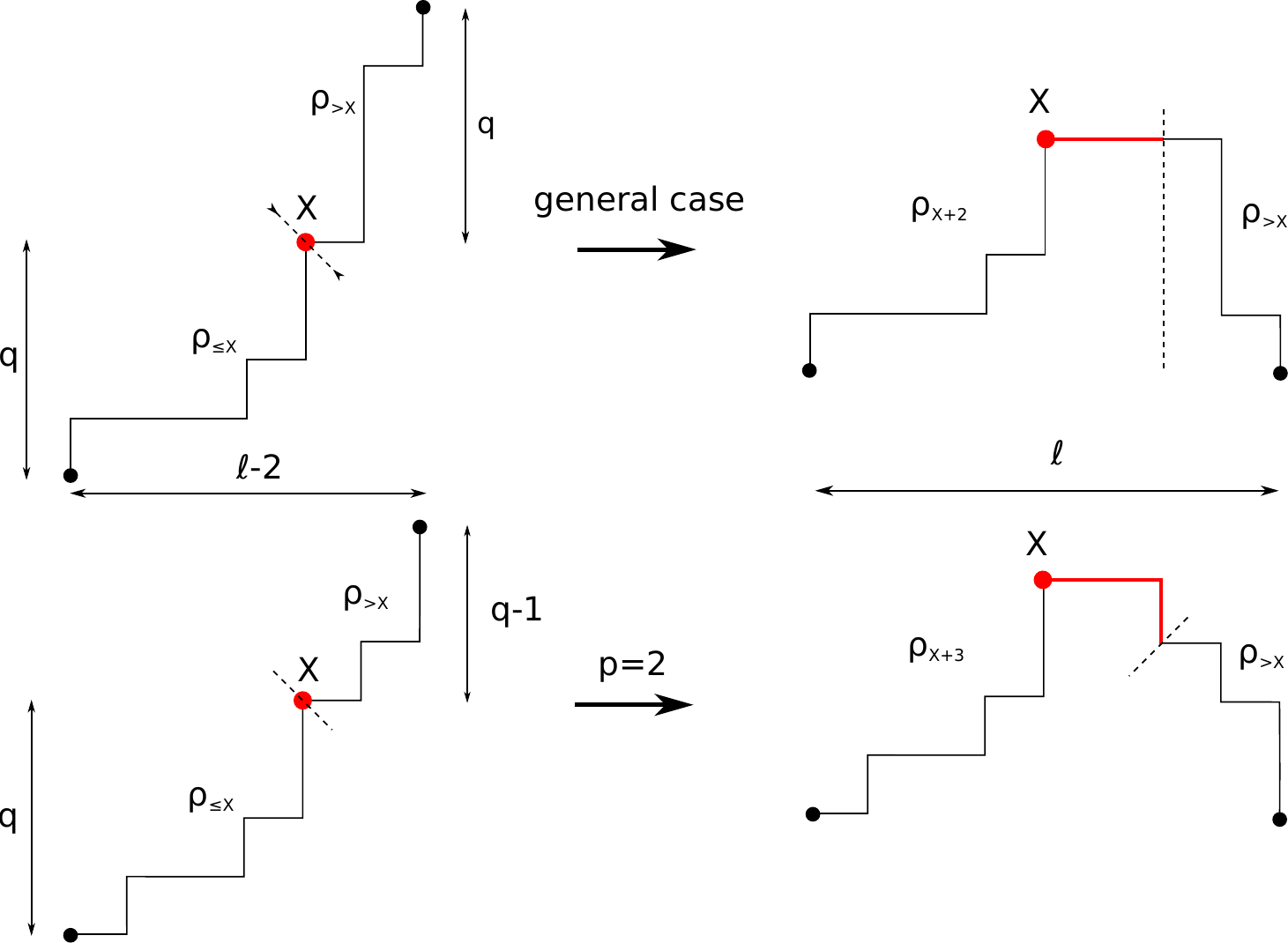} 
\caption{Bijection argument to count the number of up-down paths with height $q$ and length $2q+\ell$, with (bottom figure) or without (top figure) conditions on the size $p$ of the pole. Black dots delimit the extremities of the paths. Without conditions on $p$, an up-down path is obtained by transforming a north-east path with length $2q+\ell-2$ and height $2q$ (top left figure). If $p=2$, then instead the length is $2q+\ell-3$ and height $2q-1$ (bottom left figure). Dashed lines delimit the two portions $\rho_{\leq X}$ and $\rho_{>X}$ of the north-east paths. The red dot is the place at which the initial north-east path is split and the red, thick lines on the right-hand side are the edges added to the initial path to obtain an up-down configuration with height $q$ and length $2q+\ell$ (with also $p=2$ in the bottom figure). \label{fig_bijection_path_pole}}
\end{center}
\end{figure}
\indent It remains to prove~\eqref{eq_estimate_1_pis2_minus_exp-_beta_under_mu_ell_appendix}. This follows from the large deviations principle~\eqref{eq_ldp_height_path} and the following observation.
Constructing a path with $p=2$ and height $q\in\N_{\geq 1}$ is done by building a north-east path of length $2q-1+\ell-2$ with $2q-1$ vertical edges, then cutting it as described previously and taking the symmetric part of the path after the first point $X$ at height $q$. The only difference is that one now sticks not just two horizontal edges after $X$, but two horizontal edges followed by a vertical one hanging from below, before stitching back the two parts of the path (see Figure~\ref{fig_bijection_path_pole}). There are thus $\binom{2q+\ell-3}{2q-1}$ configurations with $p=2$ and height $q\in\N_{\geq 1}$ and:
\begin{equation}
\mu_\ell\Big(p=2,\sum_{j\leq L_1}\eta_{j} = q\Big) 
= 
\mathcal Z_\ell^{-1}e ^{-\beta\ell-2\beta q}\binom{2q+\ell-3}{2q-1} 
= 
\frac{2q}{2q+\ell-2}\mu_\ell\Big(\sum_{j\leq L_1}\eta_{j} = q\Big)
.
\label{eq_mes_pis2_height_q_is_propto_height_q}
\end{equation}
As the height $q$ of a path corresponds to $\sum_{j\leq L_1}\eta_j$, 
the expectation in~\eqref{eq_estimate_1_pis2_minus_exp-_beta_under_mu_ell_appendix} at $\ell$ then reads: 
\begin{align}
E_{\mu_\ell}[\phi] &=-e^{-\beta} + \sum_{q\geq 1}\mu_\ell\Big(p=2,\sum_{j\leq L_1}\eta_j=q\Big) \overset{\eqref{eq_mes_pis2_height_q_is_propto_height_q}}{=}  E_{\mu_\ell}\bigg[ \frac{2\sum_{j\leq L_1}\eta_j}{2\sum_{j\leq L_1}\eta_j+\ell-2}-e^{-\beta}\bigg].\label{eq_expectation_equilibrium_to_compute_1pis2_0}
\end{align}
Let $\zeta>0$. The integrand in~\eqref{eq_expectation_equilibrium_to_compute_1pis2_0} is bounded and, for all $\ell$ large enough,
\begin{equation}
\frac{1}{\ell}\log\mu_{\ell}\Big(\frac{1}{\ell}\sum_{j\leq L_1}\eta_j \notin[u_c-\zeta,u_c+\zeta]\Big) \leq -C(u_c+\zeta)/2<0.\label{eq_prob_mu_ell_deviating_mean_exp_small}
\end{equation}
As a result, since $2u_c/(2u_c+1) = e^{-\beta}$, the expectation in~\eqref{eq_expectation_equilibrium_to_compute_1pis2_0} is recast as follows:
\begin{align}
E_{\mu_\ell}[\phi] 
&= 
E_{\mu_\ell}\bigg[ 
\bigg(\frac{2\ell^ {-1}\sum_{j\leq L_1}\eta_j}{2\ell^ {-1}\sum_{j\leq L_1}\eta_j+1}-e^{-\beta}\bigg){\bf 1}_{u_c-\zeta \leq \frac{1}{\ell}\sum_{j\leq L_1}\eta_{j} \leq u_c+\zeta}
\bigg] +O(\ell^{-1})\nonumber\\
&=O(\zeta) + O(\ell^{-1}).
\end{align}
The $O(\zeta)$ is independent of $\ell \in \{\ell_{\min}(n),...,n\}$ and $\ell_{\min}(n) = \sqrt{n}/(4\log n)$ diverges with $n$. 
This proves~\eqref{eq_estimate_1_pis2_minus_exp-_beta_under_mu_ell_appendix}.
\end{proof}

\section{Topology results}\label{app_prop_e_r}
At the microscopic level, elements of $\Omega^N_{\text{mic}}$ are simple curves. 
Macroscopically, however, curves may be non-simple, for instance when the situation of Figure~\ref{fig_non_simple_curve} occurs. 
Microscopic estimates on the poles, 
such as Proposition~\ref{prop_value_slope_at_poles}, 
indicate that macroscopic trajectories should have well-behaved poles and thus be simple at almost every time.

To turn the information provided by Proposition~\ref{prop_value_slope_at_poles} into a property of limiting curves (e.g. for lower bound large deviations), 
it is necessary to first be able to define a limiting object (for lower bound large deviations, a limiting probability measure) on general, possibly pathological trajectories. 

This means that we have to define a topology on trajectories that is strong enough to handle pathological cases, 
yet weak enough for microscopic estimates to be available 
(which for instance excludes pointwise in time estimates in Hausdorff distance). 

The following appendix builds this topological setting, 
first on curves (Section~\ref{sec_omega_e}), 
then on trajectories (Section~\ref{sec_the_set_E(0,T0)}); 
concluding with a proof of exponential tightness (Section~\ref{appen_tightness}). 
\subsection{Topological properties of $\Omega$ and $\e$}
\label{sec_omega_e}
In preparation for the definition of the topology on trajectories, 
we investigate in this section topological properties of $\Omega,\e$ for Hausdorff and volume distances, 
expressing the Hausdorff distance between curves in $\Omega$ as a function of the volume distance and the distance between the poles only.  
We also prove Lemma~\ref{lemm_local_jump_rates} on the locality of jump rates of the contour dynamics.

The Hausdorff distance $d_{\mathcal H}(A,B)$ between two non-empty compact sets $A,B$ is defined in~\eqref{eq_def_distance_Hausroff} 
and equivalently by:
\begin{align}
\qquad d_{\mathcal H}\big(A,B) = \max\Big\{\sup_{y\in B}\inf_{x\in A}\|x-y\|_1, \sup_{x\in A}\inf_{y\in B}\|x-y\|_1\Big\} 
= 
d_{\mathcal{H}}\big(\partial A,\partial B)
.
\label{eq_def_hausdorff_distance}
\end{align}
Recall also the convention:
\begin{equation}
\forall\gamma,\tilde\gamma\in\Omega,\qquad d_{L^1}(\gamma,\tilde\gamma) := d_{L^1}(\Gamma,\tilde\Gamma) = \int_{\R^2}|{\bf 1}_{\Gamma} - {\bf 1}_{\tilde \Gamma}|\, du\, dv,\label{eq_convention_volume_appendice}
\end{equation}
where $\Gamma,\tilde\Gamma$ are the droplets associated with $\gamma,\tilde\gamma$ respectively.

For $1\leq k\leq 4$, recall that the $z_k=z_k(\gamma)$ are the extremal coordinates of a curve $\gamma\in\Omega$ (see Figure~\ref{fig_gamma_prime_z_k_w_k} and recall that $L_k$ is the left extremity of pole $k$):
\begin{align}
&z_1 = \sup \{ x\cdot{\bf b}_{\pi/2} : x\in\gamma\} = L_1\cdot {\bf b}_{\pi/2},
\qquad 
z_3 = \inf \{ x\cdot{\bf b}_{\pi/2} : x\in\gamma\} = L_3\cdot {\bf b}_{\pi/2},\nonumber\\
&z_2 = \sup \{ x\cdot{\bf b}_0 : x\in\gamma\} = L_2\cdot {\bf b}_0,
\hspace{1.51cm} z_4 = \inf \{ x\cdot{\bf b}_0 : x\in\gamma\} = L_4\cdot {\bf b}_0.
\label{eq_def_z_k_appB}
\end{align}
Equation~\eqref{eq_def_hausdorff_distance} directly yields that the $z_k$ are $1$-Lispchitz functions in Hausdorff distance. 
In addition, since each curve $\gamma\in\Omega$ surrounds $0$, 
it holds that $z_1,z_2\geq 0$ and $z_3,z_4\leq 0$. 

We first prove that $\Omega,\e$ are closed with respect to convergence in Hausdorff and/or volume distances.
\begin{prop}\label{prop_e_closed}
The sets $\Omega$ and $\e$ are closed in the topology associated with the Hausdorff distance $d_{\mathcal H}$. 
Moreover, $\e$ is closed in $\Omega$ for the volume distance $d_{L^1}$, defined in~\eqref{eq_convention_volume_appendice}. 
In addition, $\Omega\cap \{|\gamma|\leq \kappa\}$ is compact in Hausdorff topogy (thus $\e\cap \{|\gamma|\leq \kappa\}$ as well) for each $\kappa>0$.
\end{prop}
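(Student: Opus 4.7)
The plan is to treat the three closedness assertions and the compactness in turn, all relying on a graph reformulation of Property~\ref{property_state_space}: after rotating by $\pi/4+(k-1)\pi/2$, region $k$ of any $\gamma\in\Omega$ is the graph of a $1$-Lipschitz function $f^k$ on a closed interval (with endpoints determined by the pole extremities $L_k,R_{k+1}$), and conversely any closed curve admitting such a four-piece decomposition satisfies Property~\ref{property_state_space}. Since $1$-Lipschitzness is preserved under uniform limits via Arzelà--Ascoli, this is the central input; my plan is to push every statement through the graph representation rather than argue directly with tangent vectors, which are not continuous under Hausdorff limits.

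For the Hausdorff-closedness of $\Omega$, I would take $(\gamma_n)\subset\Omega$ with $d_{\mathcal H}(\Gamma_n,\Gamma^\infty)\to 0$. Hausdorff convergence equi-bounds the droplets, hence the eight points $L_k(\gamma_n),R_k(\gamma_n)$ stay bounded; extract a subsequence along which they converge to $L_k^\infty,R_k^\infty$. In each region $k$, the rotated graph representation $f_n^k$ is $1$-Lipschitz on $[a_n^k,b_n^k]$ with converging endpoints, so Arzelà--Ascoli (after extending by its boundary value) yields a uniformly convergent subsequence with $1$-Lipschitz limit $f^k_\infty$. Stitching the four limit graphs produces a closed Lipschitz curve $\tilde\gamma$ satisfying Property~\ref{property_state_space}, and uniqueness of Hausdorff limits identifies $\partial\tilde\Gamma$ with $\gamma^\infty$, placing $\gamma^\infty$ in $\Omega$. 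Closedness of $\e$ in Hausdorff distance then follows because $d_{L^1}(\cdot,\gamma^{\text{ref}})$ is continuous on $\Omega$ under $d_{\mathcal H}$: droplets remain equi-bounded, $\partial\Gamma^\infty$ has Lebesgue measure zero as a Lipschitz curve, so ${\bf 1}_{\Gamma_n}\to{\bf 1}_{\Gamma^\infty}$ almost everywhere and dominated convergence applies; hence $\{d_{L^1}(\cdot,\gamma^{\text{ref}})\leq r_0^2\}\cap\Omega$ is closed. Closedness of $\e$ in $\Omega$ under $d_{L^1}$ is immediate since $d_{L^1}(\cdot,\gamma^{\text{ref}})$ is $1$-Lipschitz in that distance.

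For the compactness of $\Omega\cap\{|\gamma|\leq\kappa\}$, the bound on $|\gamma|$ together with Property~\ref{property_state_space} controls the diameter of the droplet (through the pole-to-pole identity $|\gamma|=2[L_1-L_3]\cdot{\bf b}_2+2[L_2-L_4]\cdot{\bf b}_1$, which holds because each region is a monotone graph), so that after fixing a reference point the droplets lie in a common compact subset of $\R^2$ and Blaschke's selection theorem furnishes a Hausdorff-convergent subsequence; the closedness of $\Omega$ just proved then identifies the limit as an element of $\Omega$, and lower semi-continuity of the length functional (via the same pole-to-pole formula and continuity of the pole maps, cf. Lemma~\ref{lemm_continuite_poles}) preserves the bound $|\gamma|\leq\kappa$. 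The principal obstacle throughout is preservation of Property~\ref{property_state_space} under Hausdorff limits when degeneracies occur---poles merging, a point-like pole sitting at the tip of a flat segment (as in Figure~\ref{fig_non_simple_curve}), or a whole region collapsing to a point. The monotone-graph viewpoint handles exactly these degenerations: each $f^k_\infty$ remains $1$-Lipschitz and monotone even when its domain degenerates to a singleton or when consecutive regions share a common flat segment in the limit, so the four-piece decomposition survives and the limit still lies in $\Omega$.
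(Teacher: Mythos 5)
Your proof is correct and takes a genuinely more explicit route than the paper's. The paper argues the closedness of $\Omega$ tersely: it notes that the extremal coordinates $z_k$ are $1$-Lipschitz in Hausdorff distance, so a Hausdorff limit has finite length and four well-defined extremal poles, and then asserts that the additional $L^1$ convergence (which Hausdorff convergence entails here) ``can then be used to prove that $\gamma$ satisfies Property~\ref{property_state_space}.'' You instead make that last step constructive: rotate each region into the graph of a $1$-Lipschitz function, extract a subsequence along which the eight pole-extremities converge (possible since they are bounded), apply Arzelà--Ascoli to the graph functions on their converging domains, and stitch the limiting graphs into a limit curve satisfying Property~\ref{property_state_space} directly, then invoke uniqueness of Hausdorff limits. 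This buys you a proof of the preservation of Property~\ref{property_state_space} that does not rely on an unspoken step, at the small cost of having to deal carefully with degenerate domains (regions or poles collapsing), which you correctly flag and observe are absorbed by the $1$-Lipschitz-graph framework. Your argument that $\e$ is closed in $\Omega$ for $d_{\mathcal H}$ via dominated convergence (Lipschitz boundary has Lebesgue measure zero; droplets equi-bounded) is also a concrete substitute for the paper's one-line assertion that Hausdorff convergence implies $L^1$ convergence. For compactness you invoke Blaschke selection where the paper directly cites compactness of the space of non-empty compact subsets of $[-\kappa/2,\kappa/2]^2$; these are the same theorem. Two small points of precision worth tightening: the droplets lie in a common compact set because $0\in\Gamma$ by Definition~\ref{def_state_space} (the paper invokes this explicitly), so ``after fixing a reference point'' should be replaced by citing the constraint; and the length functional $|\gamma|=2(z_1-z_3)+2(z_2-z_4)$ is actually Lipschitz, not merely lower semi-continuous, under $d_{\mathcal H}$ by Lemma~\ref{lemm_continuite_poles} — lower semi-continuity suffices to close the sublevel set $\{|\gamma|\leq\kappa\}$, so your conclusion stands, but continuity is the sharper and simpler fact to cite.
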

\begin{proof}
If a sequence converges in Hausdorff distance, then it converges in volume distance $d_{L^1}$. 
In particular $\e$, 
defined as a closed ball in $\Omega$ in $d_{L^1}$-distance,
is a closed subset of $\Omega$ for both volume and Hausdorff distances. 
It is thus enough to prove that $\Omega$ is closed and $\Omega\cap\{\|\gamma\|\leq \kappa\}$ is compact in Hausdorff distance for $\kappa>0$. We start with the closeness.

Let $\gamma^n\in\Omega$ ($n\in\N$) converge to $\gamma$ in Hausdorff distance. 
Then by definition~\eqref{eq_def_distance_Hausroff} or~\eqref{eq_def_hausdorff_distance} $\gamma$ delimits a bounded region that must contain $0$. 
The curve $\{0\}$ is in $\Omega$ by assumption, 
so let us assume that $\gamma$ is not reduced to a point. 
Convergence of the $(z_k(\gamma^n))_n$ ($1\leq k\leq 4$) guarantees that $\gamma$ has poles, 
i.e. that regions with extremal ordinate or abscissa are connected. 
It thus also makes sense to talk of the regions of $\gamma$ (some of them possibly reduced to a point).

Let $k\in\{1,...,4\}$ be such that region $k$ of $\gamma$ is not reduced to a point. 
To prove $\gamma\in\Omega$, 
it remains to prove that its region $k$ is a Lipschitz curve with tangent vector ${\bf T}$ of $\gamma$ satisfying ${\bf T}\cdot {\bf b}_{-(k-1)\pi/2}\geq 0$ and ${\bf T}\cdot {\bf b}_{-k\pi/2}\geq 0$. 
This is equivalent to proving that region $k$ of $\gamma$ is the graph of a $1$-Lipschitz function in the reference frame $({\bf b}_{-k\pi/4},{\bf b}_{-k\pi/4+\pi/2})$ (in particular this is true for the $\gamma^n$). 
Let $x$ be in region $k$ of $\gamma$. 
If $x$ is in one of the poles, there is nothing to prove since the poles are horizontal or vertical segments (possibly reduced to a point). 
Suppose instead $d(x,P(\gamma))> 0$ and assume $k=1$ so that $x$ is in the first region, the others being similar.  
Let $\epsilon>0$ be such that: 
\begin{equation}
x\cdot{\bf b}_{\pi/2}\leq z_1(\gamma)-\epsilon
,\qquad
x\cdot{\bf b}_0<z_2(\gamma)-\epsilon
.
\end{equation}
Then all points $y\in\gamma$ with:
\begin{equation}
y\cdot{\bf b}_{-\pi/4}\in I
:=
\Big\{u\in\R: u\in x\cdot{\bf b}_{-\pi/4} + \Big[-\frac{\epsilon}{2\sqrt{2}},\frac{\epsilon}{2\sqrt{2}}\Big]\Big\}
\end{equation}
satisfy $y\cdot{\bf b}_{\pi/2}<z_1(\gamma)-\epsilon/2$ and $y\cdot{\bf b}_0<z_2(\gamma)-\epsilon/2$. 
This implies that, for $n$ large enough, all points in $I$ correspond to a point $y^n$ in region $1$ of $\gamma^n$. 
Write then the set of such $y^n\in \gamma^n$ as the graph of a $1$-Lipschitz function:
\begin{equation}
\big\{y^n\text{ in region 1 of }\gamma^n:y^n\cdot{\bf b}_{-\pi/4}\in I\big\}
:=
\big\{u{\bf b}_{-\pi/4}+f^n(u){\bf b}_{\pi/4}:u\in I\big\}
.
\end{equation} 
Hausdorff convergence of $\gamma^n$ to $\gamma$ implies uniform convergence of the $f^n$ on $I$. 
It follows that their limit $f$ is $1$-Lipschitz, 
with:
\begin{equation}
\big\{y\text{ in region 1 of }\gamma:y\cdot{\bf b}_{-\pi/4}\in I\big\}
=
\big\{u{\bf b}_{-\pi/4}+f(u){\bf b}_{\pi/4}:u\in I\big\}
.
\end{equation}
The fact that $x\notin P(\gamma)$ was an arbitrary point of region $1$ concludes the proof: $\gamma\in\Omega$.\\

Consider now the compactness claim. For $\kappa>0$, $\gamma\in\Omega\cap \{|\gamma|\leq \kappa\}$ implies that $|z_k(\gamma)|\leq \kappa/2$ for each $k$ with $1\leq k \leq 4$. Since $\gamma$ surrounds $0$ by Definition~\ref{def_state_space} of $\Omega$, the set $\Omega\cap \{|\gamma|\leq \kappa\}$ is closed in Hausdorff topology in the set of non-empty compact sets in $[-\kappa/2,\kappa/2]^2$, which is compact. %Seeing elements of $\e\cap \{|\gamma|\leq \kappa\}$ as BV functions on $[-\kappa/2,\kappa/2]^2$ with bounded perimeter, the compactness of $\e\cap \{|\gamma|\leq \kappa\}$ in $d_{L^1}$-topology is established in \cite[Theorem 5.5]{Evans2015}. 
This concludes the proof.
\end{proof}
\begin{figure}
\begin{center}
\includegraphics[width=11cm]{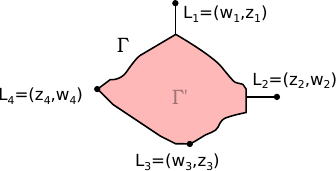}
\caption{A droplet $\Gamma$ associated with a curve in $\Omega$ with point-like north and east poles. The left extremities $L_k = (w_k,z_k)$ ($1\leq k\leq 4$) of each pole are indicated by black dots. Here, for $u,v\in\R$, $(u,v)$ is the point $u{\bf b}_0 + v{\bf b}_{\pi/2}$. The droplet $\Gamma'$ defined in Lemma~\ref{lemm_continuite_z_k'} corresponds to the red shaded area.\label{fig_gamma_prime_z_k_w_k}}
\end{center}
\end{figure}
To control Hausdorff convergence, 
we will need to control the position of the poles. 
The next lemma characterises the continuity of the other coordinate $w_k$ of the left extremity of pole $k\in\{1,...,4\}$
(see Figure~\ref{fig_gamma_prime_z_k_w_k}). 
Continuity properties of the $z_k,w_k$ are summarised on Figure~\ref{fig_lsc}.
\begin{lemm}\label{lemm_w_k}
For $\gamma\in\Omega$, let $w_k(\gamma)$ ($1\leq k \leq 4$) be the other four coordinates of the left extremities $L_k(\gamma)$ of each pole: $w_i(\gamma) = L_i(\gamma)\cdot {\bf b}_0$ for $i\in\{1,3\}$, 
$w_i(\gamma) = L_i(\gamma)\cdot {\bf b}_{\pi/2}$ for $i\in\{2,4\}$. 

The function $\gamma\mapsto w_k(\gamma)$ ($1\leq k\leq 4$) satisfies $w_k\in [z_{k-1},z_{k+1}]$ with the convention $k+1:=1$ if $k=4$ and $k-1 :=4$ if $k=1$. 
It is not continuous in Hausdorff distance on $\Omega$, 
but $w_k$ is lower semi-continuous if $k\in\{1,4\}$, upper semi-continuous if $k\in\{2,3\}$. 

In addition, if $\gamma\in\Omega$ has point-like pole $k\in\{1,...,4\}$, 
then $\gamma$ is a point of continuity of $w_k$ for the Hausdorff distance. 
That is, for $(\gamma^n)_n\subset \Omega$ converging to $\gamma$ in Hausdorff distance:
\begin{align}
\lim_{n\rightarrow\infty} \big\|L_k(\gamma^n)-L_k(\gamma)\big\|_1\vee \big\|R_k(\gamma^n)-R_k(\gamma)\big\|_1 = 0.\label{eq_convergence_pole_at_fixed_time}
\end{align}
\end{lemm}
\begin{proof}
Note that the $z_k$ $(1\leq k \leq 4)$ are the extremal coordinates of points of an interface by definition, see Figure~\ref{fig_gamma_prime_z_k_w_k}. 
E.g. for $k=1$, $w_1$ is the abscissa of the left extremity $L_1$ of the north pole while $z_4$ is the lowest abscissa and $z_2$ the highest. 
Thus $w_1\in[z_4,z_2]$ and similarly for $w_k$ for $k\in\{2,3,4\}$.

The lack of continuity of $w_k$ is best explained on a picture (see right picture in Figure~\ref{fig_z_k_w_k}). 
The idea is the following. 
E.g. for the north pole $k=1$, let $\gamma\in\e$ be a simple curve with north pole not reduced to a point: $|P_1(\gamma)|>0$. 
Then $w_1(\gamma)$ is always the abscissa of the leftmost point of the pole by definition, 
but one can build a sequence $(\gamma^n)_n$ of curves with point-like north pole at ordinate $z_1(\gamma)+1/n$ and abscissa $R_1(\gamma)\cdot {\bf b}_0\neq w_1(\gamma)$. 
Then $\lim_nd_{\mathcal H}(\gamma^n,\gamma) =0$, 
but $w(\gamma^n)$ converges to $R_1(\gamma)\cdot {\bf b}_0\neq w_1(\gamma)$.  

The fact that $w_1(\gamma)$ is the abscissa of the leftmost point of the pole also implies the lower semi-continuity of $\gamma\in\Omega\mapsto w_1(\gamma)$ for the Hausdorff distance. 
Indeed, let $\gamma^n\in\Omega$ ($n\in\N$) converge to $\gamma\in\Omega$ for $d_{\mathcal H}$. 
Then $L_1(\gamma^n)\cdot {\bf b}_{\pi/2} = z_1(\gamma^n)$ converges to $z_1(\gamma)$ while $w_1(\gamma^n) = L_1(\gamma^n)\cdot {\bf b}_{0}\in [z_4(\gamma^n),z_2(\gamma^n)]$ is bounded. 
Let $L:= \liminf_{n\to\infty}L_1(\gamma^n)\in\gamma$. 
As $L\in\gamma$ has maximal ordinate $z_1(\gamma)$, it belongs to the north pole of $\gamma$, 
hence $L\cdot{\bf b}_0\geq w_1(\gamma)$ and the lower semi-continuity. 

Finally, let us prove that $\gamma\in\Omega$ with point-like pole $1$ is a point of continuity of $w_1$. 
Let $(\gamma^n)_n\subset\e$ converge to $\gamma$ in Hausdorff distance. 
Recall that $z_1=L_1\cdot {\bf b}_{\pi/2}$ is continuous in Hausdorff distance as noticed below~\eqref{eq_def_z_k_appB}.  
We just saw that $w_1 = L_1\cdot{\bf b}_0$ is lower semi-continuous. 
One can similarly show that $R_1\cdot{\bf b}_0$ is upper semi-continuous (see Figure~\ref{fig_lsc}). 
Thus, since $\Gamma$ has point-like pole 1:
\begin{equation}
L_1(\gamma)\cdot {\bf b}_0 
\leq 
\liminf_{n\rightarrow\infty} L_1(\gamma^n)\cdot {\bf b}_0
\leq 
\limsup_{n\rightarrow\infty} R_1(\gamma^n)\cdot {\bf b}_0
\leq 
R_1(\gamma)\cdot {\bf b}_0 = L_1(\gamma)\cdot {\bf b}_0
.
\end{equation}
This concludes the proof of~\eqref{eq_convergence_pole_at_fixed_time} for the north pole. 
The other poles are similar.
\end{proof}
The last ingredient we need to compare convergence in Hausdorff distance and convergence of the poles and volume is a control of the poles of the largest droplet at 0 volume distance of a given droplet. 
This is stated next (see also Figure~\ref{fig_lsc}), together with continuity estimates in volume distance that will be used in the proof of the locality of the dynamics in Lemma~\ref{lemm_stability_initial_condition_appendix}.
\begin{figure}
\begin{center}
\includegraphics[width=8.5cm]{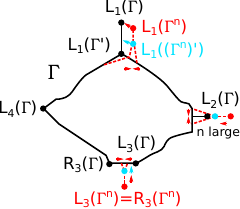} 
\caption{Continuity properties of the $z_k,w_k$ and $\gamma\mapsto z_k(\Gamma')$. 
A droplet $\Gamma$ and an element $\Gamma^n$ of a sequence of droplets  converging to $\Gamma$ in Hausdorff distance ($\Gamma^n$ differs from $\Gamma$ in the portions materialised by red lines) are represented.
The extremities of the poles of $\Gamma,\Gamma^n$ are given by dark dots and the positions of $L_k(\Gamma'),L_k((\Gamma^n)')$ by light blue dots. 
Arrows indicate the evolution of $\Gamma^n$ when $n$ is large. 
The convergence at each pole illustrates different possible behaviours.\\
The droplet $\Gamma$ has point-like pole 1, thus $\lim_nL_1(\Gamma^n)= L_1(\Gamma)$. 
However, on this figure, 
$\Gamma^n$ is chosen in such a way that $\liminf_nz_1((\Gamma^n)')> z_1(\Gamma')$. 
Similarly, at pole 2 of $\Gamma$, $\Gamma^n$ is such that $\lim L_2(\Gamma^n) = L_2(\Gamma)$, but $\inf_nz_2((\Gamma^n)')>z_2(\Gamma')$: $\gamma\mapsto z_i(\Gamma')$ is lower semi-continuous if $i\in\{1,2\}$.\\
At pole $3$, where $L_3(\Gamma^n) = R_3(\Gamma^n)$, 
the point $L_3(\Gamma^n)$ does not converge to $L_3(\Gamma^n) = L_3(\Gamma)$, as  
$w_3(\Gamma^n) = L_3(\Gamma^n)\cdot{\bf b_0}$ satisfies $\lim_n w_3(\Gamma^n)<w_3(\Gamma)$. 
The functionals $R_3\cdot{\bf b}_0$ and $L_3\cdot {\bf b}_0 = w_3$ are respectively lower- and upper semi-continuous (while instead $w_1 = L_1\cdot {\bf b}_0$ is lower- and $R_1\cdot {\bf b}_0$ upper semi-continuous, etc.).
\label{fig_lsc}}
\end{center}
\end{figure}

\begin{lemm}\label{lemm_continuite_z_k'}
For $\gamma=\partial\Gamma\in\Omega$ with volume $|\Gamma|>0$, 
define $\Gamma'\subset \Gamma$ as the largest droplet with simple boundary such that $d_{L^1}(\Gamma,\Gamma')=0$. 
In other words, $\Gamma'$ is the closure of the interior of $\Gamma$ (see Figure~\ref{fig_gamma_prime_z_k_w_k}). 
Define then $\Omega'\subset \Omega$ as the set of boundaries of all such droplets:
\begin{equation}
\Omega' := \{\partial(\Gamma') : \gamma = \partial\Gamma\in\Omega\}.
\label{eq_def_eprime}
\end{equation}
Then, in both Hausdorff and volume distances, 
$\gamma\in\{\tilde\gamma\in\Omega:|\tilde\Gamma|>0\}\mapsto z_k(\Gamma')$ is lower semi-continuous if $k\in\{1,2\}$, 
upper semi-continuous if $k\in\{3,4\}$. 
In addition $\gamma \mapsto w_k(\Gamma')$ is upper semi-continuous in volume distance on the same set of curves if $k\in\{1,4\}$, lower semi-continuous if $k\in\{2,3\}$, 
and similar statements hold for the $R_k\cdot{\bf b}_{(k-1)\pi/2}$ ($1\leq k\leq 4$).
\end{lemm}
\begin{proof}
Note that upper semi-continuity of $z'_k := \gamma\mapsto z_k(\Gamma')$ ($k\in\{3,4\}$) is the same as lower semi-continuity of $z'_1, z'_2$ up to rotating all curves by $\pi$.  
We focus on the lower semi-continuity of $z'_1$ on $\Omega$, 
$z'_2$ being similar. 
Since convergence in Hausdorff distance implies convergence in volume, it is enough to work with the latter. 
Let $\gamma^n\in\Omega$ ($n\in\N$) converge to $\gamma$ in volume distance. 
Then either $(z'_1(\gamma^n))_n$ diverges to $+\infty$, 
in which case the lower semi-continuity holds, 
or it is bounded along a subsequence that we still denote by $(z'_1(\gamma^n))_n$. 
Taking yet another subsequence, we may assume $(z'_1(\gamma^n))_n$ converges. 
Suppose by contradiction that its limit $\bar z'_1$ satisfies $\bar z'_1\leq  z'(\gamma)-\epsilon$ for some $\epsilon>0$. 
Then, for all large enough $n$:
\begin{equation}
z'_1(\gamma^n)
\leq 
z'_1(\gamma)-\epsilon/2
.
\end{equation}
The last equation implies that the intersection of the strip $\{(u,v)\in\R^2 : v\in z'_1(\gamma) + [-\epsilon/2,0]\}$ with the droplet $\Gamma^n$ associated with $\gamma^n$ has vanishing volume when $n$ is large. 
On the other hand, the volume of this strip intersected with $\Gamma$ (the droplet associated with $\gamma$) is strictly positive. 
This is absurd, since $\lim_nd_{L^1}(\gamma^n,\gamma)=0$ by assumption. 
Thus $\bar z'_1> z'_1(\gamma)-\epsilon$ for arbitrary $\epsilon>0$ and the lower semi-continuity.

We now prove that $\limsup_nw'_1(\gamma^n)\leq w'_1(\gamma)$ where $w'_1:=\tilde \gamma\mapsto w_1(\tilde \Gamma')$. 
We may suppose without loss of generality that $\gamma,\gamma^n$ are simple, 
so that $w_1(\gamma)=w_1(\Gamma')$ and idem for $\gamma^n$ ($n\in\N$). 
Suppose first that the first region of $\gamma$ is reduced to a point. 
Then the structure of elements of $\Omega$ implies that $|\Gamma\cap H|=0$, 
with $H$ the half plane of points to the right of $L_1(\gamma)$. 
Thus $\lim_n|\Gamma^n\cap H|=0$ which, $\Gamma^n$ being simple, implies that $\limsup_nw_1(\gamma^n)\leq w_1(\gamma)$.

Assume now that the first region of $\Gamma$ is not reduced to a point and suppose by contradiction that there is $\epsilon>0$ such that $w_1(\gamma^n)\geq w_1(\gamma)+\epsilon$ along some subsequence still denoted by $n$. 
We can assume that $\epsilon$ is small enough that there is a point $x_\epsilon$ in region $1$ of $\gamma$ with abscissa $w_1(\gamma)+\epsilon$. 
Let $\epsilon'(\gamma)>0$ be the drop in height between the pole $P_1(\gamma)$ and the point $x_\epsilon$:
\begin{equation}
x_\epsilon\cdot {\bf b}_{\pi/2}
=
z_1(\gamma) - \epsilon'(\gamma)
.
\end{equation}
The lower semi-continuity of $z'_1$ and the fact that $\gamma,\gamma^n$ are simple implies that $z_1(\gamma^n)\geq z_1(\gamma)-\epsilon'(\gamma)/2$ for all large enough $n$. 
There are now two cases to consider. 
Define the set (see Figure~\ref{fig_lsc_wprime}):
\begin{equation}
A 
:=
\Big\{(u,v)\in\R^2 : u\in\big[x_\epsilon\cdot{\bf b}_0, w_1(\Gamma^n)\big], v\in\big[x_\epsilon\cdot{\bf b}_{\pi/2},z_1(\Gamma^n)\big]\Big\}
.
\end{equation}
\begin{figure}
\begin{center}
\includegraphics[width=11cm]{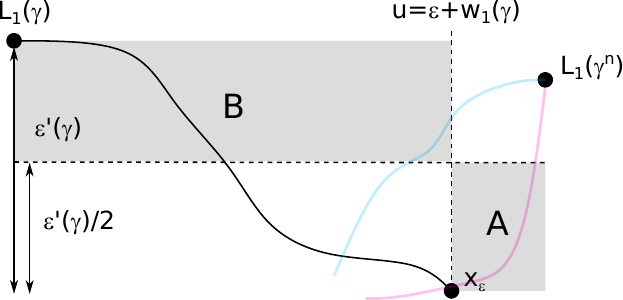} 
\caption{Representation of the sets $A,B$ (shaded areas) for a curve $\gamma$ (solid black line), with two possible $\gamma^n$ (cyan and magenta curves). The cyan curve corresponds to case 1 ($\liminf_n|\Gamma^n\cap A|>0$), 
the magenta curve to case 2
($\liminf_n|\Gamma^n\cap B|=0$, $|\Gamma\cap B|>0$). 
\label{fig_lsc_wprime}}
\end{center}
\end{figure}
Then either $\liminf_n|\Gamma^n\cap A|>0$, or $\liminf_n|\Gamma^n\cap A|=0$. 
Due to the the structure of curves in $\Omega$ (see Definition~\ref{def_state_space}), it always holds that $|\Gamma\cap A|=0$. 
In the first case (cyan line on Figure~\ref{fig_lsc_wprime}), 
this fact and the convergence $\lim_nd_{L^1}(\gamma,\gamma^n)=0$ give a contradiction. 
In the second case (magenta line on Figure~\ref{fig_lsc_wprime}), 
the structure of elements in $\Omega$ implies that, 
up to taking a subsequence, 
the height of a point in $\Gamma^n$ with abscissa $x_\epsilon\cdot{\bf b}_0$ (if such a point exists) is smaller than $x_\epsilon \cdot{\bf b}_{\pi/2}+\epsilon'(\gamma)/2$ for large enough $n$. 
Thus $|\Gamma^n\cap B|=0$ for large enough $n$, 
with:
\begin{equation}
B 
:=
\Big\{(u,v)\in\R^2 : u\in\big[w_1(\gamma), x_\epsilon\cdot{\bf b}_0, \big], v\in\big[z_1(\gamma)-\epsilon'(\gamma)/2, z_1(\gamma)\big]\Big\}
.
\end{equation}
However, the fact that $\Gamma$ is simple with first region not reduced to a point implies $|\Gamma\cap B|>0$. 
This gives a contradiction in this case as well and concludes the proof. 
\end{proof}
Convergence in volume is weaker than convergence in Hausdorff distance in general. 
For curves in $\Omega$, however, we now show that Hausdorff convergence is implied by convergence in volume, 
convergence of the $z_k$ and, depending on the curves, the $w_k$ as well ($1\leq k\leq 4$). 
We stress that convergence of the $w_k$ is not always implied by Hausdorff convergence, 
hence the complicated formula~\eqref{eq_def_d_tilde_H} below. 
We refer to Figure~\ref{fig_z_k_w_k} and to Remark~\ref{rmk_distance_tilde_hausdorff} for heuristics.
\begin{lemm}\label{lemm_lien_hausdorff_et_volume_et_poles}
Let $\iota : \R\rightarrow [0,1]$ be a strictly increasing continuous function such that $\iota(0) =0$. 
Consider the distance $\tilde d_{\mathcal H}$ on $\Omega$, defined for $\gamma,\tilde\gamma\in\Omega$ and associated droplets $\Gamma,\tilde\Gamma$ by:
\begin{align}
\tilde d_{\mathcal H}(\gamma,\tilde\gamma) &= d_{L^1}(\Gamma,\tilde\Gamma) + \sum_{k=1}^4\big|z_k(\Gamma)-z_k(\tilde\Gamma)\big| \nonumber\\
&\quad + \sum_{k=1}^4 \iota \Big(\max\big\{|z_k(\Gamma)-z_k(\Gamma')|,|z_k(\tilde\Gamma) - z_k(\tilde\Gamma')|\big\}\Big)\big|w_k(\Gamma)-w_k(\tilde\Gamma)\big|.\label{eq_def_d_tilde_H}
\end{align}
Then $\tilde d_{\mathcal H}$ and $d_{\mathcal H}$ are topologically equivalent on $\Omega$. 
\end{lemm}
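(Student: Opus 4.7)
The plan is to show the two metrics induce the same convergent sequences, treating each direction separately and reducing both to the fine structure of "antennae" at the poles, i.e.\ points where $z_k(\Gamma) > z_k(\Gamma')$.

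\medskip

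\noindent\textbf{Direction $d_{\mathcal H}\Rightarrow\tilde d_{\mathcal H}$.} Suppose $d_{\mathcal H}(\gamma^n,\gamma)\to 0$. The $L^1$ term is controlled by the elementary fact that for sets with Lipschitz boundary of locally bounded length, $d_{L^1}(\Gamma,\Gamma^n)$ is dominated by $d_{\mathcal H}(\gamma,\gamma^n)$ times the perimeter (use $\Gamma^n\subset \Gamma^{(d_{\mathcal H})}$ and the reverse inclusion). The sum $\sum_k|z_k(\Gamma)-z_k(\Gamma^n)|$ goes to zero because each $z_k$ is $1$-Lipschitz by the first bullet of Lemma~\ref{lemm_continuite_poles}. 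For the weighted term, I would split by pole: if $\gamma$ has \emph{no antenna} at pole $k$, i.e.\ $z_k(\Gamma)=z_k(\Gamma')$, then Lipschitz continuity of $z_k$ and lower/upper semi-continuity of $\gamma\mapsto z_k(\Gamma')$ (first bullet of Lemma~\ref{lemm_continuite_poles}), combined with the a priori inequality $|z_k(\Gamma^n)-z_k((\Gamma^n)')|\geq 0$ and $z_k((\Gamma^n)')\leq z_k(\Gamma^n)$, force $|z_k(\Gamma^n)-z_k((\Gamma^n)')|\to 0$; continuity of $\iota$ at $0$ makes the weight vanish while $|w_k(\Gamma^n)-w_k(\Gamma)|$ stays bounded (bounded by the diameter of the droplet). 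If on the contrary $\gamma$ has an antenna at pole $k$, then by the structure of $\Omega$ the extremum of $\gamma$ in that direction is attained at the tip of a one-dimensional protrusion, hence $\gamma$ has point-like pole $k$, and the third bullet of Lemma~\ref{lemm_continuite_poles} yields $w_k(\gamma^n)\to w_k(\gamma)$; since $\iota\leq 1$ the weighted term still vanishes.

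\medskip

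\noindent\textbf{Direction $\tilde d_{\mathcal H}\Rightarrow d_{\mathcal H}$.} Suppose $\tilde d_{\mathcal H}(\gamma^n,\gamma)\to 0$. Convergence of the $z_k$'s gives a uniform bound on the lengths of the $\gamma^n$ (using $|\gamma|=2[z_1-z_3+z_2-z_4]$ up to antenna contributions, themselves controlled by $|z_k-z'_k|$ that stay bounded). Proposition~\ref{prop_e_closed} then yields relative compactness in Hausdorff topology on $\e\cap\{|\gamma|\leq\kappa\}$; it is enough to prove that every Hausdorff subsequential limit $\gamma^*$ equals $\gamma$. From the $L^1$ term: $d_{L^1}(\gamma^*,\gamma)=0$, so the closures of interiors agree, $(\Gamma^*)'=\Gamma'$. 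From the $z_k$ terms and continuity of $z_k$ in $d_{\mathcal H}$: $z_k(\Gamma^*)=z_k(\Gamma)$ for every $k$. It then remains to check that the poles coincide. If $z_k(\Gamma)=z_k(\Gamma')$ (no antenna), then $z_k((\Gamma^*)')=z_k(\Gamma')=z_k(\Gamma^*)$, so neither curve has an antenna at $k$ and $P_k(\gamma)$, $P_k(\gamma^*)$ are both equal to the set of points of $\partial\Gamma'$ with ordinate (or abscissa) $z_k(\Gamma')$, hence coincide; in particular $w_k(\gamma^*)=w_k(\gamma)$. If $z_k(\Gamma)>z_k(\Gamma')$ (antenna), the weight in $\tilde d_{\mathcal H}$ is $\geq\iota(|z_k(\Gamma)-z_k(\Gamma')|)>0$ and $\iota$ is fixed, so the convergence $\tilde d_{\mathcal H}\to 0$ forces $w_k(\gamma^n)\to w_k(\gamma)$; on the other hand, $\gamma^*$ also has antenna at $k$ (since $z_k(\Gamma^*)=z_k(\Gamma)>z_k(\Gamma')=z_k((\Gamma^*)')$) and hence point-like pole at the tip, and the third bullet of Lemma~\ref{lemm_continuite_poles} applied to $\gamma^n\to\gamma^*$ in Hausdorff gives $w_k(\gamma^n)\to w_k(\gamma^*)$. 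Matching the two limits gives $w_k(\gamma^*)=w_k(\gamma)$, and so $\gamma^*$ and $\gamma$ share the same interior, the same extremal heights and the same antenna positions, i.e.\ $\gamma^*=\gamma$.

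\medskip

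\noindent\textbf{Main obstacle.} The delicate point is the case analysis at each pole, which hinges on the precise semi-continuity structure listed in Lemma~\ref{lemm_continuite_poles}: only $z_k$ is continuous, $z'_k$ is one-sided semi-continuous, and $w_k$ is continuous only at point-like poles. The third summand in the definition of $\tilde d_{\mathcal H}$ is tailored exactly to this dichotomy, switching on only when at least one of $\gamma,\tilde\gamma$ has a (necessarily $1$D, hence point-like at the tip) antenna at pole $k$, in which case $w_k$ is genuinely continuous and its control is what is missing from volume plus extremal coordinates to distinguish two macroscopic curves. Verifying that the weight is both large enough in the antenna case and small enough in the no-antenna case is the core of the proof.
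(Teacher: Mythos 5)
Your proof is correct and follows essentially the same route as the paper's: both directions are handled by the same case analysis at each pole (antenna versus no antenna, i.e. $z_k(\Gamma)>z_k(\Gamma')$ or not), using the Lipschitz continuity of $z_k$, the one-sided semi-continuity of $\gamma\mapsto z_k(\Gamma')$, and the continuity of $w_k$ at point-like poles from Lemma~\ref{lemm_continuite_poles}, together with compactness of $\e\cap\{|\gamma|\leq\kappa\}$ for the converse direction. The only cosmetic difference is that you argue pole by pole where the paper first treats the case of no antenna at any pole; the substance is identical.
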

\begin{figure}[H]
\begin{center}
\includegraphics[width=10cm]{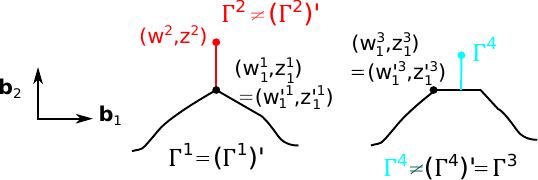} 
\caption{Neighbourhood of the north pole of four droplets $\Gamma^i$, $1\leq i \leq 4$ assumed to be identical except in the pictured portion. 
On the figure, $w'^i_1,z'^i_1$ is short for $w_1((\Gamma^i)'),z_1((\Gamma^i)')$ respectively and $w^i_1,z^i_1$ is short for $w_1(\Gamma^i),z_1(\Gamma^i)$. Recall that e.g. $(\Gamma^2)'$ is the largest droplet with simple boundary contained in $\Gamma^2$ (here $(\Gamma^2)'=\Gamma^1$, i.e. $(\Gamma^2)'$ is $\Gamma^2$ without the vertical red line), see also Figure~\ref{fig_gamma_prime_z_k_w_k}.\\
Convergence in volume to the $\Gamma^i$ ($1\leq i \leq 4$) ensures that the limit differs from $(\Gamma^i)'$ only by a vertical segment above the north pole of $(\Gamma^i)'$. 
To prove convergence in Hausdorff distance, in addition to the volume, one must therefore control convergence of the height $z_1$ of this segment and, depending on the limit curve, convergence of its lateral position $w_1$. \\
For $i\in\{1,3\}$, $\Gamma^i = (\Gamma^i)'$. 
Knowing convergence of $z_1$ is then enough, 
because it shows that this vertical segment is reduced to a point. \\
For $\Gamma^2$ (equal to $\Gamma_1$ except for the red part), $\Gamma^2\neq (\Gamma^2)'$ (or equivalently $z_1(\Gamma^2)> z_1((\Gamma^2)')$), 
so we in principle need to know convergence of $w_1$. 
However, since $(\Gamma^2)'$ has point-like pole, convergence of $w_1$ is already controlled by the volume convergence. 
Thus knowing convergence in volume and of $z_1$ is still enough.\\
In contrast, $\Gamma^4$ has point-like north pole while $(\Gamma^4)'=\Gamma^3\neq \Gamma^4$ satisfies $|P_1((\Gamma^4)')|>0$.    
Convergence in Hausdorff distance then requires convergence of the lateral position $w_1$ of the pole 
(in general only lower semi-continuous by Lemma~\ref{lemm_w_k}) 
in addition to convergence of $z_1$ and in volume. \label{fig_z_k_w_k}}
\end{center}
\end{figure}
\begin{rmk}\label{rmk_distance_tilde_hausdorff}
The statement of Lemma~\ref{lemm_lien_hausdorff_et_volume_et_poles} is important. Though already discussed at length in Figure~\ref{fig_z_k_w_k} from another point of view, let us therefore take the time to explain the definition~\eqref{eq_def_d_tilde_H} of $\tilde d_{\mathcal H}$. 

Note first that all microscopic curves, i.e. elements of $\Omega^N_{\text{mic}}$, are simple. 
For these curves, $\Gamma=\Gamma'$ and the second line of~\eqref{eq_def_d_tilde_H} vanishes. 
This second line is only useful to control singularities at the poles of elements of $\Omega$.

The volume distance does not control possible singularities at the poles of elements of $\Omega$, whereas the Hausdorff distance does. 
The question is then what kind of information must be added to the volume distance so as to guarantee convergence in Hausdorff distance.

If a curve $\gamma$ with associated droplet $\Gamma$ is simple in the neighbourhood of its pole $k$ ($1\leq k\leq 4$), 
then an additional control of the $z_k$ is sufficient to guarantee convergence in Hausdorff distance. 
This corresponds to $\Gamma^1$ and $\Gamma^3$ in Figure~\ref{fig_z_k_w_k}.  
Convergence of the $w_k$ is \emph{not} implied by convergence in Hausdorff distance in that case, as illustrated in the case of $\Gamma^3,\Gamma^4$. 
Note that in this case the function $\iota$ in the second line of~\eqref{eq_def_d_tilde_H} vanishes (this claim is established in the proof of Lemma~\ref{lemm_lien_hausdorff_et_volume_et_poles}), 
thus convergence for $\tilde d_{\mathcal H}$ only amounts to convergence in volume and of the $z_k$ but does not require convergence of the $w_k$ either.

Similarly, if $\gamma$ is not simple around pole $k$ but $\Gamma'$ has pole $k$ reduced to a point, 
then controlling the volume and $z_k$ is enough (as illustrated by $\Gamma^2$ in Figure~\ref{fig_z_k_w_k}). 
Again in this case the second line of~\eqref{eq_def_d_tilde_H} vanishes, 
because convergence in volume implies convergence of the $w_k$ to the only possible value $w_k(\Gamma)$. 

The last situation to consider is when a curve $\gamma$ has pole $k$ reduced to a point, but pole $k$ of $\Gamma'$ is not reduced to a point (this is the case for $\Gamma^4$ in Figure~\ref{fig_z_k_w_k}). 
In this case convergence in Hausdorff distance \emph{does} require convergence of $w_k$, 
which is why $w_k$ appears in the second line of~\eqref{eq_def_d_tilde_H} (and $\iota$ does not vanish in that situation).\demo
\end{rmk}
\begin{proof}[Proof of Lemma~\ref{lemm_lien_hausdorff_et_volume_et_poles}]
Let $(\gamma^n)_n\subset \Omega$ converge to $\gamma\in\Omega$ for $d_{\mathcal H}$, 
with as usual $(\Gamma^n)_n,\Gamma$ the associated droplets. 
Then $\lim_nd_{L^1}(\Gamma^n,\Gamma)=0$, 
and $(z_k(\gamma^n))_n$ converges to $z_k(\gamma) = z_k(\Gamma)$ for $1\leq k\leq 4$, 
as these objects are continuous in Hausdorff distance.

Suppose first $z_k(\Gamma)=z_k(\Gamma')$ for each $k$ ($1\leq k\leq 4$). 
Recall that $z_k(\Gamma^n)\geq z_k((\Gamma^n)')$ if $k\in\{1,2\}$, $z_k(\Gamma^n)\leq z_k((\Gamma^n)')$ if $k\in\{3,4\}$ by definition. 
Moreover, $\gamma\in \Omega\mapsto z_k(\Gamma')$ is lower semi-continuous for $k\in\{1,2\}$ and upper semi-continuous for $k\in\{3,4\}$ by Lemma~\ref{lemm_continuite_z_k'}. 
The convergence $\lim_nz_k(\Gamma^n) = z_k(\Gamma)$ then implies $\lim_n |z_k((\Gamma^n)')-z_k(\Gamma^n)| = 0$, 
thus $\iota$ vanishes in the second line of $\tilde d_{\mathcal H}(\gamma^n,\gamma)$, 
and $\lim_n\tilde d_{\mathcal H}(\gamma^n,\gamma)=0$.

If instead $z_k(\Gamma)>z_k(\Gamma')$ for some $1\leq k \leq 4$, 
then $\gamma$ has pole $k$ reduced to a point and $\iota$ is bounded from below at pole $k$ (recall that $\iota$ takes values in $[0,1]$). 
However, a point-like pole $k$ means that $w_k$ is a Hausdorff-continuous functional at $\gamma$ by~\eqref{eq_convergence_pole_at_fixed_time}, 
thus $\lim_nw_k(\gamma^n) = w_k(\gamma)$. 
It follows that $\lim_n\tilde d_{\mathcal H}(\gamma^n,\gamma)=0$.\\

Conversely, assume $(\gamma^n)_n\subset \Omega$ converges to $\gamma\in\Omega$ for $\tilde d_{\mathcal H}$. 
Convergence of the $(z_k(\gamma^n))_n$ ($1\leq k \leq 4$) implies convergence of the length $(|\gamma^n|)_n$, 
which is in particular bounded by some $C>0$. 
The set:
\begin{equation}
\Omega_C := \{\bar\gamma\in\Omega:|\bar\gamma|\leq C\}\label{eq_def_e_A}
\end{equation}
is compact for the Hausdorff distance by Proposition~\ref{prop_e_closed}. 
Let $\gamma^\infty$ denote a limit point of $(\gamma^n)_n$ for $d_{\mathcal H}$ and write $\Gamma,\Gamma^\infty$ for the droplets associated with $\gamma,\gamma^\infty$ respectively.  
By continuity of the $z_k$ and volume for both $d_{\mathcal H}$ and $\tilde d_{\mathcal H}$, 
one has:
\begin{equation}
d_{L^1}(\gamma ,\gamma^\infty) = 0\quad\text{(thus }\Gamma'=(\Gamma^\infty)'\text{)},\qquad 
z_k(\gamma)= z_k(\gamma^\infty),\quad 1\leq k \leq 4.\label{eq_lien_Gamma_Gamma_infty}
\end{equation}
If $z_k(\Gamma') = z_k(\Gamma)$ for each $1\leq k \leq 4$, 
then by~\eqref{eq_lien_Gamma_Gamma_infty} the same is true for each pole of $\Gamma^\infty$ 
and we can conclude $\Gamma = \Gamma^\infty$. 
The only possibility for $\Gamma,\Gamma^\infty$ to differ is thus when the situation represented by $\Gamma^3$ and $\Gamma^4$ in Figure~\ref{fig_z_k_w_k} occurs, 
i.e. when $z_k(\Gamma)>z_k(\Gamma')$ for some $1\leq k\leq 4$ and pole $k$ of $\Gamma'$ is not reduced to a point, 
so that $w_k(\gamma)$ may differ from $w_k(\gamma^\infty)$.

However, as soon as $z_k(\Gamma)=z_k(\Gamma^\infty)>z_k(\Gamma')=z_k((\Gamma^\infty)')$ for some $1\leq k\leq 4$, 
then $\gamma,\gamma^\infty$ have point-like pole $k$, 
thus $w_k$ is in particular continuous in Hausdorff distance at $\gamma^\infty$ by~\eqref{eq_convergence_pole_at_fixed_time}: 
$\lim_nw_k(\gamma^n)=w_k(\gamma^\infty)$ up to a subsequence. 
Since $z_k(\Gamma)>z_k(\Gamma')$ implies that $\iota$ is bounded from below at pole $k$, 
we also know that $\lim_nw_k(\gamma^n)=w_k(\gamma)$ by Definition~\eqref{eq_def_d_tilde_H} of $\tilde d_{\mathcal H}$. 
Thus $w_k(\gamma) = w_k(\gamma^\infty)$, 
and $\gamma=\gamma^\infty$: 
$\gamma$ is the limit of $(\gamma^n)$ in Hausdorff distance.
\end{proof}
We conclude the section by proving Lemma~\ref{lemm_local_jump_rates}, 
stated again as Lemma~\ref{lemm_stability_initial_condition_appendix}. 
Let $\gamma\in\Omega$, 
and define $q(\gamma)$ as the distance of $0$ to $\gamma$ in the following sense:
\begin{equation}
q(\gamma)
:=
\sup\Big\{ q>0 : B_1(0,q)\subset\Gamma\Big\} 
,
\label{eq_def_q_appB}
\end{equation}
with the subscript $1$ denoting the usual $\|\cdot\|_1$-norm on $\R^2$. 
Let also $r(\gamma)$ denote the distance between consecutive poles of the droplet $\Gamma'$ with $\partial\Gamma'\in\Omega'$ (see Lemma~\ref{lemm_continuite_z_k'}) associated with $\gamma$ (recall that $k+1:=1$ if $k=4$):
\begin{equation}
r(\gamma) 
:=
\inf\Big\{ r>0 : \forall 1\leq k \leq 4,
\quad \big|[R_k(\Gamma')-L_{k+1}(\Gamma')]\cdot{\bf b}_\theta\big|\geq r\ \text{ for } \theta\in\{0,\pi/2\}\Big\}
.
\label{eq_condition_sur_CI}
\end{equation}
Finally, let $r'(\gamma)$ denote the distance between opposite regions of $\partial\Gamma'$ in the following sense. 
Let $[L_k(\Gamma'),R_{k+1}(\Gamma')]_{\partial\Gamma'}$ denote region $k$ of $\partial\Gamma'$. 
Then, writing $d_1(A,B)=\inf_{x\in A,y\in B}\|x-y\|_1$ for the distance between bounded sets $A,B\subset\R^2$ in $1$-norm:
\begin{align}
r'(\gamma) 
= 
\sup\Big\{\epsilon>0: 
\ &d_1\Big([L_1(\Gamma'),R_{2}(\Gamma')]_{\partial\Gamma'}, [L_3(\Gamma'),R_{4}(\Gamma')]_{\partial\Gamma'}\Big)\geq \epsilon
,\quad 
\nonumber\\
&\ d_1\Big([L_2(\Gamma'),R_{3}(\Gamma')]_{\partial\Gamma'}, [L_4(\Gamma'),R_{1}(\Gamma')]_{\partial\Gamma'}\Big)\geq \epsilon
\Big\}
\label{eq_def_r'_appB}
.
\end{align}
\begin{lemm}\label{lemm_stability_initial_condition_appendix}
Let $\gamma\in\Omega$ be such that $q(\gamma)>0$, $r(\gamma)>0$ and $r'(\gamma)>0$ (i.e. $\gamma$ satisfies Property~\ref{prop_IC}). 
There is then $\epsilon>0$ such that:
\begin{itemize}
	\item all droplets $\tilde \Gamma$ associated with a curve $\tilde\gamma\in\Omega$ and such that $d_{L^1}(\gamma,\tilde \gamma)\leq \epsilon^2$ satisfy $q(\tilde\gamma)>q(\gamma)/2$, 
$r(\tilde\gamma)>r(\gamma)/2$ and $r'(\tilde\gamma)>r'(\gamma)/2$. 
	\item The jump rates $c(\tilde \gamma^N,\cdot)$ of any $\tilde \gamma^N\in\Omega^N_{\text{mic}}$ with $d_{L^1}(\gamma,\tilde\gamma^N)\leq \epsilon^2$ are local:
	there is $N(\gamma)\in\N_{\geq 1}$ such that, for any $N\geq N(\gamma)$ and any $x\in V(\tilde\gamma^N)$, 
	the value of $c(\tilde\gamma^N,(\tilde\gamma^N)^x)$ can be determined through the knowledge of points at $1$-distance at most $3$ from $x$. 
\end{itemize} 
\end{lemm}
\begin{proof}
The second item is a consequence of the first one 
since any $\gamma^N\in\Omega^N_{\text{mic}}$ satisfying Property~\ref{prop_IC} has local jump rates as discussed in Section~\ref{section_IC}. 
We therefore focus on the first item.

Notice that $q(\gamma),r(\gamma),r'(\gamma)$ are defined only in terms of volumes or the $\Gamma'$. 
This implies that $q,r,r'$ take the same value for $\gamma$ and for $\partial(\Gamma')$ which is a simple curve. 
It is therefore enough to work with simple curves exclusively, 
i.e. to assume that $\gamma$ is simple and prove the existence of $\epsilon>0$ such that any simple curve $\tilde\gamma\in \Omega$ with $d_{L^1}(\gamma,\tilde\gamma)\leq \epsilon^2$ satisfies $q(\tilde\gamma),r(\tilde\gamma),r'(\tilde\gamma)>0$. 
 
Consider first $r(\cdot)$. 
The continuity results of Lemma~\ref{lemm_continuite_z_k'} 
give that $\tilde\gamma\mapsto [L_2(\tilde\Gamma')-R_1(\tilde \Gamma')]\cdot{\bf b}_0$ and $\tilde \gamma\mapsto[R_1(\tilde\Gamma')-L_2(\tilde\Gamma')]\cdot{\bf b}_{\pi/2}$ are lower semi-continuous on the set of curves with $|\tilde\Gamma|>0$, 
thus in particular for simple $\tilde \gamma$. 
A similar statement is valid in other regions.  
This yields the claim for $r(\cdot)$.

Consider now $q(\cdot)$. 
Recall the definition $\Gamma^{(-\delta)} := \Gamma\setminus \bigcup_{x\notin\Gamma}B_1(x,\delta)$, 
with $B_1(x,\delta)$ the open ball in $1$-norm ($\delta\geq 0$).  
Let us show that, for any $\delta>0$, 
there is $\epsilon(\delta)>0$ such that $d_{L^1}(\tilde\gamma,\gamma)\leq \epsilon(\delta)$ for any simple curve $\tilde\gamma\in\Omega$ implies $\tilde\Gamma\supset\Gamma^{(-\delta)}$. 
This will imply $q(\tilde\gamma)\geq q(\gamma)/2$ for each $\tilde\gamma\in\Omega$ in a sufficiently small ball around $\gamma$ in volume distance. 
Let $\delta>0$ be such that $\Gamma^{(-\delta)}\neq\emptyset$; otherwise there is nothing to prove. 
For future reference, note that the ball $B_1(x,\delta)$ around a point $x$ can be split into four pieces corresponding to the closed sets with boundaries given by the triangles $(x,x+\delta{\bf b}_{(k-1)\pi/2},x+\delta{\bf b}_{k\pi/2})$ ($1\leq k\leq 4$).  
In the following we talk of an "open triangle in $B_1(x,\delta)$" to refer to the interior of one such piece.

Assume by contradiction that there is a sequence of simple curves $\tilde\gamma^n\in \Omega$ converging to $\gamma$ in volume distance and points $x_n\in \Gamma^{(-\delta)}\setminus\tilde \Gamma^n$ ($n\in\N$). 
As $\Gamma^{(-\delta)}$ is compact, we can assume up to taking a subsequence that $(x_n)$ converges to $x\in\Gamma^{(-\delta)}$. 
This implies:
\begin{equation}
|B_1(x,\delta)\cap\Gamma|
=
|B_1(x,\delta)|
.
\label{eq_volume_ball_proof_local_jump_rates}
\end{equation}
On the other hand, the structure of curves in $\Omega$ 
(recall Definition~\ref{def_state_space}) and the fact that $x_n\notin\tilde\Gamma^n$ imply that at least one open triangle in the ball $B_1(x_n,\delta)$ does not intersect $\tilde\Gamma^n$, thus:
\begin{equation}
|B_1(x_n,\delta)\cap\tilde \Gamma^n|
\leq 
\frac{3}{4}|B_1(x_n,\delta)|
=
\frac{3}{4}|B_1(x,\delta)|
.
\label{eq_volume_ball_proof_local_jump_rates2}
\end{equation}
Equations~\eqref{eq_volume_ball_proof_local_jump_rates}--\eqref{eq_volume_ball_proof_local_jump_rates2} and 
the fact that $\lim_n x_n=x$ contradict $\lim_nd_{L^1}(\gamma,\tilde\gamma^n)=0$. 

Consider finally $r'(\cdot)$. 
Again we proceed by contradiction. 
Let $\delta\in(0,r'(\gamma))$ and assume that there are simple curves $\tilde\gamma^n\in\Omega$ and points $x_n,y_n\in\partial\tilde\Gamma^n$ ($n\in\N$), 
say in regions $1$ and $3$ respectively for definiteness, 
such that $\lim_nd_{L^1}(\gamma,\tilde\gamma^n)=0$ and 
$\|x_n-y_n\|_1\leq \delta$. 
Let us prove that $(x_n)$, $(y_n)$ are bounded using the bound on the volume of $\tilde\Gamma^n$. 
By definition of the poles one always has:
\begin{equation}
x_n\cdot{\bf b}_0 \in[ w_1(\tilde \gamma^n), z_2(\tilde\gamma^n)]
,\quad
x_n\cdot{\bf b}_{\pi/2} \in[ R_2(\tilde \gamma^n)\cdot{\bf b}_{\pi/2}, z_1(\tilde\gamma^n)]
.
\end{equation}
Similar bounds hold for $y_n$. 
The semi-continuity results of Lemma~\ref{lemm_continuite_z_k'} for $w_1$, $R_2\cdot{\bf b}_{\pi/2}$ then imply that, for each $\epsilon>0$, 
there is $n(\epsilon)\geq 0$ such that, if $n\geq n(\epsilon)$:
\begin{align}
x_n\cdot {\bf b}_0
&\geq 
w_1(\gamma)-\epsilon
,\qquad
x_n\cdot {\bf b}_{\pi/2}
\geq 
R_2(\gamma)\cdot {\bf b}_{\pi/2}-\epsilon,
\nonumber\\
y_n\cdot {\bf b}_0
&\leq 
w_3(\gamma)+\epsilon
,\qquad
y_n\cdot {\bf b}_{\pi/2}
\leq 
R_4(\gamma)\cdot {\bf b}_{\pi/2}+\epsilon
.
\label{eq_lsc_region}
\end{align}
The fact that $\|x_n-y_n\|_1\leq \delta$ thus implies that both $(x_n),(y_n)$ are bounded. 
Up to taking subsequences, let us assume they converge to $x$, $y$ respectively with $\|x-y\|_1\leq \delta$. 
Since $x_n$, $y_n$ are in $\tilde\gamma^n$ and due to the structure of curves in $\Omega$, at most three open triangles out of four of any ball around $x_n$, $y_n$ can be included in $\tilde\Gamma^n$:
\begin{equation}
|B_1(x_n,\zeta)\cap\tilde\Gamma^n|
\leq 
\frac{3}{4}|B_1(x_n,\zeta)|
,\qquad
|B_1(y_n,\zeta)\cap\tilde\Gamma^n|
\leq 
\frac{3}{4}|B_1(y_n,\zeta)|
,\qquad
\zeta>0
.
\end{equation}
Taking limits, this implies that $x$, $y$ are not in the interior of $\Gamma$. 
We can nonetheless locate $x,y$ relative to $\Gamma$: by the proof for $q(\cdot)$, $\Gamma^{(-\epsilon)}\subset\tilde\Gamma^n$ for each $\epsilon>0$ and each $n$ larger than some $n'(\epsilon)\in\N$. 
This and~\eqref{eq_lsc_region} tell us that $x$ is above and to the right of the first region of $\gamma$:
\begin{equation}
\forall z\in[L_1(\gamma),R_2(\gamma)]_\gamma
,\qquad
[x-z]\cdot{\bf b}_{\pi/4}
\geq 
0
.
\end{equation}
Similarly $y$ is below and to the left of the third region of $\gamma$, i.e. $[y-z]\cdot{\bf b}_{5\pi/4}\geq 0$ for any $z$ in the third region of $\gamma$.  
This yields a contradiction, concluding the proof:
\begin{equation}
r'(\gamma)
>
\delta
\geq 
\|x-y\|_1
\geq 
d_1\Big([L_1(\gamma),R_2(\gamma)]_\gamma,[L_3(\gamma),R_4(\gamma)]_\gamma\Big)
\geq
r'(\gamma)
.
\end{equation}
\end{proof}
\subsection{The set $E([0,T],\e)$}\label{sec_the_set_E(0,T0)}
For $T>0$, the set $E([0,T],\e)$ was defined in~\eqref{eq_def_E_0_T_Omega} as follows:
\begin{equation}
E([0,T],\e) := D_{L^1}([0,T],\e)\cap\Big\{\gamma_\cdot : \int_{0}^T|\gamma_t|\, dt<\infty\Big\}.
\end{equation}
This set is equipped with the distance $d_E$, defined  by: 
\begin{equation}
d_E := d^S_{L_1} + \int_0^{T}d_{\mathcal H}\, dt ,\label{eq_def_d_E_appen}
\end{equation}
with $d^S_{L^1}$ the Skorokhod distance associated with convergence in the topology induced by the volume distance $d_{L^1}$ (see~\eqref{eq_convention_volume_appendice}). For properties of the Skorokhod topology, we refer the reader to Chapter 3 of \cite{Ethier2009}.

The main purpose of this section is to characterise relatively compact subsets of $E([0,T],\e)$, 
in Appendix~\ref{app_some_compact_sets}, 
to pave the way for the proof of tightness in Section~\ref{appen_tightness}. 
We also prove Proposition~\ref{prop_preuve_continuite_integrale_avec_v_moins_un} on the continuity of the functionals appearing in the proof of large deviations, in Section~\ref{sec_continuity_J_H_eps_ell_H_eps}.

To characterise compactness, we first exhibit a distance topologically equivalent to $d_E$ on $E([0,T],\e)$, but with explicit dependence on the distance between poles. 
This result is the analogue on trajectories of Lemma~\ref{lemm_lien_hausdorff_et_volume_et_poles}.
\begin{lemm}\label{lemm_lien_hausdorff_et_volume_et_poles_integral}
Let $\iota:\R\rightarrow[0,1]$ be a strictly increasing continuous function with $\iota(0)=0$. Let $\tilde d_E$ be the distance on $E([0,T],\e)$ defined for two trajectories $\gamma_\cdot,\tilde\gamma_\cdot$ with associated droplets $\Gamma_\cdot,\tilde\Gamma_\cdot$ by:
\begin{align}
\tilde d_E(\gamma_\cdot,\tilde\gamma_\cdot) 
&= 
d^S_{L_1}(\gamma_\cdot,\tilde\gamma_\cdot) + \sum_{k=1}^4\int_0^{T}\big|z_k(\Gamma_t)-z_k(\tilde\Gamma_t)\big|\, dt\nonumber\\
&\quad+ 
\sum_{k=1}^4 \int_0^T \iota\Big(\max\big\{|z_k(\Gamma_t)-z_k(\Gamma'_t)|,|z_k(\tilde \Gamma_t)-z_k(\tilde \Gamma'_t)|\big\}\Big)\big|w_k(\Gamma_t)-w_k(\tilde\Gamma_t)\big|\, dt
.
\label{eq_def_tilde_d_E}
\end{align}
Then $\tilde d_E$ and $d_E$ are topologically equivalent on $E([0,T],\e)$.
\end{lemm}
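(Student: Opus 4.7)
The plan is to lift the pointwise equivalence of $d_{\mathcal H}$ and $\tilde d_{\mathcal H}$ on $\e$ proven in Lemma~\ref{lemm_lien_hausdorff_et_volume_et_poles} to the level of time integrals. Since both $d_E$ and $\tilde d_E$ share the same Skorokhod component $d^S_{L^1}$, the task reduces to comparing $\int_0^T d_{\mathcal H}(\gamma^n_t,\gamma_t)\,dt$ with the time integrals of the $z_k$ and $\iota(\cdot)|w_k(\cdot)-w_k(\cdot)|$ appearing in $\tilde d_E$. The transfer from pointwise to integral convergence will be carried out by subsequence extraction and dominated convergence, with integrable majorants obtained from the length-integrability $\int_0^T|\gamma_t|dt<\infty$ that defines $E([0,T],\e)$.

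For $d_E(\gamma^n_\cdot,\gamma_\cdot)\to 0\Rightarrow\tilde d_E(\gamma^n_\cdot,\gamma_\cdot)\to 0$, the $z_k$ contributions vanish trivially since each $z_k$ is $1$-Lipschitz in $d_{\mathcal H}$ by Lemma~\ref{lemm_continuite_poles}, giving
\begin{equation*}
\sum_{k=1}^4\int_0^T|z_k(\Gamma^n_t)-z_k(\Gamma_t)|\,dt\leq 4\int_0^Td_{\mathcal H}(\gamma^n_t,\gamma_t)\,dt\longrightarrow 0.
\end{equation*}
The $w_k$-contribution is handled by contradiction: from the $L^1(dt)$-convergence of $d_{\mathcal H}(\gamma^n_\cdot,\gamma_\cdot)$ one extracts a sub-subsequence $(n_k)_k$ with $\sum_k\int d_{\mathcal H}(\gamma^{n_k}_t,\gamma_t)\,dt<\infty$, so that $h(t):=\sum_k d_{\mathcal H}(\gamma^{n_k}_t,\gamma_t)\in L^1([0,T])$ pointwise dominates the subsequence and forces $d_{\mathcal H}(\gamma^{n_k}_t,\gamma_t)\to 0$ a.e.\ in $t$. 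Lemma~\ref{lemm_lien_hausdorff_et_volume_et_poles} then yields $\iota(\cdot)|w_k(\Gamma^{n_k}_t)-w_k(\Gamma_t)|\to 0$ a.e., and the elementary bound $|w_k(\Gamma)|\leq|\gamma|/2$ together with the $1$-Lipschitz property of $z_k$ produces the integrable majorant $C(|\gamma_t|+h(t))$, so dominated convergence closes the argument.

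The reverse implication is symmetric: from $\tilde d_E(\gamma^n_\cdot,\gamma_\cdot)\to 0$ one extracts a sub-subsequence along which each $|z_k(\Gamma^n_t)-z_k(\Gamma_t)|$ and $\iota(\cdot)|w_k(\Gamma^n_t)-w_k(\Gamma_t)|$ converge to $0$ pointwise a.e., while $d_{L^1}(\gamma^n_t,\gamma_t)\to 0$ a.e.\ as inherited from $d^S_{L^1}$ convergence at the full-measure set of continuity points of $\gamma_\cdot$. This gives $\tilde d_{\mathcal H}(\gamma^n_t,\gamma_t)\to 0$ a.e., hence $d_{\mathcal H}(\gamma^n_t,\gamma_t)\to 0$ a.e.\ via Lemma~\ref{lemm_lien_hausdorff_et_volume_et_poles}, and dominated convergence concludes with the analogous majorant constructed from a summable sub-subsequence of the $z_k$ integrals. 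The main obstacle throughout is precisely the $d_{\mathcal H}$-discontinuity of the $w_k$, which is what the $\iota$-factor in the definition of $\tilde d_E$ is designed to tame: because no uniform inequality of the form $\tilde d_{\mathcal H}\leq\omega(d_{\mathcal H})$ holds on all of $\e$ (curves with long pole-protrusions have $|w_k|$ potentially as large as $|\gamma|/2$, which is unbounded on $\e$), the pointwise Lemma~\ref{lemm_lien_hausdorff_et_volume_et_poles} cannot be used directly and must be routed through subsequence extraction and integrable domination.
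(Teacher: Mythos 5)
Your proof is correct, and it takes a slightly different route from the paper's. The paper first isolates the general principle as a separate result, Lemma~\ref{lemm_continuite_implies_continuity_integral}: any Hausdorff-continuous functional $F$ on $\e$ with the linear growth bound $|F(\gamma)|\leq C|\gamma|+C'$ integrates to a functional continuous for $\int_0^T d_{\mathcal H}\,dt$. That lemma is proven not by subsequence extraction but by a deterministic splitting of the integral via the layer-cake identity $X{\bf 1}_{X>c}=c{\bf 1}_{X>c}+\int_c^\infty{\bf 1}\{X>\lambda\}d\lambda$, which isolates a tail controlled by the convergence of $\int_0^T\big||\gamma^n_t|-|\gamma_t|\big|dt$ and a bounded part handled by uniform continuity of $F$ on the compact set $\e_A$. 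Lemma~\ref{lemm_lien_hausdorff_et_volume_et_poles_integral} then follows by applying this to $F_t(\gamma)=d_{\mathcal H}(\gamma,\gamma_t)$ and $G_t(\gamma)=\tilde d_{\mathcal H}(\gamma,\gamma_t)$, using the pointwise equivalence of Lemma~\ref{lemm_lien_hausdorff_et_volume_et_poles}.

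You bypass Lemma~\ref{lemm_continuite_implies_continuity_integral} entirely and argue by the Riesz sub-subsequence principle plus dominated convergence: from $L^1(dt)$-convergence of $d_{\mathcal H}$ (resp.\ of the $z_k$-differences) you extract a sub-subsequence that is summable, yielding both a.e.\ pointwise convergence and an integrable majorant, then close with Lemma~\ref{lemm_lien_hausdorff_et_volume_et_poles}. Both approaches rest on the same two pillars --- the pointwise equivalence of $d_{\mathcal H}$ and $\tilde d_{\mathcal H}$, and the linear-in-length control of the $w_k,z_k$ with the integrability $\int_0^T|\gamma_t|dt<\infty$ --- but your argument is shorter and more elementary (subsequence extraction is more or less automatic), while the paper's separate continuity-transfer lemma is reused elsewhere (e.g.\ in Section~\ref{sec_continuity_J_H_eps_ell_H_eps}) and so earns its keep. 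Your bound $|w_k|\leq |\gamma|/2$ is justified by Lemma~\ref{lemm_continuite_poles} via $w_k\in[z_{k-1},z_{k+1}]$ together with $|z_j|\leq|\gamma|/2$, and your appeal to Skorokhod convergence implying pointwise $d_{L^1}$-convergence at the a.e.\ continuity points of the limit is standard. I see no gap.
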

The proof of Lemma~\ref{lemm_lien_hausdorff_et_volume_et_poles_integral} is obtained as a consequence of the following lemma.
\begin{lemm}\label{lemm_continuite_implies_continuity_integral}
Let $F:\e\rightarrow\R$ be a continuous functional in Hausdorff distance (or, equivalently, for $\tilde d_H$) and assume:
\begin{equation}
\exists C,C'>0,\qquad |F(\gamma)|\leq C|\gamma|+C', \qquad\gamma\in\e.\label{eq_assumption_F}
\end{equation}
 Then $\gamma_\cdot \mapsto \int_0^T F(\gamma_t)\, dt$ is a continuous functional on $E([0,T],\e)$ for both the distances $\int_0^Td_{\mathcal{H}}\, dt$ and $\int_0^T \tilde d_{\mathcal H}\, dt$. 
 The conclusion of the lemma remains valid if $F$ is replaced by a function $G:[0,T]\times \e\rightarrow\R$ such that $G(t,\cdot)$ is continuous, $|G(t,\gamma)|\leq C|\gamma|+C'_t$ with $C,C'_t>0$ and $\int_0^TC'_t\, dt<\infty$.
\end{lemm}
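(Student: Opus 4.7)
I work with the distance $\int_0^T \tilde d_{\mathcal H}\,dt$; the case of $\int_0^T d_{\mathcal H}\,dt$ is handled in the same way (in fact slightly simpler, since $d_{\mathcal H} \geq \tilde d_{\mathcal H}$ up to the trivial terms). By the standard subsequence-of-subsequence criterion for convergence, it suffices to show that from any sequence $(\gamma^n_\cdot)_{n}\subset E([0,T],\e)$ with $\int_0^T \tilde d_{\mathcal H}(\gamma^n_t,\gamma_t)\,dt\to 0$, one can extract a subsequence along which $\int_0^T F(\gamma^n_t)\,dt \to \int_0^T F(\gamma_t)\,dt$.

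The first step is to upgrade the $L^1$-convergence in $t$ to pointwise a.e.\ convergence. Up to extracting a subsequence (not relabelled), $\tilde d_{\mathcal H}(\gamma^n_t,\gamma_t)\to 0$ for a.e.\ $t\in[0,T]$. By Lemma~\ref{lemm_lien_hausdorff_et_volume_et_poles}, $\tilde d_{\mathcal H}$ and $d_{\mathcal H}$ induce the same topology on $\e$, hence $d_{\mathcal H}(\gamma^n_t,\gamma_t)\to 0$ for a.e.\ $t$. Continuity of $F$ in Hausdorff distance then yields $F(\gamma^n_t)\to F(\gamma_t)$ for a.e.\ $t\in[0,T]$.

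The second step is to establish uniform integrability of $(F(\gamma^n_\cdot))_n$ in $L^1([0,T])$. The length is explicitly given by
\begin{align*}
|\gamma| = 2(z_1(\gamma)-z_3(\gamma)) + 2(z_2(\gamma)-z_4(\gamma)),
\end{align*}
so by definition~\eqref{eq_def_tilde_d_E} of $\tilde d_E$,
\begin{align*}
\int_0^T \big||\gamma^n_t|-|\gamma_t|\big|\,dt \leq 2\sum_{k=1}^4 \int_0^T |z_k(\Gamma^n_t)-z_k(\Gamma_t)|\,dt \leq 2\,\tilde d_E(\gamma^n_\cdot,\gamma_\cdot) \;\xrightarrow[n\to\infty]{}\; 0.
\end{align*}
In particular $|\gamma^n_\cdot|\to |\gamma_\cdot|$ in $L^1([0,T])$, which combined with the pointwise a.e.\ convergence $|\gamma^n_t|\to|\gamma_t|$ implies uniform integrability of the family $(|\gamma^n_\cdot|)_n$. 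The growth condition~\eqref{eq_assumption_F} gives $|F(\gamma^n_t)|\leq C|\gamma^n_t|+C'$, so $(F(\gamma^n_\cdot))_n$ is uniformly integrable as well.

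The third and last step is to apply Vitali's convergence theorem: pointwise a.e.\ convergence plus uniform integrability yield $\int_0^T F(\gamma^n_t)\,dt \to \int_0^T F(\gamma_t)\,dt$, which proves the claim for the subsequence and thereby for the original sequence. The only non-routine point is the domination of the length by $\tilde d_E$; everything else is standard measure theory. For the extension to $G:[0,T]\times\e\to\R$, the bound $|G(t,\gamma)|\leq C|\gamma|+C'_t$ with $C'_\cdot\in L^1([0,T])$ provides the same uniform integrability (the $C'_t$ piece is dominated by a fixed integrable function and is therefore automatically uniformly integrable), and pointwise continuity of $G(t,\cdot)$ replaces continuity of $F$ in the first step, so the argument applies verbatim.
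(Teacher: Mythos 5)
Your proof is correct and takes a genuinely different route from the paper. The paper proves the claim by a truncation argument: it rewrites $\int_0^T|F(\gamma^n_t)-F(\gamma_t)|\,dt$ using the layer-cake identity $X = \int_0^\infty \mathbf{1}\{X>\lambda\}\,d\lambda$, splits the integral at a level $A$, controls the tail $\lambda>A$ by the growth bound~\eqref{eq_assumption_F} together with $L^1$-convergence of the length, and handles the truncated part $\lambda\le A$ via uniform continuity of $F$ on the compact set $\e_A=\e\cap\{|\gamma|\le A\}$ plus a Chebyshev-type bound from $\int_0^T d_{\mathcal H}(\gamma^n_t,\gamma_t)\,dt\to 0$. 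You instead pass to a subsequence converging a.e.\ in $t$, transfer this to $d_{\mathcal H}$-convergence by topological equivalence, invoke continuity of $F$ to get a.e.\ convergence of $F(\gamma^n_\cdot)$, deduce uniform integrability from $L^1$-convergence of the length plus the linear growth bound, and conclude by Vitali's theorem and the subsequence-of-subsequence principle. The two arguments use the same two raw inputs (compactness/continuity of $F$ on bounded-length curves, and $L^1$-convergence of the length controlled by the $z_k$-terms of the distance), but package them differently: the paper is more explicit and self-contained, while yours is more modular, outsourcing the integration step to Vitali. One small imprecision: you cite $\tilde d_E$ from~\eqref{eq_def_tilde_d_E} when bounding $\int_0^T\big||\gamma^n_t|-|\gamma_t|\big|\,dt$, but the relevant distance in the statement is $\int_0^T\tilde d_{\mathcal H}\,dt$, whose $z_k$-integral term is what you actually use (the bound goes through unchanged, since $\sum_k\int_0^T|z_k(\Gamma^n_t)-z_k(\Gamma_t)|\,dt\le \int_0^T\tilde d_{\mathcal H}(\gamma^n_t,\gamma_t)\,dt$, and for $\int_0^T d_{\mathcal H}\,dt$ one uses that each $z_k$ is $1$-Lipschitz in Hausdorff distance). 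Also, the opening remark that ``$d_{\mathcal H}\ge \tilde d_{\mathcal H}$ up to the trivial terms'' is not really true metrically — the two are only topologically equivalent — but your argument never relies on a pointwise comparison, only on the fact that both distances control the $z_k$ and hence the length in $L^1$, so this does not affect correctness.
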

\begin{proof}
Let $F:\e\rightarrow\R$ be a continuous functional and let $(\gamma^n_\cdot)_n\subset E([0,T],\e)$ converge to $\gamma_\cdot\in E([0,T],\e)$ for $\int_0^T d_{\mathcal H}\, dt$ (the proof is identical for $\tilde d_{\mathcal H}$). 
The key argument consists in proving that one may work with $F(\gamma^n_\cdot),F(\gamma_\cdot)$ bounded. One can then use the continuity of $F$ on $\e$ and a compactness argument to conclude.\\
We will use the following elementary identity: for any $c,X\geq0$,
\begin{equation}
X {\bf 1}_{X>c}
= c{\bf 1}_{X>c} + \int_c^\infty {\bf 1}\Big\{X>\lambda\Big\}\, d\lambda
. 
\label{eq_point_as_integral}
\end{equation}
Let $A>0$ and let us prove that it is enough to consider curves with length bounded in terms of $A$. The last equation applied with $c=0$ and $X = |F(\gamma^n_t) - F(\gamma_t)|$ for each $t\leq T$ yields:
\begin{align}
\int_0^T\big|F(\gamma^n_t)-F(\gamma_t)\big|\, dt &= \int_0^T\int_0^A{\bf 1}\Big\{\big|F(\gamma^n_t)-F(\gamma_t)\big|>\lambda\Big\}\, d\lambda \, dt \nonumber \\
&\quad +\int_0^T\int_A^\infty{\bf 1}\Big\{\big|F(\gamma^n_t)-F(\gamma_t)\big|>\lambda\Big\}\, d\lambda \, dt.\label{eq_decomp_F_appendix}
\end{align}
Assume $A\geq 4C'$. By Assumption~\eqref{eq_assumption_F} on $F$ and using~\eqref{eq_point_as_integral} in the last line below, the integral on the second line of~\eqref{eq_decomp_F_appendix} can be bounded as follows:
\begin{align}
\int_0^T\int_{A}^\infty {\bf 1}\Big\{\big|F(\gamma^n_t)-F(\gamma_t)&\big|>\lambda\Big\}\, d\lambda \, dt 
\leq 
\int_0^T\int_A^\infty {\bf 1}\Big\{ 2C\max\{|\gamma_t|,|\gamma^n_t|\}+2C'>\lambda\Big\}\, d\lambda \, dt\nonumber\\
&\leq 
\int_0^T\int_A^\infty {\bf 1}\Big\{ \max\{|\gamma_t|,|\gamma^n_t|\}>\frac{\lambda}{4C}\Big\}{\bf 1}\Big\{\big||\gamma^n_t| - |\gamma_t|\big|\leq \frac{\lambda}{8C}\Big\}\, d\lambda \, dt \nonumber\\ 
&\quad + 
\int_0^T\int_A^\infty {\bf 1}\Big\{\big||\gamma^n_t| - |\gamma_t|\big|>\frac{\lambda}{8C}\Big\}\, d\lambda \, dt\nonumber \\
&\leq \int_0^T\int_A^\infty {\bf 1}\Big\{ |\gamma_t|>\frac{\lambda}{8C}\Big\}\, d\lambda \, dt
+ 
8C\int_0^T\big||\gamma^n_t| - |\gamma_t|\big|\, dt.\label{eq_decomp_F_appendix_bis}
\end{align}
Since the length is Lipschitz in Hausdorff distance, one has $\lim_n\int_0^T\big||\gamma^n_t|-|\gamma_t|\big|\, dt =0$. In addition,~\eqref{eq_point_as_integral} implies that the first integral in~\eqref{eq_decomp_F_appendix_bis} is bounded as follows:
\begin{equation}
\int_0^T\int_A^\infty {\bf 1}\Big\{ |\gamma_t|>\frac{\lambda}{8C}\Big\}\, d\lambda \, dt \leq  8C\int_0^T|\gamma_t|{\bf 1}\Big\{|\gamma_t|>\frac{A}{8C}\Big\}\, dt \underset{A\rightarrow\infty}{\longrightarrow} 0.
\end{equation}
To prove that $\lim_n\int_0^T|F(\gamma^n_t)-F(\gamma_t)|\, dt=0$, it is therefore enough to prove:
\begin{equation}
\forall A,\epsilon>0,\qquad \lim_{n\rightarrow\infty}\int_0^T {\bf 1}\Big\{|F(\gamma^n_t)-F(\gamma_t)|>\epsilon \Big\}{\bf 1}\Big\{\max\{|\gamma_t|,|\gamma^n_t|\}\leq A\Big\}\, dt 
=
0
.\label{eq_continuity_F_interm}
\end{equation}
For any time $t\leq T$ for which the integrand above does not vanish, both $\gamma_t$ and $\gamma^n_t$ belong to the set $\e_A := \e \cap \big\{|\gamma|\leq A\big\}$. This set is compact for the Hausdorff distance by Proposition~\ref{prop_e_closed}. As $F$ is continuous on $\e_A$, thus uniformly continuous, there is a modulus of uniform continuity $m_A>0$ such that:
\begin{equation}
\forall \gamma,\tilde\gamma\in \e_A,\qquad |F(\gamma)-F(\tilde\gamma)|> \epsilon \quad \Rightarrow\quad d_{\mathcal H}(\gamma,\tilde \gamma)> m_A(\epsilon).
\end{equation}
The fact that $\lim_n\int_0^Td_{\mathcal H}(\gamma^n_t,\gamma_t)\, dt = 0$ concludes the proof of~\eqref{eq_continuity_F_interm}, thus of the first claim of the lemma.\\

Consider now $G:[0,T]\times \e\rightarrow\R$ bounded as in the lemma. Then the argument is the same, except that the decomposition~\eqref{eq_decomp_F_appendix} now reads:
\begin{align}
\int_0^T \big|G_t(\gamma^n_t)-G_t(\gamma_t)\big|\, dt 
&\leq 
\int_0^T\int_0^A{\bf 1}\Big\{\big| G_t(\gamma^n_t)- G_t(\gamma_t)\big|>\lambda\Big\}\, d\lambda \, dt \nonumber \\
&\quad +\int_0^T\int_A^\infty{\bf 1}\Big\{\big|G_t(\gamma^n_t)-G_t(\gamma_t)\big|>\lambda\Big\}{\bf 1}\Big\{ C'_t\leq\frac{\lambda}{4}\Big\}\, d\lambda \, dt\nonumber\\
&\quad +\int_0^T\int_A^\infty{\bf 1}\Big\{ C'_t>\frac{\lambda}{4}\Big\}\, d\lambda \, dt
\end{align}
The last integral vanishes when $A$ is large and is independent of $n$. The middle integral is treated as for $F$ in~\eqref{eq_decomp_F_appendix_bis}.
\end{proof}
\begin{proof}[Proof of Lemma~\ref{lemm_lien_hausdorff_et_volume_et_poles_integral}]
Notice that $\tilde d_E$ is topologically equivalent to $d_{L^1}^S + \int_0^T\tilde d_{\mathcal H}\, dt$. To prove Lemma~\ref{lemm_lien_hausdorff_et_volume_et_poles_integral}, it is thus enough to prove that $\int_0^T\tilde d_{\mathcal H}\, dt$ and $\int_0^Td_{\mathcal H}\, dt$ are topologically equivalent on $E([0,T],\e)$. By Lemma~\ref{lemm_lien_hausdorff_et_volume_et_poles}, 
$d_{\mathcal H}$ is a continuous functional for $\tilde d_{\mathcal H}$ and vice-versa. 
It is thus enough to check that both distances satisfy the hypotheses of Lemma~\ref{lemm_continuite_implies_continuity_integral}. Let $\gamma_\cdot \in E([0,T],\e)$. Then:
\begin{equation}
\forall \gamma\in\e,\qquad F_t(\gamma):= d_{\mathcal H}(\gamma,\gamma_t)\leq |\gamma|+|\gamma_t|,
\end{equation}
where the last bound comes from the fact that both $\gamma,\gamma_t$ surround the point $0$ and the definition~\eqref{eq_def_hausdorff_distance} of $d_{\mathcal H}$. As $\int_0^T|\gamma_t|\, dt<\infty$ and each $F_t$ is continuous on $\e$, the lemma applies: $\int_0^TF_t\,  dt$ is continuous on $E([0,T],\e)$, in particular at $\gamma_\cdot$. Thus convergence for $\int_0^T\tilde d_{\mathcal H}\, dt$ implies convergence for $\int_0^T d_{\mathcal H}\, dt$.\\

To conversely prove that convergence for $\int_0^T d_{\mathcal H}\, dt$ implies convergence for $\int_0^T \tilde d_{\mathcal H}\, dt$, let $t\in[0,T]$ and define, for each $\gamma\in\e$:
\begin{equation}
G_t(\gamma):=  d_{L^1}(\gamma,\gamma_t)+ \sum_{k=1}^4\|L_k(\gamma)-L_k(\gamma_t)\|_1\leq C(|\gamma|+|\gamma_t|)+C',
\end{equation}
where the $C'$ corresponds to the bound $\sup_{\gamma,\tilde \gamma\in \e}d_{L^1}(\gamma,\tilde\gamma)\leq C'$ and the $C$ comes from (see below~\eqref{eq_def_z_k_appB} and Lemma~\ref{lemm_w_k}): 
\begin{equation}
\forall 1\leq k\leq 4,\qquad |z_k|\leq |\gamma|,\qquad w_k\in[z_{k-1},z_{k+1}]\quad\text{with}\quad z_{k-1}\leq 0\leq z_{k+1},
\end{equation}
with the convention $z_{k-1} = z_4$ if $k=1$, $z_{k+1} = z_1$ if $k=4$. 
Then $G_t$ satisfies the hypothesis of Lemma~\ref{lemm_continuite_implies_continuity_integral} and $G_t(\gamma)\geq \tilde d_{\mathcal H}(\gamma,\gamma_t)$, hence the claim.
\end{proof}
\subsubsection{Compact sets in $E([0,T],\e)$}\label{app_some_compact_sets}
Thanks to Lemma~\ref{lemm_lien_hausdorff_et_volume_et_poles_integral}, we now have a distance topologically equivalent to $d_E$ that involves only the volume distance and the distance between the poles. This is the key to the following proposition, giving sufficient condition for compactness for $d_E$.
\begin{prop}[Compact sets for $d_E$]\label{prop_compact_sets_for_d_E}
Suppose that $\mathcal K\subset E([0,T],\e)$ satisfies the following.
\begin{itemize}
	\item One has:
\begin{equation}
\sup_{\gamma_\cdot\in \mathcal K}\sup_{t\leq T}|\gamma_t|<\infty.\label{eq_length_bound_theo_compactness}
\end{equation}
 \item If $m^{L^1}_\cdot(\Gamma_\cdot)$ is the Skorokhod modulus of continuity associated with volume convergence for trajectories in $E([0,T],\e)$ (see \cite[Equation (12.6)]{Billingsley1999}), then:
\begin{equation}
\lim_{\eta\rightarrow 0}\sup_{h\leq\eta}\sup_{\gamma_\cdot\in \mathcal K}\Big\{m_h^{L^1}(\gamma_\cdot) + \sum_{k=1}^4\int_0^{T-h}\Big[|z_k(\gamma^n_{t+h})-z_k(\gamma^n_t)| + |w_k(\gamma^n_{t+h})-w_k(\gamma^n_t)|\Big]\, dt \Big\} = 0.\label{eq_condition_compacite_pour_d_E}
\end{equation}
\end{itemize}
Then $\mathcal K$ is relatively compact for the topology induced by $d_E$.
\end{prop}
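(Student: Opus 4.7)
The plan is to combine Lemma~\ref{lemm_lien_hausdorff_et_volume_et_poles_integral}, which identifies the topology of $d_E$ with that of $\tilde d_E$, with two classical compactness theorems --- one for Skorokhod spaces, one for $L^1([0,T])$. First I would use Lemma~\ref{lemm_lien_hausdorff_et_volume_et_poles_integral} to reduce the problem to relative compactness for $\tilde d_E$. Given an arbitrary sequence $(\gamma^n_\cdot)\subset\mathcal K$, I would invoke~\eqref{eq_length_bound_theo_compactness} to fix $\kappa>0$ with $|\gamma^n_t|\leq\kappa$ for all $n,t$, so that every $\gamma^n_t$ belongs to $\e_\kappa:=\e\cap\{|\gamma|\leq\kappa\}$, which is compact for $d_{\mathcal H}$ and a fortiori for the weaker volume distance $d_{L^1}$ by Proposition~\ref{prop_e_closed}. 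Combined with the Skorokhod modulus bound built into~\eqref{eq_condition_compacite_pour_d_E}, the standard relative compactness criterion for c\`adl\`ag paths with values in a Polish space (Theorem~12.3 of~\cite{Billingsley1999} applied to $(\e_\kappa,d_{L^1})$) would then yield a subsequence --- still denoted $(\gamma^n_\cdot)$ --- and a limit $\bar\gamma_\cdot\in D_{L^1}([0,T],\e_\kappa)$ with $d^S_{L^1}(\gamma^n_\cdot,\bar\gamma_\cdot)\to 0$.

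Next, for each $k\in\{1,\ldots,4\}$, I would control the pole data in $L^1([0,T])$. The scalar functions $t\mapsto z_k(\gamma^n_t)$ and $t\mapsto w_k(\gamma^n_t)$ are uniformly bounded by $\kappa/2$ via Lemma~\ref{lemm_continuite_poles}, and the second part of~\eqref{eq_condition_compacite_pour_d_E} states that they are uniformly equicontinuous for translation in $L^1([0,T])$. The Riesz--Kolmogorov--Fr\'echet criterion then provides, after a further diagonal extraction, limits $\zeta_k,\omega_k\in L^1([0,T])$ with $z_k(\gamma^n_\cdot)\to\zeta_k$ and $w_k(\gamma^n_\cdot)\to\omega_k$ in $L^1([0,T])$; a last extraction ensures that all eight convergences also hold pointwise almost everywhere on $[0,T]$.

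The final task is to assemble $\bar\gamma_\cdot$ and $(\zeta_k,\omega_k)_{k=1}^4$ into a single trajectory $\gamma^*_\cdot\in E([0,T],\e)$ along which $\tilde d_E(\gamma^n_\cdot,\gamma^*_\cdot)\to 0$. At almost every $t$ outside the countable jump set of $\bar\gamma_\cdot$ and outside the negligible exceptional set, one has simultaneously $d_{L^1}(\gamma^n_t,\bar\gamma_t)\to 0$, $z_k(\gamma^n_t)\to\zeta_k(t)$ and $w_k(\gamma^n_t)\to\omega_k(t)$. Using $d_{\mathcal H}$-compactness of $\e_\kappa$ together with the characterisation of Hausdorff convergence in Lemma~\ref{lemm_lien_hausdorff_et_volume_et_poles}, these four convergences should pin down a unique $\gamma^*_t\in\e_\kappa$ for a.e.\ $t$; setting $\gamma^*_t:=\bar\gamma_t$ on the exceptional set, the $L^1$-convergences of the pole data together with $\iota\leq 1$ and dominated convergence would close the $\tilde d_E$ estimate. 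The hard part is precisely this reconciliation: the Skorokhod limit $\bar\gamma_\cdot$ is blind to pole heights and abscissae because $d_{L^1}$ is, while the $L^1([0,T])$ limits $\zeta_k,\omega_k$ are defined only up to equality a.e.\ and carry no time-change flexibility. It is Lemma~\ref{lemm_lien_hausdorff_et_volume_et_poles}, together with the weight $\iota$ that kills the $w_k$-contribution to $\tilde d_E$ whenever pole $k$ of the limit is not point-like, that makes the gluing possible and motivates the precise form~\eqref{eq_def_tilde_d_E} of $\tilde d_E$.
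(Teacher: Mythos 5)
Your proposal is correct and follows essentially the same route as the paper: reduction to $\tilde d_E$ via Lemma~\ref{lemm_lien_hausdorff_et_volume_et_poles_integral}, Skorokhod relative compactness for the volume component, and the Kolmogorov--Riesz criterion for the eight pole coordinates in $L^1([0,T])$. The only (cosmetic) difference is in the assembly of the limit trajectory: the paper writes it down explicitly by prescribing $(\Gamma^\infty_t)'$, $z_k^\infty(t)$ and a case-split definition of $w_k(\Gamma^\infty_t)$, whereas you identify it pointwise a.e.\ via Hausdorff compactness of $\e_\kappa$ and Lemma~\ref{lemm_lien_hausdorff_et_volume_et_poles} --- both yield the same object and the same $\tilde d_E$ convergence.
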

\begin{proof}
Recall the definition~\eqref{eq_def_d_E_appen} of $d_E$ and let $(\gamma^n_\cdot)_n\subset E([0,T],\e)$ be a sequence in $\mathcal K$. 
By Lemma~\ref{lemm_lien_hausdorff_et_volume_et_poles_integral}, it is enough to prove that a subsequence of $(\gamma^N_\cdot)$ converges in $\tilde d_E$ distance. 
This means that it is enough to control the convergence in volume and of the time integral of the distance between the poles. 

According to the characterisation of relatively compact sets in the Skorokhod topology in \cite[Theorem 6.3]{Ethier2009}, $\mathcal K$ is relatively compact in the Skorokhod space $D_{L^1}([0,T],\Omega'\cap\e)$ equipped with $d_{L^1}^S$, where $\Omega'$ is the subset of simple curves in $\Omega$ given in~\eqref{eq_def_eprime}. 
It follows that, up to a subsequence, $(\Gamma^n_\cdot)_n$ converges in $d^S_{L^1}$-distance to a trajectory $\tilde \Gamma^\infty_\cdot\in D_{L^1}([0,T],\Omega'\cap\e)$.

Let us now control the convergence of the $z_k,w_k$ for each $k$ with $1\leq k \leq 4$. Recall that the length of a curve $\gamma\in\e$ in $1$-distance satisfies:
\begin{equation}
|\gamma| = 2\big[z_1(\gamma)-z_3(\gamma)\big]+2\big[z_2(\gamma)-z_4(\gamma)\big].\label{eq_length_as_diff_z_k_appendix}
\end{equation}
Recall also that the droplet associated with $\gamma\in\e$ must contain the point $0$, thus $z_1(\gamma),z_2(\gamma)\geq 0$ and $z_3(\gamma),z_4(\gamma)\leq 0$. This observation,~\eqref{eq_length_as_diff_z_k_appendix} and the bound~\eqref{eq_length_bound_theo_compactness} thus translate into:
\begin{equation}
\max_{1\leq k \leq 4}\sup_{n}\sup_{t\leq T}|z_k(\gamma^n_t)|<\infty.\label{eq_uniform_bound_z_k}
\end{equation}
Similarly, $\sup_{n,t}|w_k(\gamma^n_t)|<\infty$ as $w_k \in[z_{k-1},z_{k+1}]$ by Lemma~\ref{lemm_w_k} 
(with $k+1 :=1$ if $k=4$, $k-1:= 4$ if $k=1$). 
Equation~\eqref{eq_condition_compacite_pour_d_E} and the Kolmogorov-Riesz theorem~\cite[Theorem 4.26]{Brezis2010} imply that the sets $\{(z_k(\gamma^n_\cdot))_n\}_n$, $\{(w_k(\gamma^n_\cdot))_n\}_n$ are relatively compact subsets of $L^1([0,T],\R)$ for each $k$. 
Up to a subsequence, they thus converge to $z_k^\infty,w_k^\infty\in L^1([0,T],\R)$ respectively.

It remains to build a limit point $\gamma^\infty_\cdot$ of $(\gamma^n_\cdot)_n$ for $d_E$. 
Define, for each $t\leq T$, the curve $\gamma^\infty_t$ as the boundary of the droplet $\Gamma^\infty_t$, with:
\begin{equation}
(\Gamma^\infty_t)' := \tilde\Gamma^\infty_t,\quad z_k(\Gamma^\infty_t) := z_k^\infty(t)
\quad\text{ for }1\leq k\leq 4,
\end{equation}
and:
\begin{equation}
w_k(\Gamma^\infty_t) := w^\infty_k(t){\bf 1}\Big\{z_k^\infty(t)>z_k(\tilde\Gamma^\infty_t)\Big\} + w_k(\tilde \Gamma^\infty_t){\bf 1}\Big\{z_k^\infty(t)=z_k(\tilde\Gamma^\infty_t)\Big\}
\quad\text{ for }1\leq k\leq 4.
\end{equation}
Then $\gamma^\infty_\cdot\in E([0,T],\e)$ and $(\gamma^n_\cdot)$ converges to $\gamma^\infty_\cdot$ for $\tilde d_E$ up to a subsequence by construction. 
The fact that $d_E$ and $\tilde d_E$ have the same converging sequences by Lemma~\ref{lemm_lien_hausdorff_et_volume_et_poles_integral} concludes the proof.
\end{proof}
\subsubsection{Continuity properties of the functionals $J_{H,\epsilon}^\beta$}\label{sec_continuity_J_H_eps_ell_H_eps}
Let $\beta>\log 2$ and $H\in\C$ be fixed. 
In this section, we prove Proposition~\ref{prop_preuve_continuite_integrale_avec_v_moins_un} on the regularity of the functionals $J_{H,\epsilon}^\beta$ for $\epsilon>0$. 
Let $\epsilon>0$. In view of the expression~\eqref{eq_def_J_H_epsilon_zeta}--\eqref{eq_def_ell_H_epsilon_zeta} of $J_{H,\epsilon}^\beta$, we need to prove two things. 

The first is that elements of the set $E_{pp}([0,T],\e)$ of trajectories with almost always point-like poles are point of continuity for the distance $d_E$ of the following functionals, defined for $\gamma_\cdot\in E([0,T],\e)$ by:
\begin{align}
&\bigg(\frac{1}{4}-\frac{e^{-\beta}}{2}\bigg)\sum_{k=1}^4\int_0^{T} \big[H(t,R_k(\gamma_t))+H(t,L_k(\gamma_t))\big]\, dt,
\quad \text{and:}
\label{eq_terme_exp_minus_beta_pour_lsc}\\
&-\int_0^{T}\int_{\gamma_t(\epsilon)} \frac{(\mathsf{v}^\epsilon)^2}{4\mathsf{v}} \big[{\bf T}^\epsilon \cdot {\bf m}\big] {\bf T}^\epsilon\cdot \nabla H_t\, ds\, dt
-\frac{1}{2}\int_0^{T}\int_{\gamma_t(\epsilon)} \frac{(\mathsf{v}^\epsilon)^2}{\mathsf{v}}|{\bf T}_1^\epsilon {\bf T}_2^\epsilon| H_t^2 \,ds \,dt
.
\label{eq_terme_IPP_pour_lsc}
\end{align}
Recall that ${\bf m}$ is given by Definition~\ref{def_m(gamma)_gamma(eta)}, 
$\gamma_t(\epsilon)$ is the subset of $\gamma_t$ of points at $1$-distance at least $\epsilon$ from the poles and $\mathsf{v}^\epsilon,{\bf T}^\epsilon$ are defined in~\eqref{eq_def_cont_versions_T_v}.

The second thing is the convergence $\lim_{\epsilon\rightarrow 0}J_{H,\epsilon}^\beta(\gamma_\cdot) = J^\beta_{H}(\gamma_\cdot)$ for each $\gamma_\cdot\in E([0,T],\e)$, which amounts to the convergence of the functional~\eqref{eq_terme_IPP_pour_lsc} when $\epsilon\rightarrow0$. \\

Let us start by proving regularity of \eqref{eq_terme_exp_minus_beta_pour_lsc}--\eqref{eq_terme_IPP_pour_lsc}. The regularity of~\eqref{eq_terme_exp_minus_beta_pour_lsc} is the object of the following lemma.
\begin{lemm}[Convergence of the poles]\label{lemm_elementary_cv_cont_fction_hausdorff}
For $n\in\N$, let $\gamma^n_\cdot\in E([0,T],\e)$ and assume that $(\gamma^n_\cdot)$ converges to $\gamma_\cdot \in E_{pp}([0,T],\e)$ for $d_E$. Then:
\begin{equation}
\forall k\in\{1,...,4\},\qquad \lim_{n\rightarrow\infty}\int_0^{T}\Big[\|L_k(\gamma^n_t)-L_k(\gamma_t)\|_1 + \|R_k(\gamma^n_t)-L_k(\gamma_t)\|_1\Big]dt = 0.
\end{equation}
\end{lemm}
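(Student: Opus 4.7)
\medskip
\noindent\textbf{Proof plan.} The starting point is the third item of Lemma~\ref{lemm_continuite_poles}, which is precisely a pointwise version of the desired statement: if $\gamma$ has point-like pole $k$ and $(\tilde\gamma^n)_n \subset \e$ converges to $\gamma$ in Hausdorff distance, then both $L_k(\tilde\gamma^n) \to L_k(\gamma)$ and $R_k(\tilde\gamma^n) \to L_k(\gamma)$ in $\|\cdot\|_1$. Since $\gamma_\cdot \in E_{pp}([0,T],\e)$, the definition~\eqref{eq_def_E_pp} ensures that the set $\mathcal T$ of times $t \in [0,T]$ at which \emph{all four} poles of $\gamma_t$ are point-like has full Lebesgue measure. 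The plan is therefore to upgrade the pointwise statement on $\mathcal T$ to the integral statement, using a subsequence extraction and Vitali's convergence theorem.

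\medskip
Concretely, I would argue as follows. By definition of $d_E$, convergence in $d_E$ implies $\int_0^T d_{\mathcal H}(\gamma^n_t,\gamma_t)\,dt \to 0$. Along any subsequence, one can then extract a further subsequence (still written $(\gamma^n_\cdot)$) such that $d_{\mathcal H}(\gamma^n_t,\gamma_t) \to 0$ for a.e.\ $t \in [0,T]$. For every $t \in \mathcal T$ at which this Hausdorff convergence holds — still a full measure set — the third item of Lemma~\ref{lemm_continuite_poles} applied at each pole gives
\begin{equation*}
\lim_{n\to\infty}\Big[\|L_k(\gamma^n_t)-L_k(\gamma_t)\|_1 + \|R_k(\gamma^n_t)-L_k(\gamma_t)\|_1\Big] = 0.
\end{equation*}
Since this holds for every subsequence, it will suffice to prove that this pointwise a.e.\ convergence yields $L^1([0,T])$ convergence of the integrand.

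\medskip
For the $L^1$ upgrade, I would use the length bound recalled in Lemma~\ref{lemm_continuite_poles}: $|z_j(\gamma)| \le |\gamma|$ for $1 \le j \le 4$ and $w_k \in [z_{k-1},z_{k+1}]$, so that $\|L_k(\gamma)\|_1, \|R_k(\gamma)\|_1 \le C|\gamma|$ for some universal $C>0$. This provides the domination
\begin{equation*}
\|L_k(\gamma^n_t)-L_k(\gamma_t)\|_1 + \|R_k(\gamma^n_t)-L_k(\gamma_t)\|_1 \;\le\; C\big(|\gamma^n_t| + |\gamma_t|\big).
\end{equation*}
Because the length $|\cdot|$ is $1$-Lipschitz in Hausdorff distance (first item of Lemma~\ref{lemm_continuite_poles}), one has $\int_0^T ||\gamma^n_t| - |\gamma_t||\,dt \le 8\int_0^T d_{\mathcal H}(\gamma^n_t,\gamma_t)\,dt \to 0$, so $(|\gamma^n_\cdot|)_n$ converges to $|\gamma_\cdot|$ in $L^1([0,T])$. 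An $L^1$-convergent sequence is uniformly integrable, hence so is the dominating sequence $(C(|\gamma^n_\cdot| + |\gamma_\cdot|))_n$, which forces uniform integrability of the nonnegative integrands above. Vitali's convergence theorem then converts the a.e.\ pointwise convergence into $L^1$ convergence along the chosen subsequence, and the subsequence-subsequence principle closes the argument.

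\medskip
\noindent\textbf{Expected main obstacle.} The functionals $\gamma \mapsto L_k(\gamma), R_k(\gamma)$ are only semicontinuous on $\e$, not continuous, so Lemma~\ref{lemm_continuite_implies_continuity_integral} cannot be applied directly. The whole strategy hinges on exploiting the assumption $\gamma_\cdot \in E_{pp}([0,T],\e)$ to restrict attention to a full measure set of times at which continuity is recovered via the third item of Lemma~\ref{lemm_continuite_poles}. The delicate point is therefore the passage from integral $d_{\mathcal H}$-convergence to a.e.\ pointwise $d_{\mathcal H}$-convergence, which is why the subsequence trick (rather than a direct continuity argument) is needed; uniform integrability of the lengths, ensured by their $L^1$-convergence, then handles the integrability side without requiring a uniform sup-length bound on $(\gamma^n_\cdot)$.
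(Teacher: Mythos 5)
Your proof is correct, but it takes a genuinely different route from the paper's two-line argument. The paper invokes the topological equivalence $d_E \sim \tilde d_E$ established in Lemma~\ref{lemm_lien_hausdorff_et_volume_et_poles_integral}, observing that $\tilde d_E$ already packages $L^1$-in-time control of the pole coordinates $z_k, w_k$ explicitly; the point-like-pole hypothesis is then used to guarantee that the $\iota$-weighted $w_k$ term is not degenerate where it matters. You instead bypass the $\tilde d_E$ machinery entirely: you extract a subsequence on which $d_{\mathcal H}(\gamma^n_t,\gamma_t)\to 0$ for a.e.\ $t$, invoke the pointwise convergence of poles (third item of Lemma~\ref{lemm_continuite_poles}) on the full-measure set of times where $\gamma_t$ has point-like poles, and upgrade to $L^1$ convergence via uniform integrability (driven by $L^1$-convergence of the lengths, itself a consequence of the Lipschitz property of $|\cdot|$ in $d_{\mathcal H}$) and Vitali, closing with the subsequence-subsequence principle. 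Your route is in effect a hand-rolled variant of Lemma~\ref{lemm_continuite_implies_continuity_integral} adapted to the case where the underlying time-slice functional is only continuous on a measure-one set of times rather than everywhere on $\e$ — a distinction you correctly identify as the obstacle to a direct appeal to that lemma. The paper's version is shorter because it leans on the abstract equivalence of distances; yours is more self-contained and spells out the measure-theoretic upgrade (subsequence extraction, domination by lengths, Vitali) that the paper compresses into ``hence the result.''
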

\begin{proof}
By Lemma~\ref{lemm_lien_hausdorff_et_volume_et_poles}, convergence for $d_E$ implies convergence for $\tilde d_E$, defined in~\eqref{eq_def_tilde_d_E}. As the limiting trajectory has almost always point-like poles, convergence for $\tilde d_E$ implies convergence of both coordinates of $L_k,R_k$ in the topology of $L^1([0,T],\R)$, hence the result.
\end{proof}
Let us now show that the functional in~\eqref{eq_terme_IPP_pour_lsc} is continuous at each point of the set $\e_{pp}([0,T],\e)$ of trajectories with almost point-like poles. 
The proof is quite technical, but the idea is simple: 
first, control the convergence at the poles using Lemma~\ref{lemm_elementary_cv_cont_fction_hausdorff}. Then, express line integrals in each regions as integrals on the corresponding SSEP by the correspondence presented in Section~\ref{sec_heuristics}. At this point the desired regularity properties are proven as for the SSEP, see \cite[Chapter 10]{Kipnis1999}. 
 
To prove the continuity of~\eqref{eq_terme_exp_minus_beta_pour_lsc}--\eqref{eq_terme_IPP_pour_lsc}, 
we will use Lemma~\ref{lemm_lien_hausdorff_et_volume_et_poles} relating convergence in Hausdorff distance and convergence of the poles and volume, 
or more precisely the following consequence of Lemma~\ref{lemm_lien_hausdorff_et_volume_et_poles}.
\begin{lemm}\label{lemm_volume_and_height_pole_implies_hausdorff}
Let $(\gamma^n)_n\subset \e$ and $\gamma\in\e$. Let $x\in\gamma$ be away from the poles in the sense that, for some $\zeta>0$, the set $B' := B_1(x,\zeta)\cap\gamma$ only contains points in the same region of $\gamma$ as $x$. Then convergence in volume implies convergence in Hausdorff distance:
\begin{equation}
\lim_{n\rightarrow\infty}d_{L^1}\big(\Gamma^n\cap B',\Gamma\cap B'\big)=0\quad \Rightarrow\quad \lim_{n\rightarrow\infty}d_{\mathcal H}\big(\Gamma^n\cap B',\Gamma\cap B'\big)=0.
\end{equation}
\end{lemm}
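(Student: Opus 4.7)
The plan is to work in a tilted coordinate frame in which any portion of a curve in $\Omega$ lying strictly inside a single region becomes the graph of a $1$-Lipschitz function, and then to use the familiar fact that for uniformly Lipschitz functions, $L^1$ convergence plus precompactness (Arzelà--Ascoli) implies uniform convergence, which in turn gives Hausdorff convergence of graphs.

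First, without loss of generality assume $x$ lies in region $1$ of $\gamma$. By the assumption on $\zeta$ and Property~\ref{property_state_space}, the set $B_1(x,\zeta)\cap\gamma$ is the graph, in the tilted frame $({\bf b}_{-\pi/4},{\bf b}_{\pi/4})$, of a $1$-Lipschitz function $f:I\to\R$ on an interval $I$ centred at $u_0:=x\cdot{\bf b}_{-\pi/4}$, and $\Gamma\cap B'$ is the corresponding sub-graph intersected with $B'$. Shrinking $\zeta$ slightly if necessary, I may take $B'$ to be a square in this tilted frame so that the geometry is transparent.

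Second, and this is the key technical step, I need to show that for all $n$ large enough, the portion $\gamma^n\cap B'$ is the graph of a $1$-Lipschitz function $f^n:I\to\R$ in the \emph{same} tilted frame. Since $\gamma^n\in\e$, it satisfies Property~\ref{property_state_space}: if $\gamma^n\cap B'$ contained a point lying in a region other than region $1$, then by the clockwise orientation and the quadrant constraints on the tangent vector, $\gamma^n\cap B'$ would have to make a qualitative excursion (in particular the sub-graph/super-graph structure in the tilted frame would be violated on a set of positive measure of size comparable to $\zeta^2$). This would force a discrepancy $d_{L^1}(\Gamma^n\cap B',\Gamma\cap B')\gtrsim \zeta^2$, contradicting $L^1$ convergence for large $n$. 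Thus for $n$ large, $\gamma^n\cap B'$ is entirely in region $1$ of $\gamma^n$ and is consequently a $1$-Lipschitz graph on $I$ in the tilted frame. This is the main obstacle: turning the soft $L^1$ control into the qualitative statement that $\gamma^n$ stays in the correct region near $x$.

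Third, with both $\gamma\cap B'$ and $\gamma^n\cap B'$ realised as graphs of $1$-Lipschitz functions $f,f^n$ on $I$, the volume distance reduces (up to an easily controlled boundary contribution from $\partial B'$) to
\begin{equation*}
d_{L^1}(\Gamma^n\cap B',\Gamma\cap B') = \int_I |f^n(w)-f(w)|\,dw + o_n(1),
\end{equation*}
so $f^n\to f$ in $L^1(I)$. The sequence $(f^n)$ is uniformly bounded (since $\gamma^n\in\e$ forces $f^n$ to take values in a fixed compact range) and uniformly $1$-Lipschitz, hence precompact in $C(I)$ by Arzelà--Ascoli. Any uniform limit must coincide with $f$ by uniqueness of $L^1$ limits, so $f^n\to f$ uniformly on $I$. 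Uniform convergence of graphs of $1$-Lipschitz functions on a common compact interval immediately gives Hausdorff convergence of the graphs, and hence of the sub-graphs $\Gamma^n\cap B'$ to $\Gamma\cap B'$ by the characterisation~\eqref{eq_def_hausdorff_distance}, which is the desired conclusion.
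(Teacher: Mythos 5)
The paper offers no explicit proof of Lemma~\ref{lemm_volume_and_height_pole_implies_hausdorff}: it is presented, without argument, as a direct consequence of the topological equivalence established in Lemma~\ref{lemm_lien_hausdorff_et_volume_et_poles}. Your from-scratch proof via the tilted-frame graph reduction and Arzel\`a--Ascoli is a sensible way to make the claim explicit, and the overall structure (reduce to $1$-Lipschitz graphs, then upgrade $L^1$ to uniform convergence) is correct. However, there is a genuine gap in the step you yourself flag as the main obstacle.

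You assert that if $\gamma^n\cap B'$ contained a point in a region of $\gamma^n$ other than region~$1$, then the sub-graph structure would be violated on a set of size comparable to $\zeta^2$, forcing $d_{L^1}(\Gamma^n\cap B',\Gamma\cap B')\gtrsim\zeta^2$. This is false. If $\gamma^n$ has, say, its east pole inside $B'$ but at distance $\delta_n\ll\zeta$ from $\partial B'$, then the violated sub-graph structure and the corresponding discrepancy between the two droplets are confined to a neighbourhood of the pole of diameter $\mathcal O(\delta_n)$, so the $L^1$ discrepancy is only of order $\delta_n^2$. Hence $L^1$ convergence does \emph{not} prevent a pole of $\gamma^n$ from lingering just inside $\partial B'$, and your conclusion that $\gamma^n\cap B'$ is eventually the graph of a single $1$-Lipschitz function on $I$ does not follow as stated. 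Since this is precisely the point where the soft $L^1$ control is turned into a qualitative geometric statement, the gap is structural rather than cosmetic.

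The lemma is nonetheless true, and the argument can be repaired with a two-case analysis. Either no pole of $\gamma^n$ lies in $B_1(x,\zeta/2)$, in which case $\gamma^n\cap B_1(x,\zeta/2)$ is a $1$-Lipschitz graph in the fixed tilted frame and your Arzel\`a--Ascoli argument gives Hausdorff convergence in $B_1(x,\zeta/2)$, while the annulus $B'\setminus B_1(x,\zeta/2)$ is handled by noting that any point of $\Gamma^n\cap B'$ or $\Gamma\cap B'$ there is within $\zeta/2$ of the other set together with the graph comparison; or a pole of $\gamma^n$ does lie in $B_1(x,\zeta/2)$, in which case the missing wedge has volume $\gtrsim(\zeta/2)^2$, which \emph{is} excluded by $L^1$ convergence for $n$ large. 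Equivalently, observe that $L^1$ convergence forces the distance $\delta_n$ from any pole of $\gamma^n$ in $B'$ to $\partial B'$ to vanish, and estimate the Hausdorff distance near the pole directly by $\mathcal O(\delta_n)$. Either fix respects the spirit of your argument but needs to be written out.
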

Fix $\gamma_\cdot = (\gamma_t)_{t\leq T}\in E_{pp}([0,T],\e)$ and let $(\gamma^n_\cdot)_n\subset E([0,T],\e)$ be a sequence converging to $\gamma_\cdot$ for $d_E$. 
Introduce the functionals $F_{H_t,\epsilon},\tilde F_{H_t,\epsilon}$ on $\e$ as follows: for $\gamma\in \e$,
\begin{align}
F_{H_t,\epsilon}(\gamma) 
= 
\int_{\gamma(\epsilon)} \frac{(\mathsf{v}^\epsilon)^2}{4\mathsf{v}} \big[{\bf T}^\epsilon \cdot {\bf m}\big] {\bf T}^\epsilon\cdot \nabla H_t \, ds,
\quad 
\tilde F_{H_t,\epsilon}(\gamma) 
= 
\int_{\gamma(\epsilon)} \frac{(\mathsf{v}^\epsilon)^2}{2\mathsf{v}}|{\bf T}_1^\epsilon {\bf T}_2^\epsilon| H_t^2 \, ds
.
\label{eq_def_F_H_epsilon}
\end{align}
To prove the continuity of the functional in~\eqref{eq_terme_IPP_pour_lsc} at $\gamma_\cdot$, we need to show:
\begin{equation}
\lim_{n\rightarrow\infty}\int_0^T F_{H_t,\epsilon}(\gamma_t^n)\, dt 
= 
\int_0^T F_{H_t,\epsilon}(\gamma_t)\, dt,
\qquad 
\lim_{n\rightarrow\infty}\int_0^T \tilde F_{H_t,\epsilon}(\gamma_t^n)\, dt 
= 
\int_0^T \tilde F_{H_t,\epsilon}(\gamma_t)\, dt
.
\label{eq_conv_F_H_t}
\end{equation}
We only deal with $F_{H_\cdot,\epsilon}$, $\tilde F_{H_\cdot,\epsilon}$ being similar. 
Note that there is $C(H)>0$ such that:
\begin{equation}
\forall \gamma\in\e,\forall t\leq T,\qquad \big|F_{H_t,\epsilon}(\gamma)\big|\leq C(H)|\gamma|.
\end{equation}
The functional $F_{H_t,\epsilon}$ is not continuous on $\e$ at each $t\in[0,T]$ because of the lack of continuity at the poles, 
so we cannot directly conclude through Lemma~\ref{lemm_continuite_implies_continuity_integral}. 
However, each curve with point-like pole is a point of continuity of $F_{H_t,\epsilon}$ for each $t$. 
The idea will be to use this fact and the above control of the length to restrict to times where $\gamma_t$ has point-like pole, then mimic the argument of Lemma~\ref{lemm_continuite_implies_continuity_integral} for those times.

By similar arguments as in the proof of Lemma~\ref{lemm_continuite_implies_continuity_integral}, 
the fact that $\lim_n\int_0^T\big||\gamma^n_t|-|\gamma_t|\big|\, dt=0$ implies that it is enough to prove:
\begin{equation}
\forall A>0,\qquad 
\lim_{n\rightarrow\infty}\int_0^T {\bf 1}\Big\{\big|F_{H_t,\epsilon}(\gamma_t^n)- F_{H_t,\epsilon}(\gamma_t)\big|\leq A\Big\}
\big|F_{H_t,\epsilon}(\gamma_t^n)- F_{H_t,\epsilon}(\gamma_t)\big|\, dt 
= 0
.
\end{equation}
This allows us to only prove convergence on a subset of times with length $T(1-o_n(1))$, 
which we chose as those times where poles of $(\gamma^n_\cdot)_n$ converge to those of $\gamma_\cdot$. 
In view of Lemma~\ref{lemm_elementary_cv_cont_fction_hausdorff}, it is enough to prove:
\begin{align}
\lim_{\zeta\rightarrow 0}\limsup_{n\rightarrow\infty}\int_0^T {\bf 1}\Big\{\max_{1\leq k\leq 4}\big\{\|L_k(\gamma^n_t)-L_k(\gamma_t)\|_1 + \|&R_k(\gamma^n_t)-L_k(\gamma_t)\|_1\big\}\leq \zeta\Big\}\nonumber\\
&\times \big|F_{H_t,\epsilon}(\gamma_t^n)- F_{H_t,\epsilon}(\gamma_t)\big|\, dt = 0.\label{eq_to_prove_convergence_F_H_t_interm}
\end{align}
Now that we have restricted to times where poles are well-behaved, 
the line integral in the definition~\eqref{eq_def_F_H_epsilon} will converge at each time. 
To prove it, we split the line integral between different regions and map the integrand the the SSEP as in Section~\ref{sec_heuristics}.

Let $\gamma\in \e$. Recall that, by definition of $\Omega\supset \e$, region $k$ ($1\leq k \leq 4$) of $\gamma$ is the graph of a $1$-Lipschitz function $f^k$ in a suitable reference frame (see Figure~\ref{fig_graph_function}):
\begin{figure}
\begin{center}
\includegraphics[width=13cm]{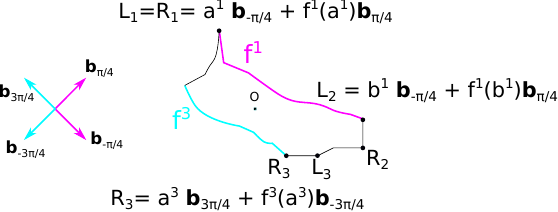} 
\caption{A curve in $\e$ and associated splitting in terms of the poles and the four regions $[R_k,L_{k+1}]$ (with $L_{k+1} := L_1$ if $k=4$) where the curve corresponds to the graph of a function $f^k$ in the reference frame $({\bf b}_{\pi/4-k\pi/2},{\bf b}_{\pi/4-(k-1)\pi/2})$ ($1\leq k\leq 4$). Region 1 is in magenta, regions 2 and 4 in black and region 3 in cyan. The coordinates of $L_1=R_1$ and $L_2$ are written in the the first reference frame $({\bf b}_{-\pi/4},{\bf b}_{\pi/4})$ and those of $R_3$ in the third reference frame $({\bf b}_{3\pi/4},{\bf b}_{-3\pi/4})$.\label{fig_graph_function}}
\end{center}
\end{figure}
\begin{align}
\gamma = \bigcup_{k=1}^4P_k(\gamma)\cup \bigcup_{k=1}^4\Big\{ u{\bf b}_{\pi/4-k\pi/2} + f^k(u){\bf b}_{\pi/4-(k-1)\pi/2}: u\in [a^k,b^k] \Big\},
\end{align}
where the extremities $a^k,b^k$ are chosen here to correspond to coordinates of $R_k(\gamma),L_{k+1}(\gamma)$ (with $L_{k+1}(\gamma):= L_1(\gamma)$ if $k=4$):
\begin{equation}
a^k 
= 
a^k(\gamma) 
:= 
R_k(\gamma)\cdot {\bf b}_{\pi/4-k\pi/2},
\quad 
b^k 
= 
b^k(\gamma)
:= 
L_{k+1}(\gamma)\cdot {\bf b}_{\pi/4-k\pi/2}
.
\label{eq_def_a_k_b_k}
\end{equation}
For $1\leq k\leq 4$, let $u\in(a^k,b^k)$ and write $\gamma(u)$ for the corresponding point of $\gamma$:
\begin{equation}
\gamma(u) 
:= 
u{\bf b}_{\pi/4-k\pi/2} + f^k(u){\bf b}_{\pi/4-(k-1)\pi/2}
.
\end{equation}
The derivative of $w\mapsto \gamma(w)$ at $u$, if it exists, is given by:
\begin{equation}
\frac{\sqrt{2}}{2}\Big({\bf b}_{\pi/4-k\pi/2} + \partial_u f^k(u){\bf b}_{\pi/4-(k-1)\pi/2}\Big) 
= 
{\bf t}(\gamma(u))
,
\end{equation}
where ${\bf t}$ is the $1$-normed tangent vector, defined in~\eqref{eq_def_t_first_region} in the first region. Recall that ${\bf T} := {\bf t}/\mathsf{v}$, where $\mathsf{v} := \|{\bf t}\|_2$ and the arclength coordinate $s(u)$ satisfy:
\begin{equation}
ds(u) 
= 
\sqrt{1+ (\partial_u f^k(u))^2}\, du 
= 
\sqrt{2}\mathsf{v}(\gamma(u))\, dx
. 
\end{equation}
For each $t\in[0,T]$ and each curve $\gamma\in\e$, one can then write for $F_{H_t,\epsilon}$:
\begin{equation}
F_{H_t,\epsilon}(\gamma) 
= 
\frac{1}{2\sqrt{2}}\sum_{k=1}^4 \int_{a^k +\epsilon/\sqrt{2}}^{b^k-\epsilon/\sqrt{2}} q^k_{H_t,\epsilon}(u,\gamma)\, du
,
\label{eq_decomp_F_H_epsilon_regions}
\end{equation}
with, for $u\in(a^k+\epsilon/\sqrt{2},b^k-\epsilon/\sqrt{2})$:
\begin{align}
q^k_{H_t,\epsilon}(u,\gamma) := \Big((\mathsf{v}^{\epsilon})^2\big[{\bf T}^\epsilon\cdot {\bf m}\big]&{\bf T}^\epsilon\Big)\big(\gamma(u)\big)\cdot \nabla H_t\big(\gamma(u)\big).
\end{align}
Recall from Definition~\ref{def_m(gamma)_gamma(eta)} that ${\bf m} = (\pm 1,\pm1)$ has a fixed value inside each region. Recall also the definitions of $v^\epsilon,{\bf T}^\epsilon$ and their relationship with ${\bf t}^\epsilon$ from~\eqref{eq_def_cont_versions_T_v}--\eqref{eq_exp_t_epsilon_sans_N_as_integral_sur_un_cadrant}: if $u\in(a^1+\epsilon/\sqrt{2},b^1-\epsilon/\sqrt{2})$ is in the first region for definiteness,
\begin{align}
{\bf t}^{\epsilon}(\gamma(u)) 
= 
\mathsf{v}^\epsilon(\gamma(u)){\bf T}^\epsilon(\gamma(u)) 
&= 
\frac{1}{\sqrt{2}\epsilon}\int_{u-\epsilon/\sqrt{2}}^{u+\epsilon/\sqrt{2}} {\bf t}(\gamma(w))\, dw
\nonumber\\
&= 
\frac{\sqrt{2}}{2}\Big({\bf b}_{-\pi/4} + \frac{f^1\big(u+\epsilon/\sqrt{2}\big)- f^1\big(u-\epsilon/\sqrt{2}\big)}{\epsilon \sqrt{2}}{\bf b}_{\pi/4}\Big) 
.
\label{eq_expression_developpee_t_epsilon}
\end{align}
For future reference, note that~\eqref{eq_expression_developpee_t_epsilon} implies the continuity of $q^k_{H_t,\epsilon}(u,\cdot)$ on $\e$ in Hausdorff distance for each $u\in(a^k+\epsilon/\sqrt{2},b^k-\epsilon/\sqrt{2})$, in the following sense: if $(\gamma^n)_n\subset\e,\gamma\in\e$ satisfy $\lim_nd_{\mathcal H}(\gamma^n,\gamma)=0$ and if $u\in \big(a^k(\gamma)+\epsilon/\sqrt{2},b^k(\gamma)-\epsilon/\sqrt{2}\big)$, then $\lim_nq^k_{H_t,\epsilon}(u,\gamma^n) = q^k_{H_t,\epsilon}(u,\gamma)$.  
Indeed, convergence in Hausdorff distance implies uniform convergence of $f^k$ around each such point $u$ and the expression~\eqref{eq_expression_developpee_t_epsilon} (or a similar one in region $k\neq 1$) yields the continuity.\\ 

We now apply the decomposition~\eqref{eq_decomp_F_H_epsilon_regions} to each $\gamma^n_t$  and use the control of the pole in~\eqref{eq_to_prove_convergence_F_H_t_interm} to express the integral on each region independently of $a^k(\gamma^n_t),b^k(\gamma^n_t)$. 
Thanks to the indicator function in~\eqref{eq_decomp_F_H_epsilon_regions}, one has for each $\zeta<\epsilon/2$, each time $t$ and each $k$ with $1\leq k \leq 4$:
\begin{equation}
a^k(\gamma_t) \leq a^k(\gamma^n_t)+\frac{\zeta}{2}\leq a^k(\gamma^n_t)+\frac{\epsilon}{2\sqrt{2}},\qquad
b^k(\gamma_t) -\frac{\epsilon}{2\sqrt{2}}\leq b^k(\gamma_t) -\frac{\zeta}{2}\leq b^k(\gamma^n_t).
\end{equation}
Since $q_{H_t,\epsilon}$ is bounded on $\e$:
\begin{equation}
\max_{1\leq k \leq 4}\sup_{t\leq T}\sup_{\gamma\in\e}\sup_{u\in(a^k+\epsilon/\sqrt{2},b^k-\epsilon/\sqrt{2})}|q_{H_t,\epsilon}(\gamma,u)|<\infty,\label{eq_bound_q}
\end{equation} 
it is enough to prove:
\begin{align}
&\lim_{\zeta\rightarrow\infty}\limsup_{n\rightarrow\infty}\int_0^T {\bf 1}\Big\{\max_{1\leq k\leq 4}\big\{\|L_k(\gamma^n_t)-L_k(\gamma_t)\|_1 + \|R_k(\gamma^n_t)-L_k(\gamma_t)\|_1\big\}\leq \zeta\Big\}
\nonumber\\
&\hspace{4cm}\times
\sum_{k=1}^4 \int_{a^k(\gamma_t) +\epsilon/\sqrt{2}}^{b^k(\gamma_t)-\epsilon/\sqrt{2}} \big|q^k_{H_t,\epsilon}(u,\gamma_t^n)-q^k_{H_t,\epsilon}(u,\gamma_t)\big|\, du\, dt
=
0
.\label{eq_continuite_line_integral_final}
\end{align}
At this point the dependence on the poles has been completely taken care of and it will be enough to study $q^k_{H_t,\epsilon}$. 
Let $I_\zeta$ be the set of times in the first line above. To prove~\eqref{eq_continuite_line_integral_final}, we prove: 
\begin{equation}
\lim_{n\rightarrow\infty}q^k_{H_t,\epsilon}(u,\gamma_t^n)= q^k_{H_t,\epsilon}(u,\gamma_t)\quad \text{for each }(t,u)\in I_\zeta\times(a^k(\gamma_t)+\epsilon/\sqrt{2},b^k(\gamma_t)-\epsilon/\sqrt{2}).\label{eq_convergence_q_H}
\end{equation}
Since $q^k_{H_t,\epsilon}$ is bounded on $\e$ for each $k$ by~\eqref{eq_bound_q},~\eqref{eq_convergence_q_H} and the dominated convergence theorem yield~\eqref{eq_continuite_line_integral_final}.\\
To prove~\eqref{eq_convergence_q_H}, notice that $\lim_nd_E(\gamma^n_\cdot,\gamma_\cdot)=0$ implies convergence in volume at almost every time: $\lim_nd_{L^1}(\gamma^n_t,\gamma_t)=0$ for almost every $t$. Now, by Lemma~\ref{lemm_volume_and_height_pole_implies_hausdorff}, for each point $x$ at 1-distance at least $\lambda>0$ from the poles, convergence in volume of $\gamma^n_t\cap B_1(x,\lambda)$ implies convergence in Hausdorff distance. But we already saw below~\eqref{eq_expression_developpee_t_epsilon} that $q^k_{H_t,\epsilon}(u,\cdot)$ is continuous in Hausdorff distance for any $u$ corresponding to a point at 1-distance at least $\epsilon$ from the poles. This implies~\eqref{eq_convergence_q_H} and concludes the proof of the continuity of~\eqref{eq_terme_IPP_pour_lsc}.\\

To conclude the proof of Proposition~\ref{prop_preuve_continuite_integrale_avec_v_moins_un}, it remains to establish:
\begin{equation}
\forall \gamma_\cdot \in E([0,T],\e),\qquad \lim_{\epsilon\rightarrow0}J_{H,\epsilon}^\beta(\gamma_\cdot) = J_{H}^\beta(\gamma_\cdot).
\end{equation}
Recalling Definition~\eqref{eq_def_J_H} of $J_H^\beta$, the above statement boils down to proving convergence of the terms in~\eqref{eq_terme_IPP_pour_lsc}. As $\gamma_\cdot\in E([0,T],\e)$ implies that $\int_0^T|\gamma_t|\, dt<\infty$, this is an immediate consequence of the expression~\eqref{eq_expression_developpee_t_epsilon} of the tangent vector, of the bound~\eqref{eq_bound_q} and of the dominated convergence theorem. This concludes the proof of Proposition~\ref{prop_preuve_continuite_integrale_avec_v_moins_un}.
\subsection{Exponential tightness}\label{appen_tightness}
In this section, we use the characterisation of compact sets of Proposition~\ref{prop_compact_sets_for_d_E} to prove exponential tightness of $\{\Prob^N_{\beta}:N\in \N_{\geq 1}\}$ for each $T>0,\beta>\log 2$ for trajectories in $E([0,T],\e)$. 
We first give a sufficient condition for exponential tightness, in Corollary~\ref{coro_sufficient_condition_for_tightness}, 
then prove that it is satisfied in the rest of the section. 
The main difficulty lies, once again, in the control of the poles. 
To start with, the following characterisation of convergence in the volume distance $d_{L^1}$ will be useful.

\begin{lemm}
Let $(G_\ell)_{\ell\geq 1}$ be a family of functions of $C^2_c(\R^2,\R)$, dense for the uniform norm $\sup_{\R^2}|\cdot|$ in the separable set $C_c(\R^2,R)$. Then $d_{L^1}$ is topologically equivalent to the distance $\tilde d_{L^1}$ defined as follows: if $\gamma^1,\gamma^2\in \Omega$ have associated droplet $\Gamma^1,\Gamma^2$,
\begin{equation}
\tilde d_{L^1}(\Gamma^1,\Gamma^2) = \sum_{\ell\geq 1}\frac{1}{2^\ell}\frac{\big|\big<\Gamma^1,G_{\ell}\big> - \big<\Gamma^2,G_\ell\big>\big|}{1+\big|\big<\Gamma^1,G_{\ell}\big> - \big<\Gamma^2,G_\ell\big>\big|}.
\end{equation}
In the sequel, $\tilde d_{L^1}$ and $d_{L^1}$ are identified.
\end{lemm}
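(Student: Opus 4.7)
The plan is to prove topological equivalence by establishing each implication separately, with the converse direction being the main difficulty.

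The direction $d_{L^1}$-convergence implies $\tilde d_{L^1}$-convergence is routine. For each $\ell$, the trivial bound
\begin{equation*}
\big|\big<\Gamma^1,G_\ell\big> - \big<\Gamma^2,G_\ell\big>\big| \leq \|G_\ell\|_\infty \, d_{L^1}(\Gamma^1,\Gamma^2)
\end{equation*}
shows that each term in the series defining $\tilde d_{L^1}(\Gamma^n,\Gamma)$ goes to $0$ whenever $d_{L^1}(\Gamma^n,\Gamma) \to 0$. Since each term is uniformly bounded by $2^{-\ell}$, dominated convergence in $\ell$ gives $\tilde d_{L^1}(\Gamma^n,\Gamma) \to 0$.

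For the converse, assume $\tilde d_{L^1}(\Gamma^n,\Gamma) \to 0$. Term-by-term, $\big<\Gamma^n,G_\ell\big> \to \big<\Gamma,G_\ell\big>$ for every $\ell$. Combined with uniform density of $(G_\ell)_\ell$ in $C_c(\R^2,\R)$ and a uniform bound on $|\Gamma^n|$ (automatic on $\e$, where all droplets lie in a common bounded region), this upgrades to $\big<\Gamma^n,G\big> \to \big<\Gamma,G\big>$ for every $G \in C_c(\R^2,\R)$.

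The main obstacle is then to upgrade this weak-type convergence against continuous test functions to genuine $L^1$-convergence of the indicators $\mathbf{1}_{\Gamma^n}$. The key point is the $\{0,1\}$-valued polarisation identity
\begin{equation*}
d_{L^1}(\Gamma^n,\Gamma) = |\Gamma^n| + |\Gamma| - 2|\Gamma^n \cap \Gamma|.
\end{equation*}
Given $\delta > 0$, pick by density of $C_c$ in $L^1$ a function $G_\delta \in C_c(\R^2,[0,1])$ with $\|G_\delta - \mathbf{1}_\Gamma\|_{L^1} \leq \delta$. Since $0 \leq \mathbf{1}_{\Gamma^n} \leq 1$, one has $\big|\int \mathbf{1}_{\Gamma^n}\mathbf{1}_\Gamma - \big<\Gamma^n,G_\delta\big>\big| \leq \delta$ and similarly $\big|\int \mathbf{1}_\Gamma G_\delta - |\Gamma|\big| \leq \delta$. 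Applying the weak convergence with test function $G_\delta$, and a second time with a cutoff function $\chi \in C_c(\R^2,[0,1])$ equal to $1$ on a fixed compact set containing all $\Gamma^n$ and $\Gamma$ (so that $\big<\Gamma^n,\chi\big> = |\Gamma^n| \to |\Gamma|$), one obtains $\limsup_n d_{L^1}(\Gamma^n,\Gamma) \leq C\delta$, and letting $\delta \to 0$ concludes.

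The only delicate point is the uniform tightness used to extract a compact set absorbing all supports: on $\e$ this is automatic by definition, which is exactly the setting in which this identification is used throughout the paper.
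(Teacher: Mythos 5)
The paper states this lemma without proof, so there is nothing to compare against directly; your argument is the natural one and both halves are essentially sound. The easy direction via $\big|\big<\Gamma^1,G_\ell\big>-\big<\Gamma^2,G_\ell\big>\big|\leq \|G_\ell\|_\infty d_{L^1}(\Gamma^1,\Gamma^2)$ and dominated convergence in $\ell$ is fine, and in the converse direction the polarisation identity $d_{L^1}(\Gamma^n,\Gamma)=|\Gamma^n|+|\Gamma|-2|\Gamma^n\cap\Gamma|$ together with the $L^1$-approximation of ${\bf 1}_\Gamma$ and the cutoff $\chi$ correctly upgrades weak convergence against $C_c$ test functions to $L^1$ convergence of indicators.

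The one genuine weak point is exactly the one you wave away at the end: the existence of a compact set containing all the $\Gamma^n$ is \emph{not} automatic on $\e$ (nor on $\Omega$, for which the lemma is stated). Membership in $\e$ only bounds $d_{L^1}(\gamma,\gamma^{\text{ref}})$, which does not prevent long thin protrusions: a droplet $\Gamma^n$ equal to $\Gamma^{\text{ref}}$ with a rectangular spike of width $\epsilon/n$ and length $n$ attached still satisfies Property~\ref{property_state_space} and lies in $\e$ for $\epsilon\leq r_0^2$, yet $d_{L^1}(\Gamma^n,\Gamma^{\text{ref}})=\epsilon$ for all $n$ while $\big<\Gamma^n,G\big>\to\big<\Gamma^{\text{ref}},G\big>$ for every compactly supported $G$ (the spike's intersection with any fixed compact set has vanishing area). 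So the two metrics are in fact \emph{not} topologically equivalent on all of $\Omega$ or $\e$, and no proof can avoid an extra hypothesis. What saves the lemma in the paper is that it is only ever invoked on events of the form $\{\sup_t|\gamma^N_t|\leq A\}$: a curve of $1$-length at most $A$ surrounding the origin has all extremal coordinates bounded by $A/2$ (via $|\gamma|=2(z_1-z_3)+2(z_2-z_4)$ with $z_1,z_2\geq 0\geq z_3,z_4$), so its droplet lies in $[-A/2,A/2]^2$ and your cutoff argument applies verbatim. You should state the boundedness hypothesis explicitly and attribute it to the length bound rather than to the definition of $\e$.
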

To prove exponential tightness, we replace the condition on the Hausdorff distance, in Proposition~\ref{prop_compact_sets_for_d_E}, by a condition on the positions of the extremities $L_k,R_k,1\leq k \leq 4$ of the poles. This condition, stated next, is more convenient to check at the microscopic level.
\begin{coro}[Sufficient condition for tightness]
\label{coro_sufficient_condition_for_tightness}
Let $T>0$. Assume that, for each $G\in C^2_c(\R^2)$ and each $\epsilon>0$,
\begin{align}
&\lim_{\eta\rightarrow 0}\limsup_{N\rightarrow\infty}
\frac{1}{N}\log \Prob^N_{\beta}\bigg(\gamma^N_\cdot \in  E([0,T],\e)\cap 
\bigg\{\gamma_\cdot :\sup_{|s-t|\leq \eta}\big|\big<\Gamma_t,G\big>-\big<\Gamma_s,G\big>\big|
\label{eq_exp_tightness_pour_une_fction_test}
\\
&\hspace{3.2cm} + \sup_{h\leq \eta}\sum_{k=1}^4\int_0^{T-h}\|L_k(\gamma_t)-L_k(\gamma_{t+h})\|_1 \, dt 
\geq 
\epsilon\bigg\}\bigg)
=
-\infty
.
\nonumber
\end{align}
Then for each $H\in\C$ and $q\in\N_{\geq 1}$, there are compact sets $K_q = K_q(H)\subset E([0,T],\Omega)$ such that:
\begin{equation}
\sup_{N}\frac{1}{N}\log \mathbb{P}^N_{\beta,H}\big(\gamma^N_\cdot\in E([0,T],\e)\cap (K_q)^c\big) \leq -q.
\end{equation}
\end{coro}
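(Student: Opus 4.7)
The plan is to verify, for each $q \in \N^*$, the sufficient conditions of Proposition~\ref{prop_compact_sets_for_d_E} on a set $K_q \subset E([0,T],\e)$ whose complement has $\Prob^N_{\beta,H}$-probability at most $e^{-qN}$ for large $N$. Since $\e$ is closed in $\Omega$ for both $d_{L^1}$ and $d_{\mathcal H}$ (Proposition~\ref{prop_e_closed}), the closure of such $K_q$ in $E([0,T],\Omega)$ is compact for $d_E$, yielding the conclusion.

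First, I would transfer the hypothesis~\eqref{eq_exp_tightness_pour_une_fction_test} from $\Prob^N_\beta$ to $\Prob^N_{\beta,H}$ via Corollary~\ref{coro_change_measure_sous_exp}: the family of events considered there is increasing in the threshold $A$ on the length, so the super-exponential decay survives the change of measure. Then I would control the uniform length by Lemma~\ref{lemm_tightness_sup_length}: pick $A_q$ so large that $\Prob^N_{\beta,H}\big(\gamma^N_\cdot \in E([0,T],\e),\ \sup_{t\leq T}|\gamma^N_t|\geq A_q\big) \leq e^{-(q+2)N}$ for $N$ large. The event $\{\sup_t|\gamma_t|\leq A_q\}$ will be one of the defining constraints of $K_q$ and gives condition~\eqref{eq_length_bound_theo_compactness}.

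Second, I would handle the Skorokhod modulus with respect to $\tilde d_{L^1}$. Fix the dense family $(G_\ell)_\ell \subset C^2_c(\R^2)$ defining $\tilde d_{L^1}$. Since the function $x \mapsto x/(1+x)$ is $1$-Lipschitz and bounded by $1$, for any integer $\ell_0$ and any two droplets $\Gamma^1,\Gamma^2$,
\begin{align*}
\tilde d_{L^1}(\Gamma^1,\Gamma^2) \leq \sum_{\ell=1}^{\ell_0} 2^{-\ell}\big|\big<\Gamma^1,G_\ell\big>-\big<\Gamma^2,G_\ell\big>\big| + 2^{-\ell_0}.
\end{align*}
Consequently, for each $q$ choose $\ell_q$ with $2^{-\ell_q}\leq 2^{-q-2}$; then apply the hypothesis to each of $G_1,\ldots,G_{\ell_q}$ with a common threshold $\epsilon_q/\ell_q$ to find $\eta_q>0$ such that the $\Prob^N_{\beta,H}$-probability of $\sup_{|s-t|\leq \eta_q}\sum_{\ell\leq \ell_q} 2^{-\ell}|\langle\Gamma_t,G_\ell\rangle-\langle\Gamma_s,G_\ell\rangle| \geq 2^{-q-2}$ is bounded by $e^{-(q+2)N}$ for large $N$. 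By a standard diagonal argument, intersecting these controls over a sequence $\eta_q \downarrow 0$ produces a single event of probability $\leq e^{-(q+1)N}$ on which $m^{L^1}_{\eta}(\gamma^N_\cdot) \to 0$ uniformly, giving the Skorokhod-modulus part of~\eqref{eq_condition_compacite_pour_d_E}.

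Third, and simultaneously, the hypothesis already delivers the pole part of~\eqref{eq_condition_compacite_pour_d_E}. Writing $L_k=w_k{\bf b}_{\text{horiz}}+z_k{\bf b}_{\text{vert}}$ in the appropriate axes, one has
\begin{align*}
\|L_k(\gamma_{t+h})-L_k(\gamma_t)\|_1 = |z_k(\gamma_{t+h})-z_k(\gamma_t)| + |w_k(\gamma_{t+h})-w_k(\gamma_t)|,
\end{align*}
so the time-integral condition in~\eqref{eq_condition_compacite_pour_d_E} follows directly by applying~\eqref{eq_exp_tightness_pour_une_fction_test} (after transferring to $\Prob^N_{\beta,H}$) with, say, $G\equiv 0$ and a suitable sequence of thresholds $\epsilon_q \downarrow 0$ along $\eta_q \downarrow 0$. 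Define $K_q$ as the intersection of the length bound, the Skorokhod-modulus bound and the pole-integral bound at all scales indexed by $q'\leq q$; a union bound yields $\Prob^N_{\beta,H}(K_q^c \cap E([0,T],\e)) \leq e^{-qN}$ for large $N$, and Proposition~\ref{prop_compact_sets_for_d_E} guarantees relative compactness of $K_q$ in $E([0,T],\e)$, hence in $E([0,T],\Omega)$.

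The only delicate point, which I expect to be the main obstacle, is the diagonal bookkeeping: one must choose the scales $(\eta_q,\epsilon_q,\ell_q)$ so that a \emph{countable} intersection of modulus-controls (indexed by the test functions $G_\ell$ and by successive refinements in $q$) still has super-exponentially small complement probability under the tilted measure. This is a standard Ascoli-type argument but requires that the super-exponential bound from the hypothesis be summable in $\ell$ for fixed $q$, which is ensured by choosing $\epsilon_q/\ell_q$ and then letting $\ell_q$ grow logarithmically with $q$.
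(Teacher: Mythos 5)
Your plan follows essentially the paper's strategy: invoke the hypothesis with a countable dense family $(G_\ell)$, combine with a length bound from Lemma~\ref{lemm_tightness_sup_length}, and verify the conditions of Proposition~\ref{prop_compact_sets_for_d_E}. The truncation trick you use to reduce $\tilde d_{L^1}$ to a finite sum of $\ell\leq\ell_q$ terms plus a deterministic remainder $2^{-\ell_q}$ is a slightly different bookkeeping device than the paper's nested intersection over all $(\ell,n)\in\N^*\times\N^*$ with rate $e^{-q n\ell N}$, but both organizations of the summations work. Your splitting of $\|L_k(\gamma_{t+h})-L_k(\gamma_t)\|_1$ into the $z_k,w_k$ components and the use of $G\equiv 0$ for the pole part is also the correct reading of the hypothesis.

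There is, however, a genuine gap: the conclusion is a bound on $\sup_{N\in\N^*}$, not a $\limsup_N$, while every one of your estimates is stated only ``for large $N$.'' For each fixed $(q,\ell,n)$ the hypothesis furnishes an $N_0(\eta)$ below which the super-exponential estimate is not yet available, so after your union bound the resulting $K_q$ controls the complement probability only for $N$ above some threshold $N_0(q)$. You still need to handle the finitely many $N\leq N_0(q)$. The paper does this explicitly: for fixed small $N$, the trajectory is valued in $\Omega^N_{\text{mic}}\cap\e\cap\{|\gamma|\leq A\}$, a compact metric space by Proposition~\ref{prop_e_closed}, so the Skorokhod space of càdlàg trajectories in it is Polish and $\Prob^N_{\beta,H}$ is automatically tight; shrinking $\eta$ to some $\eta'(q,\ell,n)\leq\eta$ (or, equivalently, enlarging $K_q$ slightly) ensures the bound holds for every $N$. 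Without this step your $K_q$ is not quite the compact set the corollary claims.

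A second, more minor imprecision: in the final paragraph you locate the bookkeeping difficulty in ``summability in $\ell$'' and propose $\ell_q$ growing logarithmically with $q$. But with your truncation $\ell$ is finite at each level $q$; the summability that must be arranged is over the scale indices $n$ (or $q'\leq q$) appearing in the nested modulus controls. This is what the paper's choice of rate $e^{-q n\ell N}$ is designed to deliver. Your diagonal argument can be made to work, but the summation variable is different from the one you identify.
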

\begin{proof}
As~\eqref{eq_exp_tightness_pour_une_fction_test} also holds under $\Prob^N_{\beta,H}$ for any $H\in\C$, we prove the corollary only for $H\equiv 0$.\\
Consider a sequence $G_\ell\in C^ 2_c(\R^ 2),\ell\geq 1$, dense for the uniform norm. According to~\eqref{eq_exp_tightness_pour_une_fction_test}, for each $q,n,\ell\in\N^ *$, there is $\eta = \eta(q,\ell,n)$ and $N_0 = N_0(\eta)$ such that:
\begin{align}
\sup_{N\geq N_0}&\frac{1}{N}\log\Prob^N_{\beta}\bigg(\gamma^N_\cdot\in E([0,T],\e)\cap
\bigg\{\sup_{|s-t|\leq \eta}\big|\big<\Gamma_t,G_\ell\big>-\big<\Gamma_s,G_\ell\big>\big|\label{eq_exp_tightness_quantitative_pour_une_fction_test} \\
&\hspace{1.2cm}+ \sup_{h\leq \eta}\sum_{k=1}^4\int_0^{T-h}\|L_k(\gamma_t)-L_k(\gamma_{t+h})\|_1\, dt \geq \frac{1}{n}\bigg\}\cap \Big\{\sup_{t\leq T}|\gamma_t|\leq \frac{q\ell n}{C(\beta)}\Big\}\bigg)\leq -qn\ell.\nonumber
\end{align}
For $N\leq N_0$, 
each of $\big<\Gamma^N_\cdot,G\big>$ $L_k(\gamma^N_\cdot)$ and $R_k(\gamma^N_\cdot)$ for $k$ with $1\leq k\leq 4$ is a càdlàg function when $\gamma^N_\cdot$ is a trajectory in the space of $\Omega^N_{\text{mic}}\cap \e\cap\{|\gamma|\leq qn\ell/C(\beta)\}$-valued trajectories that are càdlàg in Hausdorff distance. 
This set is complete and separable, as $\Omega^N_{\text{mic}}\cap \e\cap\{|\gamma|\leq qn\ell/C(\beta)\}$ is compact by Proposition~\ref{prop_e_closed}. 
As a result,~\eqref{eq_exp_tightness_quantitative_pour_une_fction_test} holds for $N\leq N_0$ as well up to choosing $\eta' = \eta'(q,\ell,n) \leq \eta$, hence for all $N$ in $\N_{\geq 1}$. 
For $G\in C^2_c(\R^2)$, 
let thus $m_\cdot^{L^1}\big(\big<\Gamma_\cdot,G\big>\big)$ be the Skorokhod modulus of continuity associated with the trajectory $\big(\big<\Gamma_t,G\big>\big)_t$. It satisfies:
\begin{equation}
\forall \theta>0,\qquad m_\theta^{L^1}\big(\big<\Gamma_\cdot,G\big>\big)\leq \sup_{|s-t|\leq \theta}\big|\big<\Gamma_t,G\big> - \big<\Gamma_s,G\big>\big|.
\end{equation}
Recall the control on the length obtained in Lemma~\ref{lemm_tightness_sup_length}, in particular the definition of $C(\beta)>0$. Define then $K_q = \bar U_q$, with $U_q$ as follows:
\begin{align}
U_q &:= \Big\{\sup_{t\leq T}|\gamma_t|\leq \frac{q}{C(\beta)}\Big\}\nonumber\\
&\qquad\cap \bigcap_{\ell,n\in\N^ *} \bigg\{m_{\eta'}^{L^1}\big(\big<\Gamma_\cdot,G_\ell\big>\big) + \sup_{h\leq \eta'}&\sum_{k=1}^4\int_0^{T-h}\|L_k(\gamma_t)-L_k(\gamma_{t+h})\|_1\, dt 
\leq 
\frac{1}{n}\bigg\}
.
\end{align}
By Proposition~\ref{prop_compact_sets_for_d_E} and Lemma~\ref{lemm_lien_hausdorff_et_volume_et_poles}, $K_q$ is compact. Moreover, it satisfies by construction:
\begin{equation}
\sup_{N\in\N_{\geq 1}}\frac{1}{N}\log \Prob^ N_{\beta}\big(\gamma^N_\cdot\in E([0,T],\e)\cap(K_q)^ c\big)\leq -q.
\end{equation}
This concludes the proof of exponential tightness inside $E([0,T],\e)$.
\end{proof}
We conclude the section by a proof of relative compactness of the laws of the dynamics for short time.
\begin{coro}\label{coro_tightness_small_time}
Let $\beta>\log 2$, $H\in\C$ and $(\mu^N)_N$ be a sequence of probability measures on $\big(\e,d_{L^1}\big)$ converging weakly to $\delta_{\gamma^{\mathrm{ref}}}$. 
Assume further that there is $t_0>0$ such that: 
\begin{equation}
\lim_{N\rightarrow\infty}\Prob^{\mu^N}_{\beta,H}(\gamma^N_\cdot\in E([0,t_0],\e)) = 1.\label{eq_tightness_pas_exponentielle_corollary}
\end{equation}
Still write $d_E$ for the distance~\eqref{eq_def_d_E_appen} defined on a time interval $[0,t_0]$. Then the set $\{\Prob^{\mu^N}_{\beta,H}: N\in\N_{\geq 1}\}$ is relatively compact in the set of probability measures on $\big(E([0,t_0],\Omega),d_E\big)$ and its limit points are supported on trajectories in $E([0,t_0],\e)$ that are continuous in $d_{L^1}$ distance.
\end{coro}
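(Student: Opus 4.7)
The plan is to invoke Prohorov's theorem and verify tightness of $\{\Prob^{\mu^N}_{\beta,H} : N \in \N^*\}$ in $(E([0,t_0],\Omega), d_E)$, then prove support inside $E([0,t_0],\e)$ on continuous trajectories by a limiting argument. Since the hypothesis gives $\lim_N \Prob^{\mu^N}_{\beta,H}(\gamma^N_\cdot \in E([0,t_0],\e)) = 1$, it suffices to verify the criteria of Proposition~\ref{prop_compact_sets_for_d_E} (reformulated via Corollary~\ref{coro_sufficient_condition_for_tightness} by replacing Hausdorff modulus by time increments of pole positions) in probability under $\Prob^{\mu^N}_{\beta,H}$, rather than in the exponential sense.

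First, I would establish the length control. By Lemma~\ref{lemm_tightness_sup_length}, for every $\varepsilon>0$ there is $A>0$ with $\Prob^N_{\beta,H}\big(\gamma^N_\cdot \in E([0,t_0],\e), \sup_{t\leq t_0}|\gamma^N_t|\geq A\big)\leq \varepsilon$, uniformly in $N$ large; combined with the hypothesis this gives tightness of $\sup_{t\leq t_0}|\gamma^N_t|$ under $\Prob^{\mu^N}_{\beta,H}$. Next, for each $G\in C^2_c(\R^2)$, I would use the semi-martingale decomposition
\begin{equation*}
\langle \Gamma^N_t, G\rangle = \langle \Gamma^N_0,G\rangle + \int_0^t N^2\lcal_{\beta,H}\langle \Gamma^N_s,G\rangle\, ds + M^{N,G}_t,
\end{equation*}
and the computations of Section~\ref{sec_action_gen_sur_volume} to show that on $E([0,t_0],\e)\cap\{\sup |\gamma^N_\cdot|\leq A\}$ the finite-variation part is $1$-Lipschitz in $t$ with Lipschitz constant $C(G,H,A)$, while the predictable quadratic variation $\langle M^{N,G}\rangle_t$ is $O(N^{-1})$ on the same event. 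A standard application of Doob's maximal inequality then yields the Skorokhod modulus condition in $d_{L^1}$, uniformly on a countable dense family $(G_\ell)_\ell \subset C^2_c(\R^2)$.

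The main obstacle is the control of the time increments of the pole positions, required in~\eqref{eq_exp_tightness_pour_une_fction_test}. I would decompose $L_k = (w_k,z_k)$ (in suitable coordinates) and treat each coordinate separately. For the extremal coordinate $z_k$, jumps occur only at the poles, so using Lemma~\ref{lemm_size_pole_ds_calcul_action_gen} (bounded pole sizes with high probability) I would write $z_k$ as the sum of a martingale with $O(h)$ quadratic variation over an interval of length $h$ and a drift of order $h$, giving $\E^{\mu^N}_{\beta,H}[|z_k(t+h)-z_k(t)|\wedge 1]\leq C h^{1/2}+o_N(1)$, then integrate in $t$. The lateral coordinate $w_k$ is subtler because it can jump without bound when the pole is nearly flat; however, the compactness estimates of Lemmas~\ref{lemm_width_compactness_pis2}--\ref{lemm_height_compactness_pis2} ensure that, with probability tending to one, the relevant neighbourhood of the pole has bounded width, so $w_k$ behaves as a bounded Markov chain whose $L^1$-continuity in time follows from the same martingale argument. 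Summing over $k$, this yields $\E^{\mu^N}_{\beta,H}\big[\sup_{h\leq \eta}\sum_{k=1}^4\int_0^{t_0-h}\|L_k(\gamma^N_{t+h})-L_k(\gamma^N_t)\|_1\,dt\big]\to 0$ as $\eta\to 0$, uniformly in $N$.

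Finally, relative compactness having been established, let $\Prob^*_{\beta,H}$ be a limit point. The inclusion $E([0,t_0],\e)\subset E([0,t_0],\Omega)$ is closed by Proposition~\ref{prop_e_closed} and the lower semi-continuity of $\gamma_\cdot\mapsto\int_0^{t_0}|\gamma_t|dt$ (Lipschitz in Hausdorff, hence via Fatou from $d_E$-convergence); combined with $\liminf_N\Prob^{\mu^N}_{\beta,H}(E([0,t_0],\e))=1$ this shows $\Prob^*_{\beta,H}$ is supported on $E([0,t_0],\e)$. For the continuity in $d_{L^1}$, each elementary jump of the contour dynamics changes the volume by at most $2N^{-2}$, while the total number of jumps on $[0,t_0]$ is $O_\Prob(N)$ on the event $\{\sup|\gamma^N_\cdot|\leq A\}$ (since the generator of the volume process restricted to nonzero volume changes has total rate $O(N)$ when $|\gamma^N|=O(1)$). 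Consequently the maximum jump of $t\mapsto\langle \Gamma^N_t, G\rangle$ is $o_N(1)$ in probability, so any limit point is concentrated on trajectories continuous in $d_{L^1}$.
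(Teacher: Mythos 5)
Your overall architecture (Prokhorov, length control, volume equicontinuity via the semi-martingale decomposition, identification of limit points) is reasonable, and the volume part does essentially parallel Lemma~\ref{lemm_exp_tightness}. The paper's own proof is softer: it simply reuses the already-established exponential tightness inside $E([0,t_0],\e)$ (Corollary~\ref{coro_sufficient_condition_for_tightness}), intersects the compacts $K_q$ with a compact set of initial conditions obtained from Prokhorov's converse applied to $(\mu^N)_N$, and treats the finitely many small $N$ separately as tight measures on a Polish space. The decisive issue is your treatment of the pole coordinates, which is where the real difficulty of the corollary sits and where your argument breaks down.

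Concretely: for $z_1$ the generator gives $N^2\lcal_\beta z_1 = N\big[e^{-2\beta}(p_1-1)-{\bf 1}_{p_1=2}{\bf 1}_{(\gamma^N)^{-,1}\in\Omega^N_{\text{mic}}}\big]$, since each regrowth (rate $N^2e^{-2\beta}$ at each of the $p_1-1$ admissible sites) raises $z_1$ by $1/N$ and each pole deletion lowers it by $1/N$. The finite-variation part of $z_1(t+h)-z_1(t)$ is therefore of order $Nh$, not $h$ as you claim. The cancellation $e^{-2\beta}(p_1-1)\approx{\bf 1}_{p_1=2}$ needed to tame this drift is only available in the time-integrated, superexponential form of Lemma~\ref{lemm_local_eq_sec_martingales}, which bounds $\big|\int_0^T W\,dt\big|$ by a fixed $\delta$; multiplied by $N$ this yields $N\delta$, which is useless pointwise in $t$ and cannot be upgraded to a bound that is $o(1)$ uniformly over all windows $[t,t+h]$. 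The objection is worse for $w_k$, which can jump by a macroscopic amount in a single pole-deletion event when the level below the pole is wide; the compactness Lemmas~\ref{lemm_width_compactness_pis2}--\ref{lemm_height_compactness_pis2} control this only through the Dirichlet form, i.e. in a time-averaged sense, not pathwise. This is precisely why the paper abandons the drift/martingale route for the poles: Lemmas~\ref{lemm_deviations_g_pm_alpha}--\ref{lemm_tightness_y_max} and Remark~\ref{rmk_tightness_w_1} show that the dynamics pins the volume of an $\alpha$-slab beneath each pole to $\approx\alpha^2(e^\beta-1)$ at almost every time, so a displacement of $L_k$ by $\epsilon$ forces a volume change of order $\alpha^2$, which is excluded over short time increments by the volume equicontinuity you already have. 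Your pole step must be replaced by this (or an equivalent) slaving argument. A secondary slip: the number of volume-changing jumps on $[0,t_0]$ is $O(N^3)$, not $O(N)$ (total flip rate $N^2$ times $O(N)$ corners); the correct reason limit points are $d_{L^1}$-continuous is that each individual jump of $\langle\Gamma^N_t,G\rangle$ has size $O(N^{-2})$, so the maximal jump vanishes.
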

\begin{proof}
By the direct half of  Prokhorov theorem (Theorem 5.1 in \cite{Billingsley1999}), relative compactness is implied by tightness. Let us therefore prove that $\{\Prob^N_{\beta,H}\}_N$ is tight in $E([0,t_0],\Omega)$. The proof is a bit indirect because Corollary~\ref{coro_sufficient_condition_for_tightness} only gives a good control of trajectories in $E([0,t_0],\e)$, not in $E([0,t_0],\Omega)$. 
For each measurable set $B\subset E([0,t_0],\Omega)$, write:
\begin{align}
\Prob^{\mu^N}_{\beta,H}(\gamma^N_\cdot\in B) &= \Prob^{\mu^N}_{\beta,H}\big(\gamma^N_\cdot\in B\cap E([0,t_0],\e)\big) + \Prob^{\mu^N}_{\beta,H}\big(\gamma^N_\cdot\in B\cap E([0,t_0],\e)^c\big). 
\end{align}
Fix $\eta>0$. By Assumption~\eqref{eq_tightness_pas_exponentielle_corollary}, there is $N_0(\eta)\in\N_{\geq 1}$ such that:
\begin{equation}
\forall N\geq N_0(\eta),\qquad 
\Prob^{\mu^N}_{\beta,H}\big(\gamma^N_\cdot\in E([0,t_0],\e)^c\big) 
\leq 
\eta
.
\end{equation}
On the other hand, the initial conditions $(\mu^N)_N$ are probability measures on $\e$, which is separable and complete for $d_{L^1}$ (seeing $\e$ as a closed subset of $L^1(\R^2)$ by identifying curves with the indicator functions of their associated droplets). It follows from the converse half of Prokhorov's theorem (Theorem 5.2 in \cite{Billingsley1999}) that $(\mu^N)_N$ is tight. For each $\eta>0$, let thus $K^0_\eta\subset \e$ be a compact set for the distance $d_{L^1}$, such that:
\begin{equation}
\forall N\in\N_{\geq 1},\qquad \Prob^{\mu^N}_{\beta,H}\Big(\gamma^N_0\in\big(K^0_\eta\big)^c\Big) 
= 
\mu^N \Big(\gamma^N\in\big(K^0_\eta\big)^c\Big)
\leq 
\eta
.
\end{equation}
Then, for each $q\in\N_{\geq 1}$ with $e^{-q}\leq \eta$, recalling the definition  of $K_q$ from Corollary~\ref{coro_sufficient_condition_for_tightness}:
\begin{equation}
\forall N\in\N_{\geq 1},\qquad 
\Prob^{\mu^N}_{\beta,H}\Big(\gamma^N_\cdot\in E([0,t_0],\e)\cap \big(K^0_\eta\cap K_q\big)^c\Big)
\leq 
2\eta
.
\end{equation}
As a result, for $N\geq N_0(\eta)$, we have:
\begin{equation}
\Prob^{\mu^N}_{\beta,H}\Big(\gamma^N_\cdot \in \big(K^0_\eta\cap K_q\big)^c\Big)\leq 3\eta.
\end{equation}
Now, each $\Prob^{\mu^N}_{\beta,H}$ for $N<N_0(\eta)$ is a probability measure on the complete, separable set $D_{\mathcal H}([0,t_0],\Omega)$ of càdlàg trajectories in Hausdorff distance with values in $\Omega$. In particular, for each $N<N_0(\eta)$, $\Prob^{\mu^N}_{\beta,H}$ is tight: there is a compact set $K^N_\eta\subset D_{\mathcal H}([0,t_0],\Omega)$ such that:
\begin{equation}
\Prob^{\mu^N}_{\beta,H}\Big(\gamma^N_{\cdot} \in \big(K^N_\eta\big)^c\Big)\leq \eta,\qquad N<N_0(\eta).
\end{equation}
Since convergence in $D_{\mathcal H}([0,t_0],\Omega)$ implies convergence for $d_E$, each $K^N_\eta$ is also a compact set for $d_E$, whence the proof of tightness in $E([0,t_0],\Omega)$:
\begin{equation}
\forall N\in\N_{\geq 1},\qquad \Prob^{\mu^N}_{\beta,H}\Big(\gamma^N_\cdot\in  \big(K^0_\eta\cap K_q\big)^c\cap\bigcap_{M<N_0(\eta)}\big(K^M_\eta\big)^c\Big)\leq 3\eta.
\end{equation}
It remains to check that $\{\Prob^{\mu^N}_{\beta,H}:N\in\N_{\geq 1}\}$ concentrates on trajectories that are continuous in volume, $d_{L^1}$ distance. This is a standard consequence of the estimate~\eqref{eq_exp_tightness_pour_une_fction_test}, so we conclude the proof here.
\end{proof}
\subsubsection{Estimate in $L^1(\R^2)$ topology}\label{exp_tightness_volume}
In this section, we prove exponential tightness in volume, i.e. in $L^1(\R^2)$. 
\begin{lemm}\label{lemm_exp_tightness}
Let $T>0$ and $G\in C^2_c(\R^2)$. Then, for each $\epsilon>0$:
\begin{align}
\lim_{\eta\rightarrow 0}\limsup_{N\rightarrow\infty}\frac{1}{N}\log \Prob^N_{\beta}\Big(\gamma^N_\cdot\in E([0,T],\e)\cap\Big\{\sup_{|t-s|\leq \eta} \big|\big<\Gamma^N_t,G\big>-\big<\Gamma^N_s,G\big>\big|>\epsilon\Big\}\Big) 
=
-\infty
.
\end{align}
The result also holds under $\Prob^N_{\beta,H}$ for $H\in\C$ by Corollary~\ref{coro_change_measure_sous_exp}.
\end{lemm}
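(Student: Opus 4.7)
My plan is to prove this estimate via an exponential martingale argument leveraging the computations of Section~\ref{sec_relevant_martingales}. Write $X_t := \big<\Gamma^N_t, G\big>$. By partitioning $[0,T]$ into $\lceil T/\eta\rceil$ intervals of length $\eta$ and applying a union bound (the polynomial prefactor $T/\eta$ is harmless against super-exponential decay), it suffices to show
\begin{equation*}
\limsup_{\eta \to 0}\limsup_{N\to\infty} \frac{1}{N}\log \Prob^N_{\beta}\Big(\sup_{t \in [s, s+\eta]}|X_t - X_s| > \epsilon/3,\ \gamma^N_\cdot \in E([0,T],\e)\Big) = -\infty
\end{equation*}
uniformly in $s \in [0, T-\eta]$. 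Positive and negative excursions of $X_t - X_s$ are handled symmetrically, by running the argument below with $G$ and with $-G$.

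For the positive side, fix $\theta > 0$ and consider the mean-$1$ positive martingale
\begin{equation*}
M^\theta_t := \exp\Big[N\theta(X_t - X_s) - \int_s^t e^{-N\theta X_u}N^2\lcal_\beta e^{N\theta X_u}du\Big], \qquad t \in [s, s+\eta].
\end{equation*}
Doob's maximal inequality gives $\Prob^N_\beta(\sup_{t \in [s,s+\eta]}\log M^\theta_t \geq N\lambda \mid \mathcal F_s) \leq e^{-N\lambda}$ for every $\lambda > 0$. To convert this into a bound on $\sup_t(X_t - X_s)$, I need to control the drift-correction integral inside the exponent. Specialising Proposition~\ref{prop_action_gen_micro} to the time-independent bias $H_t \equiv \theta G \in \C$, on the good set $\mathcal G := \tilde Z(\beta, \theta G, \delta, \epsilon') \cap \{\sup_{u \leq T}|\gamma^N_u|\leq A\}$ the integrand decomposes as an explicit combination of pole terms of order $O(1)$ and bulk line integrals of order $O(|\gamma^N_u|)$, plus an error of size at most $2\delta + C(G,\theta)(\epsilon' + A/N)$ per unit time. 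Consequently there is a constant $C(G,\theta,A)$ such that $\big|\int_s^{s+\eta}(\cdots)du\big| \leq C(G,\theta,A)N\eta + 2N\delta$ on $\mathcal G$, and $\Prob^N_\beta(\mathcal G^c \cap E([0,T],\e))$ is super-exponentially small in $N$ by Lemma~\ref{lemm_tightness_sup_length} combined with the super-exponential estimate on $\tilde Z^c$ in Proposition~\ref{prop_action_gen_micro}.

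Given arbitrary $M > 0$, I choose $\theta = 8M/\epsilon$, $\lambda = \theta\epsilon/6$, $\delta = 1/M$, and then $\eta$ small enough that $C(G,\theta,A)\eta \leq \theta\epsilon/12$. On $\mathcal G$, the event $\{\sup_{t \in [s,s+\eta]}(X_t - X_s) > \epsilon/3\}$ forces $\sup_t \log M^\theta_t \geq N\theta\epsilon/3 - C(G,\theta,A)N\eta - 2N\delta \geq N\lambda$, whose conditional probability is at most $e^{-N\lambda} \leq e^{-NM}$. The symmetric bound on $\sup_t(X_s - X_t)$ follows by replacing $\theta G$ with $-\theta G$, and extension to $\Prob^N_{\beta,H}$ for $H \in \C$ is granted by Corollary~\ref{coro_change_measure_sous_exp}. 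The principal technical point is the $O(N\eta)$ bound on the drift-correction integral: a naive count of the $O(N^2|V(\gamma^N)|)$ jumps with each factor $e^{N\theta\Delta_x}-1$ would give $O(N^2\eta\sup|\gamma^N|)$, and the improvement to $O(N\eta)$ crucially relies on the integration-by-parts cancellations of Proposition~\ref{prop_action_gen_micro} that turn the bulk sum into a line integral of order $|\gamma^N|$, together with the super-exponential control of the pole sizes $p_k(\gamma^N)$ established in Section~\ref{app_behaviour_pole}.
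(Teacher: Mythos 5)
Your proposal is correct and follows the same standard exponential-martingale route (Doob's maximal inequality applied to the Feynman--Kac martingale, with a union bound over sub-intervals) that the paper's omitted proof references via Chapter~10 of \cite{Kipnis1999}. One small simplification worth noting: the $O(N\eta)$ bound on $\int_s^{s+\eta}e^{-N\theta X_u}N^2\lcal_\beta e^{N\theta X_u}\,du$ is already delivered pointwise by Corollary~\ref{coro_bound_RD} (which gives $\big|e^{-N\theta X_u}N^2\lcal_\beta e^{N\theta X_u}\big|\leq C(G,\theta)N|\gamma^N_u|$ for any curve in $\Omega^N_{\text{mic}}\cap\e$), so the restriction to $\tilde Z(\beta,\theta G,\delta,\epsilon')$ and the resulting $2N\delta$ term are unnecessary --- and the intermediate claim that the Proposition~\ref{prop_action_gen_micro} error bound of $2\delta$ applies ``per unit time'' on a sub-interval $[s,s+\eta]$ is not what that proposition asserts (it only controls the integral over all of $[0,T]$), although this looseness does not damage the final estimate.
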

\begin{proof}
Compared to Chapter 10 in \cite{Kipnis1999}, the only subtleties to prove Lemma~\ref{lemm_exp_tightness} are in the introduction of the condition $E([0,T],\e)$ and in the control of the change in volume induced by the motion of the poles. As these do not present any particular difficulty, the proof is omitted.
\end{proof}
\subsubsection{Precise control of the slope and volume around the poles}\label{sec_precise_control_slope_volume}
In this section and the next, 
we prove the estimate on the poles appearing in~\eqref{eq_exp_tightness_pour_une_fction_test}. 
As preliminary, we prove in this section that the volume beneath each pole is fixed by the reservoir-like behaviour induced by the dynamics. 
This will be used in Section~\ref{sec_tightness_poles} to argue that a displacement of the poles must result in a change in volume, which is unlikely for short time by Lemma~\ref{lemm_exp_tightness}.

The estimate of the volume beneath a pole relies on the microscopic estimate of the slope at the pole, obtained in Corollary~\ref{coro_1_2_blocks_deviations_slope}. All results are stated for $\Prob^N_\beta$ but apply to $\Prob^N_{\beta,H}$ for $H\in\C$ by Corollary~\ref{coro_change_measure_sous_exp}.
\begin{lemm}[Control of the deviations of the width at distance $\alpha>0$ below the pole]\label{lemm_deviations_g_pm_alpha}
Let $\beta>\log 2$. 
For $\alpha>0$ and $\gamma\in\e$, let $g^+(\alpha) = g^+(\alpha) (\gamma)$ be the width of the horizontal segment of $\gamma$ at height $z_1(\gamma)-\alpha$ to the right of $L_1(\gamma)$ (see Figure~\ref{fig_link_slope_width}). Define similarly $g^-(\alpha)$ to the left of $L_1(\gamma)$. For each $\delta,\eta>0$:
\begin{align}
\lim_{\alpha\rightarrow0}\limsup_{N\rightarrow\infty}\frac{1}{N}\log \Prob^ {N}_{\beta}\bigg(\gamma^N_\cdot\in E([0,T],\e)\cap\Big\{\frac{1}{T}\int_0^ {T} {\bf 1}_{|\alpha^{-1}g^{\pm}(\alpha)-(e^ {\beta}-1)|\geq \delta}\, dt >\eta\Big\}\bigg)
=
-\infty
.
\end{align}
\end{lemm}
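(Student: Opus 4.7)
The plan is to relate the macroscopic width $g^+(\alpha)$ at depth $\alpha$ below the north pole to the microscopic slope on the right of $L_1$, and then invoke the slope concentration estimate of Corollary~\ref{coro_1_2_blocks_deviations_slope}. The case of $g^-$ is symmetric, so I focus on $g^+$. Fix a microscopic curve $\gamma^N \in \Omega^N_{\text{mic}} \cap \e$ with $\beta > 1$, and enumerate the edges to the right of $L_1(\gamma^N)$ in clockwise order with edge labels $\xi_1, \xi_2, \ldots$ ($\xi_j = 1$ if the $j$-th edge is vertical). Let $n^\star(\alpha) = n^\star(\alpha,\gamma^N)$ be the smallest integer $n$ such that $\sum_{j=1}^n \xi_j \geq \lfloor \alpha N\rfloor$. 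Because microscopic curves in $\Omega^N_{\text{mic}}$ satisfy Property~\ref{property_state_space}, the portion of $\gamma^N$ to the right of $L_1$ first reaches the horizontal line at ordinate $z_1(\gamma^N) - \alpha$ after exactly $\lfloor \alpha N\rfloor$ vertical edges and $n^\star(\alpha) - \lfloor \alpha N\rfloor$ horizontal ones, hence $g^+(\alpha) = (n^\star(\alpha) - \lfloor \alpha N\rfloor)/N + O(1/N)$, and equivalently $\alpha^{-1} g^+(\alpha) - (e^\beta - 1) = \alpha^{-1}(n^\star(\alpha)/N - \alpha e^\beta) + O(1/(\alpha N))$.

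Next, I would sandwich $n^\star(\alpha)$ between two deterministic macroscopic boxes. Fix $\delta > 0$, and choose $\delta' > 0$ small enough that $|\alpha^{-1} g^+(\alpha) - (e^\beta - 1)| \leq \delta$ whenever $n^\star(\alpha) \in (\lfloor \epsilon_- N\rfloor, \lfloor \epsilon_+ N\rfloor]$, with $\epsilon_\pm := \alpha e^\beta (1 \pm \delta')$. Set
\begin{equation*}
A^\pm(\gamma^N) := \Big\{\big|\xi^{+,\lfloor \epsilon_\pm N\rfloor}_{L_1(\gamma^N)} - e^{-\beta}\big| \leq \delta'/2\Big\}.
\end{equation*}
On $A^+$, the number of vertical edges among the first $\lfloor \epsilon_+ N\rfloor$ on the right of $L_1$ is at least $(e^{-\beta} - \delta'/2)\epsilon_+ N$, which exceeds $\alpha N$ as soon as $\delta'$ is small, so $n^\star(\alpha) \leq \lfloor \epsilon_+ N\rfloor$. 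Symmetrically on $A^-$ one gets $n^\star(\alpha) > \lfloor \epsilon_- N\rfloor$, and on $A^+ \cap A^-$ the desired bound $|\alpha^{-1} g^+(\alpha) - (e^\beta - 1)| \leq \delta$ holds for $\alpha N$ large (absorbing the $O(1/(\alpha N))$ error).

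It follows that, at each time $t$ and for $\alpha N$ large enough,
\begin{equation*}
\big\{|\alpha^{-1} g^+(\alpha)(\gamma^N_t) - (e^\beta - 1)| \geq \delta\big\} \subset A^+(\gamma^N_t)^c \cup A^-(\gamma^N_t)^c,
\end{equation*}
so that a union bound reduces the lemma to the slope estimate
\begin{equation*}
\limsup_{\alpha \to 0} \limsup_{N\to\infty} \frac{1}{N}\log \Prob^N_\beta\Big(\gamma^N_\cdot \in E([0,T],\e); \frac{1}{T}\int_0^T {\bf 1}_{|\xi^{+,\lfloor \epsilon_\pm N\rfloor}_{L_1(\gamma^N_t)} - e^{-\beta}| \geq \delta'/2} dt > \eta/2\Big) = -\infty,
\end{equation*}
which is precisely what one obtains by combining the one- and two-block parts of Corollary~\ref{coro_1_2_blocks_deviations_slope}, since $\epsilon_\pm \to 0$ as $\alpha \to 0$.

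The delicate point is that the macroscopic box size $\lfloor \epsilon_\pm N\rfloor$ whose slope must be controlled itself shrinks as $\alpha \to 0$, so the argument relies crucially on the fact that Corollary~\ref{coro_1_2_blocks_deviations_slope} provides uniform super-exponential decay for all sufficiently small $\epsilon$ (the two-block estimate there is stated with the $\limsup_\epsilon$ taken inside, precisely to accommodate this). A secondary subtlety is to justify that the macroscopic width $g^+(\alpha)$, a continuous functional of $\gamma \in \e$ in Hausdorff distance, coincides with the microscopic construction used here up to $O(1/N)$; this follows from Property~\ref{property_state_space} since each microscopic curve in $\Omega^N_{\text{mic}}$ is itself in $\e$, and its region structure makes the identification unambiguous.
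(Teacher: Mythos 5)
Your approach is essentially the paper's: fix the slope $\xi^{+,\,\cdot}_{L_1}$ to $e^{-\beta}$ over boxes whose size scales with $\alpha$, using Corollary~\ref{coro_1_2_blocks_deviations_slope}, and then convert that slope control into two-sided bounds on $g^+(\alpha)$. You also correctly identify the subtle point that the box size must shrink with $\alpha$, and that the inner $\limsup_{\epsilon\to 0}$ in the two-block part of the corollary is precisely what makes this legitimate; the paper's proof uses the same observation via the parameters $\zeta^1,\zeta^2\propto\alpha$.

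There is, however, a quantitative misstep you should repair. With $\epsilon_+ = \alpha e^\beta(1+\delta')$ and slope tolerance $\delta'/2$, your claim that the vertical-edge count $(e^{-\beta}-\delta'/2)\epsilon_+N$ exceeds $\alpha N$ for small $\delta'$ is false: expanding, $(e^{-\beta}-\delta'/2)e^\beta(1+\delta') = 1 + \delta'(1-e^\beta/2) - e^\beta(\delta')^2/2$, and since $e^\beta>2$ for $\beta>1$, the coefficient of $\delta'$ is negative, so the quantity is $<1$ for small $\delta'$. The lower bound on $A^-$ has the mirror problem. One fix is to take the slope tolerance in $A^\pm$ to be of order $e^{-\beta}\delta'$ rather than $\delta'/2$. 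The paper sidesteps this altogether by choosing the box size adaptively: it sets $\zeta^1$ so that $(e^{-\beta}-\theta)\zeta^1 = \alpha$ exactly, which makes the implication $\xi^{+,\zeta^1N}_{L_1}\ge e^{-\beta}-\theta\Rightarrow \zeta^1\xi^{+,\zeta^1N}_{L_1}\ge\alpha$ tautological, and similarly for $\zeta^2$. Tying the box size to the tolerance in this way is slightly cleaner than fixing both independently and then juggling constants as you attempt to do.
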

\begin{proof}
The proof is a formalisation of Figure~\ref{fig_link_slope_width}: 
since the slope on both sides of the pole is fixed by Corollary~\ref{coro_1_2_blocks_deviations_slope}, 
we can obtain upper and lower bounds on $g^\pm$ in terms of $\beta$. \\  
Take $\zeta^1,\zeta^2>0$ to be determined later and $\theta>0$ which will be small. The proof of the result is similar for $g^+$ and $g^-$, so we focus on $g^+$. It is sufficient to prove:
\begin{align}
&\limsup_{\zeta^1,\zeta^2\rightarrow 0}\limsup_{\alpha\rightarrow0}\limsup_{N\rightarrow\infty}\frac{1}{N}\log \Prob^ {N}_{\beta}\bigg(\gamma_\cdot^N \in E([0,T],\e);\\
&\hspace{3cm}\frac{1}{T}\int_0^ {T} {\bf 1}_{|\alpha^{-1}g^{+}(\alpha)-(e^ {\beta}-1)|\geq \delta}{\bf 1}_{|\xi^{+,\zeta^1 N}_{L_1}-e^{-\beta}|\leq \theta}{\bf 1}_{|\xi^{+,\zeta^2 N}_{L_1}-e^{-\beta}|\leq \theta}\,  dt 
>\eta/3\bigg)
=
-\infty
.
\nonumber
\end{align}
Consider the event bearing on $\xi^{+,\zeta^1 N}_{L_1}$. It enforces:
\begin{equation}
\xi^{+,\zeta^1 N}_{L_1} \in[e^{-\beta}-\theta,e^{-\beta}+\theta].
\end{equation}
\begin{figure}
\begin{center}
\includegraphics[width=13cm]{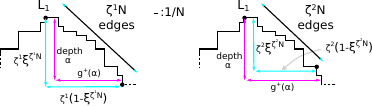} 
\caption{Neighbourhood of the north pole of a microscopic curve for which $g^+(\alpha)$ is drawn (horizontal magenta arrow). On the left figure, black dots delimit the $\zeta^1 N$ edges to the right of $L_1$, with $\zeta^1$ chosen to ensure $\zeta^1\xi^{+,\zeta^1 N}\geq \alpha$ (this quantity represented by cyan arrows). On the right figure, $\zeta^2$ is similarly chosen so that $\zeta^2\xi^{+,\zeta^2 N}\leq \alpha$ (in cyan arrows again). Since bounds on $\zeta^1\xi^{+,\zeta^1 N},\zeta^2\xi^{+,\zeta^2 N}$ are available in terms of $\beta$, $g^+(\alpha)$ can be bounded. \label{fig_link_slope_width}}
\end{center}
\end{figure}
Choose $\zeta^1$ such that $(e^{-\beta}-\theta)\zeta^1=\alpha$. Then $\zeta^1\xi^{+,\zeta^1 N}_{L_1}\geq \alpha$ (see Figure~\ref{fig_link_slope_width}). By definition, $g^+(\alpha)$ must thus be smaller than $\zeta^1(1-\xi^{+,\zeta^1 N}_{L_1})$:
\begin{align}
\xi^{+,\zeta^1 N}_{L_1} &\in[e^{-\beta}-\theta,e^{-\beta}+\theta]\text{ and }(e^{-\beta}-\theta)\zeta^1=\alpha\nonumber\\
&\quad \Rightarrow\quad \alpha^{-1}g^+(\alpha) \leq \frac{1-e^{-\beta}+\theta}{e^{-\beta}-\theta}=e^{\beta}-1+O(\theta),
\end{align}
where $O(\theta)$ is a positive function. Similarly, choose $\zeta^2$ such that $(e^{-\beta}+\theta)\zeta^2=\alpha$. Then $g^+(\alpha)\geq\zeta^2(1-\xi^{+,\zeta^2 N}_{L_1})$, thus:
\begin{align}
\xi^{+,\zeta^2 N}_{L_1} &\in[e^{-\beta}-\theta,e^{-\beta}+\theta]\text{ and }(e^{-\beta}+\theta)\zeta^2=\alpha\nonumber\\
&\quad\Rightarrow\quad \alpha^{-1}g^+(\alpha) \geq \frac{1-e^{-\beta}-\theta}{e^{-\beta}+\theta}=e^{\beta}-1-O(\theta).
\end{align}
$O(\theta)$ is again a positive function. Taking $\theta$ small enough to contradict $|\alpha^{-1}g ^{+}(\alpha) - (e^{\beta}-1)|\geq \delta$ concludes the proof.
\end{proof}
\begin{lemm}[Control of the deviations of the volume at distance $\alpha>0$ below the pole]\label{lemm_controle_deviations_volume}
For $\gamma\in\e$ with associated droplet $\Gamma$, let $V^ \alpha = V^ \alpha(\gamma)$ be defined as:
\begin{equation}
V^ \alpha(\gamma) = \alpha^{-2}\big|\{ x\in\Gamma: x\cdot {\bf b}_{\pi/2}\geq z_1(\gamma)-\alpha\}\big|.
\end{equation}
Then for each $\beta>\log 2$ and each $\delta,\eta>0$:
\begin{align}
\lim_{\alpha\rightarrow0}\limsup_{N\rightarrow\infty}\frac{1}{N}\log \Prob^ {N}_{\beta}\bigg(\gamma^N_\cdot\in E([0,T],\e)\cap\Big\{\frac{1}{T}\int_0^ {T} {\bf 1}_{|V^{\alpha}-(e^ {\beta}-1)|>\delta}\, dt >\eta\Big\}\bigg)
=
-\infty
.
\end{align}
\end{lemm}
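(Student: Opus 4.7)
The plan is to reduce Lemma~\ref{lemm_controle_deviations_volume} to the slope estimate of Lemma~\ref{lemm_deviations_g_pm_alpha} by a Fubini-style horizontal slicing of $V^\alpha$, combined with a finite grid approximation in the depth variable that exploits monotonicity of $u\mapsto g^\pm(u,\gamma)$. The genuine probabilistic content has already been extracted in Lemma~\ref{lemm_deviations_g_pm_alpha}; what remains is a deterministic argument relating $V^\alpha$ to the widths $g^\pm(u,\cdot)$ for $u$ in a finite grid, together with a union bound.

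First, I would use that for $u\in(0,\alpha)$ sufficiently small, the horizontal slice $\Gamma\cap\{x\cdot{\bf b}_2 = z_1(\gamma)-u\}$ intersects only regions $1$ and $4$, with total width $g^+(u,\gamma)+g^-(u,\gamma)$ (the pole itself sits at height $z_1(\gamma)$ so contributes a Lebesgue-negligible slice). Fubini then gives
\[
V^\alpha(\gamma) = \alpha^{-2}\int_0^\alpha \big[g^+(u,\gamma)+g^-(u,\gamma)\big]\,du,
\]
and the expected value $e^\beta-1$ is consistent with $g^\pm(u,\gamma)\approx u(e^\beta-1)$, the heuristic underlying Lemma~\ref{lemm_deviations_g_pm_alpha}. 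Next, fix $K\in\N^*$ and $\epsilon>0$ (chosen in terms of $\delta$ below), set $\alpha_j := j\alpha/K$ for $j\in\{0,\dots,K\}$. Because the tangent vector in regions $1$ and $4$ lies in the fixed quadrants prescribed by Property~\ref{property_state_space}, each map $u\mapsto g^\sigma(u,\gamma)$ for $\sigma\in\{+,-\}$ is non-decreasing, so for $u\in[\alpha_{j-1},\alpha_j]$,
\[
g^\sigma(\alpha_{j-1},\gamma) \leq g^\sigma(u,\gamma) \leq g^\sigma(\alpha_j,\gamma).
\]
On the ``good'' event where $|\alpha_j^{-1}g^\sigma(\alpha_j,\gamma)-(e^\beta-1)| \leq \epsilon$ for all $j\in\{1,\dots,K\}$ and both $\sigma$, integrating between consecutive $\alpha_{j-1},\alpha_j$ and summing gives, by elementary computation,
\[
\big|V^\alpha(\gamma)-(e^\beta-1)\big| \leq 2\epsilon + \frac{C(\beta)}{K},
\]
for some $C(\beta)>0$. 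Choosing $\epsilon < \delta/4$ and $K > 4C(\beta)/\delta$, the right-hand side is strictly less than $\delta$.

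This yields the pointwise-in-time inclusion: if $|V^\alpha(\gamma^N_t)-(e^\beta-1)|>\delta$, then there exists $j\leq K$ and $\sigma\in\{+,-\}$ such that $|\alpha_j^{-1}g^\sigma(\alpha_j,\gamma^N_t)-(e^\beta-1)|>\epsilon$. Consequently,
\[
\Big\{\frac{1}{T}\int_0^T {\bf 1}_{|V^\alpha-(e^\beta-1)|>\delta}\,dt>\eta\Big\} \subset \bigcup_{j=1}^K\bigcup_{\sigma\in\{+,-\}} \Big\{\frac{1}{T}\int_0^T {\bf 1}_{|\alpha_j^{-1}g^\sigma(\alpha_j,\cdot)-(e^\beta-1)|>\epsilon}\,dt>\frac{\eta}{2K}\Big\}.
\]
Since $K$ is fixed and each $\alpha_j=(j/K)\alpha$ vanishes as $\alpha\to 0$, applying Lemma~\ref{lemm_deviations_g_pm_alpha} to each of the $2K$ events on the right, the union bound introduces only a $\log(2K)/N$ term (which vanishes under $N^{-1}\log$), and the iterated $\limsup_{\alpha\to 0}\limsup_{N\to\infty}$ of the left-hand side is $-\infty$, as required.

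The only minor obstacle is ensuring the monotonicity argument is quantitatively tight; there is no substantial difficulty beyond bookkeeping, since the pole dynamics have already been controlled in Lemma~\ref{lemm_deviations_g_pm_alpha} (which itself rests on Corollary~\ref{coro_1_2_blocks_deviations_slope} and thus on the reservoir-like behaviour established in Section~\ref{app_behaviour_pole}). One small subtlety worth mentioning is that the slice at height $z_1(\gamma)$ itself may contribute the pole segment of length $R_1\cdot{\bf b}_1 - L_1\cdot{\bf b}_1$, but since this is a single horizontal line it contributes zero to the two-dimensional measure defining $V^\alpha$, so the representation above is exact for a.e. $u\in(0,\alpha)$.
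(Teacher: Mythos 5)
Your proof is correct and follows essentially the same route as the paper's: both write $V^\alpha=\alpha^{-2}\int_0^\alpha(g^++g^-)\,du$, sandwich the integral by Riemann sums on a finite depth grid $j\alpha/K$ using the monotone structure of the regions, choose the grid size and tolerance so that the bad volume event forces a bad width event at some grid depth, and conclude by a union bound and Lemma~\ref{lemm_deviations_g_pm_alpha}. The only cosmetic difference is that the paper justifies the sandwich by invoking the $1$-Lipschitz boundary while you invoke monotonicity of $u\mapsto g^\sigma(u)$ from Property~\ref{property_state_space}, which is in fact the property actually used in the paper's Riemann-sum bounds.
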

\begin{proof}
The idea is to use Lemma~\ref{lemm_deviations_g_pm_alpha} at multiple depths to prove that a droplet beneath the pole must be approximately triangular. Fix $\ell\in\N_{\geq 1}$ and $\theta>0$ to be chosen later. By Lemma~\ref{lemm_deviations_g_pm_alpha}, it is sufficient to prove:
\begin{align}
\lim_{\alpha\rightarrow0}\limsup_{N\rightarrow\infty}\frac{1}{N}\log \Prob^ {N}_{\beta}\bigg(&\gamma^N_\cdot\in E([0,T],\e);
\label{eq_to_prove_lemm_controle_devs_volume}\\
&\frac{1}{T}\int_0^ {T} {\bf 1}_{|V^{\alpha}-(e^ {\beta}-1)|>\delta}{\bf 1}_{\forall j\in\{1,...,\ell\}, \big|\frac{\ell}{j\alpha}g^{\pm}(j\alpha/\ell)-(e^{\beta}-1)\big|\leq \theta}\, dt >\eta/2\bigg)
=
-\infty
.
\nonumber
\end{align}
By definition of $g^{\pm}(\alpha) = g^\pm(\alpha)(\gamma)$ for $\alpha>0$ and $\gamma\in \e$ (see Lemma~\ref{lemm_deviations_g_pm_alpha}), the quantity $V^\alpha(\gamma)$ satisfies:
\begin{equation}
V^\alpha(\gamma) 
= 
\alpha^{-2}\int_{0}^\alpha (g^+(u)+g^-(u))\, du
.
\end{equation}
As elements of $\Omega$ have $1$-Lipschitz boundaries, if a curve $\gamma\in\Omega$ is such that $(\ell/j\alpha)g^\pm(j\alpha/\ell) \in [e^{\beta}+1-\theta,e^{\beta}-1-\theta]$ for each $1\leq j\leq \ell$, then:
\begin{equation}
\alpha^{2}V^{\alpha}(\gamma) 
= 
|\{x\in\Gamma : x\cdot {\bf b}_{\pi/2}\geq z_1-\alpha\}|\geq 2\sum_{j=1}^{\ell-1} \frac{j}{\ell\alpha}(e^{\beta}+1-\theta)\times\frac{\alpha}{\ell} 
=
\frac{\ell-1}{\ell}(e ^{\beta}-1-\theta)\alpha^2
.
\end{equation}
Similarly,
\begin{equation}
\alpha^{2}V^{\alpha}(\gamma)\leq 2\sum_{j=1}^{\ell} \frac{j}{\ell\alpha}(e^{\beta}+1-\theta)\times\frac{\alpha}{\ell} =\frac{\ell+1}{\ell}(e ^{\beta}-1+\theta)\alpha^{2}.
\end{equation}
To conclude the proof, it remains to take $\ell,\theta$ such that the indicator functions appearing in~\eqref{eq_to_prove_lemm_controle_devs_volume} bear on incompatible events. This is achieved provided:
\begin{align}
\frac{\ell-1}{\ell}(e^{\beta}-1-\theta) \geq  e^{\beta}-1-\delta\quad \text{and}\quad \frac{\ell+1}{\ell}(e^{\beta}-1+\theta)
\leq  
e^{\beta}-1+\delta
.
\end{align}
\end{proof}
\subsubsection{Tightness in $L^1([0,T])$ for the trajectory of the poles}\label{sec_tightness_poles}
In this section, we conclude the proof of~\eqref{eq_exp_tightness_pour_une_fction_test} by providing the estimate on the motion of the poles, more precisely on the components $w_k,z_k$ of the $L_k$, $1\leq k \leq 4$. We prove the result for $z_1$, the other seven coordinates being similar (see Remark~\ref{rmk_tightness_w_1} below).
\begin{lemm}[Tightness in $L^{1}$ distance for $z_1$]\label{lemm_tightness_y_max}
Let $\beta>\log 2$ and $A,\epsilon>0$. Then:
\begin{align}
\lim_{\eta\rightarrow 0}\limsup_{N\rightarrow\infty}\frac{1}{N}\log \Prob^ {N}_{\beta}\bigg(\gamma^N_\cdot&\in E([0,T],\e)\cap\Big\{\ \sup_{t\leq T}|\gamma^N_t|\leq A\Big\}\label{eq_tightness_y_max}\\
&\quad 
\cap\Big\{\sup_{h\leq \eta}\frac{1}{T}\int_0^ {T-h} |z_1(\gamma^N_{t+h}) - z_1(\gamma^N_t)|\, dt>\epsilon\Big\}\bigg)
=
-\infty
.
\nonumber
\end{align}
\end{lemm}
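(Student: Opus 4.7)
The strategy is to reduce the $L^1$-tightness of $z_1$ to the exponential tightness in volume (Lemma~\ref{lemm_exp_tightness}), exploiting the triangular shape of the droplet near the pole given by Lemma~\ref{lemm_controle_deviations_volume}. Since $|z_1(\gamma^N_t)| \leq A/2$ on the event $\{\sup_t |\gamma^N_t| \leq A\}$, a Markov-type argument reduces the event in~\eqref{eq_tightness_y_max} to the existence of a bad set
$$B_h := \big\{t \in [0,T-h] : |z_1(\gamma^N_{t+h}) - z_1(\gamma^N_t)| > \epsilon/2\big\}$$
of Lebesgue measure at least $M_0 := \epsilon T/(2A)$.

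Next, fix $\alpha > 0$ small enough (depending on $\epsilon,\beta,T,A$, and on the desired order of super-exponential decay) so that Lemma~\ref{lemm_controle_deviations_volume} applied with $\eta = M_0/(4T)$ and $\delta = (e^\beta - 1)/2$ gives that, up to a super-exponentially negligible event, the set $\mathcal{G}_\alpha := \{s \in [0,T] : V^\alpha(\gamma^N_s) \geq (e^\beta - 1)/2\}$ satisfies $|\mathcal{G}_\alpha^c| \leq M_0/4$. Then the refined bad set $B'_h := \{t \in B_h : t, t+h \in \mathcal{G}_\alpha\}$ has measure $\geq M_0/2$. Introduce a grid $z_j := -A/2 + j\rho$, $0 \leq j \leq J$, with spacing $\rho < \epsilon/8$ and $J = O(A/\epsilon)$, together with test functions $G_j \in C^2_c(\R^2, [0,1])$ satisfying $G_j = 1$ above height $z_j + \alpha$, $G_j = 0$ below height $z_j$, and with compact horizontal support large enough to contain any droplet of length $\leq A$. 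The key deterministic claim is: for any $t \in B'_h$, there exists $j = j(t)$ such that
$$|\langle \Gamma^N_t, G_j\rangle - \langle \Gamma^N_{t+h}, G_j \rangle| \geq c\, \alpha^2 (e^\beta - 1)$$
for an absolute constant $c > 0$. Indeed, without loss of generality $z_1(\gamma^N_{t+h}) < z_1(\gamma^N_t)$ with gap $\Delta > \epsilon/2$. Choose $j$ so that $z_j$ lies in the interval $(z_1(\gamma^N_{t+h}),\, z_1(\gamma^N_t) - 2\alpha)$, which is nonempty of length $> \epsilon/4 > \rho$. Then $\langle \Gamma^N_{t+h}, G_j\rangle = 0$ since $\Gamma^N_{t+h}$ lies entirely below height $z_1(\gamma^N_{t+h}) < z_j$; meanwhile, $z_j + \alpha \leq z_1(\gamma^N_t) - \alpha$ and the volume condition $t \in \mathcal{G}_\alpha$ give
$$\langle \Gamma^N_t, G_j\rangle \geq \big|\{x \in \Gamma^N_t : x \cdot {\bf b}_2 \geq z_1(\gamma^N_t) - \alpha\}\big| = \alpha^2 V^\alpha(\gamma^N_t) \geq \alpha^2 (e^\beta - 1)/2.$$

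To conclude, since $B'_h$ has Lebesgue measure $\geq M_0/2$ and is covered by the $J$ sets $\{t : j(t) = j\}$, by pigeonhole there exists $j^*$ such that $\mathrm{Leb}\{t \in B'_h : j(t) = j^*\} \geq M_0/(2J) > 0$, so in particular
$$\sup_{|s-t| \leq \eta} |\langle \Gamma^N_t, G_{j^*}\rangle - \langle \Gamma^N_s, G_{j^*}\rangle| \geq c\, \alpha^2 (e^\beta - 1).$$
This event has super-exponentially small probability as $\eta \to 0$ by Lemma~\ref{lemm_exp_tightness}. A union bound over the $J = O(A/\epsilon)$ test functions $G_0, \ldots, G_J$, together with the super-exponential bound on $\mathcal{G}_\alpha^c$ from Lemma~\ref{lemm_controle_deviations_volume}, yields the required estimate. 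The main obstacle, and the ingredient specific to the contour dynamics, is the triangular-wedge bound on the droplet beneath a pole provided by Lemma~\ref{lemm_controle_deviations_volume}: without a quantitative lower bound on the volume under a pole of height $\alpha$, it would be hard to convert the $L^1$-change in the topological (only semi-continuous) quantity $z_1$ into a change in some smooth volume functional amenable to Lemma~\ref{lemm_exp_tightness}.
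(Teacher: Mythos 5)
Your proof is correct and follows the same overall strategy as the paper's own argument: a Markov reduction to a bad time set, the triangular-volume estimate (Lemma~\ref{lemm_controle_deviations_volume}) to guarantee a definite volume of order $\alpha^2(e^\beta-1)$ sitting below each pole, and the observation that a vertical displacement of the pole by more than $\epsilon/2$ forces a detectable change in the droplet's mass near the pole, which exponential tightness in volume (Lemma~\ref{lemm_exp_tightness}) rules out for small $\eta$.

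The one place where you deviate is worth noting. The paper phrases the final contradiction in terms of the $d_{L^1}$ distance itself, and then cites Lemma~\ref{lemm_exp_tightness}, whereas Lemma~\ref{lemm_exp_tightness} as stated controls $\sup_{|s-t|\le\eta}|\langle\Gamma^N_t,G\rangle-\langle\Gamma^N_s,G\rangle|$ for a single fixed test function $G$, not the $d_{L^1}$ distance uniformly. Your grid of strip functions $G_0,\dots,G_J$ (with spacing $\rho<\epsilon/8$ covering the range of $z_1$) plus a union bound over $j$ is exactly what is needed to close that gap: the grid is chosen so that, for whichever $t$ in the bad set realizes the pole displacement, some $G_{j(t)}$ separates the two pole regions and captures a volume discrepancy of at least $\alpha^2(e^\beta-1)/2$. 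This is a rigorous implementation of what the paper leaves implicit. Your pigeonhole step to locate a $j^\ast$ of positive time-measure is more than you need (a single bad time $t$ already forces the supremum in time to exceed the threshold for $G_{j(t)}$), but it is harmless. The order of limits (pick $\alpha$ small enough depending on the target rate $q$ via Lemma~\ref{lemm_controle_deviations_volume}, then $\eta$ small enough via Lemma~\ref{lemm_exp_tightness} applied to each fixed $G_j$, then union bound over the $N$-independent number $J$ of grid functions) is handled correctly.
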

\begin{figure}
\begin{center}
\includegraphics[width=11cm]{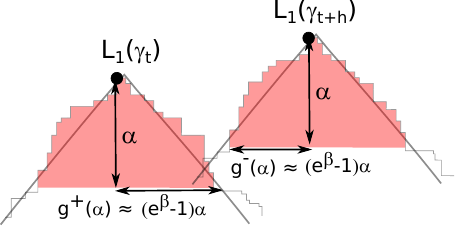} 
\caption{Neighbourhood of the north pole of a microscopic curve at times $t$, $t+h$. Under the contour dynamics, sufficiently close to the poles (corresponding to the parameter $\alpha\ll 1$ on the figure), the poles stand atop a triangular shape (red shaded area) with a slope fixed in terms of $\beta$. Equivalently, the width $g^{\pm}(\alpha)$ is approximately given by $\alpha(e^{\beta}-1)$. 
A displacement of the poles leads to a shift of the triangle and thus implies a change in volume, i.e. for the distance $d_{L^1}$.\label{fig_kinks}}
\end{center}
\end{figure}
\begin{proof}
Let $\gamma_\cdot \in E([0,T],\e)$. The idea is the following. Results of Section~\ref{sec_precise_control_slope_volume} imply that the pole dynamics creates triangular shapes in the curves, with a slope fixed in terms of $\beta$ (see Figure~\ref{fig_kinks}). Moving a pole thus means moving one of these triangles, which has a volume. In that way motion of the poles is linked with a displacement of the volume, which we know cannot happen instantaneously by Lemma~\ref{lemm_exp_tightness}. \\

For each $h\leq \eta$ and each $t\in[0,T-h]$, write $\Delta_h(t)$ for the difference $|z_1(\gamma_{t+h}) - z_1(\gamma_t)|$ for brevity. By Definition~\ref{def_CI} of the initial condition and owing to the bound on the length, $z_1(\gamma_\cdot)$ is bounded by $A+C_0$ for some $C_0=C_0(\gamma^{\mathrm{ref}})>0$ on the event in~\eqref{eq_tightness_y_max}. Equation~\eqref{eq_tightness_y_max} is thus proven as soon as:
\begin{align}
&\lim_{\eta\rightarrow 0}\limsup_{N\rightarrow\infty}\\
&\quad\frac{1}{N}\log \Prob^ {N}_{\beta}\bigg(\gamma^N_\cdot\in E([0,T],\e)
\cap\Big\{ \sup_{h\leq \eta}\frac{1}{T}\int_0^ {T-h}{\bf 1}_{\Delta_h(t)\geq \epsilon/2}\, dt
>\frac{\epsilon}{2(A+C_0)}\Big\}\bigg)
=
-\infty
.
\nonumber
\end{align}
Fix $\delta>0$ that will be chosen small enough in the following. Recall that, for $\alpha>0$, $V^\alpha$ is the volume below the pole times $\alpha^{-2}$, see Lemma~\ref{lemm_controle_deviations_volume}. For $\alpha>0$ and $t\in[0,T]$, define then $\Delta V^ {\alpha}(t)$ as follows :
\begin{equation}
\Delta V^ \alpha(t) = |V^ \alpha(\gamma_t) - (e^ \beta-1)|.
\end{equation}
Lemma~\ref{lemm_controle_deviations_volume} gives:
\begin{align}
\lim_{\alpha\rightarrow 0}\limsup_{N\rightarrow\infty}\frac{1}{N}\log \Prob^ {N}_{\beta}\bigg(\gamma^N_\cdot\in E([0,T],\e)\cap \Big\{\frac{1}{T}\int_0^{T}{\bf 1}_{\Delta V^ \alpha(t)>\delta} \, dt>\frac{\epsilon}{6(A+C_0)}\Big\}\bigg)=-\infty.
\end{align}
Notice in addition that:
\begin{equation}
\Big\{\sup_{h\leq \eta}\frac{1}{T}\int_0^{T-h}{\bf 1}_{\Delta V^ \alpha(t+h)>\delta}\, dt>\frac{\epsilon}{6(A+C_0)}\Big\} \subset \Big\{ \frac{1}{T}\int_0^{T}{\bf 1}_{\Delta V^ \alpha(t)>\delta} \,dt>\frac{\epsilon}{6(A+C_0)}\Big\}.
\end{equation}
As a result, if $\lambda$ denotes $T^{-1}$ times the Lebesgue measure on $[0,T]$,~\eqref{eq_tightness_y_max} holds as soon as:
\begin{align}
&\lim_{\alpha\rightarrow 0}\limsup_{\eta\rightarrow 0}\limsup_{N\rightarrow\infty}\label{eq_tightness_y_max_transit_0}\frac{1}{N}\log \Prob^ {N}_{\beta}\bigg(\gamma^N_\cdot\in E([0,T],\e);\\
&\hspace{3cm}
\sup_{h\leq \eta}\lambda\Big[\Delta_h(t)\geq \epsilon/2, |\Delta V^ \alpha(t)|\leq \delta,|\Delta V^ \alpha(t+h)|\leq \delta\Big]>\frac{\epsilon}{6(A+C_0)}\bigg)
=
-\infty
.
\nonumber
\end{align}
By Lemma~\ref{lemm_exp_tightness} on exponential tightness in $d^S_{L^1}$ topology,~\eqref{eq_tightness_y_max_transit_0} is proven as soon as the following holds:
\begin{align}
&\lim_{\alpha\rightarrow 0}\limsup_{\eta\rightarrow 0}\limsup_{N\rightarrow\infty}\frac{1}{N}\log \Prob^ {N}_{\beta}\bigg(\gamma^N_\cdot\in E([0,T],\e);
\ \sup_{t\leq T}|\gamma^N_t|\leq A;
\nonumber\\
&\hspace{3cm}
\sup_{\substack{(s,t)\in[0,T]^2\\ |s-t|\leq \eta}}d_{L^ 1}(\gamma^N_s,\gamma^N_t)<\frac{\alpha^2 (e^{\beta}-1)}{2};
\label{eq_tightness_y_max_transit_1}\\
&\hspace{3cm} 
\sup_{h\leq \eta}\lambda\Big[\Delta_h(t)\geq \epsilon/2, |\Delta V^ \alpha(t)|\leq \delta,|\Delta V^ \alpha(t+h)|\leq \delta\Big]>\frac{\epsilon}{6(A+C_0)} \bigg)
=
-\infty
.
\nonumber
\end{align}
Take $\delta< (e^{\beta}-1)/2$ and an arbitrary $\alpha\in(0,\epsilon/2]$. Let us prove that the event in the probability in~\eqref{eq_tightness_y_max_transit_1} is empty. For any trajectory $(\gamma^N_t)_{t\in[0,T]}$ in this event, there must be $t\in[0,T]$ and $h<\eta$ such that, simultaneously:
\begin{itemize}
	\item The north poles of $\Gamma^N_t,\Gamma^N_{t+h}$ are at vertical distance at least $\epsilon/2$, so that either $\{x\in \Gamma^N_t:x\cdot {\bf b}_{\pi/2}\geq z_1(\gamma^N_t)-\alpha\} \cap \Gamma^N_{t+h} = \emptyset$ or $\{x\in \Gamma^N_{t+h}:x\cdot {\bf b}_{\pi/2}\geq z_1(\gamma^N_{t+h})-\alpha\} \cap \Gamma^N_{t} = \emptyset$.
	\item Recall that $V^\alpha(t) = \alpha^{-2}|\{x\in\Gamma^N_t:x\cdot {\bf b}_{\pi/2}\geq z_1(\gamma^N_t)-\alpha\}|$. $V^\alpha(t)$ and $V^{\alpha}(t+h)$ are both bounded below by $e^{\beta}-1 -\delta>(e^{\beta}-1)/2$ so that, by the first point, the difference in volume between $\Gamma^N_t$ and $\Gamma_{t+h}$ is at least $\alpha^2(e^{\beta}-1)/2$.
	\item Yet, $d_{L^1}(\gamma^N_t,\gamma^N_{t+h})< \alpha^2(e^{\beta}-1)/2$, which is incompatible with point 2. This concludes the proof.
\end{itemize}
\end{proof}
\begin{rmk}\label{rmk_tightness_w_1}
The proof for $z_k$ ($2\leq k\leq 4$) is identical to the above. For the $w_k$, i.e. $L_1\cdot {\bf b}_0, L_2\cdot {\bf b}_{\pi/2},L_3\cdot {\bf b}_0$ and $L_4\cdot {\bf b}_{\pi/2}$, slight modifications are required: in addition to the indicator functions on the volumes $\Delta V^\alpha(t)<\delta$, $\Delta V^\alpha(t+h)<\delta$, one has to introduce the events $\{|\alpha^{-1}g^\pm_\alpha(t) - (e^\beta-1)|<\delta\}, \{|\alpha^{-1}g^\pm_\alpha(t+h) - (e^\beta-1)|<\delta\}$, where $g^\pm_\alpha$, the width of the level at distance $\alpha$ beneath the pole, is defined in Lemma~\ref{lemm_deviations_g_pm_alpha}.

The idea is then that, e.g. for $w_1$, if $\alpha$ is taken small enough as a function of $\epsilon$ and $\beta$ ($(e^{\beta}-1)^{-1}\epsilon/6$ works), then $|w_1(\gamma^N_t) - w_1(\gamma^N_{t+h})|\geq \epsilon/2$ implies that this difference must be larger than $\min\{g_\alpha^+(t+h)+g_\alpha^-(t), g_\alpha^+(t)+g_\alpha^-(t+h)\}$ (see Figure~\ref{fig_kinks}). 

But then this means that the set of points above $z_1(\gamma^N_t)-\alpha$ in $\Gamma^N_t$ and the set of points above $z_1(\gamma^N_{t+h})-\alpha$ in $\Gamma^N_{t+h}$ are disjoint. Thanks to the indicator functions on the volumes $\Delta V^\alpha$, this implies a difference in volume, which is again impossible for $\eta$ small enough.\demo
\end{rmk}
\end{appendices}
\section*{Acknowledgements}
The author would like to thank his Ph.D. advisor Thierry Bodineau for continuous help and discussions on the content of this article, as well as an anonymous referee for his extremely useful feedback and an improved proof of Lemma A.1.
Part of this work was done while the author was supported by the European Research Council under the European Union's Horizon 2020 research and innovation programme
(grant agreement No.~851682 SPINRG).

\bibliographystyle{alpha}
\bibliography{Contour_dynamics_final}

\newcommand{\etalchar}[1]{$^{#1}$}
\begin{thebibliography}{KORVE07}

\bibitem[BBP17a]{Bertini2017a}
L.~Bertini, P.~Butt{\`a}, and A.~Pisante.
\newblock Stochastic {A}llen--{C}ahn equation with mobility.
\newblock {\em Nonlinear Differential Equations and Applications}, 24(5):54,
  2017.

\bibitem[BBP17b]{Bertini2017}
L.~Bertini, P.~Butt{\`a}, and A.~Pisante.
\newblock Stochastic {A}llen-{C}ahn approximation of the mean curvature flow:
  large deviations upper bound.
\newblock {\em Archive for Rational Mechanics and Analysis}, 224(2):659--707,
  2017.

\bibitem[BBP18]{Bertini2018}
L.~Bertini, P.~Butt{\`a}, and A.~Pisante.
\newblock On large deviations of interface motions for statistical mechanics
  models.
\newblock In {\em Annales Henri Poincar{\'e}}, pages 1--37. Springer, 2018.

\bibitem[{Bil}99]{Billingsley1999}
P.~{Billingsley}.
\newblock {\em {Convergence of probability measures.}}
\newblock Chichester: Wiley, 1999.

\bibitem[BLM09]{Bertini2009}
L.~Bertini, C.~Landim, and M.~Mourragui.
\newblock Dynamical large deviations for the boundary driven weakly asymmetric
  exclusion process.
\newblock {\em The Annals of Probability}, 37(6):2357--2403, nov 2009.

\bibitem[Bre10]{Brezis2010}
H.~Brezis.
\newblock {\em Functional analysis, {S}obolev spaces and partial differential
  equations}.
\newblock Springer Science \& Business Media, 2010.

\bibitem[BSG{\etalchar{+}}03]{Bertini2003}
L.~Bertini, A.~De Sole, D.~Gabrielli, G.~Jona-Lasinio, and C.~Landim.
\newblock Large deviations for the boundary driven symmetric simple exclusion
  process.
\newblock {\em Mathematical Physics, Analysis and Geometry}, 6(3):231--267,
  2003.

\bibitem[BSS93]{Barles1993}
G.~Barles, H.M. Soner, and P.E. Souganidis.
\newblock Front propagation and phase field theory.
\newblock {\em SIAM Journal on Control and Optimization}, 31(2):439--469, 1993.

\bibitem[CK08]{Chayes2008}
L.~Chayes and I.C. Kim.
\newblock A two-sided contracting {S}tefan problem.
\newblock {\em Communications in Partial Differential Equations},
  33(12):2225--2256, 2008.

\bibitem[CK12]{Chayes2012}
L.~{Chayes} and I.C. {Kim}.
\newblock {The supercooled Stefan problem in one dimension}.
\newblock {\em {Commun. Pure Appl. Anal.}}, 11(2):845--859, 2012.

\bibitem[CL07]{Cerf2007}
R.~Cerf and S.~Louhichi.
\newblock The initial drift of a 2d droplet at zero temperature.
\newblock {\em Probability Theory and Related Fields}, 137(3-4):379--428, 2007.

\bibitem[CMGP16]{Carinci2016}
G.~Carinci, A.~De Masi, C.~Giardin{\`a}, and E.~Presutti.
\newblock {\em Free boundary problems in {PDE}s and particle systems}.
\newblock Springer, 2016.

\bibitem[CMST11]{Caputo2011}
P.~Caputo, F.~Martinelli, F.~Simenhaus, and F.L. Toninelli.
\newblock “{Z}ero” temperature stochastic 3{D} {I}sing model and dimer
  covering fluctuations: a first step towards interface mean curvature motion.
\newblock {\em Communications on Pure and Applied Mathematics}, 64(6):778--831,
  2011.

\bibitem[Com87]{Comets1987}
F.~Comets.
\newblock Nucleation for a long range magnetic model.
\newblock In {\em Annales de l'IHP Probabilit{\'e}s et statistiques},
  volume~23, pages 135--178, 1987.

\bibitem[CS96]{Chayes1996}
L.~Chayes and G.~Swindle.
\newblock Hydrodynamic limits for one-dimensional particle systems with moving
  boundaries.
\newblock {\em The Annals of Probability}, 24(2):559--598, 1996.

\bibitem[DZ10]{Dembo2010}
A.~Dembo and O.~Zeitouni.
\newblock {\em Large {{Deviations Techniques}} and {{Applications}}}.
\newblock Stochastic {{Modelling}} and {{Applied Probability}}.
  {Springer-Verlag}, 2 edition, 2010.

\bibitem[EK09]{Ethier2009}
S.N. Ethier and T.G. Kurtz.
\newblock {\em Markov processes: characterization and convergence}, volume 282.
\newblock John Wiley \& Sons, 2009.

\bibitem[ELS90]{Eyink1990}
G.~Eyink, J.L. Lebowitz, and H.~Spohn.
\newblock Hydrodynamics of stationary non-equilibrium states for some
  stochastic lattice gas models.
\newblock {\em Communications in mathematical physics}, 132(1):253--283, 1990.

\bibitem[ESS92]{Evans1992}
L.C. Evans, H.M. Soner, and P.E. Souganidis.
\newblock Phase transitions and generalized motion by mean curvature.
\newblock {\em Communications on Pure and Applied Mathematics},
  45(9):1097--1123, 1992.

\bibitem[FS97]{Funaki1997}
T.~Funaki and H.~Spohn.
\newblock Motion by mean curvature from the {G}inzburg-{L}andau interface
  model.
\newblock {\em Communications in Mathematical Physics}, 185(1):1--36, 1997.

\bibitem[FT19]{Funaki2019}
T.~Funaki and K.~Tsunoda.
\newblock Motion by mean curvature from {G}lauber--{K}awasaki dynamics.
\newblock {\em Journal of Statistical Physics}, 177(2):183--208, 2019.

\bibitem[Fun16]{MR3587372}
T.~Funaki.
\newblock {\em Lectures on random interfaces}.
\newblock SpringerBriefs in Probability and Mathematical Statistics. Springer,
  Singapore, 2016.

\bibitem[FvMST22]{Funaki2022ConstantspeedIF}
T.~Funaki, P.~van Meurs, S.~Sethuraman, and K.~Tsunoda.
\newblock Constant-speed interface flow from unbalanced glauber-kawasaki
  dynamics.
\newblock {\em Ensaios Matem{\'a}ticos}, 2022.

\bibitem[FvMST23]{MR4528356}
T.~Funaki, P.~van Meurs, S.~Sethuraman, and K.~Tsunoda.
\newblock Motion by mean curvature from {G}lauber-{K}awasaki dynamics with
  speed change.
\newblock {\em J. Stat. Phys.}, 190(3):Paper No. 45, 30, 2023.

\bibitem[GPV01]{Guo1988}
M.Z. Guo, G.C. Papanicolaou, and S.R.S Varadhan.
\newblock Nonlinear diffusion limit for a system with nearest neighbor
  interactions.
\newblock {\em Communications in Mathematical Physics}, 118(1):31--59, 1988-01.

\bibitem[HR18]{MR3769818}
M.~Heida and M.~R\"{o}ger.
\newblock Large deviation principle for a stochastic {A}llen-{C}ahn equation.
\newblock {\em J. Theoret. Probab.}, 31(1):364--401, 2018.

\bibitem[HW14]{Hairer2014LargeDF}
M.~Hairer and H.~Weber.
\newblock Large deviations for white-noise driven, nonlinear stochastic pdes in
  two and three dimensions.
\newblock {\em Annales de la Facult{\'e} des Sciences de Toulouse}, 24:55--92,
  2014.

\bibitem[Ilm93]{Ilmanen1993}
T.~Ilmanen.
\newblock Convergence of the {A}llen-{C}ahn equation to brakke’s motion by
  mean curvature.
\newblock {\em J. Differential Geometry}, 38(2):417--461, 1993.

\bibitem[KFH{\etalchar{+}}20]{ElKettani2020MeanCI}
P.~El Kettani, T.~Funaki, D.~Hilhorst, H.~Park, and S.~Sethuraman.
\newblock Mean curvature interface limit from glauber+zero-range interacting
  particles.
\newblock {\em Communications in Mathematical Physics}, 394:1173 -- 1223, 2020.

\bibitem[KL99]{Kipnis1999}
C.~Kipnis and C.~Landim.
\newblock {\em Scaling Limits of Interacting Particle Systems}.
\newblock Springer Berlin Heidelberg, 1999.

\bibitem[KORVE07]{MR2284215}
R.~Kohn, F.~Otto, M.G. Reznikoff, and E.~Vanden-Eijnden.
\newblock Action minimization and sharp-interface limits for the stochastic
  {A}llen-{C}ahn equation.
\newblock {\em Comm. Pure Appl. Math.}, 60(3):393--438, 2007.

\bibitem[KOV89]{Kipnis1989}
C.~Kipnis, S.~Olla, and S.R.S Varadhan.
\newblock Hydrodynamics and large deviation for simple exclusion processes.
\newblock {\em Communications on Pure and Applied Mathematics}, 42(2):115--137,
  1989.

\bibitem[KS94]{Katsoulakis1994}
M.A. Katsoulakis and P.E. Souganidis.
\newblock Interacting particle systems and generalized evolution of fronts.
\newblock {\em Archive for rational mechanics and analysis}, 127(2):133--157,
  1994.

\bibitem[Lac14]{Lacoin2014b}
H.~Lacoin.
\newblock The scaling limit of polymer pinning dynamics and a one dimensional
  {S}tefan freezing problem.
\newblock {\em Communications in Mathematical Physics}, 331(1):21--66, 2014.

\bibitem[Lif62]{Lifshitz1962}
I.M. Lifshitz.
\newblock Kinetics of ordering during second-order phase transitions.
\newblock {\em Sov. Phys. JETP}, 15:939, 1962.

\bibitem[LST14a]{Lacoin2014a}
H.~Lacoin, F.~Simenhaus, and F.L. Toninelli.
\newblock The heat equation shrinks {I}sing droplets to points.
\newblock {\em Communications on Pure and Applied Mathematics},
  68(9):1640--1681, jul 2014.

\bibitem[LST14b]{Lacoin2014}
H.~Lacoin, F.~Simenhaus, and F.L. Toninelli.
\newblock Zero-temperature 2{D} stochastic {I}sing model and anisotropic
  curve-shortening flow.
\newblock {\em Journal of the European Mathematical Society},
  16(12):2557--2615, 2014.

\bibitem[LT18]{Laslier2018}
B.~Laslier and F.L. Toninelli.
\newblock Lozenge tiling dynamics and convergence to the hydrodynamic equation.
\newblock {\em Communications in Mathematical Physics}, 358(3):1117--1149,
  2018.

\bibitem[Mar99]{Martinelli1999}
F.~Martinelli.
\newblock Lectures on {G}lauber dynamics for discrete spin models.
\newblock In {\em Lecture Notes in Mathematics}, pages 93--191. Springer Berlin
  Heidelberg, 1999.

\bibitem[MFL86]{DeMasi1986}
A.~De Masi, P.~Ferrari, and J.L. Lebowitz.
\newblock Reaction-diffusion equations for interacting particle systems.
\newblock {\em Journal of statistical physics}, 44(3-4):589--644, 1986.

\bibitem[MOPT93]{DeMasi1993}
A.~De Masi, E.~Orlandi, E.~Presutti, and L.~Triolo.
\newblock Motion by curvature by scaling nonlocal evolution equations.
\newblock {\em Journal of statistical physics}, 73(3-4):543--570, 1993.

\bibitem[MOPT94]{DeMasi1994}
A.~De Masi, E.~Orlandi, E.~Presutti, and L.~Triolo.
\newblock Glauber evolution with {K}ac potentials. i. mesoscopic and
  macroscopic limits, interface dynamics.
\newblock {\em Nonlinearity}, 7(3):633, 1994.

\bibitem[OV05]{Olivieri2005}
E.~Olivieri and M.E. Vares.
\newblock {\em Large deviations and metastability}, volume 100.
\newblock Cambridge University Press, 2005.

\bibitem[Spo93]{Spohn1993}
H.~Spohn.
\newblock Interface motion in models with stochastic dynamics.
\newblock {\em Journal of Statistical Physics}, 71(5-6):1081--1132, jun 1993.

\bibitem[SS98]{Schonmann1998}
R.H. Schonmann and S.B. Shlosman.
\newblock Wulff droplets and the metastable relaxation of kinetic {I}sing
  models.
\newblock {\em Communications in mathematical physics}, 194(2):389--462, 1998.

\end{thebibliography}

\end{document}